\theoremstyle{definition}
\newtheorem{thm}{Theorem}
\newtheorem{cor}[thm]{Corollary}
\newtheorem{lem}[thm]{Lemma}
\newtheorem{prop}[thm]{Proposition}
\theoremstyle{definition}
\newtheorem{defn}{Definition}
\theoremstyle{definition}
\newtheorem{rem}{Remark}
\theoremstyle{definition}
\newtheorem{problem}{Problem}
\theoremstyle{definition}
\newtheorem{conj}{Conjecture}
\theoremstyle{definition}
\newtheorem{exa}{Example}
\theoremstyle{definition}
\newcommand{\qqed}{\hfill $\Box$}
\begin{document}

\begin{center}
{\huge \textbf{Graphic Lattices and Matrix Lattices \\
Of Topological Coding}}
\end{center}

\vskip 1cm

\pagenumbering{roman}
\tableofcontents

\newpage

\setcounter{page}{1}
\pagenumbering{arabic}

\thispagestyle{empty}

\begin{center}
{\huge \textbf{Graphic Lattices and Matrix Lattices Of Topological Coding}}\\

\vskip 0.6cm

{\Large Bing Yao}\\

\vskip 0.6cm

{\Large College of Mathematics and Statistics,
 Northwest Normal University, Lanzhou, 730070, CHINA, yybb918@163.com}
\end{center}

\vskip 1cm

\begin{quote}
\textbf{Abstract:} Lattice-based Cryptography is considered to have the characteristics of classical computers and quantum attack resistance. We will design various graphic lattices and matrix lattices based on knowledge of graph theory and topological coding, since many problems of graph theory can be expressed or illustrated by (colored) star-graphic lattices. A new pair of the leaf-splitting operation and the leaf-coinciding operation will be introduced, and we combine graph colorings and graph labellings to design particular proper total colorings as tools to build up various graphic lattices, graph homomorphism lattice, graphic group lattices and Topcode-matrix lattices. Graphic group lattices and (directed) Topcode-matrix lattices enable us to build up connections between traditional lattices and graphic lattices. We present mathematical problems encountered in researching graphic lattices, some problems are: Tree topological authentication, Decompose graphs into Hanzi-graphs, Number String Decomposition Problem, $(p,s)$-gracefully total numbers. \\[6pt]
\textbf{Keywords:} Lattice; cryptosystem; graphic lattice; total coloring; matrices; graphic group; graph homomorphism lattice; topological coding.
\end{quote}


\section{Introduction and preliminary}

The reality is an infinite set of random events with changing rules. Whenever there is a major development, some rules must be rewritten, and then all participants have to adjust to follow it or die. Scientific research is such a process based on ``model'' to constantly update themselves. At the same time, we may never be able to develop it ``ultimate truth'', but we can approach a better model infinitely. Lattice of cryptosystems can bring new perspective and new technology for part of research objects and problems in graph theory.

\subsection{Research background}

We recall some investigations on ``Resisting classical computers and quantum computers''.

\subsubsection{Cryptosystems resisting classical computers and quantum computers}

The authors in \cite{Bernstein-Buchmann-Dahmen-Quantum-2009} point: ``There are many important classes of cryptographic
systems beyond RSA and DSA and ECDSA, and they are believed to resist classical computers and quantum computers, such as Hash-based cryptography, Code-based cryptography, Lattice-based cryptography, Multivariate-quadratic-equations cryptography, Secret-key cryptography''. Notice that the lattice difficulty problem is not only a classical number theory, but also an important research topic of computational complexity theory. Researchers have found that lattice theory has a wide range of applications in cryptanalysis and design. Many difficult problems in lattice have been proved to be NP-hard. So, this kind of cryptosystems are generally considered to have the characteristics of quantum attack resistance (Ref. \cite{Wang-Xiao-Yun-Liu-2014}).

\subsubsection{Homomorphic encryption}

Because the homomorphic encryption can calculate the ciphertext arbitrarily without decryption, and can solve the problem of data privacy security immediately. For example, a user wants to process a data, but his computer computing power is weak, this user can use homomorphic encryption to encrypt his data and store it in the cloud environment, since the cloud cannot obtain the content of the data. After the cloud process the encrypted data directly, he receives the processing results. In a word, the homomorphic encryption not only protects the data by encryption, but also does not lose the computability.

\subsubsection{Lattice encryption}

The idea of writing this article is motivated from Lattice-based Cryptography. A \emph{lattice} $\textrm{\textbf{L}}(\textbf{B})$ is defined as the set of all integer combinations
\begin{equation}\label{eqa:popular-lattice}
\textrm{\textbf{L}}(\textbf{B}) =\left \{\sum^n_{i=1}x_i\textbf{b}_i : x_i \in Z, 1 \leq i \leq n\right \}
\end{equation}
of $n$ linearly independent vectors $\textbf{b}_1, \textbf{b}_2,\dots , \textbf{b}_n$ in $R^m$ with $n\leq m$, where $Z$ is the integer set, $m$ is the \emph{dimension} and $n$ is the \emph{rank} of the lattice, and the vector group $\textbf{b}_1, \textbf{b}_2,\dots , \textbf{b}_n$ is called a \emph{lattice base}. A lattice is a set of discrete points with periodic structure in $R^m$, and it can be expressed by different lattice bases. For no confusion, we call $\textrm{\textbf{L}}(\textbf{B})$ defined in (\ref{eqa:popular-lattice}) \emph{traditional lattice} in this article.

\subsubsection{Encryption optical chip}In December 20,2019, the King Abdullah University of science and technology in Saudi Arabia has developed an encryption optical chip, which uses a one-time key to realize information transmission between users, such that the key used to unlock a message will never be stored and associated with the message, or even recreated by the user.

\subsubsection{Graphs like Lattices}

As known, a connected Euler's graph is a union of cycles in graph theory, which can be expressed mathematically as $\odot ^n_{k=1}a_kC_k$ with each $a_k$ is a non-negative integer and $\sum ^n_{k=1}a_k\geq 1$, where $C_k$ is a cycle of $k$ vertices, and ``$\odot $'' is the vertex-coinciding operation of graphs, i.e. gluing together cycles. Similarly, each graph of a set $\{\odot^n_{k=1}a_kP_k\}$ is connected, where $P_k$ is a path of $k$ vertices. And, the forms $\odot ^n_{k=1}a_kC_k$ and $\odot ^n_{k=1}a_kP_k$ are like $\sum a_k\textbf{\textrm{b}}_k$ shown in that defined in (\ref{eqa:popular-lattice}). The authors in \cite{YAO-SUN-WANG-SU-XU2018arXiv} guessed: A maximal planar graph is 4-colorable if and only it can be tiled by the every-zero graphic group $\{F_{\textrm{inner}\bigtriangleup};\oplus\}$ shown in Fig.\ref{fig:4-color-planar-tile-lattice}(c). There are many sets of colored graphs such that each of them can be written in the form lattices like a traditional lattice (\ref{eqa:popular-lattice}) in this article. For example, each uncolored planar graph $H$ with each inner face to be a triangle is isomorphic to $G=H\bigtriangleup ^4_{k=1}a_kT^r_k$ with $a_k\in Z^0$ and $\sum a_k\geq 1$ defined in the planar graphic lattice (\ref{eqa:planar-lattice}). For investigating ``Topsnut-gpws'' that is the abbreviation of ``Graphical passwords based on the idea of topological structure plus number theory'' (Ref. \cite{Wang-Xu-Yao-2016}, \cite{Wang-Xu-Yao-Key-models-Lock-models-2016} and \cite{Yao-Sun-Zhang-Li-Zhang-Xie-Yao-2017-Tianjin-University}), Wang \emph{et al.}, in \cite{Wang-Xu-Yao-Ars-2018}, have constructed some spaces $\{T\odot |^p_{i=1}H_i\}$ of particular Topsnut-gpws made by a group of disjoint Topsnut-gpws $H_1,H_2,\dots ,H_p$. Topsnut-gpws belong to ``Topological Coding'', a combinatoric subbranch of graph theory and cryptography. The domain of topological coding involves millions of things, and a graph of topological coding connects things together to form a complete ``story'' under certain constraints.

\subsubsection{Our works}

Motivated from the traditional lattices and the topological authentications (see Fig.\ref{fig:authentication-system}), we will define so-called \emph{graphic lattices}, \emph{graphic group lattices}, \emph{Topcode-matrix lattices}, \emph{matching-type graphic lattices}, \emph{graphic lattice sequences} and other type lattices made by various graph operations, matrix operations and group operations in the following sections and subsections. In fact, a graphic group consisted of particular graphs and some special graph operations. Graphic lattices and matrix lattices are combination of traditional lattice and topological coding. Part of our works here are cited from \cite{Yao-Wang-Su-Sun-ITOEC2020, Wang-Su-Sun-Yao-submitted-ITOEC2020, Wang-Yao-Star-type-Lattices-2020, Bing-Yao-Hongyu-Wang-graph-homomorphisms-2020} directly.

\begin{figure}[h]
\centering
\includegraphics[width=15cm]{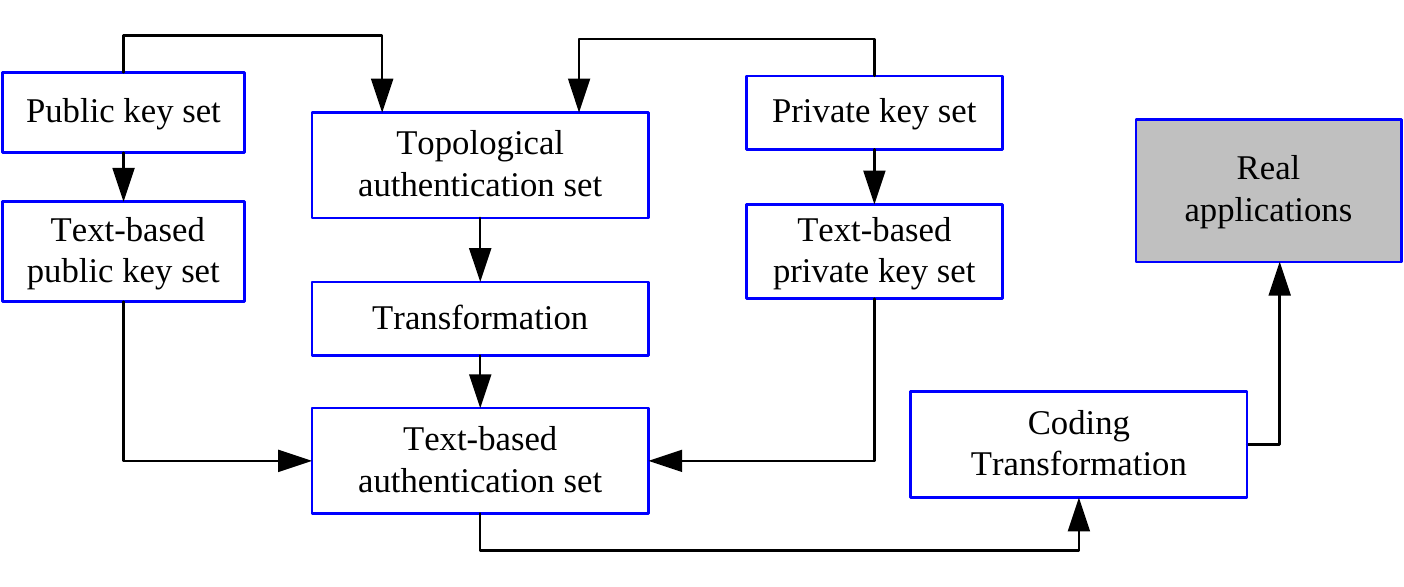}
\caption{\label{fig:authentication-system}{\small A topological authentication system cited from \cite{Wang-Su-Yao-mixed-difference-2019}.}}
\end{figure}

\subsection{An example for graphic lattices}

Before listing our main works in this article, let us see examples. We have four Hanzi-graphs $G_{\textrm{4476}}$, $G_{\textrm{4734}}$, $G_{\textrm{4610}}$ and $G_{\textrm{2511}}$ shown in Fig.\ref{fig:2-txwg}, where the lower right code ``abcd'' in $G_{\textrm{abcd}}$ can be found in \cite{GB2312-80}. Then $G=G_{\textrm{4476}}\cup G_{\textrm{4734}}\cup G_{\textrm{4610}}\cup G_{\textrm{2511}}$ is called a \emph{disconnected graph}. In English, $G$ means ``the whole world as one community''.

\begin{figure}[h]
\centering
\includegraphics[width=13cm]{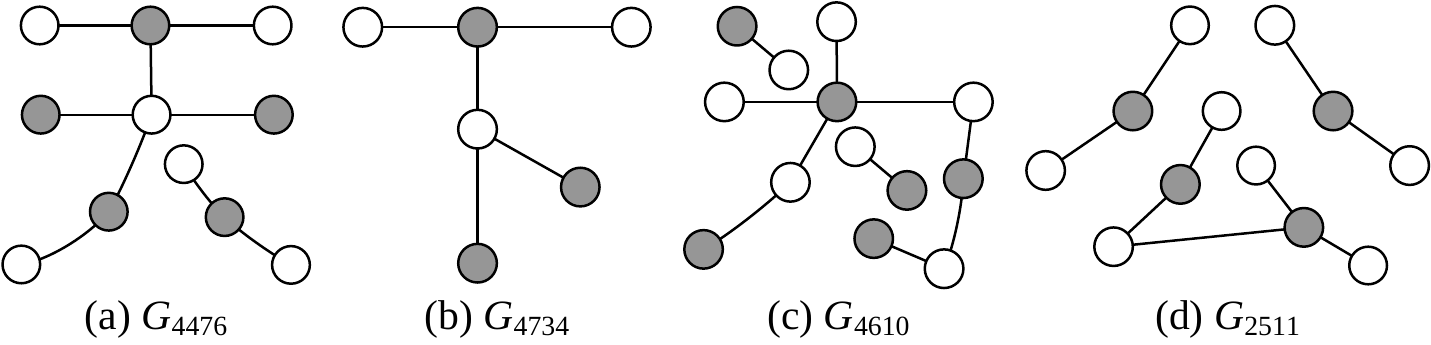}
\caption{\label{fig:2-txwg}{\small A disconnected graph $G=G_{\textrm{4476}}\cup G_{\textrm{4734}}\cup G_{\textrm{4610}}\cup G_{\textrm{2511}}$, where (a)-(d) are four Hanzi-graphs, in which all are disconnected, except (b).}}
\end{figure}

We call: (i) Four Hanzi-graphs shown in Fig.\ref{fig:2-txwg} a group of \emph{linearly independent graphic vectors} (or \emph{graphic base}); (ii) $\textbf{\textrm{T}}=(T_{\textrm{4476}},T_{\textrm{4734}},T_{\textrm{4610}},T_{\textrm{2511}})$ shown in Fig.\ref{fig:2-txwg-graceful-disconnected} a group of \emph{linearly independent colored graphic vectors} (or \emph{colored graphic base}). In fact, there are many Topsnut-gpws $H_i$ like that $H$ shown in Fig.\ref{fig:2-txwg-graceful} to form vertex-coincided Topsnut-gpws $H_i\odot \textbf{\textrm{T}}$, and then we put these vertex-coincided Topsnut-gpws $H_i$ and $H$ into a set $F$. We say the following set
\begin{equation}\label{eqa:c3xxxxx}
{
\begin{split}
\textbf{\textrm{L}}(\textbf{\textrm{T}}\odot F)=\{H_i\odot \textbf{\textrm{T}}:H_i\in F\}= \{H_i\odot (T_{\textrm{4476}}\cup T_{\textrm{4734}}\cup T_{\textrm{4610}}\cup T_{\textrm{2511}}):~H_i\in F\}
\end{split}}
\end{equation}
a \emph{colored graphic lattice}, where ``$\odot$'' is the vertex-coinciding operation between graphs, see an example shown in Fig.\ref{fig:2-txwg-graceful}. We have a connected Topsnut-gpw $H\odot \textbf{\textrm{T}}$ obtained by doing the vertex-coinciding operation ``$\odot$'' on a Topsnut-gpw and a graphic-vector base, see Fig.\ref{fig:2-txwg-graceful-disconnected} and Fig.\ref{fig:2-txwg-graceful}. In Fig.\ref{fig:2-txwg-graceful-disconnected}, the disconnected Topsnut-gpw $W$ admits a flawed graceful labelling $f$ (see Definition \ref{defn:flawed-odd-graceful-labelling}). A disconnected Topsnut-gpw $H$ shown in Fig.\ref{fig:2-txwg-graceful} admits a coloring $g$. So, the connected Topsnut-gpw $H\odot \textbf{\textrm{T}}$ shown in Fig.\ref{fig:2-txwg-graceful} admits a proper graceful labelling $f\odot g$.

\begin{figure}[h]
\centering
\includegraphics[width=14.4cm]{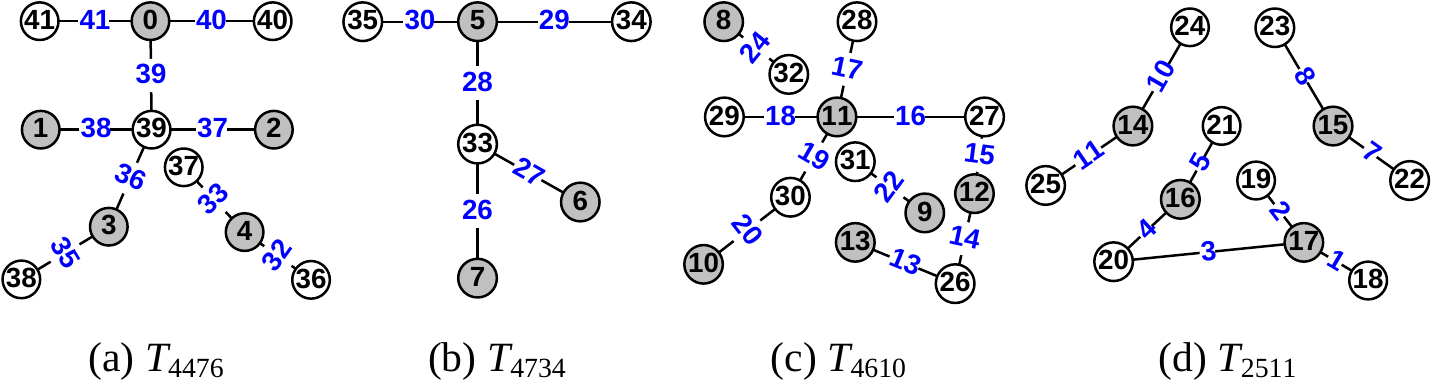}
\caption{\label{fig:2-txwg-graceful-disconnected}{\small A disconnected Topsnut-gpw $W$ consists of four Topsnut-gpws $T_{\textrm{4476}}$, $T_{\textrm{4734}}$, $T_{\textrm{4610}}$ and $T_{\textrm{2511}}$.}}
\end{figure}

\begin{figure}[h]
\centering
\includegraphics[width=16.4cm]{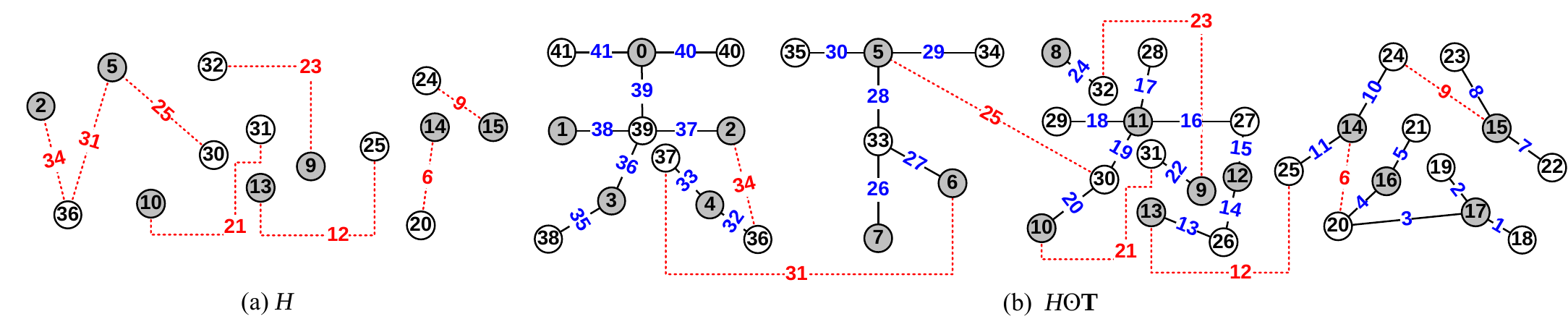}
\caption{\label{fig:2-txwg-graceful}{\small (a) $H$; (b) a connected Topsnut-gpw $H\odot \textbf{\textrm{T}}$ obtained by doing the vertex-coinciding operation on two Topsnut-gpws $G$ shown in Fig.\ref{fig:2-txwg-graceful-disconnected} and $H$ shown in (a).}}
\end{figure}
\begin{problem}\label{qeu:preface}
We, for characterize our colored graphic lattice $\textbf{\textrm{L}}(\textbf{\textrm{T}}\odot F)$, need to clarify the following questions:
\begin{asparaenum}[\quad \textbf{\textrm{Que}}-1. ]
\item \textbf{Determine} Topsnut-gpws induced by the graph $G$ shown in Fig.\ref{fig:2-txwg}, like the disconnected Topsnut-gpw $W$ shown in Fig.\ref{fig:2-txwg-graceful-disconnected}, since this disconnected graph $G$ admits two or more different colorings.
\item Since each Topsnut-gpw like $W$ corresponds a set like $F$ containing colored graphs like the disconnected Topsnut-gpw $H$ shown in Fig.\ref{fig:2-txwg-graceful} (a), \textbf{determine} such sets $F$ for $\textbf{\textrm{L}}(\textbf{\textrm{T}}\odot F)$.
\item \textbf{Determine} the cardinality of a colored graphic lattice $\textbf{\textrm{L}}(\textbf{\textrm{T}}\odot F)$ and particular Topsnut-gpws $H\odot \textbf{\textrm{T}}$ in $\textbf{\textrm{L}}(\textbf{\textrm{T}}\odot F)$, such as $H\odot \textbf{\textrm{T}}$ has the shortest diameter, Hamilton cycle, spanning trees with maximal leaves, scale-free behavior, clustering coefficient, and so on.
\item \textbf{Apply} $\textbf{\textrm{L}}(\textbf{\textrm{T}}\odot F)$ to the real world. For example, we can consider the disconnected graph $G$ shown in Fig.\ref{fig:2-txwg} as a \emph{public key}, and the disconnected Topsnut-gpw $W$ shown in Fig.\ref{fig:2-txwg-graceful-disconnected} as a \emph{private key}, then the connected Topsnut-gpw $H\odot \textbf{\textrm{T}}$ shown in Fig.\ref{fig:2-txwg-graceful}(b) is just a \emph{topological coloring authentication}.

\item Joining nine components $G_1,G_2,\dots ,G_9$ shown in Fig.\ref{fig:2-txwg-decompose} by a graph $H_i$ and doing the vertex-coinciding operation to them produce a connected graph, denoted as $H_i\odot ^9_{j=1}G_j$, see such examples shown in Fig.\ref{fig:2-txwg-3-joins}. We get a uncolored graphic lattice $\textbf{\textrm{L}}(G\odot F)$, \textbf{characterize} it. Since the disconnected graph $G$ shown in Fig.\ref{fig:2-txwg} consists of four Hanzi-graphs, we call $\textbf{\textrm{L}}(G\odot F)$ a \emph{Hanzi-graphic lattice}.

\begin{figure}[h]
\centering
\includegraphics[width=15cm]{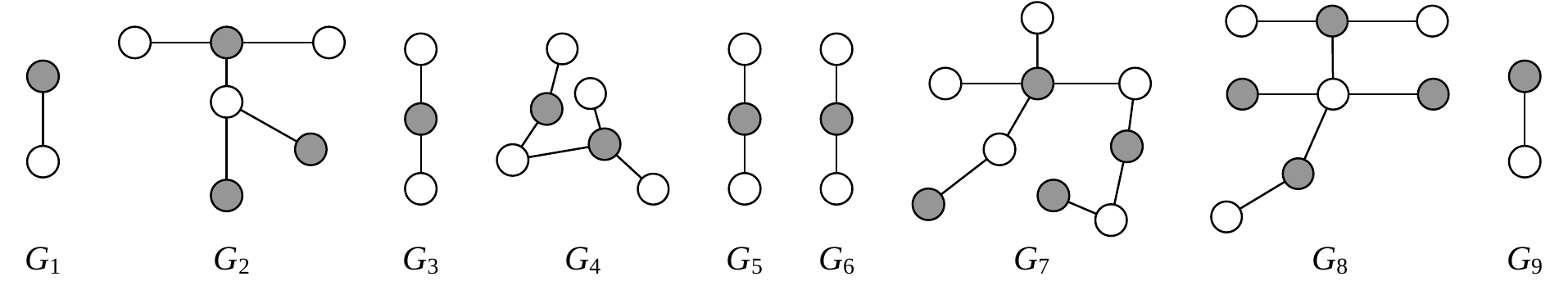}
\caption{\label{fig:2-txwg-decompose}{\small Nine connected graphs $G_1,G_2,\dots ,G_9$ are the components of the disconnected graph $G$ shown in Fig.\ref{fig:2-txwg}.}}
\end{figure}

\begin{figure}[h]
\centering
\includegraphics[width=16.4cm]{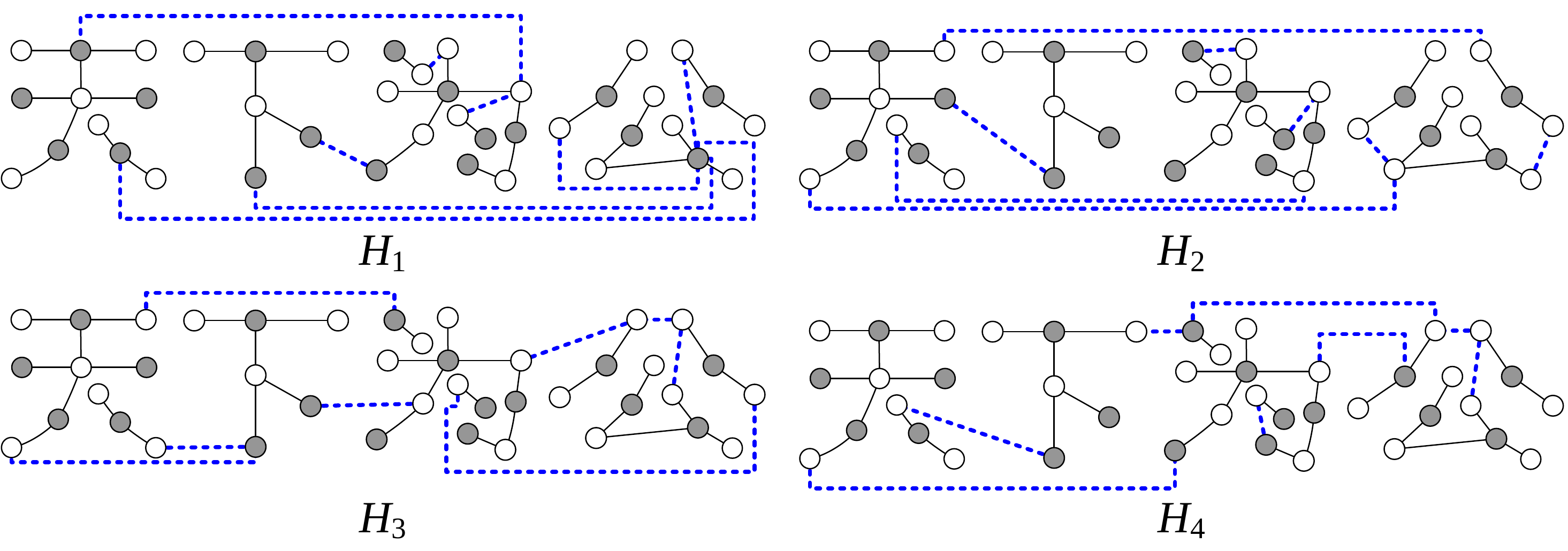}
\caption{\label{fig:2-txwg-3-joins}{\small Four connected graphs $H_1,H_2,H_3,H_4$ obtained by joining nine connected graphs $G_1,G_2,\dots ,G_9$ shown in Fig.\ref{fig:2-txwg-decompose} together.}}
\end{figure}

\item \textbf{Determine} other groups of (colored) Hanzi-graphs differ from the disconnected graph $G$ shown in Fig.\ref{fig:2-txwg} and the group $W$ shown in Fig.\ref{fig:2-txwg-graceful-disconnected}, see examples shown in Fig.\ref{fig:2-txwg-other-groups}, we can see: (i) these groups of Hanzi-graphs have the same number of Hanzi-strokes; (ii) these groups of Hanzi-graphs make different (colored) Hanzi-graphic lattices $\textbf{\textrm{L}}(G\odot F)$. For example, $H_1,H_2,H_3,H_4\in \textbf{\textrm{L}}(G\odot F)$, see Fig.\ref{fig:2-txwg-3-joins}.

\item \textbf{Decomposing graphs into Hanzi-graphs.} We can vertex-split $J_2$ into $J_1$ shown in Fig.\ref{fig:2-txwg-splittings} into groups of Hanzi-graphs shown in Fig.\ref{fig:2-txwg} and Fig.\ref{fig:2-txwg-other-groups}. Splitting a connected graph into some groups of Hanzi-graphs, such that: (i) these groups differ from each other; (ii) each group of Hanzi-graphs form a Chinese sentence, or a Chinese paragraph to be meaningful in Chinese. Notice that there are two kinds of Chinese characters: one is traditional Chinese characters, and another is simplified Chinese characters (see \cite{GB2312-80}).

\begin{figure}[h]
\centering
\includegraphics[width=12cm]{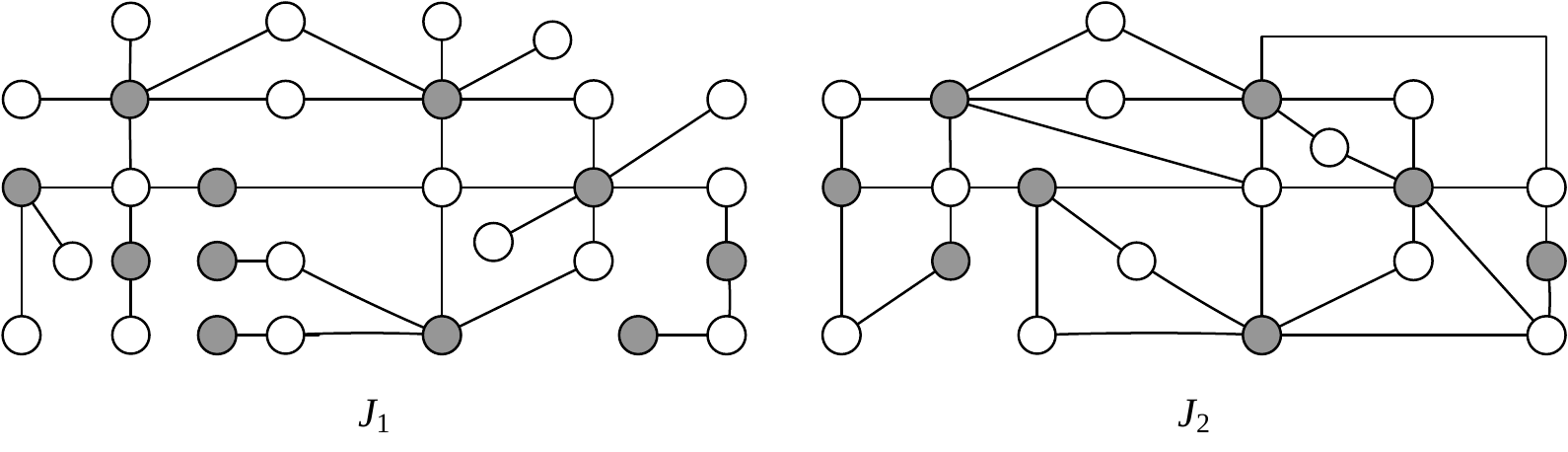}
\caption{\label{fig:2-txwg-splittings}{\small Two connected graphs $J_1$ and $J_2$ obtained by doing vertex-coinciding operation on $G$ shown in Fig.\ref{fig:2-txwg}.}}
\end{figure}

\begin{figure}[h]
\centering
\includegraphics[width=16.4cm]{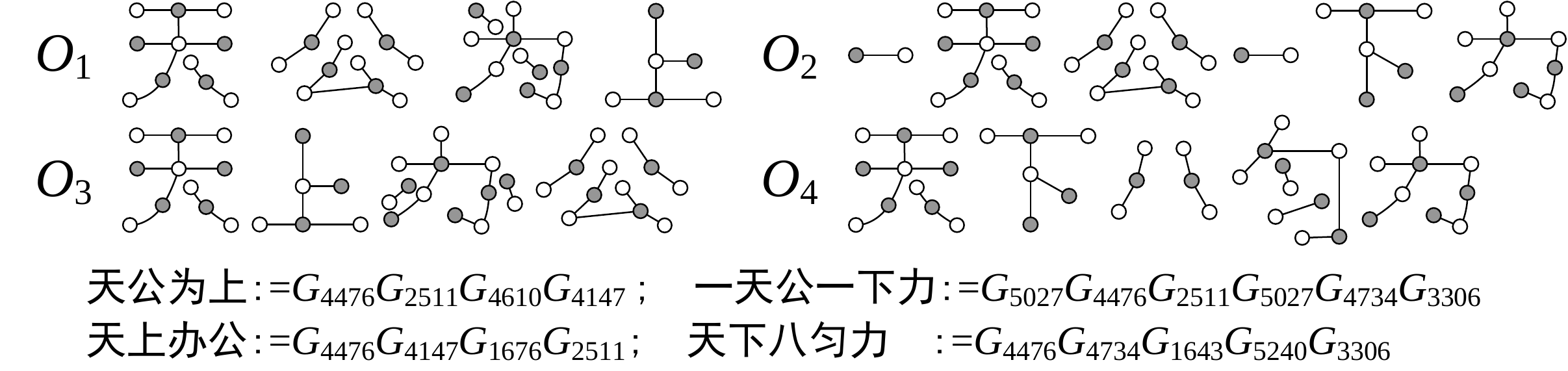}
\caption{\label{fig:2-txwg-other-groups}{\small Four groups $O_1,O_2,O_3,O_4$ of Hanzi-graphs differ from the disconnected graph $G$ shown in Fig.\ref{fig:2-txwg}.}}
\end{figure}

\quad Notice that $J_1\not \neq J_2$, however, both $J_1$ and $J_2$ can be vertex-split into the same groups of Hanzi-graphs shown in Fig.\ref{fig:2-txwg} and Fig.\ref{fig:2-txwg-other-groups}, so it is difficult and complex in Hanzi-graph authentication, but it means that Hanzi-graph authentication has greatly application potential in the ear of supercomputers and quantum computers.\qqed
\end{asparaenum}
\end{problem}

\begin{figure}[h]
\centering
\includegraphics[width=16cm]{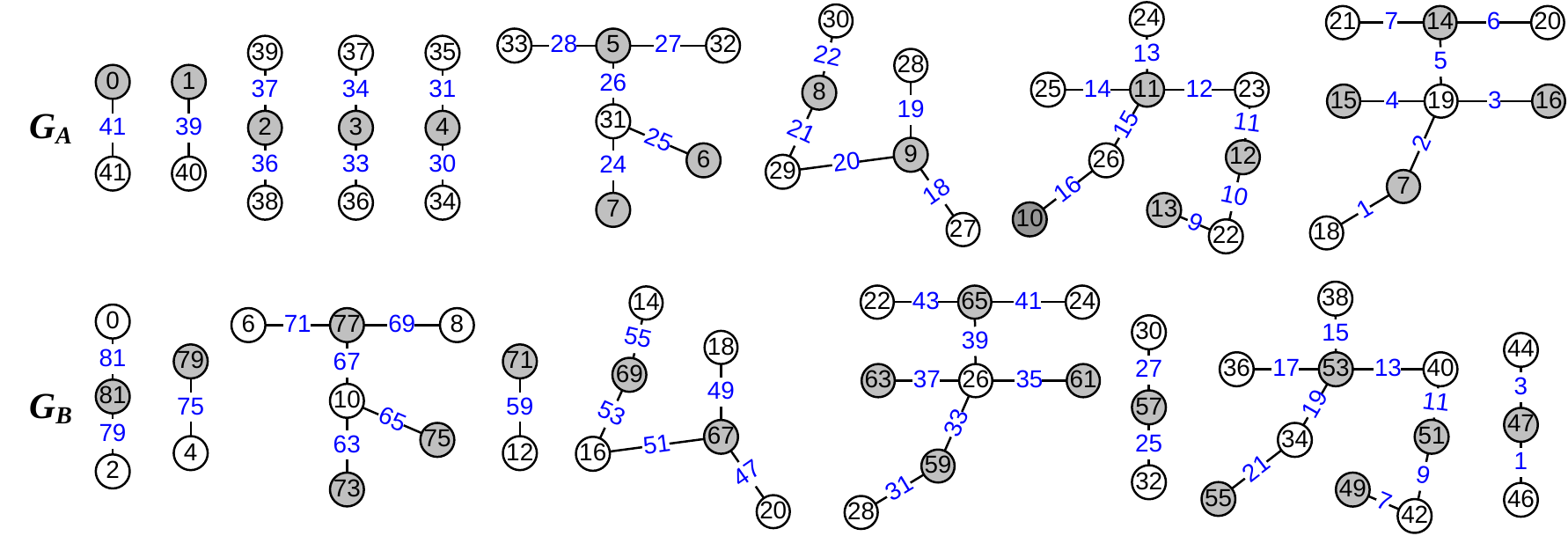}
\caption{\label{fig:Dpermutation-graceful-odd}{\small A flawed graceful labelling and a flawed odd-graceful labelling for two permutations $G_A$ and $G_B$ of nine connected graphs $G_1,G_2,\dots ,G_9$ shown in Fig.\ref{fig:2-txwg-decompose}.}}
\end{figure}

\subsubsection{Topological authentication problems}

Yao \emph{et al.} in \cite{Yao-Wang-Su-Ma-Wang-Sun-ITNEC-2020} and \cite{Yao-Wang-Ma-Su-Wang-Sun-ITNEC-020} have investigated \emph{multiple authentication} (also, multiple color-valued graphic authentication) and topological authentication by especial total colorings and real-valued total colorings. These techniques are based on Topsnut-gpws (graphic passwords made by topological structure and number theory) proposed first by Wang \emph{et al.} in \cite{Wang-Xu-Yao-2016} and \cite{Wang-Xu-Yao-Key-models-Lock-models-2016}, since Topsnut-gpws have many excellent properties: (i) each one is a composition of a topological structure and a mathematical restriction; (ii) each one can be saved in computer by a popular matrix of algebra, and runs quickly in computation; (iii) each one is easily to produce text-based passwords with longer bytes; (iv) they are suitable to design various topological authentications, such as one-vs-more authentication, more-vs-more authentication, and so on.

\begin{problem}\label{qeu:isomorphic}
In \cite{Wang-Su-Yao-mixed-difference-2019}, Wang \emph{et al.} propose the \emph{topological coloring isomorphic problem} consisted of graph isomorphism and coloring isomorphism, which will induce more complex topological authentications. However, we are facing the following mathematical problems in the topological authentication:

\begin{asparaenum}[\quad \textbf{\textrm{Iso}}-1. ]
\item \textbf{Find} a multivariate function $\theta$ of vertex colors and edges colors of each particular subgraph $H$ of a graph $G$ such that $G$ admits a proper total coloring $h$ holding $\theta(h(V(H)\cup E(H)))=$ a constant for each particular subgraph $H$, where

\quad 1-1. $H$ may be an edge, or a face $f_i$ having bound $B(f_i)$, or a cycle $C_n$, or a path $P_n$, and so on.

\quad 1-2. Find more multivariate functions $\theta$ of vertex colors and edge colors such that $\theta$ to be a constant under a proper total coloring of $G$.

\item Given a set $V_{co}$ of colored vertices and a set $E_{co}$ of colored edges, \textbf{how} to assemble all elements of two sets into graphs $G_i$ such that $G_i$ is just colored by a $W$-type total coloring $f$ holding $f(V(G_i)\cup E(G_i))\subseteq V_{co}\cup E_{co}$.
\item \textbf{$J$-graphic isomorphic problem.} Let $G$ and $H$ be two graphs of $p$ vertices, and let $J$ be a particular graph. Suppose that each vertex of $G$ is in some particular graph $J\subset G$, so is each vertex of $H$ in $J\subset H$. If $G-V(J)\cong H-V(J)$ for each particular graph $J$ of $G$ and $H$, can we claim $G\cong H$? Here, $J$ may be a path of $p$ vertices, or a cycle of $p$ vertices, or a complete graph of $p$ vertices, \emph{etc.} Recall, let $G$ and $H$ be two graphs that have the same number of vertices. If there exists a bijection $\varphi: V (G)\rightarrow V (H)$ such that $uv \in E(G)$ if and only if $\varphi(u)\varphi(v) \in E(H)$, then we say both graphs $G$ and $H$ to be isomorphic to each other, denoted by $G\cong H$ in \cite{Bondy-2008}. A long-standing Kelly-Ulam's Reconstruction Conjecture (1942): Let both $G$ and $H$ be graphs with $n$ vertices. If there is a bijection $f: V (G)\rightarrow V (H)$ such that $G-u\cong H -f(u)$ for each vertex $u \in V (G)$, then $G\cong H$. This conjecture supports some cryptosystems consisted of graphic isomorphism to be ``Resisting classical computers and quantum computers''.
\item A \emph{topological coloring isomorphism} consists of graph isomorphism and coloring isomorphism. For two colored graphs $G$ admitting a $W$-type total coloring $f$ and $H$ admitting a $W$-type total coloring $g$, if there is a mapping $\varphi$ such that $w'=\varphi(w)$ for each element $w\in V(G)\cup E(G)$ and each element $w'\in V(H)\cup E(H)$, then we say they are \emph{isomorphic} to each other, and write this case by $G\cong H$, and moreover if $g(w')=f(w)$ for $w'=\varphi(w)$, we say they are subject to \emph{coloring isomorphic} to each other, so we denoted $G=H$ for expressing the combination of topological isomorphism and coloring isomorphism.
\item In \cite{YAO-SUN-WANG-SU-XU2018arXiv}, the authors defined: Let ``\emph{$W$-type labelling}'' be a given graph labelling, and let a connected graph $G$ admit a $W$-type labelling. If every connected proper subgraph of $G$ also admits a labelling to be a $W$-type labelling, then we call $G$ a \emph{perfect $W$-type labelling graph}. Caterpillars are perfect $W$-type labelling graphs if these $W$-type labellings are listed in Theorems \ref{thm:connections-several-labellings}, \ref{thm:connection-flawed-labellings} and \ref{thm:flawed-(k,d)-labellings}, and each lobster is a perfect (odd-)graceful labelling graph. Conversely, we ask for: \emph{If every connected proper subgraph of a connected graph $G$ admits a $W$-type labelling, then \textbf{does} $G$ admit this $W$-type labelling too}?
\item \textbf{How} to construct a matrix $A_{3\times q}$ by a given integer character string such that $A_{3\times q}$ is just a Topsnut-matrix of some Topsnut-gpw?
\item Let $P_3\times P_q$ be a lattice in $xOy$-plane. There are points $(i,j)$ on the lattice $P_3\times P_q$ with $i\in[1,3]$ and $j\in [1,q]$. If a fold-line $L$ with initial point $(a,b)$ and terminal point $(c,d)$ on $P_3\times P_q$ is internally disjoint and contains all points $(i,j)$ of $P_3\times P_q$, we call $L$ a \emph{total TB-paw line}. \textbf{Find} all possible total TB-paw lines of $P_3\times P_q$. In general, let $\{L_i\}^m_1=\{L_1,L_2,\dots, L_m\}$ be a set of $m$ disjoint fold-lines on $P_3\times P_q$, where each $L_i$ has own initial point $(a_i,b_i)$ and terminal point $(c_i,d_i)$. If the fold-line set $\{L_i\}^m_1$ contains all points $(i,j)$ of $P_3\times P_q$, we call $\{L_i\}^m_1$ a \emph{group of TB-paw lines}, here it is not allowed $(a_i,b_i)=(c_i,d_i)$ for any fold-line $L_i$. Find all possible groups $\{L_i\}^m_1$ of TB-paw lines for $m\in [1,3q]$.\qqed
\end{asparaenum}
\end{problem}

\begin{defn}\label{defn:definition-graph-homomorphism}
\cite{Bondy-2008} A \emph{graph homomorphism} $G\rightarrow H$ from a graph $G$ into another graph $H$ is a mapping $f: V(G) \rightarrow V(H)$ such that $f(u)f(v)\in E(H)$ for each edge $uv \in E(G)$. (see examples shown in Fig.\ref{fig:1-homomorphism}.)\qqed
\end{defn}
\begin{figure}[h]
\centering
\includegraphics[width=12cm]{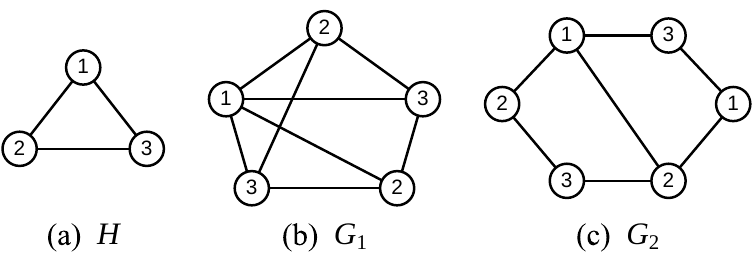}
\caption{\label{fig:1-homomorphism}{\small Two graph homomorphisms $\theta_i: G_i\rightarrow H$ for $i=1,2$.}}
\end{figure}

The comprehensive survey by Zhu \cite{Zhu-X-Discrete-Math2001} contains many other intriguing problems about graph homomorphism. By \cite{Pavol-Hell-Cambridge-2003}, we have the following concepts:
\begin{asparaenum}[(a) ]
\item A homomorphism from a graph $G$ to itself is called an \emph{endomorphism}. An isomorphism from $G$ to $H$ is a particularly graph homomorphism from $G$ to $H$, also, they are \emph{homomorphically equivalent}.
\item Two graphs are homomorphically equivalent if each admits a homomorphism to the other, denoted as $G\leftrightarrow H$ which contains a homomorphism $G\rightarrow H$ from $G$ to $H$, and another homomorphism $H\rightarrow G$ from $H$ to $G$.
\item A homomorphism to the complete graph $K_n$ is exactly an $n$-coloring, so a homomorphism of $G$ to $H$ is also called an \emph{$H$-coloring} of $G$. The \textbf{homomorphism problem} for a fixed graph $H$, also called the $H$-coloring problem, asks whether or not an input graph $G$ admits a homomorphism to $H$.
\item By analogy with classical colorings, we associate with each $H$-coloring $f$ of $G$ a partition of $V (G)$ into the sets $S_h = f^{-1}(h)$, $h \in V (H)$. It is clear that a mapping $f: V (G) \rightarrow V (H)$ is a homomorphism of $G$ to $H$ if and only if the associated partition satisfies the following two constraints:

\quad (a-1) if $hh$ is not a loop in $H$, then the set $S_h$ is independent in $G$; and

\quad (a-2) if $hh'$ is not an edge (arc) of $H$, then there are no edges (arcs) from the set $S_h$ to the set $S_{h'}$ in $G$.

\quad Thus for a graph $G$ to admit an \emph{$H$-coloring is equivalent} to admitting a partition satisfying (a-1) and (a-2).
\item If $H$,$H'$ are homomorphically equivalent, then a graph $G$ is $H$-colorable if and only if it is $H'$-colorable.
\item Suppose that $H$ is a subgraph of $G$. We say that $G$ retracts to $H$, if there exists a homomorphism $f:G\rightarrow H$, called a \emph{retraction}, such that $f(u)=u$ for any vertex of $H$. A \emph{core} is a graph which does not retract to a proper subgraph. Any graph is homomorphically equivalent to a core.
\end{asparaenum}

\subsection{Preliminary}

\subsubsection{Notation and terminology} Standard notation and terminology of graph theory will be used in this article and can be found in \cite{Bondy-2008} and \cite{Gallian2019}. Graphs mentioned are simple, that is, they have no loops and multiple edges, hereafter.

\begin{asparaenum}[$\star$ ]
\item A $(p,q)$-graph is a graph having $p$ vertices and $q$ edges.
\item The \emph{cardinality} of a set $X$ is denoted as $|X|$, so the \emph{degree} of a vertex $x$ in a $(p,q)$-graph $G$ is $\textrm{deg}_G(x)=|N(x)|$, where $N(x)$ is the set of neighbors of the vertex $x$.
\item A vertex $x$ is called a \emph{leaf} if its degree $\textrm{deg}_G(x)=1$.
\item The symbol $[a,b]$ stands for an integer set $\{a,a+1,a+2,\dots, b\}$ with two integers $a,b$ subject to $0<a<b$, and $[a,b]^o$ denotes an \emph{odd-set} $\{a,a+2,\dots, b\}$ with odd integers $a,b$ with respect to $1\leq a<b$.
\item A \emph{text-based password} is abbreviated as \emph{TB-paw}. A password made by ``topological structure and number theory'' is simply written as \emph{Topsnut-gpw}.
\item A \emph{text string} $D=t_1t_2\cdots t_m$ has its own \emph{reciprocal text string} defined by $D^{-1}=t_mt_{m-1}\cdots t_2t_1$, also, we say that $D$ and $D^{-1}$ match with each other.
\item All non-negative integers are collected in the set $Z^0$.
\item A graph $G$ admits a \emph{labelling} $f:V(G)\rightarrow [a,b]$ means that $f(x)\neq f(y)$ for any pair of distinct vertices $x,y\in V(G)$.
\item A graph $G$ admits a \emph{coloring} $g:V(G)\rightarrow [a,b]$ means that $g(u)= g(v)$ for some two distinct vertices $u,v\in V(G)$.
\item For a mapping $f:S\subset V(G)\cup E(G)\rightarrow [1,M]$, we write $f(S)=\{f(w):w\in S\}$.
\item A proper total coloring $f:V(G)\cup E(G)\rightarrow [1,M]$ of a simple graph $G$ holds $f(u)\neq f(v)$ for each edge $uv\in E(G)$ and $f(uv)\neq f(uw)$ for distinct neighbors $v,w\in N(u)$. The number $\chi''(G)=\min_f\{M:~f\textrm{ is a proper total coloring of }G\}$ is called the \emph{total chromatic number} of $G$.
\end{asparaenum}

\subsubsection{Graph operations} Graph operation is not only very important in graph theory, but also useful and efficient in application of network security.

\begin{asparaenum}[\quad \textbf{\textrm{Oper}}-1. ]
\item \emph{Vertex-splitting operation.} Let $x$ be a vertex of a graph $G$ with its degree $\textrm{deg}_G(x)=d\geq 2$, and its neighbor set $N(x)=\{x_1,x_2,\dots, x_d\}$. We vertex-split the vertex $x$ into two vertices $x',x''$ such that $N(x)=N(x')\cup N(x'')$, where $N(x')=\{x_1,x_2,\dots, x_k\}$ and $N(x'')=\{x_{k+1},x_{k+2},\dots, x_d\}$ with $1\leq k<d$, and $N(x')\cap N(x'')=\emptyset$. There resultant graph is denoted as $G\wedge x$, and the process of obtaining $G\wedge x$ is called \emph{vertex-splitting operation} (see Fig.\ref{fig:1-vertex-splitting}).
\item \emph{Vertex-coinciding operation.} Suppose that two vertices $x'$ and $x''$ of a graph $H$ hold $N(x')\cap N(x'')=\emptyset$, then we vertex-coincide these two vertices $x'$ and $x''$ into one vertex $x$, and write the resultant graph as $H(x'\odot x'')$, and call the process of obtaining $H(x'\odot x'')$ as \emph{vertex-coinciding operation}. Since $|E(H)|=|E(H(x'\odot x''))|$, we call this vertex-coinciding operation as \emph{edge-protected vertex-coinciding operation} (see Fig.\ref{fig:1-vertex-splitting}).

\begin{figure}[h]
\centering
\includegraphics[width=11cm]{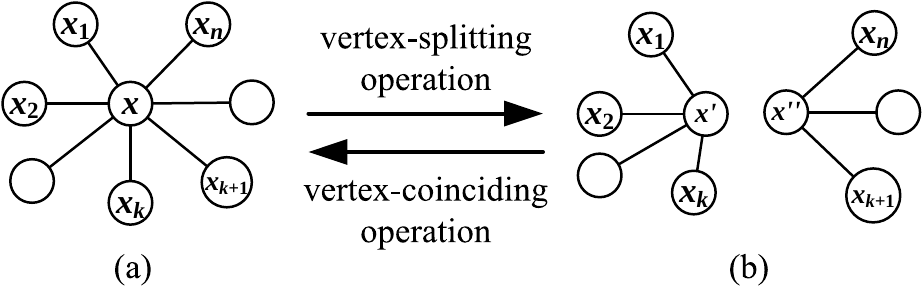}
\caption{\label{fig:1-vertex-splitting}{\small A scheme for the vertex-splitting operation and the vertex-coinciding operation.}}
\end{figure}

\quad Let $G_1$ and $G_2$ be two disjoint graphs. We take vertices $x_{i,1},x_{i,2},\dots ,x_{i,m}$ of $G_i$ with $i=1,2$, and vertex-coincide the vertex $x_{1,j}$ with the vertex $x_{2,j}$ into one $w_j=x_{1,j}\odot x_{2,j}$ with $j\in [1,m]$, the resultant graph is denoted as $G_1\odot G_2$, called the \emph{vertex-coincided graph}. Conversely, we vertex-split each vertex $w_j=x_{1,j}\odot x_{2,j}$ into two vertices $x_{1,j}$ and $x_{2,j}$ with $j\in [1,m]$, the vertex-coincided graph $G_1\odot G_2$ is vertex-split into two disjoint graphs $G_1$ and $G_2$, and we denote the process of vertex-splitting $G_1\odot G_2$ into $G_1$ and $G_2$ by $(G_1\odot G_2)\wedge \{w_j\}^m_1$.

\quad The authors in \cite{Yao-Zhang-Sun-Mu-Sun-Wang-Wang-Ma-Su-Yang-Yang-Zhang-2018arXiv, Yao-Mu-Sun-Sun-Zhang-Wang-Su-Zhang-Yang-Zhao-Wang-Ma-Yao-Yang-Xie2019} introduced the \emph{vertex-splitting operation} and the \emph{vertex-coinciding operation}, these two operations are a pair of mutually inverse operations.

\item \emph{Substitution operation.} Let $x$ be a vertex of a graph $G$, and $N(x)=\{x_1$, $x_2$, $\dots $, $ x_d\}$ be the neighbor set of the vertex $x$, where $d=\textrm{deg}(x)$. A vertex-substitution operation is defined as: For a graph $H$ with vertex set $\{y_1,y_2,\dots ,y_n\}$ with $n\geq d$, we remove the vertex $x$ from $G$, and add $H$ to the remainder graph $G-x$ by joining $y_i$ and $x_i$ together by an edge $y_ix_i$ with $i\in [1,d]$. The resultant graph is called a \emph{vertex-substitution graph}, written as $(G-x)\ominus H$.

\quad In general, we take a vertex subset $V'=\{u_1,u_2,\dots,u_m\}$ of a graph $G$, here, each neighbor set $N(u_j)=\{v_{j,1},v_{j,2},\dots, v_{j,d_j}\}$ with $d_j=\textrm{deg}(u_j)$ for $j\in [1,m]$. Let each $H_j\in S^*=\{H_1,H_2,\dots,H_m\}$ be a graph having vertex set $\{w_{j,1},w_{j,2},\dots, w_{j,n_j}\}$ with $n_j\geq d_j$ for $j\in [1,m]$, we delete the vertices of $V'$ from $G$, and add $H_j$ to the remainder graph $G-V'$ by joining $w_{j,i}$ and $u_i$ together by an edge $w_{j,i}u_i$ for $j\in [1,m]$. The \emph{vertex-substitution graph} is denoted as $(G-V')\ominus S^*$.
\item \cite{Wang-Yao-Star-type-Lattices-2020} \emph{Leaf-splitting and leaf-coinciding operations.} Let $uv$ be an edge of a graph $G$ with a (proper) total coloring $f$, and $\textrm{deg}_G(u)\geq 2$, $\textrm{deg}_G(v)\geq 2$. A \emph{leaf-splitting operation} is defined as: Remove the edge $uv$ from $G$, the resulting graph is denoted as $G-uv$. Add a new leaf $v'$, and join it with the vertex $u$ of $G-uv$ by a new edge $uv'$, and then add another new leaf $u'$ to join it with the vertex $v$ of $G-uv$ by another new edge $vu'$, the resultant graph is written as $H=G(uv\prec)$. Defined a (proper) total coloring $g$ of $H$ as: $g(w)=f(w)$ for each element $w\in [V(H)\cup E(H)]\setminus \{u',v',uv',vu'\}$, $g(u')=f(u)$, $g(v')=f(v)$, $g(uv')=f(uv)$ and $g(vu')=f(uv)$. See Fig.\ref{fig:leaf-splitting} from (a) to (b). Conversely, a \emph{leaf-coinciding operation} is defined by vertex-coinciding two leaves $uv'$ and $vu'$ of $H=G(uv\prec)$ admitting a (proper) total coloring $g$ into one edge $uv=uv'\ominus vu'$ if $g(u)=g(u')$, $g(v)=g(v')$ and $g(uu')=g(vv')$. The resultant graph is written as $G=H(uv'\ominus vu')$. And define a (proper) total coloring $f$ of $G$ as: $f(w)=g(w)$ for each element $w\in [V(G)\cup E(G)]\setminus \{uv\}$, $f(uv)=g(uv')=g(vu')$. For understanding this leaf-coinciding operation see Fig.\ref{fig:leaf-splitting} from (b) to (a), also, this operation is very similar with the connection of two train hooks.
\end{asparaenum}

\begin{figure}[h]
\centering
\includegraphics[width=13cm]{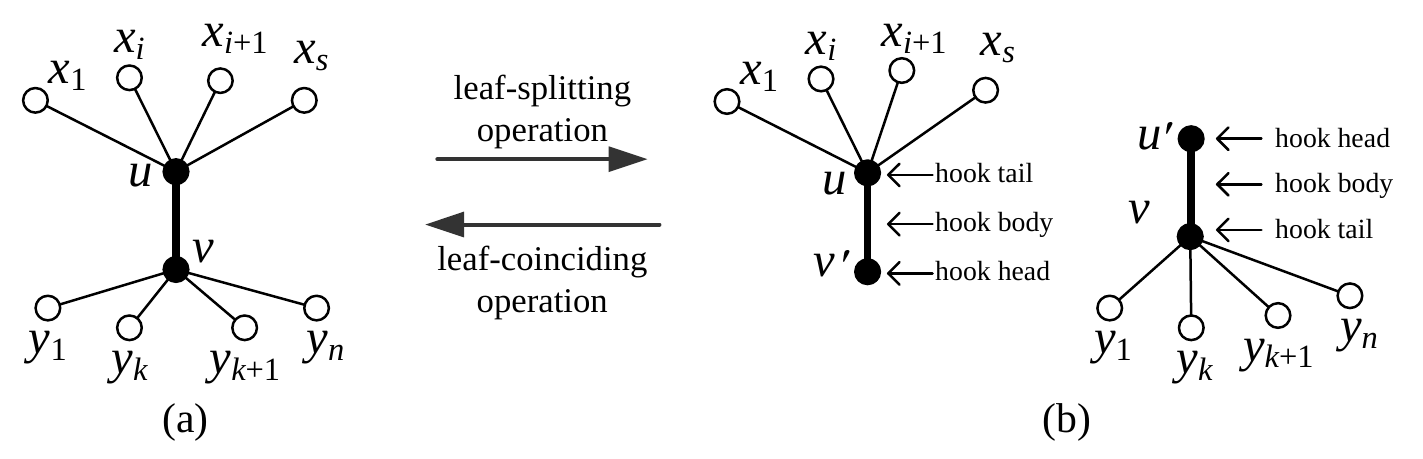}\\
\caption{\label{fig:leaf-splitting} {\small Leaf-splitting operation and leaf-coinciding operation (also, the connection of two train hooks).}}
\end{figure}

\subsubsection{Particular proper total colorings} As known, there are many intriguing colorings/labellings of graphs (Ref. \cite{Bondy-2008, Gallian2019, Wang-Xu-Yao-2017-Twin, Yao-Sun-Zhang-Mu-Sun-Wang-Su-Zhang-Yang-Yang-2018arXiv, Yao-Zhang-Sun-Mu-Sun-Wang-Wang-Ma-Su-Yang-Yang-Zhang-2018arXiv, Yao-Zhao-Zhang-Mu-Sun-Zhang-Yang-Ma-Su-Wang-Wang-Sun-arXiv2019}). Here, a graph $G$ admitting a ``\emph{$W$-type coloring}'' means one of particular colorings and graph labellings of graph theory hereafter. A \emph{proper total coloring} $f$ of a graph $G$ is a mapping: $f: V(G)\cup E(G)\rightarrow [1,M]$, such that $f(x)\neq f(y)$ for any pair of adjacent vertices $x,y\in V(G)$ and $f(uv)\neq f(uw)$ for any pair of adjacent edges $uv,uw\in E(G)$. We restate several particular $W$-type labellings as follows:

\begin{defn} \label{defn:proper-bipartite-labelling-ongraphs}
\cite{Gallian2019, Yao-Cheng-Yao-Zhao-2009, Zhou-Yao-Chen-Tao2012, Yao-Sun-Zhang-Mu-Sun-Wang-Su-Zhang-Yang-Yang-2018arXiv} Suppose that a connected $(p,q)$-graph $G$ admits a mapping $\theta:V(G)\rightarrow \{0,1,2,\dots \}$. For edges $xy\in E(G)$
the induced edge labels are defined as $\theta(xy)=|\theta(x)-\theta(y)|$. Write vertex color set $\theta(V(G))=\{\theta(u):u\in V(G)\}$ and edge color set
$\theta(E(G))=\{\theta(xy):xy\in E(G)\}$. There are the following constraints:
\begin{asparaenum}[(a)]
\item \label{Proper01} $|\theta(V(G))|=p$.
\item \label{Proper02} $|\theta(E(G))|=q$.
\item \label{Graceful-001} $\theta(V(G))\subseteq [0,q]$, $\min \theta(V(G))=0$.
\item \label{Odd-graceful-001} $\theta(V(G))\subset [0,2q-1]$, $\min \theta(V(G))=0$.
\item \label{Graceful-002} $\theta(E(G))=\{\theta(xy):xy\in E(G)\}=[1,q]$.
\item \label{Odd-graceful-002} $\theta(E(G))=\{\theta(xy):xy\in E(G)\}=[1,2q-1]^o$.
\item \label{set-ordered-definition-1} $G$ is a bipartite graph with the bipartition
$(X,Y)$ such that $\max\{\theta(x):x\in X\}< \min\{\theta(y):y\in
Y\}$ ($\max \theta(X)<\min \theta(Y)$ for short).
\item \label{Graceful-matching} $G$ is a tree containing a perfect matching $M$ such that
$\theta(x)+\theta(y)=q$ for each edge $xy\in M$.
\item \label{Odd-graceful-matching} $G$ is a tree having a perfect matching $M$ such that
$\theta(x)+\theta(y)=2q-1$ for each edge $xy\in M$.
\end{asparaenum}

We have: a \emph{graceful labelling} $\theta$ satisfies (\ref{Proper01}), (\ref{Graceful-001}) and (\ref{Graceful-002}); a \emph{set-ordered
graceful labelling} $\theta$ holds (\ref{Proper01}), (\ref{Graceful-001}), (\ref{Graceful-002}) and (\ref{set-ordered-definition-1}) true;
a \emph{strongly graceful labelling} $\theta$ holds (\ref{Proper01}), (\ref{Graceful-001}), (\ref{Graceful-002}) and
(\ref{Graceful-matching}) true; a \emph{strongly set-ordered graceful labelling} $\theta$ holds (\ref{Proper01}), (\ref{Graceful-001}), (\ref{Graceful-002}), (\ref{set-ordered-definition-1}) and (\ref{Graceful-matching}) true. An \emph{odd-graceful labelling} $\theta$ holds (\ref{Proper01}), (\ref{Odd-graceful-001}) and (\ref{Odd-graceful-002}) true; a \emph{set-ordered odd-graceful labelling} $\theta$ holds (\ref{Proper01}), (\ref{Odd-graceful-001}), (\ref{Odd-graceful-002})
and (\ref{set-ordered-definition-1}) true; a \emph{strongly odd-graceful labelling} $\theta$ holds (\ref{Proper01}), (\ref{Odd-graceful-001}),
(\ref{Odd-graceful-002}) and (\ref{Odd-graceful-matching}) true; a \emph{strongly set-ordered odd-graceful labelling} $\theta$ holds (\ref{Proper01}), (\ref{Odd-graceful-001}), (\ref{Odd-graceful-002}), (\ref{set-ordered-definition-1}) and (\ref{Odd-graceful-matching}) true.\qqed
\end{defn}

We introduce a group of particular total colorings, in which some are very similar with that in \cite{Gallian2019, Yao-Zhang-Sun-Mu-Sun-Wang-Wang-Ma-Su-Yang-Yang-Zhang-2018arXiv, Zhou-Yao-Chen2013}, as follows:

\begin{defn} \label{defn:new-graceful-strongly-colorings}
$^*$ Suppose that a connected $(p,q)$-graph $G$ admits a proper total coloring $f:V(G)\cup E(G)\rightarrow [1,M]$, and there are $f(x)=f(y)$ for some pairs of vertices $x,y\in V(G)$. Write $f(S)=\{f(w):w\in S\}$ for any non-empty set $S\subseteq V(G)\cup E(G)$. We have a group of constraints as follows:
\begin{asparaenum}[(1$^\circ$)]
\item \label{27TProper01} $|f(V(G))|< p$.
\item \label{27TProper02} $|f(E(G))|=q$.
\item \label{27TGraceful-001} $f(V(G))\subseteq [1,M]$, $\min f(V(G))=1$.
\item \label{27TOdd-graceful-001} $f(V(G))\subset [1,2q+1]$, $\min f(V(G))=1$.
\item \label{27TGraceful-002} $f(E(G))=[1,q]$.
\item \label{27Tmodulo-01} $f(E(G))=[0,q-1]$.
\item \label{27TOdd-graceful-002} $f(E(G))=[1,2q-1]^o$.
\item \label{27Tedge-set-odd-sum} $f(E(G))=[1,2q-1]^o$.
\item \label{27Teven-edge-set-11} $f(E(G))=[2, 2q]^e$.
\item \label{27Tedge-set-odd-sum} $f(E(G))=[1,2q-1]^o$.
\item \label{27Tsequential-edge-set} $f(E(G))=[c,c+q-1]$.
\item \label{27Tgraceful-002} $f(uv)=|f(u)-f(v)|$.
\item \label{27Tfelicitous-002} $f(uv)=f(u)+f(v)$.
\item \label{27Tedge-labels-even-odd} $f(uv)=f(u)+f(v)$ when $f(u)+f(v)$ is even, and $f(uv)=f(u)+f(v)+1$ when $f(u)+f(v)$ is odd.
\item \label{27Tmodulo-00} $f(uv)=f(u)+f(v)~(\textrm{mod}~q)$.
\item \label{27Tmodulo-11} $f(uv)=f(u)+f(v)~(\textrm{mod}~2q)$.
\item \label{27Tedge-difference} $f(uv)+|f(u)-f(v)|=k$.
\item \label{27Tgraceful-difference} $\big |f(uv)-|f(u)-f(v)|\big |=k$.
\item \label{27Tedge-magic} $f(u)+f(uv)+f(v)=k$.
\item \label{27Tmodulo-ordered} There exists an integer $k$ so that $\min \{f(u),f(v)\}\leq k <\max\{f(u),f(v)\}$.
\item \label{27TSet-ordered} $(X,Y)$ is the bipartition of a bipartite graph $G$ such that $\max f(X)< \min f(Y)$.
\end{asparaenum}

We then have a \emph{$W$-type} coloring $f$ to be:
\begin{asparaenum}[(1)]
\item a \emph{gracefully total coloring} if (\ref{27TProper01}$^\circ$), (\ref{27TGraceful-001}$^\circ$), (\ref{27TGraceful-002}$^\circ$) and (\ref{27Tgraceful-002}$^\circ$) hold true.
\item a \emph{set-ordered
gracefully total coloring} if (\ref{27TProper01}$^\circ$), (\ref{27TGraceful-001}$^\circ$), (\ref{27TGraceful-002}$^\circ$), (\ref{27Tgraceful-002}$^\circ$) and (\ref{27TSet-ordered}$^\circ$) hold true.
\item an \emph{odd-gracefully total coloring} if (\ref{27TProper01}$^\circ$), (\ref{27TOdd-graceful-001}$^\circ$), (\ref{27TOdd-graceful-002}$^\circ$) and (\ref{27Tgraceful-002}$^\circ$) hold true.
\item a \emph{set-ordered odd-gracefully total coloring} if (\ref{27TProper01}$^\circ$), (\ref{27TOdd-graceful-001}$^\circ$), (\ref{27TOdd-graceful-002}$^\circ$), (\ref{27Tgraceful-002}$^\circ$) and (\ref{27TSet-ordered}$^\circ$) hold true.
\item a \emph{felicitous total coloring} if (\ref{27TGraceful-001}$^\circ$),(\ref{27Tmodulo-00}$^\circ$) and (\ref{27Tmodulo-01}$^\circ$) hold true.
\item a \emph{set-ordered felicitous total coloring} if (\ref{27TGraceful-001}$^\circ$),(\ref{27Tmodulo-00}$^\circ$), (\ref{27Tmodulo-01}$^\circ$) and (\ref{27TSet-ordered}$^\circ$) hold true.
\item an \emph{odd-elegant total coloring} if (\ref{27TOdd-graceful-001}$^\circ$), (\ref{27Tmodulo-11}$^\circ$) and (\ref{27TOdd-graceful-002}$^\circ$) hold true.
\item a \emph{set-ordered odd-elegant total coloring} if (\ref{27TOdd-graceful-001}$^\circ$), (\ref{27Tmodulo-11}$^\circ$), (\ref{27TOdd-graceful-002}$^\circ$) and (\ref{27TSet-ordered}$^\circ$) hold true.
\item a \emph{harmonious total coloring} if (\ref{27TGraceful-001}$^\circ$), (\ref{27Tmodulo-00}$^\circ$) and (\ref{27Tmodulo-01}$^\circ$) hold true, and when $G$ is a tree, exactly one edge label may be used on two vertices.
\item a \emph{set-ordered harmonious total coloring} if (\ref{27TGraceful-001}$^\circ$), (\ref{27Tmodulo-00}$^\circ$), (\ref{27Tmodulo-01}$^\circ$) and (\ref{27TSet-ordered}$^\circ$) hold true.
\item a \emph{strongly harmonious total coloring} if (\ref{27TGraceful-001}$^\circ$), (\ref{27Tmodulo-00}$^\circ$), (\ref{27Tmodulo-01}$^\circ$) and (\ref{27Tmodulo-ordered}$^\circ$) hold true.
\item a \emph{properly even harmonious total coloring} if (\ref{27TOdd-graceful-001}$^\circ$), (\ref{27Tmodulo-11}$^\circ$) and (\ref{27Tedge-set-odd-sum}$^\circ$) hold true.
\item a \emph{$c$-harmonious total coloring} if (\ref{27TGraceful-001}$^\circ$), (\ref{27Tfelicitous-002}$^\circ$) and (\ref{27Tsequential-edge-set}$^\circ$) hold true.
\item an \emph{even sequential harmonious total coloring} if (\ref{27TOdd-graceful-001}$^\circ$), (\ref{27Tedge-labels-even-odd}$^\circ$) and (\ref{27Teven-edge-set-11}$^\circ$) hold true.
\item a \emph{pan-harmonious total coloring} if (\ref{27TProper02}$^\circ$) and (\ref{27Tfelicitous-002}$^\circ$) hold true.
\item an \emph{edge-magic total coloring} if (\ref{27Tedge-magic}$^\circ$) holds true.
\item a \emph{set-ordered edge-magic total coloring} if (\ref{27Tedge-magic}$^\circ$) and (\ref{27TSet-ordered}$^\circ$) hold true.
\item a \emph{graceful edge-magic total coloring} if (\ref{27TGraceful-002}$^\circ$) and (\ref{27Tedge-magic}$^\circ$) hold true.
\item a \emph{set-ordered graceful edge-magic total coloring} if (\ref{27TGraceful-002}$^\circ$), (\ref{27Tedge-magic}$^\circ$) and (\ref{27TSet-ordered}$^\circ$) hold true.
\item an \emph{edge-difference magic total coloring} if (\ref{27Tedge-difference}$^\circ$) holds true.
\item a \emph{set-ordered edge-difference magic total coloring} if (\ref{27Tedge-difference}$^\circ$) and (\ref{27TSet-ordered}$^\circ$) hold true.
\item a \emph{graceful edge-difference magic total coloring} if (\ref{27TGraceful-002}$^\circ$) and (\ref{27Tedge-difference}$^\circ$) hold true.
\item a \emph{set-ordered graceful edge-difference magic total coloring} if (\ref{27TGraceful-002}$^\circ$), (\ref{27Tedge-difference}$^\circ$) and (\ref{27TSet-ordered}$^\circ$) hold true.
\item an \emph{ev-difference magic total coloring} if (\ref{27Tgraceful-difference}$^\circ$) holds true.
\item a \emph{set-ordered ev-difference magic total coloring} if (\ref{27Tgraceful-difference}$^\circ$) and (\ref{27TSet-ordered}$^\circ$) hold true.
\item a \emph{graceful ev-difference magic total coloring} if (\ref{27TGraceful-002}$^\circ$) and (\ref{27Tgraceful-difference}$^\circ$) hold true.
\item a \emph{set-ordered graceful ev-difference magic total coloring} if (\ref{27TGraceful-002}$^\circ$), (\ref{27Tgraceful-difference}$^\circ$) and (\ref{27TSet-ordered}$^\circ$) hold true.
\end{asparaenum}

\vskip 0.2cm

We call $\chi''_{W,M}(G)=\min_f\{M:~f(V(G))\subseteq [1,M]\}$ over all $W$-type colorings $f$ of $G$ for a fixed $W$ as \emph{$W$-type total chromatic number} of $G$, and the number $v_{W}(G)=\min_f\{|f(V(G))|\}$ over all $W$-type colorings $f$ of $G$ as \emph{$W$-type total splitting number}.\qqed
\end{defn}

\begin{rem}\label{rem:graceful-labelling-vs-graceful-coloring}
Clearly, determining a $W$-type total chromatic number $\chi''_{W,M}(G)$ for a given graph $G$ could be difficult, since $\chi''_{W,M}(G)\geq \chi''(G)$, and the Total Coloring Conjecture $\chi''(G)\leq \Delta(G)+2$ is open now. It is also not slight to determine whether a graph admits a $W$-type total coloring defined in Definition \ref{defn:new-graceful-strongly-colorings}. Also, computing $v_{W}(G)=\min_f\{|f(V(G))|\}$ will meet difficult cases since there are many conjectures of graph labellings. For each integer $m$ subject to $v_{W}(G)<m\leq p-1$, does there exist a $W$-type total coloring $g$ holding $|g(V(G))|=m$?

Comparing Definition \ref{defn:proper-bipartite-labelling-ongraphs} with Definition \ref{defn:new-graceful-strongly-colorings}, a gracefully total coloring $f$ is weaker than a graceful labelling $g$ holding $|g(V(G))|=|V(G)|$, since $|f(V(G))|<|V(G)|$, and this gracefully total coloring $f$ is stronger than the traditional total coloring because of $f(E(G))=[1,q]$. So, there are more graphs admitting (set-ordered) $W$-type total colorings than with admitting (set-ordered) $W$-type labellings.

We meet $f(E(G))=\{k\}^q_{k=1}$ or $f(E(G))=\{2k-1\}^q_{k=1}$ in Definition \ref{defn:new-graceful-strongly-colorings}, so we can consider some $W$-type total colorings with $f(E(G))=\{a_n\}^q_{k=1}$, where $\{a_n\}^q_{k=1}$ is a strict increasing sequence of positive integers, and we call them \emph{$\{a_n\}^q_{k=1}$-type proper total colorings}.\qqed
\end{rem}

\begin{figure}[h]
\centering
\includegraphics[width=16.2cm]{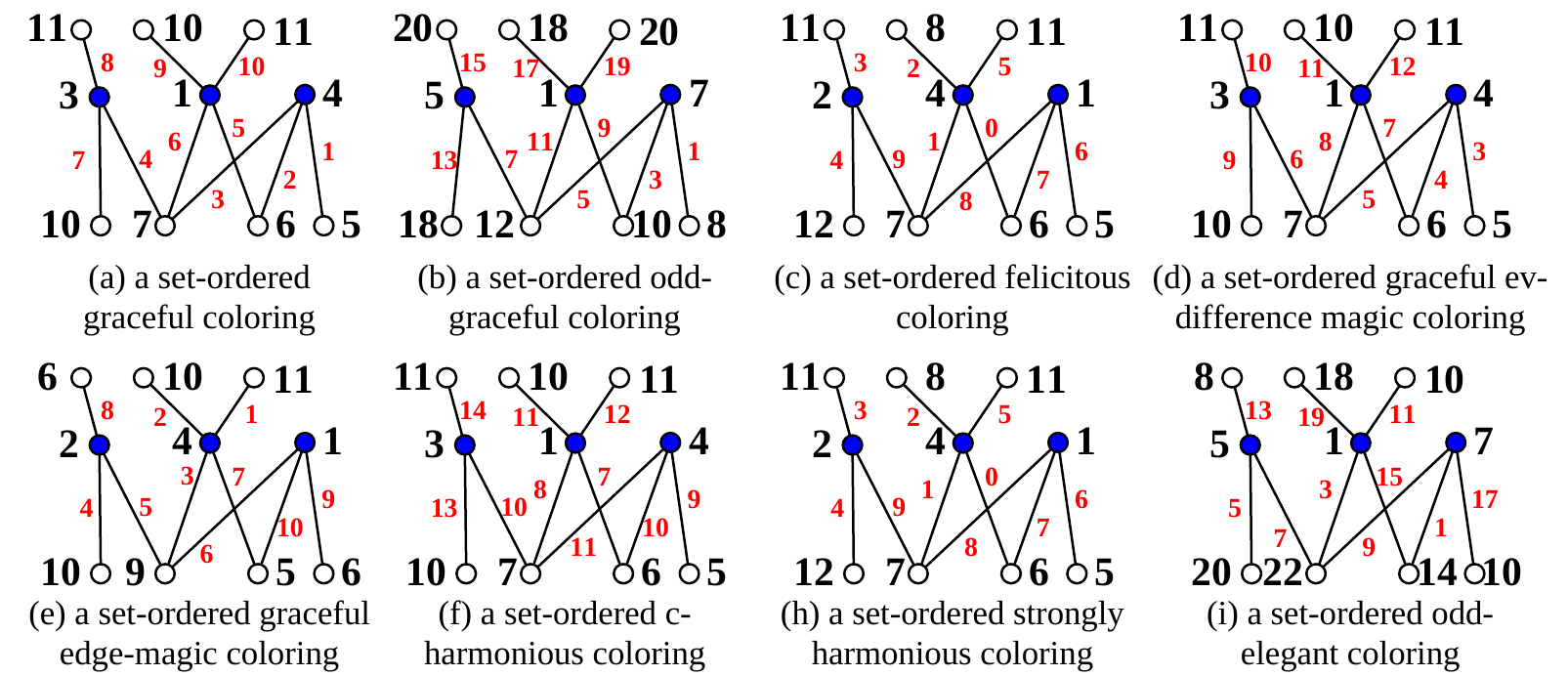}\\
\caption{\label{fig:graceful-colorings} {\small Part of examples for understanding Definition \ref{defn:new-graceful-strongly-colorings}.}}
\end{figure}

\begin{defn}\label{defn:flawed-odd-graceful-labelling}
\cite{Yao-Mu-Sun-Sun-Zhang-Wang-Su-Zhang-Yang-Zhao-Wang-Ma-Yao-Yang-Xie2019} Let $H=E^*+G$ be a connected graph, where $E^*$ is a non-empty set of edges and $G=\bigcup^m_{i=1}G_i$ is a disconnected graph, where $G_1,G_2,\dots, G_m$ are disjoint connected graphs. If $H$ admits a (set-ordered) graceful labelling (resp. a (set-ordered) odd-graceful labelling) $f$, then we call $f$ a \emph{flawed (set-ordered) graceful labelling} (resp. a \emph{flawed (set-ordered) odd-graceful labelling}) of $G$.\qqed
\end{defn}

We will define particular proper total colorings in Definition \ref{defn:combinatoric-definition-total-coloring}, these colorings are combinatory of traditional proper total colorings and graph labellings.

\begin{defn} \label{defn:combinatoric-definition-total-coloring}
$^*$ For a proper total coloring $f:V(G)\cup E(G)\rightarrow [1,M]$ of a simple graph $G$, we define an \emph{edge-function} $c_f(uv)$ with three non-negative integers $a,b,c$ for each edge $uv\in E(G)$, and have a parameter
\begin{equation}\label{eqa:edge-difference-total-coloring}
B^*_{\alpha}(G,f,M)=\max_{uv \in E(G)}\{c_f(uv)\}-\min_{xy \in E(G)}\{c_f(xy)\}.
\end{equation}
If $B^*_{\alpha}(G,f,M)=0$, we call $f$ a \emph{$\alpha$-proper total coloring} of $G$, the smallest number
\begin{equation}\label{eqa:minimum}
\chi''_{\alpha}(G) =\min_f \{M:~B^*_{\alpha}(G,f,M)=0\}
\end{equation}
over all $\alpha$-proper total colorings of $G$ is called \emph{$\alpha$-proper total chromatic number}, and $f$ is called a \emph{perfect $\alpha$-proper total coloring} if $\chi''_{\alpha}(G)=\chi''(G)$. Moreover
\begin{asparaenum}[\textrm{\textbf{Tcoloring}}-1. ]
\item We call $f$ a \emph{(perfect) \textbf{edge-magic proper total coloring}} of $G$ if $c_f(uv)=f(u)+f(v)+f(uv)$, rewrite $B^*_{\alpha}(G,f,M)=B^*_{emt}(G,f$, $M)$, and $\chi''_{\alpha}(G)=\chi''_{emt}(G)$ is called \emph{edge-magic total chromatic number} of $G$.
\item We call $f$ a \emph{(perfect) \textbf{edge-difference proper total coloring}} of $G$ if $c_f(uv)=f(uv)+|f(u)-f(v)|$, rewrite $B^*_{\alpha}(G,f,M)=B^*_{edt}(G,f$, $M)$, and $\chi''_{\alpha}(G)=\chi''_{edt}(G)$ is called \emph{edge-difference total chromatic number} of $G$.
\item We call $f$ a \emph{(perfect) \textbf{felicitous-difference proper total coloring}} of $G$ if $c_f(uv)=|f(u)+f(v)-f(uv)|$, rewrite $B^*_{\alpha}(G,f,M)=B^*_{fdt}(G,f,M)$, and $\chi''_{\alpha}(G)=\chi''_{fdt}(G)$ is is called \emph{ felicitous-difference total chromatic number} of $G$.
\item We refer to $f$ a \emph{(perfect) \textbf{graceful-difference proper total coloring}} of $G$ if $c_f(uv)=\big ||f(u)-f(v)|-f(uv)\big |$, rewrite $B^*_{\alpha}(G,f,M)=B^*_{gdt}(G,f,M)$, and $\chi''_{\alpha}(G)=\chi''_{gdt}(G)$ is called \emph{graceful-difference total chromatic number} of $G$.\qqed
\end{asparaenum}
\end{defn}

\begin{rem}\label{rem:ABC-conjecture}
The proper total colorings of Definition \ref{defn:combinatoric-definition-total-coloring} have been discussed in \cite{Wang-Su-Yao-mixed-difference-2019, Wang-Su-Yao-Total-2019, Wang-Yao-edge-difference-2019, Wang-Yao-Star-type-Lattices-2020}.

(i) The form $B^*_{\alpha}(G,f,M)=0$ appeared in Definition \ref{defn:combinatoric-definition-total-coloring} means that there exists a constant $k$ such that $c_f(uv)=k$ for each edge $uv\in E(G)$, also, $f$ is \emph{edge-magic} in the view of graph theory. Moreover,
$$\sum_{u\in V(G)}\sum_{v\in N(u)}c_f(uv)=\sum_{u\in V(G)}k\cdot \textrm{deg}_G(u)=k\cdot 2|E(G)|.$$

Obviously, the proper total chromatic number $\chi''(G)\leq \chi''_{\gamma}(G)$ for $\gamma\in P_{ara}=\{$emt, edt, fdt, gdt$\}$. It is difficult to determine the exact values of $\chi''_{\gamma}(G)$ for $\gamma\in P_{ara}$, since the total chromatic number $\chi''(G)\leq \Delta(G)+2$ is not settled down up to now.

(ii) We add three parameters for generalizing Definition \ref{defn:combinatoric-definition-total-coloring} if $G$ is bipartite, and get another group of particular total colorings as follows:

\begin{defn} \label{defn:combinatoric-definition-total-coloring-abc}
$^*$ Suppose that a bipartite graph $G$ admits a proper total coloring $f:V(G)\cup E(G)\rightarrow [1,M]$. We define an \emph{edge-function} $c_f(uv)(a,b,c)$ with three non-negative integers $a,b,c$ for each edge $uv\in E(G)$, and have a parameter
\begin{equation}\label{eqa:edge-difference-total-coloring}
B^*_{\alpha}(G,f,M)(a,b,c)=\max_{uv \in E(G)}\{c_f(uv)(a,b,c)\}-\min_{xy \in E(G)}\{c_f(xy)(a,b,c)\}.
\end{equation}
If $B^*_{\alpha}(G,f,M)(a,b,c)=0$, we call $f$ a \emph{parameterized $\alpha$-proper total coloring} of $G$, the smallest number
\begin{equation}\label{eqa:minimum}
\chi''_{\alpha}(G)(a,b,c) =\min_f \{M:~B^*_{\alpha}(G,f,M)(a,b,c)=0\}
\end{equation}
over all parameterized $\alpha$-proper total colorings of $G$ is called \emph{parameterized $\alpha$-proper total chromatic number}, and $f$ is called a \emph{perfect $\alpha$-proper total coloring} if $\chi''_{\alpha}(G)(a,b,c)=\chi''(G)$. Moreover
\begin{asparaenum}[\textrm{\textbf{TCol}}-1. ]
\item We call $f$ a \emph{(perfect) \textbf{parameterized edge-magic proper total coloring}} of $G$ if $c_f(uv)=af(u)+bf(v)+cf(uv)$, rewrite $B^*_{\alpha}(G,f,M)(a,b,c)=B^*_{emt}(G,f$, $M)(a,b,c)$, and $\chi''_{\alpha}(G)(a,b,c)=\chi''_{emt}(G)(a,b,c)$ is called \emph{parameterized edge-magic total chromatic number} of $G$.
\item We call $f$ a \emph{(perfect) \textbf{parameterized edge-difference proper total coloring}} of $G$ if $c_f(uv)=cf(uv)+|af(u)-bf(v)|$, rewrite $B^*_{\alpha}(G,f,M)(a,b,c)=B^*_{edt}(G,f$, $M)(a,b,c)$, and $\chi''_{\alpha}(G)(a,b,c)=\chi''_{edt}(G)(a,b,c)$ is called \emph{parameterized edge-difference total chromatic number} of $G$.
\item We call $f$ a \emph{(perfect) \textbf{parameterized felicitous-difference proper total coloring}} of $G$ if $c_f(uv)=|af(u)+bf(v)-cf(uv)|$, rewrite $B^*_{\alpha}(G,f,M)(a,b,c)=B^*_{fdt}(G,f,M)(a,b,c)$, and $\chi''_{\alpha}(G)(a,b,c)=\chi''_{fdt}(G)(a,b,c)$ is is called \emph{parameterized felicitous-difference total chromatic number} of $G$.
\item We refer to $f$ a \emph{(perfect) \textbf{parameterized graceful-difference proper total coloring}} of $G$ if $c_f(uv)=\big ||af(u)-bf(v)|-cf(uv)\big |$, rewrite $B^*_{\alpha}(G,f,M)(a,b,c)=B^*_{gdt}(G,f,M)(a,b,c)$, and $\chi''_{\alpha}(G)(a,b,c)=\chi''_{gdt}(G)(a,b,c)$ is called \emph{parameterized graceful-difference total chromatic number} of $G$.\qqed
\end{asparaenum}
\end{defn}

We can put forward various requirements for $(a,b,c)$ in Definition \ref{defn:combinatoric-definition-total-coloring-abc} to increase the difficulty of attacking our topological coding, since the ABC-conjecture (or Oesterl\'{e}-Masser conjecture, 1985) involves the equation $a+b=c$ and the relationship between prime numbers. Proving or disproving the ABC-conjecture could impact many Diophantine (polynomial) math problems including Tijdeman's theorem, Vojta's conjecture, Erd\"{o}s-Woods conjecture, Fermat's last theorem, Wieferich prime and Roth's theorem \cite{Cami-Rosso2017Abc-conjecture}.

(iii) We remove ``proper'' from Definition \ref{defn:combinatoric-definition-total-coloring} as: A simple graph $G$ admits a total coloring $f:V(G)\cup E(G)\rightarrow [1,M]$ such that $f(u)\neq f(v)$ for each edge $uv\in E(G)$, and $f(xy)\neq f(xw)$ for two adjacent edges $xy,xw\in E(G)$. So, this particular total coloring allows $f(u)=f(uv)$ for some edge $uv\in E(G)$, and is weak than that in Definition \ref{defn:combinatoric-definition-total-coloring}. Similarly, removing ``proper'' from Definition \ref{defn:combinatoric-definition-total-coloring-abc} produces four parameterized $\alpha$-proper total colorings weak than that in Definition \ref{defn:combinatoric-definition-total-coloring-abc}.\qqed
\end{rem}

\begin{problem}\label{qeu:flawed-abc-total-colorings}
(i) It is natural based on Definition \ref{defn:flawed-odd-graceful-labelling}, the authors, in \cite{Yao-Mu-Sun-Sun-Zhang-Wang-Su-Zhang-Yang-Zhao-Wang-Ma-Yao-Yang-Xie2019}, conjecture: ``\emph{Each forest $T=\bigcup ^m_{i=1}T_i$ with disjoint trees $T_1,T_2,\dots ,T_m$ admits a flawed graceful/odd-graceful labelling}''.  Determine integers $A_e$ and $B_e$ such that $H=E^*+T$ admits a (set-ordered) graceful/odd-graceful labelling as $A_e\leq |E^*|\leq B_e$.

(ii) For a bipartite graph $G$, finding three parameters $a,b,c$ holding $(a,b,c)\neq (1,1,1)$ under a proper total coloring $f:V(G)\cup E(G)\rightarrow [1,M]$ realizes $B^*_{\alpha}(G,f,M)(a,b,c)=0$ holding each one of the parameterized edge-magic proper total coloring, the parameterized edge-difference proper total coloring, the parameterized felicitous-difference proper total coloring and the parameterized graceful-difference proper total coloring defined in Definition \ref{defn:combinatoric-definition-total-coloring-abc}, .

In a parameterized edge-magic proper total coloring $f$, $B^*_{\alpha}(G,f,M)=0$ means that $c_f(uv)=af(u)+bf(v)+cf(uv)=k$ for each edge $uv\in E(G)$. If there are $(a_0,b_0,c_0)\neq (1,1,1)$ holding $c_f(uv)=a_0f(u)+b_0f(v)+c_0f(uv)=k$, then we have $c_f(uv)=\beta a_0f(u)+\beta b_0f(v)+\beta c_0f(uv)=\beta k$ for each edge $uv\in E(G)$ with $\beta > 0$ and $(\beta a_0,\beta b_0,\beta c_0)\neq (\beta ,\beta ,\beta )$. So, there are infinite group of parameters $a,b,c$ holding $(a,b,c)\neq (1,1,1)$ for the total colorings.
\end{problem}

\begin{exa}\label{exa:felicitous-difference-total-coloring}
\textbf{Duality.} For the felicitous-difference proper total coloring \cite{Wang-Yao-Star-type-Lattices-2020}, we say $f$ to be \emph{edge-ordered} if $f(x)+f(y)\leq f(xy)$ (resp. $f(x)+f(y)\geq f(xy)$) for each edge $xy\in E(G)$. If $G$ admits two felicitous-difference proper total colorings $g$ and $g^c$ holding $g(x)+g^c(x)=C_v>0$ for each vertex $x\in V(G)$ and a constant $C_v$, then $g^c$ is called the \emph{vertex-dual} of $g$, conversely, $g$ is the vertex-dual of $g^c$ too; and moreover if $g(uv)+g^c(uv)=C_e>0$ holds true for each edge $uv\in E(G)$ and a constant $C_e$, we call $g^c$ (resp. $g$) an \emph{all-dual} of $g$ (resp. $g^c$), as well as $g^c$ (resp. $g$) is a \emph{perfect all-dual} of $g$ (resp. $g^c$) if $C_v=C_e$. As an example, a graph $C_5+e$ shown in Fig.\ref{fig:all-dual_total_coloring} admits six felicitous-difference proper total colorings $g_Q$ shown in Fig.\ref{fig:all-dual_total_coloring} (Q) with $Q=$a,b,c,d,e,f, and moreover we observe: (1) $g_a$ and $g_b$ are a pair of vertex-dual colorings, since $g_a(u)+g_b(u)=6$ for each vertex $u\in V(C_5+e)$; (2) $g_d$ and $g_e$ are a pair of perfect all-dual total colorings, since $g_d(u)+g_e(u)=8$ for each vertex $u\in V(C_5+e)$, and $g_d(xy)+g_e(xy)=8$ for each edge $xy\in E(C_5+e)$; (3) $g_a$ and $g_d$ are edge-ordered; (4) $\chi''_{fdt}(C_5+e)=7$ according to $g_f$; (5) $|g_a(x)+g_a(y)-g_a(xy)|=0$, $|g_b(x)+g_b(y)-g_b(xy)|=1$, $|g_c(x)+g_c(y)-g_c(xy)|=k$, $|g_d(x)+g_d(y)-g_d(xy)|=4$, $|g_e(x)+g_e(y)-g_e(xy)|=4$ and $|g_f(x)+g_f(y)-g_f(xy)|=1$ for each edge $xy\in E(C_5+e)$.\qqed
\end{exa}

\begin{figure}[h]
\centering
\includegraphics[width=14cm]{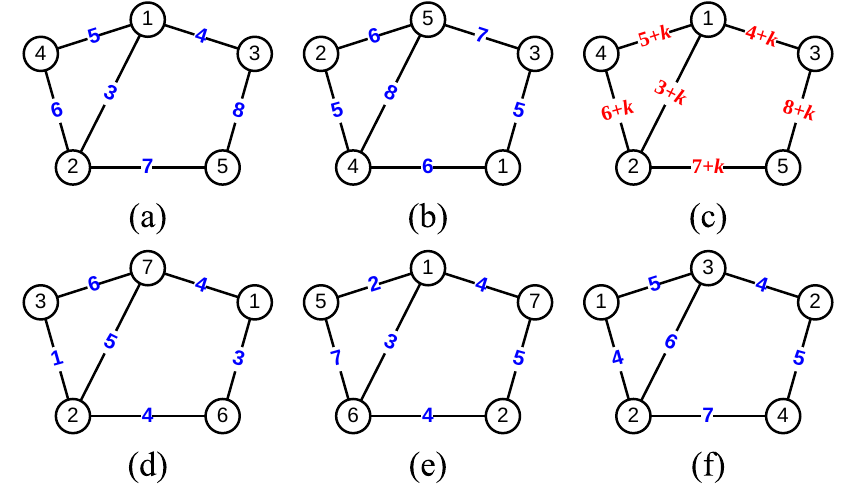}\\
\caption{\label{fig:all-dual_total_coloring} {\small Examples for illustrating the felicitous-difference proper total coloring cited from \cite{Wang-Yao-Star-type-Lattices-2020} .}}
\end{figure}

\begin{thm} \label{thm:perfect-all-dual}
A pair of felicitous-difference proper total colorings $g$ and $g'$ of a graph $G$ is perfect all-dual if and only if there exist two constants $M>0$ and $M'\geq 0$ such that $g(x)+g'(x)=M$ for each vertex $x\in V(G)$, and each edge $uv\in E(G)$ holds $|g(u)+g(v)-g(uv)|=M'$ and $|g'(u)+g'(v)-g'(uv)|=M-M'$ true.
\end{thm}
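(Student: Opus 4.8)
The plan is to collapse the whole statement onto the single algebraic fact that $g(w)+g'(w)$ is one and the same constant $M$ for every $w\in V(G)\cup E(G)$, and then to push each of the two felicitous-difference conditions through this identity. Unwinding the notions from Example~\ref{exa:felicitous-difference-total-coloring}, a pair $g,g'$ of felicitous-difference proper total colorings of $G$ is perfect all-dual exactly when there is one constant $M>0$ with $g(x)+g'(x)=M$ for every vertex $x$ \emph{and} $g(uv)+g'(uv)=M$ for every edge $uv$ (here $M=C_v=C_e$). So the theorem amounts to showing that this two-part identity is equivalent to its vertex half together with the two edge conditions $|g(u)+g(v)-g(uv)|=M'$ and $|g'(u)+g'(v)-g'(uv)|=M-M'$.

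For the forward direction I would assume perfect all-dual, so $g'=M-g$ on $V(G)\cup E(G)$. Since $g$ is a felicitous-difference proper total coloring, $|g(u)+g(v)-g(uv)|$ is a constant; call it $M'$. Substituting $g'(w)=M-g(w)$ term by term gives
$$g'(u)+g'(v)-g'(uv)=M-\bigl(g(u)+g(v)-g(uv)\bigr)\qquad\text{for every edge }uv,$$
hence $\bigl|g'(u)+g'(v)-g'(uv)\bigr|=\bigl|M-(g(u)+g(v)-g(uv))\bigr|$. It then remains to check that $g(u)+g(v)-g(uv)$ is the non-negative constant $M'$ with $M'\le M$; granting this, the right-hand side is $M-M'$, as claimed. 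The sign is pinned down by using that $g'$ is \emph{also} felicitous-difference: if $g(u)+g(v)-g(uv)$ were $+M'$ on one edge and $-M'$ on another, then $|M-(g(u)+g(v)-g(uv))|$ would take the two distinct values $|M-M'|$ and $M+M'$ (distinct because $M>0$), contradicting constancy of the felicitous-difference parameter of $g'$.

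For the converse I would assume $g(x)+g'(x)=M$ on vertices and the two edge conditions, fix an edge $uv$, and write $g(uv)=g(u)+g(v)-\varepsilon M'$, $g'(uv)=g'(u)+g'(v)-\delta(M-M')$ with $\varepsilon,\delta\in\{+1,-1\}$. Using $g'(u)+g'(v)=2M-(g(u)+g(v))$ and adding,
$$g(uv)+g'(uv)=2M-\varepsilon M'-\delta(M-M'),$$
which equals $M$ precisely when $\varepsilon=\delta=+1$ (the degenerate values $M'=0$ and $M'=M$ leave a free sign that can be chosen so the equality still holds). Establishing $\varepsilon=\delta=+1$ on every edge — i.e.\ that $g$ and $g'$ are edge-ordered in the compatible direction — is again the sign-consistency phenomenon: a mixed choice would force the felicitous-difference parameter of $g$ or of $g'$ to assume two values. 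Once $g(uv)+g'(uv)=M$ for all edges, this together with the vertex hypothesis yields $C_v=C_e=M$, so the pair is perfect all-dual.

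The hard part will be exactly this sign/edge-ordering step: the purely computational content is a one-line substitution, but making rigorous that two vertex-complementary felicitous-difference proper total colorings must both be edge-ordered in the same direction, with $0\le M'\le M$, is where the work lies. I would argue it from properness of the total coloring at a vertex of degree at least $2$ — which, once the felicitous-difference constant is fixed, forces $g$ to be injective on each closed neighbourhood — together with the range bounds $1\le g(w),g'(w)\le M-1$ implied by $g+g'=M$; for degenerate graphs such as a single edge one simply reads $M'\le M$ into the hypothesis, equivalently restricting to edge-ordered colorings in the sense of Example~\ref{exa:felicitous-difference-total-coloring}.
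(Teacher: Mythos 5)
You put your finger on the right pressure point yourself — the sign/edge-ordering step — but that gap is genuine and, as you have set things up, it cannot be closed (note the paper states this theorem with no proof, so your argument has to stand on its own). In the forward direction your sign-consistency observation only shows that $g(u)+g(v)-g(uv)$ has the \emph{same} sign on every edge; it does not show the sign is $+$, and no argument from properness at a degree-$\geq 2$ vertex plus the bounds $1\le g(w)\le M-1$ can supply this, because the claim is false. On the path $P_3=abc$ take $g(a)=1$, $g(b)=3$, $g(c)=2$, $g(ab)=5$, $g(bc)=6$, and $g'(w)=7-g(w)$ for every element $w$. Both are felicitous-difference proper total colorings (constants $1$ and $8$), and $g(w)+g'(w)=7$ on every vertex and every edge, so the pair is perfect all-dual with $C_v=C_e=M=7$ and $M'=1$; yet $|g'(u)+g'(v)-g'(uv)|=8=M+M'$, not $M-M'=6$, because $g(uv)>g(u)+g(v)$ on both edges. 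So the implication you are proving holds only under the extra edge-ordered hypothesis $g(x)+g(y)\ge g(xy)$ (exactly the dichotomy made explicit in Definition \ref{defn:4-dual-total-coloring} and Example \ref{exa:felicitous-difference-total-coloring}); relegating this to ``degenerate graphs such as a single edge'' misses that the failure already occurs on graphs with vertices of degree at least $2$.

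The converse is worse: there your ``sign-consistency phenomenon'' does not apply at all, since the hypotheses only fix the absolute values $|g(u)+g(v)-g(uv)|=M'$ and $|g'(u)+g'(v)-g'(uv)|=M-M'$, and these are compatible with different signs on different edges without disturbing constancy of either parameter. Concretely, on $P_3$ let $M=10$, $g(a)=1$, $g(b)=3$, $g(c)=2$, $g(ab)=2$, $g(bc)=7$, and $g'(a)=9$, $g'(b)=7$, $g'(c)=8$, $g'(ab)=8$, $g'(bc)=23$. Both colorings are proper and felicitous-difference, $g(x)+g'(x)=10$ on the vertices, $|g(x)+g(y)-g(xy)|=2$ and $|g'(x)+g'(y)-g'(xy)|=8=M-M'$ on both edges, yet $g(bc)+g'(bc)=30\ne 10$, so the pair is not perfect all-dual. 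Hence $\varepsilon=\delta=+1$ cannot be derived from the stated hypotheses; it has to be assumed (both colorings edge-ordered in the same direction), or the edge-sum condition has to be added. Your substitution computations are fine, but the pivotal step is not merely unproven — in the unconditional form you set out to prove, both implications are false, so the proof can only be completed after strengthening the hypotheses along the lines of the edge-ordered case split.
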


\begin{cor} \label{thm:unique}
If a graph $G$ holds $\chi''_{fdt}(G)=1+\Delta(G)$, then $G$ admits a unique felicitous-difference proper total coloring.
\end{cor}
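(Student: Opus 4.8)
The plan is to exploit that $1+\Delta(G)$ is the smallest value $\chi''_{fdt}(G)$ could possibly take: since every felicitous-difference proper total coloring is in particular a proper total coloring, $\chi''_{fdt}(G)\ge \chi''(G)\ge 1+\Delta(G)$. Hence the hypothesis $\chi''_{fdt}(G)=1+\Delta(G)$ says that a realizing coloring has the tightest possible palette, and the whole argument is to show that this extremality leaves no freedom at all. Write $\Delta=\Delta(G)$ and fix a felicitous-difference proper total coloring $f\colon V(G)\cup E(G)\to [1,\Delta+1]$ with magic constant $k$, i.e. $|f(u)+f(v)-f(uv)|=k$ for every edge $uv$. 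At a vertex $w$ with $\deg_G(w)=\Delta$, the $\Delta+1$ elements $w,wu_1,\dots,wu_\Delta$ (with $N(w)=\{u_1,\dots,u_\Delta\}$) receive pairwise distinct colors from only $\Delta+1$ available values, hence use each of $1,\dots,\Delta+1$ exactly once; call this the rainbow condition at $w$.

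Next I would pin down the magic constant and break the sign ambiguity. With $f(w)=t$ and $\{f(wu_i)\}=[1,\Delta+1]\setminus\{t\}$, each neighbour satisfies $f(u_i)=f(wu_i)-t+\varepsilon_i k$ with $\varepsilon_i\in\{+1,-1\}$. Imposing $1\le f(u_i)\le \Delta+1$, together with $f(u_i)\ne t$, $f(u_i)\ne f(wu_i)$, the properness constraints $f(u_i)\ne f(u_j)$ whenever $u_iu_j\in E(G)$, and the rainbow conditions at any further maximum-degree vertices, I expect to force a single admissible value of $t$, a single value of $k$, and a single sign $\varepsilon_i$ for each $i$. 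The decisive point is that one extra colour would create slack permitting alternative assignments, whereas with exactly $\Delta+1$ colours the interval constraints $1\le f(\cdot)\le \Delta+1$ are saturated.

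Then I would propagate. Once $f$ is determined on $N[w]$ and on the edges at $w$, I pass to an adjacent vertex and repeat; since $G$ is connected this exhausts $V(G)\cup E(G)$. At a vertex $x$ of degree strictly less than $\Delta$ the rainbow lever is absent, so instead I use that $f(xy)$ must avoid the (already determined) colour of $x$ and those of the other edges at $y$, and combine this with $|f(x)+f(y)-f(xy)|=k$ and $1\le f(x),f(xy)\le \Delta+1$ to show $f(x)$ and each $f(xy)$ remain forced. Applying the very same reasoning to a second felicitous-difference proper total coloring $g$ of $G$ — having first observed that the magic-constant and interval constraints keep $g$ inside $[1,\Delta+1]$ — yields $g=f$, so the coloring is unique; the degenerate cases $\Delta=0$ (every vertex must get colour $1$) are immediate.

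The hard part will be the second and third steps: removing the $\pm k$ sign ambiguity in $|f(u)+f(v)-f(uv)|=k$ and controlling vertices of degree below $\Delta(G)$, where one has no rainbow and must instead push the double inequality $1\le f(\cdot)\le \Delta(G)+1$ to its limit. If the direct propagation turns out to be unwieldy, an alternative for the uniqueness step is to assume two distinct realizing colorings $g,g'$ and feed them into Theorem~\ref{thm:perfect-all-dual}: either $g,g'$ form a perfect all-dual pair, which supplies extra linear relations that over-determine the colors, or they do not, and then inspecting a vertex where $g\ne g'$ should force some colour to exceed $\Delta(G)+1$, contradicting $\chi''_{fdt}(G)=1+\Delta(G)$.
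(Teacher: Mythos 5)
The paper states this corollary immediately after Theorem~\ref{thm:perfect-all-dual} without giving an argument, so there is no written proof to match your proposal against; judged on its own, your proposal has genuine gaps. The decisive one is the parenthetical claim that ``the magic-constant and interval constraints keep $g$ inside $[1,\Delta+1]$'' for an arbitrary second felicitous-difference proper total coloring $g$. Nothing in Definition~\ref{defn:combinatoric-definition-total-coloring} forces a competing coloring to use the minimal palette: starting from an optimal $f$ with constant $k$, set $g(x)=f(x)+c$ on vertices and $g(uv)=f(uv)+2c$ on edges with $c>\max f$; then $|g(u)+g(v)-g(uv)|=k$ still holds, properness is preserved (vertex and edge colors now occupy disjoint ranges), and $g$ is a felicitous-difference proper total coloring of $G$ with a larger $M$, distinct from $f$. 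So the statement can only be about colorings realizing $M=\chi''_{fdt}(G)=1+\Delta(G)$, and your reduction to that case is not merely unproved but unprovable as written.

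Even restricted to optimal colorings, the core of your argument is the sentence ``I expect to force a single admissible value of $t$, a single value of $k$, and a single sign $\varepsilon_i$,'' which is a hope rather than a derivation, and there is a structural reason to be suspicious of it: by Definition~\ref{defn:4-dual-total-coloring} (Dual-4), if an optimal coloring $f$ on $[1,\Delta+1]$ is edge-ordered, its dual $f^c(w)=(\max f+\min f)-f(w)$ is again a felicitous-difference proper total coloring on the same range and is different from $f$ whenever $G$ has an edge. Any correct propagation argument must therefore rule out this dual (or show no edge-ordered optimal coloring can occur under the hypothesis), which yours does not do; your fallback appeal to Theorem~\ref{thm:perfect-all-dual} only supplies relations between a coloring and its dual and by itself cannot exclude two distinct optimal colorings. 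Finally, the propagation step at vertices of degree less than $\Delta(G)$ — where a leaf color satisfying $|f(x)+f(y)-f(xy)|=k$ generically has two admissible values $f(xy)-f(y)\pm k$ — is exactly where the freedom lives, and you acknowledge but do not close it; you also quietly assume $G$ is connected, which is not in the hypothesis. As it stands, the proposal is a plan with the hard steps left open, not a proof.
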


\begin{defn} \label{defn:4-dual-total-coloring}
$^*$ We define the \emph{dual total colorings} for the colorings defined in Definition \ref{defn:combinatoric-definition-total-coloring} in the following:
\begin{asparaenum}[\textrm{\textbf{Dual}}-1. ]
\item If $f_{em}$ is an edge-magic proper total coloring of a graph $G$, so there exists a constant $k$ such that $f_{em}(u)+f_{em}(uv)+f_{em}(v)=k$ for each edge $uv\in E(G)$. Let $\max f_{em}=\max \{f_{em}(w):w\in V(G)\cup E(G)\}$ and $\min f_{em}=\min \{f_{em}(w):w\in V(G)\cup E(G)\}$. We have the dual $g_{em}$ of $f_{em}$ defined as: $g_{em}(w)=(\max f_{em}+\min f_{em})-f_{em}(w)$ for each element $w\in V(G)\cup E(G)$, and then
\begin{equation}\label{eqa:f-em}
{
\begin{split}
g_{em}(u)+g_{em}(uv)+g_{em}(v)&=3(\max f_{em}+\min f_{em})-[f_{em}(u)+f_{em}(uv)+f_{em}(v)]\\
&=3(\max f_{em}+\min f_{em})-k=k'
\end{split}}
\end{equation} for each edge $uv\in E(G)$.
\item Suppose that $f_{ed}$ is an edge-difference proper total coloring of a graph $G$, so there exists a constant $k$ such that $f_{ed}(uv)+|f_{ed}(u)-f_{ed}(v)|=k$ for each edge $uv\in E(G)$. Let $\max f_{ed}=\max \{f_{ed}(w):w\in V(G)\cup E(G)\}$ and $\min f_{ed}=\min \{f_{ed}(w):w\in V(G)\cup E(G)\}$. We have the dual $g_{ed}$ of $f_{ed}$ defined by setting $g_{ed}(x)=(\max f_{ed}+\min f_{ed})-f_{ed}(x)$ for $x\in V(G)$ and $g_{ed}(uv)=f_{ed}(uv)$ for $uv\in E(G)$, and then
\begin{equation}\label{eqa:f-ed}
g_{ed}(uv)+|g_{ed}(u)-g_{ed}(v)|=f_{ed}(uv)+|f_{ed}(u)-f_{ed}(v)|=k
\end{equation} for every edge $uv\in E(G)$.
\item When $f_{gd}$ is a graceful-difference proper total coloring of a graph $G$, so there exists a constant $k$ such that $\big ||f_{gd}(u)-f_{gd}(v)|-f_{gd}(uv)\big |=k$ for each edge $uv\in E(G)$. Let $\max f_{gd}=\max \{f_{gd}(w):w\in V(G)\cup E(G)\}$ and $\min f_{gd}=\min \{f_{gd}(w):w\in V(G)\cup E(G)\}$. We have the dual $g_{gd}$ of $f_{gd}$ defined in the way: $g_{gd}(x)=(\max f_{gd}+\min f_{gd})-f_{gd}(x)$ for $x\in V(G)$ and $g_{gd}(uv)=f_{gd}(uv)$ for each edge $uv\in E(G)$, and then
\begin{equation}\label{eqa:f-md}
{
\begin{split}
\big ||g_{gd}(u)-g_{gd}(v)|-g_{gd}(uv)\big |=\big ||f_{gd}(u)-f_{gd}(v)|-f_{gd}(uv)\big |=k
\end{split}}
\end{equation} for each edge $uv\in E(G)$.
\item As $f_{fd}$ is a felicitous-difference proper total coloring of a graph $G$, there exists a constant $k$ such that $|f_{fd}(u)+f_{fd}(v)-f_{fd}(uv)|=k$ for each edge $uv\in E(G)$. Let $\max f_{fd}=\max \{f_{fd}(w):w\in V(G)\cup E(G)\}$ and $\min f_{fd}=\min \{f_{fd}(w):w\in V(G)\cup E(G)\}$. We have the dual $g_{fd}$ of $f_{fd}$ defined as: $g_{fd}(w)=(\max f_{fd}+\min f_{fd})-f_{fd}(w)$ for each element $w\in V(G)\cup E(G)$, and then
\begin{equation}\label{eqa:f-tg}
{
\begin{split}
|g_{fd}(u)+g_{fd}(v)-g_{fd}(uv)|&=|(\max f_{fd}+\min f_{fd})+f_{fd}(u)+f_{fd}(v)-f_{fd}(uv)|\\
&=(\max f_{fd}+\min f_{fd})\pm k
\end{split}}
\end{equation} for each edge $uv\in E(G)$. Here, if $f_{fd}$ is edge-ordered such that $f_{fd}(x)+f_{fd}(y)\geq f_{fd}(xy)$ for each edge $xy\in E(G)$, then $$|g_{fd}(u)+g_{fd}(v)-g_{fd}(uv)|=(\max f_{fd}+\min f_{fd})+k=k'.$$ We have $$|g_{fd}(u)+g_{fd}(v)-g_{fd}(uv)|=(\max f_{fd}+\min f_{fd})-k=k',$$
if $f_{fd}(x)+f_{fd}(y)< f_{fd}(xy)$ for each edge $xy\in E(G)$. \qqed
\end{asparaenum}
\end{defn}

\begin{rem}\label{rem:new-connection-old}
There are connections between graph colorings and graph labellings as follows:
\begin{asparaenum}[\textrm{\textbf{Conn}}-1. ]
\item If a proper graceful-difference total coloring $h$ of $G$ satisfies $h(x)\neq h(y)$ for distinct vertices $x,y\in V(G)$, and $h(uv)\neq h(wz)$ for distinct edges $uv,wz\in E(G)$, and $\max \{h(w): w \in V(G)\cup E(G)\}=1+|E(G)|$, then we get a \emph{graceful labelling} $\alpha$ defined as: $\alpha(x)=h(x)-1$ for $x\in V(G)$. There is a well-known conjecture proposed by Rosa, called \emph{Graceful Tree Conjecture}: ``\emph{Every tree admits a graceful labelling}''. If it is so, then it will settle down a longstanding Ringel-Kotzig Decomposition Conjecture (Gerhard Ringel and Anton Kotzig, 1963; Alexander Rosa, 1967): ``\emph{A complete graph $K_{2n+1}$ can be decomposed into $2n+1$ subgraphs that are all isomorphic to any given tree having $n$ edges}.''
\item For an edge-magic proper total coloring $f$ of $G$ in Definition \ref{defn:combinatoric-definition-total-coloring}, we have $f(u)+f(uv)+f(v)=f(w)+f(wz)+f(z)$ for any pair of distinct edges $uv,wz\in E(G)$. If $f(x)\neq f(y)$ for distinct vertices $x,y\in V(G)$, and $f(uv)\neq f(wz)$ for distinct edges $uv,wz\in E(G)$, so this edge-magic proper total coloring is just an \emph{edge-magic total labelling} (Ref. \cite{Gallian2019}). Anton Kotzig and Alex Rosa, in 1970, conjectured: \emph{Every tree admits an edge-magic total labelling}. Moreover, it was conjectured: \emph{Every tree admits a super edge-magic total labelling}.
\item Let $g$ be an edge-difference proper total coloring of $G$. If $g(uv)+|g(u)-g(v)|=k$ for any edge $uv\in E(G)$, then $g$ will be related with a \emph{$k$-dually graceful labelling} if $g(x)\neq g(y)$ for distinct vertices $x,y\in V(G)$, and $g(uv)\neq g(wz)$ for distinct edges $uv,wz\in E(G)$.
\item Let $\alpha$ be a felicitous-difference proper total coloring of $G$ of $q$ edges. If $|\alpha(u)+\alpha(v)-\alpha(uv)|=0$ for each edge $uv\in E(G)$, and $\alpha:V(G)\rightarrow [0,q-1]$, the edge color set $\{\alpha(uv):uv\in E(G)\}=[c,c+q-1]$, we get a \emph{strongly $c$-harmonious labelling} $\alpha$ of $G$. The generalization of harmonious labellings is a \emph{felicitous labelling} $f:V (G)\rightarrow [0, q-1]$, such that the edge label $f(uv)$ of each edge $uv \in E(G)$ is defined as $f(uv) = f(u) + f(v)~(\bmod ~q)$, and the resultant edge labels are mutually distinct. Similarly with felicitous labelling, a labelling $f: V (G) \rightarrow [0, q]$ is called a \emph{strongly $k$-elegant labelling} if $\{\alpha(uv)~(\bmod ~q+1):uv\in E(G)\}=[k,k+q-1]$.
\item Let $G$ be a bipartite graph and $(X,Y)$ be the bipartition of vertex set $V(G)$. If $G$ admits a $W$-type coloring $f$ holding $\max \{f(x):x\in X\}<\min \{f(y):y\in Y\}$, then we call $f$ a \emph{set-ordered $W$-type coloring} of $G$ (\cite{Yao-Cheng-Yao-Zhao-2009, Zhou-Yao-Chen-Tao2012, Yao-Liu-Yao-2017}). In \cite{Yao-Liu-Yao-2017}, the author show: a set-ordered $W_i$-type coloring is equivalent to another set-ordered $W_j$-type coloring, for example, a bipartite graph $G$ admits a set-ordered graceful labelling if and only if $G$ admits a set-ordered odd-graceful labelling. By technique of set-ordered $W$-type colorings, Zhou \emph{et al.} have proven: (i) each lobster admits an odd-graceful labelling in \cite{Zhou-Yao-Chen-Tao2012}; (ii) each lobster admits an odd-elegant labelling in \cite{Zhou-Yao-Chen2013}.\qqed
\end{asparaenum}
\end{rem}

\begin{defn} \label{defn:rainbow-proper-total-coloring}
$^*$ A \emph{rainbow proper total coloring} $f$ of a connected graph $G$ holds: For any path $x_1x_2x_3x_4x_5\subset G$, edge colors $f(x_{i}x_{i+1})\neq f(x_{j}x_{j+1})$ with $i,j\in [1,4]$, and each $f(x_{j}x_{j+1})$ is one of $f(x_{j}x_{j+1})=f(x_i)+f(x_ix_{i+1})+f(x_{i+1})$, $f(x_{j}x_{j+1})=f(x_ix_{i+1})+|f(x_i)-f(x_{i+1})|$, $f(x_{j}x_{j+1})=|f(x_i)+f(x_{i+1})-f(x_ix_{i+1})|$ and $f(x_{j}x_{j+1})=\big ||f(x_i)-f(x_{i+1})|-f(x_ix_{i+1})\big |$.\qqed
\end{defn}

\begin{problem}\label{qeu:especial-total-colorings}
We propose the following problems for further research on particular proper total colorings:
\begin{asparaenum}[P-1. ]
\item \textbf{Estimate} the bounds of the constant $k_i$ with $i\in [1,4]$ in each especial proper total coloring defined in Definition \ref{defn:combinatoric-definition-total-coloring}, where $k_1=f(u)+f(uv)+f(v)$, $k_2=f(uv)+|f(u)-f(v)|$, $k_3=|f(u)+f(v)-f(uv)|$ and $k_4=\big ||f(u)-f(v)|-f(uv)\big |$.
\item For any group of positive integers $k_1,k_2,k_3,k_4$, \textbf{find} a connected graph $G$ admitting a $\{k_i\}^4_1$-magic proper total coloring $h$, such that there are edges $u_iv_i\in E(G)$ with $i\in [1,4]$ holding $k_1=h(u_1)+h(u_1v_1)+h(v_1)$, $k_2=h(u_2v_2)+|h(u_2)-h(v_2)|$, $k_3=|h(u_3)+h(v_3)-h(u_3v_3)|$ and $k_4=\big ||h(u_4)-h(v_4)|-h(u_4v_4)\big |$ true.
\item \textbf{Color-valued graphic authentication problem}: For a given connected non-tree $(p,q)$-graph $G$, we have two graph sets: A \emph{\textbf{public-key set}} $S_v$ and a \emph{\textbf{private-key set}} $S_e$, each graph $H_i$ of $S_v$ admits a proper vertex coloring, each graph $L_j$ of $S_e$ admits a proper edge coloring, and $|E(G)|=|E(H_i)|=|E(L_j)|$. Can we find a graph $H_i\in S_v$ and another graph $L_j\in S_e$, and do the vertex-coinciding operation to $H_i$ and $L_j$ respectively, such that the resulting graphs $H'_i$ and $L'_j$ hold $G\cong H'_i$ and $G\cong L'_j$, and two colorings of $H'_i$ and $L'_j$ induce just a proper total coloring of $G$ (as an authentication)? Since we can vertex-split the vertices of $G$ into at least $q-p+1$ different connected graphs, so $S_v\neq \emptyset $ and $S_e\neq \emptyset $.
\item \textbf{Find} a simple and connected graph $G$ admitting a proper total coloring $f:V(G)\cup E(G)\rightarrow [1,M]$ and inducing an edge-function $c_f(uv)$ for each edge $uv\in E(G)$ according to Definition \ref{defn:combinatoric-definition-total-coloring}, and find constants $k_1,k_2,\dots ,k_m$, such that each edge $uv\in E(G)$ corresponds some $k_i$ holding $c_f(uv)=k_i$ true, and each constant $k_j$ corresponds at least one edge $xy$ holding $c_f(xy)=k_j$.
\item For any integer sequence $\{k_i\}^n_1$ with $k_i<k_{i+1}$, \textbf{find} a simple and connected graph $G$ such that each $k_i$ corresponds a proper total coloring $f_i:V(G)\cup E(G)\rightarrow [1,M]$ defined in Definition \ref{defn:combinatoric-definition-total-coloring}, and $f_i$ induces an edge-function $c_{f_i}(uv)=k_i$ for each edge $uv\in E(G)$.
\item \textbf{Splitting-coinciding problem}: Given two connected graphs $W$ and $U$ with $\chi''(W)=\chi''(U)$, \textbf{does} doing vertex-splitting and vertex-coinciding operations to $W$ (resp. $U$) produce $U$ (resp. $W$)?
\end{asparaenum}
\end{problem}


\section{Graphic lattices}

We will construct \emph{graphic lattices}, \emph{graphic group lattices}, \emph{Topcode-matrix lattices} and \emph{topological coding lattices} produced by graph operations, matrix operations, group operations. So, we will define two kinds of undirected digraphic lattices and colored digraphic lattices on digraphs (directed graphs). In fact, various graphic lattices are sets of Topsnut-gpws of topological coding.

\subsection{Linearly independent graphic vectors}

We say $n$ disjoint graphs $G_1,G_2,\dots ,G_n$ to be \emph{linearly independent} under a graph operation $(\bullet)$ if there is no tree $T$ with $r~(\geq 2)$ vertices such that $G_j=T(\bullet)\{G_{i_s}\}^r_{s=1,i_s\neq j}$ for each $j\in [1,n]$. In a (colored) \emph{graph-vector} group $\textbf{\textrm{H}}^c=(H_1,H_2,\dots, H_n)$ with $H_i\cong H_j$ and each graph $H_i$ admits a $W$-type coloring $f_i$, if there exists no operation ``$(\bullet)$'' such that $H_j=(\bullet)\{H_{i_s}\}^r_{s=1,i_s\neq j}$, we say $\textbf{\textrm{H}}^c$ to be linearly independent.

\subsection{Graphic lattices subject to a graph operation}

Let $\textbf{\textrm{H}}=(H_1,H_2,\dots, H_n)$ be a group of $n$ \emph{linearly independent graphic vectors} (also, a \emph{graphic base}) under a graph operation ``$(\bullet)$'', where each $H_i$ is a colored/uncolored graph, and $F_{p,q}$ is a set of colored/uncolored graphs of $\lambda$ vertices and $\mu$ edges with respect to $\lambda \leq p$, $\mu \leq q$ and $2n-2\leq p$. We write the result graph obtained by doing a graph operation $(\bullet)$ on $G$ and the base $\textbf{\textrm{H}}$ with $a_i\in Z^0$, denoted as $\textbf{\textrm{H}}(\bullet)G=G(\bullet)^n_{i=1}a_iH_i$. In general, we call the following graph set
\begin{equation}\label{eqa:graphic-lattice-graph-operation}
\textbf{\textrm{L}}(\textbf{\textrm{H}}(\bullet)F_{p,q})=\{G(\bullet)^n_{i=1}a_iH_i:~a_i\in Z^0,~G\in F_{p,q}\}
\end{equation}
with $\sum^n_{i=1} a_i\geq 1$ a \emph{graphic lattice} (or \emph{colored graphic lattice}), $\textbf{\textrm{H}}$ a \emph{graphic lattice base}, $p$ is the \emph{dimension} and $n$ is the \emph{rank} of $\textbf{\textrm{L}}(\textbf{\textrm{H}}(\bullet)F_{p,q})$. Moreover, $\textbf{\textrm{L}}(\textbf{\textrm{H}}(\bullet)F_{p,q})$ is called a \emph{linear graphic lattice} if every $G\in F_{p,q}$, each base $H_i$ of $\textbf{\textrm{H}}$ and $G(\bullet)^n_{i=1}a_iH_i$ are forests or trees. An uncolored tree-graph lattice, or a colored tree-graph lattice is \emph{full-rank} $p=n$ in the equation (\ref{eqa:graphic-lattice-graph-operation}).

\begin{rem}\label{rem:Hanzi-directed-lattices}
Especially, if each $H\in \textbf{\textrm{H}}$ is a (colored) \emph{Hanzi-graph}, we call $\textbf{\textrm{L}}(\textbf{\textrm{H}}(\bullet)F_{p,q})$ a (colored) \emph{Hanzi-lattice}.

Let $\overrightarrow{F}_{p,q}$ be a set of directed graphs of $p$ vertices and $q$ arcs with $n\leq p$, and let $\overrightarrow{\textbf{\textrm{H}}}=(\overrightarrow{H}_1,\overrightarrow{H}_2,\dots, \overrightarrow{H}_n)$ be a group of $n$ linearly independent directed-graphic vectors, where each $\overrightarrow{H}_i$ is a directed graph. By an operation ``$(\bullet)$'' on directed graphs, we have a directed-graphic lattice (or \emph{colored directed-graphic lattice}) as follows
\begin{equation}\label{eqa:directed-graphic-lattice-graph-operation}
\overrightarrow{\textbf{\textrm{L}}}(\overrightarrow{\textbf{\textrm{H}}}(\bullet)\overrightarrow{F}_{p,q})=\left \{\overrightarrow{G}(\bullet)^n_{i=1}a_i\overrightarrow{H}_i:~a_i\in Z^0,~\overrightarrow{G}\in \overrightarrow{F}_{p,q}\right \}
\end{equation}
with $\sum^n_{i=1} a_i\geq 1$.\qqed
\end{rem}

\begin{problem}\label{qeu:extrems-in-lattices}
We propose the following problems:
\begin{asparaenum}[\textrm{A}-1. ]
\item \textbf{Characterize} the connection between the graphic lattice base $\textbf{\textrm{H}}$ and the graph set $F_{p,q}$, that is, the graphic lattice $\textbf{\textrm{L}}(\textbf{\textrm{H}}(\bullet)F_{p,q})$ is not empty as $F_{p,q}$ holds what conditions.
\item \textbf{Find} a graph $G^*$ of a graphic lattice $\textbf{\textrm{L}}(\textbf{\textrm{H}}(\bullet)F_{p,q})$, such that $G^*$ has the shortest diameter, or $G^*$ is Hamiltonian, or $G^*$ has a spanning tree with the most leaves in $\overrightarrow{\textbf{\textrm{L}}}(\overrightarrow{\textbf{\textrm{H}}}(\bullet)\overrightarrow{F}_{p,q})$, and so on.
\item \textbf{Does} there exist a Hanzi-graphic lattice containing any Chinese essay with $m$ Chinese letters?
\end{asparaenum}
\end{problem}

\subsection{Graphic lattices subject to the vertex-coinciding operation}

\subsubsection{Uncolored graphic lattices} Let $\textbf{\textrm{T}}=(T_1,T_2,\dots, T_n)$ be a group of $n$ \emph{linearly independent graphic vectors} under the vertex-coinciding operation, also, a \emph{graphic base}, and let $H\in F_{p,q}$ be a connected graph of vertices $u_{1},u_{2},\dots, u_{m}$. We write the result graph obtained by vertex-coinciding a vertex $v_i$ of some base $T_i$ with some vertex $u_{i_j}$ of the connected graph $H$ into one vertex $w_i=u_{i_j}\odot v_i$ as $H\odot \textbf{\textrm{T}}=H\odot |^n_{i=1}a_iT_i$ with $a_i\in Z^0$ and $\sum^n_{i=1} a_i\geq 1$. Since there are two or more vertices of the graphic lattice base $T_i$ that can be vertex-coincided with some vertex of the connected graph $H$, so $H\odot \textbf{\textrm{T}}$ is not unique in general, in other word, these graphs $H\odot \textbf{\textrm{T}}$ forms a set. We call the following set
\begin{equation}\label{eqa:graphic-lattice-00}
\textbf{\textrm{L}}(\textbf{\textrm{T}}\odot F_{p,q})=\{H\odot |^n_{i=1}a_iT_i:~a_i\in Z^0,~H\in F_{p,q}\}
\end{equation} with $\sum^n_{i=1} a_i\geq 1$ a \emph{graphic lattice}, and $p$ is the \emph{dimension}, and $n$ is the \emph{rank} of the graphic lattice. Moreover $\textbf{\textrm{L}}(\textbf{\textrm{T}}\odot F_{p,q})$ is called a \emph{linear graphic lattice} if every $H\in F_{p,q}$, each base $T_i$ of the lattice base $\textbf{\textrm{T}}$ and $H\odot \textbf{\textrm{T}}$ are forests or trees. We have several obvious facts:

(1) A graphic lattice can be expressed by different graphic bases. See five groups of Hanzi-graphic vectors shown in Fig.\ref{fig:2-txwg}, Fig.\ref{fig:2-txwg-3-joins} and Fig.\ref{fig:2-txwg-other-groups}.

(2) If a graph of a graphic lattice $\textbf{\textrm{L}}(\textbf{\textrm{T}}\odot F_{p,q})$ is connected, and $H$ has just $n-1$ edges, then each graphic vector $T_i$ with $i\in [1,n]$ is connected, and $p\geq n$, as well as $q\geq p-1$.

There are many ways to construct graphs $G\odot ^n_{i=1}a_iT_i$ with $a_i\in Z^0$ and $\sum^n_{i=1} a_i\geq 1$. Here, we discuss mainly two ways: One-vs-one by the vertex-coinciding operation, and String $T_1,T_2,\dots ,T_n$ together by the vertex-coinciding operation.

\textbf{1. One-vs-one by the vertex-coinciding operation.} For a graph $G\odot ^n_{i=1}T_i$, we suppose that each $T_i$ is vertex-coincided with one vertex $u_i$ of $G$, and any pair of two $T_i$ and $T_j$ are vertex-coincided with two distinct vertices $u_i$ and $u_j$ of $G$. So, we have:
\begin{asparaenum}[\textrm{Case 1.}1 ]
\item There are ${p\choose n}$ groups of vertices $u_{k,1},u_{k,2},\cdots ,u_{k,n}$ for a graph $G$ of $p$ vertices.

\item There are $n!$ permutations $u_{k,i_1}u_{k,i_2}\cdots u_{k,i_n}$ for each group of vertices $u_{k,1}$, $u_{k,2}$, $\cdots $, $u_{k,n}$ of $G$.

\item There are $n!$ permutations $T_{i_1}T_{i_2}\cdots T_{i_n}$ of graphic vectors $T_1,T_2,\dots ,T_n$ of a graphic lattice base $\textbf{\textrm{T}}$. For each permutation $u_{k,i_1}u_{k,i_2}\cdots u_{k,i_n}$, a vertex $x_{i_j}$ of $T_{i_j}$ is vertex-coincided with the vertex $u_{k,i_j}$ with $j\in [1,n]$, such that two graphic vectors $T_{i_j}$ and $T_{i_s}$ are vertex-coincided two distinct vertices $u_{k,i_j}$ and $u_{k,i_s}$ of $G$.

\item For each $T_{i_j}$ of a permutation $T_{i_1}T_{i_2}\cdots T_{i_n}$, there are $|T_{i_j}|$ vertices being vertex-coincided with the vertex $u_{k,i_j}$ with $j\in [1,n]$.
\end{asparaenum}

Thereby, under one-vs-one vertex-coinciding operation, we have ${p\choose n}\cdot (n!)^2\cdot \prod ^n_{i=1}|T_i|$ possible graphs $G\odot ^n_{i=1}T_i$ in total.

\textbf{2. String $T_1,T_2,\dots ,T_n$ together by the vertex-coinciding operation.} We consider to string a permutation $T_{i_1}T_{i_2}\cdots T_{i_n}$ of $T_1,T_2,\dots ,T_n$ together by $(n-1)$ edges $u_{j}v_{j}$ of $G$, such that a vertex $y_{i_j}$ of $T_{i_j}$ is joined by the vertex $u_{j}$, a vertex $x_{i_{j+1}}$ of $T_{i_{j+1}}$ is joined by the vertex $v_{j}$ for $j\in [1,n-1]$, the resultant graph is just a bunch of $T_{i_1},T_{i_2},\cdots ,T_{i_n}$, denoted as
\begin{equation}\label{eqa:string-base}
G\ominus \textbf{\textrm{T}}=T_{i_1}+u_{1}v_{1}+T_{i_2}+u_{2}v_{2}+T_{i_3}+\cdots +T_{i_{n-1}}+u_{n-1}v_{n-1}+T_{i_n}.
\end{equation}

\begin{asparaenum}[\textrm{Case 2.}1 ]
\item There are ${q\choose n-1}$ groups of edges $u_{k,1}v_{k,1},u_{k,2}v_{k,2},\cdots ,u_{k,n-1}v_{k,n-1}$ for a connected $(p,q)$-graph $G$.
\item There are $n!$ permutations $T_{i_1}T_{i_2}\cdots T_{i_n}$ of graphic vectors $T_1,T_2,\dots ,T_n$ of a graphic lattice base $\textbf{\textrm{T}}$.
\item In a string graph
\begin{equation}\label{eqa:string-base-11}
T_{i_1}+u_{k,1}v_{k,1}+T_{i_2}+u_{k,2}v_{k,2}+T_{i_3}+\cdots +T_{i_{n-1}}+u_{k,n-1}v_{k,n-1}+T_{i_n},
\end{equation}there are $|T_{i_1}|$ vertices of the graphic vector $T_{i_1}$ to join the vertex $u_{k,1}$; each of $|T_{i_2}|$ vertices of the graphic vector $T_{i_2}$ can be joined with the vertex $v_{k,1}$, and with the vertex $u_{k,2}$, respectively, so we have $|T_{i_2}|^2$ cases; go on in this way, we get $|T_{i_1}|\cdot (\prod ^{n-1}_{s=2} |T_{i_s}|^2)\cdot |T_{i_n}|$ possible string graphs in total.
\end{asparaenum}

Thereby, we have $n_s(G)$ possible string graphs $G\ominus \textbf{\textrm{T}}$ shown in (\ref{eqa:string-base}) from $G\odot ^n_{i=1}T_i$, where
$$n_s(G)={q\choose n-1}\cdot n!\cdot |T_{i_1}|\cdot \left (\prod ^{n-1}_{s=2} |T_{i_s}|^2\right )\cdot |T_{i_n}|$$

\begin{problem}\label{qeu:structures-lattice}
We are interesting on the structures and properties of a graphic lattice $\textbf{\textrm{L}}(\textbf{\textrm{T}}\odot F_{p,q})$, such as:
\begin{asparaenum}[\textrm{B}-1. ]
\item \textbf{Determine} a graph $G^*$ of $\textbf{\textrm{L}}(\textbf{\textrm{T}}\odot F_{p,q})$ such that two diameters $D(G^*)\leq D(G)$ for any $G\in \textbf{\textrm{L}}(\textbf{\textrm{T}}\odot F_{p,q})$.

\item \textbf{Estimate} the \emph{cardinality} of $\textbf{\textrm{L}}(\textbf{\textrm{T}}\odot F_{p,q})$, however this will be related with the graph isomorphic problem, a NP-hard problem.

\item \textbf{Dose} $H'\odot |^n_{i=1}T_i\cong H''\odot |^n_{i=1}T_i$ if $H'\cong H''$?
\item Let $\overline{T}_i$ be the \emph{complementary graph} of each base $T_i$ of a graphic base $\textbf{\textrm{T}}$, and let $\overline{H}$ be the complement of $H\in F_{p,q}$. \textbf{Is} $\overline{H}\odot |^n_{i=1}\overline{T}_i$ the complementary graph of $H\odot |^n_{i=1}T_i$?\qqed
\end{asparaenum}
\end{problem}

\subsubsection{Colored graphic lattices} Let $F^c_{p,q}$ be a set of colored graphs of $\lambda $ vertices and $\mu $ edges with respect to $\lambda \leq p$, $\mu \leq q$ and $2n-2\leq p$, where each graph $H^c\in F^c_{p,q}$ is colored by a $W$-type coloring $f$, and let $\textbf{\textrm{T}}^c=(T^c_1,T^c_2,\dots, T^c_n)$ with $n\leq p$ be a \emph{linearly independent colored graphic base} under the vertex-coinciding operation, we have two particular cases: (i) each graph $T^c_i$ of $\textbf{\textrm{T}}^c$ admits a $W_i$-type coloring $g_i$; (ii) the union graph $\bigcup ^n_{i=1}T^c_i$ admits a flawed $W_i$-type coloring. Vertex-coinciding a vertex $x_i$ of some base $T^c_i$ with some vertex $y_{i_j}$ of the colored graph $H^c\in F^c_{p,q}$ into one vertex $z_i=y_{i_j}\odot x_i$ produces a vertex-coincided graph $H^c\odot |^n_{i=1}T^c_i$ admitting a coloring induced by $g_1, g_2, \dots ,g_n$ and $f$, here, two vertices $x_i,y_{i_j}$ are colored the same color $\gamma$, then the vertex $z_i=y_{i_j}\odot x_i$ is colored with the color $\gamma$ too. We call the following set
\begin{equation}\label{eqa:graphic-lattice-colored}
\textbf{\textrm{L}}(\textbf{\textrm{T}}^c\odot F^c_{p,q})=\left \{H^c\odot |^n_{i=1}a_iT^c_i:~a_i\in Z^0,~H^c\in F^c_{p,q}\right \}
\end{equation} with $\sum^n_{i=1} a_i\geq 1$ a \emph{colored graphic lattice}, where $p$ is the \emph{dimension}, and $n$ is the \emph{rank} of the colored graphic lattice. We call the colored graphic lattice $\textbf{\textrm{L}}(\textbf{\textrm{T}}^c\odot F^c_{p,q})$ a \emph{linear colored graphic lattice} if every colored graph $H^c\in F^c_{p,q}$, each graph $T^c_i$ of $\textbf{\textrm{T}}^c$ and the graph $H^c\odot |^n_{i=1}a_iT^c_i$ are colored forests, or colored trees. Clearly, each element of the lattice $\textbf{\textrm{L}}(\textbf{\textrm{T}}^c\odot F^c_{p,q})$, each graph $T^c_i$ and each colored graph $H^c\in F^c_{p,q}$ may admit the same $W$-type colorings.

\begin{exa}\label{exa:2222}
In Fig.\ref{fig:10-year-flawed-graceful}, the graphs $M_1,M_2,\dots ,M_{14}$ are a block permutation of a group of Hanzi-graphs $G_{\textrm{4214}}, G_{\textrm{3674}}, G_{\textrm{4287}}, G_{\textrm{3630}}, G_{\textrm{1657}}, G_{\textrm{3674}}, G_{\textrm{4287}}, G_{\textrm{4043}}$ composed by 14 blocks (Ref. \cite{GB2312-80} ). Let $O=\bigcup ^{14}_{k}M_k$. The disconnected graph admits a flawed graceful labelling defined in Definition \ref{defn:flawed-odd-graceful-labelling}. Let $O+A_i$ with $i=1,2$, where two colored disconnected graphs $A_1,A_2$ are shown in Fig.\ref{fig:10-year-flawed-graceful}. So, each connected graph $O+A_i$ with $i=1,2$ admits a graceful labelling $g_i$ shown in Fig.\ref{fig:10-year-flawed-graceful-join}. we can see that $O+A_1$ is the result of string $M_1,M_2,\dots ,M_{14}$ together by the graph $A_1$ according to the vertex-coinciding operation ``$\odot$'', $O+A_i$ is the result of vertex-coinciding $O$ with $A_i$, so we can rewrite $O+A_i$ as $O\odot A_i$. Observe $O=\bigcup ^{14}_{k}M_k$ carefully, we are conscious of there are many colored graphs $A_j$ like $A_1$ and $A_2$, such that vertex-coincided graphs $A_j\odot ^{14}_{k=1}M_k$ are connected and admit graceful labellings. \qqed
\end{exa}

\begin{figure}[h]
\centering
\includegraphics[width=16.2cm]{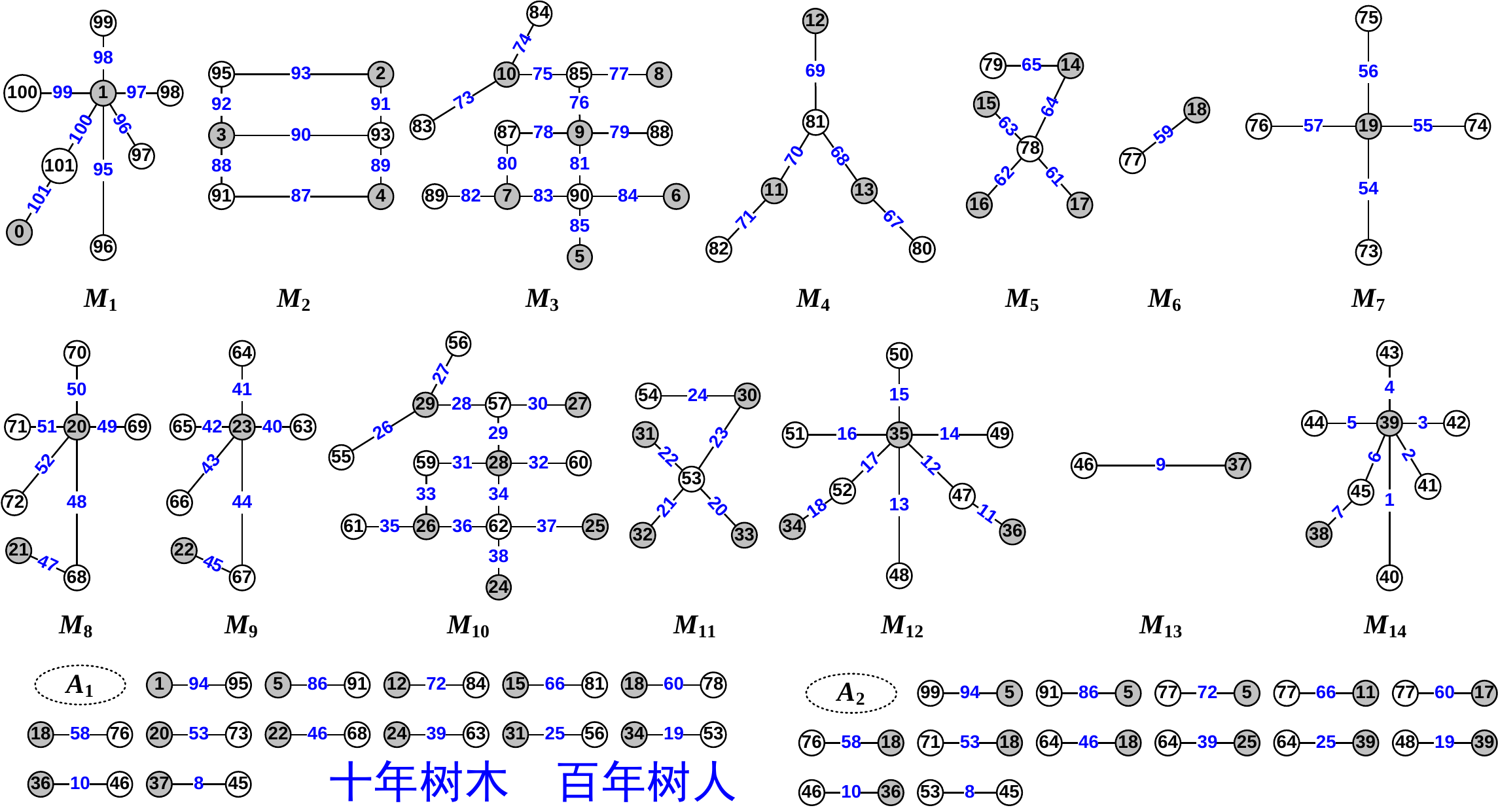}
\caption{\label{fig:10-year-flawed-graceful} {\small A group of Hanzi-graphs $G_{\textrm{4214}}, G_{\textrm{3674}}, G_{\textrm{4287}}, G_{\textrm{3630}}, G_{\textrm{1657}}, G_{\textrm{3674}}, G_{\textrm{4287}}, G_{\textrm{4043}}$ induces 14 block graphs $M_1,M_2,\dots ,M_{14}$, which form a disconnected graph $O=\bigcup ^{14}_{k}M_k$ admitting a flawed graceful labelling.}}
\end{figure}

\begin{figure}[h]
\centering
\includegraphics[width=14.2cm]{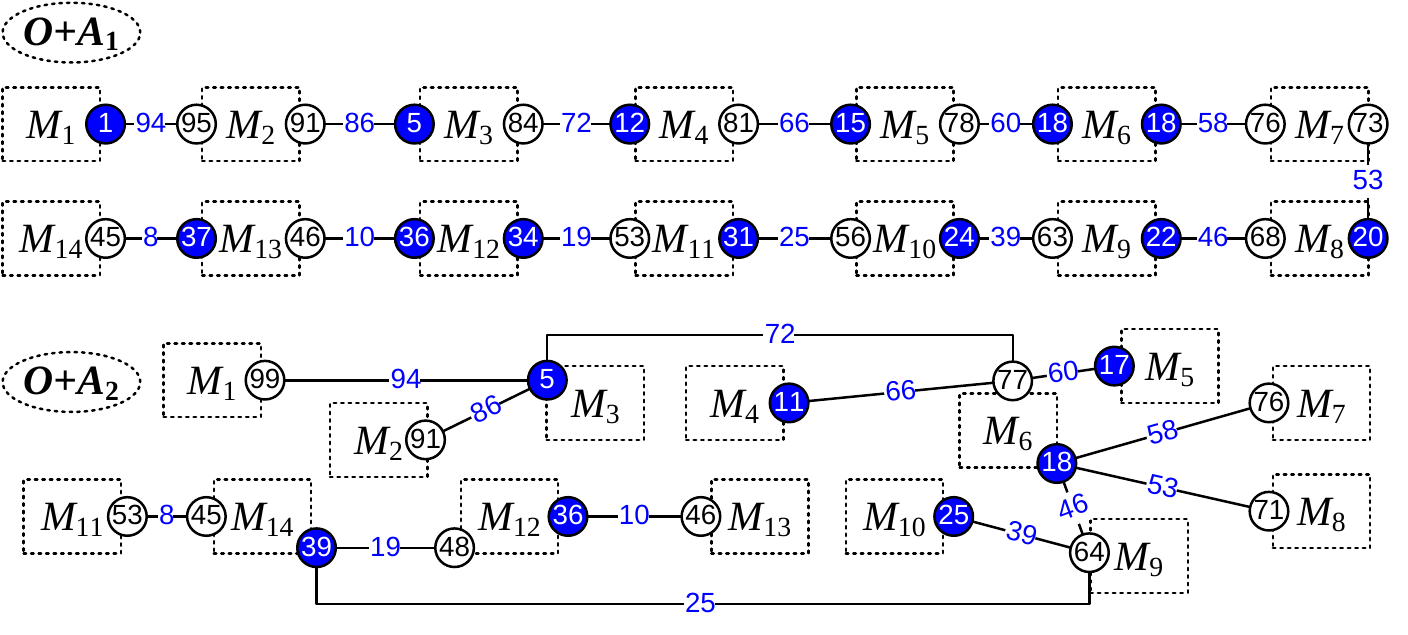}
\caption{\label{fig:10-year-flawed-graceful-join} {\small Two connected Topsnut-gpws obtained from $M_1,M_2,\dots ,M_{14}$ and $A_1,A_2$ shown in Fig.\ref{fig:10-year-flawed-graceful}.}}
\end{figure}

\begin{problem}\label{qeu:complex-problems}
For a colored graphic lattice $\textbf{\textrm{L}}(\textbf{\textrm{T}}^c\odot F^c_{p,q})$, we may consider the following complex problems:
\begin{asparaenum}[\textrm{C}-1. ]
\item \textbf{Classify} a colored graphic lattice $\textbf{\textrm{L}}(\textbf{\textrm{T}}^c\odot F^c_{p,q})$ into some particular subsets $L^{sub}_k$ with $k\in [1,m]$, find particular subsets, such as each graph of $L^{sub}_i$ is a tree, or an Euler's graph, or a Hamiltonian graph; if the colored graphic lattice base $\textbf{\textrm{T}}^c$ admits a flawed $W$-type coloring, then each graph of $L^{sub}_j$ admits a $W$-type coloring too.
\item \textbf{Find} a graph of $\textbf{\textrm{L}}(\textbf{\textrm{T}}^c\odot F^c_{p,q})$ with the shortest diameter $D(G^*)$, such that $D(G^*)\leq D(G)$ for any graph $G\in \textbf{\textrm{L}}(\textbf{\textrm{T}}^c\odot F^c_{p,q})$.
\item \textbf{List} possible $W$-type colorings for constructing a colored graphic lattice $\textbf{\textrm{L}}(\textbf{\textrm{T}}^c\odot F^c_{p,q})$.
\item \textbf{Do} we have the topological coloring isomorphism $H^c\odot |^n_{i=1}T^c_i= G^c\odot |^n_{i=1}T^c_i$ in a colored graphic lattice $\textbf{\textrm{L}}(\textbf{\textrm{T}}^c\odot F^c_{p,q})$ when $H^c\cong G^c$ or $H^c\not \cong G^c$?
\item If $T^c_i\cong T^c_j$ and $T^c_i\cong H^c$ for distinct $i,j\in [1,n]$, \textbf{characterize} $H^c\odot |^n_{i=1}T^c_i$.
\item Since a tree admits a set-ordered graceful labelling if and only if it admits a set-ordered odd-graceful labelling, we consider: For two colored graphic lattices $\textbf{\textrm{L}}(\textbf{\textrm{T}}^c_i\odot F^c_{p,q})$ and two bases $\textbf{\textrm{T}}^c_i=(T^c_{i,1}$, $T^c_{i,2}$, $\dots $, $T^c_{i,n})$ with $i=1,2$, each $H^c\odot^n_{j=1}T^c_{i,j}$ admits a $W_i$-type coloring, if both $X_i$ and $X_{3-i}$ are equivalent to each other. \textbf{Is} $\textbf{\textrm{L}}(\textbf{\textrm{T}}^c_i\odot F^c_{p,q})$ equivalent to $\textbf{\textrm{L}}(\textbf{\textrm{T}}^c_{3-i}\odot F^c_{p,q})$ with $i=1,2$?
\item \cite{Yao-Sun-Zhao-Li-Yan-2017, Yao-Mu-Sun-Zhang-Wang-Su-Ma-IAEAC-2018, Sun-Zhang-Zhao-Yao-2017} If the graphic lattice base $\textbf{\textrm{T}}^c=(T^c_1,T^c_2,\dots, T^c_n)$ forms an every-zero graphic group based under a $W$-type coloring, \textbf{does} the corresponding colored graphic lattice form a graphic group too?
\item If each graphic base of the graphic lattice base $\textbf{\textrm{T}}^c=(T^c_1,T^c_2,\dots, T^c_n)$ admits a $W$-type coloring defined in Definition \ref{defn:combinatoric-definition-total-coloring}, \textbf{determine} a subset $S(\textbf{\textrm{L}})$ of the graphic lattice $\textbf{\textrm{L}}(\textbf{\textrm{T}}^c\odot F^c_{p,q})$, such that each connedted graph of $S(\textbf{\textrm{L}})$ admits a rainbow proper total coloring defined in Definition \ref{defn:rainbow-proper-total-coloring}.
\item \textbf{Find} a graph $G\in \textbf{\textrm{L}}(\textbf{\textrm{T}}^c\odot F^c_{p,q})$, such that for any $H\in \textbf{\textrm{L}}(\textbf{\textrm{T}}^c\odot F^c_{p,q})$, we have (1) the proper total chromatic numbers satisfy $\chi''(G)\leq \chi''(H)$; or (2) the edge-magic total chromatic number $\chi''_{emt}$, the edge-difference total chromatic number $\chi''_{edt}$, the felicitous-difference total chromatic number $\chi''_{fdt}$ and the graceful-difference total chromatic number $\chi''_{gdt}$ hold $\chi''_{\varepsilon}(G)\leq \chi''_{\varepsilon}(H)$ for $\varepsilon\in \{$\emph{emt, edt, fdt, gdt}$\}$; and or (3) the diameters obey $D_{\textrm{iameter}}(G)\leq D_{\textrm{iameter}}(H)$.\qqed
\end{asparaenum}
\end{problem}

\begin{exa}\label{exa:3333}
In Fig.\ref{fig:2-txwg-other-groups}, there are four groups of Hanzi-graphs as follows:

$O_1=G_{\textrm{4476}}\cup G_{\textrm{2511}}\cup G_{\textrm{4610}}\cup G_{\textrm{4147}}$, $O_2=G_{\textrm{5027}}\cup G_{\textrm{4476}}\cup G_{\textrm{2511}}\cup G_{\textrm{5027}}\cup G_{\textrm{4734}}\cup G_{\textrm{3306}}$

$O_3=G_{\textrm{4476}}\cup G_{\textrm{4147}}\cup G_{\textrm{1676}}\cup G_{\textrm{2511}}$, $O_4=G_{\textrm{4476}}\cup G_{\textrm{4734}}\cup G_{\textrm{1643}}\cup G_{\textrm{5240}}\cup G_{\textrm{3306}}$

We use these four groups of Hanzi-graphs to get the following groups of linearly independent Hanzi-graphic vectors

$\textbf{\textrm{O}}_1=(G_{\textrm{4476}}, G_{\textrm{2511}}, G_{\textrm{4610}}, G_{\textrm{4147}})$, $\textbf{\textrm{O}}_2=(G_{\textrm{5027}}, G_{\textrm{4476}}, G_{\textrm{2511}}, G_{\textrm{5027}}, G_{\textrm{4734}}, G_{\textrm{3306}})$

$\textbf{\textrm{O}}_3=(G_{\textrm{4476}}, G_{\textrm{4147}}, G_{\textrm{1676}}, G_{\textrm{2511}})$, $\textbf{\textrm{O}}_4=(G_{\textrm{4476}}, G_{\textrm{4734}}, G_{\textrm{1643}}, G_{\textrm{5240}}, G_{\textrm{3306}})$.

By the above Hanzi-graphic groups $\textbf{\textrm{O}}_k$ with $k\in [1,4]$, we get four Hanzi-graphic lattices $\textbf{\textrm{L}}(\textbf{\textrm{O}}_k\odot F_{p,q})$ with $k\in [1,4]$. Observe four Hanzi-graphic lattices $\textbf{\textrm{L}}(\textbf{\textrm{O}}_k\odot F_{p,q})$ with $k\in [1,4]$ carefully, we can see: ``\emph{A graphic lattice can be expressed by different graphic bases}.''

Let $T^c_{\textrm{abcd}}=T_{\textrm{code}}(G_{\textrm{abcd}})$ be a Topcode-matrix of Hanzi-graph $G_{\textrm{abcd}}$ (see the definition of a Topcode-matrix shown in Definition \ref{defn:topcode-matrix-definition}), we have four colored graphic bases:

$\textbf{\textrm{O}}^c_1=(T^c_{\textrm{4476}}, T^c_{\textrm{2511}}, T^c_{\textrm{4610}}, T^c_{\textrm{4147}})$, $\textbf{\textrm{O}}^c_2=(T^c_{\textrm{5027}}, T^c_{\textrm{4476}}, T^c_{\textrm{2511}}, T^c_{\textrm{5027}}, T^c_{\textrm{4734}}, T^c_{\textrm{3306}})$.

$\textbf{\textrm{O}}^c_3=(T^c_{\textrm{4476}}, T^c_{\textrm{4147}}, T^c_{\textrm{1676}}, T^c_{\textrm{2511}})$, $\textbf{\textrm{O}}^c_4=(T^c_{\textrm{4476}}, T^c_{\textrm{4734}}, T^c_{\textrm{1643}}, T^c_{\textrm{5240}}, T^c_{\textrm{3306}})$

Finally, we obtain four colored Hanzi-graphic lattices $\textbf{\textrm{L}}(\textbf{\textrm{O}}^c_k\odot F^c_{p,q})$ with $k\in [1,4]$. Suppose that each graph of $\textbf{\textrm{L}}(\textbf{\textrm{O}}^c_k\odot F^c_{p,q})$ admits a $W_k$-type coloring with $k\in [1,4]$. Clearly, a colored Hanzi-graphic lattice $\textbf{\textrm{L}}(\textbf{\textrm{O}}^c_i\odot F^c_{p,q})$ differs from another colored Hanzi-graphic lattice $\textbf{\textrm{L}}(\textbf{\textrm{O}}^c_j\odot F^c_{p,q})$ if $\textbf{\textrm{O}}^c_i\neq \textbf{\textrm{O}}^c_j$, and moreover ``\emph{a colored graphic lattice is not expressed by different $W$-type colored-graphic bases}''. See two examples shown in Fig.\ref{fig:Dpermutation-graceful-odd}.\qqed
\end{exa}

\subsection{Graphic lattices subject to the vertex-substituting operation}

Saturated systems are defined by the vertex-replacing and edge-replacing operations on graphs:

(a) Replacing a vertex $x$ of a graph $G$ by another graph $T$: First, remove the vertex $x$ from $G$, and join each vertex $x_i\in N(x)$ with some vertex $y_i$ of $T$ by an edge, and the resultant graph is denoted as $(G-x)\triangleleft T$;

(b) Replacing an edge $xy$ of a graph $G$ by another graph $H$: First, remove the edge $xy$ from $G$, and then join the vertex $x$ with some vertex $u$ of $H$ by an edge, and join the vertex $y$ with some vertex $v$ of $H$ by an edge, and the resultant graph is denoted as $(G-xy)\ominus H$.

Notice that: (a') the graph $G$ can be obtained by contracting $T$ of the graph $(G-x)\triangleleft T$ into a vertex $x$; (b') as $H=K_1$, the graph $(G-xy)\ominus H$ is an \emph{edge-subdivision operation} of graph theory.

\textbf{The fully vertex-replacing operation ``$\overline{\triangleleft} $'':} Replacing a vertex $x$ of a graph $G$ by another graph $T$ holding $|V(T)|\geq |N(x)|$, first remove $x$ from $G$, and then join each $x_i\in N(x)$ with vertex $y_i\in V(T)$ by an edge, such that $y_i\neq y_j$ if $x_i\neq x_j$. The resultant graph is denoted as $(G-x)\overline{\triangleleft} T$.

In a graphic base $\textbf{\textrm{T}}=(T_1,T_2,\dots, T_n)$, graphic vectors $T_1,T_2,\dots, T_n$ are linearly independent under the fully vertex-replacing operation. Let maximum degrees $\Delta(T_i)\leq \Delta(T_{i+1})$ for $i\in [1,n-1]$. For a connected graph $H\in F_{p,q}$ having $m$ vertices, we substitute the first vertex $x_1$ of $H$ by doing a fully vertex-replacing operation with some graphic vector $T_{i_1}\in \textbf{\textrm{T}}$, where $|V(T_{i_1})|\geq |N(x_1)|$, the resultant graph is written as $H_1=(H-x_1)\overline{\triangleleft} T_{i_1}$. Next, we do a fully vertex-replacing operation to a vertex $x_2$ of $H_1$ but $x_2\not \in V(T_{i_1})\subset V(H_1)$ by some graphic vector $T_{i_2}\in \textbf{\textrm{T}}$ with $|V(T_{i_2})|\geq |N(x_2)|$, and then denote the resulting graph as $H_2=(H_1-x_2)\overline{\triangleleft} T_{i_2}$. Go on in this way, we get $H_m=(H_{m-1}-x_{m})\overline{\triangleleft} T_{i_m}$ with $T_{i_m}\in \textbf{\textrm{T}}$ and $|V(T_{i_m})|\geq |N(x_m)|$ after doing a fully vertex-replacing operation to the last vertex $x_m\in V(H)\setminus (\bigcup^{m-1}_{j=1}V(T_{i_j}))$ by some graphic vector $T_{i_m}\in \textbf{\textrm{T}}$ with $|V(T_{i_m})|\geq |N(x_m)|$. For simplicity, we write $H_m$ by $H\triangleleft |^n_{k=1} a_kT_{k}$, and call the following set
\begin{equation}\label{eqa:fully-replacing-graphic-lattice}
\textbf{\textrm{L}}(\textbf{\textrm{T}}\overline{\triangleleft} F_{p,q})=\left \{H\overline{\triangleleft} |^n_{k=1} a_kT_{k}:~a_k\in Z^0,~H\in F_{p,q}\right \}
\end{equation}
a \emph{graphic lattice} under the fully vertex-replacing operation, where $\sum^n_{k=1} a_k\geq 1$.

\begin{exa}\label{exa:edge-difference-replacing}
By the fully vertex-replacing operation ``$\overline{\triangleleft} $'', we present a fully vertex-replacing graph $H'=H\overline{\triangleleft} |^{15}_{k=1} a_kE_{1,6}D_{k}$ shown in Fig.\ref{fig:2-edge-diff-replacing}(b), an \emph{edge-difference ice-flower system} $I_{ce}(E_{1,6}D_k)^{15}_{k=1}$, where $a_k\in Z^0$ and the graphic lattice base $\textbf{\textrm{E}}_{1,6}\textbf{\textrm{D}}=\{E_{1,6}D_{1},E_{1,6}D_{2},\dots ,E_{1,6}D_{15}\}$ shown in Fig.\ref{fig:Dsaturated-edge-difference-even}. Notice that $H'$ admits an edge-difference proper total coloring $f$ holding $f(uv)+|f(u)-f(v)|=16$ for each edge $uv\in E(H')$. Another graph $H''$ shown in Fig.\ref{fig:2-edge-diff-replacing}(c) is obtained by vertex-coinciding some vertices of $H'$ with the same color into one, so $H''$ admits an edge-difference proper total coloring too. Conversely, we can vertex-split some vertices of $H''$ to obtain the original graph $H'$. In the language of graph homomorphism, we have $\varphi:V(H')\rightarrow V(H'')$, and $\varphi^{-1}:V(H'')\rightarrow V(H')$, that is, $H'$ admits a \emph{graph homomorphism} to $H''$ (see Problem \ref{qeu:isomorphic}). Let $F(H',\odot)$ be the set of graphs obtained by vertex-coinciding some vertices of $H'$ with the same color into one, where each $G$ of $F(H',\odot)$ admits an edge-difference proper total coloring $h$ holding $h(uv)+|h(u)-h(v)|=16$ for each edge $uv\in E(G)$ such that $G=H'\odot X_G$ with $X_G \subset V(H')$ and $H'=G\wedge X_G$. Thereby, each of $F(H',\odot)$ can be considered as a private key if we set $H'$ as a public key and $L$ admits a graph homomorphism to $H'$.

Sometimes, we call $H'$ a $\Delta$-saturated graph since degree $\textrm{deg}_{H'}(u)=1$ or $\textrm{deg}_{H'}(u)=\Delta(H')$ for each vertex $u\in V(H')$, also, $H''$ is a $\Delta$-saturated graph too. Based on an \emph{edge-difference ice-flower system} $I_{ce}(E_{1,7}D_k)^{17}_{k=1}$ shown in Fig.\ref{fig:Dsaturated-edge-difference-odd}, another $\Delta$-saturated graph $H^*=H'\overline{\triangleleft} |^{17}_{k=1} a_kE_{1,7}D_{k}$ with $a_k\in Z^0$ shown in Fig.\ref{fig:1-edge-diff-replacing-2}, where the graphic lattice base $\textbf{\textrm{E}}_{1,7}\textbf{\textrm{D}}=\{E_{1,7}D_{1},E_{1,7}D_{2},\dots ,E_{1,7}D_{17}\}$ is shown in Fig.\ref{fig:Dsaturated-edge-difference-odd}.\qqed
\end{exa}

\begin{figure}[h]
\centering
\includegraphics[width=16.2cm]{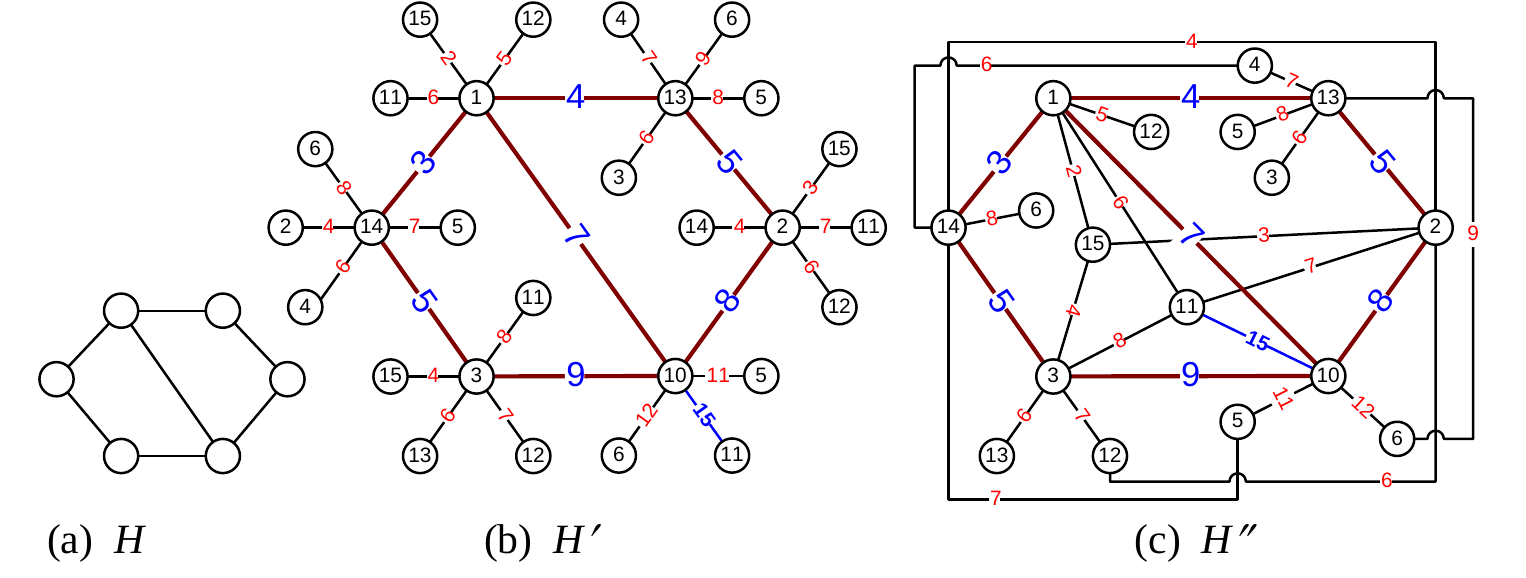}
\caption{\label{fig:2-edge-diff-replacing}{\small An example for illustrating the fully vertex-replacing operation under the edge-difference proper total coloring.}}
\end{figure}

\subsection{Matching-type graphic lattices}

Matching-type graphic lattices are connected with each other by matching (colored) graphs, matching labellings, and matching colorings.

\subsubsection{Matchings made by two or more graphs} In the following discussion, we will use traditional complementary graphs and $G$-complementary graphs to build up graphic lattices.

Traditional graph and its complement. A graph $\overline{G}$ is called the \emph{complement} of a graph $G$ of $n$ vertices if $V(G)=V(\overline{G})=V(K_n)$, $E(G)\cap E(\overline{G})=\emptyset$ and $E(G)\cup E(\overline{G})=E(K_n)$, then we say that $(G,\overline{G})$ is a \emph{complete-graphic matching}. Comparing the graphic lattice (\ref{eqa:graphic-lattice-graph-operation}), we have a \emph{complement graphic lattice}
\begin{equation}\label{eqa:matching-graphic-lattice}
\textbf{\textrm{L}}(\textbf{\textrm{H}}(\bullet)\overline{F}_{p,q})=\left \{\overline{G}(\bullet)^n_{i=1}a_iH_i:~a_i\in Z^0,~\overline{G}\in \overline{F}_{p,q}\right \}
\end{equation} where the graphic lattice base $\textbf{\textrm{H}}=(H_1$, $H_2$, $\dots $, $ H_n)$ is the same as that shown in (\ref{eqa:graphic-lattice-graph-operation}), $\overline{F}_{p,q}$ is the set of all complements of graphs of $F_{p,q}$ shown in (\ref{eqa:graphic-lattice-graph-operation}), and $\sum a_i\geq 1$. Let $\overline{\textbf{\textrm{H}}}=(\overline{H}_1,\overline{H}_2,\dots ,\overline{H}_n)$ be the \emph{complement base} of the graphic lattice base $\textbf{\textrm{H}}$ with the complement $\overline{H}_i$ of $H_i$ for $i\in [1,n]$, we get a \emph{complement base graphic lattice}
\begin{equation}\label{eqa:matching-graphic-lattice-complement}
\textbf{\textrm{L}}(\overline{\textbf{\textrm{H}}}(\bullet)F_{p,q})=\left \{G(\bullet)^n_{i=1}a_i\overline{H}_i:~a_i\in Z^0,~G\in F_{p,q}\right \}.
\end{equation}
with $\sum ^n_{i=1}a_i\geq 1$. Moreover, we obtain a \emph{totally complement graphic lattice} as follows:
\begin{equation}\label{eqa:matching-graphic-lattice-totally-complement}
\textbf{\textrm{L}}(\overline{\textbf{\textrm{H}}}(\bullet)\overline{F}_{p,q})=\left \{\overline{G}(\bullet)^n_{i=1}a_i\overline{H}_i:~a_i\in Z^0,~\overline{G}\in \overline{F}_{p,q}\right \}
\end{equation}
with $\sum ^n_{i=1}a_i\geq 1$. We call $(\textbf{\textrm{L}}(\textbf{\textrm{H}}(\bullet)F_{p,q}), \textbf{\textrm{L}}(\overline{\textbf{\textrm{H}}}(\bullet)\overline{F}_{p,q}))$ a matching of \emph{complementary graphic lattices}. However, let $G^*=G(\bullet)^n_{i=1}a_iH_i$, the complement $\overline{G^*}$ of $G^*$ is not $\overline{G}(\bullet)^n_{i=1}a_i\overline{H}_i$, in general.

A graph $G$ has two proper subgraphs $G_1,G_2$ such that $V(G)=V(G_1)\cup V(G_2)$, $E(G_1)\cap E(G_2)=\emptyset$ and $E(G_1)\cup E(G_2)=E(G)$. Thereby, we call $(G_1,G_2)$ a \emph{$G$-matching}. Correspondingly, we have the \emph{$G$-complementary graphic lattice} like that shown in (\ref{eqa:matching-graphic-lattice-totally-complement}).

\subsubsection{Coloring matchings on a graph} There are many matching labellings or matching colorings in graph theory.

(i) A graph $G$ admits two matchable colorings $f,h$, so we have two colored graphs $G_f,G_h$ holding $G\cong G_f\cong G_h$, where $G_f$ admits the coloring $f$, and $G_h$ admits the coloring $h$. For example, a connected graph $T$ admits a graceful labelling $f$, then its dual labelling $g(x)=\max f+\min f-f(x)$ for $x\in V(T)$ matches with $f$, where $\max f=\max \{f(x):x\in V(T)\}$ and $\min f=\min \{f(x):x\in V(T)\}$. Some matching colorings are introduced in Definition \ref{defn:combinatoric-definition-total-coloring} and Definition \ref{defn:4-dual-total-coloring}.

(ii) A graph $H$ admits a $W$-type coloring $g$, and this coloring $g$ matches with $W_i$-type colorings $g_i$ with $i\in [1,m]$ such that $(g,g_i)$ is a pair of matchable colorings. For example, the authors in \cite{Yao-Liu-Yao-2017} presented that a set-ordered graceful labelling of a tree is equivalent with many different labellings of the tree.

\subsubsection{Matchings made by graphs and colorings} Let $(G^{(k)}_1,G^{(k)}_2)$ be a $G^{(k)}$-matching graph pair based on a graph $G^{(k)}$ with $k\in[1,n]$, and let each $G^{(k)}_i$ admit a $W^{(k)}_i$-type coloring $f^{(k)}_i$ with $i=1,2$. Suppose that $G^{(k)}$ admits a $W$-type coloring $f^{(k)}$ induced by $f^{(k)}_1,f^{(k)}_2$, in other words, $(f^{(k)}_1,f^{(k)}_2)$ is an $f^{(k)}$-matching coloring. We obtain a pair of \emph{matching graphic lattices} below
\begin{equation}\label{eqa:matching-graphic-colorings-lattice}
{
\begin{split}
\textbf{\textrm{L}}(\textbf{\textrm{G}}_i(\bullet)F_{p,q})=\left \{H(\bullet)^n_{k=1}a_kG^{(k)}_i:a_k\in Z^0,~H\in F_{p,q}\right \}
\end{split}}
\end{equation} with $\sum ^n_{k=1}a_k\geq 1$, and the graphic lattice base $\textbf{\textrm{G}}_i=\left (G^{(1)}_i,G^{(2)}_i,\dots, G^{(n)}_i\right )$ for $i=1,2$. Naturally, we call $(\textbf{\textrm{G}}_1,\textbf{\textrm{G}}_2)$ a pair of \emph{matching bases}.

A colored matching $(H, H^*)$ of matchings made by graphs and colorings is shown in Fig.\ref{fig:4-color-planar-dual-graphs}(b). In Fig.\ref{fig:harmonious_odd-graceful}, $T$ admits an odd-graceful labelling $f$, and $G_i$ admits a harmonious labelling $g_i$ with $i=1,2$. So, $(f,g_i)$ is a matching of an odd-graceful labelling and a harmonious labelling for $i=1,2$; $H$ is the topological authentication of the public key $H_1$ and the private key $H_2$.

\begin{figure}[h]
\centering
\includegraphics[width=15cm]{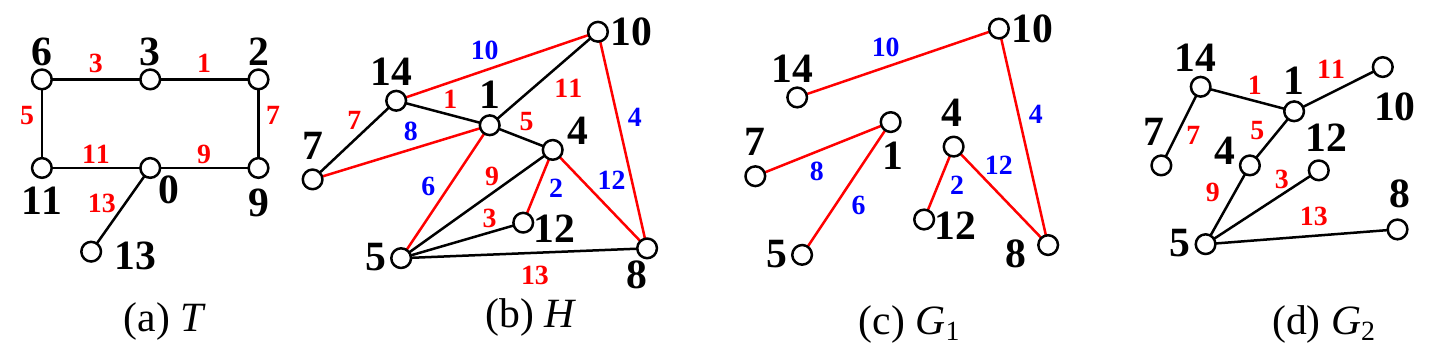}\\
\caption{\label{fig:harmonious_odd-graceful} {\small Harmonious labellings match with odd-graceful labellings.}}
\end{figure}

Each graph of three $(7,7)$-graphs $G_1,G_2,G_3$ shown in Fig.\ref{fig:twin-odd-graceful-00} and Fig.\ref{fig:twin-odd-graceful-11} admits an \emph{odd-graceful labelling} $f_i$ and each graph $H_{i,j}$ admits a \emph{pseudo odd-graceful labelling} $g_{i,j}$ with $i\in [1,3]$ and $j\in [1,6]$, such that $f_i(V(G_i))\cup g_{i,j}(V(H_{i,j}))=[1,14]$, $f_i(E(G_i))=[1,13]^o=g_{i,j}(E(H_{i,j}))$. So, we call $(G_t,H_{i,j})$ a \emph{twin odd-graceful matching}, $(f_t,g_{i,j})$ a pair of \emph{twin odd-graceful labellings} defined in \cite{Wang-Xu-Yao-2017-Twin}.
Notice that two odd-graceful $(7,7)$-graphs $G_1$ and $G_2$ have their twin odd-graceful matchings with $H_{1,j}\cong H_{2,j}$ for $j\in [1,6]$, in other word, the twin odd-graceful matchings of $G_1$ and $G_2$ keep isomorphic configuration, so $G_1$ and $G_2$ are \emph{twisted} under the isomorphic configuration of their own twin odd-graceful matchings. However, the twin odd-graceful matching $H_{3,j}~(j\in [1,6])$ of the odd-graceful $(7,7)$-graph $G_3$ shown in Fig.\ref{fig:twin-odd-graceful-11} are not isomorphic to $H_{1,j}$ and $H_{2,j}$ of $G_1$ and $G_2$ with $j\in [1,6]$. The above examples tell us that finding twin odd-graceful matchings of a non-tree graph is not a slight work.

\begin{figure}[h]
\centering
\includegraphics[width=16cm]{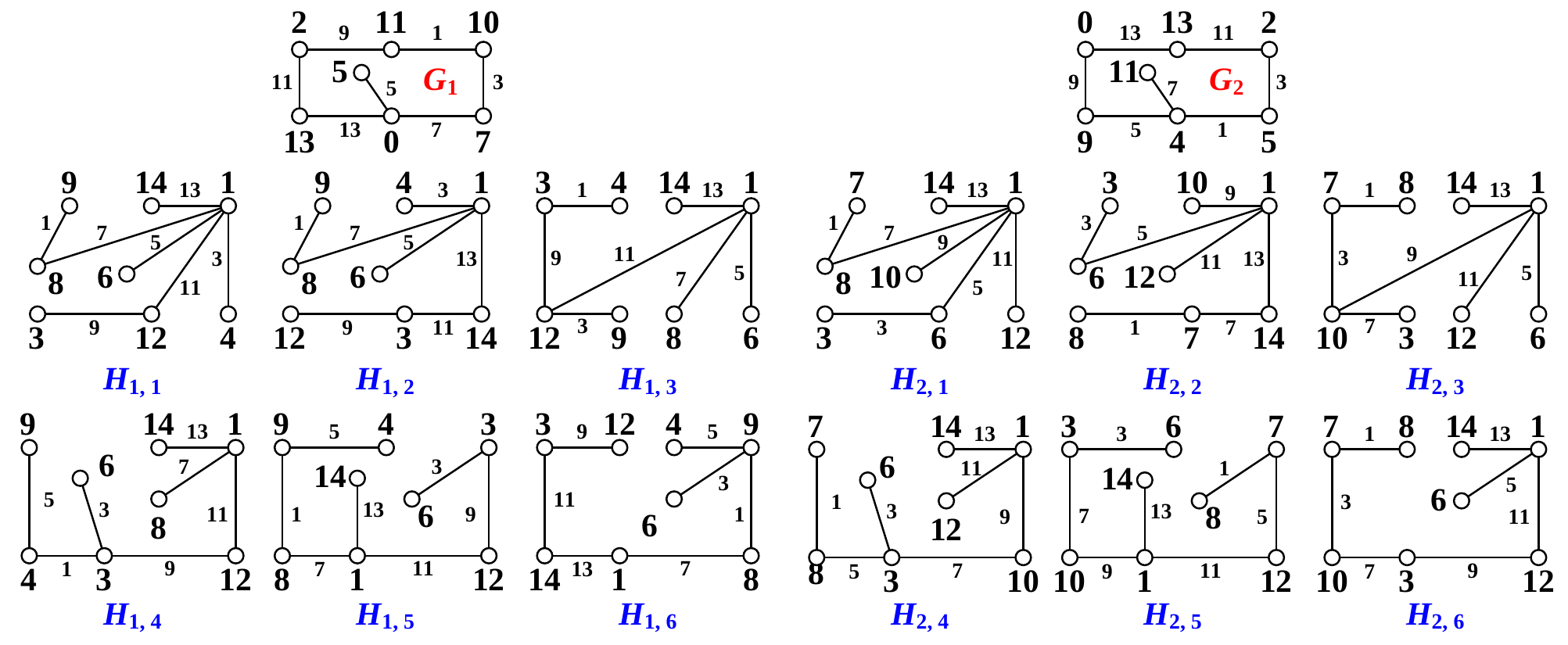}
\caption{\label{fig:twin-odd-graceful-00}{\small Two groups of twin odd-graceful matchings.}}
\end{figure}

\begin{figure}[h]
\centering
\includegraphics[width=16cm]{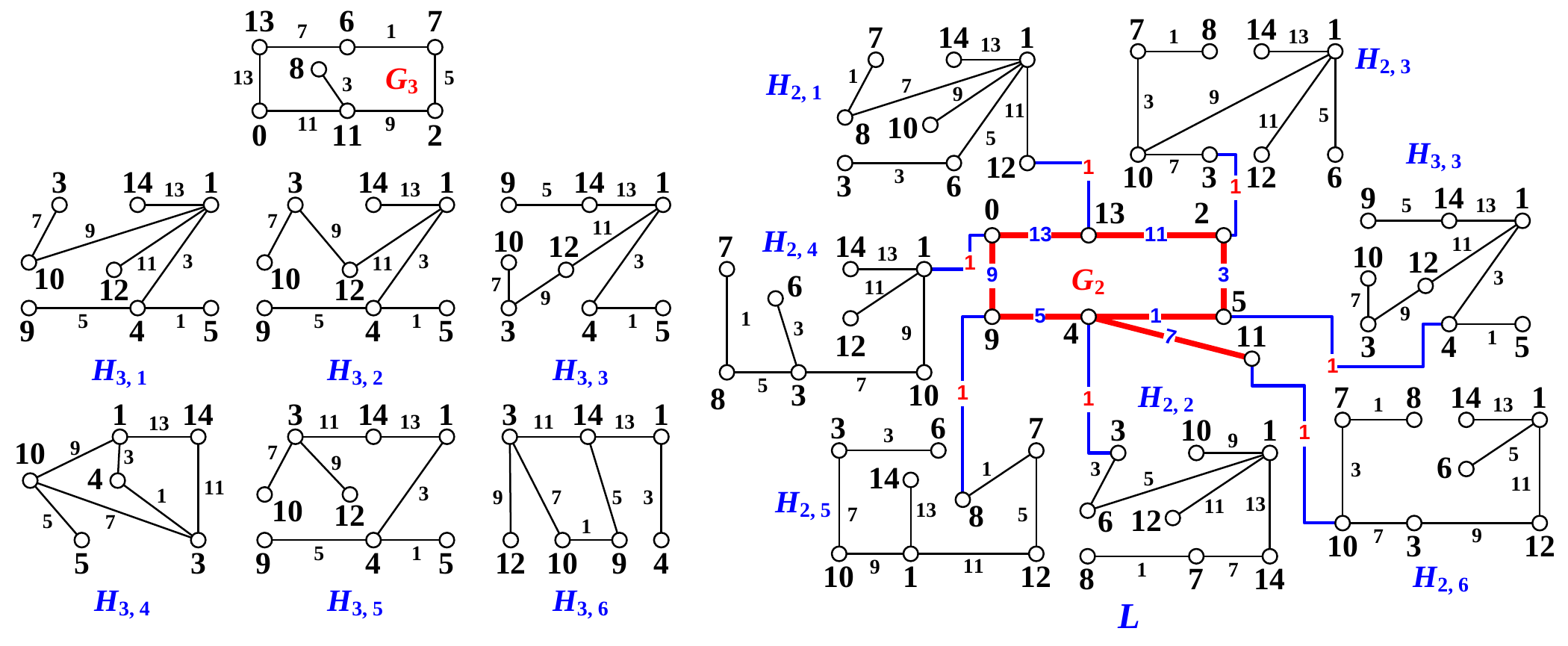}
\caption{\label{fig:twin-odd-graceful-11}{\small A group of twin odd-graceful matchings, and a graph $L=G_t\oslash ^3_{k=1}|^{M_{\textrm{odd}}}_{k,j}a_{k,j}H_{k,j}$.}}
\end{figure}

Let $F_{\textrm{odd}}(G)$ be the set of graphs $G_t$ with $G_t\cong G$ and admitting odd-graceful labellings, and let $\textrm{\textbf{H}}=(H_{i,j})^{M_{\textrm{odd}}}_{i,j}$ be the \emph{base}, where $M_{\textrm{odd}}$ is the number of twin odd-graceful matchings $(G_t,H_{i,j})$ for $G_t\in F_{\textrm{odd}}(G)$. We can join a vertex $x_{t,s}$ of $G_t$ for $s\in [1,|V(G_t)|]$ with a vertex $y^{k}_{i,j}$ of $H_{i,j}$ for $k\in [1,|V(H_{i,j})|]$ by an edge $x_{t,s}y^{k}_{i,j}$, the resultant graph is denoted as $L=G_t\oslash ^3_{k=1}|^{M_{\textrm{odd}}}_{k,j}a_{k,j}H_{k,j}$ with $\sum a_{k,j}=|V(G_t)|$ (see an example shown in Fig.\ref{fig:twin-odd-graceful-11}), and $L$ admits a labelling $h$ defined by $h(w)=f_t(w)$ for each element $w\in V(G_t)\cup E(G_t)$, $h(w)=g_{i,j}(w)$ for each element $w\in V(H_{i,j})\cup E(H_{i,j})$ and $h(x_{t,s}y^{k}_{i,j})=|f_t(x_{t,s})-g_{i,j}(y^{k}_{i,j})|$. We obtain a \emph{twin odd-graceful graphic lattice}
\begin{equation}\label{eqa:twin-odd-graceful-lattice}
\textrm{\textbf{L}}(\textrm{\textbf{H}}\oslash \textbf{F}_{\textrm{odd}}(G)) =\left \{G_t\oslash ^3_{k=1}|^{M_{\textrm{odd}}}_{k,j}a_{k,j}H_{k,j}: a_k\in Z^0, G_t\in F_{\textrm{odd}}(G)\right \}
\end{equation} with $\sum a_{k,j}=|V(G_t)|$, where each matching $(G_t,H_{k,j})$ is a twin odd-graceful matching.

\begin{problem}\label{qeu:especial-total-colorings}
\begin{asparaenum}[\textrm{Twin}-1. ]
\item \textbf{Find} an algorithm for figuring all graphs $H_{i,j}$ of the base $\textrm{\textbf{H}}$, that is, \textbf{find} all twin odd-graceful matchings $(G_t,H_{i,j})$ for each colored graph $G_t\in F_{\textrm{odd}}(G)$, and determine $M_{\textrm{odd}}$.
\item \textbf{Find} the smallest $\sum h(x_{t,s}y^{k}_{i,j})$ in all graphs $L=G_t\oslash ^3_{k=1}|^{M_{\textrm{odd}}}_{k,j}a_{k,j}H_{k,j}$ of a twin odd-graceful graphic lattice $\textrm{\textbf{L}}(\textrm{\textbf{H}}\oslash \textbf{F}_{\textrm{odd}}(G))$.
\end{asparaenum}
\end{problem}

\subsection{Graphic lattice sequences}

Let $F^{(0)}$ be the initial set of graphs. So we have $\textbf{\textrm{L}}^{(1)}(\textbf{\textrm{H}}(\bullet)F^{(0)})=F^{(1)}$, thus, $\textbf{\textrm{L}}^{(2)}(\textbf{\textrm{H}}(\bullet)F^{(1)})=F^{(2)}$, go on in this way, we get a sequence of graphic lattices, denoted as $\{\textbf{\textrm{L}}^{(t)}(\textbf{\textrm{H}}(\bullet)F^{(t-1)})=F^{(t)}\}$, and we call $\{\textbf{\textrm{L}}^{(t)}(\textbf{\textrm{H}}(\bullet)F^{(t-1)})\}$ as a \emph{graphic lattice sequence}.

We see another type of graphic lattice sequences as follows: Let $\textbf{\textrm{T}}^{(t)}=(T^{t}_1,T^{t}_2,\dots T^{t}_n)$, where $T^{t}_j=H_{t-1}(\bullet)^n_{i=1}a_{j,i}T^{t-1}_i$ with $a_{j,i}\in Z^0$ and $H_{t-1}\in F_{p,q}$. We have a graphic lattice sequence $\{\textbf{\textrm{L}}^{(t)}(\textbf{\textrm{T}}^{(t)}(\bullet)F_{p,q})\}$ defined as follows:
\begin{equation}\label{eqa:matching-graphic-lattice-sequence}
{
\begin{split}
\textbf{\textrm{L}}^{(t)}(\textbf{\textrm{T}}^{(t)}(\bullet)F_{p,q})=\{H_{t}(\bullet)^n_{i=1}a_{i}T^{t}_i:a_i\in Z^0,~H_{t}\in F_{p,q}\}.
\end{split}}
\end{equation} with $\sum^n_{i=1}a_{i}\geq 1$. Furthermore, we get
$$H_{t}(\bullet)^n_{j=1}a_{i}T^{t}_i=H_{t}(\bullet)^n_{j=1}a_{i}[H_{t-1}(\bullet)^n_{i=1}a_{j,i}T^{t-1}_i].$$

Clearly, these two graphic lattice sequences $\left \{\textbf{\textrm{L}}^{(t)}(\textbf{\textrm{H}}(\bullet)F^{(t-1)})\right \}$ and $\left \{\textbf{\textrm{L}}^{(t)}(\textbf{\textrm{T}}^{(t)}(\bullet)F_{p,q})\right \}$ form dynamic networks. It may be interesting to connect these two graphic lattices with some topics of researching networks.

\subsection{Planar graphic lattices}

As known, each 4-colorable planar graph $G$ forms an every-zero graphic group $\{F_f(G); \oplus\}$ with $|F_f(G)|=4$. In Figure \ref{fig:4-color-planar-tile-lattice}(a), we tile a colored triangle $T^r_i$ with another colored triangle $T^r_{5-i}$ together by vertex-coinciding an edge $ab$ of $T^r_i$ with an edge $ab$ of $T^r_j$ into one, where $a+b=5$ and $a\neq i$ for $i\in [1,4]$. We use the triangles $T^r_1,T^r_2,T^r_3,T^r_4$ of the every-zero graphic group $\{F_{\textrm{4color}};\oplus\}$ to replace each inner face of a planar graph $H$ having triangular inner faces, such that $T^r_i$ and $T^r_{5-i}$ are tiled correctly, the resulting planar graph $H^*$ is properly colored with four colors, we write $H^*$ as $H^*=H\bigtriangleup ^4_{k=1}a_kT^r_k$ with $a_k\in Z^0$ and $\sum a_k\geq 1$. Let $F_{\textrm{inner}\bigtriangleup}$ be the set of planer graphs having triangular inner faces. Thereby, we get a \emph{planar graphic lattice} as follows:
\begin{equation}\label{eqa:planar-lattice}
{
\begin{split}
\textbf{\textrm{L}}(\textbf{\textrm{T}}^r\bigtriangleup F_{\textrm{inner}\bigtriangleup})=\{H\bigtriangleup ^4_{k=1}a_kT^r_k:a_k\in Z^0,H\in F_{\textrm{inner}\bigtriangleup}\}
\end{split}}
\end{equation} with $\sum^4_{k=1} a_k\geq 1$, where the \emph{planar graphic lattice base} is $\textbf{\textrm{T}}^r=(T^r_1,T^r_2,T^r_3,T^r_4)$. Thereby, each planar graph $G\in \textbf{\textrm{L}}(\textbf{\textrm{T}}^r\bigtriangleup F_{\textrm{inner}\bigtriangleup})$ is a $4$-colorable graph having each inner face to be a triangle.

\begin{figure}[h]
\centering
\includegraphics[width=16.2cm]{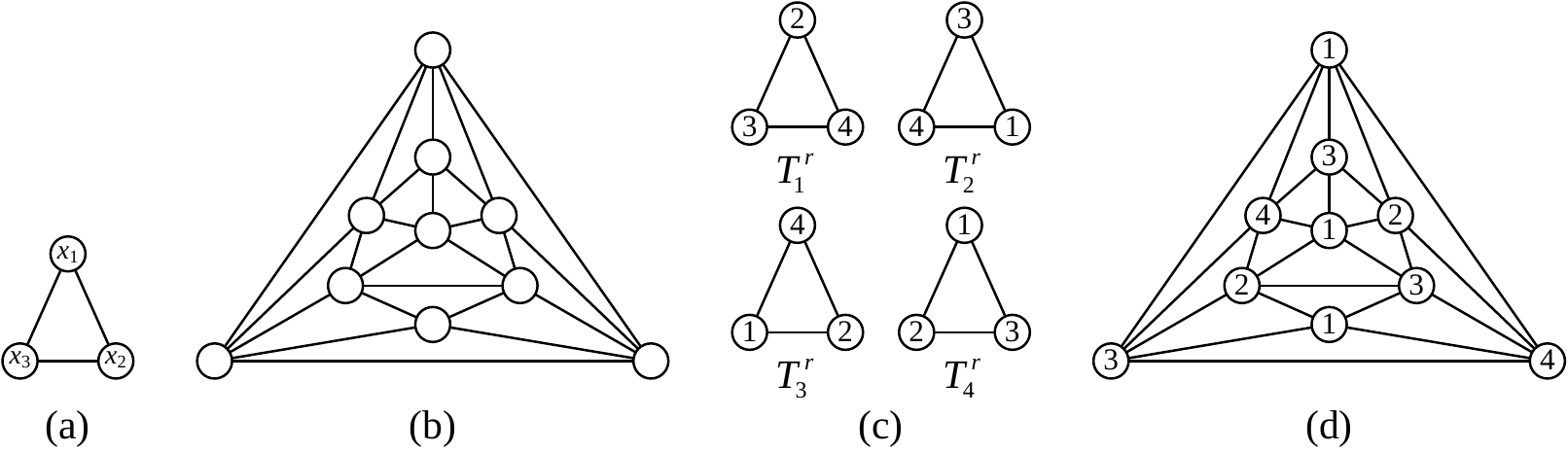}
\caption{\label{fig:4-color-planar-tile-lattice} {\small (a) A triangle; (b) a maximal planar graph $H$; (c) an every-zero graphic group $\{F_{\textrm{inner}\bigtriangleup};\oplus\}$ cited from \cite{YAO-SUN-WANG-SU-XU2018arXiv}; (d) a maximal planar graph $H$ tiled by the every-zero graphic group $\{F_{planar};\oplus\}$ shown in (c).}}
\end{figure}

\begin{conj}\label{conj:velocity}
\cite{YAO-SUN-WANG-SU-XU2018arXiv} A maximal planar graph is 4-colorable if and only it can be tiled by the every-zero graphic group $\{F_{\textrm{inner}\bigtriangleup};\oplus\}$ shown in Fig.\ref{fig:4-color-planar-tile-lattice}(c).
\end{conj}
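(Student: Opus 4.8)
The plan is to prove the two implications of this biconditional separately. The ``if'' direction is essentially bookkeeping built into the definition of the planar graphic lattice (\ref{eqa:planar-lattice}), whereas the ``only if'' direction carries the real content and forces the Four Color Theorem into play; I expect the latter to be where all the difficulty sits.

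\emph{Sufficiency.} Assume a maximal planar graph $H^\ast$ is an element $H\bigtriangleup^4_{k=1}a_kT^r_k$ of the lattice (\ref{eqa:planar-lattice}), so $H$ is a plane graph all of whose inner faces are triangles and each inner face has been filled by one of the four colored triangles $T^r_1,T^r_2,T^r_3,T^r_4$ from the every-zero graphic group $\{F_{\textrm{4color}};\oplus\}$, with the pairing rule that $T^r_i$ and $T^r_{5-i}$ meet correctly along a common edge $ab$ with $a+b=5$. First I would check that the vertex colors carried by two triangles sharing an edge coincide on that edge --- this is exactly the ``tiled correctly'' requirement --- and then propagate this agreement across all faces using the fact that the face-adjacency (dual) graph of a plane triangulation is connected. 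This glues the individual $3$-colorings into one map $c\colon V(H^\ast)\to[1,4]$. Because every edge of a maximal planar graph lies in a triangular face and each $T^r_k$ is a properly $3$-colored $K_3$, the map $c$ is a proper $4$-coloring, hence $H^\ast$ is $4$-colorable.

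\emph{Necessity.} Let $G$ be a maximal planar graph. By the Four Color Theorem it has a proper $4$-coloring $c\colon V(G)\to[1,4]$. Every face of $G$ is a triangle, and properness forces its three vertex colors to be distinct, so each face $F$ omits exactly one color $m(F)\in[1,4]$; assign to $F$ the triangle $T^r_{m(F)}$ with the three vertices identified as dictated by $G$. I would then verify that this face-labelling is a legal instance of (\ref{eqa:planar-lattice}): on each shared edge $uv$ the two incident triangles both carry the colors $c(u),c(v)$, so they agree automatically, and it remains to arrange that the two omitted colors on each shared edge form a pair $\{i,5-i\}$. Here I would exploit the freedom inside the every-zero graphic group: a global permutation of the four color classes is a ``translation'' of $\{F_{\textrm{4color}};\oplus\}$ that sends one tiling-by-the-group to another, and the task is to choose such a translation (or, if a single one does not suffice, to reorganize the coloring by Kempe-chain recolorings) so that the tiling rule holds simultaneously on every edge.

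The main obstacle is precisely this last step of the necessity direction: showing that an arbitrary proper $4$-coloring of a maximal planar graph can be normalized so that its induced face-triangle labelling respects the gluing rule of $\{F_{\textrm{inner}\bigtriangleup};\oplus\}$ on all edges at once; a naive fixed relabelling does not achieve this (already $K_4$ shows that different edges would demand different relabellings), so the group structure together with, most likely, Kempe-type moves must be brought in, and this is the point at which the statement is genuinely as strong as the Four Color Theorem. A small preliminary task is to confirm on the smallest triangulations ($K_4$, the octahedron, stacked triangulations) that the tiling rule as stated, combined with the operation $\oplus$, is flexible enough to realize every $4$-coloring; this also guards against the conjecture being only ``morally'' rather than literally true. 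The treatment of the outer triangular face is a minor additional point, handled by the same local-to-global gluing argument once the interior case is settled.
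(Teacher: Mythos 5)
There is a basic mismatch to flag first: the statement you are proving is presented in this paper as Conjecture~\ref{conj:velocity}, a guess quoted from \cite{YAO-SUN-WANG-SU-XU2018arXiv}; the paper contains no proof of it, so your proposal cannot be measured against an existing argument and must stand on its own as a complete proof. It does not. In the necessity direction you yourself isolate the decisive step --- showing that an arbitrary proper $4$-coloring of a maximal planar graph can be normalized (by a translation inside $\{F_{\textrm{inner}\bigtriangleup};\oplus\}$ or by Kempe-chain recolorings) so that the induced face labelling obeys the gluing rule on \emph{every} edge simultaneously --- and you leave it unresolved, offering only ``most likely, Kempe-type moves must be brought in.'' That step is the entire content of the conjecture; and Kempe-chain arguments in planar triangulations are precisely the device whose failure sank Kempe's original attack on the Four Color Problem, so invoking them as a plan is a restatement of the difficulty, not a route past it. As written, the necessity half is a program, not a proof.

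The sufficiency half is also less routine than you claim, because it rests on a particular reading of ``tiled by the every-zero graphic group'' that the paper does not force. You take the rule to mean that across each shared edge the two triangles are $T^r_i$ and $T^r_{5-i}$ with edge colors summing to $5$; but the paper's own examples in Fig.~\ref{fig:4C-group-new-00} exhibit tilings $L_1$ using only $T^r_3$ and $L_2$ using only $T^r_1$ and $T^r_2$, so adjacent faces need not carry complementary triangles, and indeed the condition you derive (shared edge colors summing to $5$ whenever $T^r_i$ meets $T^r_{5-i}$) is automatic from properness of the vertex coloring. This suggests your formalization of ``tiling'' collapses the conjecture into something either trivially equivalent to $4$-colorability or different from what is intended via the operation ``$\oplus$'' and the lattice (\ref{eqa:planar-lattice}). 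Before either direction can be argued, you need to fix a precise, figure-consistent definition of a tiling by $\{F_{\textrm{inner}\bigtriangleup};\oplus\}$ (including how the group operation enters and how the outer face is treated); with the definition left as in your sketch, even the ``bookkeeping'' direction is not established, and the hard direction is missing its key lemma.
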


\begin{figure}[h]
\centering
\includegraphics[width=16cm]{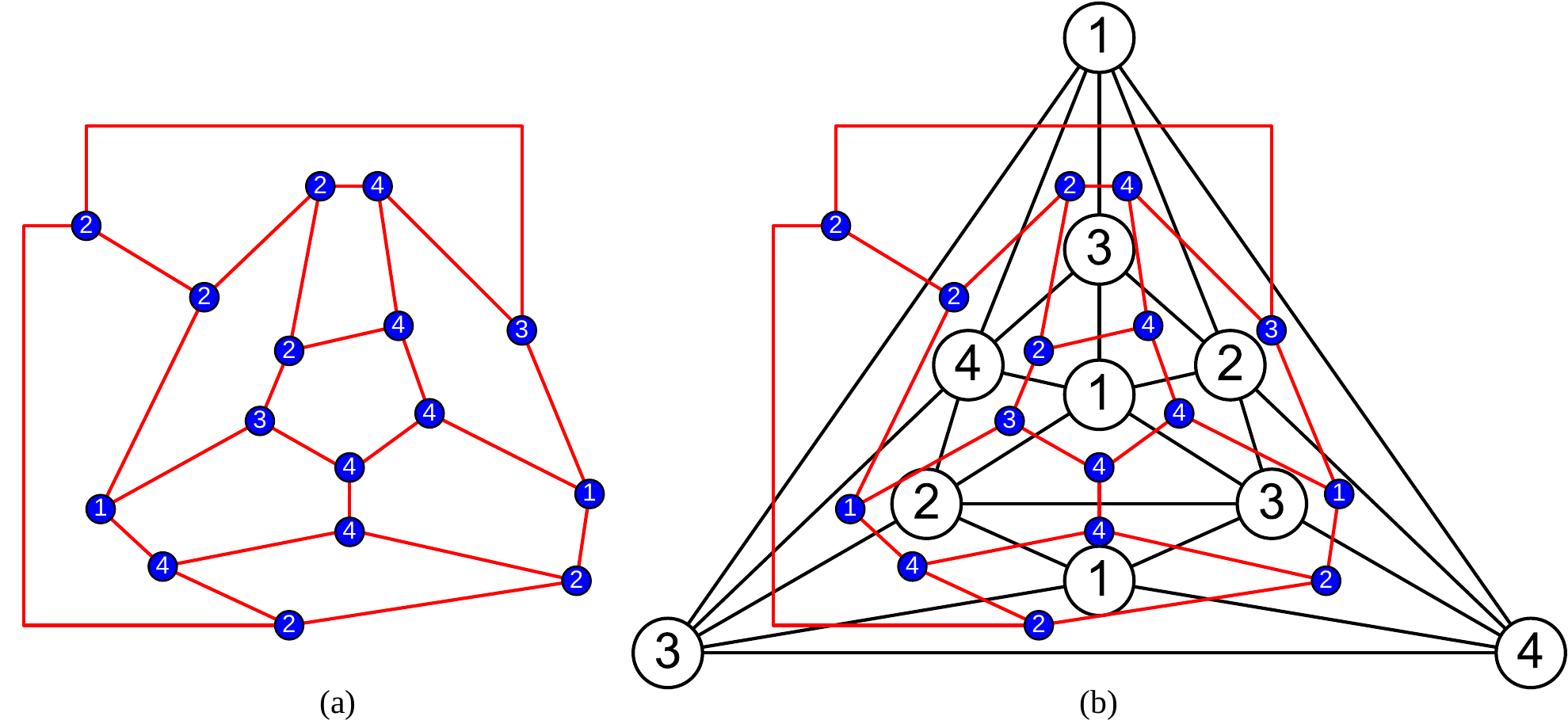}
\caption{\label{fig:4-color-planar-dual-graphs} {\small (a) The dual $H^*$ of a colored maximal planar graph $H$ shown in Fig.\ref{fig:4-color-planar-tile-lattice}(d), and $H^*$ is a 3-regular planar graph; (b) a colored matching $(H, H^*)$.}}
\end{figure}

\begin{problem}\label{qeu:planar-graph}
It may be interesting to consider the following questions:
\begin{asparaenum}[\textrm{4C}-1. ]
\item \textbf{Determine} $\{H\bigtriangleup [F(H)-1]\cdot T^r_k: H\in F_{\textrm{inner}\bigtriangleup}\}$, where $F(H)$ is the face number of $H\in F_{\textrm{inner}\bigtriangleup}$ and a fixed $T^r_k\in \{F_{\textrm{inner}\bigtriangleup};\oplus\}$. In other word, each $H\bigtriangleup [F(H)-1]\cdot T^r_k$ is tiled by one $T^r_k$ only, like $L_1$ shown in Fig.\ref{fig:4C-group-new-00}, so $H$ is 3-colorable. Some results on this question can be founded in \cite{Li-Zhu-Shao-Xu-2016Discrete} and \cite{Li-Zhu-Shao-Xu-2017Computer-Mathematics}.
\item \textbf{Find} conditions for a planar graph $H\in F_{\textrm{inner}\bigtriangleup}$, such that $H$ must be tiled with all elements of the planar graphic lattice base $\textbf{\textrm{T}}^r$ only.
\item \textbf{Estimate} the exact cardinality of a planar graphic lattice $\textbf{\textrm{L}}(\textbf{\textrm{T}}^r\bigtriangleup F_{\textrm{inner}\bigtriangleup})$.
\item For each uncolored planar graph $H\in F_{\textrm{inner}\bigtriangleup}$, \textbf{does} there exist a 4-colorable planar graph $G=T\bigtriangleup ^4_{k=1}a_kT^r_k$ with $a_k\in Z^0$ and $\sum a_k\geq 1$ such that $H\cong G$?\qqed
\item Use the elements of the planar graphic lattice base $\textbf{\textrm{T}}^r$ to tile completely the whole $xOy$-plane, such that the resultant plane, denoted as $P_{\textrm{4C}}$, is $4$-colorable, and the plane $P_{\textrm{4C}}$ contains infinite triangles $T^r_k$ for each $k\in [1,4]$, we call $P_{\textrm{4C}}$ a \emph{$4$-colorable plane}. For any given planar graph $G$, \textbf{prove} $G$ in, or not in one of all $4$-colorable planes of the plane $P_{\textrm{4C}}$. Similarly, we can consider: Any 3-colorable planar graph is in one of all $3$-colorable planes of the plane $P_{\textrm{3C}}$ tiled completely by one element of the planar graphic lattice base $\textbf{\textrm{T}}^r$.
\end{asparaenum}
\end{problem}

Recall Four-Color Problem, we have:

\begin{thm} \label{thm:Four-Color-Problem-equivalent}
\cite{Bondy-2008} The following three assertions are equivalent to each other:

(i) Every planar graph is $4$-vertex colorable, that is, $\chi\leq 4$;

(ii) Every planar graph is $4$-face-colorable; and

(iii) Every simple, $2$-connected and 3-regular planar graph is $3$-edge-colorable.
\end{thm}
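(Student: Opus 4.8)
The three assertions are linked by planar duality together with a group-theoretic encoding of the four colours, and the plan is to prove (i)$\,\Leftrightarrow\,$(ii) by identifying face colourings of a plane graph with vertex colourings of its dual, and (ii)$\,\Leftrightarrow\,$(iii) by \emph{Tait's argument}, which converts edge $3$-colourings into face $4$-colourings of cubic plane graphs via the Klein four-group $V=\mathbb{Z}_2\oplus\mathbb{Z}_2$. A few routine reductions come first. Since a subgraph of a $4$-vertex-colourable graph is again $4$-vertex-colourable and every simple planar graph is a spanning subgraph of a planar triangulation, (i) is equivalent to its restriction to ($3$-connected) planar triangulations; and the plane dual of such a triangulation is precisely a ($3$-connected) simple cubic plane graph, while in a cubic graph ``bridgeless'' and ``$2$-connected'' coincide (a cut vertex of a cubic graph would force a bridge). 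Thus the special graph family in (iii) matches, via duality, with the reduced form of (i) and (ii).

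\smallskip\noindent\emph{Step 1: (i)$\,\Leftrightarrow\,$(ii).} Let $G$ be a connected plane graph with dual $G^{*}$. After deleting loops of $G^{*}$ (which correspond to bridges of $G$, bound only one face, and so are irrelevant to face colouring) and ignoring edge multiplicities, a proper $4$-face-colouring of $G$ is literally the same object as a proper $4$-vertex-colouring of $G^{*}$. Since $G^{*}$ is planar, (i) applied to $G^{*}$ yields a $4$-face-colouring of $G$; this gives (i)$\,\Rightarrow\,$(ii). Conversely, to $4$-vertex-colour a planar graph $G$, fix a plane embedding; the vertices of $G$ are the faces of $G^{*}$, so a $4$-face-colouring of $G^{*}$, which exists by (ii), is a $4$-vertex-colouring of $G$.

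\smallskip\noindent\emph{Step 2: (ii)$\,\Leftrightarrow\,$(iii).} Let $G$ be simple, $2$-connected, cubic and plane; then $G$ is bridgeless, every face is bounded by a cycle, and at each vertex exactly three faces meet, which are pairwise adjacent (consecutive corners share an edge), hence receive distinct colours in any proper face colouring. For (ii)$\,\Rightarrow\,$(iii): given a proper $4$-face-colouring with colour set $V$, colour each edge $e$ by $\varphi(e)=c(F)+c(F')\in V$, where $F,F'$ are the two distinct, differently coloured faces bounding $e$, so $\varphi(e)\in V\setminus\{0\}$; at a vertex with surrounding face-colours $x,y,z$ (pairwise distinct) the incident edges get $x+y$, $y+z$, $z+x$, which are pairwise distinct and nonzero, hence exhaust $V\setminus\{0\}$, so $\varphi$ is a proper $3$-edge-colouring. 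For (iii)$\,\Rightarrow\,$(ii): given a proper $3$-edge-colouring $\varphi\colon E(G)\to V\setminus\{0\}$, the three edge-colours at each vertex sum to $0$ in $V$, so $\varphi$ (with coefficients in $V$, where $-x=x$) is conservative at every vertex, i.e.\ a nowhere-zero $V$-flow; by the flow--tension duality for the connected plane graph $G$ there is a function $c$ on its faces with $\varphi(e)=c(F)+c(F')$ for the faces $F,F'$ bounding $e$, and since $\varphi(e)\neq 0$ this $c$ is a proper $4$-face-colouring. (Concretely: each pair-of-colour-classes subgraph $G_{\{a,b\}}$, $G_{\{b,c\}}$ is $2$-regular, hence a disjoint union of cycles, and one may colour a face $F$ by the pair of parities of the numbers of cycles of $G_{\{a,b\}}$ and of $G_{\{b,c\}}$ that enclose $F$.) Finally, combining Step~1 with the reductions upgrades ``(ii) for bridgeless cubic plane graphs'' to ``(ii) for all planar graphs'', closing the cycle of implications.

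\smallskip\noindent\emph{Main obstacle.} The substantive point is the direction (iii)$\,\Rightarrow\,$(ii): passing from the purely local edge $3$-colouring to a \emph{globally consistent} face $4$-colouring. Whichever route one uses, one must verify the consistency/well-definedness condition --- around every face the incident edge-colours sum to $0$ in $V$, equivalently every face is enclosed by an even number of cycles of each mismatched type --- and this is exactly where planarity and $3$-regularity enter; $2$-connectedness (equivalently, for cubic graphs, bridgelessness) is needed so that each edge borders two distinct faces and each face boundary is a cycle. The remaining work --- suppressing bridges, handling loops and multiple edges in duals, enlarging to triangulations so the graph families line up --- is routine but must be done carefully so that the three statements concern compatible classes of graphs.
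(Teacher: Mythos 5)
The paper does not prove this statement at all — it is quoted verbatim from the cited reference \cite{Bondy-2008} as a classical fact, so there is no in-paper argument to compare against. Your proposal is the standard (and correct) proof of exactly that classical equivalence: planar duality identifying face colourings of $G$ with vertex colourings of $G^{*}$ for (i)$\Leftrightarrow$(ii), and Tait's Klein-four-group argument (edge colours as sums of face colours, respectively the parity/nowhere-zero-flow argument for the converse) for (ii)$\Leftrightarrow$(iii), with the closing of the cycle correctly routed through maximal planar triangulations and their simple $3$-connected cubic duals, where your observation that bridgeless and $2$-connected coincide for cubic graphs makes the graph classes match.
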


By means of computer, Appel, Haken, Koch and Bull, in 1976, proved Four-Color Theorem (4CT); and furthermore Robertson, Sanders, Seymour and Thomas in 1997, proved it again. Unfortunately, no one find a simpler proof of 4CT that can be
written in a fewer papers. Kauman and Saleur \cite{Louis-Kauman-Saleur-1993} pointed out: ``\emph{While it has sometimes been said that the four color problem is an isolated problem in mathematics, we have found that just the opposite is the case. The Four-Color Problem is central to the intersection of algebra, topology, and statistical mechanics}.''

\begin{figure}[h]
\centering
\includegraphics[width=16.2cm]{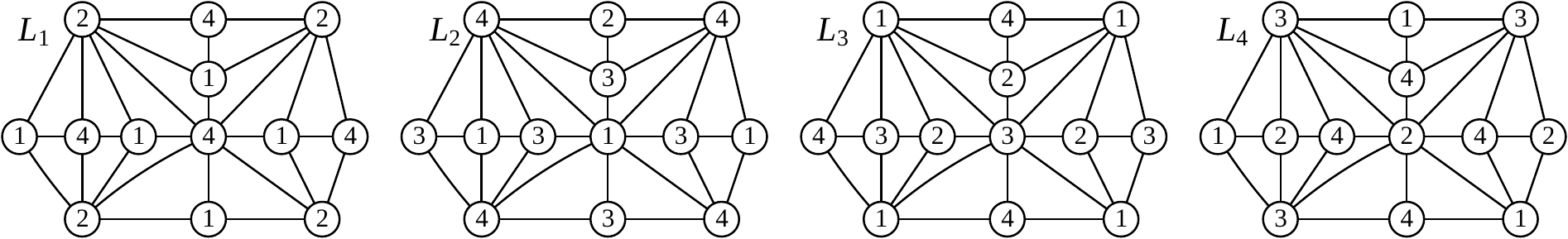}
\caption{\label{fig:4C-group-new-00}{\small According to an every-zero graphic group $\{F_{\textrm{inner}\bigtriangleup};\oplus\}$ shown in Fig.\ref{fig:4-color-planar-tile-lattice}(c): $L_1$ is tiled by $T^r_3$ only; $L_2$ is tiled by $T^r_1$ and $T^r_2$; $T^r_1$ does not appear in $L_3$; and each of $\{F_f(\Delta);\oplus\}$ is in $L_4$.}}
\end{figure}

\subsection{Graphic lattices made by graph labellings}

We focus on some particular colored graphic lattices defined in (\ref{eqa:graphic-lattice-colored}). A colored tree lattice
$\textbf{\textrm{L}}(\textbf{\textrm{H}}^c\odot F^c_{n})=\{T\odot |^n_{i=1}H_i:~T\in F^c_{n}\}$
is made by trees $T\odot |^n_{i=1}H_i$, where each $T$ is a tree/forest with $p~(\leq n)$ vertices and admitting a $W$-type labelling $g_T$, and each base graph-vector $H_i\in F^c_{n}$ is a tree with $i\in [1,n]$, as well as the lattice base $\textbf{H}^c$ formed by $n$ linearly independent disjoint graphs $H_1,H_2,\dots ,H_n$ admits a flawed $W$-type labelling $f_{H}$. So, each tree $T\odot |^n_{i=1}H_i$ admits a labelling $h=g_T\odot f_{H}$. One want to know the labelling $h$ is one of well-defined labellings in \cite{Gallian2019, Yao-Sun-Zhang-Mu-Sun-Wang-Su-Zhang-Yang-Yang-2018arXiv, Yao-Zhang-Sun-Mu-Sun-Wang-Wang-Ma-Su-Yang-Yang-Zhang-2018arXiv, Yao-Mu-Sun-Sun-Zhang-Wang-Su-Zhang-Yang-Zhao-Wang-Ma-Yao-Yang-Xie2019}.

\subsubsection{Graphic lattices on felicitous labellings}

In \cite{Zhang-Yao-Wang-Wang-Yang-Yang-2013}, the authors have shown a felicitous graphic lattice $\{T \odot |^n_{i=1}H_i\}$ with each $T$ of $n$ vertices admitting a set-ordered felicitous labelling and the graphic lattice base $\textbf{H}^c=(H_1,H_2,\dots ,H_n)$ admitting a flawed felicitous labelling by Definition \ref{defn:optimal-set-ordered-felicitous}, Lemma \ref{thm:felicitous-links-identify} and Theorem \ref{thm:felicitous-graphic-lattice} shown in the following:

\begin{defn} \label{defn:optimal-set-ordered-felicitous}
Let $(X, Y)$ be the bipartition of a bipartite $(p, q)$-graph $G$. If $G$ admits a felicitous labelling $f$ such that $\max\{f(x):x\in X\}<b=\min\{f(y):y\in Y\}$, then we call $f$ a \emph{set-ordered felicitous labelling} and $G$ a \emph{set-ordered felicitous graph}, and moreover $f$ is called an \emph{optimal set-ordered felicitous labelling} if $f(V(G))\subseteq [b,b+q-1]$ and $f(E(G))(\bmod~q)=[0,q-1]$.\qqed
\end{defn}

\begin{lem} \label{thm:felicitous-links-identify}
Let $T$ be a tree of $n$ vertices admitting a set-ordered felicitous labelling $g$ and let $G$ be a connected $(p,q)$-graph admitting an optimal set-ordered felicitous labelling (see Definition \ref{defn:optimal-set-ordered-felicitous}). Then we have at least a graph of $\{T \odot |^n_{i=1}H_i\}$ admitting a felicitous labelling, where $H_i\cong G$ with $i\in [1,n]$, and $\textbf{H}^c=(H_1,H_2,\dots ,H_n)$ admits a flawed felicitous labelling induced by the optimal set-ordered felicitous labelling.
\end{lem}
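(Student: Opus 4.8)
The plan is to construct explicitly a felicitous labelling of some graph in the set $\{T\odot|^n_{i=1}H_i\}$ by stitching together the set-ordered felicitous labelling $g$ of the tree $T$ and the optimal set-ordered felicitous labelling $\phi$ of the connected $(p,q)$-graph $G$, using the vertex-coinciding operation at carefully chosen vertices. First I would fix notation: let $(X,Y)$ be the bipartition of $T$ with $X=\{x_1,\dots,x_s\}$, $Y=\{y_1,\dots,y_t\}$, $s+t=n$, and $g$ set-ordered so that $g(X)=\{0,1,\dots,s-1\}$ and $g(Y)=\{s-1, s, \dots, n-1\}$ (the standard normal form of a set-ordered felicitous labelling, where the edge labels $g(uv)=g(u)+g(v)\ (\bmod\ n-1)$ run over $[0,n-2]$). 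Similarly normalise $\phi$ on each copy $H_i\cong G$ so that its vertex values lie in an interval $[b_i,b_i+q-1]$ with the low block below the high block and the edge labels, taken mod $q$, sweeping $[0,q-1]$.

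Next I would choose, for each $i\in[1,n]$, the vertex $v_i$ of $H_i$ at which $H_i$ is vertex-coincided with the $i$-th vertex of $T$. The key idea is: pick $v_i$ to be a vertex of $H_i$ realizing the extreme value of $\phi$ on its side of the bipartition of $G$ (the maximum on the $X$-side copies, the minimum on the $Y$-side copies), so that after a uniform shift $w_i(\cdot)=\phi(\cdot)+c_i$ applied to the labels of $H_i$ the coincided vertex receives exactly the tree-value $g$ of the corresponding vertex of $T$, i.e. $c_i$ is forced by $w_i(v_i)=g(\text{$i$-th vertex of }T)$. With the shifts so chosen, define $h$ on $T\odot|^n_{i=1}H_i$ by $h=g$ on $V(T)\cup E(T)$ and $h=w_i$ on $(V(H_i)\cup E(H_i))\setminus\{v_i\}$; consistency at the coinciding vertices holds by construction. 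Then I would verify the three defining properties of a felicitous labelling for $h$: (1) injectivity of $h$ on $V(T\odot|^n_{i=1}H_i)$ — this follows because the shifted intervals $[b_i+c_i,b_i+c_i+q-1]$ can be arranged consecutively (the $X$-block copies stacked left of $b$, the $Y$-block copies stacked right), exploiting the gap structure that the set-ordered property of $g$ guarantees; (2) the total number of edges is $q_T+n q = (n-1)+nq$, and I must check the induced edge labels $h(uv)=h(u)+h(v)\ (\bmod\ (n-1)+nq)$ are all distinct — here the tree contributes the residues coming from $g$, and each $H_i$ contributes a translate of $[0,q-1]$, and the shifts $c_i$ are designed (this is where the \emph{optimal} hypothesis $\phi(E)\ (\bmod\ q)=[0,q-1]$ is used) so that the $n+1$ resulting blocks of residues tile $[0,(n-1)+nq-1]$ without overlap; (3) $h$ maps vertices into $[0, q_{\text{total}}]$.

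The main obstacle is step (2), the disjointness of the edge-label blocks: one must show that the arithmetic-progression-like blocks of edge sums contributed by $T$ and by the $n$ copies of $G$, each reduced modulo the large modulus $(n-1)+nq$, fit together exactly. This is precisely the calculation that forces the specific choice of coinciding vertices $v_i$ and of the shifts $c_i$, and it is the reason the hypotheses must be \emph{set-ordered} (for $g$) and \emph{optimal set-ordered} (for $\phi$) rather than merely felicitous; I expect it to reduce, after unwinding the normal forms, to checking that $g(u)+g(v)$ over $E(T)$ together with $\{g(\text{$i$-th vertex})\cdot(\text{something}) + [0,q-1]\}$ partition the residue system, which is a finite bookkeeping argument but must be done with care about the wrap-around at the modulus. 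Finally I would note that since the construction yields at least one admissible choice of the $v_i$'s and $c_i$'s, the conclusion ``at least a graph of $\{T\odot|^n_{i=1}H_i\}$ admits a felicitous labelling'' follows, and the induced labelling on the disjoint union $\bigcup^n_{i=1}H_i$ (forgetting $T$) is by definition a flawed felicitous labelling, giving the last assertion of the lemma.
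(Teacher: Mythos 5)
First, a caveat: the paper does not actually prove this lemma — it is quoted from \cite{Zhang-Yao-Wang-Wang-Yang-Yang-2013} — so I can only judge your proposal on its own merits, not against an in-paper argument.

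There is a genuine flaw in the core of your construction, namely the decision to keep $h=g$ on $V(T)\cup E(T)$ and to force each shift by $w_i(v_i)=g(\text{$i$-th vertex of }T)$. A felicitous labelling of $T\odot |^n_{i=1}H_i$ must place $np$ distinct vertex labels inside $[0,(n-1)+nq]$, so the $n$ copies of $G$ (each occupying a label window of length about $q$) must be spread across essentially the whole interval; consequently the coincided vertices — which are vertices of $T$ — must themselves carry labels spread across the whole interval. Your constraint pins every coincided vertex to a value of $g$, i.e. to $[0,n-1]$. For a copy attached to an $X$-vertex $x_i$ (with $v_i$ the $\phi$-maximum) the forced shift is $c_i=g(x_i)-\max\phi$, so the other labels of that copy fall in roughly $[g(x_i)-(q-1),\,g(x_i)]$: they are negative as soon as $q>g(x_i)$, and even if one reduces them modulo $(n-1)+nq$ the windows of different $X$-side copies overlap almost entirely (they differ only by the small quantities $g(x_i)-g(x_j)$), so vertex injectivity collapses; the corresponding edge sums likewise pile up in a short range and collide with each other and with the tree-edge sums $g(u)+g(v)\in[s,\,n+s-2]$. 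In other words, your two requirements — "the shifted windows are stacked consecutively to fill $[0,(n-1)+nq]$" and "every coincided vertex keeps its tree value $g\in[0,n-1]$" — are mutually exclusive, and the second one is what breaks the proof. The construction that does work (and is the one behind the cited source, compare also the analogous Theorem \ref{thm:main-theorem3}) abandons $g$ as the actual labelling of $T$: the copies are stacked across the whole interval, the coincided vertices take the (large) labels of the shifted copies, and the set-ordered structure of $g$ is used only to decide the order of attachment and the offsets so that the tree-edge sums — now sums of large labels — land exactly in the residues missed by the copies.

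A secondary point: even inside your own frame, the decisive step (that the $n+1$ families of edge sums are pairwise distinct modulo $(n-1)+nq$) is deferred as "finite bookkeeping with care about wrap-around." That verification is precisely the content of the lemma — it is where the hypotheses "set-ordered" and "optimal" are actually consumed — so the proposal would be incomplete even if the framework were sound; as it stands, the framework itself has to be changed before that bookkeeping can succeed.
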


\begin{thm} \label{thm:felicitous-graphic-lattice}
If a tree $T$ of $n$ vertices admits a set-ordered felicitous labelling, and $\textbf{H}^c=(H_1$, $H_2$, $\dots $, $ H_n)$ admits a flawed felicitous labelling, then $\{T \odot |^n_{i=1}H_i\}$ contains graphs admitting felicitous labellings (see \cite{Gallian2019}).
\end{thm}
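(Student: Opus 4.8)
\textbf{Proof plan for Theorem \ref{thm:felicitous-graphic-lattice}.}
The plan is to reduce the statement to Lemma \ref{thm:felicitous-links-identify} by showing that the generic hypothesis ``$\textbf{\textrm{H}}^c=(H_1,H_2,\dots,H_n)$ admits a flawed felicitous labelling'' can, after a relabelling of the components, be brought into the normalized form used in that lemma, namely a disjoint union whose induced labelling on each component is an optimal set-ordered felicitous labelling. First I would unpack Definition \ref{defn:flawed-odd-graceful-labelling} applied to the felicitous case: $\bigcup_{i=1}^n H_i$ admits a flawed felicitous labelling means there is a non-empty edge set $E^*$ such that $E^*+\bigcup_{i=1}^n H_i$ is connected and carries an honest felicitous labelling $f_H$. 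The key observation is that the vertex labels of $f_H$ restricted to each $H_i$ still form a set of labels that behaves felicitously with respect to the edges of $H_i$ (the edge-label formula $f(uv)=f(u)+f(v)\ (\bmod\ q)$ is local), and that the total set of induced edge labels over all components together with those on $E^*$ is a permutation of $[0,q-1]$ where $q=|E(T\odot|^n_{i=1}H_i)|$.

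The second step is the construction of the target graph. Given that $T$ admits a set-ordered felicitous labelling $g$ with bipartition $(X,Y)$ and $\max g(X)<\min g(Y)$, I would place a copy $H_i$ at the $i$-th vertex $x_i$ of $T$ (ordering $V(T)=\{x_1,\dots,x_n\}$ so that the $X$-vertices come first), and perform the vertex-coinciding operation $T\odot|^n_{i=1}H_i$ by identifying a chosen vertex $v_i\in V(H_i)$ with $x_i$. The coloring on the result is $h=g\odot f_H$ exactly as described before the theorem: $h$ agrees with $g$ on the $T$-part and with the (shifted) copy of $f_H$ on each $H_i$-part, with the identified vertex receiving the common color. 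The arithmetic I need to check is that, after translating the vertex labels of the $i$-th copy by an appropriate offset depending on $g(x_i)$ and on whether $x_i\in X$ or $x_i\in Y$, (i) the vertex labels remain in $[0,q-1]$, (ii) distinct vertices of the composite graph get labels that are compatible with felicitousness (note felicitous labellings need not be injective on vertices in general, but the set-ordered hypothesis keeps the $X$-block and $Y$-block separated), and (iii) the induced edge labels — those inside each $H_i$, those inside $T$, and there are none ``between'' in this $\odot$-construction since edges are protected — together run over $[0,q-1]$ bijectively modulo $q$.

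The main obstacle I expect is step (iii): verifying that the offsets can be chosen simultaneously for all $n$ copies so that the edge-label multiset is exactly $[0,q-1]$ with no collisions. This is where the set-ordered property of $g$ does the real work — it forces the vertices of $T$ (hence the copies of $H_i$) to be laid out in a monotone block structure, so the $q_i$ edge labels contributed by $H_i$ occupy a contiguous window $[c_i,c_i+q_i-1]\ (\bmod\ q)$, and these windows tile $[0,q-1]$ as $i$ ranges over $[1,n]$ and as one adds the $n-1$ edge labels of $T$ itself. Concretely, if $G$ (the common shape of the $H_i$ in Lemma \ref{thm:felicitous-links-identify}, or here the flawed-felicitous structure of $\textbf{\textrm{H}}^c$) is equipped with an optimal set-ordered felicitous labelling so that its edge labels are $[0,q_G-1]\ (\bmod\ q_G)$ and its vertex labels sit in an interval $[b,b+q_G-1]$, then translating by multiples of $q_G$ dictated by $g$ produces disjoint residue windows. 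I would carry this out first for the case $H_i\cong G$ for all $i$ (which is precisely Lemma \ref{thm:felicitous-links-identify} and can be invoked directly), and then indicate that the flawed felicitous labelling of a general $\textbf{\textrm{H}}^c$ furnishes, component by component, exactly the interval/residue data needed to run the same offset argument, so that at least one choice of identification vertices $v_i$ and one choice of $\odot$-assembly yields a felicitous labelling on $T\odot|^n_{i=1}H_i$. The remaining routine verifications (properness is automatic since we only ever identify equally-colored vertices, and the modular edge-label count) I would leave to the reader, citing \cite{Zhang-Yao-Wang-Wang-Yang-Yang-2013} and \cite{Gallian2019}.
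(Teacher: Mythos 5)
The paper itself offers no proof of Theorem \ref{thm:felicitous-graphic-lattice}: it is imported from \cite{Zhang-Yao-Wang-Wang-Yang-Yang-2013}, with Definition \ref{defn:optimal-set-ordered-felicitous} and Lemma \ref{thm:felicitous-links-identify} stated as the supporting ingredients, so your attempt can only be judged on its own merits. Judged that way, there is a genuine gap at the very first step: you claim that a general flawed felicitous labelling of $\textbf{\textrm{H}}^c=(H_1,\dots,H_n)$ can, ``after a relabelling of the components,'' be normalized so that each $H_i$ carries an optimal set-ordered felicitous labelling, and you then reduce to Lemma \ref{thm:felicitous-links-identify}. Nothing in Definition \ref{defn:flawed-odd-graceful-labelling} (adapted to the felicitous case) supports this. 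A flawed felicitous labelling is one global felicitous labelling of the connected graph $E^*+\bigcup_{i=1}^n H_i$; its restriction to a component $H_i$ need not be set-ordered (the $H_i$ need not even be bipartite), its vertex labels need not occupy an interval $[b_i,b_i+q_i-1]$, and its edge labels need not form a contiguous residue window modulo $q_i$ --- and the lemma moreover requires $H_i\cong G$ for a single $G$, which the theorem does not assume. So the ``interval/residue data'' your offset-and-window argument needs is exactly what a general flawed felicitous labelling fails to provide; the step you defer (``indicate that the flawed felicitous labelling \dots furnishes \dots exactly the \dots data'') is the whole content of the theorem beyond the lemma, not a routine extension of it.

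The natural direction of argument, suggested by the paper's own phrasing (the lemma speaks of a flawed felicitous labelling \emph{induced by} the optimal set-ordered labelling) and by how the coloring $h=g\odot f$ is described before the theorem, is the opposite of yours: keep the flawed labelling's values on $\bigcup H_i$ unchanged, and use the set-ordered felicitous labelling of $T$ to choose which vertex of each $H_i$ is coincided with which vertex of $T$, so that the $n-1$ tree edges of $T\odot|^n_{i=1}H_i$ reproduce precisely the edge labels that the connecting set $E^*$ carried in the flawed labelling. That matching step --- showing the tree edges can be made to fill exactly the ``flaws'' --- is what must be verified, and your proposal never engages with it. Two further slips compound this: under the paper's conventions a labelling (hence a felicitous labelling) is vertex-injective, so your parenthetical that felicitous labellings need not be injective on vertices is wrong, and vertex-injectivity of the combined labelling on $T\odot|^n_{i=1}H_i$ is an additional constraint you cannot wave away; and your edge count only balances if $|E^*|=n-1$, a point the offset argument silently presupposes but the definition of a flawed labelling does not guarantee without comment.
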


\subsubsection{Graphic lattices on edge-magic and anti-edge-magic total labellings}

\begin{defn} \label{defn:edge-magic-total-labelling}
\cite{Gallian2019} For a $(p,q)$-graph $G$, if there exist a constant $\lambda$ and a bijection $f:V(G)\cup
E(G)\rightarrow [1, p+q]$ such that $f(u)+f(v)+f(uv)=\lambda$ for every edge $uv\in E(G)$, then $f$ is called an \emph{edge-magic total labelling} and $\lambda$ a \emph{magic constant}.\qqed
\end{defn}

\begin{defn} \label{defn:super-set-ordered-edge-magic-total-labelling}
\cite{Wang-Yao-Yao-2014Information} Let $G$ be a bipartite graph with bipartition $(X, Y)$, and let $G$ admit an edge-magic total labelling $f$. There are two constraints:

(C1) $f(V(G))=[1,p]$; and

(C2) $\max \{f(x): x\in X\}<\min\{f(y): y\in Y\}$ (denoted as $f(X)<f(Y)$).\\
We call $f$ a \emph{super edge-magic total labelling} of $G$ if $f$ holds (C1), and $f$ a \emph{set-ordered edge-magic total labelling} of $G$ if $f$ holds (C2), and $f$ a \emph{super set-ordered edge-magic total labelling}
(super-so-edge-magic total labelling) of $G$ if $f$ holds both (C1) and (C2) true.\qqed
\end{defn}

\begin{defn} \label{defn:generalized-edge-magic-total-labelling}
\cite{Wang-Yao-Yao-2014Information} Let $G$ be a $(p,q)$-graph. If there exist a constant $\mu$ and a mapping $f:V(G)\cup E(G)\rightarrow [1, 2q+1]$ such that $f(u)+f(v)+f(uv)=\mu$ for every edge $uv\in E$, then we say $f$ a \emph{generalized edge-magic total labelling} of $G$, $\mu$ a \emph{generalized magic constant}. Furthermore, if $G$ is a bipartite graph with bipartition $(X,Y)$, and $f$ holds $f(V(G))=[1,q+1]$ and $f(X)<f(Y)$, we call $f$ a \emph{generalized super set-ordered edge-magic total labelling}.\qqed
\end{defn}

Suppose that the bipartition $(X,Y)$ of a tree $T$ of vertices $x_1,x_2,\dots,x_n$ holds $\big| |X|-|Y|\big |\leq 1$ true, and each $H_i$ of disjoint graphs $H_1, H_2,\dots ,H_n$ is a bipartite graph with its bipartition $(X_i,Y_i)$ holding $|X_i|=s$ and $|Y_i|=t$ for two constants $s,t$ and $i\in [1,n]$. We get a graph $T\odot |^n_{i=1}H_i$ obtained by vertex-coinciding a vertex of $H_i$ with the vertex $x_i$ of $T$ into one, so there is a set $\{T\odot |^n_{i=1}H_i\}$ containing at least $(s+t)^n$ graphs of the form $T\odot |^n_{i=1}H_i$. Wang \emph{et al.} in \cite{Wang-Yao-Yao-2014Information} have shown:

\begin{thm} \label{thm:main-theorem3}
\cite{Wang-Yao-Yao-2014Information} If $T$ admits a set-ordered graceful labelling, each $H_i$ admits a (generalized) super set-ordered edge-magic total labelling (see Definition \ref{defn:generalized-edge-magic-total-labelling}), then $\bigcup^n_{i=1} H_i$ (also $\textbf{H}^c=(H_1,H_2,\dots ,H_n)$) admits a flawed (generalized) super edge-magic total labelling. There exists at least a graph $G$ of $\{T\odot |^n_{i=1}H_i\}$ admits a (generalized) super edge-magic total labellings, and moreover $G$ admits a super edge-magic total labelling if $G$ is a tree.
\end{thm}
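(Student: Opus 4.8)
The plan is to reduce everything to a single construction: produce a (generalized) super edge-magic total labelling $f$ of one graph $G=T\odot |^n_{i=1}H_i$ in the family $\{T\odot |^n_{i=1}H_i\}$; the flawed (generalized) super edge-magic total labelling of $\bigcup^n_{i=1}H_i$ is then just $f$ read on the $n$ components, with the $n-1$ edges of $T$ (carried over through the coincidences) serving as the extra edge set $E^*$ in Definition \ref{defn:flawed-odd-graceful-labelling}. The last clause is then free: when $G$ is a tree (equivalently, each $H_i$ is a tree), $|V(G)|+|E(G)|=2|E(G)|+1$, so the ambient range $[1,2q+1]$ of Definition \ref{defn:generalized-edge-magic-total-labelling} equals $[1,p+q]$ and $f(V(G))=[1,|V(G)|]$ already makes $f$ a genuine super edge-magic total labelling.

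First I would normalize the inputs. A set-ordered graceful labelling $\theta$ of $T$ may be taken with $\theta(X)=[0,a-1]$ and $\theta(Y)=[a,n-1]$, $a=|X|$; after re-indexing so that $\theta(x_i)=i-1$, the graceful condition becomes exactly $\{\theta(x_iy_j):x_iy_j\in E(T)\}=[1,n-1]$. I would then recall (re-deriving it in a line) the standard fact that the affine change $u\mapsto a-\theta(u)$ on $X$ and $v\mapsto \theta(v)+1$ on $Y$ turns $\theta$ into a set-ordered super edge-magic total labelling $\phi_T$ of $T$, with $\phi_T(X)=[1,a]$, $\phi_T(Y)=[a+1,n]$, and $\{\phi_T(u)+\phi_T(v):uv\in E(T)\}$ a run of $n-1$ consecutive integers. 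Likewise each $H_i$ comes with a super set-ordered edge-magic total labelling $g_i$ with $g_i(X_i)=[1,s]$, $g_i(Y_i)=[s+1,s+t]$, $g_i$ on $E(H_i)$ occupying the top block of values, and a magic constant $\lambda_i$; I would also assume (it is implicit in the set-ordered regime, and cleanest if the $H_i$ are mutually isomorphic) a common edge number $q$ and, after a harmless shift, a common $\lambda$.

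The construction itself: choose the coincidence so that a vertex $x_i$ lying in $X$ of $T$ is identified with the vertex of $X_i$ having $g_i$-value $1$, and one lying in $Y$ of $T$ with the vertex of $Y_i$ having $g_i$-value $s+1$; this makes $G$ bipartite with classes $\bigcup_i X_i$ and $\bigcup_i Y_i$, of sizes $ns$ and $nt$. Distribute the $n(s+t)$ vertex labels ``digit-wise'': a vertex $v$ in the $X$-part of the block at tree-vertex $w$ with $g$-value $\ell$ gets $f(v)=(\ell-1)n+\phi_T(w)$, and one in the $Y$-part with $g$-value $\ell$ gets $f(v)=ns+(\ell-s-1)n+\psi(w)$, where $\psi$ is a second index-function on $V(T)$ to be chosen. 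Then $f$ bijects $V(G)$ onto $[1,n(s+t)]$ with the first class onto $[1,ns]$; the edge labels are forced by $f(uv)=\Lambda-f(u)-f(v)$; and a short computation shows that inside a block $f$ runs through an arithmetic progression of step $n$ whose offset depends on $w$ only through $\phi_T(w)+\psi(w)$, while on a tree edge $x_iy_j$ one gets $f(x_iy_j)=\Lambda'-\phi_T(x_i)-\psi(y_j)$ with $\Lambda'$ a fixed shift of $\Lambda$. With $\psi$ and then $\Lambda$ chosen correctly, the $n-1$ tree-edge labels form a run of $n-1$ consecutive integers — this is where gracefulness of $\theta$ enters, via $\{\phi_T(u)+\phi_T(v)\}$ being consecutive — and they slot into exactly the $n-1$ gaps left by the union of the within-block step-$n$ progressions, so $f(E(G))$ is the interval of $|E(G)|$ integers immediately above $[1,n(s+t)]$. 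Verifying $f(u)+f(v)+f(uv)=\Lambda$ then splits into the within-block edges (using $\lambda$) and the tree edges (using the run property of $\phi_T$).

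The hard part will be choosing $\phi_T,\psi$ and the coincidence positions so that \emph{simultaneously} the cross-block sums $\phi_T(x_i)+\psi(y_j)$ over $x_iy_j\in E(T)$ form $n-1$ consecutive integers and the same-block quantities $\phi_T(w)+\psi(w)$ over $w\in V(T)$ are distinct and spread so that the within-block progressions tile an interval with precisely $n-1$ holes. These requirements pull against each other: taking $\psi=\phi_T$ makes every same-block quantity even, and for $n$ even there is in fact no pair of permutations of a complete residue system modulo $n$ whose coordinatewise sum is again such a system, so the bookkeeping must either exploit the extra slack of the range $[1,2q+1]$ in the generalized version (an off-by-one placement of the whole vertex block, which defeats the parity obstruction) or split into $n$ odd and $n$ even, using the hypothesis $\big||X|-|Y|\big|\le 1$ to realign the offsets in the even case. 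Carrying out this tiling carefully — and checking along the way that no vertex label ever coincides with an edge label — is essentially the whole proof; everything else is substitution.
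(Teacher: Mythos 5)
Your proposal is a plan rather than a proof, and the step you yourself flag as ``the hard part'' is precisely the content of the theorem. Everything up to that point is routine: normalizing the set-ordered graceful labelling of $T$, converting it to a labelling with consecutive edge sums, and observing that a (generalized) super edge-magic total labelling of one vertex-coincided graph $G=T\odot |^n_{i=1}H_i$ restricts to a flawed one on $\bigcup^n_{i=1}H_i$ (with the $n-1$ tree edges as $E^*$), and that for a tree $p+q=2q+1$ upgrades the generalized labelling to a genuine super edge-magic total labelling provided the edge labels fill the top interval. But the actual construction is left as a system of constraints on $\psi$, the offsets, and the coincidence positions -- that the cross-block sums be $n-1$ consecutive integers, that the within-block step-$n$ progressions tile an interval with exactly $n-1$ holes into which the tree-edge labels fall, that one constant $\Lambda$ serves both kinds of edges, and that vertex and edge labels never clash -- and you do not exhibit any choice satisfying them. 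Worse, you identify a concrete obstruction (for $n$ even there is no pair of permutations of a complete residue system mod $n$ whose coordinatewise sum is again one) and only gesture at two possible escapes without carrying either out. Since whether such a choice exists is exactly what the theorem asserts, this is a genuine gap, not a deferred routine verification.

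A second, quieter gap is your normalization of the blocks: you assume a common edge number $q$ and, ``after a harmless shift,'' a common magic constant $\lambda$ for all the $H_i$. The standing hypotheses of this subsection only give $|X_i|=s$ and $|Y_i|=t$ for every $i$; they do not make the $H_i$ isomorphic, equal-sized in edges, or equal in magic constant, and shifting an edge-magic total labelling to adjust $\lambda$ generally destroys the super condition $f(V(H_i))=[1,s+t]$ (resp.\ $[1,q_i+1]$). So the reduction to identical blocks silently weakens the statement. For reference, the paper itself gives no proof here -- the theorem is quoted from \cite{Wang-Yao-Yao-2014Information} -- so the burden of the explicit label arithmetic (the analogue of your $\psi$ and offset choices, including the parity split on $n$ and the use of $\bigl| |X|-|Y|\bigr|\leq 1$) cannot be discharged by appeal to the surrounding text; it is the part you would have to supply in full.
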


\begin{defn} \label{defn:anti-edge-magic-total-abelling}
Let $G$ be a $(p,q)$-graph. If there exists a set of arithmetic progression and a
bijection $f : V(G)\cup E(G)\rightarrow [1, p+q]$ such that $f(u)+f(v)+f(uv)\in \{k, k+d, k+2d, \cdots, k+(q-1)d\}$ for
every edge $uv\in E(G)$, and some values of $k,d\in Z^0$, then we say $f$ an anti-edge-magic total labelling of $G$. Furthermore, if $G$ is a bipartite graph with bipartition $(X, Y)$, and $f$ holds $f(V(G))=[1,p]$
and $\max\{f(x): x\in X\}<\min\{f(y): y\in Y\}$, we call $f$ a \emph{super set-ordered anti-edge-magic total labelling}.\qqed
\end{defn}

\begin{thm} \label{thm:anti-edge-magic-total-labelling}
\cite{Wang-Yao-Yang-Yang-Chen-2013} Suppose that $T$ and $H_1,H_2,\cdots, H_p$ are mutually disjoint trees, where $p=|V(T)|$. If $T$ admits a set-ordered graceful labelling, and each tree $H_k$ with $k\in[1, p]$ admits a
super set-ordered anti-edge-magic total labelling and its own bipartition $(X_k,Y_k)$ holding $|X_k|=s$ and $|Y_k|=t$ for two constants $s,t\geq 1$. Then $\{T\odot |^p_{i=1}H_i\}$ contains at least a graph admitting a super set-ordered anti-edge-magic total labelling defined in Definition \ref{defn:anti-edge-magic-total-abelling}.
\end{thm}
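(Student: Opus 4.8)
The plan is to follow the strategy behind Theorem~\ref{thm:main-theorem3} (its super-edge-magic analogue), replacing the single magic constant by an arithmetic progression of anti-magic edge-sums, and to exhibit one concrete member of $\{T\odot|^p_{i=1}H_i\}$ together with the required labelling. First I would set up the bookkeeping. Since $T$ and every $H_k$ is a tree, the vertex-coincided graph $G=T\odot|^p_{i=1}H_i$ is again a tree with $N:=|V(G)|=p+p(s+t-1)=p(s+t)$ vertices and $N-1$ edges; in particular $G$ is bipartite with a unique bipartition $(P,Q)$. Normalise the set-ordered graceful labelling $g$ of $T$ so that its bipartition is $(X,Y)$ with $X=\{u_1,\dots,u_{s_0}\}$, $g(u_i)=i-1$, and $Y=\{u_{s_0+1},\dots,u_p\}$, $g(u_j)=j-1$; then the $T$-edge labels $\{g(u_j)-g(u_i):u_iu_j\in E(T)\}$ run over exactly $[1,p-1]$. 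For each $i$ fix the super set-ordered anti-edge-magic total labelling $f_i$ of $H_i$: being ``super'' and set-ordered with $|X_i|=s$, $|Y_i|=t$ forces $f_i(X_i)=[1,s]$, $f_i(Y_i)=[s+1,s+t]$, $f_i(E(H_i))=[s+t+1,2(s+t)-1]$, and the edge-sums $\sigma_i(ab)=f_i(a)+f_i(b)+f_i(ab)$ run over an arithmetic progression of $s+t-1$ terms; I will assume, as is natural here (and automatic when $d=1$), that all the $H_k$ share the common difference $d$ of this progression.

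Next I would pin down the gluing. For each $u_i\in X$ I glue onto $u_i$ a vertex $w_i\in X_i$ with $f_i(w_i)$ equal to a prescribed value (for instance the largest value $s$ of $f_i$ on $X_i$), and for each $u_j\in Y$ I glue onto $u_j$ a vertex $w_j\in Y_j$ with $f_j(w_j)$ prescribed (for instance the smallest value $s+1$ on $Y_j$). Because $T$-neighbours lie on opposite sides of $G$, this choice makes the bipartition of $G$ be exactly $P=\bigsqcup_{i=1}^p X_i$ and $Q=\bigsqcup_{i=1}^p Y_i$, so $|P|=ps$ and $|Q|=pt$, and $V(G)=\bigsqcup_i V(H_i)$, $E(G)=\bigl(\bigsqcup_i E(H_i)\bigr)\sqcup E_T$, where $E_T$ is the set of $p-1$ ``bridge'' edges inherited from $E(T)$.

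Then I would build the labelling $F$ of $G$ by a block construction driven by $g$ and the $f_i$. On vertices: give $P$ the labels $[1,ps]$ and $Q$ the labels $[ps+1,N]$, laying out inside $[1,ps]$ the $p$ sub-blocks $X_1,\dots,X_p$ in the order prescribed by $g(u_i)=i-1$, each of length $s$ and filled by $f_i|_{X_i}$ shifted by a constant, and doing likewise for $Q$ with $Y_1,\dots,Y_p$; this already yields $F(V(G))=[1,N]$ and $\max F(P)<\min F(Q)$. On edges: distribute $[N+1,2N-1]$ so that each $E(H_i)$ receives a block of $s+t-1$ consecutive values (positioned by $i$ and filled by $f_i$ shifted by a constant $c_i$), while the remaining $p-1$ values are assigned bijectively to the bridges $E_T$ according to a rule to be fixed. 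With this layout, for an internal edge $ab\in E(H_i)$ one has $F(a)+F(b)+F(ab)=\sigma_i(ab)+e_i$ with $e_i$ depending only on $i$, so block $i$ contributes an arithmetic progression of $s+t-1$ terms and step $d$ to the list of edge-sums of $G$; and for a bridge $u_iu_j$ the endpoint-labels $F(w_i),F(w_j)$ are determined by $i,j$ and the prescribed values $f_i(w_i),f_j(w_j)$, so its sum is a controlled function of $i$, $j$ and of which leftover edge-label it receives. Since the $T$-edge labels are exactly $[1,p-1]$, one then chooses the shifts $e_i$ (equivalently the $c_i$ and the vertex-block positions) so that the $p$ block-progressions occupy $p$ runs of $s+t-1$ equally $d$-spaced positions on the sum-axis with single-term gaps between them, and assigns the leftover edge-labels to the bridges so that the $p-1$ bridge sums fall exactly into those $p-1$ gaps; the resulting list of $N-1$ edge-sums is then a single arithmetic progression of step $d$. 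Checking bijectivity of $F$ onto $[1,2N-1]$ together with $F(V(G))=[1,N]$ and $\max F(P)<\min F(Q)$ is then routine, and $F$ is a super set-ordered anti-edge-magic total labelling of this $G\in\{T\odot|^p_{i=1}H_i\}$. One may equivalently package the first half as: the base $\textbf{H}^c=(H_1,\dots,H_p)$, i.e.\ $\bigsqcup_i H_i$, admits a \emph{flawed} super set-ordered anti-edge-magic total labelling in the sense of Definition~\ref{defn:flawed-odd-graceful-labelling}, exactly as in Theorem~\ref{thm:main-theorem3}.

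The hard part is this last coordination step: choosing the order and positional shifts of the vertex- and edge-blocks, the prescribed values $f_i(w_i)$, and the bijection from bridges to leftover edge-labels so that the $p$ internal arithmetic progressions together with the $p-1$ bridge sums merge into one arithmetic progression on $N-1$ terms. This is precisely where the graceful labelling of $T$ is essential --- it is what makes the bridge labels exhaust $[1,p-1]$ and thus tile the gaps --- and it is also what forces the common-difference assumption on the $H_k$: if the blocks carried distinct steps $d_k$, interleaving them into a single progression would be hopeless in general. Everything else (the vertex and edge counts, the bipartition, bijectivity, and the ``super'' and set-ordered conditions) follows mechanically once the block layout is fixed.
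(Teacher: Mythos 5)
The paper itself contains no proof of this theorem: it is quoted from \cite{Wang-Yao-Yang-Yang-Chen-2013}, so there is no in-paper argument to compare yours against, and your proposal has to stand on its own. It does not yet do so, for two reasons. First, you quietly add the hypothesis that all the $H_k$ share one common difference $d$ for their anti-edge-magic sums. The statement and Definition \ref{defn:anti-edge-magic-total-abelling} only grant that each $H_k$ has edge-sums lying in some progression with its own pair $(k_k,d_k)$; trees with the same bipartition sizes $(s,t)$ can realize different differences, so this is not ``automatic'' outside the case $d=1$. You would either have to show the labellings can be re-chosen with a common difference or restructure the argument so that it works under the stated hypotheses; as written, your construction simply does not apply to all inputs the theorem allows.

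Second, and more seriously, the step that \emph{is} the theorem --- choosing the vertex-block positions $\sigma(i),\tau(i)$, the edge-block shifts $c_i$, the glued vertices $w_i$, and the bijection from the $p-1$ bridge edges to the $p-1$ leftover edge labels so that the $p$ internal progressions and the $p-1$ bridge sums interleave into a single arithmetic progression of $N-1$ terms, $N=p(s+t)$ --- is only asserted, never carried out. The bridge sums are not free parameters: each equals $f_i(w_i)+f_j(w_j)+(\sigma(i)-1)s+(\tau(j)-1)t+ps-s$ plus its assigned leftover label, with $(i,j)$ ranging over $E(T)$, so whether they can be made to land exactly in the single-term gaps between your blocks is a nontrivial arithmetic claim that needs explicit choices and verification (this is exactly where the set-ordered graceful labelling of $T$ must be used concretely, via explicit formulas in the style of the construction behind Theorem \ref{thm:main-theorem3}), together with a check that the resulting $F$ is a bijection onto $[1,2N-1]$ with $F(V(G))=[1,N]$ and the set-ordered property. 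Your counting (p runs of $s+t-1$ terms plus $p-1$ gaps) is consistent, but consistency of counts is not a construction; as it stands the proposal is a plausible plan rather than a proof.
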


\subsubsection{Graphic lattices on $(k,d)$-edge-magic total labellings}
\begin{defn} \label{defn:k-d-edge-magic-total-labelling}
A $(p, q)$-graph $G$ admits a bijection $f:V(G)\cup E(G)\rightarrow \{d,2d, \dots, \mu d, k+(\mu+1)d, k+(p+q-1)d\}$ with $\mu\in[1,p+q-1]$, such that $f(u)+f(v)+f(uv)=\lambda$ for each edge $uv\in E(G)$, we call $f$ a $(k,d)$-edge-magic total labelling, $\lambda$ a magic constant. Moreover, if $G$ is a bipartite graph with bipartition $(X, Y)$, and $f$ holds $f(X)=\{ d,2d, \dots, |X|d\}$, $f(Y)=\{ k+|X|d, k+(|X|+1)d, \dots, k+(|X|+|Y|-1)d$, we call $f$ a super set-ordered $(k,d)$-edge-magic total labelling of $G$.\qqed
\end{defn}

\begin{thm} \label{thm:2}
Suppose that disjoint trees $H_1, H_2,\dots, H_{n}$ admit super set-ordered $(k,d)$-edge-magic total labellings, and $V(H_1)=V(H_2)=\cdots =V(H_n)$, a tree $T$ of $n$ vertices admits a set-ordered graceful labelling and its bipartition $(X,Y)$ holding $\big| |X|-|Y|\big |\leq 1$ true. Then there exists at least a graph $G$ of $\{T\odot |^n_{i=1}H_i\}$ admits a super set-ordered $(k,d)$-edge-magic total labelling defined in Definition \ref{defn:k-d-edge-magic-total-labelling}.
\end{thm}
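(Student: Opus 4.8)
The plan is to follow the blueprint of the proofs of Theorem \ref{thm:main-theorem3} and Theorem \ref{thm:anti-edge-magic-total-labelling}: first manufacture a \emph{flawed} super set-ordered $(k,d)$-edge-magic total labelling of the forest $\bigcup^n_{i=1}H_i$ whose block-offsets are dictated by the set-ordered graceful labelling of $T$, and then pick one suitable coinciding vertex in each $H_i$ so that the combined labelling descends to a genuine super set-ordered $(k,d)$-edge-magic total labelling of some $G\in\{T\odot|^n_{i=1}H_i\}$ in the sense of Definition \ref{defn:k-d-edge-magic-total-labelling}.

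First I would normalize the data. Let $(X,Y)$ be the bipartition of $T$ with $|X|=p_1$, $|Y|=p_2$, $|p_1-p_2|\le 1$, and fix a set-ordered graceful labelling $g$ of $T$ with $g(X)=[0,p_1-1]$, $g(Y)=[p_1,n-1]$, $g(E(T))=[1,n-1]$; since $T$ is a set-ordered graceful tree it also carries a super set-ordered edge-magic total labelling $g^{*}$ obtained from $g$ by a standard translation (this is where the hypothesis $|p_1-p_2|\le 1$ is consumed), with $g^{*}(V(T))=[1,n]$, $g^{*}(E(T))$ an interval, and $g^{*}(u)+g^{*}(v)+g^{*}(uv)=\lambda_0$ for every edge $uv\in E(T)$. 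Because $V(H_1)=\cdots=V(H_n)$, the trees $H_i$ share bipartitions $(X_i,Y_i)$ of common sizes $|X_i|=s$, $|Y_i|=t$, and after an innocuous relabelling we may assume they all carry one super set-ordered $(k,d)$-edge-magic total template $f$ with $f(X_i)=\{d,2d,\dots,sd\}$, $f(Y_i)=\{k+sd,\dots,k+(s+t-1)d\}$, the remaining terms on $E(H_i)$, and magic constant $\lambda_1$.

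Next I would ``block-expand'' these labellings. For the vertex $v$ of $T$ with $g^{*}$-rank $r_v\in[1,n]$, attach to $v$ a shifted copy $f^{(v)}$ of $f$ whose three pieces (on $X_v$, on $Y_v$, on $E(H_v)$) are translated by affine functions of $r_v$ with slopes $nd$ on the vertex pieces — so that the $n$ copies interleave term by term inside the ambient arithmetic progression of common difference $d$ — and the complementary slope on the edge piece, arranged so that $f^{(v)}$ still has a constant magic sum $\lambda$ independent of $v$. Choose the coinciding vertex $w_v\in V(H_v)$ in $X_v$ when $v\in X$ and in $Y_v$ when $v\in Y$ (this keeps $G$ bipartite, with $|X(G)|=ns$ and $|Y(G)|=nt$), and among these pick the unique one whose translated label $f^{(v)}(w_v)$ equals the value the $d$-rescaled labelling of $T$ would place on $v$. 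Finally define a total labelling $h$ of $G=T\odot|^n_{v\in V(T)}H_v$ by letting $h$ agree with $f^{(v)}$ on $V(H_v)\cup E(H_v)$ and setting $h(v_1v_2)=\lambda-h(w_{v_1})-h(w_{v_2})$ on each $T$-edge $v_1v_2$. It then remains to verify: (a) $h$ is a bijection from $V(G)\cup E(G)$ onto $\{d,2d,\dots,\mu d\}\cup\{k+(\mu+1)d,\dots,k+(|V(G)|+|E(G)|-1)d\}$ for the appropriate $\mu$; (b) $h(X(G))=\{d,2d,\dots,nsd\}$ with $h(Y(G))$ the complementary upper block, i.e.\ $h$ is set-ordered; and (c) $h(u)+h(v)+h(uv)=\lambda$ for every edge of $G$ — automatic inside each $H_v$ by construction, and forced on the $T$-edges by the definition of $h(v_1v_2)$ together with the fact that $\{h(w_{v_1})+h(w_{v_2}):v_1v_2\in E(T)\}$ is exactly the remaining block, which is the point at which the edge-magic structure $g^{*}$ of $T$ enters.

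The main obstacle will be the bookkeeping in the block-expansion. The target label set is an arithmetic progression with a single jump of size $k$ between positions $\mu$ and $\mu+1$, and one must choose the affine shifts so that this jump lands precisely at the boundary between the $X(G)$-vertex labels and everything else, so that no block of any $f^{(v)}$ and no $T$-edge label straddles the jump, while simultaneously the $X_v$-blocks of all $n$ copies interleave to fill $\{d,2d,\dots,nsd\}$, the $Y_v$-blocks together with all edge-blocks of the $H_v$'s and of $T$ interleave to fill the remainder, and the slopes add up so that the magic sum stays constant across \emph{both} the copy-edges and the tree-edges. Exhibiting a single consistent choice of slopes and of the vertices $w_v$ that satisfies all of these constraints at once — essentially reconciling the AP-with-jump structure coming from the $(k,d)$-parameters of the $H_i$ with the graceful/edge-magic structure coming from $T$ — is the crux; everything else is routine once the indexing is pinned down.
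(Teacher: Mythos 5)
The paper states Theorem \ref{thm:2} without any proof (its companion results, Theorems \ref{thm:main-theorem3} and \ref{thm:anti-edge-magic-total-labelling}, are quoted from \cite{Wang-Yao-Yao-2014Information} and \cite{Wang-Yao-Yang-Yang-Chen-2013}), so there is no in-paper argument to compare against; judged on its own terms, your proposal is the right blueprint but it is not yet a proof. The decisive content of a theorem of this type is exactly the part you defer: choosing the block shifts so that (a) the $n$ shifted vertex blocks $f^{(v)}(X_v)$ fill $\{d,2d,\dots,nsd\}$ while the $Y$-blocks and all edge blocks fill the segment beyond the single jump $k$, (b) the labels $\lambda-h(w_{v_1})-h(w_{v_2})$ placed on the $n-1$ tree-edges land bijectively on the leftover block, and (c) one magic constant works simultaneously for copy-edges and tree-edges. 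Declaring this ``bookkeeping'' and stopping at ``exhibiting a single consistent choice \dots is the crux'' leaves out the essential verification --- which is precisely where the set-ordered graceful structure of $T$ (its edge differences being exactly $[1,n-1]$) and the balance condition $\big||X|-|Y|\big|\leq 1$ must actually be used. Relatedly, your claim that $\big||X|-|Y|\big|\leq 1$ is ``consumed'' in passing from $g$ to $g^{*}$ is off: by Theorem \ref{thm:connections-several-labellings} that conversion needs only set-orderedness, so the balance hypothesis has to earn its keep in the interleaving/counting step you skipped.

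There is also an unjustified reduction at the start. From $V(H_1)=\cdots =V(H_n)$ you infer common bipartition sizes $(s,t)$ and then, ``after an innocuous relabelling,'' a single template $f$ shared by all $H_i$. Equal vertex sets give equal order, not equal bipartition sizes, and even with common $(s,t)$ the trees $H_i$ need not be isomorphic, so they cannot be relabelled to carry one and the same labelling; at best their value sets $f_i(X_i)$, $f_i(Y_i)$, $f_i(E(H_i))$ coincide, while the placement of labels and the magic constants $\lambda_i$ may differ from block to block. Your shift construction must therefore be run with the individual labellings $f_i$, and the possibility of unequal $\lambda_i$ is exactly the kind of obstruction the compensating-slope scheme has to absorb; as written, the proposal assumes it away rather than handling it.
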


\subsubsection{Graphic lattices on total graceful labellings}

\begin{defn} \label{defn:generalized-total-graceful-labellings}
\cite{Wang-Xu-Yao-Ars-2018, Wang-Xu-Yao-2019} A \emph{labelling} $\theta$ of a $(p,q)$-graph $G$ is a mapping from a set $V(G)\cup E(G)$ to $[m,n]$. Write
$\theta(V(G))=\{\theta(u):u\in V(G)\}$,
$\theta(E(G))=\{\theta(xy):xy\in E(G)\}$. There are the following
constraints:
\begin{asparaenum}[(a)]
\item \label{Proper01} $|\theta(V(G))|=p$, $|\theta(E(G))|=q$ and $\theta(xy)=|\theta(x)-\theta(y)|$ for every edge $xy\in E(G)$.
\item \label{TGraceful-001} $\theta(V(G))\cup \theta(E(G))= [1,p+q]$.
\item \label{Super-TGraceful} $\theta(E(G))=[1,q]$.
\item \label{Set-ordered} $G$ is a bipartite graph with the bipartition
$(X,Y)$ such that $\max\{\theta(x):x\in X\}< \min\{\theta(y):y\in
Y\}$ ($\theta(X)<\theta(Y)$ for short).
\item \label{General-TGraceful-001} $\theta(V(G))\cup \theta(E(G))= [1,M]$ with $M\geq 2q+1$.
\end{asparaenum}

Then a \emph{total graceful labelling} $\theta$ holds (\ref{Proper01})
and (\ref{TGraceful-001}) true; a \emph{super total graceful labelling} $\theta$ holds (\ref{Proper01}),
(\ref{TGraceful-001}) and (\ref{Super-TGraceful}) true; a \emph{set-ordered
total graceful labelling} $\theta$ holds (\ref{Proper01}),
(\ref{TGraceful-001}) and (\ref{Set-ordered}) true; a \emph{super set-ordered total graceful labelling} $\theta$ holds (\ref{Proper01}), (\ref{TGraceful-001}), (\ref{Super-TGraceful}) and (\ref{Set-ordered}) true.

Moreover, a \emph{generalized total graceful labelling} $\theta$ holds (\ref{Proper01}) and (\ref{General-TGraceful-001}) true; a \emph{super generalized total graceful labelling} $\theta$ holds (\ref{Proper01}), (\ref{General-TGraceful-001}) and (\ref{Super-TGraceful}) true; a \emph{set-ordered generalized total graceful labelling} $\theta$ holds (\ref{Proper01}),
(\ref{General-TGraceful-001}) and (\ref{Set-ordered}) true; a \emph{super set-ordered generalized total graceful labelling} $\theta$ holds (\ref{Proper01}), (\ref{General-TGraceful-001}), (\ref{Super-TGraceful}) and (\ref{Set-ordered}) true.\qqed
\end{defn}

The total graceful labelling was introduced in \cite{Subbiah-Pandimadevi-Chithra2015}. By Definition \ref{defn:generalized-total-graceful-labellings}, we have

\begin{thm} \label{thm:total-graceful-main-theorem}
\cite{Wang-Xu-Yao-2019} Suppose that $T$ and $H_1,H_2,\cdots, H_p$ are mutually disjoint trees, where $p=|V(T)|$. If $T$ and $H_{\lceil\frac{p+1}{2}\rceil}$ are disjoint graceful trees, each tree $H_i$ with $k\in[1,\lceil\frac{p-1}{2}\rceil]\cup[\lceil\frac{p+3}{2}\rceil, p]$ admits a set-ordered total graceful labelling $f_k$ and its own bipartition $(X_k,Y_k)$ holding $|X_k|=s$ and $|Y_k|=t$ with $s+t=|V(H_{\lceil\frac{p+1}{2}\rceil})|$. Then $\{T\odot |^p_{i=1}H_i\}$ contains at least a graph admitting a super total graceful labelling (see Definition \ref{defn:generalized-total-graceful-labellings}).
\end{thm}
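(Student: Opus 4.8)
The plan is to mimic the amalgamation template already used in the proofs of Theorems~\ref{thm:main-theorem3}, \ref{thm:anti-edge-magic-total-labelling} and~\ref{thm:2}: normalise the component labellings, glue each $H_i$ onto a prescribed vertex of $T$, and assemble a single labelling $\theta$ of $G:=T\odot|^p_{i=1}H_i$ by a piecewise affine rule whose constants are forced by the three conditions of a super total graceful labelling (Definition~\ref{defn:generalized-total-graceful-labellings}): $\theta$ is a bijection onto $[1,P+Q]$, $\theta(uv)=|\theta(u)-\theta(v)|$ for every edge $uv$, and $\theta(E(G))=[1,Q]$. Since $|X_k|=s$ and $|Y_k|=t$ for every $k\neq m^*$ (write $m^*:=\lceil\frac{p+1}{2}\rceil$) and $|V(H_{m^*})|=s+t$, every $H_i$ has $n:=s+t$ vertices and $n-1$ edges, so $G$ is a tree with $P=pn$ vertices and $Q=pn-1$ edges; a super total graceful $\theta$ must therefore carry $E(G)$ onto $[1,pn-1]$ and $V(G)$ onto $[pn,2pn-1]$.

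First I would fix the data: a graceful labelling $\phi$ of $T$ (vertex labels $[0,p-1]$, edge labels $[1,p-1]$), a graceful labelling $g$ of $H_{m^*}$ (labels $[0,n-1]$), and, for each $i\neq m^*$, a set-ordered total graceful labelling $f_i$ of $H_i$ with $\max f_i(X_i)<\min f_i(Y_i)$. Order the vertices of $T$ so that $\phi(x_i)=i-1$ and attach $H_i$ at $x_i$; the distinguished index $m^*$ is exactly the one for which $x_{m^*}$ carries the median label of $T$, and $H_{m^*}$ will sit at that ``seam''. The merged vertex $w_i=x_i\odot z_i$ can receive only one $\theta$-value, so $z_i\in V(H_i)$ has to be chosen as the vertex whose intended block-position matches $\phi(x_i)$; here the set-ordered hypothesis on the $f_i$ is used, since it pins down which extreme vertex of the $i$-th copy (an end of $X_i$ or of $Y_i$, or, for $H_{m^*}$, an end of the graceful range) is available as a gluing site.

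Then I would write $\theta$ explicitly in the form ``block index from $\phi$, offset from $f_i$ (or $g$)'': for a vertex $v$ of the copy of $H_i$ set $\theta(v)=\alpha\,\phi(x_i)+f_i(v)+\beta$, for an edge $e$ of that copy $\theta(e)=\alpha\,\phi(x_i)+f_i(e)+\gamma$, and for a tree edge $x_ix_j\in E(T)$ a value lying in the gap between the $i$-th and $j$-th blocks; the copy $H_{m^*}$ is handled by the analogous rule built from $g$. The scaling constant $\alpha$ and the shifts $\beta,\gamma$, together with the precise rule on $E(T)$, are then determined by imposing $\theta(uv)=|\theta(u)-\theta(v)|$ for both the within-copy edges and the $T$-edges, and by requiring $\theta(V(G))$ and $\theta(E(G))$ to exhaust their target intervals. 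The routine part is checking that $\theta$ restricted to each copy is exactly the shifted, scaled labelling $f_i$ or $g$, so that the within-copy properties are inherited, and that the $p-1$ values coming from $E(T)$, being essentially the $\phi$-edge-differences scaled by $\alpha$, land precisely in the slots left vacant by the copies.

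The hard part will be the global tiling: showing that the $p$ within-copy vertex blocks fill $[pn,2pn-1]$ bijectively, and the $p$ within-copy edge blocks together with the $p-1$ labels from $E(T)$ fill $[1,pn-1]$ bijectively, with no collisions. This is precisely where the asymmetric hypothesis enters. The copy at the seam, $H_{m^*}$, straddles the transition between the ``low'' blocks ($i<m^*$) and the ``high'' blocks ($i>m^*$); a total graceful labelling there, with range $[1,2n-1]$, would overrun the seam and collide with neighbouring blocks, whereas a graceful labelling, with range only $[0,n-1]$, fits exactly — which is why $H_{m^*}$ is required only to be graceful and why the index must be the median $\lceil\frac{p+1}{2}\rceil$; symmetrically, the slack inherent in a total (rather than proper graceful) labelling of $G$ is what lets $T$ be merely graceful instead of set-ordered graceful, in contrast with Theorems~\ref{thm:main-theorem3} and~\ref{thm:2}. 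I would organise the verification by computing block by block the exact images $\theta(V(H_i))$, $\theta(E(H_i))$ and $\theta(E(T))$ and checking pairwise disjointness and that their union is $[1,P+Q]$. Since $G$ is a tree, the $\theta$ so obtained is automatically a super total graceful labelling, which also exhibits a flawed super total graceful labelling of $\bigcup^p_{i=1}H_i$ in the sense of the earlier theorems and places this particular $G$ in $\{T\odot|^p_{i=1}H_i\}$, as required.
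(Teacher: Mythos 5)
The paper itself gives no proof of this theorem (it is quoted from \cite{Wang-Xu-Yao-2019}), so your proposal can only be judged on its own coherence and against the analogous gluing arguments that do appear in the paper, such as the proofs of Lemma \ref{thm:H-connecte-two-graphs} and Lemma \ref{thm:symmetric-graphs-graceful-coloring}. Measured that way, there is a genuine structural flaw in your construction. Your ansatz assigns every vertex of the copy $H_i$ the same shift, $\theta(v)=\alpha\,\phi(x_i)+f_i(v)+\beta$, and the edges of that copy a second shift, $\theta(e)=\alpha\,\phi(x_i)+f_i(e)+\gamma$. But if both endpoints $u,v$ of a within-copy edge receive the same shift, then $|\theta(u)-\theta(v)|=|f_i(u)-f_i(v)|=f_i(uv)$ is unchanged, so the requirement $\theta(uv)=|\theta(u)-\theta(v)|$ forces $\alpha\,\phi(x_i)+\gamma=0$ for every $i$, i.e.\ $\alpha=\gamma=0$; and with $\alpha=0$ all $p$ copies occupy the same vertex block and the same edge block, destroying injectivity. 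So the constants you propose to "determine by imposing the conditions" do not exist: the within-copy edge labels cannot be translated at all under a uniform per-copy vertex shift, and the block-tiling of $[1,pn-1]$ by the copies' edge labels, which is the heart of the theorem, cannot be achieved this way.

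The mechanism that actually makes such amalgamation theorems work — and which the paper uses explicitly elsewhere (e.g.\ Step 2 of the proof of Lemma \ref{thm:H-connecte-two-graphs}, where $X_1$ keeps its labels while $Y_1$ is shifted by $m+q_2$, and the proof of Lemma \ref{thm:symmetric-graphs-graceful-coloring}) — is to shift the two sides $X_k$ and $Y_k$ of each bipartite copy by \emph{different} amounts, so that every induced edge difference is translated by the gap between the two shifts while the labelling structure of $H_k$ is preserved. This is exactly what the set-ordered hypothesis on the $f_k$ is for; in your proposal the set-ordered condition is demoted to merely "pinning down a gluing vertex", which misidentifies its role and is why your scheme collapses. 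Consequently the subsequent steps (inheritance of within-copy properties, disjointness of the blocks, the special treatment of $H_{\lceil\frac{p+1}{2}\rceil}$ at the "seam", and the placement of the $p-1$ labels coming from $E(T)$) are asserted rather than established, and the seam heuristic in particular is not backed by any computation. A viable proof would have to restate the labelling rule with bipartition-dependent shifts for each $H_k$ (and a separate treatment of the merely graceful $H_{\lceil\frac{p+1}{2}\rceil}$ and of $T$), and then verify the tiling of $[1,2pn-1]$ block by block.
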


\begin{thm} \label{thm:total-graceful-corollary}
\cite{Wang-Xu-Yao-2019} Suppose that $T$ and $H_1,H_2,\cdots, H_p$ are mutually disjoint trees, where $p=|V(T)|$. If $T$ admits a set-ordered total graceful labelling and each tree $H_k$ with $k\in[1, p]$ admits a set-ordered graceful labelling and its own bipartition $(X_k,Y_k)$ holding $|X_k|=s$ and $|Y_k|=t$ for two constants $s,t\geq 1$. Then $\{T\odot |^p_{i=1}H_i\}$ contains at least a graph admitting a super set-ordered total graceful labelling (see Definition \ref{defn:generalized-total-graceful-labellings}).
\end{thm}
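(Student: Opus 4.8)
The plan is to give a constructive proof: pick one convenient graph $G$ in the set $\{T\odot |^p_{i=1}H_i\}$, build from the set-ordered total graceful labelling $\theta$ of $T$ and the set-ordered graceful labellings $g_i$ of the trees $H_i$ an explicit labelling $f$ of $G$, and then check the defining conditions of Definition \ref{defn:generalized-total-graceful-labellings}. Two preliminary reductions make the target easier to state. First, $G=T\odot |^p_{i=1}H_i$ is a tree on $P=p(s+t)$ vertices and $P-1$ edges, and for a tree on $P$ vertices a labelling is super set-ordered total graceful exactly when, after subtracting $P$ from every vertex label and keeping edge labels fixed, one obtains a set-ordered graceful labelling: conditions (a)–(c) of Definition \ref{defn:generalized-total-graceful-labellings} pin the vertex labels to $[P,2P-1]$ and the edge labels to $[1,P-1]$. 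So it suffices to produce some $G\in\{T\odot |^p_{i=1}H_i\}$ carrying a set-ordered graceful labelling. Second, for each $H_i$ the set-ordered graceful labelling $g_i$ is forced in shape: since $H_i$ is a tree with bipartition $(X_i,Y_i)$, $|X_i|=s$, $|Y_i|=t$, the vertex labels fill $[0,s+t-1]$ and the inequality $\max g_i(X_i)<\min g_i(Y_i)$ forces $g_i(X_i)=[0,s-1]$ and $g_i(Y_i)=[s,s+t-1]$, so the edge labels of $H_i$ fill $[1,s+t-1]$.

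Next I would fix the gluing. Attach, for every vertex $x_j$ of $T$ lying in the $X$-side of $T$, the copy of $H_j$ at a vertex of $X_j$, and for every $x_j$ in the $Y$-side of $T$, the copy of $H_j$ at a vertex of $Y_j$; this makes $G$ bipartite with parts $A=\bigcup_j X_j$ and $B=\bigcup_j Y_j$ (identifications included), of sizes $|A|=ps$ and $|B|=pt$. On the copy of $H_j$ inside $G$ I would take $f$ to be an affine rescaling of $g_j$: one additive shift on $X_j$ and another on $Y_j$, chosen so that the images $f(X_j)$ are the $p$ consecutive blocks tiling $[P,P+ps-1]$ (after the global $+P$), the images $f(Y_j)$ are the $p$ consecutive blocks tiling $[P+ps,2P-1]$, and the glued vertex $x_j$ receives a label read off from a rescaled copy of $\theta(x_j)$. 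Inside block $j$ each edge $ab$ of $H_j$ then gets label $g_j(ab)+(\text{shift on }Y_j-\text{shift on }X_j)$, so the within-block edge labels form a translate of $[1,s+t-1]$; the $p-1$ remaining edges of $G$ are the edges of $T$, and an edge $x_ix_j$ of $T$ gets label $\bigl|f(x_i)-f(x_j)\bigr|$, which I would arrange — using $\theta(x_ix_j)=|\theta(x_i)-\theta(x_j)|$ — to be exactly the $p-1$ values not covered by the $p$ translated copies of $[1,s+t-1]$. Since $p$ intervals of length $s+t-1$ together with $p-1$ singletons have total length $p(s+t)-1=P-1$, this is a genuine tiling of $[1,P-1]$; the vertex labels tile $[P,2P-1]$ by construction; and $\max f(A)<\min f(B)$ holds, so $f$ is the desired super set-ordered total graceful labelling.

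The delicate point — and where the hypothesis that $T$ carries a set-ordered \emph{total} graceful labelling rather than just a set-ordered graceful one does real work — is the coordination of the inter-block shifts. In a graceful labelling of $T$ the vertex labels and edge labels occupy disjoint ranges, so one could blow up each vertex naively; but a set-ordered total graceful $\theta$ interleaves the $p$ vertex labels and the $p-1$ edge labels throughout $[1,2p-1]$, and a naive blow-up would leave the wrong gaps. The main obstacle is thus to choose the per-block shifts — equivalently, the permutation saying which $H_j$ goes into which block of the low and high intervals, and which vertex of each $H_j$ is glued — so that simultaneously (i) the labels $f(x_j)$ and the $T$-edge differences $|f(x_i)-f(x_j)|$ reproduce, up to the rescaling, the values $\theta(x_j)$ and $\theta(x_ix_j)$; (ii) the $T$-edge labels land precisely in the complement of the $p$ translated blocks $[1,s+t-1]$; and (iii) $\max f(A)<\min f(B)$ survives all identifications. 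I expect exhibiting one consistent choice meeting (i)–(iii) to be the technical core; once the block-and-shift bookkeeping is set up correctly, the bijectivity onto $[1,2P-1]$ and the edge-difference identity are routine, and the global shift by $P$ converts the set-ordered graceful labelling of $G$ into the claimed super set-ordered total graceful labelling, finishing the proof.
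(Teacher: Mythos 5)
The paper itself gives no proof of this theorem (it is quoted from \cite{Wang-Xu-Yao-2019}, with only the illustrative figures), so your proposal has to stand on its own; and as it stands it does not. Your two reductions are correct: a super set-ordered total graceful labelling of the tree $G=T\odot |^p_{i=1}H_i$ on $P=p(s+t)$ vertices is exactly a set-ordered graceful labelling shifted by $P$ on the vertices, and each set-ordered graceful $g_i$ is forced to have $g_i(X_i)=[0,s-1]$, $g_i(Y_i)=[s,s+t-1]$. But the entire content of the theorem is the part you defer as "the technical core": exhibiting the block positions, the per-block shifts, the glue vertices, and verifying that the $p-1$ edge labels coming from $T$ land exactly in the gaps left by the $p$ translated intervals $[1,s+t-1]$. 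Saying you "expect" a consistent choice exists is not a proof, and this is precisely where the hypothesis that $T$ carries a set-ordered \emph{total} graceful labelling has to be used.

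Worse, the specific framework you fix at the outset is already wrong in general. Take $T=P_3=uvw$ with the set-ordered total graceful labelling $\theta(u)=1$, $\theta(w)=2$, $\theta(v)=5$ (so the centre $v$ lies on the high side $Y_T$), and let every $H_k=K_{1,2}$ with $X_k$ the centre, $Y_k$ the two leaves, i.e.\ $s=1$, $t=2$, $P=9$. Your gluing rule attaches $u$ and $w$ at the centres of $H_1,H_2$ and $v$ at a leaf of $H_3$. For the resulting tree $G$, summing the edge labels gives $2f(v)+f(z)=12$ (where $z$ is the centre of $H_3$) in the orientation with the centres low, and $f(z)+2f(v)=12$ in the reversed orientation, and a short case analysis of the two admissible pairs in each orientation shows the labels $7$ or $6$ can never be realized; hence this $G$ admits \emph{no} set-ordered graceful labelling at all, so no super set-ordered total graceful labelling either. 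In particular no choice of consecutive blocks, shifts, or glue vertices inside your scheme can work, because already the side-matching rule (``$x_j\in X_T$ is glued into $X_j$, $x_j\in Y_T$ into $Y_j$'') picks the wrong member of $\{T\odot |^p_{i=1}H_i\}$ for this legitimate input; the gluing that does work here identifies $v$ with the \emph{centre} of $H_3$ and $u,w$ with leaves. So the proof needs an additional idea you have not supplied — either a normalization of the set-ordered total graceful labelling of $T$ (which is not automatic, since the dual labelling of a total graceful labelling is not total graceful), or a more flexible distribution of labels inside the $H_j$'s than two affine shifts with consecutive blocks — and without it the argument fails, not merely remains unfinished.
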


\begin{figure}[h]
\centering
\includegraphics[width=16.2cm]{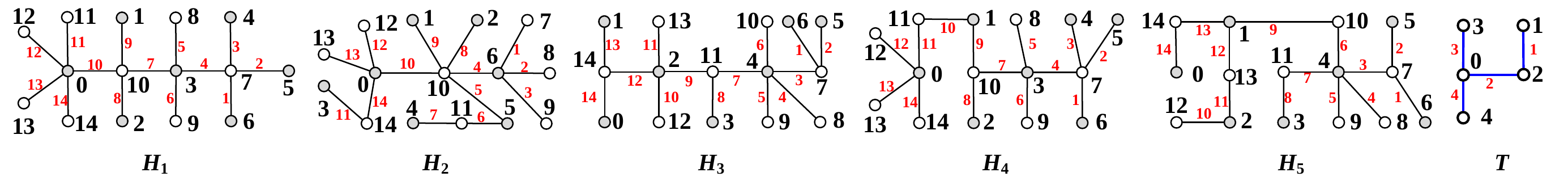}\\
\caption{\label{fig:example-total-graceful-0} {\small A base $\textbf{H}^c=(H_1,H_2,H_3,H_4, H_5)$ and a tree $T$ for illustrating Theorem \ref{thm:total-graceful-corollary} cited from \cite{Wang-Xu-Yao-2019}.}}
\end{figure}

\begin{figure}[h]
\centering
\includegraphics[width=15cm]{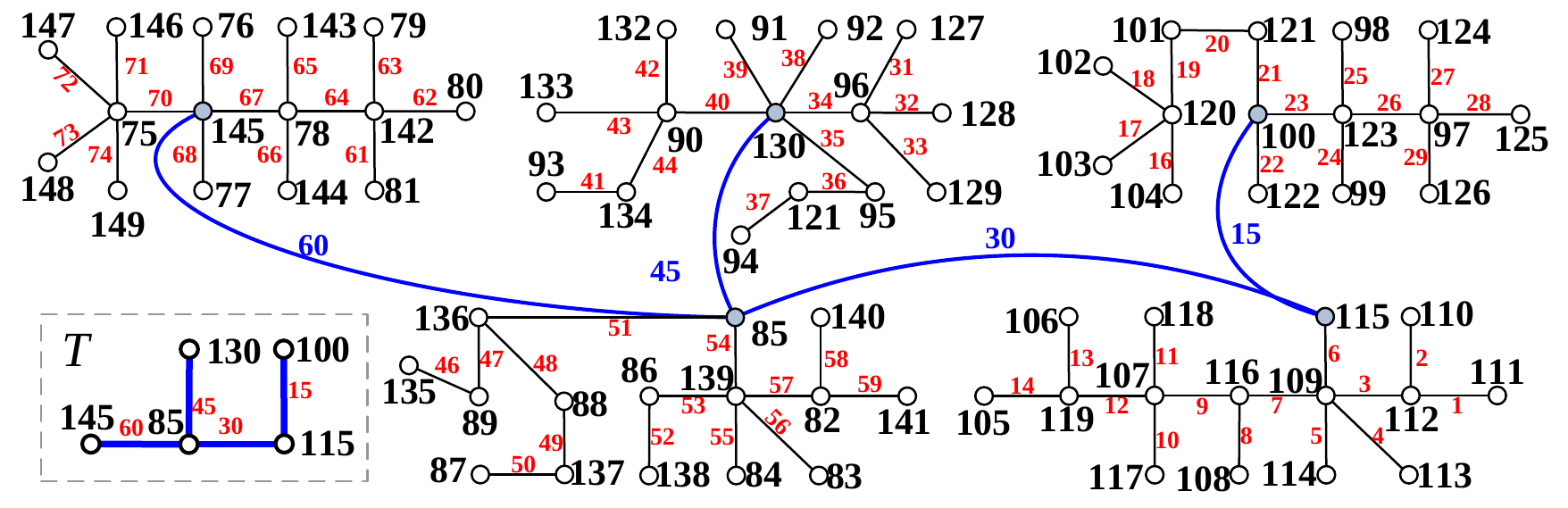}\\
\caption{\label{fig:example-total-graceful-1} {\small A tree $T\odot |^5_{i=1}H_i$ admits a super set-ordered total graceful labelling for illustrating Theorem \ref{thm:total-graceful-corollary} cited from \cite{Wang-Xu-Yao-2019}, where the graphic lattice base $\textbf{H}^c$ is shown in Fig.\ref{fig:example-total-graceful-0}.}}
\end{figure}

\begin{thm} \label{thm:GTGL-main-theorem}
\cite{Wang-Xu-Yao-Ars-2018} Let $T$ be a graceful tree of order $p$. Every connected $(n_k,m)$-graph $H_k$ has a set-ordered graceful labellings $f_k$ with $k\in[1, p]$ and $k\neq \lceil\frac{p+1}{2}\rceil$, and they, except $H_{\lceil\frac{p+1}{2}\rceil}$ which is a connected graceful graph, have the same labelling intervals of vertex bipartition. Then there exists a graph in the form $T\odot |^p_{i=1}H_i$ admits a super generalized total graceful labelling (see Definition \ref{defn:generalized-total-graceful-labellings}).
\end{thm}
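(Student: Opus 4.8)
Write $P=|V(G)|$ and $Q=|E(G)|$ for $G=T\odot|^{p}_{i=1}H_{i}$. The vertex‑coinciding operation at the $p$ vertices of $T$ gives $P=\sum^{p}_{k=1}n_{k}$ and $Q=(p-1)+pm$. The defining conditions of a super generalized total graceful labelling (Definition~\ref{defn:generalized-total-graceful-labellings}) force $\theta(E(G))=[1,Q]$ and $\theta(V(G))\cup\theta(E(G))=[1,M]$ with $M\geq 2Q+1$; hence $\theta(V(G))\supseteq[Q+1,M]$ and so $P=|\theta(V(G))|\geq M-Q\geq Q+1$. Since a connected graph with $m$ edges has at most $m+1$ vertices, $\sum^{p}_{k=1}n_{k}\geq pm+p$ is only possible when $n_{k}=m+1$ for every $k$; thus every $H_{k}$ is a tree, $G$ is a tree, $Q=P-1$ and $M=2Q+1$ is forced. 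Consequently it suffices to build a graceful labelling $\theta_{0}:V(G)\to[0,P-1]$ of the tree $G$ (so $\theta_{0}(E(G))=[1,P-1]$): then $\theta:=\theta_{0}+P$ leaves all edge values unchanged, has $\theta(V(G))=[P,2P-1]$, $\theta(E(G))=[1,Q]$ and $\theta(V(G))\cup\theta(E(G))=[1,2Q+1]$, which is exactly the required labelling.

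\textbf{The amalgamation construction.}
I would adapt the amalgamation of graceful labellings used for Theorem~\ref{thm:total-graceful-main-theorem}. Take the graceful labelling $g$ of $T$ as a skeleton, order $V(T)=\{v_{1},\dots,v_{p}\}$ with $g(v_{i})=i-1$, and glue the copy of $H_{i}$ at $v_{i}$. Reserve for the $m$ internal edges of the $i$‑th copy a length‑$m$ ``window'' $W_{b(i)}=[\,b(i)m+1,\ b(i)m+m\,]$, where $b:[p]\to\{0,1,\dots,p-1\}$ is a bijection chosen so that $b(m_{0})=0$ for the central index $m_{0}=\lceil\frac{p+1}{2}\rceil$ and so that the induced label heights match the edges of $T$. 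For $i\neq m_{0}$ use the common set‑ordered graceful $f_{i}$ with bipartition $(X_{i},Y_{i})$, $f_{i}(X_{i})=[0,|X_{i}|-1]$, $f_{i}(Y_{i})=[|X_{i}|,m]$, and set $\theta_{0}=\alpha_{i}+f_{i}$ on the $X_{i}$‑part and $\theta_{0}=\alpha_{i}+b(i)m+f_{i}$ on the $Y_{i}$‑part; every internal edge $xy$ (with $x\in X_{i}$, $y\in Y_{i}$) then gets value $b(i)m+\bigl(f_{i}(y)-f_{i}(x)\bigr)=b(i)m+f_{i}(xy)$, so these sweep out exactly $W_{b(i)}$. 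For the central copy, since $b(m_{0})=0$ and $W_{0}=[1,m]$, simply put $\theta_{0}=\alpha_{m_{0}}+f_{m_{0}}$ for the merely graceful $f_{m_{0}}$; each internal edge receives $|f_{m_{0}}(x)-f_{m_{0}}(y)|=f_{m_{0}}(xy)\in[1,m]$ with no set‑orderedness needed — this is precisely why $H_{m_{0}}$ may be only graceful. The coinciding vertex $z_{i}$ of $H_{i}$ is taken to be the $f_{i}$‑label‑$0$ vertex or the $f_{i}$‑label‑$m$ vertex according to which class of the bipartition of $G$ contains $v_{i}$, and the offsets $\alpha_{i}$ are chosen as a monotone function of $i$ with step at least $\max\{|X_{i}|,|Y_{i}|\}$, so that the $p$ blocks of vertex labels tile $[0,P-1]$ and each tree edge $v_{i}v_{j}$ of $G$ gets $|\theta_{0}(z_{i})-\theta_{0}(z_{j})|=pm+|g(v_{i})-g(v_{j})|=pm+g(v_{i}v_{j})$, i.e. the top block $[pm+1,pm+p-1]$.

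\textbf{Main obstacle and conclusion.}
The hard part is the simultaneous bookkeeping of the offsets $\alpha_{i}$ and of the permutation $b$: one must verify at once that the vertex labels exhaust $[0,P-1]$ without repetition (the $X_{i}$‑ and $Y_{i}$‑blocks must interleave perfectly and the shared vertices $z_{i}$ must not be counted twice), that the windows $W_{b(i)}$ are pairwise disjoint and exhaust $[1,pm]$, and that the $p-1$ differences $|\theta_{0}(z_{i})-\theta_{0}(z_{j})|$ along the edges of $T$ are exactly $pm+1,\dots,pm+p-1$. This last point is where the gracefulness of $T$ itself (not merely a set‑ordered labelling of it) is used, and it is what dictates the precise $b$ and forces the distinguished central slot; the splitting of the index set around $\lceil\frac{p+1}{2}\rceil$ behaves slightly differently for $p$ even and $p$ odd, so I would treat the two parities separately. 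Once these checks are done, injectivity of $\theta_{0}$ on $V(G)$ and the equalities $\theta_{0}(E(G))=[1,P-1]$, $\theta_{0}(V(G))=[0,P-1]$ follow by assembling the per‑copy statements with the tree‑edge computation, and passing to $\theta=\theta_{0}+P$ yields the super generalized total graceful labelling, completing the proof.
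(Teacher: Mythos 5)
The paper does not actually reprove this theorem (it is quoted from \cite{Wang-Xu-Yao-Ars-2018} and only illustrated by Fig.\ref{fig:example-generalized-total-graceful-0} and Fig.\ref{fig:example-generalized-total-graceful-1}), so your proposal must stand on its own, and its opening ``reduction to a graceful labelling of a tree'' is where it fails. You read condition (e) of Definition \ref{defn:generalized-total-graceful-labellings} as an exact set equality and deduce $P\geq Q+1$, hence $n_k=m+1$ for every $k$, every $H_k$ a tree, and $G$ a tree. That deduction is literally available from the text of the definition, but it proves far too much: it would make the theorem's hypotheses self-contradictory whenever some $H_k$ contains a cycle, whereas the whole point of this statement, as opposed to its companions Theorem \ref{thm:total-graceful-main-theorem} and Theorem \ref{thm:total-graceful-corollary} (which are exactly the all-trees case and use plain super total graceful labellings), is that the blocks are arbitrary connected $(n_k,m)$-graphs admitting set-ordered graceful labellings, with only the central block required to be a ``connected graceful graph''. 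The ``generalized'' labelling with $M\geq 2q+1$ exists precisely because a connected non-tree $(p,q)$-graph has $p+q<2q+1$, so the labels cannot fill $[1,M]$ and must be drawn from $[1,M]$ with gaps (compare the analogous Definition \ref{defn:generalized-edge-magic-total-labelling}, where ``generalized'' means a mapping into $[1,2q+1]$ rather than a bijection, and note that the paper's illustration is captioned ``a graph $T\odot |^7_{i=1}H_i$'', in deliberate contrast with the ``tree'' of Fig.\ref{fig:example-total-graceful-1}). By collapsing everything to trees you have replaced the theorem by a result the paper already states separately, and the actual content --- distributing $\sum_k n_k$ distinct vertex labels, with gaps, over a range of length about $2Q+1$ while the induced edge differences realize $[1,Q]$ with $Q=(p-1)+pm> P-1$ --- is never addressed.

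Even inside the tree case you reduce to, the construction is not completed at the decisive point. You must verify simultaneously that the windows $W_{b(i)}$ are pairwise disjoint and exhaust $[1,pm]$, that the per-block vertex labels (with each coincided vertex $z_i$ counted once) are injective and tile $[0,P-1]$, and that the $p-1$ differences across the edges of $T$ are exactly $pm+1,\dots,pm+p-1$; you yourself label this ``the hard part'' and defer it. In particular, taking the offsets $\alpha_i$ merely ``monotone with step at least $\max\{|X_i|,|Y_i|\}$'' does not give $|\theta_0(z_i)-\theta_0(z_j)|=pm+|g(v_i)-g(v_j)|$: that difference depends on which bipartition class $z_i$ lies in and on the cumulative sizes of the blocks lying between $v_i$ and $v_j$, and it is exactly here that the hypothesis that all $H_k$ with $k\neq\lceil\frac{p+1}{2}\rceil$ share the same bipartition intervals $(|X_k|,|Y_k|)=(s,t)$ must be used quantitatively (the offsets have to be an affine function of the graceful labels of $T$ with slope determined by $s$, $t$ and $m$, not merely increasing). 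As written, the proposal neither proves the theorem in the generality stated nor finishes the special case it reduces to, so it cannot be accepted as a proof.
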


\begin{figure}[h]
\centering
\includegraphics[width=16.2cm]{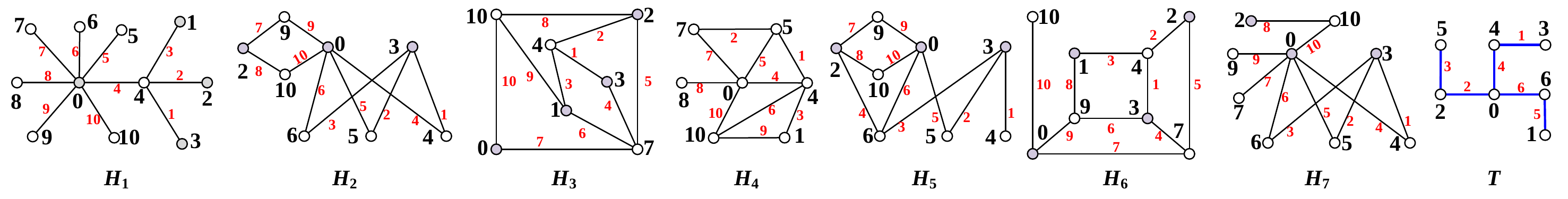}\\
\caption{\label{fig:example-generalized-total-graceful-0} {\small A base $\textbf{H}^c=(H_1,H_2,\dots ,H_7)$ and a tree $T$ for illustrating Theorem \ref{thm:GTGL-main-theorem} cited from \cite{Wang-Xu-Yao-Ars-2018}.}}
\end{figure}

\begin{figure}[h]
\centering
\includegraphics[width=15cm]{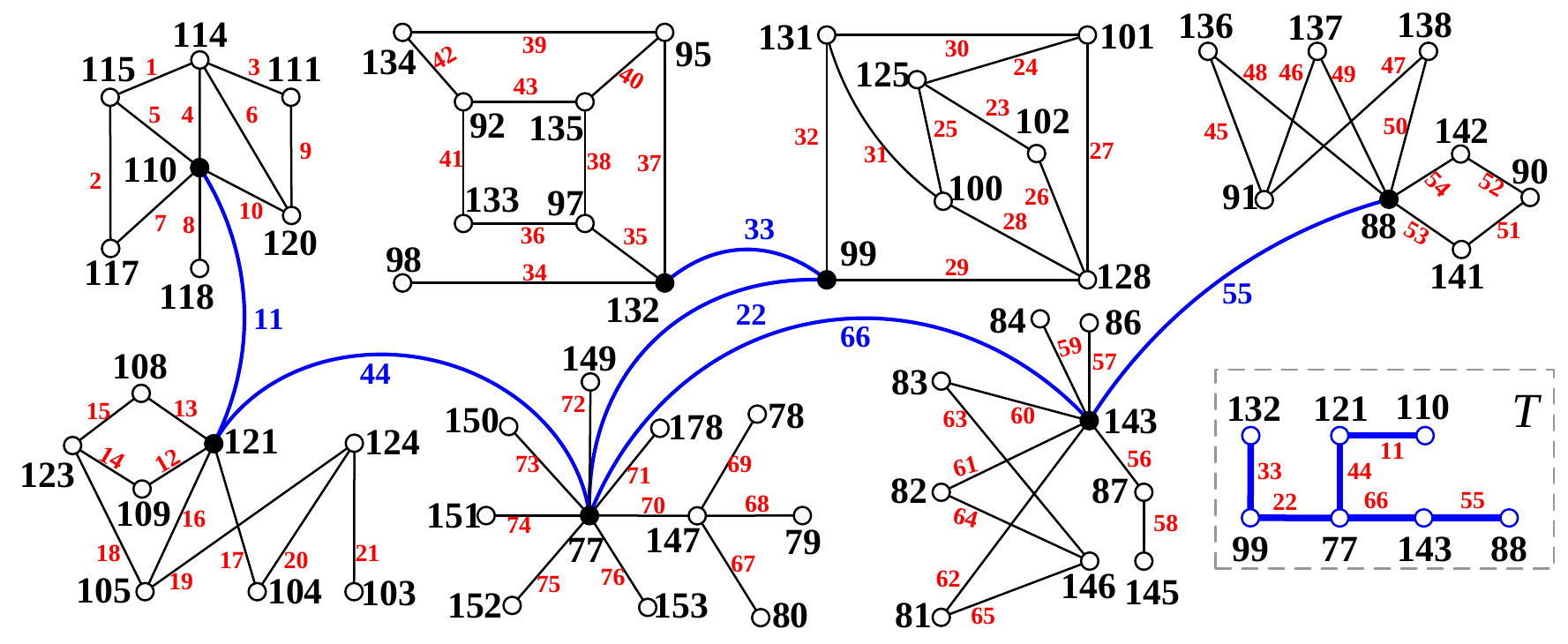}\\
\caption{\label{fig:example-generalized-total-graceful-1} {\small A graph $T\odot |^7_{i=1}H_i$ admits a super generalized total graceful labelling for illustrating Theorem \ref{thm:GTGL-main-theorem} cited from \cite{Wang-Xu-Yao-Ars-2018}, where the graphic lattice base $\textbf{H}^c$ is shown in Fig.\ref{fig:example-generalized-total-graceful-0}.}}
\end{figure}

\subsubsection{Graphic lattices on multiple operations}

\textbf{Vertex-coinciding operation. }We define an ordered graph operation ``$(\theta,\odot)$''. Each connected graph $H_i$ admitting a $W$-type total coloring $f_i:V(H_i)\cup E(H_i)\rightarrow [a, M_i]$ with $i=1,2$, where $f_1(V(H_1))\cap f_2(V(H_2))\neq \emptyset$, we do $(\theta,\odot)$ to these two graphs $H_1,H_2$ in the following process:

Step 1. Let $\theta_j$ be a transformation. Each element $w$ of $V(H_i)\cup E(H_i)$ is colored with $\theta_j(f_i(w))$, such that no two edges $uv,uw$ of the union $H_1\cup H_2$ hold $\theta_j(uv)=\theta_j(uw)$, and there are vertices $x\in V(H_1)$ and $y\in V(H_2)$ hold $\theta_j(x)=\theta_j(y)$. Here, we restrict two colorings $f_i,f_2$ and a transformation $\theta_j$ to be the same $X$-type total coloring, of course, this restriction can be deleted in some particular issues.

Step 2. Vertex-coinciding a vertex $x\in V(H_1)$ with another vertex $y\in V(H_2)$ into one $z=x\odot y$ if $\theta_j(x)=\theta_j(y)$, such that the resultant graph, denoted as $H_1\odot H_2$, is connected, and there are two vertices $s,t\in V(H_1\odot H_2)$ holding $\theta_j(s)=\theta_j(t)$ true.

Suppose that each graph-vector $H_k$ of the graphic lattice base $\textbf{H}=(H_1,H_2,\dots ,H_n)$ made by disjoint connected graphs $H_1,H_2,\dots ,H_n$ admits a $W$-type coloring $g_k$ with $k\in [1,n]$, so we say $\textbf{H}$ admitting the $W$-type coloring. Let $C_{set}$ be a set of graph colorings and labellings. Thereby, we have a \emph{$(\theta,\odot)$-graphic lattice}
\begin{equation}\label{eqa:two-operations-graphic-lattice-00}
{
\begin{split}
\textbf{\textrm{L}}(\textbf{H}(\theta,\odot)\textbf{\textrm{C}}_{set})=\{(\theta_j,\odot)^n_{k=1}a_kH_k:~a_k\in Z^0,~\theta_j\in C_{set}\}.
\end{split}}
\end{equation} Notice that $(\theta_j,\odot)^n_{k=1}a_kH_k$ in (\ref{eqa:two-operations-graphic-lattice-00}) is a set of graphs admitting the same $W$-type colorings for each fixed transformation $\theta_j\in C_{set}$. For understanding this fact, see a graph shown in Fig.\ref{fig:set-ordered-grow-1}.

\begin{thm} \label{thm:two-operations-graphic-lattices}
$^*$ Suppose that $C_{set}$ is a set containing graph colorings being equivalent set-ordered graceful labellings. Then each graph of the $(\theta,\odot)$-graphic lattice $\textbf{\textrm{L}}(\textbf{\textrm{C}}_{set}(\theta,\odot)\textbf{H})$ admits a set-ordered $W$-type coloring if the graphic lattice base $\textbf{H}^c=(H_1,H_2,\dots ,H_n)$ admits the set-ordered $X$-type coloring. (see an example shown in Fig.\ref{fig:set-ordered-grow}, Fig.\ref{fig:set-ordered-grow-0} and Fig.\ref{fig:set-ordered-grow-1})
\end{thm}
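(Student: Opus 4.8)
The plan is to track the set-ordered property through the two-step operation $(\theta,\odot)$ inductively on the number of base-vectors coincided into the growing graph. The starting observation is that, by the hypothesis on $C_{set}$, every transformation $\theta_j\in C_{set}$ is (equivalent to) a set-ordered graceful labelling, and by Remark~\ref{rem:new-connection-old} (Conn-5) and the cited equivalence theorem of \cite{Yao-Liu-Yao-2017}, a set-ordered graceful labelling of a tree is interchangeable with any of a whole family of set-ordered $W$-type labellings. Likewise the graphic lattice base $\textbf{H}^c=(H_1,H_2,\dots,H_n)$ carries the set-ordered $X$-type coloring, so each $H_k$ has a bipartition $(X_k,Y_k)$ with $\max f_k(X_k)<\min f_k(Y_k)$. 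First I would fix notation for how the colors of a single coinciding step $H'\odot H_k$ are assembled from $\theta_j$ on $H'$ and $g_k$ on $H_k$, recording precisely which vertex of $H_k$ is glued to which vertex of $H'$ and checking that the induced coloring on $H'\odot H_k$ is still a proper total coloring (this is exactly the color-agreement requirement built into the definition of $(\theta,\odot)$ in Step~1 and Step~2).

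Next I would set up the induction. The base case $n=1$ is just the statement that a single base-vector $H_1$ already admits the set-ordered $X$-type coloring, which is the hypothesis. For the inductive step, suppose the partial coincided graph $G_{k-1}=(\theta_j,\odot)^{k-1}_{i=1}a_iH_i$ admits a set-ordered $W$-type coloring with bipartition $(A_{k-1},B_{k-1})$ satisfying $\max(A_{k-1})<\min(B_{k-1})$. I would then show that when $H_k$ is attached by vertex-coinciding a vertex of $H_k$ with a vertex of $G_{k-1}$, one can choose the gluing vertex and a shift/relabelling of $g_k$ (permissible because $C_{set}$ is closed under the equivalence of set-ordered labellings, and because set-ordered labellings admit the standard "slide the $Y$-interval" freedom) so that the entire $X$-side of $H_k$ lands below the entire $Y$-side of $H_k$, and moreover the two bipartition classes of $H_k$ merge consistently with $A_{k-1}$ and $B_{k-1}$ — i.e. $X$-vertices of $H_k$ join $A_{k-1}$, $Y$-vertices join $B_{k-1}$, and the max of the combined $X$-side is still strictly below the min of the combined $Y$-side. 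This is the place where the hypothesis "$C_{set}$ consists of equivalent set-ordered graceful labellings" does the real work: it guarantees enough room to translate the color intervals of successive base-vectors so that the strict inequality $\max(A)<\min(B)$ is preserved at every stage.

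The main obstacle I expect is precisely this interval-alignment bookkeeping: when many copies $a_iH_i$ are coincided, the edge-color sets (which are intervals like $[1,q]$ or $[1,2q-1]^o$ depending on the $W$-type) and the two vertex-intervals must be made to nest without collision, and one must verify that the gluing vertices can always be taken to be the extreme vertices of the appropriate class so that no $Y$-vertex of one piece ends up below an $X$-vertex of another. I would handle this by choosing, at each step, the coincided vertex of $G_{k-1}$ to be a vertex of maximal color in $A_{k-1}$ (or minimal in $B_{k-1}$, depending on which side $H_k$ is to be appended to) and shifting $g_k$ so its $X$-interval abuts from above. Once the alignment is established for one step, it propagates by the induction hypothesis, and after $n$ steps every graph of $\textbf{\textrm{L}}(\textbf{\textrm{C}}_{set}(\theta,\odot)\textbf{H})$ inherits a bipartition witnessing the set-ordered property; translating this set-ordered coloring into any particular set-ordered $W$-type coloring is then immediate from the equivalence results quoted above, and the examples in Fig.~\ref{fig:set-ordered-grow}, Fig.~\ref{fig:set-ordered-grow-0} and Fig.~\ref{fig:set-ordered-grow-1} serve as the concrete illustration of the construction.
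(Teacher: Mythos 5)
The paper itself gives no formal proof of this theorem: it is asserted (with the $^*$ marking) and supported only by the worked example in Fig.~\ref{fig:set-ordered-grow}, Fig.~\ref{fig:set-ordered-grow-0} and Fig.~\ref{fig:set-ordered-grow-1}, so the only in-paper material to compare against is the analogous constructions in the proofs of Lemma~\ref{thm:H-connecte-two-graphs} and Theorem~\ref{thm:induction-more-graphs}. Your overall shape — induct over the pieces, keep the $X$-parts in a low block and the $Y$-parts in a high block, and at the end invoke the equivalences between set-ordered labellings/colorings to pass to any set-ordered $W$-type — is the right one, and the final equivalence step is unobjectionable. But the inductive step, which is where the theorem actually lives, has a genuine gap as you have written it.

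Concretely: your alignment rule is to coincide a maximal-colored vertex of $A_{k-1}$ with a vertex of $H_k$ and to shift the whole of $g_k$ so that the $X$-interval of $H_k$ ``abuts from above'' $\max(A_{k-1})$, while the coloring of $G_{k-1}$ stays fixed. This does not preserve the set-ordered inequality: the induction hypothesis only gives $\min(B_{k-1})>\max(A_{k-1})$, so the newly raised $X$-block of $H_k$ can immediately collide with $B_{k-1}$. The remedy is not to translate whole pieces upward but to let the $X$-parts of all pieces overlap in one common low block — repeated vertex colors are explicitly allowed in the total colorings of Definition~\ref{defn:new-graceful-strongly-colorings} — and to translate only the $Y$-side colors of each piece. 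Second, and more importantly, you never verify the defining edge-color condition of the target coloring, e.g.\ $f(E(G))=[1,q]$ with $q=\sum_k a_k|E(H_k)|$ for a set-ordered gracefully total coloring: vertex-coinciding creates no new edges, so the edge-color sets of the pieces must tile $[1,q]$ exactly. In the paper's analogous arguments this is what the bookkeeping is for — the $Y$-side vertex colors (hence all edge colors) of each piece are shifted by the cumulative edge count of the pieces already placed, and only afterwards does one check that coinciding vertices with equal colors can still be chosen so the result is connected and properly colored. Without carrying out this edge-color tiling, your induction yields at best a proper set-ordered total coloring, not a set-ordered $W$-type coloring, so the argument is incomplete as it stands.
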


\begin{figure}[h]
\centering
\includegraphics[width=16.2cm]{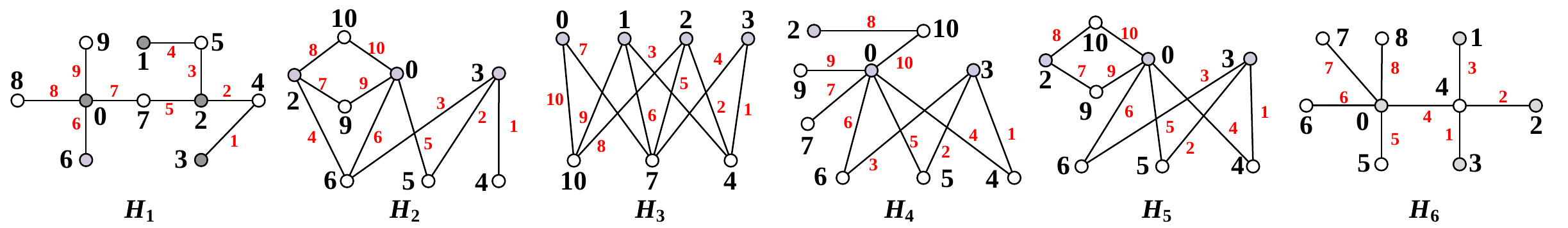}\\
\caption{\label{fig:set-ordered-grow} {\small Each connected graph $H_i$ of a base $\textbf{H}=(H_1,H_2,\dots ,H_6)$ admits a set-ordered graceful labelling $f_i$ with $i\in [1,6]$.}}
\end{figure}

\begin{figure}[h]
\centering
\includegraphics[width=16.2cm]{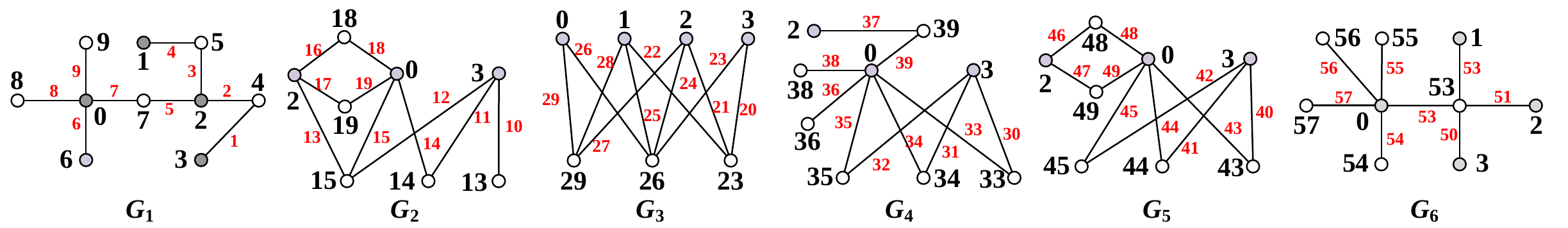}\\
\caption{\label{fig:set-ordered-grow-0} {\small Each connected graph $G_i$ admits a set-ordered labelling $g_i$ obtained by $\theta(f_i)$ with $i\in [1,6]$, each $f_i$ is shown in Fig.\ref{fig:set-ordered-grow}.}}
\end{figure}

\begin{figure}[h]
\centering
\includegraphics[width=15cm]{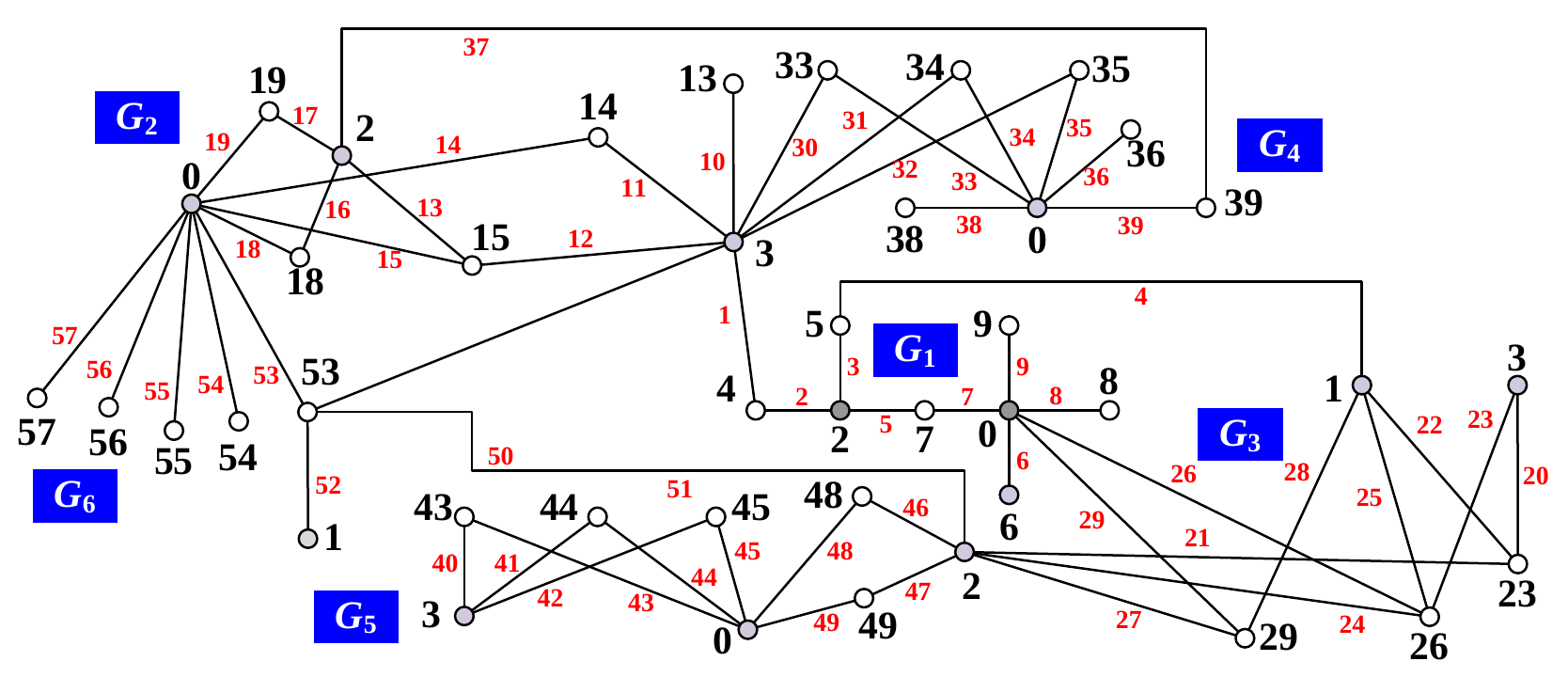}\\
\caption{\label{fig:set-ordered-grow-1} {\small A connected graph $G=(\theta,\odot)^6_{k=1}H_k$ made by doing the operation $(\theta,\odot)$ to the graphic lattice base $\textbf{H}=(H_1,H_2,\dots ,H_6)$ and graphs $G_1,G_2,\dots ,G_6$ shown in Fig.\ref{fig:set-ordered-grow} and Fig.\ref{fig:set-ordered-grow-0}.}}
\end{figure}

In Fig.\ref{fig:set-ordered-grow-1}, a connected graph $G=(\theta,\odot)^6_{k=1}H_k$ admits a set-ordered gracefully total coloring $f$ since $f(x)=f(y)$ for some two vertices $x,y\in V(G)$.

\textbf{Edge joining operation. }There is another ordered graph operation ``$(\varphi,\ominus)$'' defined as: Let a forest $T$ of $m$ vertices $x_1,x_2,\dots ,x_m$ admit a (flawed) $W$-type total labelling $f$, and let a base $\textbf{H}^c=(H_1,H_2,\dots ,H_n)$ with disjoint connected graphs $H_1,H_2,\dots ,H_n$ and $m\leq n$ admit a $W$-type labelling $g$. We do a transformation $\varphi_j$ to two labellings $f$ and $g$ of $T$ and $\textbf{H}^c$ respectively, so we get $T'~(\cong T)$ and $\textbf{H}'=(H'_1,H'_2,\dots ,H'_n)$~$(H'_i\cong H)$ admit the labelling $\varphi_j$ in common. Suppose that $\varphi_j(y_i)=\varphi_j(x_i)$ for some $y_i\in V(H'_i)$ and $x_i\in V(T')$ with $i\in [1,n]$. We join $y_i\in V(H'_i)$ with some $x_{i_j}\in V(T')$ by an edge $y_ix_{i_j}$ with $i\in [1,n]$, and such that the resulting graph, denoted as $T'\ominus ^n_{k=1}H'_k$, is connected and admits a $W$-type total labelling. The above process is written as $T(\varphi_j,\ominus)^n_{k=1}H_k$, see examples shown in Fig.\ref{fig:edge-join-0} and Fig.\ref{fig:edge-join-1}. Let $F$ be a set containing trees and forests admitting (flawed) $W$-type total labellings, and let $C_{set}$ be a set of graph labellings equivalent to set-ordered graceful labellings. So, we get a \emph{$(\varphi,\ominus)$-graphic lattice}
\begin{equation}\label{eqa:edge-2-operations-graphic-lattice}
{
\begin{split}
\textbf{\textrm{L}}(\textbf{H}^c(\varphi,\ominus)\textbf{\textrm{C}}_{set})=\{T(\varphi_j,\ominus)^n_{k=1}H_k:~\varphi_j\in C_{set},~T\in F\}
\end{split}}
\end{equation} with the graphic lattice base $\textbf{H}^c=(H_1,H_2,\dots ,H_n)$, where $C_{set}$ is a set of graph labellings/labellings.

\begin{figure}[h]
\centering
\includegraphics[width=16.2cm]{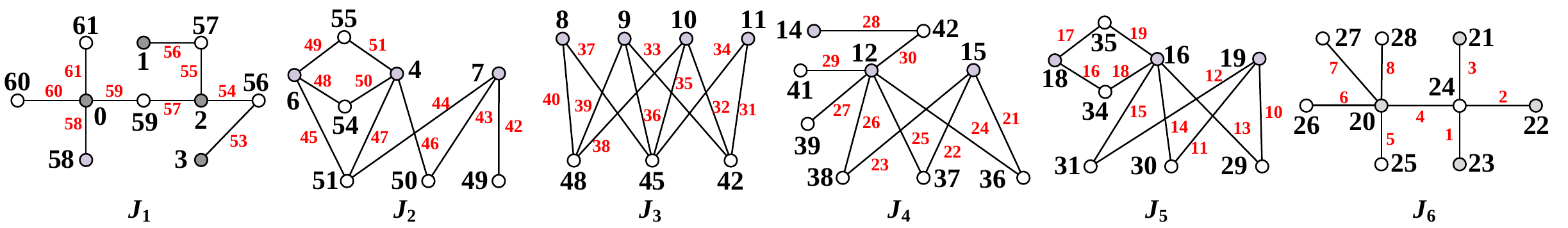}\\
\caption{\label{fig:edge-join-0} {\small Each connected graph $J_i$ admits a set-ordered labelling $h_i$ obtained by $\theta(f_i)$ with $i\in [1,6]$, each $f_i$ is shown in Fig.\ref{fig:set-ordered-grow}.}}
\end{figure}

\begin{figure}[h]
\centering
\includegraphics[width=16cm]{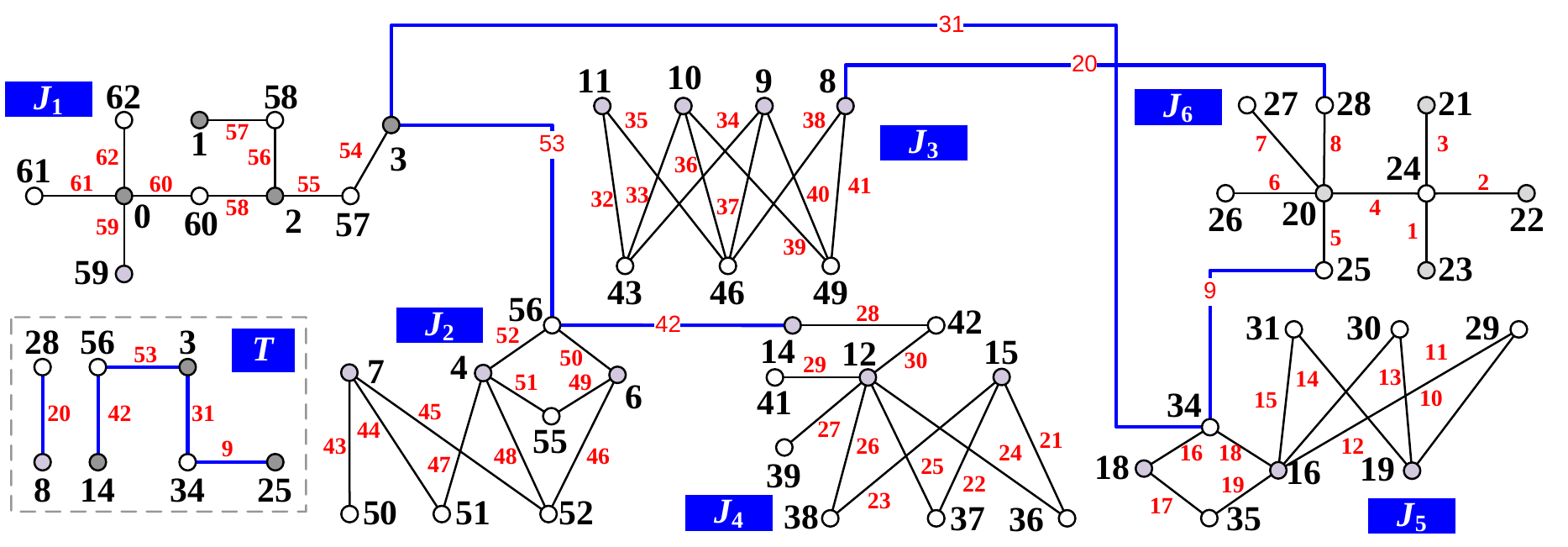}\\
\caption{\label{fig:edge-join-1} {\small A connected graph $G=T(\varphi_j,\ominus)^6_{k=1}H_k$ made by doing the operation $(\theta,\ominus)$ to the graphic lattice base $\textbf{H}^c=(H_1,H_2,\dots ,H_6)$ shown in Fig.\ref{fig:set-ordered-grow} and six graphs $J_1,J_2,\dots ,J_6$ shown in Fig.\ref{fig:edge-join-0}.}}
\end{figure}

In Fig.\ref{fig:edge-join-1}, a connected graph $G=T(\varphi_j,\ominus)^6_{k=1}H_k$ admits a set-ordered graceful labelling. Observing Fig.\ref{fig:edge-join-1} carefully, it is not hard to see that there are two or more connected graphs, like $G$, made by doing the operation $(\theta,\ominus)$ to the graphic lattice base $\textbf{H}^c=(H_1,H_2,\dots ,H_6)$ shown in Fig.\ref{fig:set-ordered-grow} and six graphs $J_1,J_2,\dots ,J_6$ shown in Fig.\ref{fig:edge-join-0} and admitting set-ordered graceful labellings.

\begin{thm} \label{thm:edge-2-operations-graphic-lattices}
$^*$ Suppose that each set-ordered $W$-type labelling of $C_{set}$ is equivalent to a set-ordered graceful labelling, and the graphic lattice base $\textbf{H}^c=(H_1,H_2,\dots ,H_n)$ admits a flawed set-ordered $W$-type labelling. Then each set $\{T(\varphi_j,\ominus)^n_{k=1}H_k\}$ contains at least a connected graph admitting a $W$-type labelling.
\end{thm}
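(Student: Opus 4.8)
The plan is to reduce everything to the known equivalence machinery for set-ordered graceful labellings and to an explicit construction of the joining edges. First I would fix the hypotheses carefully: pick an arbitrary transformation $\varphi_j\in C_{set}$, so that $\varphi_j$ is (equivalent to) a set-ordered graceful labelling scheme; pick an arbitrary forest $T\in F$ on $m$ vertices $x_1,\dots,x_m$ carrying a flawed $W$-type total labelling $f$; and use the assumption that the base $\textbf{H}^c=(H_1,\dots,H_n)$ admits a flawed set-ordered $W$-type labelling $g$. The first real step is to apply $\varphi_j$ to the labellings $f$ of $T$ and $g$ of $\textbf{H}^c$ simultaneously. Because $\varphi_j$ is a transformation carrying one set-ordered $W$-type labelling to another (this is exactly the equivalence property guaranteed by membership in $C_{set}$, used in the same way as in Theorem~\ref{thm:two-operations-graphic-lattices} and in the results cited from \cite{Yao-Liu-Yao-2017}), we obtain a copy $T'\cong T$ and copies $H'_i\cong H_i$ with bipartitions $(X',Y')$ and $(X'_i,Y'_i)$ such that the colour classes are arranged in the ``set-ordered'' staircase pattern, i.e. $\max\varphi_j(X')<\min\varphi_j(Y')$ and similarly on each $H'_i$.

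The second step is to exhibit the joining edges so that the resulting graph $T'\ominus^n_{k=1}H'_k$ is connected \emph{and} the union labelling extends to a genuine $W$-type total labelling. Here I would use the set-ordered structure to match the ``gap'' values: since both $T'$ and the base are in set-ordered position, there is for each $i\in[1,n]$ a vertex $y_i\in V(H'_i)$ with $\varphi_j(y_i)$ equal to some vertex colour $\varphi_j(x_{i_j})$ occurring on $T'$ — this is the $\varphi_j(y_i)=\varphi_j(x_i)$ hypothesis built into the operation $(\varphi,\ominus)$ — and I join $y_i$ to $x_{i_j}$. Connectedness is immediate because $T'$ is connected (or at worst a forest that the edges of $E^*$-type plus the $H'_i$-attachments glue together, exactly as in Definition~\ref{defn:flawed-odd-graceful-labelling}) and every $H'_i$ acquires at least one edge to $T'$. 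The edge labels $\varphi_j(y_ix_{i_j})$ are then forced by the $W$-type rule (difference, sum-mod-$q$, magic constant, etc., according to which $W$ is in force), and the set-ordered placement is precisely what makes these new edge labels fall into the required consecutive block without collision; this is the same bookkeeping that underlies Theorem~\ref{thm:felicitous-graphic-lattice}, Theorem~\ref{thm:main-theorem3} and Theorem~\ref{thm:total-graceful-corollary}, so I would invoke those as templates rather than redo the interval arithmetic from scratch.

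The third step is to verify that at least one choice among the possible joinings yields a connected graph whose induced total labelling is of $W$-type, not merely flawed. The flaw in $g$ is exactly the disconnectedness of $\bigcup_i H'_i$; once the edges $y_ix_{i_j}$ are added, that flaw is repaired, and the set-ordered hypothesis guarantees that the new edge colours extend the edge-colour set to the full required set (an interval $[1,q]$, an odd-interval, or a $\bmod$-class, depending on $W$). Since the statement only asks for \emph{at least one} graph in the set $\{T(\varphi_j,\ominus)^n_{k=1}H_k\}$, it suffices to make one compatible choice of the attachment vertices $y_i$ and $x_{i_j}$; I would make the canonical choice that respects the bipartition order (attach the $X'_i$-part of each $H'_i$ to a $Y'$-vertex of $T'$, or vice versa, whichever keeps $\max\varphi_j(X)<\min\varphi_j(Y)$ on the whole graph), and then check the three $W$-type constraints $|\theta(V)|=p$, $|\theta(E)|=q$, and the edge-colour-set condition.

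The hard part will be step two: controlling the new edge labels so that they neither repeat an existing edge label nor fall outside the prescribed edge-colour set, uniformly over all admissible $W$-types in $C_{set}$. For a single named labelling (graceful, felicitous, edge-magic) this is the content of the cited theorems and is routine; the subtlety here is that $C_{set}$ is only assumed to consist of labellings \emph{equivalent} to set-ordered graceful, so I must argue that the equivalence is label-preserving enough that the interval/mod arithmetic transfers. I expect this to go through because the equivalence in \cite{Yao-Liu-Yao-2017} is realised by explicit affine-type transformations on colour values, which send consecutive blocks to consecutive blocks; but writing that transfer carefully, and checking that the attachment vertices $y_i$ with the matching colour always exist (rather than only generically), is where the real work lies.
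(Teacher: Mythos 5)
The paper itself gives no proof of Theorem~\ref{thm:edge-2-operations-graphic-lattices}: it is stated (marked $^*$ as a new result) immediately after the description of the operation $(\varphi,\ominus)$ and is supported only by the worked example in Fig.\ref{fig:set-ordered-grow}, Fig.\ref{fig:edge-join-0} and Fig.\ref{fig:edge-join-1}, so there is no paper proof to compare yours against; your proposal has to stand on its own, and as written it does not close the two points you yourself flag as ``the real work''. First, the existence of attachment vertices $y_i\in V(H'_i)$ and $x_{i_j}\in V(T')$ with $\varphi_j(y_i)=\varphi_j(x_{i_j})$ is treated as ``built into the operation,'' but the theorem is precisely the claim that the operation can be carried out non-vacuously — that some choice of joinings produces a connected graph admitting a $W$-type labelling — so assuming the matches exist begs the question. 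Second, and more seriously, the edge-label accounting is not done and the templates you invoke do not cover it: Theorems~\ref{thm:felicitous-graphic-lattice}, \ref{thm:main-theorem3} and \ref{thm:total-graceful-corollary} all concern the vertex-coinciding operation $\odot$, which creates no new edges, whereas here each joining edge $y_ix_{i_j}$ increases $q$ and must itself receive a label, so the target edge-label set of $T'\ominus^n_{k=1}H'_k$ (an interval $[1,q']$, odd-interval, or residue class with $q'=q_T+\sum_i q_i+n$) is strictly larger than the union of the pieces' label sets, and it must be hit exactly.

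This is where applying one common transformation $\varphi_j$ to both $f$ and $g$ and then joining at equal colours cannot suffice by itself: to make the inherited edge labels of $T'$ and of the $H'_i$ disjoint and to leave exactly the right gap for the $n$ new edge labels, the components generally need different shifts relative to one another. The machinery in the paper that actually does this kind of work is Lemma~\ref{thm:H-connecte-two-graphs} and Theorem~\ref{thm:induction-more-graphs}, where the second component's labels are translated by $m+1+f_1(x_{1,s_1})$, the first component's $Y$-labels by $m+q_2$, and even then auxiliary vertices $z_1,\dots,z_m$ are sometimes needed to realize the missing edge labels (the joining structure ``is a hypergraph sometimes''). A workable proof of the present theorem would follow that lemma's pattern — use the equivalence hypothesis to pass to a flawed set-ordered graceful labelling of $\textbf{H}^c$ (Theorem~\ref{thm:connection-flawed-labellings}), then explicitly re-scale and shift the labelling of $T$ against that of the base so that the vertex colours needed for the joins exist, the old edge labels occupy two disjoint blocks, and the $n$ new edges can be given exactly the intermediate block, and finally transport the result back through the affine equivalence. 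Your sketch points in that direction but, by deferring the interval arithmetic to $\odot$-based theorems where no new edge labels arise, it leaves the essential step unproved.
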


\subsection{Graph homomorphism lattices}

Graph homomorphism lattices are like graphic lattices. Let $\textrm{H}_{\textrm{om}}(H,W)$ be the set of all $W$-type totally-colored graph homomorphisms $G\rightarrow H$. For a fixed $W_k$-type graph homomorphism, suppose that there are mutually different total colorings $g_{k,1},g_{k,2},\dots, g_{k,m_k}$ of the graph $H$ to form $W_k$-type graph homomorphisms $G\rightarrow H_{k,i}$ with $i\in [1,m_k]$, where $H_{k,i}$ is a copy of $H$ and colored by a total coloring $g_{k,i}$. Thereby, we have the sets $\textrm{H}_{\textrm{om}}(H_{k,i},W_k)$ with $i\in [1,m_k]$, and $\textrm{H}_{\textrm{om}}(H_{k},W_k)=\bigcup^{m_k}_{i=1}\textrm{H}_{\textrm{om}}(H_{k,i},W_k)$. We get a \emph{$W_k$-type totally-colored graph homomorphism lattice} as follows
\begin{equation}\label{eqa:Wk-type-graph homomorphism-lattice}
\textbf{\textrm{L}}(\textbf{\textrm{W}}_k,\textbf{\textrm{H}}^k_{\textrm{om}})=\Biggr \{\bigcup^{m_k}_{i=1}a_i(G\rightarrow H_{k,i}):a_k\in \{0,1\}; H_{k,i}\in \textrm{H}_{\textrm{om}}(H_{k},W_k)\Biggr \}
\end{equation}
with $\sum^{m_k}_{i=1}a_i=1$ and the base $\textbf{\textrm{H}}^k_{\textrm{om}}=(H_{k,i})^{m_k}_{i=1}$.

For example, as a $W_k$-type totally-colored graph homomorphism is a set-ordered gracefully graph homomorphism, $G$ admits a set-ordered graceful total coloring $f$ and $H$ admits a set-ordered gracefully total coloring $g_k$ in a set-ordered gracefully graph homomorphism $G\rightarrow H_k$. Thereby, a $W_k$-type totally-colored graph homomorphism lattice may be feasible and effective in application. In real computation, finding all of mutually different set-ordered gracefully total colorings of the graph $H$ is a difficult math problem, since there is no polynomial algorithm for this problem.

Notice that $\textrm{H}_{\textrm{om}}(H,W)=\bigcup^M_{k=1}\bigcup^{m_k}_{i=1}\textrm{H}_{\textrm{om}}(H_{k,i},W_k)$, where $M$ is the number of all $W$-type totally-colored graph homomorphisms, immediately, we get a \emph{$W$-type totally-colored graph homomorphism lattice}
\begin{equation}\label{eqa:total-type-graph-homomorphism-lattice}
{
\begin{split}
\textbf{\textrm{L}}(\textbf{\textrm{W}},\textbf{\textrm{H}}_{\textrm{om}})=\bigcup ^M_{k=1}\textbf{\textrm{L}}(\textbf{\textrm{W}}_k,\textbf{\textrm{H}}^k_{\textrm{om}})
\end{split}}
\end{equation}
with the base $\textbf{\textrm{H}}_{\textrm{om}}=((H_{k,i})^{m_k}_{i=1})^M_{k=1}$.

\subsection{Graphic lattice homomorphisms}

Let $\textrm{\textbf{G}}=(G_k)^m_{k=1}$ and $\textrm{\textbf{H}}=(H_k)^m_{k=1}$ be two \emph{bases}, and let $\theta_k:V(G_k)\rightarrow V(H_k)$ be a $W_k$-type totally-colored graph homomorphism with $k\in [1,m]$, and let $(\bullet)$ be a graph operation on graphs. Suppose that $F$ and $J$ are two sets of graphs, such that each graph $G\in F$ corresponds a graph $H\in J$, and there is a $W_k$-type totally-colored graph homomorphism $\theta_{G,H}:V(G)\rightarrow V(H)$. We have the following graphic lattices:
\begin{equation}\label{eqa:general-operation}
{
\begin{split}
\textbf{\textrm{L}}(\textbf{\textrm{F}}(\bullet)\textbf{\textrm{G}})=\left \{(\bullet)^m_{i=1}a_iG_i:~a_i\in Z^0;G_i\in \textrm{\textbf{G}}\right \},~\textbf{\textrm{L}}(\textbf{\textrm{J}}(\bullet)\textbf{\textrm{H}})=\left \{(\bullet)^m_{j=1}b_jH_j:~b_j\in Z^0;H_j\in \textrm{\textbf{H}}\right \}
\end{split}}
\end{equation} with $\sum^m_{i=1}a_i\geq 1$ and $\sum^m_{j=1}b_j\geq 1$. Let $\pi=(\bigcup^m_{k=1} \theta_k)\cup (\bigcup _{G\in F,H\in J}\theta_{G,H})$. We have a \emph{$W$-type graphic lattice homomorphism}
\begin{equation}\label{eqa:general-graphic lattice-homomorphism}
\pi:\textbf{\textrm{L}}(\textbf{\textrm{F}}(\bullet)\textbf{\textrm{G}})\rightarrow \textbf{\textrm{L}}(\textbf{\textrm{J}}(\bullet)\textbf{\textrm{H}}).
\end{equation}

In particular cases, we have: (1) The operation $(\bullet)=\ominus$ is an operation by joining some vertices $x_{k,i}$ of $G_i$ with some vertices $y_{k,j}$ of $G_j$ together by new edges $x_{k,i}y_{k,j}$ with $k\in [1,a_k]$ and $a_k\geq 1$, the resultant graph is denoted as $G_i\ominus G_j$, called \emph{edge-joined graph}. (2) The operation $(\bullet)=\odot$ is an operation by coinciding a vertex $u_{k,i}$ of $G_i$ with some vertex $v_{k,j}$ of $G_j$ into one vertex $u_{k,i}\odot v_{k,j}$ for $k\in [1,b_k]$ with integer $b_k\geq 1$, the resultant graph is denoted as $G_i\odot G_j$.

Thereby, we get an edge-joined graph $H_i\ominus H_j$ since $\theta_k(x_{k,i})\in V(H_i)$ and $\theta_k(y_{k,j})\in V(H_j)$, $\theta_k(x_{k,i}y_{k,j})\in E(H_i\ominus H_j)$ and a $W_k$-type totally-colored graph homomorphism $\theta_k:V(G_i\ominus G_j)\rightarrow V(H_i\ominus H_j)$. Similarly, we have another $W_k$-type totally-colored graph homomorphism $\phi_k:V(G_i\odot G_j)\rightarrow V(H_i\odot H_j)$. In totally, we have two $W_k$-type totally-colored graph homomorphisms
\begin{equation}\label{eqa:c3xxxxx}
{
\begin{split}
\theta:\ominus^m_{i=1} V(G_i)\rightarrow \ominus^m_{i=1} V(H_i),\quad \phi:\odot^m_{i=1} V(G_i)\rightarrow \odot^m_{i=1} V(H_i).
\end{split}}
\end{equation}

We have two graphic lattices based on the operation ``$\ominus$'':
\begin{equation}\label{eqa:graphic lattice-homomorphism-11}
\textbf{\textrm{L}}(\ominus\textbf{\textrm{G}})=\left \{\ominus ^m_{k=1}a_kG_k:~a_k\in Z^0;G_k\in \textrm{\textbf{G}}\right \},~\textbf{\textrm{L}}(\ominus\textbf{\textrm{H}})=\left \{\ominus ^m_{k=1}b_kH_k:~b_k\in Z^0;H_k\in \textrm{\textbf{H}}\right \}
\end{equation}
with $\sum^m_{k=1}a_k\geq 1$ and $\sum^m_{k=1}b_k\geq 1$.

The above works enable us to get a homomorphism $\theta:\textbf{\textrm{L}}(\ominus\textbf{\textrm{G}})\rightarrow \textbf{\textrm{L}}(\ominus\textbf{\textrm{H}})$, called \emph{$W$-type graphic lattice homomorphism}. Similarly, we have another $W$-type graphic lattice homomorphism $\pi':\textbf{\textrm{L}}(\odot\textbf{\textrm{G}})\rightarrow \textbf{\textrm{L}}(\odot\textbf{\textrm{H}})$ by the following two graphic lattices based on the operation ``$\odot$''
\begin{equation}\label{eqa:graphic lattice-homomorphism-33}
\textbf{\textrm{L}}(\odot\textbf{\textrm{G}})=\left \{\odot ^m_{k=1}c_kG_k:~c_k\in Z^0;G_k\in \textrm{\textbf{G}}\right \},~\textbf{\textrm{L}}(\odot\textbf{\textrm{H}})=\left \{\odot ^m_{k=1}d_kH_k:~d_k\in Z^0;H_k\in \textrm{\textbf{H}}\right \}
\end{equation}
with $\sum^m_{k=1}c_k\geq 1$, and $\sum^m_{k=1}d_k\geq 1$.

Notice that there are mixed operations of the operation ``$\ominus$'' and the operation ``$\odot$'', so we have more complex $W$-type graphic lattice homomorphisms. If two \emph{bases} $\textbf{\textrm{G}}_{\textrm{group}}=\{F_f(G);\oplus\}$ and $\textbf{\textrm{H}}_{\textrm{group}}=\{F_h(H)$; $\oplus\}$ are two every-zero graphic groups, so we have an every-zero graphic group homomorphism $\varphi:\textbf{\textrm{G}}_{\textrm{group}}\rightarrow \textbf{\textrm{H}}_{\textrm{group}}$ and two \emph{every-zero graphic group homomorphisms}:
$$\textbf{\textrm{L}}(\ominus\textbf{\textrm{G}}_{\textrm{group}})\rightarrow \textbf{\textrm{L}}(\ominus\textbf{\textrm{H}}_{\textrm{group}}),\quad \textbf{\textrm{L}}(\odot\textbf{\textrm{G}}_{\textrm{group}})\rightarrow \textbf{\textrm{L}}(\odot\textbf{\textrm{H}}_{\textrm{group}}).$$

\subsection{Dynamic graph lattices}
At time step $t$, let $F_{p,q}(t)$ be a \emph{dynamic set} of graphs $G(t)$ with $r~(\leq p)$ vertices and $s~(\leq q)$ edges and let $\textbf{\textrm{H}}(t)=(H_k(t))^m_{k=1}$ be a \emph{dynamic base}, where each $H_k(t)$ admits a $W_k(t)$-type coloring $f_{k,t}$ and $H_k(t_r)\cong H_k(t_s)$ for $t_r\neq t_s$. By the vertex-coinciding operation ``$\odot$'', we have a \emph{dynamic graph lattice}
\begin{equation}\label{eqa:dynamic-graph-lattice}
\textbf{\textrm{L}}(\textbf{\textrm{H}}\odot F_{p,q})(t)=\{G(t)\odot ^m_{i=1}a_iH_i(t):~a_i(t)\in Z^0,~H_i(t)\in \textbf{\textrm{H}}(t),~G(t)\in F_{p,q}(t)\}.
\end{equation}
with $\sum ^m_{i=1}a_i(t)\geq 1$.

\begin{rem}\label{rem:dynamic-graph-lattice}
For $t_r\neq t_s$, it is not hard to understand $\textbf{\textrm{L}}(\textbf{\textrm{H}}\odot F_{p,q})(t_r)\neq \textbf{\textrm{L}}(\textbf{\textrm{H}}\odot F_{p,q})(t_s)$, in general, since it is related with $G(t_r)\neq G(t_s)$ or $f_{k,t_r}\neq f_{k,t_s}$. We can use a dynamic graph lattice $\textbf{\textrm{L}}(\textbf{\textrm{H}}\odot F_{p,q})(t)$ to encrypt a dynamic network $N(t)$ by the method ``$N(t)\odot ^m_{i=1}a_iH_i(t)$''. Also, we use a graph operation ``$(\diamond)$'' to a \emph{general dynamic graph lattice}
\begin{equation}\label{eqa:general-dynamic-graph-lattice}
\textbf{\textrm{L}}(\textbf{\textrm{H}}(\diamond)F_{p,q})(t)=\{G(t)(\diamond)^m_{i=1}a_iH_i(t):~a_i(t)\in Z^0,~H_i(t)\in \textbf{\textrm{H}}(t),~G(t)\in F_{p,q}(t)\}.
\end{equation}
with $\sum ^m_{i=1}a_i(t)\geq 1$, where each $H_k(t)$ of the base $\textbf{\textrm{H}}(t)=(H_k(t))^m_{k=1}$ admits a $W_k(t)$-type coloring $f_{k,t}$, and it is allowed that $f_{k,t_r}=f_{k,t_s}$, or $H_k(t_r)\not \cong H_k(t_s)$ for some $t_r\neq t_s$.\qqed
\end{rem}

\subsection{Network lattices made by communities}
Suppose that each community-vector $C_k(t)$ of a network lattice base $\textbf{C}_{o}(t)=(C_1(t)$, $C_2(t)$, $\dots $, $C_n(t))$ made by disjoint connected networks $C_1(t)$, $C_2(t)$, $\dots $, $C_n(t)$ at time step $t$, where each community-vector $C_k(t)$ obeys a power-law $P_k$ with $k\in [1,n]$, so we say $\textbf{C}_{o}(t)$ obeys a $(P_k)^n_1$-degree distribution. Let $G_{set}(t)$ be a set of connected graphs. At time step $t$, we have a \emph{$(\triangleleft \ominus)$-graphic lattice}
\begin{equation}\label{eqa:two-operations-graphic-lattice}
{
\begin{split}
\textbf{\textrm{L}}(\textbf{\textrm{C}}_{o}(t)(\triangleleft \ominus)\textbf{\textrm{G}}_{set}(t))=\{G(t)(\triangleleft \ominus)^n_{k=1}a_kC_k(t):~a_k\in Z^0,~G(t)\in G_{set}(t)\},
\end{split}}
\end{equation} with $\sum^n_{k=1} a_k\geq 1$ and based on the \emph{network lattice base} $\textbf{C}_{o}(t)$ with a group $P$ of power-law degree distributions, where the operation ``$(\triangleleft \ominus)$'' is a process of substituting two ends $u$ and $v$ of each edge $uv$ of $G(t)$ by two communities $C_k(t)$ and $C_j(t)$, and joining some vertices of $C_k(t)$ with some vertices of $C_j(t)$ by new edges.

In \cite{Wang-Fei-Ma-Yao-ITNEC-2020}, Wang \emph{et al.} introduced the multiple probabilistic community network model $N^c(t)$ with $n$ communities $N^c_{i}(t)$ for $i\in [1,n]$ and each community $N^c_{i}(t)$ has a probability $p_i$ in an algorithm: Add new vertices $u_i$ or new graphs $G_i$ into a community $N^c_{i}(t-1)$, use a probability $p_i$ to select object-vertices $u_{i,j}\not \in V(N^c_{i}(t-1))$ or object-edges $u_{i,j}v_{i,j}\not \in E(N^c_{i}(t-1))$ from other communities $N^c_{j}(t-1)$ with $i\neq j$, and put these $u_{i,j}$ and $u_{i,j}v_{i,j}$ into a set $F_i$, and then do an operator $O^v_i$ consisted of $u_{i,j}$ and a graph operation, or an operator $O^e_i$ consisted of $G_i$ and a graph operation to an element of $F_i$. After each of $F_i$ is implemented an operator, the result is just $N^c(t)$, called \emph{multiple probabilistic community network model} (Ref. \cite{Wang-Yao-AIP-Advances-2020}).


\section{Star-graphic lattices}

Star-graphic lattices will be made by the leaf-coinciding operation, the leaf-splitting operation and the star-type ice-flower systems made by four particular proper total colorings defined in Definition \ref{defn:combinatoric-definition-total-coloring}.

\subsection{Leaf-coinciding and leaf-splitting operations}

We will define two operations, called the colored leaf-coinciding operation ``$G_1\overline{\ominus} G_2$'', and the colored leaf-splitting operation ``$xy\prec$'', respectively. These two operations will help us to realize many useful and important star-graphic lattices.

\subsubsection{Colored leaf-coinciding operation}

We defined the so-called colored leaf-coinciding operation ``$G_1\overline{\ominus} G_2$'' on two disjoint colored graphs $G_1$ and $G_2$, each $G_i$ admits a $W$-type proper total coloring $f_i$ with $i=1,2$; and $G_1$ has an edge $u_1v_1$ with $\textrm{deg}_{G_1}(u_1)\geq 2$ and $\textrm{deg}_{G_1}(v_1)=1$; and $G_2$ has an edge $u_2v_2$ with $\textrm{deg}_{G_2}(u_2)=1$ and $\textrm{deg}_{G_2}(v_2)\geq 2$. Suppose that $f_1(u_1v_1)=f_2(u_2v_2)$, $f_1(u_1)=f_2(u_2)$ and $f_1(v_1)=f_2(v_2)$. We leaf-coincide the edge $u_1v_1$ with the edge $u_2v_2$ into one edge $xy=u_1v_1\overline{\ominus} u_2v_2$ with $x=u_1\odot u_2$ and $y=v_1\odot v_2$, the resulting graph is denoted as $G_1\overline{\ominus} G_2$ with a $W$-type proper total coloring $f=f_1\overline{\ominus} f_2$, where $f(w)=f_1(w)$ for each element $w\in [V(G_1)\cup E(G_1)]\setminus \{u_1,v_1,u_1v_1\}$, $f(z)=f_2(z)$ for $z\in [V(G_2)\cup E(G_2)]\setminus \{u_2,v_2,u_2v_2\}$, and $f(xy)=f_1(u_1v_1)=f_2(u_2v_2)$, $f(x)=f_1(u_1)=f_2(u_2)$ and $f(y)=f_1(v_1)=f_2(v_2)$.

\subsubsection{Colored leaf-splitting operation}
In the graph ``$H=G_1\overline{\ominus} G_2$'' admitting a $W$-type proper total coloring $f$, we leaf-split the edge $xy$ into two edges $u_1v_1$ and $u_2v_2$, so $H$ is leaf-split into $H(xy\prec)$ with two disjoint colored graphs $G_1$ and $G_2$, such that $G_i$ admits a $W$-type proper total coloring $f_i$ with $i=1,2$, and each $f_i$ is defined as that in the colored leaf-coinciding operation above. In a connected graph $G$ admitting a $W$-type coloring $f$, do a leaf-split operation to an edge $xy$ of $G$: First, remove the edge $xy$ from $G$, the resultant graph is denoted as $G-xy$, and then add two new vertices $x'$ and $y'$ to $G-xy$, and join $x'$ with the vertex $y$ of $G-xy$ by an edge $x'y$, and join $y'$ with the vertex $x$ of $G-xy$ by an edge $xy'$, and we color $f(x'y)=f(xy')=f(xy)$, $f(x')=f(y)$ and $f(y')=f(x)$. Hence, we get the leaf-splitting graph $G(xy\prec)$.

\subsubsection{Ice-flower systems}
\textbf{Uncolored ice-flower system.} An \emph{ice-flower system} consists of stars $K_{1,m_i}$ and the leaf-coinciding operation. A \emph{star} $K_{1,m_i}$ with $m_i\geq 2$ is a complete bipartite connected graph with its vertex set $V(K_{1,m_i})=\{x_i,y_{i,j}:j\in [1,m_i]\}$ and its edge set $V(K_{1,m_i})=\{x_iy_{i,j}:j\in [1,m_i]\}$, so $K_{1,m_i}$ has $m_i$ leaves $y_{i,1},y_{i,2},\dots ,y_{i,m_i}$ and a unique non-leaf vertex $x_i$ with degree $\textrm{deg}_{K_{1,m_i}}(v)=m_i$. We call $n$ disjoint stars $K_{1,m_1},K_{1,m_2},\dots, K_{1,m_n}$ with $n\geq 1$ and $m_i\geq 2$ as an \emph{ice-flower system}, denoted as
\begin{equation}\label{eqa:ice-flower-base}
\textbf{\textrm{K}}=(K_{1,m_1},K_{1,m_2},\dots, K_{1,m_n})=(K_{1,m_j})^n_{j=1}
\end{equation}

The leaf-coinciding operation ``$K_{1,m_i}\overline{\ominus} K_{1,m_j}$'' on two stars $K_{1,m_i}$ and $K_{1,m_j}$ of an ice-flower system $\textbf{\textrm{K}}$ defined in (\ref{eqa:ice-flower-base}) is defined as doing a leaf-coinciding operation to an edge $x_iy_{i,a}$ of $K_{1,m_i}$ and an edge $x_jy_{j,b}$ of $K_{1,m_j}$, such that two edges $x_iy_{i,a}$ and $x_jy_{j,b}$ are coincided one edge $x_ix_j=x_iy_{i,a}\overline{\ominus} x_jy_{j,b}$, $x_i=x_i\odot y_{j,b}$ and $x_j=x_j\odot y_{i,a}$. Speaking simply, removing vertices $y_{i,a}$ and $y_{j,b}$ from $K_{1,m_i}$ and $K_{1,m_j}$ respectively, and adding a new edge, denoted as $x_ix_j$, joining $x_i$ with $x_j$ together by an edge. Thereby, an uncolored ice-flower system $\textbf{\textrm{K}}$ holds the leaf-coinciding operation ``$K_{1,m_i}\overline{\ominus} K_{1,m_j}$'' for any pair of stars $K_{1,m_i}$ and $K_{1,m_j}$, we call $\textbf{\textrm{K}}$ a \emph{strongly uncolored ice-flower system}.

\textbf{Colored ice-flower system.} If each star $K_{1,m_i}$ of an ice-flower system $\textbf{\textrm{K}}$ defined in (\ref{eqa:ice-flower-base}) admits a $W_i$-type coloring $g_i$, we call $\textbf{\textrm{K}}$ a \emph{colored ice-flower system}, and rewrite it by \begin{equation}\label{eqa:colored-ice-flower-base}
\textbf{\textrm{K}}^c=(K^c_{1,m_1},K^c_{1,m_2},\dots, K^c_{1,m_n})=(K^c_{1,m_j})^n_{j=1}
\end{equation} for distinguishing. If each pair of stars $K^c_{1,m_i}$ and $K^c_{1,m_j}$ holds the leaf-coinciding operation ``$K^c_{1,m_i}\overline{\ominus} K^c_{1,m_j}$'', we call $\textbf{\textrm{K}}^c$ a \emph{strongly colored ice-flower system}. Notice that a colored ice-flower system may be not strongly, in general.

\subsection{Graceful-difference star-graphic lattices}

By Definition \ref{defn:combinatoric-definition-total-coloring}, we will make the following particular ice-flower systems: An \emph{edge-magic ice-flower system} $I_{ce}(E_{1,n}M_k)^{2n+3}_{k=1}$; a \emph{graceful-difference ice-flower system} $I_{ce}(G_{1,n}D_k)^{2n+3}_{k=1}$; the \emph{edge-difference ice-flower system} $I_{ce}(E_{1,n}D_k)^{2n+3}_{k=1}$; two \emph{felicitous-difference ice-flower systems} $I_{ce}(F_{1,n}D_k)^{2n}_{k=1}$ and $I_{ce}(SF_{1,n}D_k)^{n}_{k=1}$. Some examples about ice-flower systems are shown in Figs.\ref{fig:Dsaturated-graceful-difference-even}, \ref{fig:Dsaturated-graceful-difference-odd}, \ref{fig:Dsaturated-edge-difference-even}, \ref{fig:Dsaturated-edge-difference-odd}, \ref{fig:Dsaturated-felicitous-difference-even}, \ref{fig:Dsaturated-felicitous-difference-odd}, \ref{fig:Dsaturated-edge-magic-even} and \ref{fig:Dsaturated-edge-magic-odd}.

In the following discussion, a star $K_{1,n}$ has its own vertex set $V(K_{1,n})=\{x_0,x_i:i\in [1,n]\}$ and edge set $E(K_{1,n})=\{x_0x_i:i\in [1,n]\}$. Let $V\left (K^{(k)}_{1,n}\right )=\big \{x^k_0$, $x^k_1$, $x^k_2$, $\dots ,x^k_{n}\big \}$ and $E\left (K^{(k)}_{1,n}\right )=\left \{x^k_0x^k_j:j\in[1,n]\right \}$ be the vertex set and edge set of $k$th copy $K^{(k)}_{1,n}$ of $K_{1,n}$ with integer $k\geq 1$, respectively.

\subsubsection{Graceful-difference ice-flower systems}

A proper total coloring $\theta_k$ of a star $K_{1,n}$ is defined as: $\theta_k(x_0)=m\in [1,3n]$, $\theta_k(x_j)=m_j$ and $\theta_k(x_0x_j)=L-|m-m_j|$ with $j\in[1,n]$ and $m\in [1,3n]$, so $\big ||\theta_k(x_0)-\theta_k(x_j)|-\theta_k(x_0x_j)\big |=L$ for each edge $x_0x_j$ of $K_{1,n}$, such that $\theta_k$ is just a graceful-difference proper total coloring of $K_{1,n}$, denoted this colored star as $K^{(k)}_{1,n}=LG_{1,n}D_k$. For each fixed $m$, there are $a(m)$ groups of integers $m_1,m_2,\dots ,m_n$ from $[1,3n]$ holding $\theta_k$ to be a graceful-difference proper total coloring of $K_{1,n}$, then we get $n_{gdt}$ colored stars $LG_{1,n}D_k$ in total, where $n_{gdt}=\sum^{3n}_{m=1}a(m)$, and these colored stars form an \emph{$L$-magic graceful-difference ice-flower system} $I_{ce}(LG_{1,n}D_k)^{n_{gdt}}_{k=1}$.

In a particular graceful-difference ice-flower system $I_{ce}(G_{1,n}D_k)^{2n+3}_{k=1}$ based on the graceful-difference proper total coloring defined in Definition \ref{defn:combinatoric-definition-total-coloring}, and each $G_{1,n}D_k$ is a star $K^{(k)}_{1,n}$ admitting a graceful-difference proper total coloring $f_k$ with $k\in [1,2n+3]$. We define each graceful-difference proper total coloring $f_k$ as follows:

Case GD-1. $n=2m$. We set $f_k$ in the following two parts: (GD-1-1) $f_k(x^k_0)=k$ with $k\in [1,2m]$, $f_k(x^k_j)=4m+4-j$ with $j\in [1,2m]$, and $f_k(x^k_0x^k_j)=f_k(x^k_j)-f_k(x^k_0)=4m+4-j-i$ with $1\leq i,j\leq 2m$. So, $\big ||f_k(x^k_j)-f_k(x^k_0)|-f_k(x^k_0x^k_j)\big |=0$. If $f_k(x^k_0)=f_k(x^k_0x^k_{j'})$ happens for some $j'\in [1,2m]$ and $k\in [1,2m]$, then we recolor the edge $x^k_0x^k_{j'}$ and the vertex $x^k_{j'}$ with $f_k(x^k_0x^k_{j'})=f_k(x^k_0x^k_{2m})-1=2m+1-k$ and $f_k(x^k_{j'})=f_k(x^k_0)+f_k(x^k_0x^k_{j'})-1=2m+1$, respectively.

(GD-1-2) Set $f_k(x^k_0)=k$ with $k\in [2m+1,4m+3]$, $f_k(x^k_j)=j$ with $j\in [1,2m]$, and $f_k(x^k_0x^k_j)=f_k(x^k_0)-f_k(x^k_j)=k-j$. So, $\big ||f_k(x^k_j)-f_k(x^k_0)|-f_k(x^k_0x^k_j)\big |=0$. If we meet $f_k(x^k_0)=f_k(x^k_0x^k_{j'})$ for some $j'\in [1,2m]$ and $k\in [2m+1,4m+3]$, we reset $f_k(x^k_0x^k_{j'})=f_k(x^k_0x^k_{2m})-1=k-(2m+1)$ and $f_k(x^k_{j'})=f_k(x^k_0)-f_k(x^k_0x^k_{j'})+1=2m+1$.

Case GD-2. $n=2m+1$. We make $f_k$ in the following two parts: (GD-2-1) $f_k(x^k_0)=k$ with $k\in [1,2m+1]$, $f_k(x^k_j)=4m+5-j$ with $j\in [1,2m+1]$, and $f_k(x^k_0x^k_j)=4m+5-j-i$ with $1\leq i,j\leq 2m+1$. If we meet $2f_k(x^k_0)=f_k(x^k_{j'})$ for some $j'\in [1,2m+1]$ and $k\in [1,2m+1]$, then we recolor the edge $x^k_0x^k_{j'}$ and the vertex $x^k_{j'}$ by $f_k(x^k_0x^k_{j'})=f_k(x^k_0x^k_{2m})-1=2m+4-k$, $f_k(x^k_{j'})=f_k(x^k_0)+f_k(x^k_0x^k_{j'})=2m+4$, respectively.

(GD-2-2) Set $f_k(x^k_0)=k$ with $k\in [2m+1,4m+5]$, $f_k(x^k_j)=j$ with $j\in [1,2m+1]$, and $f_k(x^k_0x^k_j)=f_k(x^k_0)-f_k(x^k_j)=k-j$. So, $\big ||f_k(x^k_j)-f_k(x^k_0)|-f_k(x^k_0x^k_j)\big |=0$. If $f_k(x^k_0)=f_k(x^k_0x^k_{j'})$ for some $j'\in [1,2m+1]$ and $k\in [2m+1,4m+5]$, we reset $f_k(x^k_0x^k_{j'})=f_k(x^k_0x^k_{2m})-1=k-(2m+2)$ and $f_k(x^k_{j'})=f_k(x^k_0)-f_k(x^k_0x^k_{j'})=2m+2$. Here, if the case $k-(2m+2)\leq 0$ happens, so we set $f_k(x^k_{j'})=4m+5$ and $f_k(x^k_0x^k_{j'})=4m+5-f_k(x^k_0)=4m+5-k$, and $4m+5-k\neq k-j$, otherwise $2[k-(2m+2)]=1+j>0$, a contradiction.

\begin{figure}[h]
\centering
\includegraphics[width=15.6cm]{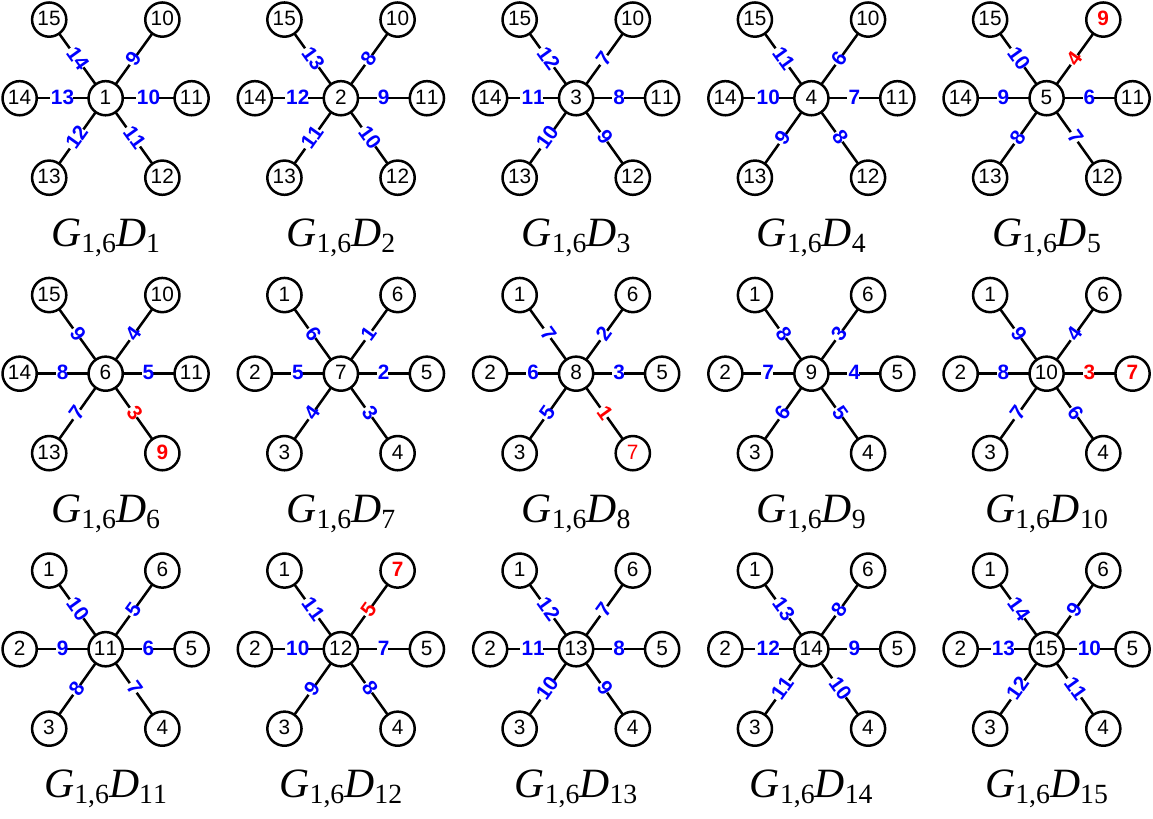}
\caption{\label{fig:Dsaturated-graceful-difference-even}{\small A graceful-difference ice-flower system $I_{ce}(G_{1,6}D_k)^{15}_{k=1}$.}}
\end{figure}

\begin{figure}[h]
\centering
\includegraphics[width=16.4cm]{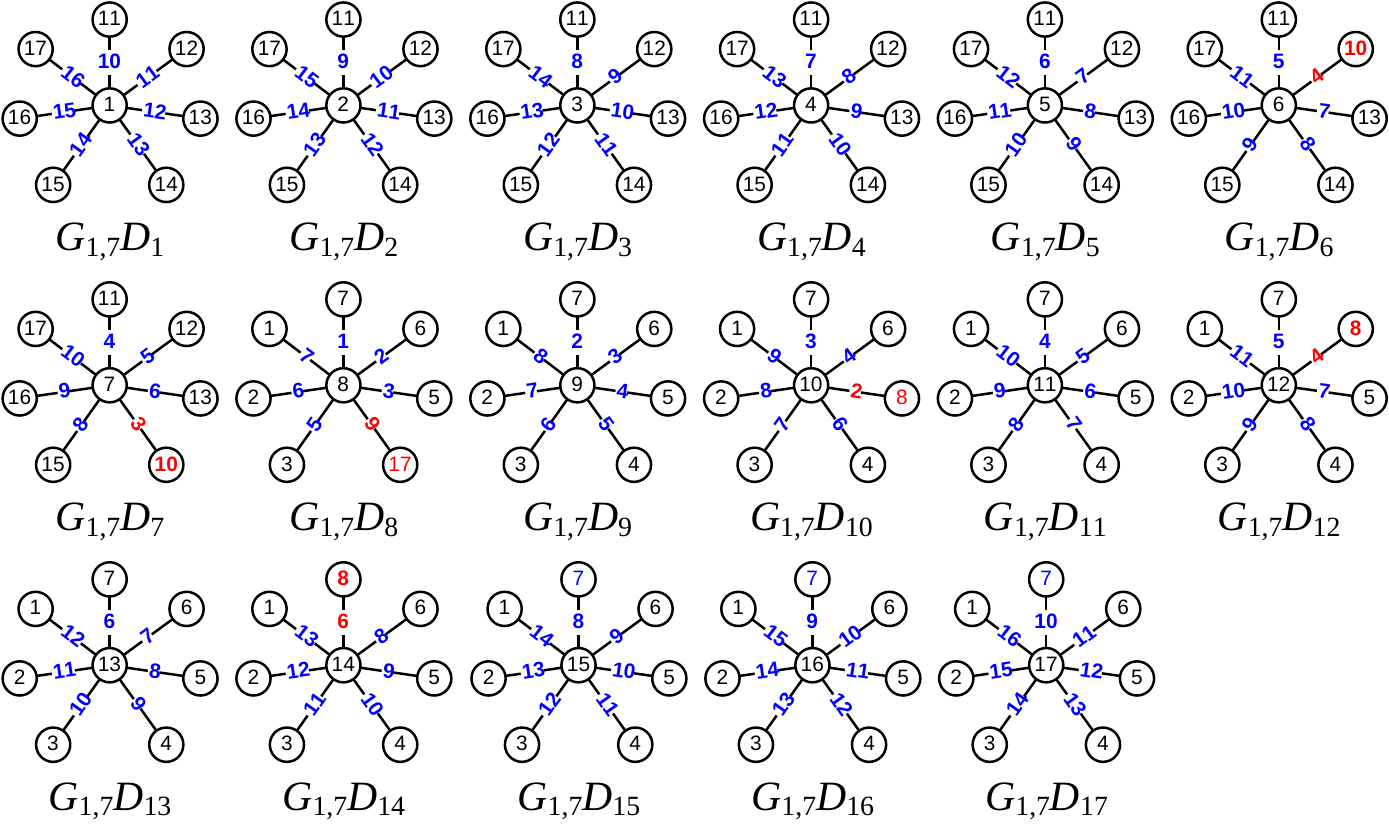}
\caption{\label{fig:Dsaturated-graceful-difference-odd}{\small A graceful-difference ice-flower system $I_{ce}(G_{1,7}D_k)^{17}_{k=1}$.}}
\end{figure}

We define a particular graceful-difference ice-flower system $I_{ce}(SG_{1,n}D_k)^{n}_{k=1})^{3n}_{k=2n+1}$ as: (pgd-1) Each star $SG_{1,n}D_k$~$\left (\cong K^{(k)}_{1,n}\right )$ admits a graceful-difference proper total coloring $\gamma_k(x^k_0)=k\in [1,n]$, $\gamma_k(x^k_j)=3n+1-j$ and $\gamma_k(x^k_0x^k_j)=3n+1-j-k$ with $j\in [1,n]$; (pgd-2) $\gamma_k(x^k_0)=k\in [2n+1,3n]$, $\gamma_k(x^k_j)=j$ and $\gamma_k(x^k_0x^k_j)=k-j$ with $j\in [1,n]$. Thereby, we have $\big ||\gamma_k(x^k_j)-\gamma_k(x^k_0)|-\gamma_k(x^k_0x^k_j)\big |=0$, and $\max\{\gamma_k(w):w\in V(SG_{1,n}D_k)\cup E(SG_{1,n}D_k)\}=3n$. For $n$ stars $SG_{1,n}D_k$ with $k\in [1,n]$, next, we do the vertex-coinciding operation to $n$ vertices $x^1_j,x^2_j,\dots ,x^n_j$ of these stars into one $y_j=x^1_j\odot x^2_j\odot \dots \odot x^n_j$ with $j\in [1,n]$, the resultant graph is just a complete bipartite graph $K_{n,n}$, immediately, we have

\begin{lem} \label{thm:graceful-difference-lemma}
$\chi''_{gdt}(K_{n,n})\leq 3n$ and $\chi''_{gdt}(G)\leq 3\max \{|X|,|Y|\}$ for each bipartite graph $G$ with vertex bipartition $(X,Y)$.
\end{lem}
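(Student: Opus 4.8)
The plan is to produce one explicit graceful-difference proper total coloring of $K_{n,n}$ that uses colors only from $[1,3n]$, and then to deduce the bound for an arbitrary bipartite $G$ by restricting that coloring to $G\subseteq K_{N,N}$. For $K_{n,n}$ I would use precisely the ice-flower construction set up in the paragraph immediately above the statement: take the $n$ disjoint colored stars $SG_{1,n}D_k\cong K^{(k)}_{1,n}$, $k\in[1,n]$, with $\gamma_k(x^k_0)=k$, $\gamma_k(x^k_j)=3n+1-j$, $\gamma_k(x^k_0x^k_j)=3n+1-j-k$, and then, for each $j\in[1,n]$, vertex-coincide the $n$ leaves $x^1_j,\dots,x^n_j$ into a single vertex $y_j$. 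Because those $n$ leaves all carry the common color $3n+1-j$, this repeated vertex-coinciding is color-consistent, and because each center $x^k_0$ meets each $y_j$ in exactly one edge, the outcome is exactly $K_{n,n}$ with parts $\{x^1_0,\dots,x^n_0\}$ and $\{y_1,\dots,y_n\}$, carrying the induced total coloring $\gamma$ with $\gamma(x^k_0)=k$, $\gamma(y_j)=3n+1-j$, $\gamma(x^k_0y_j)=3n+1-j-k$.

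The verification then splits into three routine checks that I would carry out in order. First, $\gamma$ is a proper total coloring: the two parts receive the disjoint color sets $[1,n]$ and $[2n+1,3n]$, so the endpoints of any edge differ; the edge colors at a fixed center form the set $\{3n+1-j-k:j\in[1,n]\}$ of pairwise distinct values, and likewise the edge colors at a fixed $y_j$ form $\{3n+1-j-k:k\in[1,n]\}$. Second, $\gamma$ has range inside $[1,3n]$, since $1\leq k\leq n$, $2n+1\leq 3n+1-j\leq 3n$, and $n+1\leq 3n+1-j-k\leq 3n-1$. Third, $\gamma$ is graceful-difference with magic constant $0$: from $3n+1-j\geq 2n+1>n\geq k$ one gets $|\gamma(x^k_0)-\gamma(y_j)|=3n+1-j-k=\gamma(x^k_0y_j)$, so $\big||\gamma(x^k_0)-\gamma(y_j)|-\gamma(x^k_0y_j)\big|=0$ on every edge; hence $B^*_{gdt}(K_{n,n},\gamma,3n)=0$ in the sense of Definition \ref{defn:combinatoric-definition-total-coloring}, giving $\chi''_{gdt}(K_{n,n})\leq 3n$.

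For the general bound, let $G$ be bipartite with the given bipartition $(X,Y)$ and set $N=\max\{|X|,|Y|\}$. Since $|X|\leq N$ and $|Y|\leq N$, we may identify $X$ and $Y$ with subsets of the two parts of $K_{N,N}$, so that $G$ is a subgraph of $K_{N,N}$. Restricting to $V(G)\cup E(G)$ the coloring $\gamma$ built above for the parameter $N$ yields a proper total coloring of $G$ with values still in $[1,3N]$, and every edge $uv\in E(G)\subseteq E(K_{N,N})$ still satisfies $\big||\gamma(u)-\gamma(v)|-\gamma(uv)\big|=0$, so $B^*_{gdt}(G,\gamma,3N)=0$ and therefore $\chi''_{gdt}(G)\leq 3N=3\max\{|X|,|Y|\}$.

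I do not expect a genuine obstacle: the whole argument is a direct verification. The two points that need a little care are that the simultaneous leaf-coinciding is well defined (all coincided leaves share one color) and yields $K_{n,n}$ with no multi-edges, and that restriction to a subgraph preserves the graceful-difference property — which is immediate here, the edge-function $c_\gamma$ being identically $0$ on $K_{N,N}$ and hence on each of its subgraphs. If one prefers the stricter convention in which an edge color must also differ from the colors of its two endpoints, it remains to note that $3n+1-j-k$ never equals $k$ (equality would force $j=3n+1-2k>n$) nor $3n+1-j$ (the two differ by $k\geq1$).
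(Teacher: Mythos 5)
Your proposal is correct and follows essentially the same route as the paper: the paper obtains $K_{n,n}$ by vertex-coinciding the leaves $x^1_j,\dots,x^n_j$ of the colored stars $SG_{1,n}D_k$ (with $\gamma_k(x^k_0)=k$, $\gamma_k(x^k_j)=3n+1-j$, $\gamma_k(x^k_0x^k_j)=3n+1-j-k$) exactly as you do, and the bound $\chi''_{gdt}(G)\leq 3\max\{|X|,|Y|\}$ is deduced, as in your last paragraph, by viewing $G$ as a subgraph of $K_{N,N}$ and restricting the coloring. Your explicit verification of properness, of the color range $[1,3n]$, and of the edge--endpoint distinctness is a welcome tightening of details the paper leaves implicit.
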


\begin{thm}\label{thm:mixed-difference-coloring-trees}
\cite{Wang-Su-Yao-mixed-difference-2019} If $T$ is a tree with maximum degree $\Delta$, then $\chi''_{gdt}(T)\leq 2\Delta+3$.
\end{thm}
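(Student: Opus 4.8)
The plan is to construct, for an arbitrary tree $T$ with $\Delta(T)=\Delta$, an explicit graceful-difference proper total coloring $f:V(T)\cup E(T)\to[1,2\Delta+3]$ in the sense of Definition \ref{defn:combinatoric-definition-total-coloring}; once such an $f$ is produced the bound $\chi''_{gdt}(T)\le 2\Delta+3$ is immediate from the definition, so everything reduces to the construction. I would normalise the magic constant to $k=0$, i.e.\ demand $f(uv)=|f(u)-f(v)|$ for every edge $uv$ --- the same normalisation already used above for the particular system $I_{ce}(G_{1,n}D_k)^{2n+3}_{k=1}$. Then $B^*_{gdt}(T,f,2\Delta+3)=0$ holds automatically, the adjacency condition $f(u)\ne f(v)$ is forced (otherwise the incident edge would be coloured $0\notin[1,2\Delta+3]$), and the surviving proper-total-coloring requirements become purely arithmetic conditions on the \emph{vertex} colouring: at every vertex $u$ the differences $|f(u)-f(v)|$ over $v\in N(u)$ must be pairwise distinct, and no neighbour of $u$ may be coloured $2f(u)$ or $\tfrac12 f(u)$ (when the latter is an integer). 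The task is thus to find a vertex colouring $f:V(T)\to[1,2\Delta+3]$ meeting these local constraints.

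I would obtain such an $f$ top-down. Root $T$ at a leaf and colour its vertices in BFS order; equivalently, this is a repeated leaf-coinciding operation, since $T$ is the union of the stars centred at its internal vertices glued along their shared edges, so it suffices to run a per-star graceful-difference colouring (a small ice-flower system with common magic constant $0$) compatibly at each gluing. The step to verify is local: given a vertex $u$ of degree $d\le\Delta$ whose own colour $f(u)$ and whose parent-edge difference $|f(u)-f(p)|$ are already fixed, choose colours for the remaining $d-1$ children of $u$ so that the $d$ difference-values at $u$ stay pairwise distinct, the forbidden colours $2f(u)$ and $\tfrac12 f(u)$ are avoided, and --- crucially --- each child receives a colour from which its own subtree (again of maximum degree at most $\Delta$) can still be continued. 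Counting, inside $[1,2\Delta+3]$, how many distinct values $|f(u)-c|$ are realisable as $c$ ranges over admissible colours, one checks that $2\Delta+3$ colours always leave enough room; the slack $+3$ over the naive estimate absorbs the prescribed parent-edge difference, the two excluded colours, and boundary losses near the ends of the interval. The cases $\Delta\le 2$ (paths) are treated by an explicit colouring and serve as the base of the recursion.

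The hard part will be the propagation through chains of high-degree vertices. A vertex of degree exactly $\Delta$ must realise $\Delta$ pairwise distinct differences inside $[1,2\Delta+3]$, which keeps its colour away from the centre of the interval and largely dictates the colour multiset of its neighbourhood; when several such vertices are mutually adjacent --- the extreme cases being a path all of whose interior vertices have degree $\Delta$, or one vertex surrounded by many degree-$\Delta$ neighbours --- these forced patterns must be made simultaneously consistent. I expect a purely greedy choice to fail here, so the argument will need an explicit rule: keep degree-$\Delta$ vertices pinned near the two ends $1$ and $2\Delta+3$ of the colour interval, alternating between a ``left'' and a ``right'' regime along each root-to-leaf path, and never hand a near-central colour to a vertex that still has many children to serve. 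Proving that this rule never creates a collision will require a case analysis according to the parity of $\Delta$, paralleling the split into Case GD-1 ($n$ even) and Case GD-2 ($n$ odd) used above to build $I_{ce}(G_{1,n}D_k)^{2n+3}_{k=1}$.
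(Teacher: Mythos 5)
Your reduction is sound: normalising the constant to $0$, so that $f(uv)=|f(u)-f(v)|$, correctly turns the problem into finding a vertex colouring $f:V(T)\rightarrow[1,2\Delta+3]$ in which the differences at each vertex are pairwise distinct and the incidence conflicts $f(v)=2f(u)$, $f(u)=2f(v)$ are avoided, and your observation that $T$ is a leaf-coinciding assembly of stars with a common constant is exactly the machinery (the ice-flower system $I_{ce}(G_{1,n}D_k)^{2n+3}_{k=1}$) that this paper builds just before the statement; note the paper itself only cites the theorem and gives no in-text proof. The genuine gap is that the decisive step of your plan is never carried out: you concede that a greedy choice may fail on adjacent high-degree vertices, propose a pinning rule only for vertices of degree exactly $\Delta$ together with an ``alternating left/right regime'', and then defer the verification (``proving that this rule never creates a collision will require a case analysis''). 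As written this is a programme, not a proof; moreover the rule is too weak, since two adjacent vertices of degree $\Delta-1$ (or any long chain of near-maximum-degree vertices) are not pinned by it and can both be pushed toward the centre of $[1,2\Delta+3]$, where only about $\Delta+1$ difference values exist and the exclusions coming from the parent edge, from $2f(u)$ and $\frac12 f(u)$, and from the children's future needs can exhaust the slack.

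What closes the gap --- and what the cited construction encodes --- is to apply your ``alternating regime'' to \emph{every} vertex via the bipartition $(X,Y)$ of the tree, not just to degree-$\Delta$ vertices: give all of $X$ colours from a low block of size about $\Delta$ and all of $Y$ colours from a high block of size about $\Delta+3$, exactly as the stars $G_{1,n}D_k$ have centres coloured in $[1,2m]$ with leaves in the high range, or centres in $[2m+1,4m+3]$ with leaves in the low range. Then every edge joins a low vertex to a high vertex, so the differences at a vertex are pairwise distinct as soon as its neighbours receive pairwise distinct colours --- a condition trivially satisfiable on a tree processed top-down, since a vertex's children need only avoid each other's colours and the grandparent-side colour inside a block of size at least $\Delta$; the extra $+3$ and the local recolourings of Cases GD-1/GD-2 absorb the sporadic coincidences where an edge colour would equal an endpoint colour. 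With that global rule stated and checked (the parity split you anticipate is indeed only needed for these small recolouring fixes), your argument would be complete; without it, the proposal does not yet prove $\chi''_{gdt}(T)\leq 2\Delta+3$.
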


\begin{figure}[h]
\centering
\includegraphics[width=15cm]{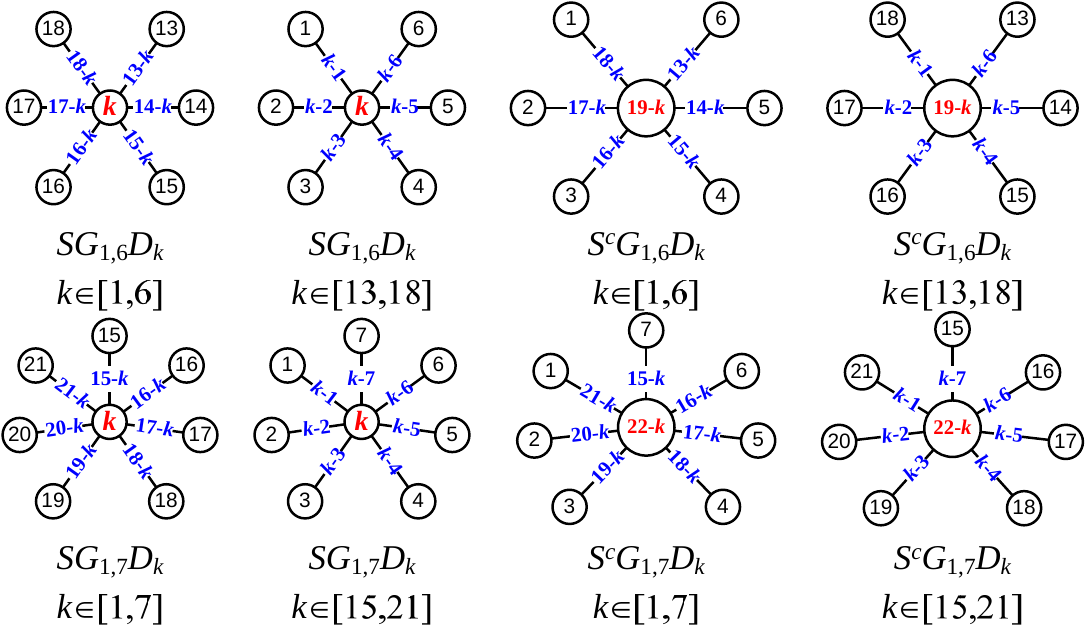}
\caption{\label{fig:graceful-differ-3n}{\small Four particular graceful-difference ice-flower systems.}}
\end{figure}

In Fig.\ref{fig:graceful-differ-3n}, we can see two particular graceful-difference ice-flower systems $I_{ce}(SG_{1,6}D_k)^{6}_{k=1})^{18}_{k=2n+1}$ and $I_{ce}(SG_{1,7}D_k)^{7}_{k=1})^{21}_{k=2n+1}$, as well as their dual graceful-difference ice-flower systems
$$I_{ce}(S^cG_{1,6}D_k)^{6}_{k=1})^{18}_{k=2n+1},\quad I_{ce}(S^cG_{1,7}D_k)^{7}_{k=1})^{21}_{k=2n+1}.$$

\subsubsection{Graceful-difference star-graphic lattices}

Each graceful-difference ice-flower system $I_{ce}(G_{1,n}D_k)^{2n+3}_{k=1}$ distributes us a \emph{graceful-difference star-graphic lattice} as follows:
\begin{equation}\label{eqa:graceful-difference-stars-lattice}
\textrm{\textbf{L}}(\overline{\ominus} \textbf{I}_{ce}(GD)) =\left \{\overline{\ominus}^{2n+3}_{i=1}a_iG_{1,n}D_i: a_i\in Z^0, G_{1,n}D_i\in I_{ce}(G_{1,n}D_k)^{2n}_{k=1}\right \}
\end{equation} with $\sum^{2n+3}_{i=1} a_i\geq 1$ and the base is $\textbf{I}_{ce}(GD)=I_{ce}(G_{1,n}D_k)^{2n+3}_{k=1}$.

By the $L$-magic graceful-difference ice-flower system $I_{ce}(LG_{1,n}D_k)^{n_{gdt}}_{k=1}$, we have a \emph{$L$-magic graceful-difference star-graphic lattice} as follows
\begin{equation}\label{eqa:general-graceful-difference-stars-lattice}
\textrm{\textbf{L}}(\overline{\ominus} \textbf{I}_{ce}(LGD)) =\left \{\overline{\ominus}^{n_{gdt}}_{i=1}a_iLG_{1,n}D_i: a_i\in Z^0, LG_{1,n}D_i\in I_{ce}(LG_{1,n}D_k)^{n_{gdt}}_{k=1}\right \}
\end{equation} with $\sum^{n_{gdt}}_{i=1} a_i\geq 1$ and the base is $\textbf{I}_{ce}(LGD)=I_{ce}(LG_{1,n}D_k)^{n_{gdt}}_{k=1}$.

Notice that each graph in one of the graceful-difference star-graphic lattice and the $L$-magic graceful-difference star-graphic lattice admits a graceful-difference proper total coloring, and each one of the graceful-difference star-graphic lattice and the $L$-magic graceful-difference star-graphic lattice contains infinite graphs admitting a graceful-difference proper total colorings.

\subsection{Edge-difference star-graphic lattices}

\subsubsection{Edge-difference ice-flower systems}

In general, there is a total coloring $\pi_k$ of a star $K_{1,n}$ defined in the following way: $\pi_k(x_0)=s\in [1,3n]$, $\pi_k(x_j)=L-s_j~(>s)$ and $\pi_k(x_0x_j)=s+s_j$ with $j\in[1,n]$ and $s\in [1,3n]$, so $\pi_k(x_0x_j)+|\pi_k(x_j)-\pi_k(x_0)|=L$ for each edge $x_0x_j$ of $K_{1,n}$, such that $\pi_k$ is just an edge-difference proper total coloring of $K_{1,n}$, denoted this colored star as $K^{(k)}_{1,n}=LE_{1,n}D_k$. For each fixed $s$, there are $e(s)$ groups of integers $s_1,s_2,\dots s_n$ in $[1,3n]$ holding $\pi_k$ to be an edge-difference proper total coloring of $K_{1,n}$, then we get $n_{edt}$ colored stars $LE_{1,n}D_k$, where $n_{edt}=\sum^{3n}_{s=1}e(s)$, and collect these stars in a set $I_{ce}(LE_{1,n}D_k)^{n_{edt}}_{k=1}$, called \emph{$L$-magic edge-difference ice-flower system}. Moreover, let $\pi^c_k$ be the dual of $\pi_k$ defined as $\pi_k(x)+\pi^c_k(x)=\max \pi_k+\min \pi_k$ for $x\in V(LE_{1,n}D_k)$, and $\pi^c_k(uv)=\pi_k(uv)$ for $uv\in E(LE_{1,n}D_k)$. Then
\begin{equation}\label{eqa:edge-difference-ice-flower-dual}
\pi^c_k(uv)+|\pi^c_k(u)-\pi^c_k(v)|=\pi_k(uv)+|\pi_k(u)-\pi_k(v)|=L
\end{equation} for every edge $uv\in E(LE_{1,n}D_k)$. We get the \emph{dual $L$-magic edge-difference ice-flower system} $I_{ce}(\pi^c_k(LE_{1,n}D_k))^{n_{edt}}_{k=1}$ of the $L$-magic edge-difference ice-flower system $I_{ce}(LE_{1,n}D_k)^{n_{edt}}_{k=1}$

We show a particular edge-difference ice-flower system $I_{ce}(E_{1,n}D_k)^{2n+3}_{k=1}$, where each $E_{1,n}D_k$ is a copy $K^{(k)}_{1,n}$ of $K_{1,n}$ and admits an edge-difference proper total coloring $h_k$ with $k\in [1,2n+3]$. Each edge-difference proper total coloring $h_k$ is defined as follows:

Case ED-1. $n=2m$. We define $h_k$ in the way: (ED-1-1) $h_k(x^k_0)=k$ with $k\in [1,2m+1]$, $h_k(x^k_j)=4m+4-j$ with $j\in [1,2m+1]$, and $h_k(x^k_0x^k_j)=k+j$ with $1\leq i,j\leq 2m+1$. So, $h_k(x^k_0x^k_j)+|h_k(x^k_j)-h_k(x^k_0)|=4m+4$, see Definition \ref{defn:combinatoric-definition-total-coloring}. If $h_k(x^k_{j'})=h_k(x^k_0x^k_{j'})$ for some $j'\in [1,2m+1]$, we recolor the edge $x^k_0x^k_{j'}$ with $h_k(x^k_0x^k_{j'})=4m+3$ and the vertex $x^k_{j'}$ with $h_k(x^k_{j'})=h_k(x^k_0)-1=k-1$, respectively.

(ED-1-2) Set $h_k(x^k_0)=k$ with $k\in [2m+2,4m+3]$, $h_k(x^k_j)=j$ with $j\in [1,2m]$, and $h_k(x^k_0x^k_j)=4m+4-k+j$. Thereby, $h_k(x^k_0x^k_j)+|h_k(x^k_j)-h_k(x^k_0)|=4m+4$. If $h_k(x^k_0)=h_k(x^k_0x^k_{j'})$ for some $j'\in [1,2m]$ and $k\in [2m+1,4m+3]$, we recolor the edge $x^k_0x^k_{j'}$ with $h_k(x^k_0x^k_{j'})=4m+3$ and the vertex $x^k_{j'}$ with $h_k(x^k_{j'})=h_k(x^k_0)+1=k+1$.

Case ED-2. $n=2m+1$. We define $h_k$ as follows: (ED-2-1) $h_k(x^k_0)=k$ with $k\in [1,2m+1]$, $h_k(x^k_j)=4m+6-j$ with $j\in [1,2m+1]$, and $h_k(x^k_0x^k_j)=k+j$ with $1\leq i,j\leq 2m+1$. Immediately, $h_k(x^k_0x^k_j)+|h_k(x^k_j)-h_k(x^k_0)|=4m+6$. If $h_k(x^k_0x^k_{j'})=h_k(x^k_{j'})$ happens for some $j'\in [1,2m+1]$ and $k\in [1,2m+1]$, then we reset the edge $x^k_0x^k_{j'}$ with $h_k(x^k_0x^k_{j'})=4m+5$, the vertex $x^k_{j'}$ with $h_k(x^k_{j'})=h_k(x^k_0)+1=k+1$.

(ED-2-2) Set $h_k(x^k_0)=k$ with $k\in [2m+2,4m+5]$, $h_k(x^k_j)=j$ with $j\in [1,2m+1]$, and $h_k(x^k_0x^k_j)=4m+6-k+j$. So, $h_k(x^k_0x^k_j)+|h_k(x^k_j)-h_k(x^k_0)|=4m+6$. If we meet $h_k(x^k_0)=h_k(x^k_0x^k_{j'})$ for some $j'\in [1,2m+1]$ and $k\in [2m+1,4m+5]$, we recolor the edge $x^k_0x^k_{j'}$ with $h_k(x^k_0x^k_{j'})=4m+5$ and the vertex $x^k_{j'}$ with $h_k(x^k_{j'})=h_k(x^k_0)+1=k+1$.

\begin{figure}[h]
\centering
\includegraphics[width=15.6cm]{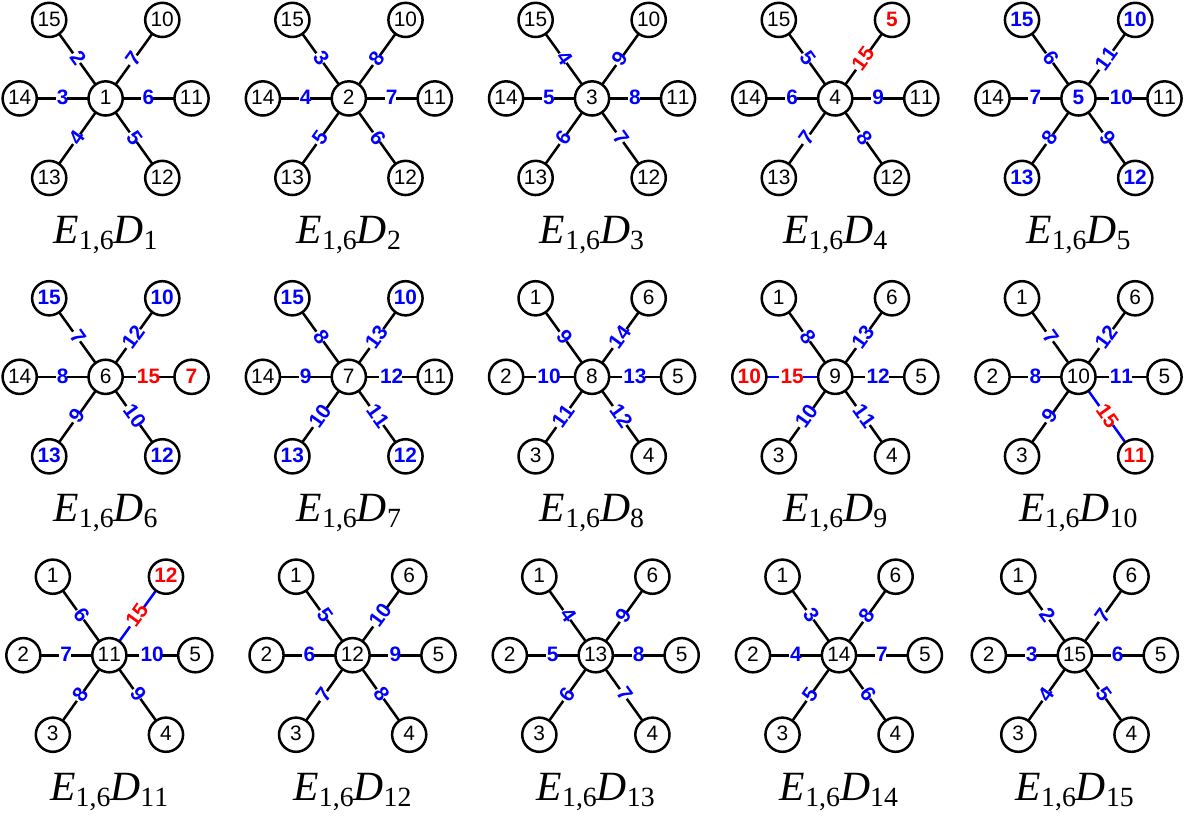}
\caption{\label{fig:Dsaturated-edge-difference-even}{\small An edge-difference ice-flower system $I_{ce}(E_{1,6}D_k)^{15}_{k=1}$.}}
\end{figure}

\begin{figure}[h]
\centering
\includegraphics[width=16.2cm]{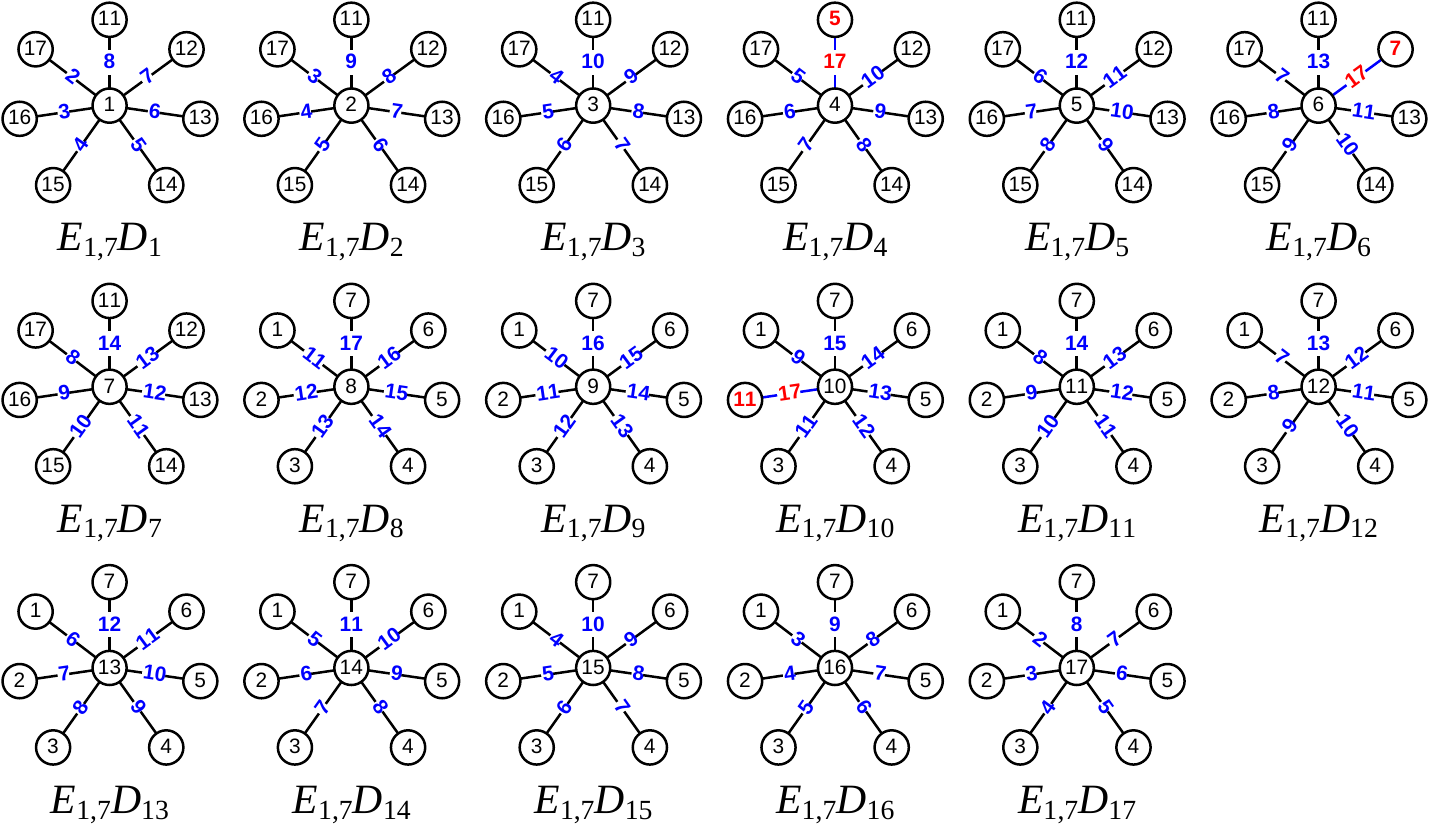}
\caption{\label{fig:Dsaturated-edge-difference-odd}{\small An edge-difference ice-flower system $I_{ce}(E_{1,7}D_k)^{17}_{k=1}$.}}
\end{figure}

\vskip 0.4cm

In \cite{Wang-Yao-edge-difference-2019} Wang \emph{et al.}, by the particular edge-difference ice-flower system $I_{ce}(E_{1,n}D_k)^{2n+3}_{k=1}$, have shown the following results:
\begin{thm} \label{thm:edge-difference-many-results}
\cite{Wang-Yao-edge-difference-2019} Let $K_m$ be a complete graph of $m$ vertices.
\begin{asparaenum}[\textbf{\textrm{ED}}-1. ]
\item Each connected graph $G$ of $m$ vertices satisfies $\chi''_{edt}(G)\leq \chi''_{edt}(K_m)$.
\item There are infinite graphs $G$ admitting perfect edge-difference proper total coloring, that is, $\chi''_{edt}(G)=\chi''(G)$.
\item There are infinite non-tree like graphs $G$ holds $\chi''_{edt}(G)\leq 2\Delta(G)+3$.
\item There are infinite $\Delta$-regular graphs $H$ hold $\chi''_{edt}(H)\leq 2\Delta+3$.
\item There is $\chi''(T)\leq \chi''_{edt}(T)\leq 2\Delta+3$ for each tree $T$.
\item Suppose that doing a series of vertex-splitting operations to a connected graph $G$ gets a tree $T$. If an edge-difference proper total coloring $f$ of $T$ induces an edge-difference proper total coloring $g$ of $G$, then $\chi''_{edt}(G)\leq \chi''_{edt}(T)$.
\item Let $(X,Y)$ be the vertex set bipartition of a bipartite graph $G$, then $\chi''_{edt}(G)\leq 2\Delta(G)+3$ if $|X|=|Y|$.
\item Suppose that a connected graph $H$ admits an edge-difference proper total coloring $f$, such that $f(uv)+|f(u)-f(v)|=k_1>0$ for each edge $uv\in E(H)$. Then, for any given strictly increasing number sequence $\{k_1,k_2,\dots ,k_m\}=\{k_i\}^m_1$, that is $k_i<k_{i+1}$, $H$ admits a series of real-valued edge-difference proper total colorings $g_i$ with $i\in[1,m]$, such that $g_i(uv)+|g_i(u)-g_i(v)|=k_i$ for each edge $uv\in E(H)$.
\end{asparaenum}
\end{thm}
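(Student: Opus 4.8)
The plan is to treat Theorem~\ref{thm:edge-difference-many-results} as a package whose technical heart is \textbf{ED}-5, with the remaining seven items following either from \textbf{ED}-5 or from elementary hereditary and rescaling arguments; so I would first dispose of the easy items. For \textbf{ED}-1, note that $K_m$ contains every $m$-vertex connected graph $G$ as a spanning subgraph, and restricting an optimal edge-difference proper total coloring $f$ of $K_m$ to $V(G)\cup E(G)$ leaves properness intact and leaves every edge-function value $f(uv)+|f(u)-f(v)|$ unchanged, hence still equal to the common constant; thus $\chi''_{edt}(G)\le \chi''_{edt}(K_m)$. For \textbf{ED}-6, recall that the vertex-coinciding operation identifies only vertices sharing a color and never touches edges, so if $f$ is an edge-difference proper total coloring of $T$ whose induced map $g$ on $G$ is still a proper total coloring, then $g(uv)+|g(u)-g(v)|=f(uv)+|f(u)-f(v)|=k$ for every edge while $g$ uses a subset of the colors of $f$, giving $\chi''_{edt}(G)\le \chi''_{edt}(T)$. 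For \textbf{ED}-8, observe that both the affine shift $g=f+c$ and the dilation $g=\lambda f$ with $\lambda\neq 0$ preserve properness, and that they send the edge-function constant $k_1$ to $k_1+c$ and to $\lambda k_1$ respectively; taking $\lambda_i=k_i/k_1$ produces the required real-valued colorings $g_i$ with $g_i(uv)+|g_i(u)-g_i(v)|=k_i$ for each edge $uv\in E(H)$.

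The core, \textbf{ED}-5, I would prove by the ice-flower method. First I would write out the particular edge-difference ice-flower system $I_{ce}(E_{1,\Delta}D_k)^{2\Delta+3}_{k=1}$ exactly as in Cases ED-1 and ED-2 above (centers colored $1,2,\dots,2\Delta+3$ across the two parity cases), and verify by direct computation that each colored star $E_{1,\Delta}D_k$ is a proper total coloring with all edge-function values equal to the common constant $2\Delta+4$ and with all colors lying in $[1,2\Delta+3]$; the small recolorings handle the degenerate coincidences $h_k(x^k_{j'})=h_k(x^k_0x^k_{j'})$. Next I would show that the colored leaf-coinciding operation $G_1\overline{\ominus}G_2$ of Oper-4 preserves the property ``edge-difference proper total coloring with constant $c$'': the two glued edges carry equal colors on matched endpoints, so no adjacency conflict is created and no edge-function value changes. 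Then, given a tree $T$ with maximum degree $\Delta$, I would root $T$, decompose it into the stars $S_v$ ($v$ internal, $S_v$ being $v$ together with its downward edges), and reassemble $T$ by a sequence of leaf-coincidings in BFS order; at each step one must choose, from the $2\Delta+3$ available stars, one whose center color, a leaf color and the corresponding edge color match the three colors demanded by the already-built part and which has at least $\textrm{deg}_T(v)-1\le\Delta-1$ further leaves. The lower bound is immediate: an edge-difference proper total coloring is in particular a proper total coloring, so $\chi''(T)\le\chi''_{edt}(T)$.

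The remaining items are then consequences. For \textbf{ED}-2 I would exhibit an explicit infinite family on which the assembly in \textbf{ED}-5 happens to use only $\chi''(G)$ colors, e.g.\ a suitable family of caterpillars or of graphs produced by leaf-coinciding a controlled subset of the ice-flower stars, so that $\chi''_{edt}(G)=\chi''(G)$. For \textbf{ED}-3 and \textbf{ED}-4 I would exhibit, respectively, an infinite family of non-trees (such as even cycles, or graphs obtained by vertex-coinciding leaves of ice-flower stars) and an infinite family of $\Delta$-regular graphs, in each case checking via \textbf{ED}-5 and \textbf{ED}-6 that the induced edge-difference coloring stays within $[1,2\Delta+3]$; note a vertex-splitting never raises the maximum degree, so $\Delta(T)\le\Delta(G)$ whenever $T$ is a vertex-split of $G$. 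For \textbf{ED}-7 I would combine \textbf{ED}-5 and \textbf{ED}-6: a connected bipartite $G$ with $|X|=|Y|$ can be vertex-split, edge by edge off a spanning tree, into a tree $T$ with $\Delta(T)\le\Delta(G)$ and with the bipartition respected, so that a set-ordered edge-difference proper total coloring of $T$ (small colors on $X$-vertices, large colors on $Y$-vertices) can be arranged to descend to $G$; hence $\chi''_{edt}(G)\le\chi''_{edt}(T)\le 2\Delta(T)+3\le 2\Delta(G)+3$.

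The hard part will be the bookkeeping in \textbf{ED}-5: one must show that the finite system of $2\Delta+3$ stars is genuinely rich enough that at \emph{every} leaf-coinciding step a star realizing the required (center-color, leaf-color, edge-color) triple remains available with the right number of spare leaves, and that the two explicit color families together with their exceptional recolorings never collide after all the gluings, i.e.\ the reassembled coloring stays proper and magic. A secondary difficulty is the bipartite splitting in \textbf{ED}-7: arranging the vertex-splittings so that coincided vertices always receive equal colors — which is exactly where the hypothesis $|X|=|Y|$ enters — and turning that matching into a precise argument.
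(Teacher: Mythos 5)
Your core route is the one the paper itself points to: Theorem \ref{thm:edge-difference-many-results} is not proved in this paper at all but imported from \cite{Wang-Yao-edge-difference-2019} with the remark that it was obtained ``by the particular edge-difference ice-flower system $I_{ce}(E_{1,n}D_k)^{2n+3}_{k=1}$'', and your plan for \textbf{ED}-5 (the $2\Delta+3$ colored stars with common constant $2\Delta+4$, assembly by colored leaf-coinciding, $\chi''(T)\le\chi''_{edt}(T)$ for free) matches that. Your arguments for \textbf{ED}-1 (restrict an optimal coloring of $K_m$ to the spanning subgraph $G$), \textbf{ED}-6 (an induced coloring uses no new colors and keeps the edge-function constant) and \textbf{ED}-8 (dilate by $\lambda_i=k_i/k_1$) are correct and are the natural ones; \textbf{ED}-2, \textbf{ED}-3, \textbf{ED}-4 are existence claims for which your suggested families are plausible (for \textbf{ED}-2 even the stars $K_{1,n}$ with center colored $n+1$, leaf $x_j$ and edge $x_0x_j$ both colored $j$, already give $\chi''_{edt}(K_{1,n})=\chi''(K_{1,n})=n+1$).

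Two concrete gaps remain. First, in the \textbf{ED}-5 assembly there is no ``choice'' of star: in $I_{ce}(E_{1,\Delta}D_k)^{2\Delta+3}_{k=1}$ the centre colour determines the star uniquely, so at each gluing step you must verify that the \emph{forced} candidate actually carries a leaf/edge pair with the two remaining demanded colours, and that this survives the exceptional recolourings of Cases ED-1/ED-2; moreover an internal vertex of degree $d<\Delta$ is not matched by any star of the system as it stands. The clean repair is to pad $T$ to a $\Delta$-saturated supertree $T'$ with $\Delta(T')=\Delta(T)$, colour $T'$ by the assembly (this is exactly Theorem \ref{thm:edge-difference-ice-flower-lattice}(1)), and then invoke the same subgraph-restriction argument you already use for \textbf{ED}-1; without that (or an equivalent bookkeeping) the ``BFS reassembly'' is not yet a proof. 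Second, and more seriously, your sketch of \textbf{ED}-7 does not work as stated: you never explain how $|X|=|Y|$ forces the two copies of a split vertex to receive equal colours in a set-ordered colouring of the split tree, and some genuinely new idea is needed there, because the paper's own bipartite machinery (the system $I_{ce}(BE_{1,n}D_k)^{3n}_{k=1}$, Theorem \ref{thm:edge-difference-ice-flower-lattice}(3)) only yields $\chi''_{edt}(G)\le 3\max\{|X|,|Y|\}$; note also that \textbf{ED}-7 applied to $K_{n,n}$ gives the bound $2n+3$ while Theorem \ref{thm:edge-difference-ice-flower-lattice}(2) asserts $\chi''_{edt}(K_{n,n})=3n$, so for $n\ge 4$ the descent cannot be the routine combination of \textbf{ED}-5 and \textbf{ED}-6 that you propose — this item requires an explicit construction (and some care about what the hypothesis $|X|=|Y|$ is really for), not a one-line appeal to the tree case.
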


\subsubsection{Edge-difference star-graphic lattices}

Each edge-difference ice-flower system $I_{ce}(E_{1,n}D_k)^{2n+3}_{k=1}$ distributes us an \emph{edge-difference star-graphic lattice} as follows:
\begin{equation}\label{eqa:edge-difference-stars-lattice}
\textrm{\textbf{L}}(\overline{\ominus} \textbf{I}_{ce}(ED)) =\left \{\overline{\ominus}^{2n+3}_{i=1}a_iE_{1,n}D_i: a_i\in Z^0, E_{1,n}D_i\in I_{ce}(E_{1,n}D_k)^{2n+3}_{k=1}\right \}
\end{equation} with $\sum^{2n+3}_{i=1} a_i\geq 1$ and the base is $\textbf{I}_{ce}(ED)=I_{ce}(E_{1,n}D_k)^{2n+3}_{k=1}$.

By the $L$-magic edge-difference ice-flower system $I_{ce}(LE_{1,n}D_k)^{n_{edt}}_{k=1}$, we get an \emph{edge-difference star-graphic lattice}:
\begin{equation}\label{eqa:general-edge-difference-stars-lattice}
\textrm{\textbf{L}}(\overline{\ominus} \textbf{I}_{ce}(LED)) =\left \{\overline{\ominus}^{n_{edt}}_{i=1}a_iLE_{1,n}D_i: a_i\in Z^0, LE_{1,n}D_i\in I_{ce}(LE_{1,n}D_k)^{n_{edt}}_{k=1}\right \}
\end{equation} with $\sum^{n_{edt}}_{i=1} a_i\geq 1$ and the base is $\textbf{I}_{ce}(LED)=I_{ce}(LE_{1,n}D_k)^{n_{edt}}_{k=1}$.
The dual $L$-magic edge-difference ice-flower system $I_{ce}(\pi^c_k(LE_{1,n}D_k))^{n_{edt}}_{k=1}$ induces a dual edge-difference star-graphic lattice:
\begin{equation}\label{eqa:general-edge-difference-stars-lattice}
\textrm{\textbf{L}}(\overline{\ominus} \textbf{I}^c_{ce}(LED)) =\left \{\overline{\ominus}^{n_{edt}}_{i=1}a_i\pi^c_k(LE_{1,n}D_i): a_i\in Z^0, \pi^c_k(LE_{1,n}D_i)\in I_{ce}(\pi^c_k(LE_{1,n}D_k))^{n_{edt}}_{k=1}\right \}
\end{equation} with $\sum^{n_{edt}}_{i=1} a_i\geq 1$ and the dual base is $\textbf{I}^c_{ce}(LED)=I_{ce}(\pi^c_k(LE_{1,n}D_k))^{n_{edt}}_{k=1}$.

Another particular edge-difference ice-flower system $I_{ce}(BE_{1,n}D_k)^{3n}_{k=1}$ is defined as: (ped-1) $h_k(x_0)=k\in [1,n]$, $h_k(x_j)=3n-j$ and $h_k(x_0x_j)=k+j$ with $j\in[1,n]$ and $k\in [1,n]$, so $h_k(x_0x_j)+|h_k(x_j)-h_k(x_0)|=3n$ for each edge $x_0x_j$ of $K_{1,n}$; (ped-2) $h_k(x_0)=k\in [n+1,3n]$, $h_k(x_j)=j$ and $h_k(x_0x_j)=3n+1-k+j$ with $j\in[1,n]$ and $k\in [n+1,3n]$, so $h_k(x_0x_j)+|h_k(x_j)-h_k(x_0)|=3n$ for each edge $x_0x_j$ of $K_{1,n}$, and this colored star is denoted as $K^{(k)}_{1,n}=BE_{1,n}D_k$ with $k\in [1,3n]$. See two particular edge-difference ice-flower systems $I_{ce}(BE_{1,s}D_k)^{3n}_{k=1}$ with $s=6,7$ and their dual edge-difference ice-flower systems $I_{ce}(B^cE_{1,s}D_k)^{3n}_{k=1}$ with $s=6,7$ shown in Fig.\ref{fig:edge-difference-3n}.

\begin{figure}[h]
\centering
\includegraphics[width=15cm]{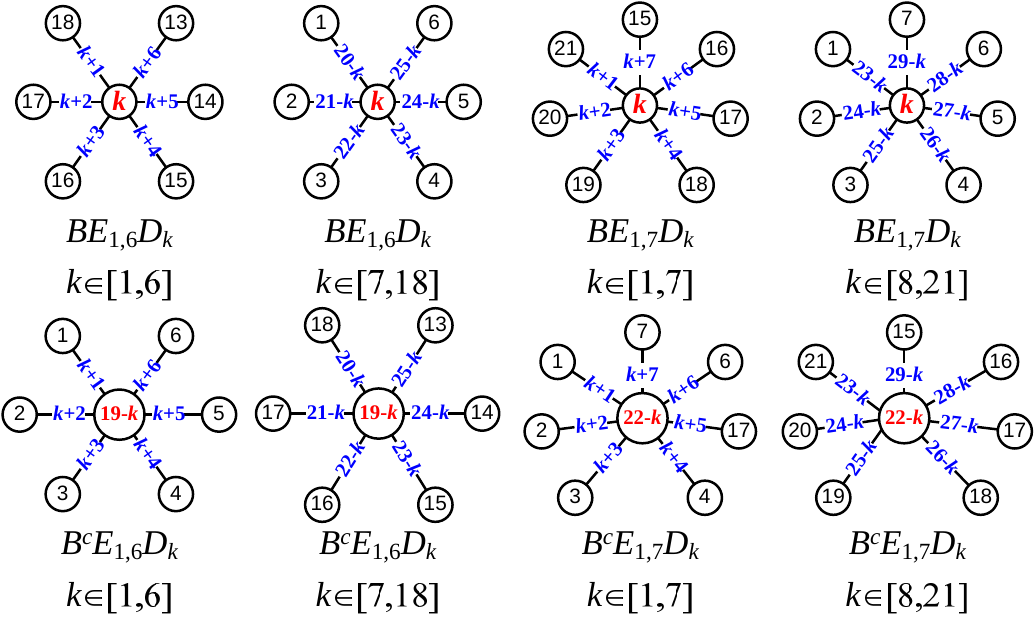}
\caption{\label{fig:edge-difference-3n}{\small Four edge-difference ice-flower systems $I_{ce}(BE_{1,s}D_k)^{18}_{k=1}$ and $I_{ce}(B^cE_{1,s}D_k)^{21}_{k=1}$ with $s=6,7$.}}
\end{figure}

The edge-difference ice-flower system $I_{ce}(BE_{1,n}D_k)^{3n}_{k=1}$ can help us to get the following results:
\begin{thm} \label{thm:edge-difference-ice-flower-lattice}
(1) Every $\Delta$-saturated tree $H\in \textrm{\textbf{L}}(\overline{\ominus} \textbf{I}_{ce}(ED))$ obeys $\chi''_{edt}(H)\leq 1+2\Delta(H)$.

(2) Each bipartite complete graph $K_{n,n}$ of $2n$ vertices holds $\chi''_{edt}(K_{n,n})=3n$ true.

(3) Each bipartite graph $G$ with bipartition $(X,Y)$ holds $\chi''_{edt}(G)\leq 3\max \{|X|,|Y|\}$ true.
\end{thm}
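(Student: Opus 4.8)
\emph{Overall plan.} I would prove the three assertions in the order (2), then (3) as a corollary of (2), then (1) by a separate direct construction. The engine for the upper bounds is the edge-difference ice-flower system $I_{ce}(BE_{1,n}D_k)^{3n}_{k=1}$ together with the vertex-coinciding operation (for $K_{n,n}$) and the leaf-coinciding operation $\overline{\ominus}$ (for trees), which preserve the edge-difference identity exactly as in the constructions preceding the theorem; the only genuinely new content is the lower bound $\chi''_{edt}(K_{n,n})\ge 3n$. For assertion (3): given a bipartite graph $G$ with bipartition $(X,Y)$ and $t=\max\{|X|,|Y|\}$, embed $G$ as a subgraph of $K_{t,t}$ (place $X,Y$ into the two parts, adding isolated vertices if $|X|<|Y|$ or vice versa). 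By (2) there is an edge-difference proper total coloring $f$ of $K_{t,t}$ with $f(V\cup E)\subseteq[1,3t]$ and a constant $k$ with $f(uv)+|f(u)-f(v)|=k$ on every edge. Restricting $f$ to $V(G)\cup E(G)$ remains a proper total coloring (properness is inherited by subgraphs) and still satisfies the same edge-difference identity on every remaining edge, with all colors in $[1,3t]$; hence $\chi''_{edt}(G)\le 3t=3\max\{|X|,|Y|\}$.

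\emph{Upper bound in (2).} For $\chi''_{edt}(K_{n,n})\le 3n$ I would mimic the complete-bipartite construction already used for the graceful-difference case in Lemma \ref{thm:graceful-difference-lemma}. Take the $n$ stars $BE_{1,n}D_1,\dots,BE_{1,n}D_n$ of $I_{ce}(BE_{1,n}D_k)^{3n}_{k=1}$: the center of $BE_{1,n}D_k$ has color $k$, its leaf $x^k_j$ has color $3n-j$, and the edge $x^k_0x^k_j$ has color $k+j$, with constant value $3n$. Since the color $3n-j$ of $x^k_j$ does not depend on $k$, vertex-coincide $x^1_j,\dots,x^n_j$ into one vertex $y_j$ for each $j\in[1,n]$; the result is $K_{n,n}$ with parts $\{$centers$\}$ (colors $1,\dots,n$) and $\{y_1,\dots,y_n\}$ (colors $2n,\dots,3n-1$). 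I would then verify properness — the $n$ edges at each vertex get distinct colors and each edge color differs from its two endpoint colors — and repair the single boundary clash (the edge between the center of color $n$ and $y_n$) by a local adjustment, e.g. by replacing that star by its dual $B^cE_{1,n}D_k$ or by a one-color shift of the boundary leaf, checking that the edge-difference identity survives. All colors stay in $[1,3n]$.

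\emph{Lower bound in (2) (the main obstacle).} The hard part is showing every edge-difference proper total coloring $f$ of $K_{n,n}$, with colors in $[1,M]$ and constant $k$, forces $M\ge 3n$. First, $f(u)\ne f(v)$ on each edge gives $|f(u)-f(v)|\ge 1$, so $f(uv)\le k-1$ on every edge; since the $n$ edges at any vertex receive $n$ distinct colors, all $\le k-1$, we already get $k\ge n+1$. Next I would analyze a single vertex $u$: writing its edge colors as $e_1<\dots<e_n\le k-1$, each neighbor gets a color $f(u)\pm(k-e_i)$, so the neighbors of $u$ lie in an interval of length $2(k-1)$ centered at $f(u)$, and $f(u)$ itself must avoid the $n$ edge colors. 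Combining this with an extremal choice of $u$ (vertices of minimum, resp. maximum, color in each part of the bipartition) and pushing through the interaction between $k$, the $n$ distinct edge colors per vertex, and the two color classes, I would force the spread of used colors up to $3n$. The two-sided ambiguity ($+$ versus $-$ in $f(u)\pm(k-e_i)$) is exactly what makes this delicate, and I expect the careful bookkeeping here to be the principal difficulty of the whole theorem.

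\emph{Assertion (1).} Let $H\in\textrm{\textbf{L}}(\overline{\ominus}\textbf{I}_{ce}(ED))$ be $\Delta$-saturated, $\Delta=\Delta(H)$; membership in the lattice means $H$ is a leaf-coincided amalgam of $\Delta$-stars, so each internal (degree-$\Delta$) vertex is the center of one such star and the internal vertices induce a subtree $H'$ of $H$. I would aim at an edge-difference proper total coloring with magic constant $k=\Delta+1$. Then every edge color lies in $[1,\Delta]$ (since $|f(u)-f(v)|\ge 1$), and because the $\Delta$ edges at an internal vertex must use all of $[1,\Delta]$ and differ from the vertex color, every internal vertex is forced to a color $\ge\Delta+1$; I would place all internal colors in $[\Delta+1,2\Delta+1]$. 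Coloring $H'$ by a rooted inductive sweep, I would arrange that at each internal vertex $u$ the differences $\Delta+1-|f(u)-f(v_i)|$ to its internal neighbors $v_i$ are mutually distinct (forcing, at an internal vertex all of whose neighbors are internal, the color $\Delta+1$ or $2\Delta+1$ so that its neighbors all lie on one side), let the leftover colors of $[1,\Delta]$ go to the pendant edges, and choose each pendant-leaf color as $f(u)\pm(\Delta+1-e)$ so that it falls in $[1,2\Delta+1]$. This keeps all colors in $[1,1+2\Delta]$, giving $\chi''_{edt}(H)\le 1+2\Delta(H)$; the one real subtlety is coordinating the incident-edge colors at the degree-$\Delta$ junctions where several stars meet, which I regard as a secondary obstacle behind the lower bound in (2).
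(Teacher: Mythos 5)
The paper states this theorem without any proof (it follows the construction of the system $I_{ce}(BE_{1,n}D_k)^{3n}_{k=1}$ with only the remark that this system ``can help us to get'' the results), so your proposal has to stand on its own; your upper-bound route for (2) and the subgraph deduction of (3) do follow the paper's intended ice-flower/coinciding machinery and are essentially sound, though note that as written your coincided coloring uses only colors up to $3n-1$ (leaves $3n-j$, centers $\le n$, edges $\le 2n$) and has the single incidence clash at $k=j=n$; a cleaner variant is to color the coincided leaves $2n+j$ and the edge from center $k$ to $y_j$ by $n+1+k-j$ with constant $3n+1$, which is clash-free and uses exactly $[1,3n]$. The first genuine gap is the one you yourself flag: the equality in (2) needs $\chi''_{edt}(K_{n,n})\ge 3n$, and your ``extremal vertex plus bookkeeping'' outline is a plan, not an argument; worse, your own construction (if the clash were repaired inside $[1,3n-1]$) would contradict the equality, so the lower bound is not optional and nothing in your sketch pins down the $\pm$ ambiguity you identify.

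The second gap is fatal to your route for (1). Fixing the constant to $k=\Delta+1$ forces every internal vertex $u$ (degree $\Delta$) to see edge colors exactly $[1,\Delta]$, hence $f(u)\in[\Delta+1,2\Delta+1]$, and forces $u$ to have a neighbor at difference exactly $\Delta$. If all neighbors of $u$ are internal, that difference-$\Delta$ neighbor also lies in $[\Delta+1,2\Delta+1]$, which forces $f(u)\in\{\Delta+1,2\Delta+1\}$ -- as you note. But now take an internal vertex $w$ having two neighbors $v_1,v_2$ whose neighborhoods are entirely internal (this occurs in any sufficiently deep $\Delta$-saturated tree, e.g.\ the perfect $\Delta$-ary one): $v_1,v_2\in\{\Delta+1,2\Delta+1\}$ and both differ from $f(w)$, so $f(v_1)=f(v_2)$ and $|f(w)-f(v_1)|=|f(w)-f(v_2)|=\Delta$, giving two incident edges at $w$ with the same color $1$ -- an unavoidable conflict. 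So the constant $\Delta+1$ cannot be used for general members of $\textrm{\textbf{L}}(\overline{\ominus}\textbf{I}_{ce}(ED))$; a proof of (1) must work with a larger magic constant (the paper's own systems $I_{ce}(E_{1,n}D_k)^{2n+3}_{k=1}$ run at constant about $2\Delta+4$ with colors up to $2\Delta+3$) and then argue separately how to compress the palette to $[1,2\Delta+1]$, or use a genuinely different construction; what you call a ``secondary obstacle'' is in fact where your argument breaks.
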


\subsection{Felicitous-difference star-graphic lattices}

\subsubsection{Felicitous-difference ice-flower systems}

In general, there is a proper total coloring $\zeta_k$ of a star $K_{1,n}$ defined as: $\zeta_k(x_0)=p$, $\zeta_k(x_j)=p_j$ and $\zeta_k(x_0x_j)=L+p+p_j$ with $j\in[1,n]$ and $p\in [1,3n]$, so $|\zeta_k(x_0)+\zeta_k(x_j)-\zeta_k(x_0x_j)|=L$ for each edge $x_0x_j$ of $K_{1,n}$. For each fixed $p\in [1,3n]$, there are $b(p)$ groups of integers $p_1,p_2,\dots ,p_n$ of $[1,3n]$ holding the above coloring $\zeta_k$ to be a felicitous-difference proper total coloring of $K_{1,n}$, then we get $n_{fdt}$ colored stars $LF_{1,n}D_s$ in total, where $n_{fdt}=\sum^{3n}_{k=1}b(p)$, and put them into a set $I_{ce}(LF_{1,n}D_k)^{n_{fdt}}_{k=1}$, called a \emph{$L$-magic felicitous-difference ice-flower system}. Moreover, we have a \emph{felicitous-difference star-graphic lattice} as follows:
\begin{equation}\label{eqa:big-largest-felicitous-difference-lattice}
\textrm{\textbf{L}}(\overline{\ominus} \textbf{I}_{ce}(LFD)) =\left \{\overline{\ominus}^{n_{fdt}}_{j=1}a_jLF_{1,n}D_j: a_j\in Z^0, LF_{1,n}D_j\in I_{ce}(LF_{1,n}D_s)^{n_{fdt}}_{s=1}\right \}
\end{equation} with $\sum^{n_{fdt}}_{j=1} a_j\geq 1$ and the base is $\textbf{I}_{ce}(LFD)=I_{ce}(LF_{1,n}D_k)^{n_{fdt}}_{k=1}$.

We have two particular felicitous-difference ice-flower systems $I_{ce}(F_{1,n}D_k)^{2n}_{k=1}$ and $I_{ce}(SF_{1,n}D_k)^{n}_{k=1}$ defined as follows:

1. A particular felicitous-difference ice-flower system $I_{ce}(F_{1,n}D_k)^{2n}_{k=1}$ is made by the felicitous-difference proper total coloring, where each $F_{1,n}D_k$ is a copy $K^{(k)}_{1,n}$ of $K_{1,n}$ and admits a felicitous-difference proper total coloring $g_k$ with $k\in [1,2n]$. Each felicitous-difference proper total coloring $g_k$ is defined in the following:

Case FD-1. $n=2m$. By Definition \ref{defn:combinatoric-definition-total-coloring}, we define $g_k$ in the way: (tg-1-1) $g_k(x^k_0)=k$ with $k\in [1,2m]$, $g_k(x^k_j)=4m+1-j$ with $j\in [1,2m]$, and $g_k(x^k_0x^k_j)=g_k(x^k_0)+g_k(x^k_j)=4m+1-j+k\leq 6m=3n$ with $1\leq i,j\leq 2m$; (tg-1-2) $g_k(x^k_0)=k$ with $k\in [2m+1,4m]$, $g_k(x^k_j)=j$ with $j\in [1,2m]$, and $g_k(x^k_0x^k_j)=g_k(x^k_0)+g_k(x^k_j)\leq 6m=3n$.

Case FD-2. $n=2m+1$. A proper total coloring $g_k$ is defined as: (tg-2-1) $g_k(x^k_0)=k$ with $k\in [1,2m+1]$, $g_k(x^k_j)=4m+3-j$ with $j\in [1,2m+1]$, and $g_k(x^k_0x^k_j)=4m+3-j+k\leq 6m+3=3n$ with $1\leq i,j\leq 2m+1$. (tg-2-2) $g_k(x^k_0)=k$ with $k\in [2m+1,4m+2]$, $g_k(x^k_j)=j$ with $j\in [1,2m+1]$, and $g_k(x^k_0x^k_j)=k+j\leq 6m+3=3n$.

2. A \emph{smallest felicitous-difference ice-flower system} $I_{ce}(SF_{1,n}D_k)^{n}_{k=1}$ is defined as: Each $SF_{1,n}D_k$ is a copy $K^{(k)}_{1,n}$ of $K_{1,n}$ and admits a felicitous-difference total coloring $h_k$ with $k\in [1,n]$, where $h_k$ is defined as: $h_k(x^k_0)=k$ with $k\in [1,n]$, $h_k(x^k_s)=s$ with $s\in [1,n]$ and $s\neq k$, and $h_k(x^k_k)=n+1$, as well as $h_k(x^k_0x^k_s)=h_k(x^k_0)+h_k(x^k_s)\leq 2n+1$ with $s\in [1,n]$. See two felicitous-difference ice-flower systems shown in Fig.\ref{fig:smallest-star-system-big}, these two ice-flower systems are \emph{strong}.

\begin{figure}[h]
\centering
\includegraphics[width=15.6cm]{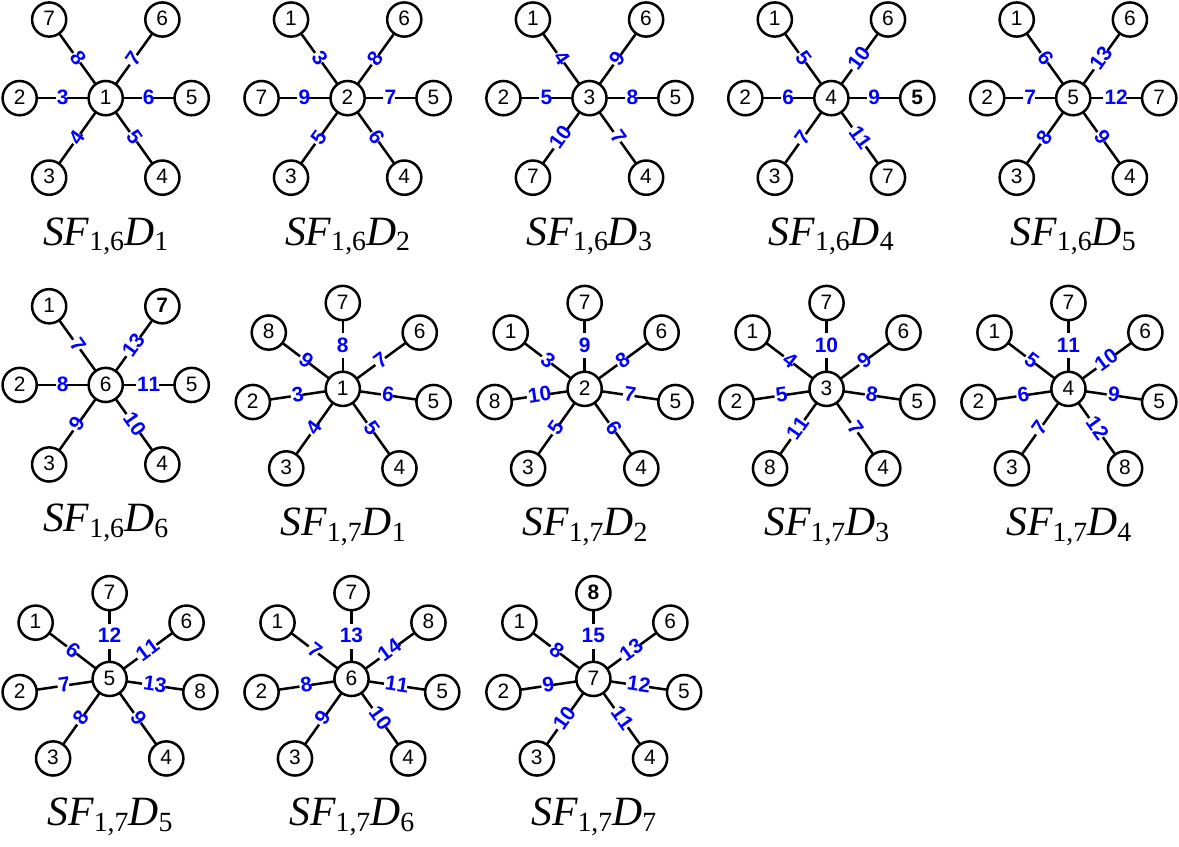}\\
\caption{\label{fig:smallest-star-system-big} {\small Two felicitous-difference ice-flower systems $I_{ce}(SF_{1,6}D_k)^{6}_{k=1}$ and $I_{ce}(SF_{1,7}D_k)^{7}_{k=1}$.}}
\end{figure}

\subsubsection{Felicitous-difference star-graphic lattices}

Since a colored leaf-coinciding operation $F_{1,n}D_j\overline{\ominus} F_{1,n}D_k$ between two colored stars $F_{1,n}D_k$ and $F_{1,n}D_j$ produces a graph with diameter three, so each group of colored stars $K_{1,n_1}$, $K_{1,n_2}$, $\dots$, $K_{1,n_m}$ is linearly independent under the colored leaf-coinciding operation. By the colored leaf-coinciding operation and the felicitous-difference ice-flower systems $I_{ce}(F_{1,n}D_k)^{2n}_{k=1}$ and $I_{ce}(SF_{1,n}D_k)^{n}_{k=1}$, each graph contained in the following graphic lattice
\begin{equation}\label{eqa:stars-lattice}
\textrm{\textbf{L}}(\overline{\ominus} \textbf{I}_{ce}(FD)) =\left \{\overline{\ominus}^{2n}_{i=1}a_iF_{1,n}D_i: a_i\in Z^0, F_{1,n}D_i\in I_{ce}(F_{1,n}D_k)^{2n}_{k=1}\right \}
\end{equation} is $\Delta$-saturated, where $\sum ^{2n}_{i=1}a_i\geq 1$ and the base is $\textbf{I}_{ce}(FD)=I_{ce}(F_{1,n}D_k)^{2n}_{k=1}$. We call $\textrm{\textbf{L}}(\overline{\ominus} \textbf{I}_{ce}(FD))$ a \emph{felicitous-difference star-graphic lattice}. Similarly, by the smallest felicitous-difference ice-flower system $I_{ce}(SF_{1,n}D_k)^{n}_{k=1}$, we have another \emph{felicitous-difference star-graphic lattice} defined as follows:
\begin{equation}\label{eqa:smallest-stars-lattice}
\textrm{\textbf{L}}(\overline{\ominus} \textbf{I}_{ce}(SFD)) =\left \{\overline{\ominus}^n_{j=1}a_jSF_{1,n}D_j: a_j\in Z^0, SF_{1,n}D_j\in I_{ce}(SF_{1,n}D_k)^{n}_{k=1}\right \}
\end{equation} with $\sum^n_{j=1} a_j\geq 1$ and the base is $\textbf{I}_{ce}(SFD)=I_{ce}(SF_{1,n}D_k)^{n}_{k=1}$.

As an application of the felicitous-difference ice-flower systems, a $\Delta$-saturated graph $G$ shown in Fig.\ref{fig:saturated-big} (a) is obtained by doing a series of colored leaf-coinciding operations on a smallest felicitous-difference ice-flower system $I_{ce}(S_{1,6}G_k)^{6}_{k=1}$ shown in Fig.\ref{fig:smallest-star-system-big}, so $G$ belongs to the felicitous-difference star-graphic lattice $\textrm{\textbf{L}}(\overline{\ominus} \textbf{I}_{ce}(SFD))$; Fig.\ref{fig:saturated-big} (b) is obtained by doing a series of colored leaf-coinciding operations on the $\Delta$-saturated graph (a). Conversely, the $\Delta$-saturated graph (a) is obtained by doing a series of colored leaf-splitting operations on the $\Delta$-saturated graph (b).

\begin{figure}[h]
\centering
\includegraphics[width=16cm]{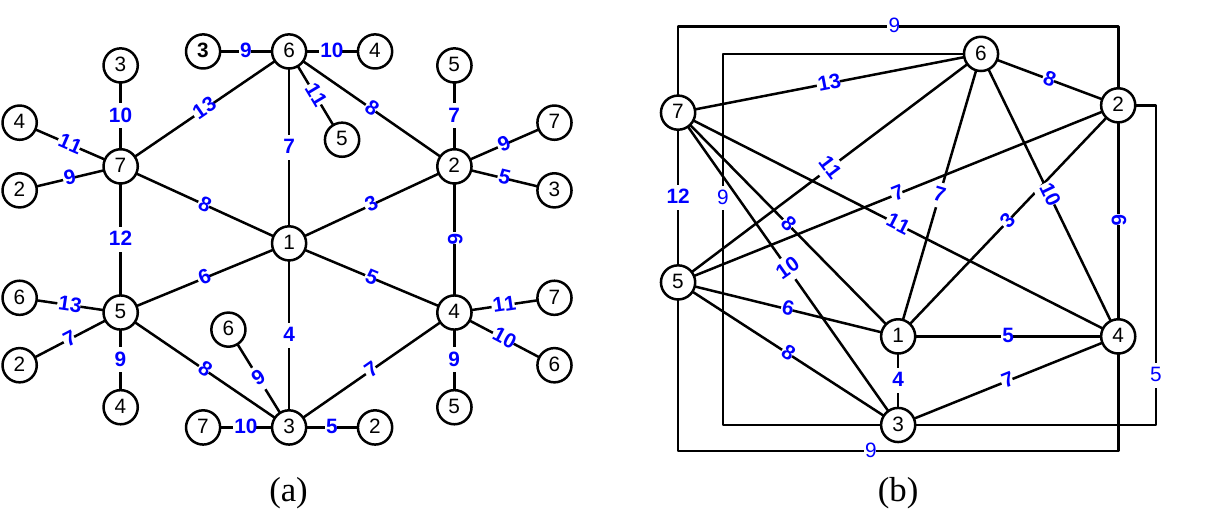}\\
\caption{\label{fig:saturated-big} {\small Two $\Delta$-saturated graphs admitting felicitous-difference proper total colorings.}}
\end{figure}

Thereby, the felicitous-difference star-graphic lattice $\textrm{\textbf{L}}(\overline{\ominus} \textbf{I}_{ce}(SFD))=L^s_G\cup L^s_H$, each graph $G\in L^s_G$ admits a felicitous-difference graph homomorphism to some $H\in L^s_H$, where each graph $H$ of $ L^s_H$ is $\Delta$-regular, i.e., $\textrm{deg}_H(x)=\Delta$ for $x\in V(H)$. Similarly, we have that felicitous-difference star-graphic lattice $\textrm{\textbf{L}}(\overline{\ominus} \textbf{I}_{ce}(FD))=L_G\cup L_H$, such that each graph $G\in L_G$ admits a felicitous-difference graph homomorphism to some graph $H\in L_H$, we write this case as $L_G\rightarrow L_H$, called a set-graph homomorphism from $L_G$ to $L_H$.

By the properties of two systems $I_{ce}(F_{1,n}D_k)^{2n}_{k=1}$ and $I_{ce}(SF_{1,n}D_k)^{n}_{k=1}$, we have

\begin{thm} \label{thm:property-star-type-lattice}
Each bipartite graph $G\in \textrm{\textbf{L}}(\overline{\ominus} \textbf{I}_{ce}(FD))$ holds $\chi''_{fdt}(G)\leq 3\Delta(G)$, and every $\Delta$-saturated tree $H\in \textrm{\textbf{L}}(\overline{\ominus} \textbf{I}_{ce}(SFD))$ holds $\chi''_{fdt}(H)\leq 1+2\Delta(H)$ true.
\end{thm}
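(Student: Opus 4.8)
The plan is to observe that every member of these two star-graphic lattices already carries, for free, a felicitous-difference proper total coloring inherited from the colorings $g_k$ (resp. $h_k$) of the stars in the underlying ice-flower system, and then to bound simultaneously the size of that coloring's palette and the maximum degree of the graph. Since $\chi''_{fdt}$ is a minimum over all felicitous-difference proper total colorings of $G$, exhibiting one such coloring with palette inside $[1,M]$ already gives $\chi''_{fdt}(G)\leq M$, so all that remains is to rewrite $M$ in terms of $\Delta$.

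First I would isolate two properties of the colored leaf-coinciding operation $\overline{\ominus}$ used to build these lattices. (a) \emph{Palette-stability}: if $G_1,G_2$ admit felicitous-difference proper total colorings $f_1,f_2$ with a common magic value (here $0$, since $g_k(x^k_0x^k_j)=g_k(x^k_0)+g_k(x^k_j)$ and $h_k(x^k_0x^k_s)=h_k(x^k_0)+h_k(x^k_s)$, so $|f_i(u)+f_i(v)-f_i(uv)|=0$), then the induced coloring $f=f_1\overline{\ominus}f_2$ of $G_1\overline{\ominus}G_2$ is again a felicitous-difference proper total coloring with magic value $0$, and $f(V(G_1\overline{\ominus}G_2)\cup E(G_1\overline{\ominus}G_2))\subseteq f_1(V(G_1)\cup E(G_1))\cup f_2(V(G_2)\cup E(G_2))$; so no color outside the union of the two palettes is ever created. (b) \emph{Degree-stability}: coinciding a vertex of degree $\geq 2$ with a leaf produces a vertex of the former's degree, so coinciding two copies of $K_{1,n}$ along a leaf of each keeps both centers at degree $n$. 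Applying (b) along the whole build sequence of $G$ from copies of $K_{1,n}$ shows that every non-leaf vertex of $G$ ends up with degree exactly $n$, hence $\Delta(G)=n$; likewise $\Delta(H)=n$ in the SFD case.

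Next I would plug in the explicit colorings. For $\textbf{I}_{ce}(FD)=I_{ce}(F_{1,n}D_k)^{2n}_{k=1}$, in both parities $n=2m$ and $n=2m+1$ the coloring $g_k$ of $F_{1,n}D_k\cong K^{(k)}_{1,n}$ uses vertex colors in $[1,4m]$ resp. $[1,4m+3]$ and edge colors $g_k(x^k_0x^k_j)=g_k(x^k_0)+g_k(x^k_j)$ bounded by $6m$ resp. $6m+3$, i.e. all colors lie in $[1,3n]$ (and, unlike the GD and ED constructions, the FD construction has no recoloring step). By (a) any $G\in \textrm{\textbf{L}}(\overline{\ominus}\textbf{I}_{ce}(FD))$ inherits a felicitous-difference proper total coloring with palette in $[1,3n]$; by (b), $\Delta(G)=n$; hence $\chi''_{fdt}(G)\leq 3n=3\Delta(G)$. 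That $G$ is bipartite is automatic — stars are bipartite, and gluing two bipartite graphs along a single edge across their bipartitions stays bipartite — and is not otherwise needed for the bound. For $\textbf{I}_{ce}(SFD)=I_{ce}(SF_{1,n}D_k)^{n}_{k=1}$, the coloring $h_k$ has $h_k(x^k_0),h_k(x^k_s)\in[1,n+1]$ and $h_k(x^k_0x^k_s)=h_k(x^k_0)+h_k(x^k_s)\leq 2n+1$, so its palette lies in $[1,2n+1]$; the same two facts give, for every $\Delta$-saturated tree $H\in \textrm{\textbf{L}}(\overline{\ominus}\textbf{I}_{ce}(SFD))$, an inherited felicitous-difference proper total coloring with palette in $[1,2n+1]$ and $\Delta(H)=n$, whence $\chi''_{fdt}(H)\leq 2n+1=1+2\Delta(H)$.

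The hard part will be the \emph{properness} half of (a): after each leaf-coinciding one must check that at a coincided center all incident edges still receive pairwise distinct colors (and that no edge repeats the color of one of its ends). This is precisely the statement that the ice-flower systems $I_{ce}(F_{1,n}D_k)^{2n}_{k=1}$ and $I_{ce}(SF_{1,n}D_k)^{n}_{k=1}$ are \emph{strongly} colored, and it has to be extracted from the $k$-dependence of $g_k$ and $h_k$: the leaf-edges used for the coincidings carry colors chosen so that a center absorbing leaves of several other stars keeps a repetition-free edge palette. I would settle this by a short case analysis on which leaf-edge of each star is used in each gluing, showing the edge palette of a glued center is always a union of pairwise different singletons. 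A separate minor point is the degenerate case $\sum a_i=1$, where $G\cong K_{1,n}$ and the bound is immediate from $g_1$ (resp. $h_1$); apart from that and the properness check, everything else is the palette-size and degree bookkeeping carried out above.
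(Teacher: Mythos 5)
Your proposal is correct and is essentially the argument the paper intends: the theorem is asserted directly from ``the properties of the two systems,'' namely that every star in $I_{ce}(F_{1,n}D_k)^{2n}_{k=1}$ (resp. $I_{ce}(SF_{1,n}D_k)^{n}_{k=1}$) carries a felicitous-difference proper total coloring with all colors in $[1,3n]$ (resp. $[1,2n+1]$), that the colored leaf-coinciding operation merges such colorings without creating new colors or spoiling properness (edges incident to any vertex of the glued graph come from a single star), and that lattice members are $\Delta$-saturated with $\Delta=n$, which is exactly the bookkeeping you carry out. One minor caveat: your parenthetical claim that bipartiteness is automatic is not right in general, since leaf-coinciding two leaf-edges inside one component can create odd cycles (this is precisely why the theorem restricts to bipartite $G$), but as you note the bound never uses bipartiteness, so this does not affect the proof.
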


Let $f$ be a felicitous-difference proper total coloring of a graph $G$ in one of two felicitous-difference star-graphic lattices $\textrm{\textbf{L}}(\overline{\ominus} \textbf{I}_{ce}(FD))$ and $\textrm{\textbf{L}}(\overline{\ominus} \textbf{I}_{ce}(SFD))$, and let $C_e(u)=\{f(uw): w\in N(u)\}$ and $C_v[u]=\{f(u)\}\cup \{f(w): w\in N(u)\}$ for each vertex $u\in V(G)$. Then we have $C_e(x)\neq C_e(y)$ and $C_v[x]\neq C_v[y]$ for each edge $xy\in E(G)$. We call $f$ an \emph{adjacent-vertex distinguishing felicitous-difference proper total coloring}. We apply two felicitous-difference star-graphic lattices $\textrm{\textbf{L}}(\overline{\ominus} \textbf{I}_{ce}(FD))$ and $\textrm{\textbf{L}}(\overline{\ominus} \textbf{I}_{ce}(SFD))$ to show the following results:

\begin{thm} \label{thm:particular-graphs-felicitous-difference}
Let $P_m$ be a path of $m$ vertices, $C_n$ be a cycle of $n$ vertices, and $T_{cat}$ be a caterpillar.

(1) $\chi''_{fdt}(C_{3m})=5$, and $\chi''_{fdt}(C_{n})=6$ for $n\neq 3m$.

(2) $\chi''_{fdt}(P_2)=3$, $\chi''_{fdt}(P_3)=4$, and $\chi''_{fdt}(P_m)=5$ for $m\geq 4$.

(3) $D(T_{cat})$ is the diameter of a $\Delta$-saturated caterpillar $T_{cat}$, then $\chi''_{fdt}(T_{cat})=\Delta(T_{cat})+2$ for $D(T_{cat})=2$, $\chi''_{fdt}(T_{cat})=\Delta(T_{cat})+3$ for $D(T_{cat})=3$, and $\chi''_{fdt}(T_{cat})=\Delta(T_{cat})+4$ for $D(T_{cat})\geq 4$.
\end{thm}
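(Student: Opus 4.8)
The plan is to prove each of the three statements by a combination of a lower-bound argument (using the fact that $\chi''_{fdt}(G) \geq \chi''(G)$, together with counting/parity obstructions specific to the edge-function $c_f(uv) = |f(u)+f(v)-f(uv)|$) and an explicit construction of a felicitous-difference proper total coloring achieving the bound. For the constructions I would exploit the felicitous-difference ice-flower systems $I_{ce}(F_{1,n}D_k)^{2n}_{k=1}$ and $I_{ce}(SF_{1,n}D_k)^{n}_{k=1}$ and the colored leaf-coinciding operation: paths, cycles, and $\Delta$-saturated caterpillars all sit naturally inside $\textbf{\textrm{L}}(\overline{\ominus} \textbf{I}_{ce}(FD))$ or $\textbf{\textrm{L}}(\overline{\ominus} \textbf{I}_{ce}(SFD))$ after appropriate leaf-coinciding, so the existence half of each claim follows from Theorem \ref{thm:property-star-type-lattice} and from directly inspecting the colorings $g_k$ and $h_k$ defined there. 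The value $M'$ in the all-dual framework (Theorem \ref{thm:perfect-all-dual}) is not needed here; what is needed is simply an explicit $f$ with $c_f(uv)$ constant.

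First I would handle the cycles, part (1). For $C_n$ we have $\Delta = 2$, so $\chi''(C_n) \in \{3,4\}$ and any felicitous-difference proper total coloring uses colors in $[1,M]$ with $M \geq \chi''_{fdt}(C_n) \geq 3$. I would set up a constant $k$ with $f(u)+f(v) - f(uv) = \pm k$ around the cycle and chase the implied recurrence on the vertex colors: writing the vertices cyclically as $v_0, v_1, \dots, v_{n-1}$, the relation $f(v_{i}v_{i+1}) = f(v_i)+f(v_{i+1}) \mp k$ together with properness forces a near-periodic pattern of period dividing $3$ on the vertex labels, which is why $3 \mid n$ is exactly the case where the smallest palette ($M=5$) suffices and otherwise one extra color is unavoidable. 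The lower bounds $\chi''_{fdt}(C_{3m}) \geq 5$ and $\chi''_{fdt}(C_n) \geq 6$ for $3 \nmid n$ come from showing that no proper total coloring with fewer colors can keep $c_f$ constant — this is the parity/periodicity obstruction and is where a short but careful case analysis is required. The matching upper bounds are explicit: for $C_{3m}$ repeat a fixed $3$-block of vertex colors from $\{1,2,3\}$ with suitable edge colors, and for the remaining cycles adjoin a corrective block using the sixth color.

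For part (2), paths, the argument is shorter: $P_2 = K_{1,1}$ forces three distinct values on the two vertices and one edge with $c_f$ trivially constant, giving $\chi''_{fdt}(P_2) = 3$; $P_3$ is a star $K_{1,2}$ and the constraint $|f(x)+f(y)-f(xy)|$ constant on both edges, combined with properness at the center, yields $4$ as both a lower and an achievable bound (and indeed $P_3 \cong K_{1,2}$ appears directly in the smallest ice-flower system with $n=2$); and for $m \geq 4$ one checks that $P_m$ is a $\Delta$-saturated caterpillar of diameter $\geq 3$ with $\Delta = 2$, so part (3) applied with $\Delta(P_m) = 2$ and $D(P_m) = m-1 \geq 3$ gives $\chi''_{fdt}(P_m) = 5$ for $m \geq 4$ (and $4$ for $m=3$, $3$ for $m=2$ as a sanity check). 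Thus (2) is essentially a corollary of (3) together with two tiny base cases, provided I prove (3) first.

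Part (3) is the substantive one and I expect it to be the main obstacle. The strategy is: for a $\Delta$-saturated caterpillar $T_{cat}$ with spine $w_1 w_2 \cdots w_\ell$ and leaves attached so that every non-leaf vertex has degree exactly $\Delta$, realize $T_{cat}$ as an element of $\textbf{\textrm{L}}(\overline{\ominus} \textbf{I}_{ce}(SFD))$ (for $D \le 3$) or of $\textbf{\textrm{L}}(\overline{\ominus} \textbf{I}_{ce}(FD))$ (for larger diameter) by leaf-coinciding copies of the colored stars $SF_{1,\Delta-1}D_k$ or $F_{1,\Delta}D_k$ along the spine edges, and read off the resulting felicitous-difference proper total coloring; Theorem \ref{thm:property-star-type-lattice} then already gives $\chi''_{fdt}(T_{cat}) \le 1 + 2\Delta$ in the saturated-tree case, but we need the sharper diameter-dependent bounds $\Delta+2$, $\Delta+3$, $\Delta+4$, which requires a more economical coloring than the generic ice-flower one. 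For $D = 2$ the caterpillar is the star $K_{1,\Delta}$, where $\chi''_{fdt} = \Delta + 2$ follows from a direct palette count. For $D = 3$ (double star) and $D \ge 4$ I would build the coloring spine-vertex by spine-vertex, at each step choosing the center color and the edge colors from an interval of length $\Delta + 3$ (resp. $\Delta + 4$) so that the constant $|f(u)+f(v)-f(uv)| = k$ is maintained while properness at consecutive spine vertices is respected; the hard part is verifying that the color budget never overflows — i.e., that the recurrence forced by the edge-magic-type constraint does not push a label outside $[1,\Delta+4]$ — and that the lower bounds are tight, for which one argues that at a spine vertex of degree $\Delta$ whose neighbor is also of degree $\Delta$, the combined requirements of distinct edge colors, distinct-from-endpoint vertex colors, and constant $c_f$ across two adjacent spine edges already force at least $\Delta + 4$ colors when $D \ge 4$, at least $\Delta+3$ when $D=3$, and at least $\Delta+2$ when $D=2$. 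I would organize the verification as three lemmas mirroring the three diameter ranges, each with a lower-bound counting claim and an explicit construction, and present the constructions as labellings of a concrete running example of each type to keep the bookkeeping transparent.
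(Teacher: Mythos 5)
The paper itself states this theorem without any proof (it is simply asserted as an application of the two felicitous-difference ice-flower systems), so there is no argument of the author's to compare yours with; your plan has to stand on its own, and as written it contains one concrete error besides leaving the substantive lower bounds unproved. The error is the reduction of (2) to (3): for $m\geq 5$ the path $P_m$ is a $\Delta$-saturated caterpillar with $\Delta(P_m)=2$ and $D(P_m)=m-1\geq 4$, so part (3) would give $\chi''_{fdt}(P_m)=\Delta+4=6$, not the value $5$ you are trying to prove — your own deduction contradicts the statement. In fact $5$ colors always suffice: color the vertices periodically $1,2,3,1,2,3,\dots$ along the path and give each edge the sum of its endpoint colors (so the edges are colored $3,5,4,3,5,4,\dots$); this is a proper total coloring with $|f(u)+f(v)-f(uv)|=0$ on every edge. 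Consequently part (3) cannot apply to $\Delta=2$ caterpillars of diameter at least $4$ (it must implicitly assume $\Delta\geq 3$), and (2) needs its own treatment: the periodic coloring above for the upper bound, a finite check that four colors are impossible for $P_4$, and then monotonicity under subgraphs (Lemma \ref{thm:subgraph-vs-graph-felicitous-difference}) to push the lower bound $5$ to all $m\geq 5$.

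Beyond that, the parts of the theorem that carry the real content are only gestured at. For (1) you assert that the constraint forces a near-periodic pattern of period dividing $3$ and that one extra color is unavoidable when $3\nmid n$, but the sign ambiguity in $f(uv)=f(u)+f(v)\mp k$ makes this recurrence analysis exactly the delicate case analysis that must be written out, and the lower bound $\chi''_{fdt}(C_{3m})\geq 5$ also needs an argument. For (3), invoking Theorem \ref{thm:property-star-type-lattice} only yields $\chi''_{fdt}\leq 1+2\Delta$, which is strictly weaker than the claimed $\Delta+2$, $\Delta+3$, $\Delta+4$, so the ice-flower embedding does essentially none of the work; each diameter range needs an explicit coloring confined to $[1,\Delta+2]$, $[1,\Delta+3]$, $[1,\Delta+4]$ respectively together with a matching counting lower bound, and none of these is carried out (the star case $D=2$ is the only one that is close to complete). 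As it stands the proposal is an outline with a flawed reduction rather than a proof.
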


\begin{thm} \label{thm:saturated-felicitous-difference}
1. For each complete bipartite graph $K_{n,n}$, we have $\chi''_{fdt}(K_{n,n})=3n$.

2. Each complete bipartite graph $K_{m,n}$ with $m\leq n$ holds $\chi''_{fdt}(K_{m,n})=2m+n$.

3. Each vertex $u$ of a $\Delta$-saturated tree $H$ has its degree $\textrm{deg}_H(u)=1$ or $\textrm{deg}_H(u)=\Delta(H)$. Then $\chi''_{fdt}(H)\leq 1+2\Delta(H)$ as diameter $D(H)$ is not less than 3.
\end{thm}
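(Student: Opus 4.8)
The plan is to treat the three assertions separately; assertion~1 is the case $m=n$ of assertion~2, so I record it first only because it meshes directly with the felicitous-difference ice-flower systems of the previous subsection. \emph{Upper bounds (assertions 1, 2).} For $K_{m,n}$ with $m\le n$, parts $X=\{u_1,\dots,u_m\}$ and $Y=\{v_1,\dots,v_n\}$, put $f(u_i)=i$, $f(v_j)=m+j$, $f(u_iv_j)=f(u_i)+f(v_j)=i+m+j$. Then $f(X)=[1,m]$ and $f(Y)=[m+1,m+n]$ are disjoint, $f$ is injective on each part, so incident edges receive distinct colours and each edge colour exceeds the colours of its two ends; hence $f$ is a proper total coloring, and $|f(u_i)+f(v_j)-f(u_iv_j)|=0$ for every edge, so $f$ is a felicitous-difference proper total coloring with largest colour $2m+n$. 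This gives $\chi''_{fdt}(K_{m,n})\le 2m+n$, and in particular $\chi''_{fdt}(K_{n,n})\le 3n$. For $K_{n,n}$ the same colouring arises inside the present framework by taking the first $n$ stars of $I_{ce}(F_{1,n}D_k)^{2n}_{k=1}$ (whose $j$-th leaves all share a colour) and vertex-coinciding their $j$-th leaves into one vertex $y_j$ for each $j$, mirroring the construction of $K_{n,n}$ from the graceful-difference stars and letting Theorem~\ref{thm:property-star-type-lattice} serve as a cross-check.

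\emph{Lower bounds (assertions 1, 2) --- the crux.} Let $f\colon V(K_{m,n})\cup E(K_{m,n})\to[1,M]$ be a felicitous-difference proper total coloring with constant $k$, so $f(uv)\in\{f(u)+f(v)-k,\,f(u)+f(v)+k\}$ for each edge $uv$. The intended argument is: first, because every vertex of $X$ is adjacent to every vertex of $Y$, deduce that $f$ is injective on $X$, injective on $Y$, and satisfies $f(X)\cap f(Y)=\emptyset$; writing $f(X)=\{a_1<\dots<a_m\}$ and $f(Y)=\{b_1<\dots<b_n\}$ one then has $a_m-a_1\ge m-1$ and $b_n-b_1\ge n-1$; finally, track the extreme edge colours $a_m+b_n\pm k$ and $a_1+b_1\pm k$ against the constraints that every edge colour lies in $[1,M]$ and differs from the colours of its ends, and against the disjointness $f(X)\cap f(Y)=\emptyset$, to force $M\ge 2m+n$. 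I expect the $\pm k$ sign ambiguity to be the main obstacle: one must rule out that two equally coloured vertices of $X$ are kept distinct, for every neighbour in $Y$, by using the $+k$ branch on one incident edge and the $-k$ branch on the other (equivalently, reduce to the edge-ordered case $f(uv)=f(u)+f(v)$), and then extract the exact constant $2m+n$ rather than a weaker linear bound.

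\emph{Assertion 3.} Let $H$ be a $\Delta$-saturated tree with $D(H)\ge 3$, so every vertex has degree $1$ or $\Delta=\Delta(H)$. The plan is to show $H\in\textbf{\textrm{L}}(\overline{\ominus}\,\textbf{I}_{ce}(SFD))$ and then invoke the second assertion of Theorem~\ref{thm:property-star-type-lattice}. Apply the colored leaf-splitting operation to every internal edge of $H$ (an edge both of whose ends are non-leaves); since leaf-splitting preserves degrees, $H$ decomposes into disjoint stars, one copy of $K_{1,\Delta}$ for each non-leaf vertex of $H$. Pick a proper vertex colouring $c$ of the non-leaf subtree of $H$ (a tree, hence $2$-colourable) with colours in $[1,\Delta]$, identify the star centred at a non-leaf $x$ with the copy $SF_{1,\Delta}D_{c(x)}$ of the smallest felicitous-difference ice-flower system $I_{ce}(SF_{1,\Delta}D_k)^{\Delta}_{k=1}$, and reverse each split: across an internal edge $xx'$ one leaf-coincides the leaf of $SF_{1,\Delta}D_{c(x)}$ coloured $c(x')$ with the leaf of $SF_{1,\Delta}D_{c(x')}$ coloured $c(x)$, whose incident edge colour equals $c(x)+c(x')$ on both sides, so the colour-compatibility conditions of the colored leaf-coinciding operation hold. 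This exhibits $H$ as an iterated $\overline{\ominus}$-combination of stars of $I_{ce}(SF_{1,\Delta}D_k)^{\Delta}_{k=1}$, i.e. a member of $\textbf{\textrm{L}}(\overline{\ominus}\,\textbf{I}_{ce}(SFD))$, whence $\chi''_{fdt}(H)\le 1+2\Delta(H)$. The point to watch is a non-leaf vertex with no leaf neighbour, where the spare leaf-colour $\Delta+1$ built into the star must be used; the hypothesis $D(H)\ge 3$ guarantees each participating star is leaf-coincided at least once, so the bookkeeping closes. (When $H$ is a caterpillar, assertion~3 also follows from Theorem~\ref{thm:particular-graphs-felicitous-difference}(3), using $\Delta+3\le 1+2\Delta$ and $\Delta+4\le 1+2\Delta$ for $\Delta\ge 3$, together with $\chi''_{fdt}(P_m)=5=1+2\Delta$ in the path case $\Delta=2$.)
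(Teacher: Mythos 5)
Your upper-bound construction for assertions~1 and~2 is fine, but the two lower bounds are not proved: what you give is explicitly an ``intended argument'' whose decisive step you yourself flag as open. With a felicitous-difference constant $k\neq 0$ each edge may use either branch $f(uv)=f(u)+f(v)\pm k$, so neither injectivity of $f$ on $X$ and $Y$, nor $f(X)\cap f(Y)=\emptyset$, nor the final extraction of the exact value $M\ge 2m+n$ is established; two equally coloured vertices of a part can indeed be separated at every common neighbour by opposite sign choices, and you do not rule this out. Since assertions~1 and~2 are equalities, without the lower bound you have only $\chi''_{fdt}(K_{m,n})\le 2m+n$ (and $\le 3n$), i.e.\ half of the statement. (The paper states this theorem without proof, so there is no argument there to fall back on.)

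The argument for assertion~3 has a concrete flaw. A proper $2$-colouring $c$ of the internal subtree only separates adjacent internal vertices, whereas your reassembly needs the internal neighbours of each internal vertex $x$ to carry pairwise distinct colours: the star $SF_{1,\Delta}D_{c(x)}$ has exactly one leaf of each colour, and if two internal neighbours $x',x''$ of $x$ satisfy $c(x')=c(x'')$ (which a $2$-colouring forces whenever $x$ has two internal neighbours), the two coincided edges at $x$ would both be coloured $c(x)+c(x')$, destroying properness — so you need a distance-two colouring of the internal vertices, not a proper one. Worse, an internal vertex all of whose $\Delta$ neighbours are internal cannot be handled at all inside $I_{ce}(SF_{1,\Delta}D_k)^{\Delta}_{k=1}$: its star's spare leaf coloured $\Delta+1$ can never be leaf-coincided, since no star of the system has centre colour $\Delta+1$, and only $\Delta$ centre colours are available for the $\Delta+1$ mutually distinct values required around such a vertex. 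For example, with $\Delta=3$ take a centre of degree $3$ whose three neighbours each have degree $3$ with two pendant leaves: this tree is $\Delta$-saturated with diameter $4$, yet any felicitous-difference colouring with constant $0$ and colours in $[1,7]$ must place the colour $4=\Delta+1$ on a degree-$\Delta$ vertex, so the tree is not an $\overline{\ominus}$-combination of the stars $SF_{1,3}D_1,SF_{1,3}D_2,SF_{1,3}D_3$ and Theorem~\ref{thm:property-star-type-lattice} cannot be invoked. The bound $\chi''_{fdt}(H)\le 1+2\Delta(H)$ may still be provable (e.g.\ by enlarging the family of colored stars to allow centre colour $\Delta+1$, or by a direct greedy/inductive colouring of the tree), but not by the membership argument you propose.
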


\begin{lem} \label{thm:complete-graph-felicitous-difference}
Each complete graph $K_{n}$ with $n\geq 3$ holds $\chi''_{fdt}(K_{n})=2n-1$ and admits a pair of perfect all-dual felicitous-difference proper total colorings.
\end{lem}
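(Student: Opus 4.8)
The plan is to prove the two assertions of the lemma separately. First I would exhibit, for each $n\ge 3$, an explicit pair of felicitous-difference proper total colorings of $K_n$ that are perfect all-dual and whose palettes lie inside $[1,2n-1]$; this settles the ``pair'' claim and gives $\chi''_{fdt}(K_n)\le 2n-1$ in one stroke. Then I would prove the matching lower bound $\chi''_{fdt}(K_n)\ge 2n-1$ by a direct combinatorial analysis of the vertex colors and the magic constant.

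For the construction, label the vertices $v_1,\dots,v_n$ of $K_n$ and set $g(v_i)=i$ and $g(v_iv_j)=i+j$ for $i<j$. A direct check shows $g$ is a proper total coloring: the vertex colors are pairwise distinct, the edges at a fixed $v_i$ receive pairwise distinct colors $i+j$, and $i+j\notin\{i,j\}$. Moreover $|g(v_i)+g(v_j)-g(v_iv_j)|=0$ for every edge, so $g$ is a felicitous-difference proper total coloring (the constant may be $0$, as in $g_a$ of Example~\ref{exa:felicitous-difference-total-coloring}) with $\max g=(n-1)+n=2n-1$; hence $\chi''_{fdt}(K_n)\le 2n-1$. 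Next, define $g^{c}$ by $g^{c}(v_i)=2n-i$ and $g^{c}(v_iv_j)=2n-i-j$; again this is a proper total coloring, all of its values lie in $[1,2n-1]$ (the smallest edge color is $2n-(n-1)-n=1$), and $|g^{c}(v_i)+g^{c}(v_j)-g^{c}(v_iv_j)|=2n$ is constant, so $g^{c}$ is a felicitous-difference proper total coloring. Since $g(x)+g^{c}(x)=2n$ for every vertex $x$ and $g(uv)+g^{c}(uv)=2n$ for every edge $uv$, the pair $(g,g^{c})$ is perfect all-dual with $C_v=C_e=2n$; this is precisely the situation of Theorem~\ref{thm:perfect-all-dual} with $M=2n$ and $M'=0$.

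For the lower bound, let $f$ be an arbitrary felicitous-difference proper total coloring of $K_n$ with $\max f=M$ and magic constant $k=|f(u)+f(v)-f(uv)|\ge 0$, and order the vertex colors $a_1<a_2<\dots<a_n$, so $a_i\ge i$; in particular $a_{n-1}\ge n-1$ and $a_n\ge n$. Call an edge $v_iv_j$ \emph{high} if $f(v_iv_j)=a_i+a_j+k$ and \emph{low} if $f(v_iv_j)=a_i+a_j-k$. If $k=0$, or if the edge $v_{n-1}v_n$ is high, then $M\ge f(v_{n-1}v_n)\ge a_{n-1}+a_n\ge 2n-1$ and we are done; so assume $k\ge 1$ and $v_{n-1}v_n$ is low. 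I would then argue by contradiction from $M\le 2n-2$: an edge $v_iv_j$ can be high only if $a_i+a_j\le M-k\le 2n-3$, which forces every edge $v_iv_n$ with $i\ge n-2$ to be low (else $a_i\le 2n-3-a_n\le n-3<i$), so $v_{n-1}v_n$ and $v_{n-2}v_n$ are low with colors $a_{n-1}+a_n-k$ and $a_{n-2}+a_n-k$ lying in $[1,M]$. Feeding these together with the remaining constraints at $v_n$ — the $n-1$ incident edge colors are pairwise distinct and all differ from $a_n$, so $v_n$ alone already forces $n$ distinct colors into $[1,2n-2]$ while the $n$ vertex colors occupy $n$ of those $2n-2$ slots — into a counting argument yields the contradiction, whence $M\ge 2n-1$. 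Combining the two bounds gives $\chi''_{fdt}(K_n)=2n-1$.

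The hard part will be the lower bound in the regime where $k$ is large and all edges among the top few vertices are low: the naive estimate $M\ge a_{n-1}+a_n-k$ together with $k\le a_1+a_2-1$ only yields $M\ge 2n-3$, and sharpening this to $2n-1$ will require carefully exploiting the palette congestion at $v_n$ (and, if needed, at $v_{n-1}$ and $v_{n-2}$ as well). I expect to dispose of the base cases $n=3,4$ by explicit inspection — for $n=3$ this is consistent with $\chi''_{fdt}(C_3)=5$ from Theorem~\ref{thm:particular-graphs-felicitous-difference} — and to run the counting argument uniformly for $n\ge 5$.
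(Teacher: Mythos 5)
The paper states this lemma without any proof, so there is no in-paper argument to compare against; judged on its own terms, the first half of your proposal is correct and complete. The pair $g(v_i)=i$, $g(v_iv_j)=i+j$ and $g^{c}(v_i)=2n-i$, $g^{c}(v_iv_j)=2n-i-j$ are both proper total colorings with values in $[1,2n-1]$, their felicitous-difference constants are $0$ and $2n$, and they sum to $2n$ on every vertex and every edge, so they form a perfect all-dual pair (this is exactly Theorem~\ref{thm:perfect-all-dual} with $M=2n$, $M'=0$), and $\max g=2n-1$ gives $\chi''_{fdt}(K_n)\le 2n-1$.

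The genuine gap is the lower bound, and it is not merely unfinished: it cannot be closed as stated. You correctly isolated the dangerous regime ($k\ge 1$ with the edge joining the two largest vertex colors being ``low''), but in precisely that regime the inequality $M\ge 2n-1$ already fails at $n=4$ under Definition~\ref{defn:combinatoric-definition-total-coloring}. Color $K_4$ by $f(v_1)=1$, $f(v_2)=2$, $f(v_3)=4$, $f(v_4)=5$ and $f(v_1v_2)=6$, $f(v_1v_3)=2$, $f(v_1v_4)=3$, $f(v_2v_3)=3$, $f(v_2v_4)=4$, $f(v_3v_4)=6$. Then $|f(u)+f(v)-f(uv)|=3$ on every edge; adjacent vertices, adjacent edges, and incident vertex--edge pairs all get distinct colors (the repeated edge colors $3$ and $6$ occur only on disjoint edges, which a proper total coloring permits); and all colors lie in $[1,6]$, so $\chi''_{fdt}(K_4)\le 6<7=2\cdot4-1$. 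Hence no counting argument about ``palette congestion at $v_n$'' can rescue the case analysis for general $n$: the only unconditional conclusions it yields are $M\ge 2n-1$ when $k=0$ or the top edge is high, and $M\ge a_{n-1}+a_n-k$ otherwise, and the latter is genuinely weaker. The claimed equality can only survive under an additional hypothesis that excludes the low-edge regime (for instance restricting to felicitous-difference colorings with constant $0$, or to edge-ordered ones with $f(x)+f(y)\le f(xy)$), or for $n=3$, where your case split together with $\chi''_{fdt}(C_3)=5$ from Theorem~\ref{thm:particular-graphs-felicitous-difference} does settle the value. So the defect here lies in the statement itself as much as in your attempt, but the lower-bound step you propose would fail if carried out.
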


\begin{lem} \label{thm:subgraph-vs-graph-felicitous-difference}
For each subgraph $H$ of a graph $G$, we have $\chi''_{fdt}(H)\leq \chi''_{fdt}(G)$.
\end{lem}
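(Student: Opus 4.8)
The plan is to prove the bound by pure \emph{restriction} of an optimal coloring, exploiting that both the properness condition and the felicitous-difference ``magic'' condition of Definition \ref{defn:combinatoric-definition-total-coloring} are inherited by subgraphs. First I would fix a felicitous-difference proper total coloring $f:V(G)\cup E(G)\rightarrow [1,M]$ realizing $M=\chi''_{fdt}(G)$; by definition this means there is a constant $k$ with $c_f(uv)=|f(u)+f(v)-f(uv)|=k$ for every edge $uv\in E(G)$, i.e. $B^*_{fdt}(G,f,M)=0$. Let $f'$ be the restriction of $f$ to $V(H)\cup E(H)$. Since $H$ is a subgraph of $G$ we have $V(H)\subseteq V(G)$ and $E(H)\subseteq E(G)$, so every value taken by $f'$ already lies in $[1,M]$.

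Next I would check that $f'$ is a proper total coloring of $H$: if $u,v$ are adjacent in $H$ they are adjacent in $G$, whence $f'(u)=f(u)\neq f(v)=f'(v)$, and if $uv,uw$ are adjacent edges of $H$ they are adjacent in $G$, whence $f'(uv)\neq f'(uw)$. Then I would verify the felicitous-difference constraint: for each edge $uv\in E(H)\subseteq E(G)$ one has $c_{f'}(uv)=|f'(u)+f'(v)-f'(uv)|=|f(u)+f(v)-f(uv)|=k$, so the maximum and the minimum of $c_{f'}$ over $E(H)$ both equal $k$ and $B^*_{fdt}(H,f',M)=0$. Thus $f'$ is a felicitous-difference proper total coloring of $H$ using only colors in $[1,M]$, and the defining minimum in $\chi''_{fdt}(H)=\min_{g}\{M':~B^*_{fdt}(H,g,M')=0\}$ gives $\chi''_{fdt}(H)\leq M=\chi''_{fdt}(G)$.

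There is essentially no obstacle here; the only delicate points are bookkeeping. One is the degenerate case $E(H)=\emptyset$, where the parameter $B^*_{fdt}$ is an extremum over an empty index set: I would either restrict attention to subgraphs with at least one edge (consistent with the connected $(p,q)$-graph setting in which the earlier definitions are phrased) or adopt the convention that the empty magic condition is vacuously satisfied, so the inequality still holds. A second, purely formal point is that the palette $f'(V(H)\cup E(H))$ need not be an initial segment of $\mathbb{Z}$; this is irrelevant because the chromatic parameter only requires the colors to be contained in $[1,M]$, which is exactly what the restriction guarantees. Finally I would note that the constant $k$ carries over unchanged, which also shows more: any graph in a felicitous-difference star-graphic lattice inherits its magic constant from the base, dovetailing with Theorems \ref{thm:property-star-type-lattice} and \ref{thm:particular-graphs-felicitous-difference}.
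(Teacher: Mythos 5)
Your restriction argument is correct: an optimal felicitous-difference proper total coloring of $G$ restricted to $V(H)\cup E(H)$ remains proper and keeps $c_{f}(uv)=k$ on every edge of $H$, so $B^*_{fdt}(H,f',M)=0$ and $\chi''_{fdt}(H)\leq M=\chi''_{fdt}(G)$; your handling of the edgeless and non-initial-palette technicalities is fine. The paper states this lemma without any proof, and your monotonicity-by-restriction argument is exactly the step it implicitly relies on when sandwiching $K_m\subseteq G\subseteq K_{|V(G)|}$ to obtain Theorem \ref{thm:every-graph-tgadpt-coloring}.
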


Since each graph is a subgraph of some complete graph, immediately, we have

\begin{thm} \label{thm:every-graph-tgadpt-coloring}
Every graph $G$ containing a subgraph $K_m$ admits a pair of perfect all-dual felicitous-difference proper total colorings and holds $$2m-1\leq \chi''_{fdt}(G)\leq 2|V(G)|-1.$$
\end{thm}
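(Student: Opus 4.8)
The plan is to deduce the whole statement from the complete-graph case by the sandwiching inclusions $K_m\subseteq G\subseteq K_{|V(G)|}$, combined with the monotonicity Lemma~\ref{thm:subgraph-vs-graph-felicitous-difference} and the exact value $\chi''_{fdt}(K_n)=2n-1$ (together with the pair of perfect all-dual colorings it supplies) from Lemma~\ref{thm:complete-graph-felicitous-difference}. So nothing new about arbitrary graphs needs to be built: everything is transported from $K_n$ down to $G$, or read off $K_m$ sitting inside $G$.

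First I would settle the numerical bounds. Put $n=|V(G)|$ and assume $n\ge 3$; the cases $n\le 2$ (so $G$ is a subgraph of $K_2=P_2$) are disposed of directly, using $\chi''_{fdt}(P_2)=3=2\cdot 2-1$ and $\chi''_{fdt}(P_3)=4$ from Theorem~\ref{thm:particular-graphs-felicitous-difference}(2) and an explicit check that $K_1,K_2$ each carry a perfect all-dual pair. For the lower bound: since $G$ contains $K_m$ as a subgraph, Lemma~\ref{thm:subgraph-vs-graph-felicitous-difference} gives $\chi''_{fdt}(K_m)\le\chi''_{fdt}(G)$, and $\chi''_{fdt}(K_m)=2m-1$ by Lemma~\ref{thm:complete-graph-felicitous-difference}, hence $2m-1\le\chi''_{fdt}(G)$. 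For the upper bound: $G$ is a spanning subgraph of $K_n$, so Lemma~\ref{thm:subgraph-vs-graph-felicitous-difference} again yields $\chi''_{fdt}(G)\le\chi''_{fdt}(K_n)=2n-1=2|V(G)|-1$. (Note $m\le|V(G)|$, so the chain $2m-1\le\chi''_{fdt}(G)\le 2|V(G)|-1$ is consistent.)

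Next I would produce the pair of perfect all-dual felicitous-difference proper total colorings of $G$ itself. By Lemma~\ref{thm:complete-graph-felicitous-difference}, $K_n$ admits a pair $g,g'$ of perfect all-dual felicitous-difference proper total colorings; by the definition in Example~\ref{exa:felicitous-difference-total-coloring} there is a constant $M>0$ with $g(x)+g'(x)=M$ for every $x\in V(K_n)$ and $g(uv)+g'(uv)=M$ for every $uv\in E(K_n)$, and, being felicitous-difference proper total colorings in the sense of Definition~\ref{defn:combinatoric-definition-total-coloring}, there are constants $M',M''$ with $|g(u)+g(v)-g(uv)|=M'$ and $|g'(u)+g'(v)-g'(uv)|=M''$ for all $uv\in E(K_n)$. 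I would then restrict $g,g'$ to $V(G)\cup E(G)$: properness of a total coloring is inherited by any subgraph; the edge-function constants $M',M''$ persist verbatim over the smaller edge set $E(G)\subseteq E(K_n)$, so $g|_G,g'|_G$ are felicitous-difference proper total colorings of $G$; and since $V(G)=V(K_n)$ and $E(G)\subseteq E(K_n)$, the identities $g|_G(x)+g'|_G(x)=M$ and $g|_G(uv)+g'|_G(uv)=M$ survive with the same constant on vertices and on edges, so $(g|_G,g'|_G)$ is perfect all-dual; Theorem~\ref{thm:perfect-all-dual} can be invoked to package this final check.

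The main obstacle is not any single deep step but the bookkeeping: one must confirm that the restriction of a felicitous-difference proper total coloring of $K_n$ to $G$ is still admissible under the exact conventions of Definition~\ref{defn:combinatoric-definition-total-coloring} (re-reading that clause, it asks only for properness plus a constant edge-function $c_f(uv)=|f(u)+f(v)-f(uv)|$, both of which are inherited, so there is no hidden normalization that breaks), and that the boundary cases $|V(G)|\le 2$ and $m\le 2$ are consistent with the displayed bounds and with the existence of a perfect all-dual pair, which is where one leans on Theorem~\ref{thm:particular-graphs-felicitous-difference}(2) and a hand computation on $K_1$ and $K_2$.
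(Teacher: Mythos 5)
Your proposal is correct and follows essentially the same route as the paper, which derives the theorem immediately from Lemma~\ref{thm:complete-graph-felicitous-difference} and Lemma~\ref{thm:subgraph-vs-graph-felicitous-difference} via the sandwich $K_m\subseteq G\subseteq K_{|V(G)|}$ ("since each graph is a subgraph of some complete graph, immediately, we have..."). Your additional bookkeeping — restricting the perfect all-dual pair from $K_{|V(G)|}$ to $G$ and checking the small cases $|V(G)|\le 2$ — just makes explicit what the paper leaves implicit.
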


\begin{problem}\label{qeu:felicitous-difference-colorings}
For more researching star-graphic lattices, we present the following questions:
\begin{asparaenum}[\textrm{FDQ}-1. ]
\item \textbf{Characterize} the structures of two graphic lattices $\textrm{\textbf{L}}(\overline{\ominus} \textbf{I}_{ces}(FD))$ and $\textrm{\textbf{L}}(\overline{\ominus} \textbf{I}_{ces}(SFD))$.
\item \textbf{Find} all $k$ with $|f_k(x)+f_k(y)-f_k(xy)|=k$ for $\max \{f_k(w):w\in V(F)\cup E(G)\}=\chi''_{fdt}(G)$, see examples shown in Fig.\ref{fig:more-magic-numbers-big}.
\begin{figure}[h]
\centering
\includegraphics[width=16.2cm]{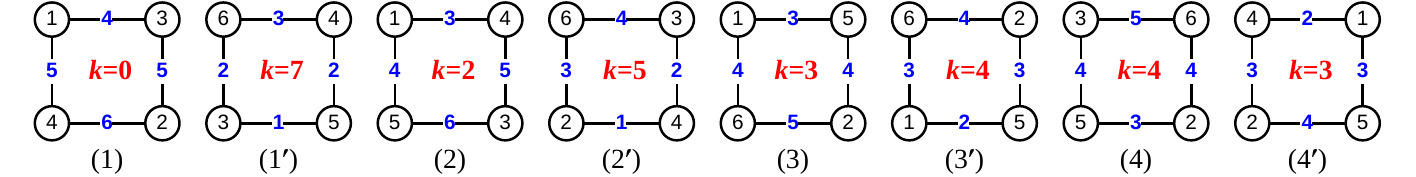}
\caption{\label{fig:more-magic-numbers-big}{\small A cycle $C_4$ admits four pairs of all-dual felicitous-difference proper total colorings ($k$) and ($k'$) with $k\in [1,4]$.}}
\end{figure}
\item \textbf{Does} every planar graph $H$ belong to the felicitous-difference star-graphic lattice $\textrm{\textbf{L}}(\overline{\ominus} \textbf{I}_{ces}(SFD))$, in other word, $\chi''_{fdt}(H)\leq 1+2\Delta(H)$?
\item Since each graph $H\in \textrm{\textbf{L}}(\overline{\ominus} \textbf{I}_{ces}(SFD))$ holding $\chi''_{fdt}(H)\leq 1+2\Delta(H)$, so \textbf{find} other subset $S\subset \textrm{\textbf{L}}(\overline{\ominus} \textbf{F}^c_{star\Delta})$ such that each graph $L\in S$ holding $\chi''_{fdt}(L)\leq 1+2\Delta(L)$.
\item Since a caterpillar $T$ corresponds a topological vector $V_{ec}(T)$, \textbf{can} we characterize a traditional lattice by some graphic lattices?
\item \textbf{Plant} some results of a traditional lattice $\textrm{\textbf{L}}(\textbf{B})$ to the felicitous-difference star-graphic lattices.
\item \textbf{Optimal felicitous-difference ice-flower system.} Find a $L$-magic felicitous-difference ice-flower system $I_{ce}(LF_{1,n}D_k)^{n_{fdt}}_{k=1}$, such that each graph $G$ is colored well by $G=\overline{\ominus}^{n_{fdt}}_{j=1}a_jLF_{1,n}D_j$ with color set $[1,\chi''_{fdt}(G)]$, where $LF_{1,n}D_j\in I_{ce}(LF_{1,n}D_k)^{n_{fdt}}_{k=1}$, $\sum^{n_{fdt}}_{j=1}a_j\geq 1$ and $a_j\in Z^0$. In other word, this felicitous-difference ice-flower system $I_{ce}(LF_{1,n}D_k)^{n_{fdt}}_{k=1}$ is \emph{optimal}.\qqed
\end{asparaenum}
\end{problem}

\subsubsection{Dual felicitous-difference ice-flower systems}

By two perfect all-dual ice-flower systems $I^d_{ce}(F^d_{1,n}D_k)^{2n}_{k=1}$ and $I^d_{ce}(S^dF_{1,n}D_k)^{n}_{k=1}$, we have two \emph{perfect all-dual felicitous-difference star-graphic lattices}:
\begin{equation}\label{eqa:t-all-dual-lattice}
\textrm{\textbf{L}}(\overline{\ominus} \textbf{I}^d_{ce}(FD)) =\left \{\overline{\ominus}^{2n}_{i=1}a_iF^d_{1,n}D_i: a_i\in Z^0, F^d_{1,n}D_i\in I^d_{ce}(F^d_{1,n}D_k)^{2n}_{k=1}\right \}
\end{equation}
with $\sum ^{2n}_{i=1}a_i\geq 1$ and the base is $\textbf{I}^d_{ce}(FD)=I^d_{ce}(F^d_{1,n}D_k)^{2n}_{k=1}$, and moreover
\begin{equation}\label{eqa:s-all-dual-lattice}
\textrm{\textbf{L}}(\overline{\ominus} \textbf{I}^d_{ce}(SFD)) =\left \{\overline{\ominus}^n_{j=1}a_jS^dF_{1,n}D_j: a_j\in Z^0, S^dF_{1,n}D_j\in I^d_{ce}(S^dF_{1,n}D_k)^{n}_{k=1}\right \}
\end{equation} with $\sum^n_{j=1} a_j\geq 1$ and the base is $\textbf{I}^d_{ce}(SFD)=I^d_{ce}(S^dF_{1,n}D_k)^{n}_{k=1}$.

In two felicitous-difference ice-flower systems $I_{ce}(F_{1,n}D_k)^{2n}_{k=1}$ and $I_{ce}(SF_{1,n}D_k)^{n}_{k=1}$, notice that each colored star $F_{1,n}D_k$ admitting a felicitous-difference proper total coloring $g_k$ and its dual $F^d_{1,n}D_k$ admitting a perfect all-dual felicitous-difference proper total coloring $g^d_k$ of $g_k$ holding $g^d_k(w)=3n+1-g_k(w)$ for each element $w\in V(F_{1,n}D_k)\cup E(F_{1,n}D_k)$ and $|g_k(u)+g_k(v)-g_k(uv)|=0=|g^d_k(u)+g^d_k(v)-g^d_k(uv)|$ for $uv\in E(F_{1,n}D_k)$. We write $F^d_{1,n}D_k=g^d_k(F_{1,n}D_k)$ and $F_{1,n}D_k=g_k(F^d_{1,n}D_k)$, thus, we have $$\overline{\ominus}^{2n}_{i=1}a_iF^d_{1,n}D_i=\overline{\ominus}^{2n}_{i=1}a_ig^d_i(F_{1,n}D_i),~\overline{\ominus}^{2n}_{i=1}a_iF_{1,n}D_i
=\overline{\ominus}^{2n}_{i=1}a_ig_i(F^d_{1,n}D_i)$$

For each colored star $SF_{1,n}D_k$ admitting a felicitous-difference proper total coloring $h_k$ and its dual $S^dF_{1,n}D_k$ admitting a perfect all-dual felicitous-difference proper total coloring $h^d_k$ of $h_k$, we have $h^d_k(w)=2n+1-h_k(w)$ for each element $w\in V(SF_{1,n}D_k)\cup E(SF_{1,n}D_k)$ and $|h_k(u)+h_k(v)-h_k(uv)|=0=|h^d_k(u)+h^d_k(v)-h^d_k(uv)|$ for $uv\in E(SF_{1,n}D_k)$. Moreover, we can write $S^dF_{1,n}D_k=h^d_k(SF_{1,n}D_k)$ and $SF_{1,n}D_k=h_k(S^dF_{1,n}D_k)$, as well as
$$\overline{\ominus}^{2n}_{i=1}a_iS^dF_{1,n}D_i=\overline{\ominus}^{2n}_{i=1}a_ih^d_i(SF_{1,n}D_i),~\overline{\ominus}^{2n}_{i=1}a_iSF_{1,n}D_i
=\overline{\ominus}^{2n}_{i=1}a_ih_i(S^dF_{1,n}D_i)$$
Thereby, two perfect all-dual felicitous-difference star-graphic lattices can be expressed as:
\begin{equation}\label{eqa:other-express-dual-lattice}
{
\begin{split}
&\textrm{\textbf{L}}(\overline{\ominus} \textbf{I}^d_{ce}(FD)) =\left \{\overline{\ominus}^{2n}_{i=1}a_ig^d_i(F_{1,n}D_i): a_i\in Z^0, F_{1,n}D_i\in I_{ce}(F_{1,n}D_k)^{2n}_{k=1}\right \}\\
&\textrm{\textbf{L}}(\overline{\ominus} \textbf{I}^d_{ce}(SFD)) =\left \{\overline{\ominus}^{n}_{j=1}a_jg^d_i(SF_{1,n}D_j): a_i\in Z^0, SF_{1,n}D_j\in I_{ce}(SF_{1,n}D_k)^{n}_{k=1}\right \}
\end{split}}
\end{equation}with $\sum^{2n}_{i=1}a_i\geq 1$ and $\sum^{n}_{j=1}a_j\geq 1$.

We define a coloring $\eta_k$ of a star $K_{1,n}$ by setting: (fd-1) $\eta_k(x_0)=k\in [1,n]$, $\eta_k(x_j)=2n+1-j$ and $\eta_k(x_0x_j)=2n+1-j+k\leq 3n$ with $j\in[1,n]$ and $k\in [1,n]$, so $|\eta_k(x_0)+\eta_k(x_j)-\eta_k(x_0x_j)|=0$ for each edge $x_0x_j$ of $K_{1,n}$; (fd-2) $\eta_k(x_0)=k$, $\eta_k(x_j)=j$ and $\eta_k(x_0x_j)=k+i\leq 3n$ with $j\in[1,n]$ and $k\in [n+1,2n]$, each edge $x_0x_j$ of $K_{1,n}$ satisfies $|\eta_k(x_0)+\eta_k(x_j)-\eta_k(x_0x_j)|=0$.
Clearly, $\eta_k$ is just a felicitous-difference proper total coloring of $K_{1,n}$, denoted this colored star as $K^{(s)}_{1,n}=F_{1,n}D_s$ with $s\in [1,2n]$. We get a \emph{felicitous-difference ice-flower system} $I_{ce}(F_{1,n}D_s)^{2n}_{s=1}$. See $I_{ce}(F_{1,6}D_k)^{12}_{k=1}$ and $I_{ce}(F_{1,7}D_k)^{14}_{k=1}$ shown in Fig.\ref{fig:Dsaturated-felicitous-difference-even} and Fig.\ref{fig:Dsaturated-felicitous-difference-odd}. Thereby, we have a \emph{felicitous-difference star-graphic lattice} as follows:
\begin{equation}\label{eqa:largest-felicitous-difference-lattice}
\textrm{\textbf{L}}(\overline{\ominus} \textbf{I}_{ce}(FD)) =\left \{\overline{\ominus}^{2n}_{j=1}a_jF_{1,n}D_j: a_j\in Z^0, F_{1,n}D_j\in I_{ce}(F_{1,n}D_s)^{2n}_{s=1}\right \}
\end{equation} with $\sum^{2n}_{j=1} a_j\geq 1$ and the base is $\textbf{I}_{ce}(FD)=I_{ce}(F_{1,n}D_s)^{2n}_{s=1}$.

\begin{figure}[h]
\centering
\includegraphics[width=16.2cm]{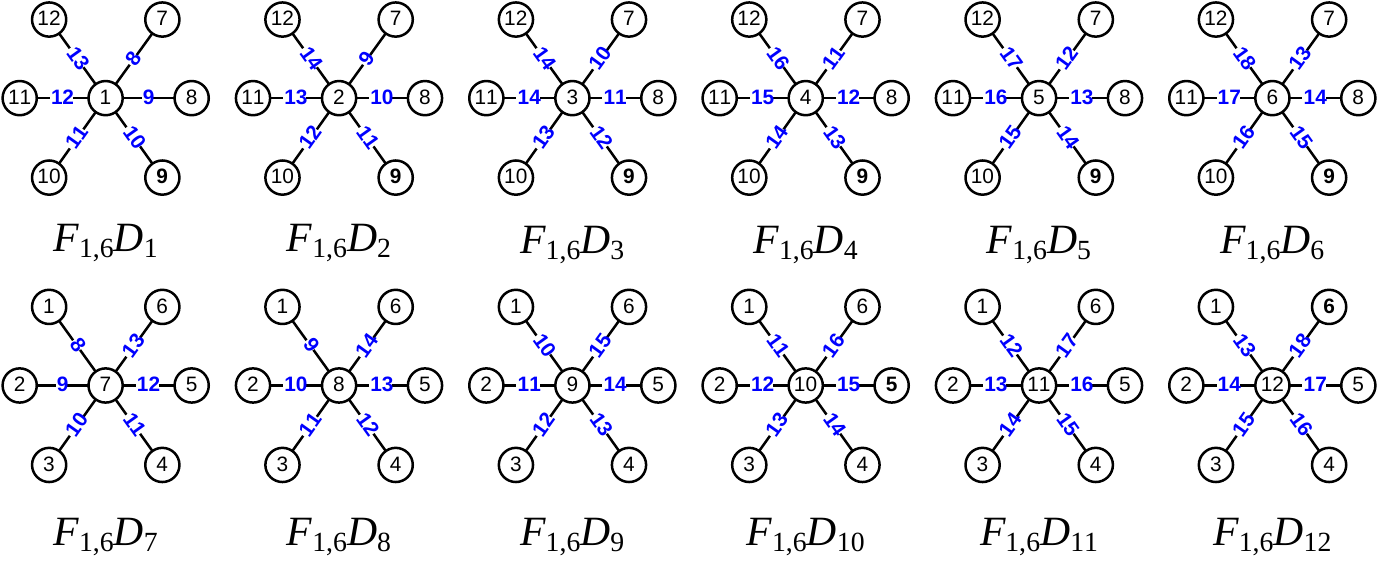}
\caption{\label{fig:Dsaturated-felicitous-difference-even}{\small A felicitous-difference ice-flower system $I_{ce}(F_{1,6}D_k)^{12}_{k=1}$.}}
\end{figure}

\begin{figure}[h]
\centering
\includegraphics[width=15.6cm]{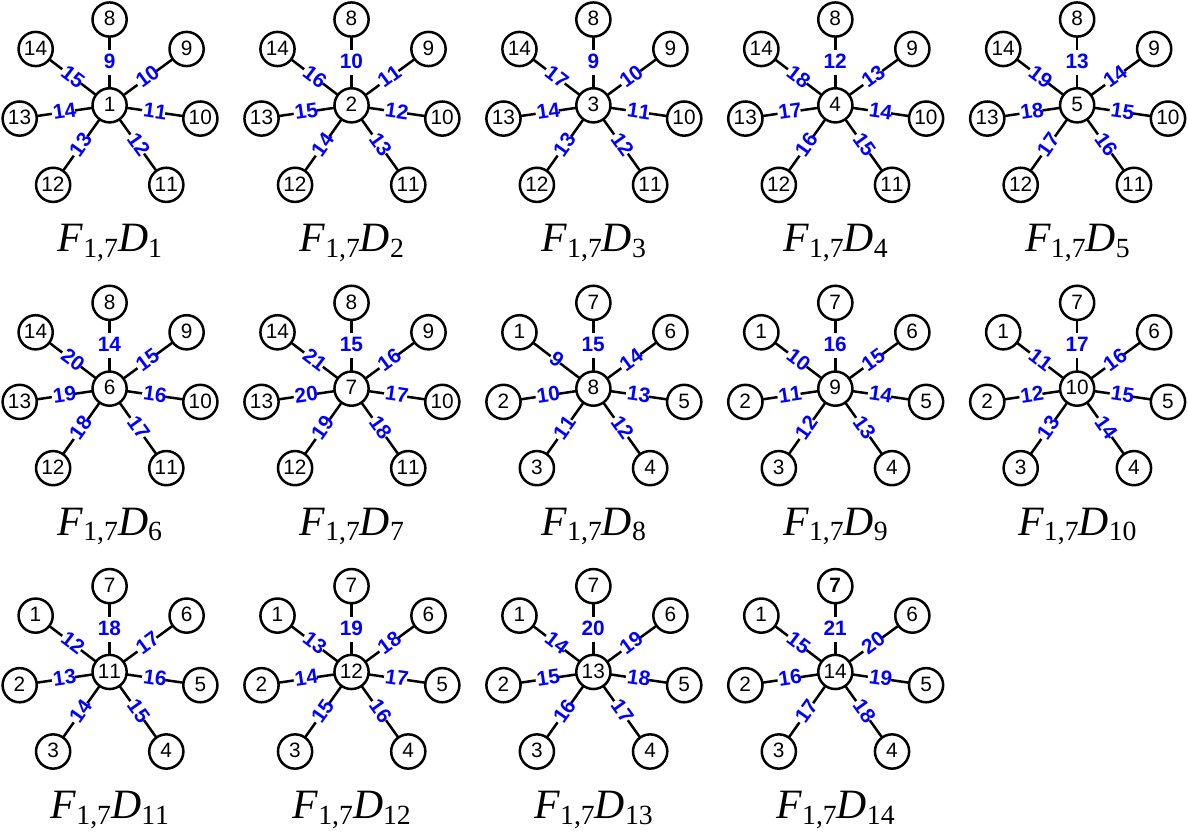}
\caption{\label{fig:Dsaturated-felicitous-difference-odd}{\small A felicitous-difference ice-flower system $I_{ce}(F_{1,7}D_k)^{14}_{k=1}$.}}
\end{figure}

As known, by the help of the felicitous-difference ice-flower system $I_{ce}(F_{1,n}D_s)^{2n}_{s=1}$ we can show $\chi''_{fdt}(K_{n,n})=3n$, which induces $\chi''_{fdt}(G)\leq 3\max \{|X|,|Y|\}$ for each bipartite graph $G$ with bipartition $(X,Y)$, we have two ice-flower systems as follows:

Notice that $\max \eta_k=3n$ and $\min \eta_k=1$, so the dual $\eta^c_k$ holds $\eta_k(w)+\eta^c_k(w)=3n+1$, that is, two coloring $\eta_k$ and $\eta^c_k$ are a pair of perfect all-dual felicitous-difference proper total colorings, and the dual lattice of the felicitous-difference star-graphic lattice is
\begin{equation}\label{eqa:dual-largest-felicitous-difference-lattice}
\textrm{\textbf{L}}(\overline{\ominus} \textbf{I}^c_{ce}(FD)) =\left \{\overline{\ominus}^{2n}_{j=1}a_j\eta^c_k(F_{1,n}D_j): a_j\in Z^0, F_{1,n}D_j\in I_{ce}(F_{1,n}D_s)^{2n}_{s=1}\right \}
\end{equation} with $\sum^{2n}_{j=1} a_j\geq 1$ and the base is $\textbf{I}^c_{ce}(FD)=I_{ce}(\eta^c_k(F_{1,n}D_j))^{2n}_{s=1}$.

\subsection{Edge-magic star-graphic lattices}

\subsubsection{Edge-magic ice-flower systems}

We define a general edge-magic ice-flower system $I_{ce}(LE_{1,n}M_k)^{n_{emt}}_{k=1}$ in the following way: $\varphi_k(x_0)=r\in [1,\beta]$ with $\beta= 3n + 3$ for even $n$ and $\beta = 3n + 4$ for odd $n$, $\varphi_k(x_j)=L-r_j$ and $\varphi_k(x_0x_j)=L'-r+-r_j$ with $j\in[1,n]$ and $r\in [1,\beta]$, so $\varphi_k(x_0)+\varphi_k(x_j)+\varphi_k(x_0x_j)=L+L'$ for each edge $x_0x_j$ of $K_{1,n}$, such that $\varphi_k$ is just an edge-magic proper total coloring of $K_{1,n}$, denoted this colored star as $K^{(k)}_{1,n}=LE_{1,n}M_k$. For each fixed $r\in [1,\beta]$, there are $c(r)$ groups of integers $r_1,r_2,\dots ,r_n$ of $[1,\beta]$ holding $\varphi_k$ to be an edge-magic proper total coloring of $K_{1,n}$, so we have $n_{emt}$ different colored stars $LE_{1,n}M_k$ in total, where $n_{emt}=\sum^{\beta}_{t=1}c(r)$, and then we get an edge-magic ice-flower system $I_{ce}(LE_{1,n}M_k)^{n_{emt}}_{k=1}$, and furthermore we have an \emph{edge-magic star-graphic lattice}
\begin{equation}\label{eqa:general-edge-magic-stars-lattice}
\textrm{\textbf{L}}(\overline{\ominus} \textbf{I}_{ce}(LEM)) =\left \{\overline{\ominus}^{n_{emt}}_{i=1}a_iLE_{1,n}M_i: a_i\in Z^0, E^L_{1,n}M_i\in I_{ce}(LE_{1,n}M_k)^{n_{emt}}_{k=1}\right \}
\end{equation} with $\sum^{n_{emt}}_{i=1} a_i\geq 1$ and the base is $\textbf{I}_{ce}(LEM)=I_{ce}(LE_{1,n}M_k)^{n_{emt}}_{k=1}$.

In particular cases, an \emph{edge-magic ice-flower system} $I_{ce}(LE_{1,n}M_k)^{n_{emt}}_{k=1}$ is defined as: First of all, a star $K_{1,n}$ admits a proper total coloring $\varphi_k$ defined by setting: (em-1) $\varphi_k(x_0)=k$, $\varphi_k(x_j)=4n-j$ and $\varphi_k(x_0x_j)=2n-k+j$ with $j\in[1,n]$ and $k\in [1,2n]$, so $\varphi_k(x_0)+\varphi_k(x_j)+\varphi_k(x_0x_j)=6n$ for each edge $x_0x_j$ of $K_{1,n}$, such that $\varphi_k$ is just an edge-magic proper total coloring of $K_{1,n}$, denoted this colored star as $K^{(k)}_{1,n}=LE_{1,n}M_k$; (em-2) $\varphi_k(x_0)=4n-k\in [1,2n-1]$, $\varphi_k(x_j)=j$ and $\varphi_k(x_0x_j)=2n+k-j$ with $j\in[1,n]$ and $k\in [1,2n-1]$.

We show $\varphi_k$ bo be an edge-magic proper total coloring. In case (em-1), there are $j\in[1,n]$ and $k\in [1,2n]$, if $\varphi_k(x_0)=k=\varphi_k(x_j)=4n-j$, then $k=4n-j\geq 3n$, a contradiction; if $\varphi_k(x_0x_j)=2n-k+j=\varphi_k(x_j)=4n-j$, thus, $2n\geq 2j=2n+k$, an obvious mistake. In case (em-2), there are $j\in[1,n]$ and $k\in [1,2n-1]$, if $\varphi_k(x_0)=4n-k=\varphi_k(x_j)=j$, then $4n=k+j\leq 3n-1$, a contradiction; if $\varphi_k(x_0x_j)=2n+k-j=\varphi_k(x_j)=j$, that is, $2n+k=2j\leq 2n$, it is impossible.

Thereby, we get a particular edge-magic ice-flower system $I_{ce}(E^L_{1,n}M_k)^{4n-1}_{k=1}$. See examples of the edge-magic ice-flower systems $I_{ce}(LE_{1,n}M_k)^{4n-1}_{k=1}$ shown in Fig.\ref{fig:largest-edge-magic}. The edge-magic ice-flower system $I_{ce}(E^L_{1,n}M_k)^{4n-1}_{k=1}$ induces the following result:

\begin{thm}\label{thm:edge-magic-complete-bipart}
Each complete bipartite graph $K_{n,n}$ holds $\chi''_{emt}(K_{n,n})\leq 4n-1$ true.
\end{thm}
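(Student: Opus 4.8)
The plan is to exhibit an explicit edge-magic proper total coloring of $K_{n,n}$ with colors drawn from $[1,4n-1]$, obtained from the edge-magic ice-flower system $I_{ce}(E^L_{1,n}M_k)^{4n-1}_{k=1}$ by a vertex-coinciding operation, exactly as $K_{n,n}$ was assembled from $n$ disjoint stars in the graceful-difference setting (cf. Lemma \ref{thm:graceful-difference-lemma}). So the theorem is really a bookkeeping verification built on the star colorings $\varphi_k$ already constructed.

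First I would isolate the relevant sub-family: the stars $LE_{1,n}M_1, LE_{1,n}M_2, \dots, LE_{1,n}M_n$ coming from case (em-1), so that for $k\in[1,n]$ one has $\varphi_k(x^k_0)=k$, $\varphi_k(x^k_j)=4n-j$ and $\varphi_k(x^k_0x^k_j)=2n-k+j$ for $j\in[1,n]$, each with magic constant $6n$. The key observation is that every star in this sub-family uses the same leaf-color pattern, since $\varphi_k(x^k_j)=4n-j$ does not depend on $k$. Hence for each $j\in[1,n]$ the $n$ leaves $x^1_j,x^2_j,\dots,x^n_j$ all carry color $4n-j$, and their neighbourhoods are pairwise disjoint (the only neighbour of $x^k_j$ is the center $x^k_0$, and the centers are distinct vertices of the disjoint stars). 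So the vertex-coinciding operation $\textbf{Oper}$-2 applies, and I glue $x^1_j,\dots,x^n_j$ into a single vertex $y_j$ for every $j$. The resulting graph has vertex set $\{x^1_0,\dots,x^n_0\}\cup\{y_1,\dots,y_n\}$, is bipartite with these two parts, and every center is joined to every $y_j$, i.e. it is precisely $K_{n,n}$. The colorings $\varphi_1,\dots,\varphi_n$ agree on coincided vertices, so they induce a well-defined total coloring $f$ of $K_{n,n}$ with $f(x^k_0)=k$, $f(y_j)=4n-j$, $f(x^k_0y_j)=2n-k+j$.

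Next I would verify the three required properties. Edge-magic: $f(x^k_0)+f(x^k_0y_j)+f(y_j)=k+(2n-k+j)+(4n-j)=6n$ for every edge, so $B^*_{emt}(K_{n,n},f,4n-1)=0$ in the sense of Definition \ref{defn:combinatoric-definition-total-coloring}. Properness: the center colors lie in $[1,n]$, the values $f(y_j)$ lie in $[3n,4n-1]$, and the edge colors $2n-k+j$ with $k,j\in[1,n]$ lie in $[n+1,3n-1]$; these three intervals are pairwise disjoint, so adjacent vertices receive distinct colors and no edge shares a color with an incident vertex. Moreover at a fixed center $x^k_0$ the incident edge colors $2n-k+1,\dots,2n-k+n$ are $n$ distinct consecutive integers, and at a fixed $y_j$ the incident edge colors $2n-1+j,2n-2+j,\dots,n+j$ are also $n$ distinct consecutive integers, so $f$ is a proper total coloring. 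Finally $f$ takes values in $[1,n]\cup[n+1,3n-1]\cup[3n,4n-1]=[1,4n-1]$. Therefore $\chi''_{emt}(K_{n,n})\le 4n-1$, which is the claim.

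The argument is almost entirely routine, so the only place that needs care is confirming that the vertex-coinciding operation legitimately yields $K_{n,n}$ (checking the degree/neighbourhood-disjointness conditions, and that the $2n$ resulting vertices are genuinely distinct) and that the induced total coloring simultaneously inherits properness at every coincided vertex and the single magic constant $6n$. Once the three intervals $[1,n]$, $[n+1,3n-1]$, $[3n,4n-1]$ are identified and seen to be disjoint, everything else is immediate; there is no essential obstacle beyond this verification, and the bound holds for all $n\ge 1$.
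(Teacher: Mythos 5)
Your proposal is correct and follows essentially the same route as the paper: the paper also obtains $K_{n,n}$ by vertex-coinciding the leaves $x_{1,j},x_{2,j},\dots,x_{n,j}$ of the stars in the edge-magic ice-flower system $I_{ce}(E^L_{1,n}M_k)^{4n-1}_{k=1}$ into single vertices $y_j$ and concluding $\chi''_{emt}(K_{n,n})\leq 4n-1$ (this argument appears in the proof attached to Lemma \ref{thm:tree-edge-magic-ice-flower}). Your explicit verification of the magic constant $6n$ and the disjoint color intervals $[1,n]$, $[n+1,3n-1]$, $[3n,4n-1]$ just fills in the details the paper dismisses with ``Clearly.''
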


\begin{figure}[h]
\centering
\includegraphics[width=15.6cm]{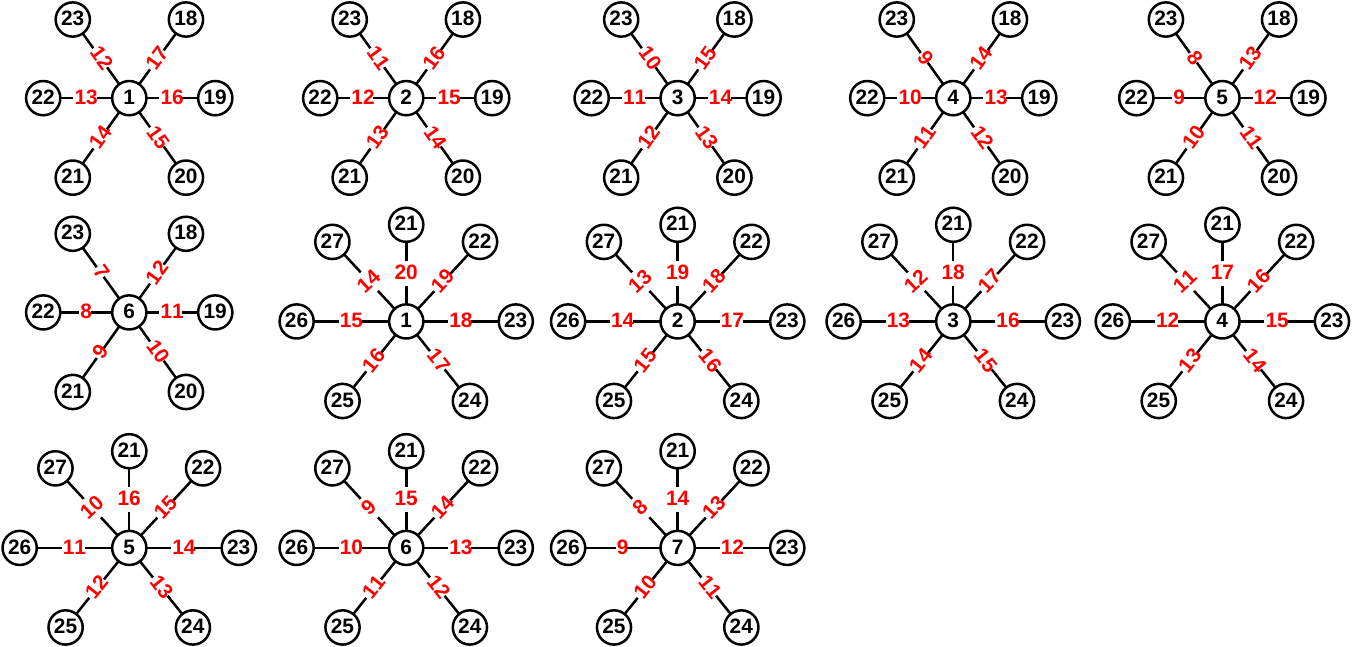}
\caption{\label{fig:largest-edge-magic}{\small Two parts of two edge-magic ice-flower systems $I_{ce}(E^L_{1,n}M_k)^{23}_{k=1}$ and $I_{ce}(E^L_{1,n}M_k)^{27}_{k=1}$.}}
\end{figure}

We introduce another particular edge-magic ice-flower system $I_{ce}(E_{1,n}M_k)^{2n+3}_{k=1}$ made by the edge-magic total coloring, where each $E_{1,n}M_k$ is a copy $K^{(k)}_{1,n}$ of $K_{1,n}$ and admits an edge-magic total coloring $e_k$ with $k\in [1,n+2]$. Now, we show each edge-magic total coloring $e_k$ below:

Case EM-1. $n=2m$. We define $e_k$ by two parts: (EM-1-1) $e_k(x^k_0)=k$ with $k\in [1,2m+1]$, $e_k(x^k_j)=4m+4-j$ with $j\in [1,2m+1]$, and $e_k(x^k_0x^k_j)=2m+2-k+j$ with $1\leq i,j\leq 2m+1$. So, $e_k(x^k_0)+e_k(x^k_0x^k_j)+e_k(x^k_j)=6m+6$ (see Definition \ref{defn:combinatoric-definition-total-coloring}). If $e_k(x^k_{j'})=e_k(x^k_0x^k_{j'})$ happens for some $j'\in [1,2m+1]$ and $k\in [1,2m+1]$, we recolor the edge $x^k_0x^k_{j'}$ with $e_k(x^k_0x^k_{j'})=4m+3$ and the vertex $x^k_{j'}$ with $e_k(x^k_{j'})=2m+3-i$.

(EM-1-2) $e_k(x^k_0)=k$ with $k\in [2m+2,4m+3]$, $e_k(x^k_j)=j$ with $j\in [1,2m]$, and $e_k(x^k_0x^k_j)=6m+6-k-j$. Thereby, $e_k(x^k_0)+e_k(x^k_0x^k_j)+e_k(x^k_j)=6m+6$. If we meet $e_k(x^k_0)=e_k(x^k_0x^k_{j'})$ for some $j'\in [1,2m]$ and $k\in [2m+1,4m+3]$, then we recolor the edge $x^k_0x^k_{j'}$ with $e_k(x^k_0x^k_{j'})=1$ and the vertex $x^k_{j'}$ with $e_k(x^k_{j'})=6m+5-k$.

Case EM-2. $n=2m+1$. We define $e_k$ in the following two parts: (EM-2-1) $e_k(x^k_0)=k$ with $k\in [1,2m+2]$, $e_k(x^k_j)=4m+6-j$ with $j\in [1,2m+1]$, and $e_k(x^k_0x^k_j)=2m+3-k-j$ with $1\leq i,j\leq 2m+1$. Immediately, $e_k(x^k_0)+e_k(x^k_0x^k_j)+e_k(x^k_j)=6m+9$. If $e_k(x^k_0x^k_{j'})=e_k(x^k_0)$ occurs for some $j'\in [1,2m+1]$ and $k\in [1,2m+1]$, then we recolor the edge $x^k_0x^k_{j'}$ with $e_k(x^k_0x^k_{j'})=4m+4$ and $e_k(x^k_{j'})=2m+5-k$ when $k=m+2$, otherwise $e_k(x^k_0x^k_{j'})=4m+5$ and the vertex $x^k_{j'}$ with $e_k(x^k_{j'})=2m+4-k$.

(EM-2-2) $e_k(x^k_0)=k$ with $k\in [2m+2,4m+5]$, $e_k(x^k_j)=j$ with $j\in [1,2m+1]$, and $e_k(x^k_0x^k_j)=6m+9-k-j$. So, $e_k(x^k_0)+e_k(x^k_0x^k_j)+e_k(x^k_j)=6m+9$. If $e_k(x^k_0)=e_k(x^k_0x^k_{j'})$ happens for some $j'\in [1,2m+1]$ and $k\in [2m+1,4m+5]$, then we recolor the edge $x^k_0x^k_{j'}$ with $e_k(x^k_0x^k_{j'})=1$ and the vertex $x^k_{j'}$ with $e_k(x^k_{j'})=6m+8-k\neq e_k(x^k_0)$, otherwise $e_k(x^k_0x^k_{j'})=2$ and $e_k(x^k_{j'})=6m+7-k$. If we meet $e_k(x^k_{j'})=e_k(x^k_0x^k_{j'})$ for some $j'\in [1,2m+1]$ and $k\in [2m+1,4m+5]$, then we recolor the edge $x^k_0x^k_{j'}$ and the vertex $x^k_{j'}$ by $e_k(x^k_0x^k_{j'})=1$ and $e_k(x^k_{j'})=6m+8-k$, respectively.

\begin{figure}[h]
\centering
\includegraphics[width=15.6cm]{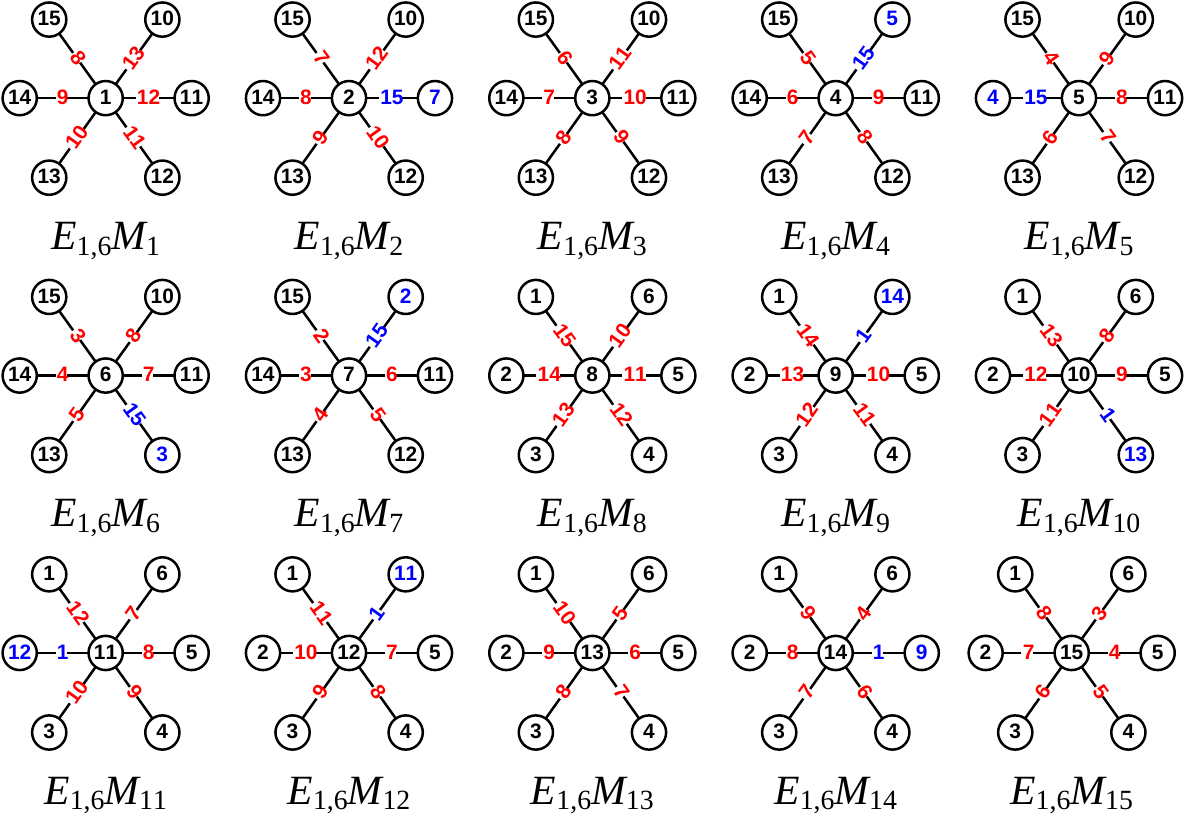}
\caption{\label{fig:Dsaturated-edge-magic-even}{\small An edge-magic ice-flower system $I_{ce}(E_{1,6}M_k)^{15}_{k=1}$.}}
\end{figure}

\begin{figure}[h]
\centering
\includegraphics[width=16.4cm]{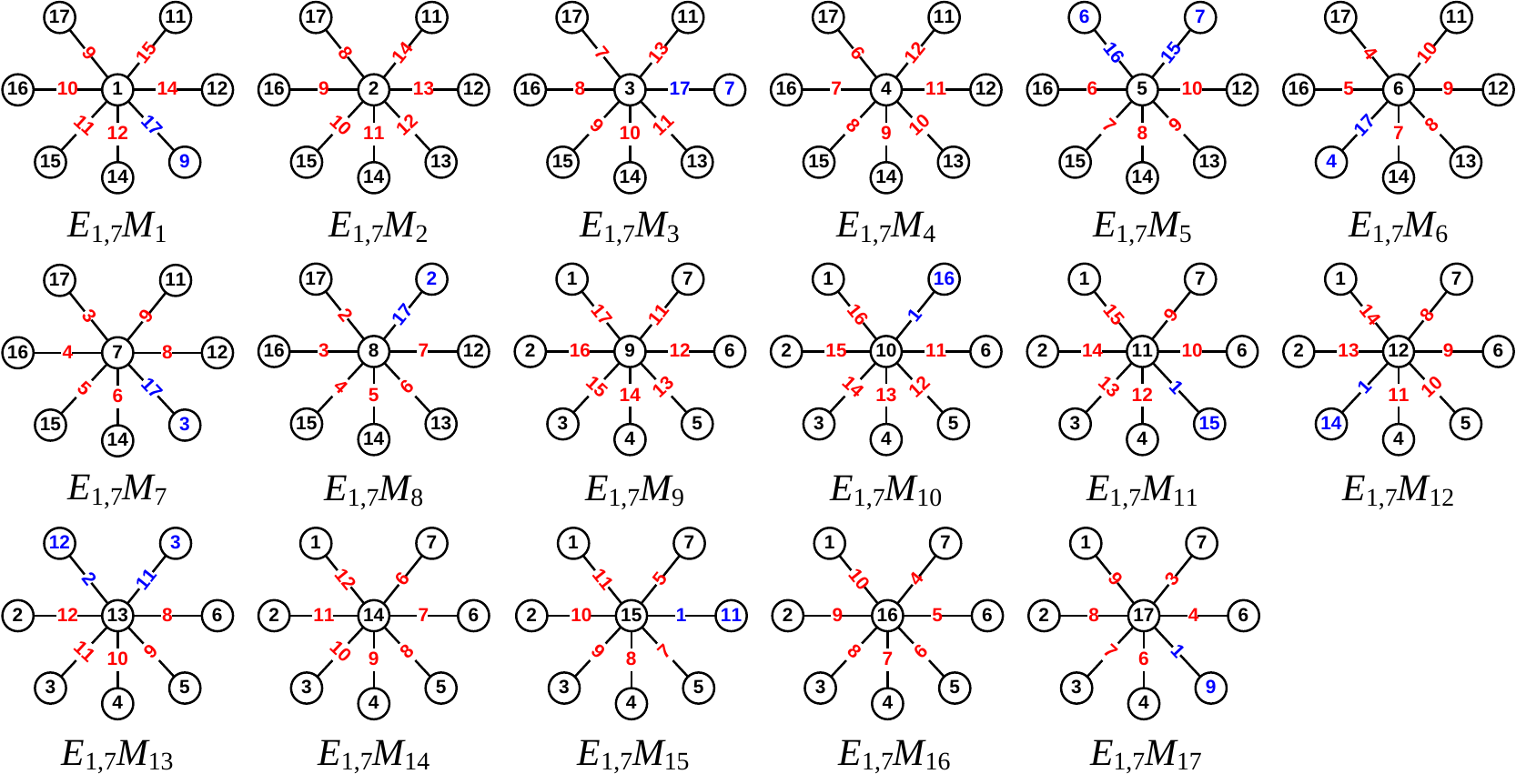}
\caption{\label{fig:Dsaturated-edge-magic-odd}{\small An edge-magic ice-flower system $I_{ce}(E_{1,7}M_k)^{17}_{k=1}$.}}
\end{figure}

The edge-magic ice-flower system $I_{ce}(E_{1,n}M_k)^{2n+3}_{k=1}$ enables us to get the following results:

\begin{lem} \label{thm:tree-edge-magic-ice-flower}
\cite{Wang-Su-Yao-Total-2019} Each tree $T$ holds $\chi''_{emt}(t)\leq 2\Delta(T)+3$ true.
\end{lem}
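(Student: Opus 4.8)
The edge-magic condition couples all edges of $T$ through a single constant $\lambda$, so a naive leaf-by-leaf induction does not obviously close; the plan is instead to fix $\lambda$ from the outset and assemble $T$ out of colored stars. Concretely, I would use the edge-magic ice-flower system $I_{ce}(E_{1,n}M_k)^{2n+3}_{k=1}$ constructed just above, taken at $n=\Delta:=\Delta(T)$. First dispose of $\Delta\le 1$, where $T\in\{K_1,K_2\}$ and $\chi''_{emt}(T)\le 5\le 2\Delta+3$ is immediate, so assume $\Delta\ge 2$. Recall from Cases EM-1 and EM-2 that every star $E_{1,\Delta}M_k$ with $k\in[1,2\Delta+3]$ carries an edge-magic proper total coloring $e_k$ in the sense of Definition \ref{defn:combinatoric-definition-total-coloring}, using only colors in $[1,2\Delta+3]$ and with a common magic constant $\lambda$ (namely $\lambda=6m+6$ if $\Delta=2m$, and $\lambda=6m+9$ if $\Delta=2m+1$).

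Second, I would pass from $T$ to the $\Delta$-saturated tree $\widehat{T}\supseteq T$ obtained by attaching $\Delta-\textrm{deg}_T(v)$ new pendant leaves at each internal vertex $v$ of $T$; then $\widehat{T}$ is again a tree, every vertex of $\widehat{T}$ has degree $1$ or $\Delta$, and $\Delta(\widehat{T})=\Delta$. Now build $\widehat{T}$ by colored leaf-coinciding operations: root $\widehat{T}$ at a leaf, and in breadth-first order attach to each internal vertex $v$ a \emph{home star} $S_v$, a copy of some $E_{1,\Delta}M_{k(v)}$, whose center is identified with $v$ and whose $\Delta$ leaf-slots are identified with the $\Delta$ neighbours of $v$ in $\widehat{T}$. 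For $v$ not the root, the edge $vp$ joining $v$ to its parent $p$ is produced by a colored leaf-coinciding operation between $S_v$ and $S_p$; since $e_{k(p)}$ and $e_{k(v)}$ both satisfy $f(x)+f(xy)+f(y)=\lambda$, matching the two \emph{end}-colors of this edge automatically forces its edge-color to match, so the operation applies as soon as I can choose $k(v)$ with $e_{k(v)}(x_0)=c_v$, where $c_v$ is the color the star $S_p$ already placed on the slot it reserved for $v$, and with some still-free leaf-slot of $S_v$ colored $c_p:=e_{k(p)}(x_0)$. The remaining $\Delta-1$ free leaf-slots of $S_v$ then prescribe the colors of the children of $v$, and the induction continues.

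The heart of the argument, and the step I expect to be the main obstacle, is to show that this greedy assignment never gets stuck, i.e. that the system $I_{ce}(E_{1,\Delta}M_k)^{2\Delta+3}_{k=1}$ is \emph{strongly colored} in exactly the sense used above: the center-colors $\{e_k(x_0):k\in[1,2\Delta+3]\}$ must exhaust $[1,2\Delta+3]$, and for every pair of target colors $(c_v,c_p)\in[1,2\Delta+3]^2$ there must exist $k$ with $e_k(x_0)=c_v$ possessing a free leaf-slot of color $c_p$, the remaining slots staying available for the children. This is precisely what the explicit construction is engineered to give: the first sub-part in each of Cases EM-1 and EM-2 supplies the center-colors in the lower half and the leaf-colors in the upper half, the second sub-part does the reverse, and the recolorings triggered when $e_k(x^k_{j'})=e_k(x^k_0x^k_{j'})$ or $e_k(x^k_0)=e_k(x^k_0x^k_{j'})$ simultaneously keep the coloring proper and fill in the two ``middle'' colors around $2m+1,2m+2$ (respectively their odd-order analogue). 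I would verify these matchings case by case; the padding up to $\widehat{T}$ is convenient here, since every internal vertex then needs exactly $\Delta-1$ child-slots, matching what each star offers once one slot is spent on the parent.

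Finally, the assembled graph is $\widehat{T}$ equipped with an edge-magic proper total coloring $f$ using only colors in $[1,2\Delta+3]$ with magic constant $\lambda$. Restricting $f$ to $V(T)\cup E(T)$ yields a coloring $f'$ that is still a proper total coloring of $T$ and still satisfies $f'(u)+f'(uv)+f'(v)=\lambda$ on every edge of $T$ (each such edge lies in $\widehat{T}$), by the same observation as Lemma \ref{thm:subgraph-vs-graph-felicitous-difference} with $\chi''_{fdt}$ replaced by $\chi''_{emt}$. Hence $B^*_{emt}(T,f',2\Delta+3)=0$, so $\chi''_{emt}(T)\le 2\Delta+3=2\Delta(T)+3$, which is moreover consistent with $\chi''_{emt}(T)\ge\chi''(T)$.
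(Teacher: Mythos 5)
Your overall plan---pad $T$ to a $\Delta$-saturated tree $\widehat{T}$, assemble $\widehat{T}$ from the stars of $I_{ce}(E_{1,\Delta}M_k)^{2\Delta+3}_{k=1}$ by colored leaf-coinciding (using the fact that all these stars share one magic constant, so matching the two end-colors forces the edge colors to match), and then restrict to $T$---is sensible and in the spirit of how this paper uses ice-flower systems elsewhere. Note, however, that it cannot be checked against ``the paper's proof'': Lemma \ref{thm:tree-edge-magic-ice-flower} is cited from \cite{Wang-Su-Yao-Total-2019}, and the proof printed directly beneath it in this paper actually vertex-coincides the leaves of the \emph{other} system $I_{ce}(E^L_{1,n}M_k)^{4n-1}_{k=1}$ into $K_{n,n}$ and establishes $\chi''_{emt}(K_{n,n})\leq 4n-1$ (feeding Theorem \ref{thm:property-edge-magic-ice-flower}), not the tree bound. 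Judged on its own terms, your argument has a genuine gap exactly where you flag ``the heart of the argument.''

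The gap is the ``strongly colored'' matching property you require of $I_{ce}(E_{1,\Delta}M_k)^{2\Delta+3}_{k=1}$: that for \emph{every} pair $(c_v,c_p)\in[1,2\Delta+3]^2$ there is a star with center color $c_v$ having a free leaf-slot colored $c_p$. This is not merely unverified; as stated it is false for the system as constructed. In Case EM-1 ($n=2m$), a star whose center color lies in $[1,2m+1]$ has all its leaf colors in the upper range (around $[2m+3,4m+3]$), while a star whose center color lies in $[2m+2,4m+3]$ has all its leaf colors in $[1,2m]$; hence no pair with $c_v$ and $c_p$ in the same half is realizable. What you actually need is the bipartition-respecting weakening (parent slots in one half force child centers in the other half), and even that must be checked against the recoloring clauses: for example, a parent whose center color is $2m+1$ needs some star with center in the upper half to carry $2m+1$ on a leaf-slot, yet the listed leaf colors of those stars are $[1,2m]$ and the recoloring produces a value back in the upper half, not $2m+1$. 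Whether the system (or a corrected version of it) supplies all required slots, for both parities of $\Delta$, is precisely the case analysis you postponed, so the bound $\chi''_{emt}(T)\leq 2\Delta(T)+3$ does not yet follow from your text. The concluding step---restricting the edge-magic proper total coloring of $\widehat{T}$ to $T$ preserves properness and the constant $c_f(uv)$---is fine.
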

\begin{proof} Notice that $K^{(k)}_{1,n}$ has its own vertex set $V(K^{(k)}_{1,n})=\{x_{k,0},x_{k,j}:j\in [1,n]\}$ in the edge-magic ice-flower system $I_{ce}(E^L_{1,n}M_k)^{4n-1}_{k=1}$. We vertex-coincide $x_{1,j},x_{2,j},\dots ,x_{n,j}$ for each $j\in [1,n]$ into one $$y_j=x_{1,j}\odot x_{2,j}\odot \dots \odot x_{n,j},$$ the resultant graph is just a complete bipartite graph $K_{n,n}$ with bipartition $(X,Y)$, where $X=\{x_{i,0}:i\in [1,n]\}$ and $Y=\{y_{i}:i\in [1,n]\}$. Clearly, $\chi''_{emt}(K_{n,n})\leq 4n-1$.
\end{proof}

Each bipartite graph $G$ is a subgraph of $K_{M,M}$ with $M=\max \{|X|,|Y|\}$, we have

\begin{thm} \label{thm:property-edge-magic-ice-flower}
Each bipartite graph $G$ holds $\chi''_{emt}(G)\leq 4\max \{|X|,|Y|\}-1$ true.
\end{thm}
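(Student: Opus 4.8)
The plan is to reduce the bound for an arbitrary bipartite graph $G$ to the already-established bound for a complete bipartite graph, exactly as in the proof of Lemma~\ref{thm:tree-edge-magic-ice-flower}. First I would set $M=\max\{|X|,|Y|\}$, where $(X,Y)$ is the bipartition of $G$. Since every edge of $G$ joins a vertex of $X$ to a vertex of $Y$, and $|X|\le M$, $|Y|\le M$, the graph $G$ is (isomorphic to) a spanning subgraph of $K_{M,M}$: identify $X$ with a subset of one side and $Y$ with a subset of the other side, and add whatever edges are missing. Thus $G\subseteq K_{M,M}$.

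Next I would invoke Theorem~\ref{thm:edge-magic-complete-bipart}, which gives $\chi''_{emt}(K_{M,M})\le 4M-1$. The only remaining point is that passing to a subgraph does not increase the edge-magic total chromatic number, i.e. $\chi''_{emt}(H)\le\chi''_{emt}(G')$ whenever $H$ is a subgraph of $G'$. This is the analogue of Lemma~\ref{thm:subgraph-vs-graph-felicitous-difference} for the edge-magic parameter; I would either cite such a monotonicity lemma or prove it directly: take an edge-magic proper total coloring $f$ of $K_{M,M}$ with $f(V\cup E)\subseteq[1,4M-1]$ and $f(u)+f(uv)+f(v)=\lambda$ for all edges, restrict $f$ to $V(G)\cup E(G)$, and observe that the restriction is still a proper total coloring (adjacency constraints only get weaker on a subgraph) and still satisfies $f(u)+f(uv)+f(v)=\lambda$ on every edge of $G$, since every edge of $G$ is an edge of $K_{M,M}$. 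Hence $\chi''_{emt}(G)\le 4M-1=4\max\{|X|,|Y|\}-1$, which is the claim.

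The main subtlety — not really an obstacle — is making sure the restriction argument is legitimate: an edge-magic \emph{total} coloring is a bijection onto $[1,p+q]$ in the classical Definition~\ref{defn:edge-magic-total-labelling}, but here $\chi''_{emt}$ is defined via Definition~\ref{defn:combinatoric-definition-total-coloring} as the minimum $M$ with $f(V(G))\subseteq[1,M]$ over \emph{proper total colorings} $f$ with $B^*_{emt}(G,f,M)=0$, so injectivity is not required and the restriction of such an $f$ is automatically again of the required type. One should also check the embedding $G\hookrightarrow K_{M,M}$ is as a subgraph on the same vertex set up to isomorphism, which is immediate from $|X|,|Y|\le M$. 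With these points in place the proof is a two-line corollary of Theorem~\ref{thm:edge-magic-complete-bipart} together with subgraph monotonicity of $\chi''_{emt}$.
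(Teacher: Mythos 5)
Your proposal is correct and matches the paper's own argument: the paper likewise observes that every bipartite graph is a subgraph of $K_{M,M}$ with $M=\max\{|X|,|Y|\}$ and then applies the bound $\chi''_{emt}(K_{M,M})\leq 4M-1$ from Theorem \ref{thm:edge-magic-complete-bipart} (established via the edge-magic ice-flower system), with subgraph monotonicity taken for granted. Your explicit restriction argument for the monotonicity step is a sound way to fill in what the paper leaves implicit.
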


\subsubsection{Edge-magic star-graphic lattices}

An edge-magic ice-flower system $I_{ce}(E_{1,n}M_k)^{2n+3}_{k=1}$ enables us to get an \emph{$(EM)$-magic edge-magic star-graphic lattice} as folllows:
\begin{equation}\label{eqa:2n-3-edge-magic-stars-lattice}
\textrm{\textbf{L}}(\overline{\ominus} \textbf{I}_{ce}(EM)) =\left \{\overline{\ominus}^{2n+3}_{i=1}a_iE_{1,n}M_i: a_i\in Z^0, E_{1,n}M_i\in I_{ce}(E_{1,n}M_k)^{2n+3}_{k=1}\right \}
\end{equation} with $\sum^{2n+3}_{i=1} a_i\geq 1$ and the base is $\textbf{I}_{ce}(EM)=I_{ce}(E_{1,n}M_k)^{2n+3}_{k=1}$.

By the edge-magic ice-flower system $I_{ce}(E^L_{1,n}M_k)^{4n-1}_{k=1}$ we have an \emph{$(E^LM)$-magic edge-magic star-graphic lattice} as
\begin{equation}\label{eqa:4n-1-edge-magic-stars-lattice}
\textrm{\textbf{L}}(\overline{\ominus} \textbf{I}_{ce}(E^LM)) =\left \{\overline{\ominus}^{4n-1}_{i=1}a_iE^L_{1,n}M_i: a_i\in Z^0, E^L_{1,n}M_i\in I_{ce}(E^L_{1,n}M_k)^{4n-1}_{k=1}\right \}
\end{equation} with $\sum^{4n-1}_{i=1} a_i\geq 1$ and the base is $\textbf{I}^L_{ce}(EM)=I_{ce}(E^L_{1,n}M_k)^{4n-1}_{k=1}$.

\subsubsection{All-dual edge-magic star-graphic lattices}
Because of $\max \{\varphi_k(w):w\in V(G)\cup E(G)\}=4n-1$ and $\min \{\varphi_k(w):w\in V(G)\cup E(G)\}=1$ in the edge-magic ice-flower system $I_{ce}(E^L_{1,n}M_k)^{4n-1}_{k=1}$, the dual $\varphi^c_k$ of $\varphi_k$ is defined as: $\varphi^c_k(w)=4n-\varphi_k(w)$ for each element $w\in V(G)\cup E(G)$, so
$$\varphi^c_k(u)+\varphi^c_k(uv)+\varphi^c_k(v)=12n-[\varphi_k(u)+\varphi_k(uv)+\varphi_k(v)]=6n.$$
We claim that $\varphi^c_k$ and $\varphi_k$ are a pair of perfect all-dual edge-magic proper total colorings since $\varphi^c_k(w)+\varphi_k(w)=4n$ for each element $w\in V(G)\cup E(G)$. So, $\varphi^c_k(E^L_{1,n}M_k)$ is just the perfect all-dual star of $E^L_{1,n}M_k$ of the edge-magic ice-flower system $I_{ce}(E^L_{1,n}M_k)^{4n-1}_{k=1}$, we get the all-dual edge-magic ice-flower system $I^c_{ce}(\varphi^c_k(E^L_{1,n}M_k))^{4n-1}_{k=1}$ of the edge-magic ice-flower system $I_{ce}(E^L_{1,n}M_k)^{4n-1}_{k=1}$, and an edge-magic graphic lattice
\begin{equation}\label{eqa:all-dual-edge-magic-stars-lattice}
\textrm{\textbf{L}}(\overline{\ominus} \textbf{I}^c_{ce}(E^LM)) =\left \{\overline{\ominus}^{4n-1}_{i=1}a_i\varphi^c_k(E^L_{1,n}M_i): a_i\in Z^0, \varphi^c_k(E^L_{1,n}M_i)\in I^c_{ce}(\varphi^c_k(E^L_{1,n}M_k))^{4n-1}_{k=1}\right \}
\end{equation} with $\sum^{4n-1}_{i=1} a_i\geq 1$ and the base is $\textbf{I}^c_{ce}(E^LM)=I^c_{ce}(\varphi^c_k(E^L_{1,n}M_k))^{4n-1}_{k=1}$.

\subsubsection{Optimal edge-magic ice-flower systems}

Let $G$ be a graph of $n$ vertices and admit an edge-magic proper total coloring $f$ such that $\max\{f(w):w\in V(G)\cup E(G)\}=\chi''_{emt}(G)$. We do a series of leaf-splitting operations to $G$, such that the resultant graph is the union of colored stars $E_{1,t}M_k$ with $t\in \{d_i(G)\}$, where $\{d_i(G)\}$ is a degree subsequence of the degree sequence of $G$ with $d_i(G)\geq 2$. So, we get $G=\overline{\ominus}^{n_{G}}_{j=1}a_jE_{1,t}M_j$ with $\Sigma^{n_{G}}_{j=1}a_j\geq 1$, and we say $I_{ce}(EM,G)=\overline{\ominus}^{n_{G}}_{j=1}E_{1,t}M_j$ to be a \emph{$G$-edge-magic ice-flower system}. Thereby, we have a \emph{complete edge-magic ice-flower system} $I_{ce}(EM,n)=\bigcup_{G,|G|=n} I_{ce}(EM,G)$, and a \emph{complete edge-magic ice-flower graphic lattice}
\begin{equation}\label{eqa:complete-edge-magic-stars-lattice}
\textrm{\textbf{L}}(\overline{\ominus} \textbf{I}_{ce}(EM,n)) =\left \{\overline{\ominus}^{m(n)}_{i=1}a_iE_{1,t}M_i: a_i\in Z^0, E_{1,t}M_i\in I_{ce}(EM,n)\right \}
\end{equation} with $\sum^{m(n)}_{i=1} a_i\geq 1$ and the base is $I_{ce}(EM,n)$, where $m(n)$ is the cardinality of the complete edge-magic ice-flower system. In other word, this complete edge-magic ice-flower system $I_{ce}(EM,n)$ is \emph{optimal}.

\begin{thm} \label{thm:optimal-edge-magic-ice-flower}
There exists an optimal edge-magic ice-flower system $I_{ce}(EM,n)$ such that each graph $G$ of $n$ vertices is isomorphic to a graph $H\in \textrm{\textbf{L}}(\overline{\ominus} \textbf{I}_{ce}(EM,n)) $.
\end{thm}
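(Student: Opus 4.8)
The plan is to construct $I_{ce}(EM,n)$ by explicitly breaking every $n$-vertex graph into colored stars with the colored leaf-splitting operation, and then to recover the graph by the inverse colored leaf-coinciding operation, so that the desired membership is automatic. First I would check that $\chi''_{emt}(G)$ is meaningful for every graph: the complete graph $K_m$ carries the edge-magic proper total coloring $f(v_i)=i$, $f(v_iv_j)=3m-i-j$, which one verifies is a proper total coloring with $f(u)+f(v)+f(uv)=3m$ on every edge, and restricting such a coloring to any subgraph leaves it edge-magic and proper; hence every graph $G$ admits an edge-magic proper total coloring in the sense of Definition \ref{defn:combinatoric-definition-total-coloring}, and $\chi''_{emt}(G)=\min_f\max\{f(w)\}$ exists. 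Then, for each isomorphism class of $n$-vertex graphs, I fix one representative $G$ together with an edge-magic proper total coloring $f_G$ attaining $\max\{f_G(w):w\in V(G)\cup E(G)\}=\chi''_{emt}(G)$, and let $k_G$ denote its magic constant.

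Next I would apply the colored leaf-splitting operation to $(G,f_G)$, splitting, one at a time, every edge $uv$ with $\deg_G(u)\ge 2$ and $\deg_G(v)\ge 2$. The crucial local observation is that this operation preserves both constraints and introduces no new colors: the two new pendant edges both receive the color $f(uv)$, which is already distinct from the other edge colors at $u$ and at $v$; the two new leaves receive $f(v)$ and $f(u)$, distinct from $f(u)$ and $f(v)$; and $f(u)+f(uv')+f(v')=f(u)+f(uv)+f(v)=k_G$, symmetrically for the other new edge, so the magic constant is unchanged and all colors stay in $[1,\chi''_{emt}(G)]$. Since each original edge is split at most once, the procedure halts after $|E(G)|$ steps, and the terminal colored graph has every edge pendant, hence is a disjoint union of colored stars $E_{1,t}M$ with $t\in\{d_i(G):d_i(G)\ge 2\}$ — one star of valency $\deg_G(u)$ for each vertex $u$ with $\deg_G(u)\ge 2$, the pendant vertices of $G$ being absorbed as leaves of neighboring stars and any $K_2$-component of $G$ surviving unchanged by convention — each such star carrying an edge-magic proper total coloring with constant $k_G$ and colors in $[1,\chi''_{emt}(G)]$. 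I collect these stars into $I_{ce}(EM,G)$, take $I_{ce}(EM,n)=\bigcup_{|V(G)|=n}I_{ce}(EM,G)$, a finite set of colored stars, and set $m(n)=|I_{ce}(EM,n)|$.

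It then remains to see that $G$ reappears in $\textrm{\textbf{L}}(\overline{\ominus}\textbf{I}_{ce}(EM,n))$. Because the colored leaf-splitting and colored leaf-coinciding operations are mutually inverse, and the hypotheses of the colored leaf-coinciding operation (namely $f_1(u_1)=f_2(u_2)$, $f_1(v_1)=f_2(v_2)$, $f_1(u_1v_1)=f_2(u_2v_2)$, with $u_1$ of degree $\ge 2$ and $v_1$ a leaf of the first star, $v_2$ of degree $\ge 2$ and $u_2$ a leaf of the second) are exactly the color equalities installed by the splittings, I can reverse the splitting sequence: process the split edges of $G$ one by one, each time performing $\overline{\ominus}$ on the leaf of the $u$-star matched to that edge and the leaf of the $v$-star matched to it. At each stage the center of each star retains its degree $\deg_G(u)\ge 2$ and the leaf being coincided still has degree $1$, so the hypotheses hold and $\overline{\ominus}$ carries the edge-magic proper total colorings along; after all split edges are processed one obtains a colored graph $H$ with $H\cong G$ whose coloring equals $f_G$ under the isomorphism. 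Every star used in this reassembly lies in $I_{ce}(EM,G)\subseteq I_{ce}(EM,n)$, so letting $a_i$ be the number of times the $i$th star of $I_{ce}(EM,n)$ is used (and $a_i=0$ otherwise) exhibits $H=\overline{\ominus}^{m(n)}_{i=1}a_iE_{1,t}M_i\in \textrm{\textbf{L}}(\overline{\ominus}\textbf{I}_{ce}(EM,n))$ with $\sum_i a_i\ge 1$ and $H\cong G$. Since $f_G$ was chosen with $\max f_G=\chi''_{emt}(G)$, the reconstruction of each $G$ uses colors exactly up to $\chi''_{emt}(G)$, so $I_{ce}(EM,n)$ is optimal in the intended sense.

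The part I expect to be the main obstacle is the bookkeeping of the middle paragraph: proving rigorously that iterated colored leaf-splitting of an edge-magic proper total coloring terminates in a disjoint union of colored stars while keeping the magic constant fixed and introducing no color beyond $\chi''_{emt}(G)$, and, dually, that the color-matching and degree hypotheses of $\overline{\ominus}$ are met at every step of the reassembly — in particular the degenerate cases of pendant vertices of $G$ and $K_2$-components, which must be handled by convention rather than by the generic argument. The remaining ingredients (existence of an edge-magic proper total coloring on every graph, finiteness of $I_{ce}(EM,n)$, and the choice of the coefficients $a_i\in Z^0$) are routine.
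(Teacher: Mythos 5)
Your proposal is correct and follows essentially the same route as the paper, whose justification is just the construction paragraph preceding the theorem: leaf-split each $n$-vertex graph carrying an optimal edge-magic proper total coloring into colored stars $E_{1,t}M$, pool these into $I_{ce}(EM,n)$, and recover the graph by the inverse leaf-coinciding operation $G=\overline{\ominus}^{n_G}_{j=1}a_jE_{1,t}M_j$. Your added checks (existence of an edge-magic proper total coloring via $K_m$, preservation of properness and the magic constant under each split, and the pendant/$K_2$ conventions) merely make explicit what the paper leaves implicit.
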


We make a largest edge-magic ice-flower system $I_{ce}(E^M_{1,n-1}M_k)^{n}_{k=1}$ as follows: Suppose that each copy $K^{(k)}_{1,n-1}$ of $K_{1,n-1}$ has its own vertex set $V(K^{(k)}_{1,n-1})=\{x^k_0$, $x^k_1$, $x^k_2$, $\dots ,x^k_{n-1}\}$ and edge set $E(K^{(k)}_{1,n})=\{x^k_0x^k_j:j\in[1,n-1]\}$. We define a coloring $\theta_k$ as: $\theta_k(x^k_0)=k$ for $k\in [1,n]$, $\theta_k(x^k_j)=j$ for $j\in [1,k-1]\cup [k+1,n-1]$, $\theta_k(x^k_0x^k_j)=3n-\theta_k(x^k_0)-\theta_k(x^k_j)$ with $j\in [1,k-1]\cup [k+1,n-1]$ and $k\in [1,n]$. It is not hard to verify that $\theta_k$ is an edge-magic proper total coloring, and we get a largest edge-magic ice-flower system $I_{ce}(E^M_{1,n-1}M_k)^{n}_{k=1}$ based on the edge-magic proper total coloring $\theta_k$. Thereby, we obtain
\begin{thm} \label{thm:largest-edge-magic-ice-flower}
Each complete graph $K_n$ can be expressed as $K_n=\overline{\ominus}^{n}_{k=1}E^M_{1,n-1}M_k$, such that $\chi''_{emt}(K_n)=3(n-1)$ for $n\geq 3$, and moreover $\chi''_{emt}(G)\leq 3(n-1)$ for each graph $G$ of $n$ vertices.
\end{thm}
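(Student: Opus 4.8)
The plan is to establish the three assertions in turn. The identity $K_n=\overline{\ominus}^{n}_{k=1}E^M_{1,n-1}M_k$ together with $\chi''_{emt}(K_n)\le 3(n-1)$ will come from the explicit largest edge-magic ice-flower system just constructed; the reverse inequality $\chi''_{emt}(K_n)\ge 3(n-1)$ will come from an extremal analysis of an arbitrary edge-magic proper total coloring; and the bound for a general $n$-vertex graph $G$ will follow by restricting an edge-magic proper total coloring of $K_n$ to $G$.

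First I would check that $\theta_k$ is an edge-magic proper total coloring of $E^M_{1,n-1}M_k$ in the sense of Definition \ref{defn:combinatoric-definition-total-coloring}. Properness amounts to: $\theta_k(x^k_0)=k\ne j=\theta_k(x^k_j)$; the incident edge colors $\theta_k(x^k_0x^k_j)=3n-k-j$ are pairwise distinct because the indices $j$ are; and $3n-k-j\notin\{k,j\}$, which for $n\ge 3$ follows from $3n-2k\ge n>n-1\ge j$ and, similarly, $3n-k\ge 2n>2n-2\ge 2j$. The edge-magic relation is immediate, $\theta_k(x^k_0)+\theta_k(x^k_j)+\theta_k(x^k_0x^k_j)=3n$, so every star has magic constant $3n$. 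Finally every color lies in $[1,3(n-1)]$: the vertex colors lie in $[1,n]$ and the edge colors $3n-k-j$ lie in $[n+1,3n-3]$, the value $3n-3$ being attained (e.g. on $x^1_0x^1_2$), and $3(n-1)\ge n$ for $n\ge 3$. Next I would verify the color-matching conditions required by the successive colored leaf-coincidings: the centre of the star indexed $k$ has color $k$; a leaf of color $j$ in that star lies on an edge of color $3n-k-j$; and there is exactly one leaf of color $k$ in the star indexed $j$, lying on an edge of the same color $3n-j-k$. Hence $\overline{\ominus}$ may be applied for each unordered pair $\{k,j\}$, each time identifying the common edge and merging the two endpoints that agree in color; since this creates exactly one edge between the vertices that finally carry colors $k$ and $j$ for every pair, the resulting graph is $K_n$, and because the colored leaf-coinciding operation preserves a $W$-type proper total coloring by construction, the induced coloring of $K_n$ is edge-magic with constant $3n$ and palette inside $[1,3(n-1)]$. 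Thus $\chi''_{emt}(K_n)\le 3(n-1)$. Restricting this coloring to any spanning subgraph $G\subseteq K_n$ leaves a proper total coloring that still satisfies $f(u)+f(v)+f(uv)=3n$ on every edge kept and still uses colors in $[1,3(n-1)]$, so $\chi''_{emt}(G)\le 3(n-1)$ for every graph $G$ on $n$ vertices.

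For the reverse inequality, let $f:V(K_n)\cup E(K_n)\to[1,M]$ be any edge-magic proper total coloring, with magic constant $k$. Since $K_n$ is complete the $n$ vertex colors are distinct; order them $c_1<c_2<\cdots<c_n$, so $c_i\ge i$ and $c_i\le M-(n-i)$. For every pair $i<j$ the edge color equals $k-c_i-c_j$, and $1\le k-c_i-c_j\le M$ gives $c_1+c_2\ge k-M$ (from the two smallest vertex colors) and $c_{n-1}+c_n\le k-1$ (from the two largest); subtracting, $(c_n-c_1)+(c_{n-1}-c_2)\le M-1$, hence $M\ge 2n-3$. To recover the missing additive term I would invoke properness locally: at the vertex of color $c_1$ the value $c_1$ together with the $n-1$ distinct incident edge colors $k-c_1-c_j$ form $n$ distinct elements of $[1,M]$, and likewise at the vertex of color $c_n$; using $k-c_1-c_2\le M$, $k-c_n-c_{n-1}\ge 1$ and $k\ge c_{n-1}+c_n+1\ge 2n$ to locate these two ``spectra'' relative to the interval $[c_1,c_n]$ of vertex colors, and using $f(uv)\ne f(u),f(v)$ to rule out collisions, the aim is to exhibit a chain of $3(n-1)$ pairwise distinct colors, forcing $M\ge 3(n-1)$.

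The main obstacle is exactly this last step. The crude range and sum estimates repeatedly yield only $M\ge 2n-3$, and squeezing out the exact value $3(n-1)$ seems to require careful, position-sensitive bookkeeping of how the vertex-color interval and the two edge-color spectra may overlap — taking full advantage of the distinctness of colors at each vertex and of $f(uv)\ne f(u),f(v)$ — and then packaging that bookkeeping so it does not degenerate into many cases on where $c_1,c_2,c_{n-1},c_n$ sit relative to $k$ and $M$. By contrast, verifying $\theta_k$ and checking that the leaf-coincidings really do assemble into $K_n$ are routine once the leaves are indexed by their colors.
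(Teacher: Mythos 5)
Your constructive half coincides with everything the paper actually proves here: the paper's entire argument for this theorem is the definition of $\theta_k$ (center colored $k$, leaves colored $j$, edge colors $3n-k-j$), the one-line remark that each $\theta_k$ is an edge-magic proper total coloring, and the assembly of the $n$ stars into $K_n$ by leaf-coinciding; this yields the decomposition $K_n=\overline{\ominus}^{n}_{k=1}E^M_{1,n-1}M_k$ and the upper bound $\chi''_{emt}(K_n)\le 3(n-1)$, and the bound for an arbitrary $n$-vertex $G$ follows by restriction exactly as you argue (compare Lemma \ref{thm:subgraph-vs-graph-felicitous-difference} in the felicitous-difference case). Your checks of properness, of the magic constant $3n$, and of the palette $[1,3(n-1)]$ are correct; the only thing to flag is that with the paper's literal leaf colors $[1,n-1]\setminus\{k\}$ each star $K^{(k)}_{1,n-1}$ has one leaf whose color is unspecified, and it must be colored $n$ (as you implicitly assume when you say star $j$ contains a leaf of color $k$ for every $k\ne j$), otherwise the center of the $n$th star could never be leaf-coincided with anything.

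The gap you identify — a proof that $\chi''_{emt}(K_n)\ge 3(n-1)$ — is not closed by the paper either: the equality is simply asserted after the construction, with no lower-bound argument. Moreover you should not expect to close it, because under Definition \ref{defn:combinatoric-definition-total-coloring} the equality is false. For $K_3$, color the vertices $1,2,3$ and give the edge joining the vertices colored $a$ and $b$ the color $6-a-b$; this is a proper total coloring (adjacent vertices, adjacent edges, and even incident vertex--edge pairs all receive distinct colors), it is edge-magic with constant $6$, and it uses only colors from $[1,3]$, so $\chi''_{emt}(K_3)=3<6$. Likewise for $K_4$: vertices colored $1,2,3,5$ with magic constant $10$ give edge colors $7,6,5,4,3,2$, a proper edge-magic total coloring with $M=7<9$. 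So the only parts of the statement that the paper's construction (and your argument) actually support are the star decomposition and the upper bounds $\chi''_{emt}(K_n)\le 3(n-1)$ and $\chi''_{emt}(G)\le 3(n-1)$; the value $3(n-1)$ is the largest color used by this particular ice-flower system, not the exact chromatic parameter, and your estimate $M\ge 2n-3$ is closer to what a genuine lower-bound analysis can deliver. In short, your proposal is as complete as the paper's own proof, and the step you could not supply is one the paper neither supplies nor could supply as stated.
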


\subsection{4-ice-flower lattices}

We have the graceful-difference ice-flower system $I_{ce}(G_{1,n}D_k)^{2n+3}_{k=1}$, the edge-difference ice-flower system $I_{ce}(E_{1,n}D_k)^{2n+3}_{k=1}$, the edge-magic ice-flower system $I_{ce}(E_{1,n}M_k)^{2n+3}_{k=1}$, and two felicitous-difference ice-flower systems $I_{ce}(F_{1,n}D_k)^{2n}_{k=1}$ and $I_{ce}(SF_{1,n}D_k)^{n}_{k=1}$. Suppose that there are two stars $K^G_{1,n}$ admitting a $W_G$-type coloring $f_G$ and $K^H_{1,n}$ admitting a $W_H$-type coloring $f_H$, an edge $uv\in E(K^G_{1,n})$ with leaf $v$ and another edge $xy\in E(K^H_{1,n})$ with leaf $y$ in the above five ice-flower systems.

1. If $K^G_{1,n}$ and $K^H_{1,n}$ belong to the same ice-flower system, then do a leaf-coinciding operation to them, we get $K^G_{1,n}\overline{\ominus} K^H_{1,n}$ when $f_G(u)=f_H(x)$, $f_G(x)=f_H(y)$ and $f_G(uv)=f_H(xy)$.

2. Do a vertex-coinciding operation to $K^G_{1,n}$ and $K^H_{1,n}$, we get $K^G_{1,n}\odot K^H_{1,n}$ by vertex-coinciding $u$ with $x$ into one $w=u\odot x$ as $f_G(u)=f_H(x)$ and $f_G(uv)\neq f_H(xy)$, and color $w$ with the color $f_G(u)$, $K^G_{1,n}\odot K^H_{1,n}$ admits a $W$-type coloring $g$, and delete some leaves $z_i$ if $g(wz_i)=g(wz_j)$ such that $g$ is really a proper total coring of the resulting graph.

By $(\overline{\ominus}^{2n}_{i=1}a_iF_{1,n}D_i)$ in (\ref{eqa:stars-lattice}), $ (\overline{\ominus}^n_{i=1}b_iSF_{1,n}D_i)$ in (\ref{eqa:smallest-stars-lattice}), $(\overline{\ominus}^{2n+3}_{k=1}a_iE_{1,n}M_i)$ in (\ref{eqa:2n-3-edge-magic-stars-lattice}),
$(\overline{\ominus}^{2n+3}_{k=1}d_iE_{1,n}D_i)$ in (\ref{eqa:edge-difference-stars-lattice}) and $(\overline{\ominus}^{2n+3}_{k=1}e_iG_{1,n}D_i)$ in (\ref{eqa:graceful-difference-stars-lattice}) with $a_i,b_i,c_i,d_i,e_i\in Z^0$ and $\sum \varepsilon_i\geq 1$ for $\varepsilon=a,b,c,d,e$, we have a \emph{4-ice-flower graph set}

\begin{equation}\label{eqa:4-ice-flower-lattice}
{
\begin{split}
[\overline{\ominus} 4_{1,n}\odot]_{a_ib_ic_id_ie_i}=&\left (\overline{\ominus}^{2n}_{i=1}a_iF_{1,n}D_i\right )\odot \left (\overline{\ominus}^n_{i=1}b_iSF_{1,n}D_i\right )\odot \left (\overline{\ominus}^{2n+3}_{i=1}c_iE_{1,n}M_i\right )\\
&\odot \left (\overline{\ominus}^{2n+3}_{i=1}d_iE_{1,n}D_i\right )\odot \left (\overline{\ominus}^{2n+3}_{i=1}e_iG_{1,n}D_i \right)
\end{split}}
\end{equation}
and a \emph{4-ice-flower graphic lattice} as follows
\begin{equation}\label{eqa:graceful-difference-stars-lattice}
\textrm{\textbf{L}}(\overline{\ominus} \odot\textbf{4I}_{ce}) = \bigcup [\overline{\ominus} 4_{1,n}\odot]_{a_ib_ic_id_ie_i}
\end{equation} See an example for the 4-ice-flower graphic lattice $\textrm{\textbf{L}}(\overline{\ominus} \odot\textbf{4I}_{ce})$ shown in Fig.\ref{fig:4-ice-flower-lattice}.

\begin{figure}[h]
\centering
\includegraphics[width=16.2cm]{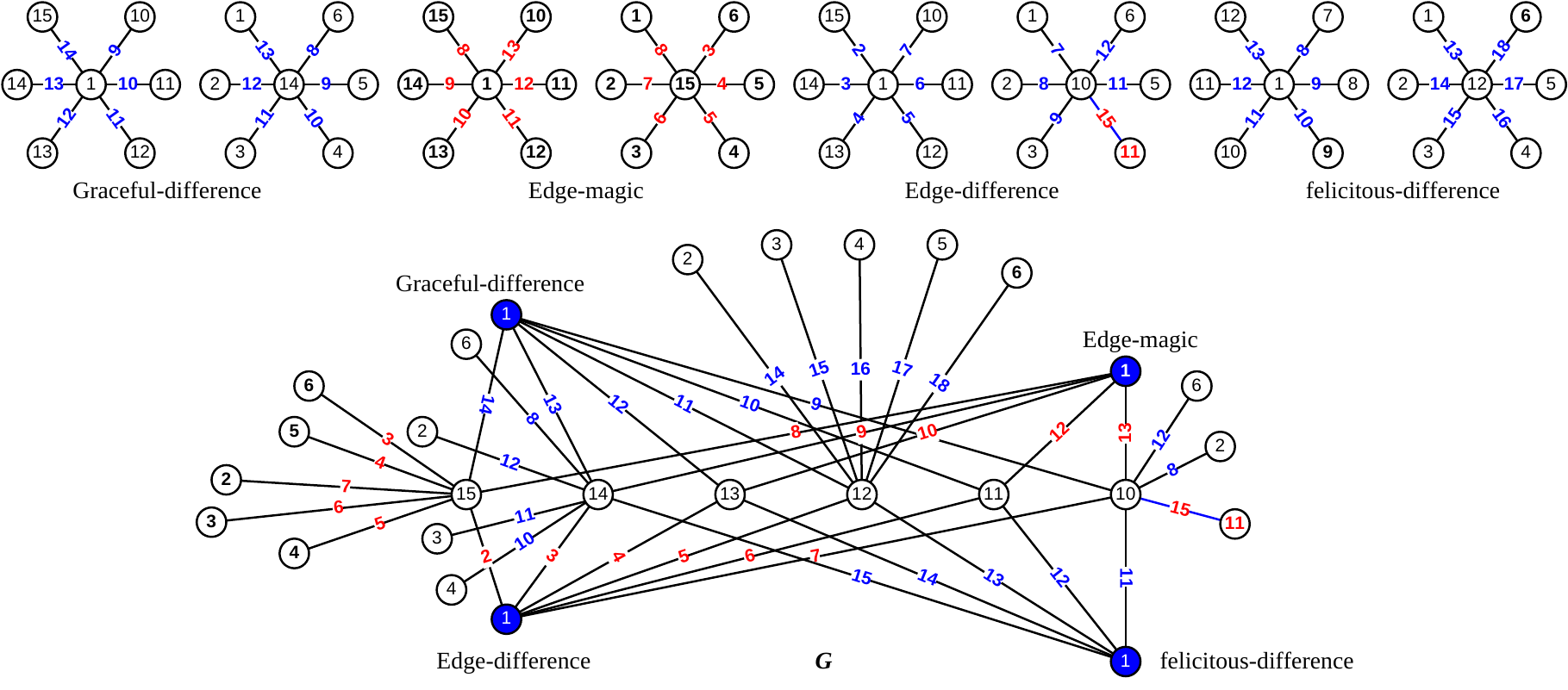}
\caption{\label{fig:4-ice-flower-lattice}{\small A graph $G$ made by four ice-flower systems.}}
\end{figure}

\begin{defn} \label{defn:4-ice-flower-total-coloring}
$^*$ Let $f:V(G)\cup E(G)\rightarrow [1,M]$ be a proper total coloring of a simple graph $G$. If $E(G)=\bigcup^4_{i=1}E_i$ with $E_i\cap E_j=\emptyset$ and $E_i\neq \emptyset$ for $1\leq i,j\leq 4$, such that $f(u)+f(uv)+f(v)=k_1$ for each edge $uv\in E_1$, $f(xy)+|f(x)-f(y)|=k_2$ for each edge $xy\in E_2$, $|f(s)+f(t)-f(st)|=k_3$ for each edge $st\in E_3$, and $\big ||f(a)-f(b)|-f(ab)\big |=k_4$ for each edge $ab\in E_4$. We call $f$ a \emph{4-ice-flower proper total coloring} of $G$, and the smallest number of $\max\{f(w):w\in V(G)\cup E(G)\}$ for which $G$ admits a 4-ice-flower proper total coloring is denoted as $\chi''_{\textrm{4ice}}(G)$, called \emph{4-ice-flower total chromatic number}.\qqed
\end{defn}

Similarly, we can define a \emph{$k$-ice-flower proper total coloring} with $k\in [1,4]$, in general.
\begin{thm} \label{thm:4-ice-flower-coloring}
Every tree $T$ with maximum degree $\Delta \geq 4$ and diameter $D(T)\geq 5$ admits a 4-ice-flower proper total coloring.
\end{thm}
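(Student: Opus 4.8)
The plan is to exhibit, for each tree $T$ with $\Delta(T)\ge 4$ and $D(T)\ge 5$, a partition $E(T)=E_1\cup E_2\cup E_3\cup E_4$ into four nonempty classes together with a single proper total coloring $f$ that is simultaneously edge-magic on $E_1$, edge-difference on $E_2$, felicitous-difference on $E_3$ and graceful-difference on $E_4$. The natural way to produce such an $f$ is to use the four star-type ice-flower systems already built in this section --- the edge-magic system $I_{ce}(E_{1,n}M_k)$, the edge-difference system $I_{ce}(E_{1,n}D_k)$, the felicitous-difference systems $I_{ce}(F_{1,n}D_k)$, $I_{ce}(SF_{1,n}D_k)$, and the graceful-difference system $I_{ce}(G_{1,n}D_k)$ --- and glue colored stars drawn from them along the tree by the (colored) leaf-coinciding and vertex-coinciding operations, exactly the mechanism described in the $4$-ice-flower graphic lattice $\textrm{\textbf{L}}(\overline{\ominus}\odot\textbf{4I}_{ce})$. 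Since a tree is built from its vertices of degree $\ge 2$, each such vertex $v$ with $\deg_T(v)=d_v$ can be realized as the center of a star $K_{1,d_v}$ carrying one of the four ice-flower colorings, and leaf-coinciding (or vertex-coinciding, when edge colors must be made distinct) assembles these stars into $T$.

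First I would set up the skeleton. Because $D(T)\ge 5$, $T$ contains a path $P=v_0v_1v_2v_3v_4v_5$; because $\Delta(T)\ge 4$, some vertex has degree at least $4$. I would choose a root and orient $T$ away from it, then walk the tree assigning to each internal edge one of the labels $1,2,3,4$ so that all four labels are actually used --- the condition $D(T)\ge 5$ guarantees enough edges (at least five on the path $P$ alone) to realize every class nonempty, and gives room to place the "hard" class, graceful-difference, on an edge incident to a high-degree vertex as its ice-flower coloring requires. Next I would process the internal vertices one at a time in BFS order from the root: at vertex $v$ I take the appropriate colored star $E_{1,d_v}M_k$, $E_{1,d_v}D_k$, $F_{1,d_v}D_k$ (or $SF_{1,d_v}D_k$), or $G_{1,d_v}D_k$ whose center color matches the already-assigned color of $v$ and whose leaf incident to the parent edge has the color forced by the parent's star; the "$k$-selection" is possible precisely because each ice-flower system ranges $k$ over a full interval (e.g. $[1,2n+3]$), so for any target center value there is a star in the system. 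When an edge color clash would occur between two stars meeting at $v$, I pass from leaf-coinciding to vertex-coinciding and delete a duplicated leaf, as allowed in the construction of the $4$-ice-flower graph set; since $\Delta(T)\ge 4$ there is slack to absorb these deletions without dropping below the required degrees on the tree $T$ itself.

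The verification then reduces to two local checks repeated at every vertex: (a) $f$ restricted to the star at $v$ is a proper total coloring with the prescribed magic/difference constant on the edges of that star --- this is exactly the defining property of the ice-flower system the star was drawn from, proved earlier for each of $I_{ce}(E_{1,n}M_k)^{2n+3}_{k=1}$, $I_{ce}(E_{1,n}D_k)^{2n+3}_{k=1}$, $I_{ce}(F_{1,n}D_k)^{2n}_{k=1}$, $I_{ce}(SF_{1,n}D_k)^{n}_{k=1}$, $I_{ce}(G_{1,n}D_k)^{2n+3}_{k=1}$; and (b) at the shared vertex $v$ the colors of the parent star and child star agree on $v$, on the shared edge, and on the shared leaf, so the union is again a proper total coloring (Tcoloring-type compatibility of the leaf-coinciding operation). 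Propagating (a)+(b) along the BFS order yields a global proper total coloring of $T$ whose edge classes $E_1,\dots,E_4$ satisfy Definition \ref{defn:4-ice-flower-total-coloring}.

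The main obstacle is the color-compatibility bookkeeping at each glue: when two stars with different ice-flower types meet, the center value demanded by one must lie in the admissible center range of the other, the two leaf colors on the shared edge must be reconcilable, and after forced recolorings/deletions the resulting map must stay injective on vertex colors adjacent to $v$ and on edge colors around $v$. I expect this to require a careful case split on which pair of the four types meets, together with an argument that the $k$-parameter freedom in each ice-flower system is always large enough --- and that the occasional leaf-deletion never violates $\deg_T$-constraints --- given $\Delta(T)\ge 4$ and $D(T)\ge 5$. Once that compatibility lemma is in hand, the global construction and the verification of Definition \ref{defn:4-ice-flower-total-coloring} are routine.
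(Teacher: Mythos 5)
The paper itself states Theorem \ref{thm:4-ice-flower-coloring} without any proof, so there is nothing to compare your argument against; judged on its own terms, your proposal is not yet a proof. The decisive step --- the ``compatibility lemma'' you defer with ``I expect this to require a careful case split'' --- is the entire content of the theorem, and as described the gluing scheme would in fact break down. Definition \ref{defn:4-ice-flower-total-coloring} demands a \emph{single} constant $k_i$ for all edges of the class $E_i$, but the colored stars you draw from the ice-flower systems carry constants that depend on the star's size: the edge-magic stars $E_{1,n}M_k$ have magic sum $6m+6$ or $6m+9$ according to $n=2m$ or $n=2m+1$, the edge-difference stars have constant $4m+4$ or $4m+6$, and similarly for the other systems. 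Since you take $K_{1,d_v}$ with $n=d_v$ varying over the internal vertices, two edges placed in the same class but incident to vertices of different degrees inherit different constants, and nothing in your construction reconciles them (one would have to pad every star to a common size, or shift/rescale the colorings globally, and then re-verify properness). A second unaddressed conflict: every internal edge of $T$ lies in \emph{two} stars, one at each end, and the leaf-coinciding operation forces the two stars to agree on the edge color and both end colors; that effectively forces the edge to satisfy both stars' defining equations simultaneously, a constraint your per-vertex type assignment never analyzes. Finally, properness of the union (distinct colors on adjacent edges coming from different systems with overlapping ranges, and the effect of the leaf deletions on the underlying tree) is asserted rather than argued.

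It is worth noting that a much shorter, gap-free route is available from results already in the paper: by Theorem \ref{thm:very-trees-admitting-graceful-colorings}, every tree of diameter at least $3$ admits a proper gracefully total coloring $f$, i.e.\ $f(uv)=|f(u)-f(v)|$ for every edge. Then $\bigl||f(u)-f(v)|-f(uv)\bigr|=0$ on all edges, so one may pick any three edges to form the singleton classes $E_1,E_2,E_3$ (a one-edge class trivially satisfies its equation, with $k_i$ defined as the value on that edge) and put the remaining edges, of which there are at least two since $D(T)\geq 5$ gives $|E(T)|\geq 5$, into $E_4$ with $k_4=0$. This yields a 4-ice-flower proper total coloring directly, without any star-gluing bookkeeping, and shows that the assembling-from-ice-flower-systems strategy, while in the spirit of this section, is far heavier machinery than the statement requires.
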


\subsection{General star-graphic lattices}

1. Uncolored stars. In general, let $F_{star\Delta}$ be the set of uncolored stars $K_{1,k}$ with $k\leq \Delta$, the uncolored leaf-coinciding operation ``$K_{1,i}\overline{\ominus} K_{1,j}$'' enables us to get an \emph{uncolored star-graphic lattice}
\begin{equation}\label{eqa:uncolored-stars-lattice}
\textrm{\textbf{L}}(\overline{\ominus} \textbf{F}_{star\Delta}) =\left \{\overline{\ominus}^{\Delta}_{k=2}a_k K_{1,k}: a_k\in Z^0, K_{1,k}\in F_{star\Delta}\right \}
\end{equation} containing all uncolored graphs with maximum degrees no more than $\Delta$, and $\sum^{\Delta}_{k=2} a_k\geq 1$.

\begin{thm} \label{thm:General-star-type-graphic-lattices}
Each graph is contained in the uncolored star-graphic lattice $\textrm{\textbf{L}}(\overline{\ominus} \textbf{F}_{star\Delta})$.
\end{thm}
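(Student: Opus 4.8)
The plan is to show that every (simple) graph $G$ can be produced from the stars $K_{1,k}$ with $2 \le k \le \Delta(G)$ by a finite sequence of leaf-coinciding operations, which is exactly the statement that $G \in \textbf{\textrm{L}}(\overline{\ominus} \textbf{F}_{star\Delta})$ once $\Delta \geq \Delta(G)$. The natural route is to run the leaf-splitting operation in reverse: leaf-splitting is the inverse of leaf-coinciding (this is recorded in the excerpt, \textbf{Oper}-4 and the ``train hook'' picture), so it suffices to prove that repeatedly leaf-splitting the edges of $G$ eventually decomposes $G$ into a disjoint union of stars $K_{1,k_i}$ with each $k_i \le \Delta(G)$, and then reverse the whole process.

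First I would set up the decomposition. Take $G$ with $|E(G)| = q$. If $q=1$ the single edge is trivially handled (or we view $K_{1,1}$ as a degenerate base element / use two copies of $K_{1,2}$ with a deleted leaf as in the 4-ice-flower construction). Otherwise pick any edge $uv \in E(G)$ with $\deg_G(u) \ge 2$ and $\deg_G(v) \ge 2$; if no such edge exists then $G$ is already a disjoint union of stars and we are done. Leaf-split $uv$: this replaces $uv$ by a new leaf $v'$ attached to $u$ and a new leaf $u'$ attached to $v$, producing $G(uv\prec)$. The key invariants to track are: (i) the number of edges is preserved, $|E(G(uv\prec))| = q$; (ii) every vertex degree in $G(uv\prec)$ is at most $\max\{\deg_G(u), \deg_G(v), 1\} \le \Delta(G)$, since $u$ keeps its degree (loses neighbor $v$, gains leaf $v'$), $v$ keeps its degree, and $u', v'$ are leaves; (iii) a suitable ``progress'' quantity strictly decreases. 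For (iii) I would use the number of edges both of whose endpoints have degree $\ge 2$; this drops by exactly one at each leaf-split, because the split edge is destroyed and the two new edges are pendant, and no other edge changes its endpoint-degree status. Hence after finitely many leaf-splits we reach a graph with no such edge, i.e.\ a disjoint union of stars, each of which is some $K_{1,k_i}$ with $2 \le k_i \le \Delta(G)$ (isolated vertices and isolated edges being handled as above or absorbed).

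Now reverse: starting from this disjoint union $\bigsqcup_i K_{1,k_i}$, apply the leaf-coinciding operations in the reverse order of the leaf-splits. Each leaf-coinciding step undoes exactly one leaf-split — it glues back two pendant edges $uv'$ and $vu'$ into the single edge $uv$ — so after all of them we recover $G$ on the nose. This exhibits $G = \overline{\ominus}^{\Delta}_{k=2} a_k K_{1,k}$ for appropriate multiplicities $a_k \in Z^0$ with $\sum a_k \ge 1$ (namely $a_k$ is the number of stars $K_{1,k}$ appearing in the terminal disjoint union), so $G \in \textrm{\textbf{L}}(\overline{\ominus} \textbf{F}_{star\Delta})$ whenever $\Delta \ge \Delta(G)$, and since $\Delta$ is arbitrary the claim ``each graph is contained in the uncolored star-graphic lattice'' follows.

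The main obstacle is not the termination argument (the monovariant makes that routine) but the bookkeeping around degenerate pieces and the precise definition of the base: one must make sure the operation ``$\overline{\ominus}$'' as defined genuinely requires one endpoint of the coincided edge to have degree $\ge 2$ on each side, which matches the leaf-split output (the new edges $uv', vu'$ have $u$ and $v$ of the original, possibly large, degree on one end and a leaf on the other), and one must decide how $K_{1,1}$-type pieces (isolated edges) and isolated vertices are treated — either by allowing $K_{1,2}$ with a suppressed leaf (as the paper does elsewhere) or by stipulating $G$ has minimum degree and no trivial components. I would also double-check the edge-count / degree invariants (i) and (ii) explicitly in the write-up, since they are what guarantee the terminal stars never exceed degree $\Delta(G)$, which is the whole point of indexing the base by $\Delta$.
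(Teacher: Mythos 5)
Your argument is correct and is essentially the idea the paper has in mind; note, though, that the paper states this theorem without any explicit proof, the intended justification being the direct construction illustrated in Example \ref{exa:2020star-graphic}: place one star $K_{1,m_i}$ at each vertex of degree $m_i$ and perform one leaf-coinciding operation (merging one leaf-edge from each of the two stars) for every edge of $G$, which gives $G=\overline{\ominus}\,|^n_{j=1}K_{1,m_j}$ immediately and needs no termination argument. Your route — iterated leaf-splitting with the monovariant ``number of edges whose two ends both have degree $\geq 2$,'' followed by reversing the splits — is the same decomposition read procedurally, and it is sound; it even yields a slightly more economical expression, since pendant edges of $G$ are never split and get absorbed into the star of their non-leaf end. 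Two corrections to your bookkeeping: your invariant (i) is false as stated — a leaf-splitting does \emph{not} preserve the edge count, it satisfies $|E(G(uv\prec))|=|E(G)|+1$ and $|V(G(uv\prec))|=|V(G)|+2$ (this is recorded explicitly in the paper's proof of Theorem \ref{thm:trees-vs-connected-graphs}), with leaf-coinciding correspondingly decreasing the edge count by one; fortunately nothing in your argument uses (i), since the monovariant and the degree bound (ii) carry the proof. And the degenerate pieces you flag are a real restriction of the statement itself rather than of your proof: the base $F_{star\Delta}$ contains only $K_{1,k}$ with $k\geq 2$, so components of $G$ that are isolated vertices or single edges (i.e.\ graphs with maximum degree at most $1$) are not literally generated, exactly as you observed.
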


As known, many problems of graph theory can be expressed or illustrated by uncolored star-graphic lattices, such as hamiltonian graphs discussed in \cite{Yao-Wang-Liu-Wang-Ma-Su-Sun-2020star-graphic}.

\begin{exa}\label{exa:2020star-graphic}
Each star $K_{1,m_i}$ of an ice-flower system $\textbf{\textrm{K}}=(K_{1,m_1},K_{1,m_2},\dots, K_{1,m_n})$ has its vertex set $V(K_{1,m_i})=\{x_i,y_{i,j}:j\in [1,m_i]\}$ and its edge set $V(K_{1,m_i})=\{x_iy_{i,j}:j\in [1,m_i]\}$ with $m_i\geq 2$, so $K_{1,m_i}$ has its own leaf set $L(K_{1,m_i})=\{y_{i,1},y_{i,2},\dots ,y_{i,m_i}\}$ and its unique non-leaf vertex $x_i$ with $\textrm{deg}_{K_{1,m_i}}(x_i)=m_i$. As known, a sequence $\textbf{\textrm{d}}=(m_1$, $ m_2, \dots , m_n)$ with $1\leq m_{i}\leq m_{i+1}$ to be the degree sequence of a certain graph $G$ of $n$ vertices if and only if $\sum^n_{i=1}m_i$ is even and
\begin{equation}\label{eqa:c3xxxxx}
\sum^k_{i=1}m_i\leq k(k-1)+\sum ^n_{i=k+1}\min\{k,m_i\}
\end{equation}
shown by Erd\"{o}s and Gallai in 1960 \cite{Bondy-2008}.

We leaf-coincide two leaf-edges $x_iy_{i,m_{i}}$ and $x_{i+1}y_{i+1,1}$ for each pair of $K_{1,m_i}$ and $K_{1,m_{i+1}}$ for $i\in [1,n-1]$ into one edge $x_ix_{i+1}$ joining $K_{1,m_i}$ and $K_{1,m_{i+1}}$ together, such that $x_i \odot y_{i+1,1}$ and $x_{i+1}\odot y_{i,m_{i}}$. Then we get a caterpillar $T$ for its ride $P=x_1x_2\dots x_n$ with $x_i\in V(K_{1,m_i})$ having vertex set $V(K_{1,m_i})=\{x_i,y_{i,j}:j\in [1,m_i]\}$, next we leaf-coincide two leaf-edges $x_1y_{1,1}$ and $x_{n}y_{n,m_{n}}$ into one edge $x_1x_{n}$ with $x_1\odot y_{n,m_{n}}$ and $x_{n}\odot y_{1,1}$, the resultant graph, denoted as $T^*$, is like a \emph{haired-cycle}.
We write $T^*=\overline{\ominus}^n_{j=1}K_{1,m_j}$, and then do some leaf-coinciding operations on some pairs of leaf-edges of $T^*$ to get a connected graph $G$, such that $G$ has no leaf. Thereby, $G$ has a cycle $C=x_1x_2\dots x_nx_1$ containing each vertex of $G$, that is, $G$ is \emph{hamiltonian}. Especially, we denote $G$ by
\begin{equation}\label{eqa:hamiltonian}
G=\overline{\ominus} T^*=\overline{\ominus}[\overline{\ominus}^n_{j=1}K_{1,m_j}]=\overline{\ominus}^2|^n_{j=1}K_{1,m_j}.
\end{equation} Notice that there are two or more hamiltonian graphs like $G$ by leaf-coinciding some pairs of leaf-edges of $T^*$. On the other hand, each permutation $k_{1}k_{2}\cdots k_{n}$ of $m_{1}m_{2}\cdots m_{n}$ distributes us a set of hamiltonian graphs $\overline{\ominus}^2|^n_{j=1}K_{1,k_j}$. As known, there are $M_p(=n!)$ permutations, we have $M_p$ ice-flower systems $\textbf{\textrm{K}}_k=(K_{1,k_1},K_{1,k_2},\dots, K_{1,k_n})$ with $k\in [1,M_p]$, and each ice-flower system $\textbf{\textrm{K}}_k$ induces a set of hamiltonian graphs $\overline{\ominus}^2|^n_{j=1}K_{1,k_j}$, we write this set as $\overline{\ominus}^2\textbf{\textrm{K}}_k$ and $P_{ermu}(\textbf{\textrm{K}})$ to be the set of $M_p$ ice-flower systems $\textbf{\textrm{K}}_k$. Then the following set
\begin{equation}\label{eqa:hamiltonian-star-lattices}
\textbf{\textrm{L}}(\overline{\ominus}^2 P_{\textrm{ermu}}(\textbf{\textrm{K}}))=\left \{\overline{\ominus}^2\big |^{M_p}_{k=1}a_k\textbf{\textrm{K}}_k,~\textbf{\textrm{K}}_k\in P_{\textrm{ermu}}(\textbf{\textrm{K}})\right \}
\end{equation} with $\sum^{M_p}_{k=1}a_k=1$ a \emph{hamiltonian star-graphic lattice}. We can construct graphs $G=\overline{\ominus}^{m+1}\big |^{M_p}_{k=1}a_k\textbf{\textrm{K}}_k$ with $\textbf{\textrm{K}}_k\in P_{\textrm{ermu}}(\textbf{\textrm{K}})$ containing $m$ edge-disjoint Hamilton cycles, also, such technique can be used to deal with regular graphs. An Euler graph $H$ with degree sequence $\textbf{\textrm{d}}=(2m_1$, $ 2m_2, \dots , 2m_n)$ can be expressed as $H=\overline{\ominus} |^{n}_{j=1}K_{1,2m_j}$, and so on.\qqed
\end{exa}

2. Colored stars. Suppose that $F^c_{star\Delta}$ is the set of colored stars $C^K_{1,k}$ ($\cong K_{1,k}$) with $k\leq \Delta$, in general. We define a \emph{general star-graphic lattice}
\begin{equation}\label{eqa:general-stars-lattice}
\textrm{\textbf{L}}(\overline{\ominus} \textbf{F}^c_{star\Delta}) =\left \{\overline{\ominus}^{\Delta}_{k=2}a_k C^K_{1,k}: a_k\in Z^0, C^K_{1,k}\in F^c_{star\Delta}\right \}
\end{equation} with $\sum^{\Delta}_{k=2} a_k\geq 1$, this graphic lattice contains all colored graphs with maximum degrees no more than $ \Delta$. Clearly, $\textrm{\textbf{L}}(\overline{\ominus} \textbf{I}_{ces}(FD))\subset \textrm{\textbf{L}}(\overline{\ominus} \textbf{F}^c_{star\Delta})$, $\textrm{\textbf{L}}(\overline{\ominus} \textbf{I}_{ces}(SFD))\subset \textrm{\textbf{L}}(\overline{\ominus} \textbf{F}^c_{star\Delta})$ and $\textrm{\textbf{L}}(\odot \textbf{I}_{ces}(FD))\subset \textrm{\textbf{L}}(\overline{\ominus} \textbf{F}^c_{star\Delta})$.

Let $S_{\textrm{X}}\subset \textrm{\textbf{L}}(\overline{\ominus} \textbf{F}^c_{star\Delta})$ be the set of all trees admitting $W$-type colorings. Each tree $T\in S_{\textrm{X}}$ corresponds a set of connected graphs, such that each graph $G$ of this set can be leaf-split into a tree $H$ of $S_{\textrm{X}}$, we say $G$ corresponds to $H$, and vice versa. Then we have

\begin{thm} \label{thm:graphic-lattices-GTC}
A connected graph admits a $W$-type coloring if and only if its corresponding a tree admitting a $W$-type coloring too.
\end{thm}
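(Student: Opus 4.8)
The statement is an equivalence, and both directions are driven by the two colored operations of Section 3.1: the colored leaf-splitting operation $G(xy\prec)$ of Section 3.1.2 and the colored leaf-coinciding operation $\overline{\ominus}$ of Section 3.1.1, which (as already noted for \textbf{Oper}-4) form a pair of mutually inverse operations. The plan is to first isolate the bookkeeping lemma that these operations preserve $W$-type total colorings, and then run an induction on the circuit rank $r(G)=|E(G)|-|V(G)|+1$ in each direction.

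\emph{Preservation lemma.} I would first show: if $f$ is a $W$-type proper total coloring of $G$ and $xy\in E(G)$ with $\textrm{deg}_G(x)\geq 2$ and $\textrm{deg}_G(y)\geq 2$, then the coloring $g$ on $G(xy\prec)$ prescribed in Section 3.1.2 is again a $W$-type proper total coloring; and conversely, leaf-coinciding two $W$-type-colored leaf-edges whose vertex and edge colors match (as in Section 3.1.1) produces a $W$-type-colored graph. The only thing to check is that a leaf operation leaves the vertex-color set $f(V(G))$ and the edge-color set $f(E(G))$ unchanged — the deleted edge $xy$ is replaced by two edges both carrying the color $f(xy)$, and each of $f(x),f(y)$ is copied onto a new leaf — and, for the edge-magic, edge-difference, felicitous-difference and graceful-difference proper total colorings of Definition \ref{defn:combinatoric-definition-total-coloring}, that the magic constant $c_f(uv)=k$ is inherited by both new edges, which holds because $c_f$ is symmetric in the two endpoints of an edge. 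For $W$-types governed by edge-local constraints (ordinary proper total colorings, adjacent-vertex distinguishing variants, the rainbow coloring of Definition \ref{defn:rainbow-proper-total-coloring}, \emph{etc.}) this is immediate.

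\emph{The two directions.} For the forward implication, assume the connected graph $G$ admits a $W$-type coloring $f$; if $G$ is a tree there is nothing to prove, so choose an edge $xy$ lying on a cycle of $G$. Then $\textrm{deg}_G(x),\textrm{deg}_G(y)\geq 2$, $G-xy$ stays connected, and $G(xy\prec)$ is connected with circuit rank $r(G)-1$ and, by the preservation lemma, still admits a $W$-type coloring. Iterating $r(G)$ times produces a tree $H\in S_{\textrm{X}}$ obtained from $G$ by colored leaf-splittings, i.e. a tree corresponding to $G$; hence $G$ has a corresponding tree admitting a $W$-type coloring. For the converse, let $H\in S_{\textrm{X}}$ be the tree corresponding to $G$, carrying a $W$-type coloring $g$; since "corresponding" means $G$ is recovered from $H$ by reversing a leaf-splitting sequence, i.e. by $r(G)$ colored leaf-coinciding operations, at the $k$-th step the two leaf-edges being merged are exactly the pair created at the matching split, so the color identifications demanded by Section 3.1.1 hold, and the preservation lemma keeps the current graph $W$-type-colored. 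After the last coincidence we obtain $G$ equipped with a $W$-type coloring, closing the equivalence.

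\emph{Main obstacle.} The delicate point is the backward direction: one must be sure the color-matching hypotheses of the leaf-coinciding operation are actually met. Under the reading that the correspondence $G\leftrightarrow H$ records a \emph{colored} leaf-split/coincide relationship — consistent with how these operations are defined in Sections 3.1.1–3.1.2 and with the public-key/private-key picture of Example \ref{exa:edge-difference-replacing} — this is automatic and the theorem reduces to the assertion that the colored leaf operations are mutually inverse and $W$-type-preserving, i.e. to the preservation lemma. If, instead, one only posits the uncolored structural relation "$G$ leaf-splits into $H$", then an arbitrary $W$-type coloring of $H$ need not be compatible with any coinciding sequence rebuilding $G$ (this already fails for size-dependent labellings such as graceful ones, where the edge-label interval $[1,q]$ changes under splitting), and the genuine work becomes: show that $H$ admits \emph{some} star-decomposable $W$-type coloring whose glue colors realize the prescribed cycle-closing pattern of $G$. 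That selection step is where I expect the difficulty to concentrate, and it is exactly the place where the ice-flower-system constructions of Section 3 would be invoked.
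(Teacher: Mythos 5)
You should first know that the paper contains no proof of this theorem at all: it is stated immediately after the sentence defining $S_{\textrm{X}}$ and the leaf-splitting correspondence, so the author treats it as an immediate consequence of the fact that the colored leaf-splitting and leaf-coinciding operations of Section 3.1 are mutually inverse and carry the coloring back and forth. Your proposal (preservation lemma plus induction on the circuit rank, with the colored reading of ``corresponding'') is exactly the formalization of that intended argument, and your backward direction under the colored reading is handled correctly: since the split was performed with the colored operation, the matching hypotheses $g(u)=g(u')$, $g(v)=g(v')$, $g(uv')=g(vu')$ needed for leaf-coinciding are met at every step.

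The genuine gap is in the preservation lemma itself, and it is larger than your ``main obstacle'' paragraph concedes. You verify the lemma only for edge-local conditions (ordinary proper total colorings and the magic-constant colorings of Definition \ref{defn:combinatoric-definition-total-coloring}), and you note in passing that size-dependent labellings fail ``under the uncolored reading, in the backward direction.'' In fact they fail under any reading and already in the forward direction: for a gracefully (or odd-gracefully, felicitous, harmonious) total coloring of Definition \ref{defn:new-graceful-strongly-colorings}, the condition is $f(E(G))=[1,q]$ (or an arithmetic condition modulo $q$), and one leaf-splitting raises the edge count to $q+1$ while duplicating the color $f(uv)$ on the two new edges, so the inherited coloring is \emph{not} a $W$-type coloring of $G(uv\prec)$. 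Hence your induction on circuit rank, which transports the inherited coloring step by step, does not prove the theorem at the generality it claims — it proves it only for the edge-local $W$-types, whereas the surrounding text (e.g.\ the gracefully-total-coloring question GTC-3 in Problem \ref{qeu:GTC}) clearly intends the set-valued types as well. For those types both directions require constructing a \emph{new} coloring of the split (or coincided) graph rather than inheriting one — precisely the selection step you defer to the ice-flower constructions — and neither your proposal nor the paper supplies it; Theorem \ref{thm:trees-vs-connected-graphs}, which you implicitly rely on for the structural correspondence, is purely uncolored and does not close this gap.
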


\textbf{Spanning star-graphic lattices.} A connected graph $T$ is a tree if and only if $n_1(T)=2+\Sigma_{d\geq 3}(d-2)n_d(T)$ (Ref. \cite{Yao-Zhang-Yao-2007, Yao-Zhang-Wang-Sinica-2010}). An ice-flower system $\textbf{\textrm{K}}^c$ defined as: Each $K^c_{1,m_j}$ of $\textbf{\textrm{K}}^c$ admits a proper vertex coloring $g_j$ such that $g_j(x)\neq g_j(y)$ for any pair of vertices $x,y$ of $K^c_{1,m_j}$, and each leaf-coinciding graph $T=\overline{\ominus}|^n_{j=1}K^c_{1,m_j}$ is connected based on the leaf-coinciding operation ``$K^c_{1,m_i}\overline{\ominus} K^c_{1,m_j}$'', such that

(1) $n_1(T)=2+\Sigma_{d\geq 3}(d-2)n_d(T)$ holds true;

(2) $T$ admits a proper vertex coloring $f=\overline{\ominus}|^n_{j=1}g_j$ with $f(u)\neq f(w)$ for any pair of vertices $u,w$ of $T$.

We get a set $\textbf{\textrm{L}}(\overline{\ominus} (m,g)\textbf{\textrm{K}}^c)$ containing the above leaf-coinciding graphs $T=\overline{\ominus}|^n_{j=1}K^c_{1,m_j}$ if $|V(T)|=m$. Since each graph $T\in\textbf{\textrm{L}}(\overline{\ominus} (m,g)\textbf{\textrm{K}}^c)$ is a tree, and Cayley's formula $\tau(K_m)=m^{m-2}$ in graph theory (Ref. \cite{Bondy-2008}) tells us the number of elements of $\textbf{\textrm{L}}(\overline{\ominus} (n)\textbf{\textrm{K}}^c)$ to be equal to $m^{m-2}$. We call this set
$$\textbf{\textrm{L}}(\overline{\ominus} (m,g)\textbf{\textrm{K}}^c)=\{\overline{\ominus}|^n_{j=1}K^c_{1,m_j},~K^c_{1,m_j}\in \textbf{\textrm{K}}^c\}$$ a \emph{spanning star-graphic lattice}.

\textbf{A 4-color ice-flower system.} Each star $K_{1,d}$ with $d\in [2,M]$ admits a proper vertex-coloring $f_i$ with $i\in [1,4]$ defined as $f_i(x_0)=i$, $f_i(x_j)\in [1,4]\setminus\{i\}$, and $f_i(x_s)\neq f_i(x_t)$ for some $s,t\in [1,d]$, where $V(K_{1,d})=\{x_0,x_j:j\in [1,d]\}$. For each pair of $d$ and $i$, $K_{1,d}$ admits $n(d,i)$ proper vertex-colorings like $f_i$ defined above. Such colored star $K_{1,d}$ is denoted as $P_dS_{i,k}$, we have a set $(P_{d}S_{i,k})^{n(a,i)}_{k=1}$ with $i\in [1,4]$ and $d\in [2,M]$, and moreover we obtain a \emph{4-color ice-flower system} $\textbf{\textrm{I}}_{ce}(PS,M)=I_{ce}(P_{d}S_{i,k})^{n(a,i)}_{k=1})^{4}_{i=1})^{M}_{d=2}$, which induces a \emph{$4$-color star-graphic lattice}
\begin{equation}\label{eqa:4-color-star-system-lattices}
\textbf{\textrm{L}}(\Delta\overline{\ominus} \textbf{\textrm{I}}_{ce}(PS,M))=\left \{\Delta\overline{\ominus}^{A}_{(d,i,k)} a_{d,i,k}P_{d}S_{i,j}:~a_{d,i,k}\in Z^0,~P_{d}S_{i,j}\in \textbf{\textrm{I}}_{ce}(PS,M)\right \}
\end{equation} with $\sum ^{A}_{(d,i,k)} a_{d,i,k}\geq 3$, and the base is $\textbf{\textrm{I}}_{ce}(PS,M)=I_{ce}(P_{d}S_{i,k})^{n(a,i)}_{k=1})^{4}_{i=1})^{M}_{d=2}$, where $A=|\textbf{\textrm{I}}_{ce}(PS,M)|$, and the operation ``$\Delta\overline{\ominus}$'' is doing a series of leaf-coinciding operations to colored stars $a_{d,i,k}P_{d}S_{i,j}$ such that the resultant graph to be a planar with each inner face being a triangle.

See two examples shown in Fig.\ref{fig:process-leaf-splitting} and Fig.\ref{fig:4-color-star-system-example} about 4-color ice-flower systems.

\begin{figure}[h]
\centering
\includegraphics[width=16cm]{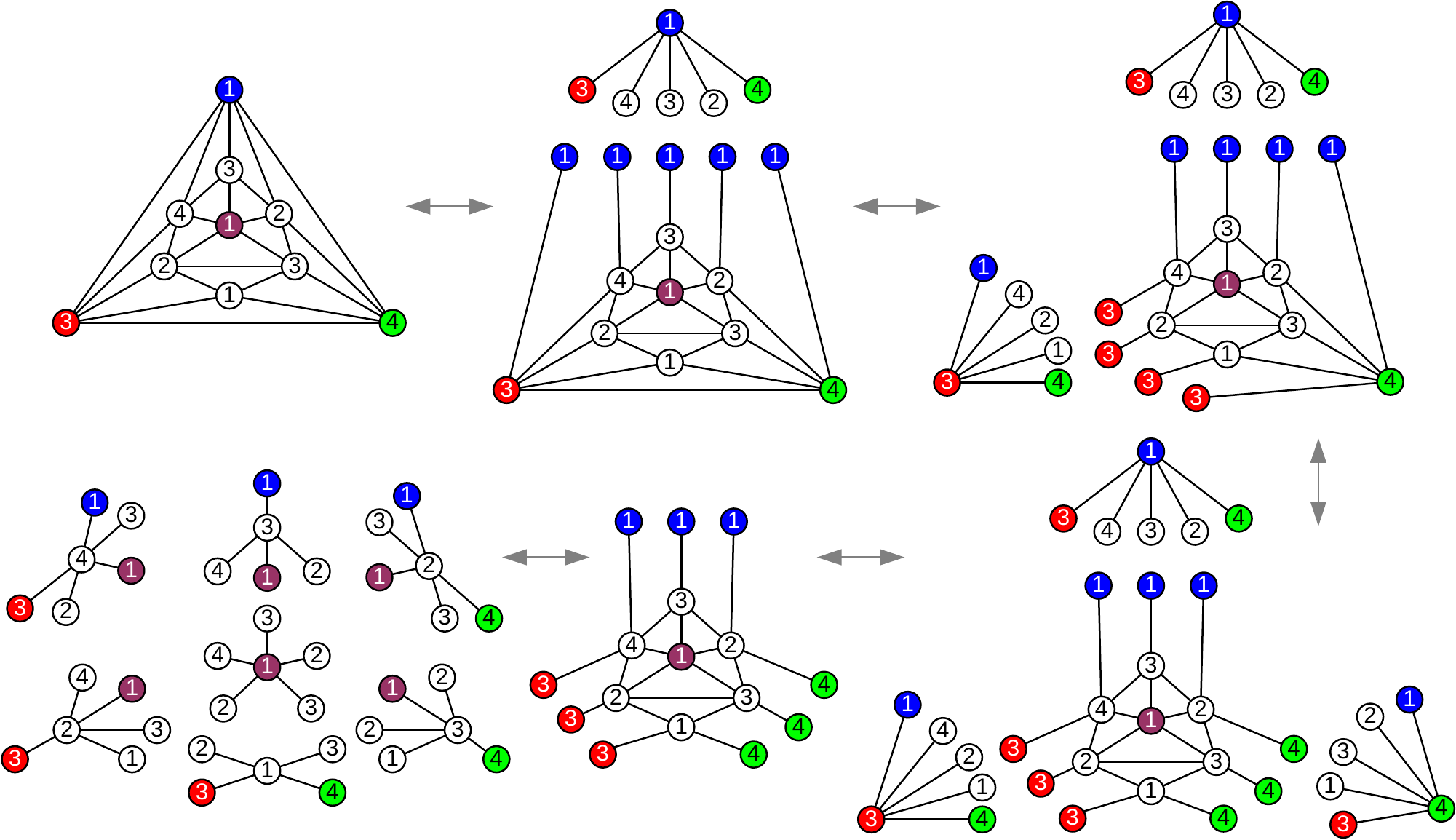}
\caption{\label{fig:process-leaf-splitting}{\small A process of doing leaf-splitting operations.}}
\end{figure}

\begin{figure}[h]
\centering
\includegraphics[width=16cm]{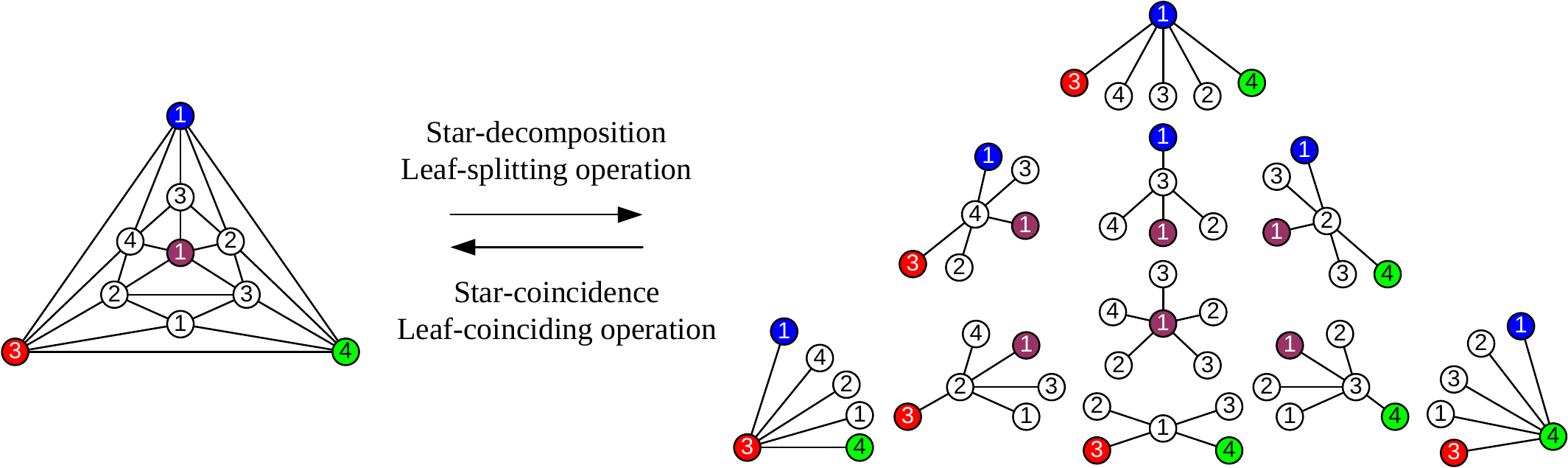}
\caption{\label{fig:4-color-star-system-example}{\small An example for understanding the $4$-coloring ice-flower system.}}
\end{figure}

\begin{thm} \label{thm:4-color-star-system-lattices}
Each graph of the $4$-coloring star-graphic lattice $\textbf{\textrm{L}}(\Delta\overline{\ominus} \textbf{\textrm{I}}_{ce}(PS,M))$ is 4-colored well and planar with each inner face being a triangle and has maximum degree $\leq M$.
\end{thm}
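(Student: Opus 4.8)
The plan is to fix an arbitrary graph $G$ in the lattice $\textbf{\textrm{L}}(\Delta\overline{\ominus} \textbf{\textrm{I}}_{ce}(PS,M))$, written as $G=\Delta\overline{\ominus}^{A}_{(d,i,k)} a_{d,i,k}P_{d}S_{i,j}$, and to argue by induction along the defining sequence of leaf-coinciding operations that assembles $G$ from the chosen multiset of colored stars. Planarity of $G$ together with the property that every inner face is a triangle is built into the definition of the operator ``$\Delta\overline{\ominus}$'', which is by definition a series of leaf-coinciding operations performed so that the output is a simple plane graph all of whose inner faces are triangles; the constraint $\sum a_{d,i,k}\geq 3$ guarantees at least one triangular inner face, so that part of the statement is immediate. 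Hence the two substantive claims to establish are: (i) the vertex coloring of $G$ induced by the proper vertex colorings $f_i$ of the stars is a proper coloring valued in $[1,4]$, i.e. $G$ is $4$-colored well; and (ii) $\Delta(G)\leq M$.

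For (i) I would first record that a leaf-coinciding operation $G_1\overline{\ominus}G_2$ creates no new edge and destroys no edge except that two leaf-edges are identified into one. Consequently $E(G)$ is in bijection with the equivalence classes, under the identifications carried out along the sequence, of the edges of the constituent stars. Every star edge has the form $x_0x_j$ with $f_i(x_0)=i$ and $f_i(x_j)\in[1,4]\setminus\{i\}$, hence it is bichromatic; and the color-matching conditions required to perform a leaf-coinciding operation ($f_1(u_1)=f_2(u_2)$, $f_1(v_1)=f_2(v_2)$) force all star edges in one class to share the same unordered pair of endpoint colors. Therefore every edge of $G$ is bichromatic under the induced coloring, whose color set is contained in $[1,4]$; this is exactly the assertion that $G$ is $4$-colored well.

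For (ii) I would isolate the key observation that in a leaf-coinciding operation each of the two identified vertex pairs $\{u_1,u_2\}$ and $\{v_1,v_2\}$ consists of one vertex of degree $1$ and one vertex of degree $\geq 2$ in the current graphs, and that the degree of the merged vertex equals the degree of its degree-$\geq 2$ member. Since the star $K_{1,d}$ has a unique vertex of degree $\geq 2$, namely its center, an induction on the number of leaf-coinciding steps then shows that the equivalence class of any vertex of $G$ absorbs at most one star center: if it absorbs the center of a star $P_{d}S_{i,k}$ then the vertex has degree $d$ in $G$, and otherwise the class is a single star leaf and the vertex has degree $1$. As $d\leq M$ by the construction of $\textbf{\textrm{I}}_{ce}(PS,M)$, every vertex of $G$ has degree at most $M$, so $\Delta(G)\leq M$. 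Non-vacuity of the lattice (for instance, it contains $K_3$ and more generally genuine plane triangulations, as illustrated in Fig.\ref{fig:4-color-star-system-example}) is checked by exhibiting an explicit assembling order.

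The main obstacle I expect is bookkeeping rather than conceptual depth: one must set up the inductions in (i) and (ii) so that they correctly handle steps in which the ``growing graph'' side of a leaf-coinciding operation is a vertex that has already absorbed several leaves or a center, and verify there that degree is genuinely preserved and that no loop or multi-edge is ever created — the latter holding because such a degeneracy would require two already-identified vertices to be adjacent, contradicting the standing simplicity of the partial graphs. A secondary point needing care is that the color-matching conditions constrain which pairs of stars may be glued (in particular, two stars with equal center color cannot be leaf-coincided); but since we only need the properties of graphs that are \emph{already} in the lattice, we may take the gluing sequence as given and never have to prove that an arbitrary assembling is feasible.
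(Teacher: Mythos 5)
Your proposal is correct; note that the paper states Theorem \ref{thm:4-color-star-system-lattices} without giving any proof, treating it as immediate from its definitions. Your argument --- planarity and triangular inner faces being built into the operation $\Delta\overline{\ominus}$, properness of the induced $[1,4]$-coloring coming from the color-matching condition on coincided vertices together with the bichromatic star edges, and $\Delta(G)\leq M$ coming from the fact that a leaf-coinciding merge pairs a leaf with a non-leaf and preserves the degree of the non-leaf member, so every vertex degree is some star size $d\leq M$ or equals $1$ --- is exactly the routine verification the paper leaves implicit, not a genuinely different route.
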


\begin{problem}\label{qeu:GTC}
The $4$-coloring star-graphic lattice helps us to ask for the following questions:
\begin{asparaenum}[\textrm{GTC}-1. ]
\item \textbf{Find} various sublattices of the general star-graphic lattice $\textrm{\textbf{L}}(\overline{\ominus} \textbf{F}^c_{star\Delta})$ by graph colorings/labellings.
\item Let $\textrm{\textbf{L}}(\overline{\ominus} \textbf{G}_{race}(T))$ be the set of graceful trees of $\textrm{\textbf{L}}(\overline{\ominus} \textbf{F}^c_{star\Delta})$. \textbf{Does} $\textrm{\textbf{L}}(\overline{\ominus} \textbf{G}_{race}(T))$ contains every tree with maximum degree $\Delta$?
\item A \emph{gracefully total coloring} $f$ of a tree $T$ is a proper total coloring $f:V(T)\cup E(T)\rightarrow [1,M]$ such that such that $f(x)=f(y)$ for some pair of vertices $x,y\in V(T)$, $f(uv)=|f(u)-f(v)|$ for each edge $uv\in E(T)$, and $f(E(T))=[1,|V(T)|-1]$. A star-graphic sublattice $\textrm{\textbf{L}}(\overline{\ominus} \textbf{T}_{grace})\subset \textrm{\textbf{L}}(\overline{\ominus} \textbf{F}^c_{star\Delta})$ is formed by all trees admitting gracefully total colorings in $\textrm{\textbf{L}}(\overline{\ominus} \textbf{F}^c_{star\Delta})$. \textbf{Does} each tree $H$ with maximum degree $\Delta$ correspond a tree $H'\in \textrm{\textbf{L}}(\overline{\ominus} \textbf{T}_{grace})$ such that $H\cong H'$?
\item \textbf{Is} every planar graph isomorphic to a group of 4-colored planar graphs of the $4$-coloring star-graphic lattice $\textbf{\textrm{L}}(\overline{\ominus} \textbf{\textrm{I}}_{ce}(PS,M))$?
\item \textbf{Find} connection between a planar graphic lattice $\textbf{\textrm{L}}(\textbf{\textrm{T}}^r\bigtriangleup F_{\textrm{inner}\bigtriangleup})$ defined in (\ref{eqa:planar-lattice}) and a $4$-coloring star-graphic lattice $\textbf{\textrm{L}}(\overline{\ominus} \textbf{\textrm{I}}_{ce}(PS,M))$ defined in (\ref{eqa:4-color-star-system-lattices}).
\item \textbf{Tree and planar graph authentication.} As known, each tree $T$ admits a proper $k$-coloring $f$ with $k\geq 2$, we do a vertex-coinciding operation to some vertices $x,y$ with $f(x)=f(y)$, such that $w=x\odot y$ and $f(w)=f(x)=f(y)$, and the resultant graph $T^*$ obtained by doing a series of vertex-coinciding operations to those vertices of $T$ colored with the same color is just a $k$-colorable graph with some particular properties. For instance, $T^*$ is a $k$-colorable Euler's graph holding $\chi(T^*)=k$, or a $k$-colorable planar graph with each inner face to be a triangle, or a $k$-colorable Hamilton graph, \emph{etc.} \textbf{Characterize} a $4$-colorable tree $T$ (as a public key) such that $T^*$ (as a private key) is a $4$-colorable planar graph, or a $4$-colorable planar graph with each inner face to be a triangle, that is, $T$ admits $4$-colorable graph homomorphism to $T^*$.

\item \textbf{Tree topological authentication.} By Theorem \ref{thm:trees-vs-connected-graphs}: Each connected graph $G$ corresponds a tree $T$ based on the vertex-splitting operation and the leaf-splitting operation, denoted as $(\wedge, \prec )(G)=T$. Conversely, a tree $T$ can produce a connected graph $G$ by means of the vertex-coinciding operation and the leaf-coinciding operation, or a mixed operation of them, for the convenience of statement, we write this process as $(\odot, \overline{\ominus})(T)=G$. Suppose that a tree $T$ (as a public key) admits a $W$-type coloring $f_T$ and $G$ (as a private key) admits a $W$-type coloring $h_G$ too, if $(\odot, \overline{\ominus})(f_T)=h_G$ in the process $(\odot, \overline{\ominus})(T)=G$, we say $(\odot, \overline{\ominus})(T)=G$ to be a \emph{topological authentication}. Oppositely, $(\wedge, \prec )(G)=T$ is called a topological authentication if $(\wedge, \prec )(h_G)=f_T$. For a given public tree $T$, \textbf{find} a $W$-type coloring $f_T$ of $T$, and \textbf{determine} a connected graph $G$ admitting a $W$-type coloring $h_G$ such that $(\odot, \overline{\ominus})(T)=G$ and $(\odot, \overline{\ominus})(f_T)=h_G$ true.\qqed
\end{asparaenum}
\end{problem}

\subsection{Star-type $H$-graphic lattices}

Based on various ice-flower systems $I_{ce}(F_{1,n}D_k)^{2n}_{k=1}$ and $I_{ce}(\mu)^{2n+3}_{k=1}$ with $\mu\in \{E_{1,n}M_k,E_{1,n}D_k$, $G_{1,n}D_k\}$ and the fully vertex-replacing operation, we present the following \emph{$H$-star-graphic lattices} under the fully vertex-replacing operation:
\begin{equation}\label{eqa:fully-replacing-star-saturated-system-1}
\textbf{\textrm{L}}(\textbf{\textrm{T}}\triangleleft F_{p,q})=\left \{H\triangleleft |^{2n+3}_{k=1} a_k\mu:~a_k\in Z^0,~H\in F_{p,q}\right \}
\end{equation}
with $\mu\in \{E_{1,n}M_k,E_{1,n}D_k$, $G_{1,n}D_k\}$ and $\sum^{2n+3}_{k=1} a_k\geq 1$; and
\begin{equation}\label{eqa:fully-replacing-star-saturated-system-2}
\textbf{\textrm{L}}(\textbf{\textrm{T}}\triangleleft F_{p,q})=\left \{H\triangleleft |^{2n}_{j=1} a_jF_{1,n}D_j:~a_j\in Z^0,~H\in F_{p,q}\right \}
\end{equation} where $\sum^{2n}_{j=1} a_j\geq 1$. The $H$-star-graphic lattices enable us to obtain a result:
\begin{thm} \label{thm:two-saturated-graphs}
Any $\Delta$-saturated graph $G$ grows or induces to another ($\Delta$-saturated) graph $G'$ such that they admit the same $W$-type proper total colorings. (see examples shown in Fig.\ref{fig:1-edge-diff-replacing-2} and Fig.\ref{fig:2-edge-diff-re-grow})
\end{thm}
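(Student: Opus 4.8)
The plan is to argue constructively, using the $H$-star-graphic lattices in \eqref{eqa:fully-replacing-star-saturated-system-1} and \eqref{eqa:fully-replacing-star-saturated-system-2} together with the colored leaf-splitting and leaf-coinciding operations, which preserve a prescribed $W$-type proper total coloring by construction (see the definitions of ``$G_1\overline{\ominus}G_2$'' and ``$xy\prec$''). First I would make precise what ``grows or induces'' means here: given a $\Delta$-saturated graph $G$ admitting a $W$-type proper total coloring $f$ (with $W$ one of the four types of Definition \ref{defn:combinatoric-definition-total-coloring}), I want to produce $G'$ either by (i) applying a sequence of colored leaf-coinciding operations and fully vertex-replacing operations with members of the appropriate ice-flower system $I_{ce}(\cdot)$, so $G'\supseteq G$ in an operational sense, or (ii) applying a sequence of colored leaf-splitting operations, so $G'$ is ``induced'' from $G$; in both directions the resulting graph carries the induced coloring $f'$ which is again a $W$-type proper total coloring because each elementary operation was defined precisely to transport the coloring constraints.

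The key steps, in order: (1) Observe that since $G$ is $\Delta$-saturated, every vertex $u$ has $\deg_G(u)=1$ or $\deg_G(u)=\Delta(G)$; decompose $G$ by a series of colored leaf-splitting operations into a disjoint union of colored stars $K_{1,\Delta}^{(k)}$ and pendant edges, each star admitting the restriction of $f$, which is exactly a star belonging to the relevant ice-flower system (edge-magic $E_{1,n}M_k$, edge-difference $E_{1,n}D_k$, graceful-difference $G_{1,n}D_k$, or felicitous-difference $F_{1,n}D_k$), because the magic/difference constant $k_i$ of $f$ on each edge of $G$ equals the constant used to define those systems. (2) This realizes $G=\overline{\ominus}^{m}_{j}a_j\mu$ for an appropriate $\mu$, i.e. $G$ lies in the corresponding star-graphic lattice $\textbf{\textrm{L}}(\overline{\ominus}\textbf{\textrm{I}}_{ce}(\cdot))$. (3) To ``grow'' $G$: pick a leaf-edge $uv$ of $G$ with $\deg_G(u)\ge 2$ and leaf-coincide it with a leaf-edge of a fresh star $\mu'$ from the same ice-flower system whose colors match ($f(u)=f(x)$, $f(v)=f(y)$, $f(uv)=f(xy)$) — such a star exists because the ice-flower systems were built to contain, for each admissible color triple, a matching colored star; the resulting $G'$ is larger, still $\Delta$-saturated once we close up pendant leaves appropriately, and still $W$-colored by $f\,\overline{\ominus}\,f'$. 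Alternatively, use the fully vertex-replacing operation of \eqref{eqa:fully-replacing-star-saturated-system-1} on a $\Delta$-vertex to attach another ice-flower star, as in Example \ref{exa:edge-difference-replacing}. (4) To ``induce'' a smaller $G'$: simply stop partway through the leaf-splitting decomposition, i.e. split only some edges; the partially split graph is $\Delta$-saturated (or can be made so by merging pendant leaves) and inherits the coloring. (5) Conclude that in either direction $G$ and $G'$ admit the same $W$-type proper total colorings, since each elementary step was coloring-preserving, and cite Figures \ref{fig:1-edge-diff-replacing-2} and \ref{fig:2-edge-diff-re-grow} as illustrations.

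The main obstacle I expect is step (3)–(4): guaranteeing that after attaching (or removing) a star one still has a \emph{proper} total coloring and that the graph remains $\Delta$-saturated — in particular, after a leaf-coinciding operation the new non-leaf vertex may acquire two incident edges of the same color, forcing a recoloring of some pendant edge and its leaf (exactly the ``if $\theta_j(uv)=\theta_j(uw)$ then delete/recolor'' caveat appearing throughout the ice-flower constructions). Handling this cleanly requires checking that the ice-flower systems $I_{ce}(E_{1,n}M_k)^{2n+3}_{k=1}$, $I_{ce}(E_{1,n}D_k)^{2n+3}_{k=1}$, $I_{ce}(G_{1,n}D_k)^{2n+3}_{k=1}$, $I_{ce}(F_{1,n}D_k)^{2n}_{k=1}$ are large and rich enough that a compatible colored star is always available (or, where they are ``strongly'' coloring systems, that the leaf-coinciding operation is unconditionally admissible). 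The degree and diameter hypotheses implicit in $\Delta$-saturation, together with the bounds already proved ($\chi''_{emt}(G)\le 3(n-1)$, $\chi''_{edt}(K_{n,n})=3n$, $\chi''_{fdt}(H)\le 1+2\Delta(H)$ for $\Delta$-saturated trees, etc.), should supply exactly the room needed, so the obstacle is bookkeeping rather than a genuine conceptual gap; the rest of the argument is a routine induction on the number of elementary operations.
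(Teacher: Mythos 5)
You should first note that the paper contains no proof of Theorem \ref{thm:two-saturated-graphs}: it is stated right after the $H$-star-graphic lattices (\ref{eqa:fully-replacing-star-saturated-system-1})--(\ref{eqa:fully-replacing-star-saturated-system-2}) and is supported only by Example \ref{exa:edge-difference-replacing} and the pictures in Fig.\ref{fig:1-edge-diff-replacing-2} and Fig.\ref{fig:2-edge-diff-re-grow}, where ``grows'' means applying the fully vertex-replacing operation $\overline{\triangleleft}$ to leaves of $G$ with stars of a matching ice-flower system (same magic/difference constant, matching colours), and ``induces'' means vertex-coinciding same-coloured vertices, as $H''$ is obtained from $H'$. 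Your route through the leaf-splitting/leaf-coinciding pair is the same constructive idea in a slightly different dress, and it is sound; in fact the obstacle you flag in steps (3)--(4) largely evaporates, because the colored leaf-coinciding operation is only admissible when the two leaf-edges already carry matching colours, and then the merged edge inherits at each centre exactly the colour its deleted leaf-edge had there, so properness is preserved automatically and no recolouring is forced; likewise partial leaf-splitting keeps every vertex of degree $1$ or $\Delta$ and transports the colouring verbatim, so it does yield a valid ``induced'' $G'$.

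The one genuine overstatement is in your step (1): the colored stars obtained by fully leaf-splitting an arbitrary $\Delta$-saturated $G$ need not lie in the particular finite systems $I_{ce}(E_{1,n}M_k)^{2n+3}_{k=1}$, $I_{ce}(E_{1,n}D_k)^{2n+3}_{k=1}$, $I_{ce}(G_{1,n}D_k)^{2n+3}_{k=1}$, $I_{ce}(F_{1,n}D_k)^{2n}_{k=1}$, whose colours are pinned to specific ranges and specific constants, whereas $G$ comes with its own constant $k$ and colour bound $M$ under Definition \ref{defn:combinatoric-definition-total-coloring}. This costs nothing to repair: either work with the general $L$-magic systems $I_{ce}(LE_{1,n}M_k)$, $I_{ce}(LE_{1,n}D_k)$, $I_{ce}(LG_{1,n}D_k)$, $I_{ce}(LF_{1,n}D_k)$, which by construction contain every star realising a prescribed constant, or simply build the one star you need by hand (choose its leaf colours and solve $c_f(x_0x_j)=k$ for each new edge, keeping the edge colours at the centre pairwise distinct); since the theorem imposes no bound on $M$, this is bookkeeping only. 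With that adjustment your induction on the number of elementary operations proves the statement, and it is in fact more explicit than what the paper offers, which leaves the theorem at the level of the illustrated examples.
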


\begin{figure}[h]
\centering
\includegraphics[width=16cm]{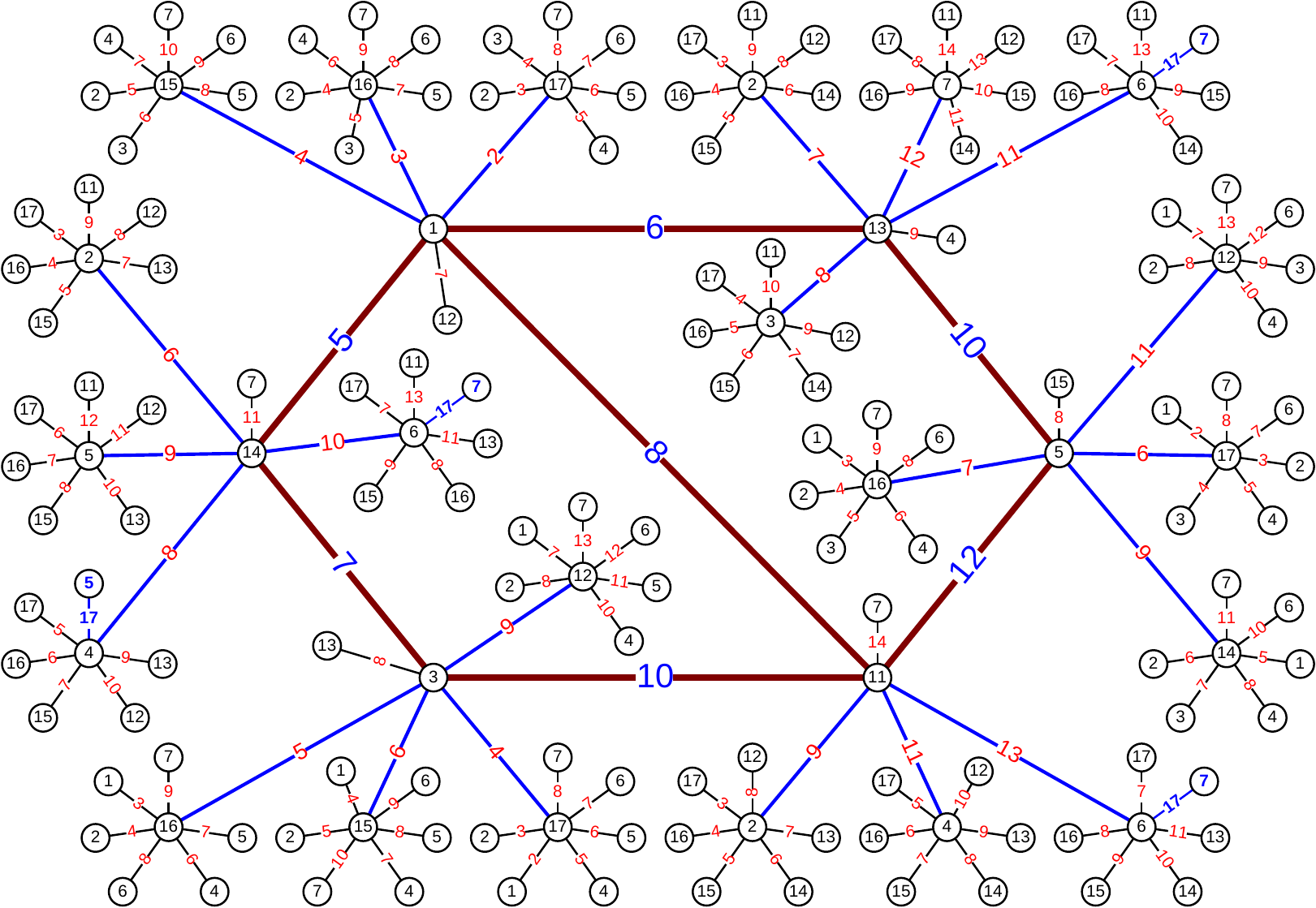}
\caption{\label{fig:1-edge-diff-replacing-2}{\small A $\Delta$-saturated graph $H^*$ admits an edge-difference proper total coloring $g$ holding $g(xy)+|g(x)-g(y)|=18$ for each edge $xy\in E(H^*)$.}}
\end{figure}

\begin{figure}[h]
\centering
\includegraphics[width=16.2cm]{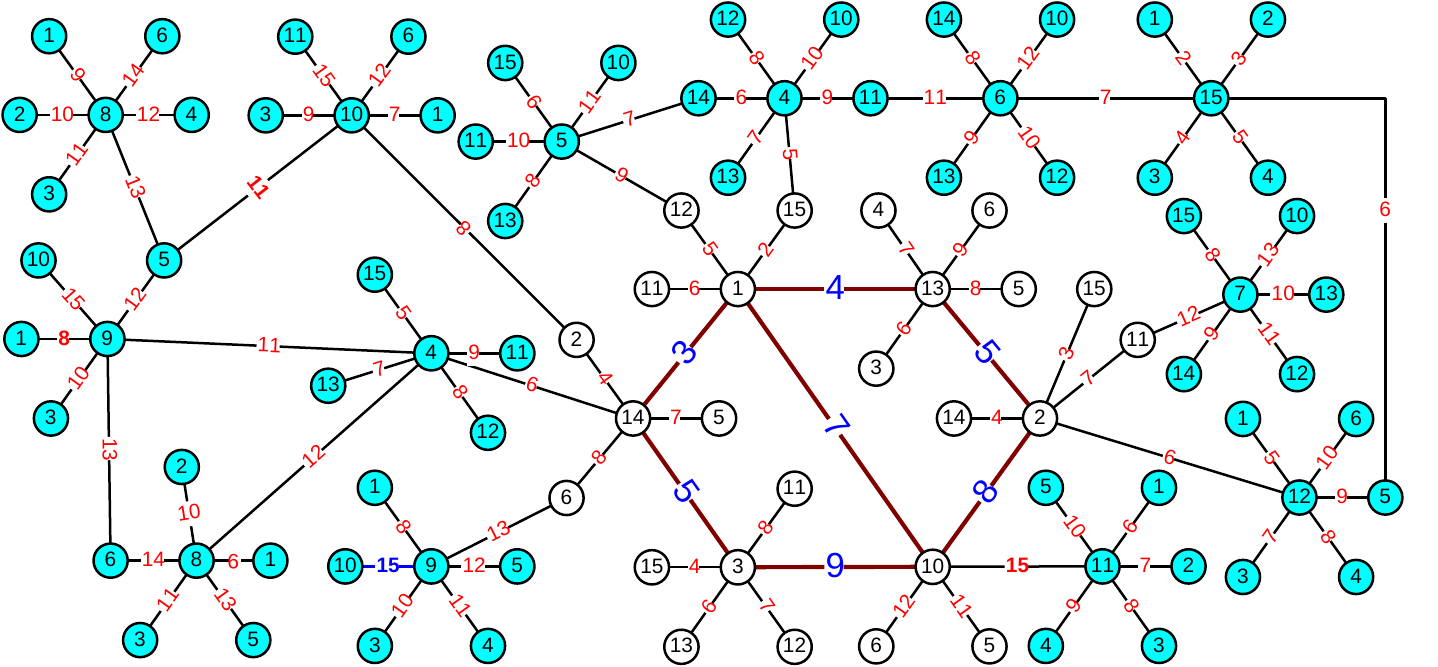}
\caption{\label{fig:2-edge-diff-re-grow}{\small A graph $G$ obtained from the growing $H'$ shown in Fig.\ref{fig:2-edge-diff-replacing} by the ice-flower system shown in Fig.\ref{fig:Dsaturated-edge-difference-even}, and $G$ admits an edge-difference proper total coloring.}}
\end{figure}


\section{Colorings and theorems for graphic lattices}

If showing different graphic bases for a given graphic lattice, or computing or estimating the exact cardinality of a (colored) graphic lattice, we may meet the Graphic Isomorphism Problem, which is NP-hard as known. We will introduce techniques for dealing with some problems from various graphic lattices.

\subsection{Isomorphism, graph homomorphism}

\begin{thm}\label{thm:vertex-splitting-isomorphic}
Let both $G$ and $H$ be graphs with $n$ vertices. If there is a bijection $f: V (G)\rightarrow V (H)$ such that $G\wedge u\cong H\wedge f(u)$ for each vertex $u \in V (G)$ with $\textrm{deg}_G(u)\geq 2$, then $G\cong H$.
\end{thm}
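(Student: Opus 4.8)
The plan is to reconstruct the edge set of $G$ (up to the bijection $f$) by a voting/majority argument over the vertex-split cards $G\wedge u$, in the spirit of reconstruction-type results but adapted to the vertex-splitting operation defined in \textbf{Oper}-1. First I would fix the bijection $f:V(G)\rightarrow V(H)$ and, abusing notation, identify $V(H)$ with $V(G)$ via $f$, so that the hypothesis reads $G\wedge u\cong H\wedge u$ for every $u$ with $\textrm{deg}_G(u)\geq 2$. The key observation is that the operation $G\wedge u$ only splits the vertex $u$ itself; every other vertex $w\neq u$ keeps its identity and its full neighbourhood, and every edge $xy$ of $G$ with $u\notin\{x,y\}$ survives untouched in $G\wedge u$. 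Hence, for a pair of distinct vertices $x,y$, the adjacency of $x$ and $y$ can be detected inside $G\wedge u$ for any third vertex $u\notin\{x,y\}$ with $\textrm{deg}_G(u)\geq 2$ (and symmetrically for $H$). If $G$ has at least three vertices of degree $\geq 2$, then for every pair $\{x,y\}$ there is such a witness $u$, and the isomorphism $G\wedge u\cong H\wedge u$ forces $xy\in E(G)\iff xy\in E(H)$; this would give $E(G)=E(H)$ directly.

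The main obstacle is that the isomorphism $G\wedge u\cong H\wedge u$ is only abstract — it need not respect the identification of vertices coming from $f$. So I cannot immediately say "the edge $xy$ survives in $G\wedge u$ at the same pair of vertices as in $H\wedge u$." To handle this I would argue by degree sequences and counting. First, since $G\wedge u$ and $H\wedge u$ are isomorphic, $G$ and $H$ have the same number of edges (note $|E(G\wedge u)|=|E(G)|$, as vertex-splitting is edge-preserving) and, by examining the multiset of degrees of $G\wedge u$ versus that of $G$, one recovers that $G$ and $H$ have the same degree sequence — the split replaces $\textrm{deg}_G(u)=d$ by two values $k,d-k$ and leaves all others fixed, and the set of achievable "card degree sequences" pins down the original degree sequence of $G$. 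In particular the sets $V_{\geq 2}(G)=\{u:\textrm{deg}_G(u)\geq 2\}$ and $V_{\geq 2}(H)$ have the same size. Then I would restrict attention to edges incident to leaves and to the remaining low-degree vertices as separate, easier cases, and for the generic case use the following sharper fact: for fixed $u$, the graph $G\wedge u$ determines $G$ uniquely (contract the two split vertices back, which is well-defined because $N(x')\cap N(x'')=\emptyset$ guarantees the two new vertices are non-adjacent and their neighbourhoods partition $N(u)$). So $G$ is recoverable from a single card $G\wedge u$ together with the knowledge of which two vertices were created — and that combinatorial data can be read off as the unique pair of non-adjacent vertices whose identification yields a graph on $n-1$ vertices with the correct degree sequence.

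Putting this together, the step I expect to be genuinely delicate is showing that the abstract isomorphism $\varphi_u:G\wedge u\to H\wedge u$ can be upgraded to one compatible with $f$ on $V(G)\setminus\{u\}$, or else bypassing that need entirely. The cleanest route is probably the latter: prove $E(G)=E(H)$ as edge sets on the common vertex set by the witness argument above, after first disposing of the small cases $|V_{\geq 2}(G)|\leq 2$ by hand. If $|V_{\geq 2}(G)|\leq 1$ then $G$ is a star or a disjoint union of a star with isolated vertices / a matching-like graph, and the single card (or the degree sequence alone) determines it; if $|V_{\geq 2}(G)|=2$, say these vertices are $a,b$, then the cards $G\wedge a$ and $G\wedge b$ between them expose every edge except possibly $ab$, and the edge count (already known to agree with $H$) decides whether $ab\in E(G)$. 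For $|V_{\geq 2}(G)|\geq 3$ the witness argument applies to every pair, so $E(G)=E(H)$ and hence $G\cong H$ via $f$. I would also remark, as the excerpt's framing invites, that this is a vertex-splitting analogue of the Kelly--Ulam reconstruction conjecture (\textbf{Iso}-3), but unlike that conjecture it is provable because a split card retains far more structure than a deleted-vertex card — essentially all of $G$ minus one local choice — and that asymmetry is exactly what drives the proof.
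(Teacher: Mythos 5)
You correctly isolate the real obstacle---the hypothesis only supplies an abstract isomorphism $\varphi_u: G\wedge u\rightarrow H\wedge f(u)$, with no guarantee that it agrees with $f$ on $V(G)\setminus\{u\}$---but your plan never overcomes it. The concluding ``witness'' step, which infers $xy\in E(G)\Leftrightarrow f(x)f(y)\in E(H)$ whenever some third vertex $u$ of degree at least $2$ is split, is exactly the inference that abstractness blocks, and the degree-sequence/counting interlude does not restore it: knowing that $G$ and $H$ have the same size and (granting your argument) the same degree sequence says nothing about which edges correspond under the fixed bijection $f$. The auxiliary claim that $G$ is recoverable from a single card because the split pair is ``the unique pair of non-adjacent vertices whose identification yields a graph with the correct degree sequence'' is both circular (that degree sequence is what you are trying to pin down) and false in general, since a card may admit several admissible coincidings producing non-isomorphic graphs; note also the card $G\wedge u$ has $n+1$ vertices, so coinciding gives a graph on $n$ vertices, not $n-1$. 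The small cases expose the same issue more sharply: when $V_{\geq 2}(G)=\emptyset$ there are no cards at all and the hypothesis is vacuous (take $G=2K_2$ and $H=P_3\cup K_1$), so ``the degree sequence alone determines it'' cannot be salvaged; any complete argument must settle how much $f$-compatibility the hypothesis is intended to carry.

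For comparison, the paper's own proof is not a reconstruction argument: it fixes a vertex $u$, splits it along a partition $N_G(u)=N_G(u')\cup N_G(u'')$, splits $f(u)$ in $H$ along the $f$-images of that partition, and then, invoking $G\wedge u\cong H\wedge f(u)$, coincides $u'$ with $u''$ and $f(u')$ with $f(u'')$ to land back on $G$ and $H$, using that splitting and coinciding are mutually inverse operations. In effect the paper reads the hypothesis as already providing an $f$-compatible correspondence between the two splittings---precisely the point you flagged as delicate---whereas your proposal attempts the harder, unconditional reading of the statement and, as outlined, does not close it.
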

\begin{proof} Let $N_G(u)=\{x_1,x_2,\dots ,x_{d_u}\}$ be the neighbor set of a vertex $u$ of $G$, where $d_u=\textrm{deg}_G(u)\geq 2$. We vertex-split $u$ into two vertices $u'$ and $u''$ such that $N_G(u)$ is cut into two disjoint subsets $N_G(u')=\{x_1,x_2,\dots ,x_{k}\}$ and $N_G(u'')=\{x_{k+1},x_{k+2},\dots ,x_{d_u}\}$ with $1<k<d_u$. By the help of a bijection $f: V (G)\rightarrow V (H)$, we vertex-split $f(u)$ into two vertices $f(u')$ and $f(u'')$, the consequent $N_G(u)$ is cut into two subsets two disjoint subsets $N_H(f(u'))=\{y_1,y_2,\dots ,y_{k}\}$ and $N_H(f(u''))=\{y_{k+1},y_{k+2},\dots ,y_{d_u}\}$ with $1<k<d_u$ and $y_j=f(x_j)$ for $j\in [1,d_u]$, such that $N_H(u)=N_H(f(u'))\cup N_H(f(u''))$.

According to the hypotheses of the theorem, $G\wedge u\cong H\wedge f(u)$, as we do the vertex-coinciding operation to $G\wedge u$ and $H\wedge f(u)$ respectively by vertex-coinciding $u'$ with $u''$ into one vertex $u$, and vertex-coinciding $f(u')$ with $f(u'')$ into one vertex $f(u)$, then we get $G\cong H$.
\end{proof}

\begin{defn} \label{defn:isomorphic-coloring-equals}
Let $G$ be a totally colored graph with a $W_G$-type proper total coloring $f_G$, and let $H$ be a totally colored graph with a $W_H$-type proper total coloring $g_H$. We say $G=H$ if there is a bijection $\varphi: V (G)\rightarrow V (H)$ such that (i) $G\wedge u\cong H\wedge \varphi(u)$ for each vertex $u \in V (G)$ with $\textrm{deg}_G(u)\geq 2$; and (ii) for each $w \in V(G)\cup E(G)$, there exists $w' \in V(H)\cup E(H)$ holding $g_H(w')=f_G(w)$ when $w'=\varphi(w)$.\qqed
\end{defn}

\begin{thm} \label{thm:trees-vs-connected-graphs}
Each connected $(p,q)$-graph $G$ corresponds a set of trees of $q+1$ vertices under the vertex-splitting operation, or another set of trees of $2q-p+1$ vertices under the leaf-splitting operation, as well as a set of trees by a mixed operation of vertex-splitting operation and leaf-splitting operation.
\end{thm}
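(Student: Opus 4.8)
The plan is to establish the three claims of Theorem~\ref{thm:trees-vs-connected-graphs} separately, each by an explicit counting argument built on the vertex-splitting and leaf-splitting operations introduced in \textbf{Oper}-1, \textbf{Oper}-2 and \textbf{Oper}-4, together with the basic fact that a connected acyclic graph on $m$ vertices has exactly $m-1$ edges. The key observations driving the proof are the edge-invariance of the leaf-splitting operation (it adds one vertex and one edge while removing one edge, i.e.\ it adds one leaf to each endpoint after deleting $uv$, so $|E|$ is unchanged in the relevant bookkeeping but $|V|$ increases by two and $|E|$ by one when we account correctly) and the edge-invariance of the vertex-coinciding operation, hence the edge-preservation of its inverse, the vertex-splitting operation. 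Thus vertex-splitting keeps $q$ fixed and increases $p$ by one each time it is applied at a vertex of degree $\geq 2$, while leaf-splitting at an internal edge increases both $p$ by two and $q$ by one.

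First I would handle the vertex-splitting claim. Starting from a connected $(p,q)$-graph $G$, repeatedly apply the vertex-splitting operation $G\wedge x$ at any vertex $x$ lying on a cycle, splitting its neighbourhood so as to keep the graph connected (always possible: choose the partition $N(x')\cup N(x'')$ so that a spanning structure survives). Each such operation strictly decreases the cycle rank $q-p+1$ by one, since $q$ is unchanged and $p$ increases by one. When no cycle remains, the resulting graph is a connected acyclic graph, i.e.\ a tree, and its vertex count is $p+(q-p+1-0)=q+1$, matching the count of independent cycles that had to be destroyed. Because at each step there are in general several admissible choices (which cycle vertex to split, how to partition the neighbourhood), $G$ corresponds to a \emph{set} of such trees; this is exactly the set whose elements satisfy $(\wedge,\prec)$-type relations with $G$. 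For the leaf-splitting claim I would argue analogously: apply the leaf-splitting operation $G(uv\prec)$ only to edges $uv$ with $\textrm{deg}(u)\geq 2$ and $\textrm{deg}(v)\geq 2$, each time reducing the number of such ``internal'' edges while, as noted, adding two vertices. One checks that the process terminates precisely at a tree (no internal edge of an acyclic connected graph can be split without violating the degree condition in the limit), and a short induction on the number of operations gives the final vertex count $2q-p+1$. The mixed-operation claim then follows by interleaving the two procedures: run vertex-splittings to kill the cycle rank and leaf-splittings wherever convenient, noting each operation individually preserves connectivity and acyclicity-monotonicity, so the terminal graph is again a tree.

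The main obstacle I anticipate is verifying that the operations can \emph{always} be carried out while preserving connectivity, and that the process genuinely terminates at a tree rather than stalling at some intermediate graph. For vertex-splitting at a cycle vertex this requires a careful choice of the neighbourhood bipartition $N(x')\cup N(x'')$, using that $x$ lies on a cycle so that at least two of its incident edges belong to distinct cycles or to a cycle plus the rest of the graph; I would formalize this by picking a cycle through $x$ and keeping its two cycle-edges on opposite sides of the split, which simultaneously breaks that cycle and keeps everything connected. For leaf-splitting the subtlety is that after several splits the degree condition $\textrm{deg}(u)\geq 2$, $\textrm{deg}(v)\geq 2$ may fail even though cycles remain; here one must first exhaust vertex-splittings (or at least reach a tree-like stage) — so in fact the ``pure leaf-splitting'' statement should be read as: there is \emph{some} sequence of leaf-splittings reaching a tree, and the cleanest route is to first observe that every connected $(p,q)$-graph with $q\geq p$ has an internal edge, split it, and induct. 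The count $2q-p+1$ is then forced by the invariant $p'-2q'$ being preserved modulo the right shift, which I would pin down once the termination argument is in place. The remaining bookkeeping (that the three vertex counts $q+1$, $2q-p+1$, and the mixed values all arise) is routine arithmetic with the two invariants and needs no further ideas.
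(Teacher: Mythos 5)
Your overall strategy is the same as the paper's: destroy the $q-p+1$ independent cycles one at a time, using vertex-splitting at a cycle vertex (which fixes $q$ and raises $p$ by one, so the terminal tree has $q+1$ vertices) or leaf-splitting (which raises $p$ by two and $q$ by one), counting the operations by the cycle rank. The vertex-splitting half of your argument is fine and in fact more careful than the paper's, which merely asserts connectivity; your choice of the bipartition $N(x')\cup N(x'')$ separating the two cycle edges at the split vertex is exactly the right way to break the cycle while staying connected.

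Two points in the leaf-splitting half do not hold as written. First, you propose to split an arbitrary ``internal'' edge, i.e.\ any $uv$ with $\deg(u)\geq 2$ and $\deg(v)\geq 2$; but such an edge may be a bridge (the middle edge of a path, for instance), and leaf-splitting a bridge disconnects the graph --- the two new pendant edges do not reconnect the components, and no later leaf-splitting ever can, so your induction step ``every connected graph with $q\geq p$ has an internal edge, split it'' can stall at a disconnected graph that never becomes a tree. The degree condition in the definition does not exclude bridges; the paper avoids this by splitting only edges of a chosen cycle, which is also what makes the process terminate exactly when the cycle rank reaches zero. Second, the vertex count you assert, $2q-p+1$, is not what your own bookkeeping yields: you need $q-p+1$ leaf-splittings to kill the cycle rank, each adding two vertices, so the resulting tree has $p+2(q-p+1)=2q-p+2$ vertices (and $2q-p+1$ edges). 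This is precisely the count the paper's proof computes, so the ``invariant'' you defer to cannot deliver $2q-p+1$ vertices; the mismatch with the number quoted in the theorem lies in the statement itself, not in the construction, and your proposal simply inherits it without checking.
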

\begin{proof} Let $G$ be a connected $(p,q)$-graph. Take a cycle $C=x_1x_2\cdots x_mx_1$ with $m\geq 3$.

First, we vertex-split $x_1$ of $C$ into two $x',x''$ to get a connected graph $H=G\wedge x_1$, clearly, $H$ contains no the cycle $C$ and $|V(H)|=|V(G)|+1$ and $|E(H)|=|E(G)|$. Notice there are $q-p+1$ vertex-splitting operations on $G$, so a tree obtained by doing $q-p+1$ vertex-splitting operations has just $q+1$ vertices.

Second, we do a leaf-splitting operation to an edge $x_ix_{i+1}$, the resultant graph $H=G(x_ix_{i+1}\prec )$ is connected and holds $|E(H)|=|E(G)|+1$ and $|V(H)|=|V(G)|+2$. After $q-(p-1)$ times process as the above, we get a tree $T$ of $p+2[q-p+1]$ vertices.
\end{proof}

\begin{rem}\label{rem:isomorphic-coloring-equals}
For $G_1=(G\wedge u)-u'$ and $H_1=(H\wedge f(u))-f(u')$ in the proof of Theorem \ref{thm:vertex-splitting-isomorphic}, if $G_1 \cong H_1$, we do not claim $G\cong H$, since it is the Kelly-Ulam's Reconstruction Conjecture, 1942 \cite{Bondy-2008}. Similarly, we can present Colored Reconstruction Conjecture: $(G-x)(=)(H-\theta(x))$ for each vertex $x\in V(G)$ under a bijection $\theta:V(G)\rightarrow V(H)$, then $G=H$.

As a graph $G$ admits a $W_i$-type proper total coloring $f$ and a $W_j$-type proper total coloring $g$, if there is a linear transformation $\theta$ such that $g(w)=\theta(f(w))$ for each $w \in V(G)\cup E(G)$, then we say that two colorings $f$ and $g$ are equivalent to each other.

Based on Theorem \ref{thm:vertex-splitting-isomorphic} and equivalent colorings, we can construct many interesting graphic lattices. Especially, we can pay attention on various colorings of trees based on Theorem \ref{thm:trees-vs-connected-graphs}.\qqed
\end{rem}

In \cite{Bing-Yao-Hongyu-Wang-graph-homomorphisms-2020}, the authors introduce infinite graph homomorphisms as follows:

\begin{thm}\label{thm:sequence-graph-homomorphisms}
There are infinite graphs $G^*_n$ forming a sequence $\{G^*_n\}$, such that $G^*_n\rightarrow G^*_{n-1}$ is really a graph homomorphism for $n\geq 1$.
\end{thm}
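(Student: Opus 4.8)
The statement is a pure existence/construction claim, so the plan is to exhibit an explicit infinite family $\{G^*_n\}_{n\geq 0}$ together with the required maps, using a single observation that is already implicit in Example \ref{exa:edge-difference-replacing}: the vertex-coinciding operation induces a graph homomorphism. Precisely, if $H$ arises from $G$ by coinciding two vertices $x',x''$ with $N(x')\cap N(x'')=\emptyset$ into one vertex $x$, then the map $\varphi:V(G)\to V(H)$ given by $\varphi(x')=\varphi(x'')=x$ and $\varphi(w)=w$ otherwise sends each edge of $G$ to an edge of $H$, since $N(x)\supseteq N(x')\cup N(x'')$; hence $\varphi:G\to H$ is a graph homomorphism. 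Equivalently, for any vertex-split $G\wedge x$ there is a canonical homomorphism $G\wedge x\to G$ collapsing the two split vertices.

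First I would take the simplest base graph $G^*_0=K_2$ and set $G^*_n=G^*_{n-1}\cup K_2$, so $G^*_n=(n+1)K_2$ is the disjoint union of $n+1$ edges. Define $\varphi_n:G^*_n\to G^*_{n-1}$ to be the map fixing all but one copy of $K_2$ and folding that remaining copy onto one of the copies already present (a vertex-coinciding of the two endpoint pairs). Each $\varphi_n$ preserves adjacency, hence is a graph homomorphism $G^*_n\to G^*_{n-1}$, and because the number of components of $G^*_n$ strictly increases with $n$ the $G^*_n$ are pairwise non-isomorphic, so the sequence is genuinely infinite. This already proves the theorem. If one prefers the $G^*_n$ to be connected — the case relevant to topological coding — I would instead use a growing connected bipartite family, e.g.\ $G^*_n=P_{n+2}$ or $G^*_n=C_{2n+2}$, or a sequence of caterpillars/haired-cycles built via the leaf-coinciding operation as in Example \ref{exa:2020star-graphic}; since each such $G^*_n$ is bipartite with an edge it admits a homomorphism to $K_2$, and $K_2$ maps into any edge of $G^*_{n-1}$, so composing yields $G^*_n\to G^*_{n-1}$ (and one can also write $\varphi_n$ down explicitly as a folding retraction of $P_{n+2}$ onto a copy of $P_{n+1}$ for suitable parity of $n$).

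The one point that needs care — and what I would flag as the main obstacle — is keeping the sequence infinite while staying inside the vertex-splitting/coinciding machinery: a single fixed graph can be vertex-split only finitely often before it degenerates into a matching $qK_2$ (the splitting ``budget'' is bounded, roughly by $2q-p$), so the homomorphisms cannot all be produced by repeatedly splitting one graph. The remedy, as above, is to let the graphs grow at every step — appending fresh edge-components in the disjoint-union construction, or enlarging a connected family — so that there is always room for one more nontrivial coinciding/folding that delivers the next homomorphism. Everything else (verifying that the folding maps send edges to edges, and that distinct indices give non-isomorphic graphs) is routine and can be checked directly from the definitions; no deeper structural input, such as Theorem \ref{thm:trees-vs-connected-graphs}, is strictly required, though that theorem can be used to dress the same idea in a more ``topological coding'' flavour by running the construction through trees obtained by vertex-splitting.
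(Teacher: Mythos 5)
Your proposal is correct, but it proves the theorem by a genuinely different route than the paper. The paper's proof is constructive in a very specific way: it starts from a triangle $G_0=\Delta x_1x_2x_3$, and at step $n$ adds, for every edge joining a vertex coloured $n-1$ to one coloured $n-2$, a new vertex of colour $n$ joined to both ends; it then sets $G^*_n=G_n\cup K_1$ with an isolated vertex $z_0$ and takes the map fixing all old vertices and sending every degree-two vertex of $G^*_n$ to $z_0$, concluding with the inverse-limit statement $\{G^*_n\}\rightarrow G^*_0$ and the remark that the triangle seed may be replaced by any connected graph. Your argument instead exhibits a minimal witness — $G^*_n=(n+1)K_2$ with a folding of the extra edge onto an existing one, or, for connected examples, paths/even cycles folded (or routed through $K_2$) onto their predecessors — together with the reusable observation that vertex-coinciding induces a homomorphism. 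What each buys: your route gives an airtight, line-by-line verifiable check that edges map to edges (the weak existence statement really needs no more), whereas the paper's route produces the structured, network-flavoured growing family it wants for the inverse-limit discussion and for later substitution of the seed graph; note that in the paper's map the edges incident with a collapsed degree-two vertex are sent to pairs containing the isolated vertex $z_0$, so the homomorphism condition there requires a more charitable reading than your explicit foldings do. One small caution on your side remark: coinciding $x',x''$ with $N(x')\cap N(x'')=\emptyset$ yields a homomorphism only if $x'x''\notin E(G)$ (otherwise the image is a loop), but your main constructions do not rely on that observation, so the proof stands.
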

\begin{proof} First, we present an algorithm as follows: $G_0$ is a triangle $\Delta x_1x_2x_3$, we use a coloring $h$ to color the vertices of $G_0$ as $h(x_i)=0$ with $i\in [1,3]$.

Step 1: Add a new $y$ vertex for each edge $x_ix_j$ of the triangle $\Delta x_1x_2x_3$ with $i\neq j$, and join $y$ with two vertices $x_i$ and $x_j$ of the edge $x_ix_j$ by two new edges $yx_i$ and $yx_j$, the resulting graph is denoted by $G_1$, and color $y$ with $h(y)=1$.

Step 2: Add a new $w$ vertex for each edge $uv$ of $G_1$ if $h(u)=1$ and $h(v)=0$ (or $h(v)=1$ and $h(u)=0$), and join $y$ respectively with two vertices $u$ and $v$ by two new edges $wu$ and $wv$, the resulting graph is denoted by $G_2$, and color $w$ with $h(w)=2$.

Step $n$: Add a new $\gamma$ vertex for each edge $\alpha\beta$ of $G_{n-1}$ if $h(\alpha)=n-1$ and $h(\beta)=n-2$ (or $h(\alpha)=n-2$ and $h(\beta)=n-1$), and join $\gamma$ respectively with two vertices $\alpha$ and $\beta$ by two new edges $\gamma\alpha$ and $\gamma\beta$, the resulting graph is denoted by $G_n$, and color $\gamma$ with $h(\gamma)=n$.

Second, we construct a graph $G^*_n=G_n\cup K_1$ with $n\geq 0$, where $K_1$ is a complete graph of one vertex $z_0$. For each $n\geq 1$, there is a mapping $\theta_n:V(G^*_n)\rightarrow V(G^*_{n-1})$ in this way: $V(G^*_n\setminus V^n_{(2)})=V(G^*_{n-1}\setminus {V(K_1)})$, each $x\in V^n_{(2)}$ holds $\theta_n(x)=z_0$, where $V^n_{(2)}$ is the set of vertices of degree two in $G^*_n$. So $G^*_n\rightarrow G^*_{n-1}$ is really a graph homomorphism. We write this case by $\{G^*_n\}\rightarrow G^*_0$, called as a \emph{graph homomorphism sequence}.
\end{proof}

The notation $\{G^*_n\}\rightarrow G^*_0$ can be written as
\begin{equation}\label{eqa:inverse-limit}
\lim_{\infty \rightarrow 0}\{G^*\}^{\infty}_{0}=G^*_0
\end{equation} called an \emph{inverse limitation}. There are many graph homomorphism sequence $\{G^*_n\}$ holding $G^*_n\rightarrow G^*_{n-1}$, i.e., $\{G^*_n\}\rightarrow G^*_0$ in network science. For example, we can substitute the triangle $G_0$ in the proof of Theorem \ref{thm:sequence-graph-homomorphisms} by any connected graph.

\subsection{Colorings for graphic lattices}

In general, we have the following definition of splitting $\epsilon$-colorings:
\begin{defn}\label{defn:splitting-(odd)graceful-coloring}
\cite{Yao-Zhao-Zhang-Mu-Sun-Zhang-Yang-Ma-Su-Wang-Wang-Sun-arXiv2019} A connected $(p,q)$-graph $G$ admits a coloring $f:S \rightarrow [a,b]$, where $S\subseteq V(G)\cup E(G)$, and there exists $f(u)=f(v)$ for some distinct vertices $u,v\in V(G)$, and the edge label set $f(E(G))$ holds an $\epsilon$-condition, so we call $f$ a \emph{splitting $\epsilon$-coloring} of $G$. \qqed
\end{defn}

For the splitting $\epsilon$-colorings of graphs, we have the following examples:

\begin{defn}\label{defn:splitting-(odd)graceful-coloring}
\cite{Yao-Mu-Sun-Zhang-Yang-Wang-Wang-Su-Ma-Sun-2019} Suppose that a connected $(p,q)$-graph $G$ admits a coloring $f:V(G) \rightarrow [0,q]$ (resp. $[0,2q-1]$), such that $f(u)=f(v)$ for some pairs of vertices $u,v\in V(G)$, and the edge label set $f(E(G))=\{f(uv)=|f(u)-f(v)|: ~uv\in E(G)\}=[1,~q]$ (resp. $[1,2q-1]^o$), then we call $f$ a \emph{splitting gracefully total coloring} (resp. splitting odd-gracefully total coloring). \qqed
\end{defn}

\begin{defn}\label{defn:6C-labelling}
\cite{Yao-Sun-Zhang-Mu-Sun-Wang-Su-Zhang-Yang-Yang-2018arXiv} A total labelling $f:V(G)\cup E(G)\rightarrow [1,p+q]$ for a bipartite $(p,q)$-graph $G$ is a bijection and holds:

(i) (e-magic) $f(uv)+|f(u)-f(v)|=k$;

(ii) (ee-difference) each edge $uv$ matches with another edge $xy$ holding $f(uv)=|f(x)-f(y)|$ (or $f(uv)=2(p+q)-|f(x)-f(y)|$);

(iii) (ee-balanced) let $s(uv)=|f(u)-f(v)|-f(uv)$ for $uv\in E(G)$, then there exists a constant $k'$ such that each edge $uv$ matches with another edge $u'v'$ holding $s(uv)+s(u'v')=k'$ (or $2(p+q)+s(uv)+s(u'v')=k'$) true;

(iv) (EV-ordered) $\min f(V(G))>\max f(E(G))$ (or $\max f(V(G))<\min f(E(G))$, or $f(V(G))\subseteq f(E(G))$, or $f(E(G))\subseteq f(V(G))$, or $f(V(G))$ is an odd-set and $f(E(G))$ is an even-set);

(v) (ve-matching) there exists a constant $k''$ such that each edge $uv$ matches with one vertex $w$ such that $f(uv)+f(w)=k''$, and each vertex $z$ matches with one edge $xy$ such that $f(z)+f(xy)=k''$, except the \emph{singularity} $f(x_0)=\lfloor \frac{p+q+1}{2}\rfloor $;

(vi) (set-ordered) $\max f(X)<\min f(Y)$ (or $\min f(X)>\max f(Y)$) for the bipartition $(X,Y)$ of $V(G)$.

We refer to $f$ as a \emph{6C-labelling} of $G$.\qqed
\end{defn}

In order to meet the needs of graphic lattices, we present a generalization of flawed coloring/labelling in the following definition:

\begin{defn}\label{defn:flawed-w-type-labelling}
$^*$ Suppose that $H_1,H_2,\dots, H_m$ and $T$ are disjoint graphs, and $H=\bigcup^m_{i=1}H_i$. A $W$-type coloring means: a $W$-type coloring, or a splitting $W$-type labelling.

(1) If there exists a graph operation ``$(\diamond)$'' on $T$ and $H$ such that the resultant graph $T(\diamond)H$ is a connected graph admitting a $W$-type coloring $f$, then $f$ is called a \emph{flawed $W$-type coloring} of $H$, and $f$ is called a \emph{$W$-type joining coloring} of $T$.

(2) If there is a graph operation ``$(\ast)$'' on $H$ such that the resultant graph $(\ast)H$ is a connected graph admitting a $W$-type coloring $g$, we call $g$ a \emph{flawed $W$-type coloring} of $H$.\qqed
\end{defn}

Here, we generalize ``$T(\diamond)H$'' to a set of connected graphs ``$T(\diamond)^m_{k=1}a_kH_k$'' with $a_k\in Z^0$ and $H_k\in \textbf{\textrm{H}}_f=(H_1,H_2,\dots, H_m)$, such that each connected graph $T(\diamond)^m_{k=1}a_kH_k$ admits a $W$-type coloring, where the base $\textbf{\textrm{H}}_f$ admits a flawed $W$-type coloring $f$, and $T$ is a forest or a tree. Immediately, the following set
\begin{equation}\label{eqa:coloring-labelling-graphic-lattice}
\textbf{\textrm{L}}(F_{\textrm{orest}}(\diamond)\textbf{\textrm{H}}_f)=\{T(\diamond)^m_{k=1}a_kH_k,~a_k\in Z^0,~H_k\in \textbf{\textrm{H}}_f,T\in F_{\textrm{orest}}\}
\end{equation} is called a \emph{$W$-type coloring $(\diamond)$-graphic lattice} with $\sum^m_{k=1}a_k\geq 1$. As $a_1=a_2=\cdots =a_m=1$, we call $\textbf{\textrm{L}}(F_{\textrm{orest}}(\diamond)\textbf{\textrm{H}}_f)$ a \emph{standard $W$-type coloring $(\diamond)$-graphic lattice}, and rewrite it as $\textbf{\textrm{L}}_{\textbf{\textrm{stand}}}(F_{\textrm{orest}}(\diamond)\textbf{\textrm{H}}_f)$.

Similarly, the graph $(\ast)H$ in Definition \ref{defn:flawed-w-type-labelling} enables us to define a \emph{$W$-type coloring $(\ast)$-graphic lattice}
\begin{equation}\label{eqa:coloring-labelling-graphic-lattice-11}
\textbf{\textrm{L}}((\ast)\textbf{\textrm{H}}_f)=\{(\ast)^m_{k=1}a_kH_k,~a_k\in Z^0,~H_k\in \textbf{\textrm{H}}_f\}
\end{equation} with $\sum^m_{k=1}a_k\geq 1$, and a \emph{standard $W$-type coloring $(\ast)$-graphic lattice} $\textbf{\textrm{L}}_{\textbf{\textrm{stand}}}((\ast)\textbf{\textrm{H}}_f)$ when $a_1=a_2=\cdots =a_m=1$.

An interesting and important study is to build up connections between different $W$-type coloring $(\diamond)$-graphic lattices $\textbf{\textrm{L}}(F_{\textrm{orest}}(\diamond)\textbf{\textrm{H}}_f)$, or $W$-type coloring $(\ast)$-graphic lattices $\textbf{\textrm{L}}((\ast)\textbf{\textrm{H}}_f)$.

\begin{defn}\label{defn:L-multiple-type-coloring-labelling}
$^*$ Suppose that a connected graph $G$ admits a coloring $f$. If there is a spanning subgraph $T$ of $G$, such that $f$ is just a $W$-type coloring of $T$, we call $f$ an \emph{inner $W$-type coloring} of $T$, and say $G$ admits a \emph{coloring including a $W$-type coloring}. Moreover, if there are $L$ spanning subgraphs $H_i,H_2,\dots ,H_L$ with $E(H_i)\neq E(H_j)$ for distinct $i,j\in [1,L]$ and $L\geq 2$ such that $f_i~(=f)$ is an \emph{inner $W_i$-type coloring} of $H_i$ with $i\in [1,L]$, we call $f$ a \emph{coloring including $L$-multiple colorings} of $G$, and furthermore $f$ is an \emph{$(W_i)^L_1$-type coloring} of $G$ if $E(G)=\bigcup^L_{i=1} E(H_i)$.\qqed
\end{defn}

It is allowed that $W_i=W_j$ for $i\neq j$ in Definition \ref{defn:L-multiple-type-coloring-labelling}. There are simple results on the coloring including $L$-multiple colorings as follows:

\begin{thm} \label{thm:multiple-coloring-labellings}
$^*$ According to Definition \ref{defn:L-multiple-type-coloring-labelling}, we have:
\begin{asparaenum}[(1) ]
\item Any complete graph $K_n$ contains a spanning tree $T$ admitting a graceful labelling $f$, then $K_n$ contains another spanning tree $T^c$ admitting a graceful labelling $f^c$, such that $f^c$ is the dual labelling of $f$ and $E(T)\neq E(T^c)$.
\item Suppose that a connected graph $G$ contains a caterpillar $T$, such that deletion of all leaves of $T$ results in a path $P=x_1x_2\cdots x_m$, and $x_1$ is adjacent with a leaf $u$. If the degree $\textrm{deg}_G(u)=|V(G)|-1$, then $G$ admits an $(X_i)^L_1$-type coloring with $L\geq 2$.
\item There are infinite graphs admitting $(X_i)^L_1$-type labellings with $L\geq 2$.
\end{asparaenum}
\end{thm}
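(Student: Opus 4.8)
\textbf{Proof proposal for Theorem \ref{thm:multiple-coloring-labellings}.}

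The plan is to treat the three parts separately, since each relies on a different construction. For part (1), I would start from the hypothesis that $K_n$ contains a spanning tree $T$ admitting a graceful labelling $f$, so $f:V(T)\to[0,n-1]$ is a bijection with induced edge labels $\{|f(u)-f(v)|:uv\in E(T)\}=[1,n-1]$. The dual labelling $f^c(x)=\max f+\min f-f(x)=(n-1)-f(x)$ is again a bijection onto $[0,n-1]$ and preserves every edge label, hence $f^c$ is a graceful labelling of the tree $T^c$ obtained by relabelling, i.e.\ $T^c$ is the spanning tree of $K_n$ whose edge set is $\{f^c{}^{-1}(i)f^c{}^{-1}(j): f^{-1}(i)f^{-1}(j)\in E(T)\}$. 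The only thing to check is $E(T)\neq E(T^c)$: if $T$ is not fixed by the involution $x\mapsto(n-1)-x$ on the label set (equivalently $T$ is not ``self-dual''), the two edge sets differ, and one can always choose such a $T$ for $n\geq 3$ — e.g.\ take $T$ a path or a star whose graceful labelling is manifestly not symmetric. I would phrase the claim as ``contains another spanning tree $T^c$'' so that the content is exactly this duality observation together with the fact that both $T$ and $T^c$ embed in $K_n$ (automatic, since $K_n$ contains every graph on $n$ vertices).

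For part (2), the idea is to exhibit two distinct spanning subgraphs $H_1,H_2$ of $G$ on which the same coloring $f$ restricts to recognizable $W$-type colorings. Given the caterpillar $T\subset G$ with spine $P=x_1x_2\cdots x_m$ and a leaf $u$ adjacent to $x_1$ with $\deg_G(u)=|V(G)|-1$, I would first take $H_1=T$ itself and choose $f$ to be a set-ordered graceful (or graceful) labelling of the caterpillar — caterpillars are graceful, and moreover they are ``perfect $W$-type labelling graphs'' as recalled in the excerpt. Then, exploiting that $u$ is adjacent to \emph{every} other vertex of $G$, I would build $H_2$ as the star $K_{1,|V(G)|-1}$ centered at $u$ together with enough spine edges of $P$ to make it spanning; the restriction of $f$ to $H_2$ should be arranged (by picking $f$ carefully in the first place, or by composing with a linear transformation as in Remark \ref{rem:isomorphic-coloring-equals}) to be a $W_2$-type coloring, e.g.\ a gracefully total coloring or a harmonious-type coloring of the resulting spanning subgraph. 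Since $E(H_1)=E(T)$ contains the leaf edges of the caterpillar while $E(H_2)$ contains all edges at $u$, we have $E(H_1)\neq E(H_2)$, and $f$ is an inner $W_i$-type coloring of each $H_i$, giving the coloring including $2$-multiple colorings. The subtle point here is that the \emph{same} function $f$ must simultaneously be a valid $W_1$-type coloring of $T$ and a valid $W_2$-type coloring of the $u$-star subgraph; I would handle this by choosing the labels along the spine and leaves so that both the differences $|f(x)-f(y)|$ (for the tree edges) and the relevant edge-function values (for the star edges) land in the prescribed sets, which is where most of the routine computation lives.

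For part (3), I would invoke the constructions already available in the paper: the ice-flower systems and the star-graphic lattices produce infinite families of graphs on which several proper total colorings of different $W$-types coexist. Concretely, Theorem \ref{thm:4-ice-flower-coloring} gives infinitely many trees admitting a $4$-ice-flower proper total coloring, whose edge set partitions into four classes on which the coloring behaves as an edge-magic, an edge-difference, a felicitous-difference and a graceful-difference coloring respectively; restricting to each class (together with a spanning subtree containing it) realizes the $(X_i)^L_1$-type condition with $L=4$. Alternatively, any tree $T$ with $\Delta\geq4$ and $D(T)\geq5$ from that family works, and there are infinitely many such trees, giving the claimed infinitude. I would also remark that $(2)$ already yields infinitely many examples by letting the caterpillar (and hence $|V(G)|$) grow.

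The main obstacle is part (2): making one explicit labelling $f$ serve as an inner $W_1$-type coloring of the caterpillar \emph{and} an inner $W_2$-type coloring of the spanning star-plus-spine subgraph at the same time. The duality in (1) is essentially free, and (3) follows by quoting the ice-flower machinery, but (2) requires an honest coordinated choice of vertex labels; I expect to spend the bulk of the argument there, possibly simplifying by taking $W_1=W_2$ (allowed by the remark following Definition \ref{defn:L-multiple-type-coloring-labelling}) so that one only needs the single coloring to be simultaneously compatible with two different spanning edge sets, which a careful placement of the hub vertex $u$'s label (at $0$ or at the extreme value) makes tractable.
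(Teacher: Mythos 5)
Your proposal is correct and, for the substantive parts, takes essentially the paper's route. For (2), the paper does exactly what you isolate as the main obstacle: it writes down an explicit zig-zag set-ordered graceful labelling $\gamma$ of the caterpillar in which the distinguished leaf $u=y_{1,d_1}$ adjacent to $x_1$ receives the extreme label $|V(T)|-1$; since the remaining vertices then carry a bijection onto $[0,|V(T)|-2]$, the same vertex labels automatically give a graceful labelling of the star centred at $u$, which is already a spanning subgraph because $\mathrm{deg}_G(u)=|V(G)|-1$ — so your idea of adjoining spine edges "to make it spanning" is unnecessary (and would spoil gracefulness); your simplification $W_1=W_2$ with $u$'s label at an extreme is precisely the paper's argument, and all that you left open is the routine explicit caterpillar labelling that the paper supplies. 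For (1) you argue more abstractly: push an arbitrary gracefully labelled spanning tree through the involution $x\mapsto (n-1)-x$ and note $E(T)\neq E(T^c)$ when the labelled tree is not self-dual (a caveat you correctly flag). The paper instead exhibits one concrete pair, the stars $K_{1,n-1}$ centred at the label-$0$ vertex and at the label-$(n-1)$ vertex under the single labelling $g(x_i)=i-1$, and uses the coincidence that the two labellings agree on vertices to conclude that $g$ is an $(X_i)^2_1$-graceful labelling of $K_n$; your version is more general, the paper's avoids any case distinction and directly feeds the multiple-coloring definition. For (3) the paper simply derives it from (1) and (2), which is your fallback remark; the alternative via Theorem \ref{thm:4-ice-flower-coloring} is not needed and is slightly mismatched in type, since (3) speaks of labellings rather than the proper total colorings of Definition \ref{defn:combinatoric-definition-total-coloring}.
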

\begin{proof} Consider the result (1). Let $V(K_n)=\{x_i:i\in [1,n]\}$. We define a labelling $g(x_i)=i-1$ with $i\in [1,n]$, so $g(x_1x_j)=|g(x_1)-g(x_j)|=j-1$ for $j\in [2,n]$, that is, $g$ is a graceful labelling of a subgraph $K_{1,n-1}$ of $K_n$, where $E(K_{1,n-1})=\{x_1x_j:j\in [2,n]\}$. Notice that $K_n$ has another subgraph $K'_{1,n-1}$ having its edge set $E(K'_{1,n-1})=\{x_nx_s:s\in [1,n-1]\}$ and admitting another labelling $h$ defined by setting $h(x_s)=s-1$ with $s\in [1,n]$, and $h(x_nx_t)=|h(x_n)-h(x_t)|=n-1-(t-1)=n-t$ with $t\in [1,n-1]$, so $h$ is a graceful labelling of the subgraph $K'_{1,n-1}$ since $h(E(K'_{1,n-1}))=[1,n-1]$, and $h$ is the dual of $g$. Thereby, we claim that $g$ is an $(X_i)^2_1$-graceful labelling of $K_n$, since $g=h$.

(2) Let $L(x_k)=\{y_{k,i}:i\in [1,d_k]\}$ be the set of leaves adjacent with each vertex $x_k$ of the path $P=x_1x_2\cdots x_m$ of a caterpillar $T$ of $G$, $k\in [1,m]$. We define a graceful labelling $\gamma$ of the caterpillar $T$ in the way: $\gamma(x_1)=0$, $\gamma(y_{2,i})=i$ with $i\in [1,d_2]$; $\gamma(x_3)=1+d_2$, $\gamma(y_{4,i})=i+(1+d_2)$ with $i\in [1,d_4]$; go on in this way, without loss of generality, $m=2p$, so $\gamma(x_{2p-1})=p-1+\sum ^{p-1}_{j=1}d_{2j}$, $\gamma(y_{2p,i})=i+p+\sum ^{p-1}_{j=1}d_{2j}$ with $i\in [1,d_{2p}]$. Next, $\gamma(x_{2p})=1+p+\sum ^{p}_{j=1}d_{2j}$, $\gamma(y_{2p-1,i})=i+\gamma(x_{2p})$ with $i\in [1,d_{2p-2}]$; $\gamma(x_{2p-2})=1+d_{2p-2}+\gamma(x_{2p})$, $\gamma(y_{2p-3,i})=i+\gamma(x_{2p-2})$ with $i\in [1,d_{2p-4}]$; go on in this way, $\gamma(x_{2})=1+(p-1)+\gamma(x_{2p})+\sum ^{p}_{j=2}d_{2j-1}$, $\gamma(y_{1,i})=i+\gamma(x_{2})$ with $i\in [1,d_{1}]$. Notice $\gamma(y_{1,d_{1}})=|V(T)|-1$.

Assume that $\textrm{deg}_G(y_{1,d_{1}})=|V(T)|-1$, so $G$ contains $K_{1,|V(T)|-1}$ with the center $y_{1,d_{1}}$. We define a graceful labelling $\alpha$ as: $\alpha(y_{1,d_{1}})=|V(T)|-1$, $\alpha(w)=\gamma(w)$ for $w\in V(G)\setminus \{y_{1,d_{1}}\}$. Hence, we defined a total coloring $\beta$ of $G$ by setting $\beta(w)=\alpha(w)$ for $w\in V(K_{1,|V(T)|-1})\cup E(K_{1,|V(T)|-1})$, $\beta(w)=\gamma(w)$ for $w\in V(T)\cup E(T)$, and $\beta(w)\in [1,|V(T)|-1]$ for $w\in E(G)\setminus [E(K_{1,|V(T)|-1})\cup E(T)]$. Clearly, $\beta$ is an $(X_i)^2_1$-graceful labelling of $G$.

The result (3) stands by the results (1) and (2).
\end{proof}

\begin{problem}\label{qeu:Multiple-coloring}
We have questions about the coloring including $L$-multiple colorings in the following:
\begin{asparaenum}[\textbf{\textrm{Mul}}-1. ]
\item It is natural to \textbf{guess}: ``\emph{Every connected graph $G$ admits a coloring including a graceful labelling by Graceful Tree Conjecture}''.
\item For a complete bipartite graph $K_{1,n}$ with vertices $x_0,x_i,\dots ,x_n$ and edge set $E(K_{1,n})=\{x_0x_i:i\in [1,n]\}$, we define a graceful labelling $f$ of $K_{1,n}$ as: $f(x_0)=n$, $f(x_j)=j$ with $j\in [1,n]$, so we have $f(x_0x_j)=|f(x_0)-f(x_j)|=n-j$, and $f(E(K_{1,n}))=[1,n]$. However, we can see $f(x_0)+f(x_0x_j)+f(x_j)=2n$, in other word, $f$ is an edge-magic total labelling of $K_{1,n}$ too. \textbf{Is} this going to happen to other graphs ($\neq K_{1,n}$) else?
\end{asparaenum}
\end{problem}

\subsection{Connections between colorings/labellings}

\begin{thm} \label{thm:connections-several-labellings}
\cite{Yao-Liu-Yao-2017} Let $T$ be a tree on $p$ vertices, and let $(X,Y)$ be its
bipartition of vertex set $V(T)$. For integers $k\geq 1$ and $d\geq 1$, the following assertions are mutually equivalent:

$(1)$ $T$ admits a set-ordered graceful labelling $f$ with $\max f(X)<\min f(Y)$.

$(2)$ $T$ admits a super felicitous labelling $\alpha$ with $\max \alpha(X)<\min \alpha(Y)$.

$(3)$ $T$ admits a $(k,d)$-graceful labelling $\beta$ with
$\beta(x)<\beta(y)-k+d$ for all $x\in X$ and $y\in Y$.

$(4)$ $T$ admits a super edge-magic total labelling $\gamma$ with $\max \gamma(X)<\min \gamma(Y)$ and a magic constant $|X|+2p+1$.

$(5)$ $T$ admits a super $(|X|+p+3,2)$-edge antimagic total labelling $\theta$ with $\max \theta(X)<\min \theta(Y)$.

$(6)$ $T$ admits an odd-elegant labelling $\eta$ with $\eta(x)+\eta(y)\leq 2p-3$ for every edge $xy\in E(T)$.

$(7)$ $T$ admits a $(k,d)$-arithmetic labelling $\psi$ with $\max \psi(x)<\min \psi(y)-k+d\cdot |X|$ for all $x\in X$ and $y\in Y$.

$(8)$ $T$ admits a harmonious labelling $\varphi$ with $\max \varphi(X)<\min \varphi(Y\setminus \{y_0\})$ and $\varphi(y_0)=0$.
\end{thm}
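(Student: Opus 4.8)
## Proof Proposal for Theorem \ref{thm:connections-several-labellings}

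The plan is to establish the eight equivalences by a cyclic chain of implications $(1)\Rightarrow(2)\Rightarrow\cdots\Rightarrow(8)\Rightarrow(1)$, or rather to make $(1)$ the hub: show $(1)\Rightarrow(i)$ and $(i)\Rightarrow(1)$ for each $i\in\{2,\dots,8\}$, since each of the target labellings is most naturally compared with a set-ordered graceful labelling directly. The unifying device is the observation that when $T$ has bipartition $(X,Y)$ with $|X|=s$ and a set-ordered graceful labelling $f$, we may assume $f(X)=\{0,1,\dots,s-1\}$ and $f(Y)=\{s-1+j : x_iy\in E(T)\text{ contributes edge label }j\}$ is forced to be exactly $\{s-1, s, \dots, s-1+(p-1)\}$ intersected appropriately; more precisely, set-orderedness pins down $f(Y)$ as a translate and forces the edge-label set to be $[1,p-1]$ with a rigid structure. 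From $f$ one writes down each of the other labellings by an explicit affine reparametrization of the vertex values on $X$ and on $Y$ separately, then checks the induced edge labels land in the required set.

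First I would record the normalization lemma: if $T$ admits a set-ordered graceful labelling then it admits one with $f(X)=[0,s-1]$, $\min f(Y)=s-1$, $f(Y)\subseteq[s-1,p-1]$, and for every edge $xy$ with $x\in X$, $y\in Y$ one has $f(xy)=f(y)-f(x)$ and $\{f(xy):xy\in E(T)\}=[1,p-1]$. This is standard bookkeeping. Then for each equivalence I would give the affine formulas: for $(2)$ (super felicitous) use $\alpha(x)=f(x)$ on $X$ and $\alpha(y)=f(y)+1$ or the appropriate shift so that edge sums mod $q$ realize $[0,q-1]$; for $(3)$ and $(7)$ (the $(k,d)$-graceful and $(k,d)$-arithmetic labellings) use $\beta(x)=d\cdot f(x)$ on $X$ and $\beta(y)=k+d\cdot(f(y)-\text{shift})$ on $Y$, a construction already implicit in the literature (Gallian's survey); for $(4)$ and $(5)$ (super edge-magic total, super edge-antimagic total) invoke the classical Figueroa-Betancourt--Ichishima--Muntaner-Batle type equivalence between set-ordered/super edge-magic total labellings and the existence of a suitable vertex labelling with consecutive edge sums, matching the magic constant $|X|+2p+1$; for $(6)$ (odd-elegant) use $\eta(x)=2f(x)$ on $X$ and $\eta(y)=2f(y)-1$ on $Y$ so edge labels become odd and the mod $2q$ condition holds, with the inequality $\eta(x)+\eta(y)\le 2p-3$ coming from the range bounds; for $(8)$ (harmonious) the usual trick of deleting/relabelling one vertex $y_0$ of $Y$ to absorb the cyclic defect. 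The converse directions $(i)\Rightarrow(1)$ run the same affine maps backwards, using that the stated side-inequalities (e.g. $\max\alpha(X)<\min\alpha(Y)$, $\beta(x)<\beta(y)-k+d$, etc.) are exactly what is needed to guarantee the inverse reparametrization lands back in a set-ordered graceful labelling.

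The main obstacle I anticipate is not any single implication but the careful alignment of the numerous side-conditions: each of $(2)$--$(8)$ carries its own inequality constraint and its own magic/arithmetic constant, and one must verify that the explicit affine change of variables simultaneously (a) produces a bijection onto the prescribed label set, (b) realizes the prescribed edge-label set or magic constant, and (c) respects the stated order inequality, with no off-by-one errors in the constants like $|X|+2p+1$, $|X|+p+3$, $2p-3$. In particular the passage to harmonious labellings in $(8)$ is genuinely delicate because harmonious labellings on trees permit exactly one repeated label, so the correspondence is not a clean bijection and requires the singular vertex $y_0$; I would treat that case last and most carefully. A secondary subtlety is that several of these equivalences are already theorems scattered across \cite{Yao-Liu-Yao-2017}, \cite{Gallian2019}, and related works, so part of the work is really to assemble and cite rather than reprove — I would be explicit about which links are cited verbatim and which need the normalization lemma above. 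Once the normalization lemma and the table of affine maps are in place, each verification is a routine (if tedious) computation of induced edge labels, so I would present the lemma and the map table in full and then verify one representative nontrivial case, e.g. $(1)\Leftrightarrow(6)$, in detail, indicating that the others are analogous.
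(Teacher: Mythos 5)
Your strategy -- normalize a set-ordered graceful labelling, use assertion (1) as the hub, and pass to each of (2)--(8) by explicit affine reparametrizations of the $X$-side and $Y$-side labels whose invertibility gives the converse directions -- is exactly the strategy of the source: the paper itself states this theorem only as a citation to \cite{Yao-Liu-Yao-2017} and gives no proof of it, but its in-paper analogue, Theorem \ref{thm:bipartite-equivalent-total-coloring} for total colorings, is proved by precisely this hub-and-affine-map scheme, so in approach you and the paper coincide. Two concrete corrections to your sketch: in the normalization, set-orderedness together with the bijection onto $[0,q]$ forces $f(X)=[0,s-1]$ and $f(Y)=[s,q]$, so $\min f(Y)=s$, not $s-1$; and for the felicitous case (and likewise harmonious) a translation such as $\alpha(y)=f(y)+1$ cannot work, because the sums $f(x)+f(y)$ are not pairwise distinct under any shift -- you must reflect one side, e.g.\ $\alpha(x)=\max f(X)+\min f(X)-f(x)$ with $\alpha(y)=f(y)$, so that the edge sums become $\max f(X)+\min f(X)+f(xy)$ and reduce mod $q$ to $[0,q-1]$, which is exactly the map used in the proof of Theorem \ref{thm:bipartite-equivalent-total-coloring}. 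With those repairs (reflections are still affine, so your framework absorbs them), the remaining verifications are the routine constant-chasing you anticipate, including the special treatment of the singular vertex $y_0$ in the harmonious case.
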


We have some results similarly with that in Theorem \ref{thm:connections-several-labellings} about flawed labellings as follows:

\begin{thm} \label{thm:connection-flawed-labellings}
\cite{Yao-Sun-Zhang-Mu-Sun-Wang-Su-Zhang-Yang-Yang-2018arXiv, Yao-Zhang-Sun-Mu-Sun-Wang-Wang-Ma-Su-Yang-Yang-Zhang-2018arXiv, Yao-Zhao-Zhang-Mu-Sun-Zhang-Yang-Ma-Su-Wang-Wang-Sun-arXiv2019} Suppose that $T=\bigcup ^m_{i=1}T_i$ is a forest made by disjoint trees $T_1,T_2,\dots ,T_m$, and $(X,Y)$ be the vertex
bipartition of $T$. For integers $k\geq 1$ and $d\geq 1$, the following assertions are mutually equivalent:
\begin{asparaenum}[F-1. ]
\item $T$ admits a flawed set-ordered graceful labelling $f$ with $\max f(X)<\min f(Y)$;
\item $T$ admits a flawed set-ordered odd-graceful labelling $f$ with $\max f(X)<\min f(Y)$;

\item $T$ admits a flawed set-ordered elegant labelling $f$ with $\max f(X)<\min f(Y)$;

\item $T$ admits a flawed odd-elegant labelling $\eta$ with $\eta(x)+\eta(y)\leq 2p-3$ for every edge $xy\in E(T)$.

\item $T$ admits a flawed super felicitous labelling $\alpha$ with $\max \alpha(X)<\min \alpha(Y)$.

\item $T$ admits a flawed super edge-magic total labelling $\gamma$ with $\max \gamma(X)<\min \gamma(Y)$ and a magic constant $|X|+2p+1$.
\item $T$ admits a flawed super $(|X|+p+3,2)$-edge antimagic total labelling $\theta$ with $\max \theta(X)<\min \theta(Y)$.

\item $T$ admits a flawed harmonious labelling $\varphi$ with $\max \varphi(X)<\min \varphi(Y\setminus \{y_0\})$, $\varphi(y_0)=0$.
\end{asparaenum}
\end{thm}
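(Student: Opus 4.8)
The plan is to reduce the whole equivalence to the already-established tree case. By Definition \ref{defn:flawed-odd-graceful-labelling} (and its extension in Definition \ref{defn:flawed-w-type-labelling}), a flawed set-ordered $W$-type labelling of the forest $T=\bigcup^m_{i=1}T_i$ is nothing but a set-ordered $W$-type labelling $f$ of a connected graph $H=E^*+T$, restricted to $T$. First I would fix, once and for all, the bipartition $(X,Y)$ of $T$ and build $H$ canonically: add exactly $m-1$ edges, each joining an $X$-vertex of one component to a $Y$-vertex of another, so that $H$ is a tree on $p=|V(T)|$ vertices whose unique bipartition restricts on each $T_i$ to the one prescribed by $(X,Y)$; note $|V(H)|=p$, $|E(H)|=p-1$, and $X,Y$ (hence $|X|,|Y|$) are unchanged. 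Thus ``$T$ admits a flawed set-ordered $W$-type labelling with $\max f(X)<\min f(Y)$'' becomes ``the tree $H$ admits a set-ordered $W$-type labelling with $\max f(X)<\min f(Y)$'': the extra edges in $E^*$ carry whatever values the $W$-type condition on $H$ forces, and restriction to $T$ merely discards them, producing the ``flaw''.

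Next I would invoke Theorem \ref{thm:connections-several-labellings} directly on the tree $H$: its items $(1),(2),(4),(5),(6),(8)$ are exactly F-1 (set-ordered graceful), F-5 (super felicitous), F-6 (super edge-magic total with magic constant $|X|+2p+1$), F-7 (super $(|X|+p+3,2)$-edge antimagic total), F-4 (odd-elegant), F-8 (harmonious), and all the parameters match because $|V(H)|=p$ and the bipartition is unchanged. So the only genuinely new links are F-1 $\Leftrightarrow$ F-2 and F-1 $\Leftrightarrow$ F-3. For F-1 $\Rightarrow$ F-2 I would use the standard doubling transform on $H$: writing a set-ordered graceful labelling $f$ with $f(X)$ below $f(Y)$, put $g(x)=2f(x)$ for $x\in X$ and $g(y)=2f(y)-1$ for $y\in Y$; then $g(uv)=|g(u)-g(v)|=2f(uv)-1$ runs over $[1,2(p-1)-1]^o$, and $\max g(X)=2\max f(X)<2\min f(Y)-1=\min g(Y)$, so $g$ is set-ordered odd-graceful. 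The converse divides out the rigid parity pattern forced by a set-ordered odd-graceful labelling; this is the bipartite-graph equivalence recorded in Remark \ref{rem:new-connection-old}, Conn-5 (see \cite{Yao-Cheng-Yao-Zhao-2009, Yao-Liu-Yao-2017}). For F-1 $\Leftrightarrow$ F-3 I would use the analogous modular transform sending a set-ordered graceful labelling to an elegant one (retain the $X$-values, reflect/shift the $Y$-values, and reduce edge sums modulo $|E(H)|+1$), which is reversible again by set-orderedness; it has the same shape as the graceful-to-harmonious step already inside Theorem \ref{thm:connections-several-labellings}.

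Finally, I would observe that each of these transformations is an explicit value-wise affine or modular map whose form depends only on which class ($X$ or $Y$) a vertex lies in. Consequently it commutes with the restriction $H\rightsquigarrow T$: applying a transformation to a set-ordered $W$-type labelling of $H$ and then restricting yields precisely the corresponding flawed set-ordered $W'$-type labelling of $T$, and conversely a flawed $W'$-type labelling of $T$ first extends (by the very definition of ``flawed'') to a $W'$-type labelling of $H$ and then transforms back. Chaining the implications through F-1 therefore closes the loop and gives mutual equivalence of F-1 through F-8.

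I expect the main obstacle to be essentially bookkeeping rather than conceptual, and it is two-fold: (i) making the choice of the connecting edge set $E^*$ and the choice of the bipartition $(X,Y)$ of the forest mutually consistent, so that $H$ is genuinely a \emph{tree} whose bipartition restricts to $(X,Y)$ and Theorem \ref{thm:connections-several-labellings} applies verbatim with the stated constants $|X|+2p+1$ and $|X|+p+3$ — and, if one insists on the more general non-tree closures permitted by Definition \ref{defn:flawed-odd-graceful-labelling}, checking that they reduce to this tree closure; and (ii) verifying the reverse directions of F-1 $\Leftrightarrow$ F-2 and F-1 $\Leftrightarrow$ F-3, i.e. that a set-ordered odd-graceful (resp. elegant) labelling forces exactly the rigid value pattern that the inverse transform consumes. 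Everything else is routine substitution of the tree formulas of \cite{Yao-Liu-Yao-2017, Yao-Sun-Zhang-Mu-Sun-Wang-Su-Zhang-Yang-Yang-2018arXiv, Yao-Zhang-Sun-Mu-Sun-Wang-Wang-Ma-Su-Yang-Yang-Zhang-2018arXiv, Yao-Zhao-Zhang-Mu-Sun-Zhang-Yang-Ma-Su-Wang-Wang-Sun-arXiv2019}.
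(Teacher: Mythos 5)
First, a point of reference: the paper does not prove this theorem at all — it is quoted verbatim from the cited sources \cite{Yao-Sun-Zhang-Mu-Sun-Wang-Su-Zhang-Yang-Yang-2018arXiv, Yao-Zhang-Sun-Mu-Sun-Wang-Wang-Ma-Su-Yang-Yang-Zhang-2018arXiv, Yao-Zhao-Zhang-Mu-Sun-Zhang-Yang-Ma-Su-Wang-Wang-Sun-arXiv2019}, so there is no in-paper argument to compare against. Your overall strategy — treat a flawed labelling of the forest as a set-ordered labelling of a connected closure $H=E^*+T$, then push it around with the value-wise affine/modular transformations of Theorem \ref{thm:connections-several-labellings} (plus the graceful$\leftrightarrow$odd-graceful doubling and the graceful$\leftrightarrow$elegant modular map) — is exactly the route those sources take, and your dictionary between items $(1),(2),(4),(5),(6),(8)$ of Theorem \ref{thm:connections-several-labellings} and F-1, F-5, F-6, F-7, F-4, F-8 is correct, as is the doubling computation for F-1$\Rightarrow$F-2.

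The one genuine gap is your ``canonical $H$'' reduction. By Definition \ref{defn:flawed-odd-graceful-labelling} a flawed $W$-type labelling of $T$ is, by fiat, a $W$-type labelling of \emph{some} connected closure $H'=E^{*\prime}+T$, and the edge set $E^{*\prime}$ is existentially quantified — you do not get to choose it. So the biconditional ``$T$ admits a flawed set-ordered $W$-type labelling $\Longleftrightarrow$ your fixed canonical tree $H$ admits a set-ordered $W$-type labelling'' fails in the forward direction: a flawed labelling living on some other closure gives no labelling of your canonical $H$, and your later step ``a flawed $W'$-type labelling of $T$ first extends, by the very definition of flawed, to a $W'$-type labelling of $H$'' silently conflates that unknown closure with the canonical one. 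The repair is easy and is what the sources actually do: abandon the canonical $H$, take the closure $H'$ that comes packaged with the given flawed labelling, and apply the transformations to $H'$ itself — they are value-wise maps depending only on the $X$/$Y$ class, so they never touch $E^{*\prime}$ and output flawed labellings of $T$ with respect to the same closure. You then still need $H'$ to be a tree (equivalently $|E^{*\prime}|=m-1$) for the stated constants $|X|+2p+1$, $|X|+p+3$ and $2p-3$ to come out right and for Theorem \ref{thm:connections-several-labellings} to apply verbatim; you flagged this yourself, and it is the implicit reading of the theorem, but it should be stated as a hypothesis on the closure rather than derived. With those two adjustments, plus the routine injectivity/set-orderedness checks for the elegant and odd-graceful inverse maps that you already identified, the argument closes.
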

We present some equivalent definitions with parameters $k,d$ for flawed $(k,d)$-labellings.

\begin{thm} \label{thm:flawed-(k,d)-labellings}
\cite{Yao-Sun-Zhang-Mu-Sun-Wang-Su-Zhang-Yang-Yang-2018arXiv, Yao-Zhang-Sun-Mu-Sun-Wang-Wang-Ma-Su-Yang-Yang-Zhang-2018arXiv, Yao-Zhao-Zhang-Mu-Sun-Zhang-Yang-Ma-Su-Wang-Wang-Sun-arXiv2019} Let $T=\bigcup ^m_{i=1}T_i$ be a forest having disjoint trees $T_1,T_2,\dots ,T_m$, and its bipartition $(X,Y)$ of $V(T)$. For some values of two integers $k\geq 1$ and $d\geq 1$, the following assertions are mutually equivalent:
\begin{asparaenum}[\textrm{KD}-1. ]
\item $T$ admits a flawed set-ordered graceful labelling $f$ with $\max f(X)<\min f(Y)$.

\item $T$ admits a flawed $(k,d)$-graceful labelling $\beta$ with $\max \beta(x)<\min \beta(y)-k+d$ for all $x\in X$ and $y\in Y$.

\item $T$ admits a flawed $(k,d)$-arithmetic labelling $\psi$ with $\max \psi(x)<\min \psi(y)-k+d\cdot |X|$ for all $x\in X$ and $y\in Y$.
\item $T$ admits a flawed $(k,d)$-harmonious labelling $\varphi$ with $\max \varphi(X)<\min \varphi(Y\setminus \{y_0\})$, $\varphi(y_0)=0$.
\end{asparaenum}
\end{thm}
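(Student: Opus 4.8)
The plan is to prove the four assertions \textrm{KD}-1 through \textrm{KD}-4 pairwise equivalent by establishing the cyclic chain of implications \textrm{KD}-1 $\Rightarrow$ \textrm{KD}-2 $\Rightarrow$ \textrm{KD}-3 $\Rightarrow$ \textrm{KD}-4 $\Rightarrow$ \textrm{KD}-1, each implication realized by an explicit affine (or affine-plus-modular) recoloring of the vertex labels, in the same style as the proof of Theorem \ref{thm:connections-several-labellings}. First I would unwind the word ``flawed'': by Definition \ref{defn:flawed-odd-graceful-labelling}, a flawed set-ordered graceful labelling of the forest $T=\bigcup^m_{i=1}T_i$ is exactly the restriction to $V(T)$ of a genuine set-ordered graceful labelling of some connected tree $H=E^{*}+T$ obtained by adding $m-1$ joining edges, with $q:=|E(H)|=p-1$. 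So I would reduce every statement about $T$ to the corresponding non-flawed statement about $H$, observing that the side conditions $\max f(X)<\min f(Y)$, $\max\beta(x)<\min\beta(y)-k+d$, and so on force the $X$-labels and $Y$-labels of $H$ to occupy an initial and a terminal block of the relevant label interval, which is precisely the structure a set-ordered labelling of $H$ already has and which the edges of $E^{*}$ do not disturb. Throughout, $k\geq 1$ and $d\geq 1$ are fixed and the transformations below work for any such pair.

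For \textrm{KD}-1 $\Rightarrow$ \textrm{KD}-2 on $H$ with bipartition $(X,Y)$, set-ordered graceful $f$ so that $f(X)=[0,|X|-1]$ and $f(Y)=[|X|,p-1]$, I would define $\beta(x)=d\,f(x)$ for $x\in X$ and $\beta(y)=k-d+d\,f(y)$ for $y\in Y$; then each edge label becomes $|\beta(u)-\beta(v)|=k-d+d(f(y)-f(x))$, which as $f$ runs through its gracefulness hits exactly $\{k,k+d,\dots,k+(q-1)d\}$, while $\max\beta(x)<\min\beta(y)-k+d$ is immediate. For \textrm{KD}-2 $\Rightarrow$ \textrm{KD}-3 I would reflect the $Y$-part, putting $\psi(x)=\beta(x)$ and $\psi(y)=C-\beta(y)$ for an appropriate constant $C$, so that $\psi(u)+\psi(v)$ runs through $\{k,k+d,\dots,k+(q-1)d\}$ and $\max\psi(x)<\min\psi(y)-k+d|X|$; this is the standard difference-to-sum device, valid because $H$ is bipartite and set-ordered. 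For \textrm{KD}-3 $\Rightarrow$ \textrm{KD}-4 I would take $\varphi$ to be $\psi$ reduced modulo the span $qd$ of the progression, using that the $\psi$-sums already form distinct residues, and assign $\varphi(y_0)=0$ to the single exceptional vertex, which is exactly the latitude a harmonious labelling of a tree allows. Finally \textrm{KD}-4 $\Rightarrow$ \textrm{KD}-1 closes the cycle by running the three transformations backwards: the affine steps are bijections of the label intervals, and the modular step is invertible because on a tree with $q$ edges the $qd$-residues used are pairwise distinct, so one recovers a set-ordered graceful labelling of $H$ and hence, by restriction, a flawed one of $T$.

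Each transformation above is a bijection between the entire vertex-label set of $H$ and the target label set, so it acts coherently on the edges of $E^{*}$ as well as on those of $T$; this coherence is the structural fact that licenses passing freely between the connected picture and the forest picture, and it is where the set-ordered inequalities are essential. The main obstacle I anticipate is not the algebra of the affine maps, which is routine bookkeeping, but the ``flawed'' reduction in the reverse direction: one must check that from a flawed $(k,d)$-harmonious (or $(k,d)$-arithmetic, or $(k,d)$-graceful) labelling of $T$ satisfying the stated inequality one can genuinely recover a connected tree supergraph $H=E^{*}+T$ carrying the corresponding non-flawed labelling with all the set-ordered inequalities intact, i.e. that the $m-1$ missing edges can be reinserted so as to realize precisely the complementary block of labels. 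I would settle this by noting that the inequalities pin each vertex class to a prescribed block, whence the multiset of absent edge labels is forced; reinserting edges between the two blocks realizing those labels yields $H$, and showing such an insertion always produces a (simple, connected) tree rather than a multigraph or a disconnected graph is the one point that needs care, which can be handled by the same greedy block-matching argument used in the classical set-ordered graceful constructions.
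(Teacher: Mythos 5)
The paper does not prove this theorem at all: it is quoted, with citations, from the authors' earlier arXiv papers, so there is no in-paper argument to compare yours against. Your route — reduce each ``flawed'' assertion to the corresponding genuine labelling of a connected supergraph $H=E^{*}+T$ and then chain affine/modular relabellings $f\mapsto\beta\mapsto\psi\mapsto\varphi$ in the style of Theorem \ref{thm:connections-several-labellings} — is the natural and almost certainly the intended one, and your explicit maps (e.g. $\beta(x)=d\,f(x)$, $\beta(y)=k-d+d\,f(y)$, giving edge labels $k+d(f(xy)-1)$; the reflection $\psi(y)=2k+(q-1)d-\beta(y)$ for the arithmetic step; reduction modulo the span for the harmonious step) are the standard and correct devices.

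Two points deserve correction. First, the obstacle you flag at the end is not one: by Definition \ref{defn:flawed-odd-graceful-labelling} a flawed $W$-type labelling of $T$ \emph{is} a labelling of some connected $H=E^{*}+T$, so in every implication (including KD-4 $\Rightarrow$ KD-1) the hypothesis already hands you the supergraph $H$, and you simply invert the relabellings on that same $H$; no greedy reinsertion of the missing edges is needed, and proposing to reconstruct $E^{*}$ from the label blocks adds difficulty the definition does not require. Second, you assume $|E^{*}|=m-1$, hence that $H$ is a tree with $q=p-1$ and $f(X)=[0,|X|-1]$; the definition only requires $E^{*}$ to make $H$ connected, so $H$ may have cycles, in which case $f(V(H))$ is a proper subset of $[0,q]$ and $f(X)$ need not be an initial block. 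Your KD-1 $\Rightarrow$ KD-2 map is insensitive to this, but the KD-3 side condition $\max\psi(x)<\min\psi(y)-k+d|X|$ collapses (with your reflection constant) to $\max f(X)<|X|$, and your invertibility argument for the mod-$qd$ step also uses $q=p-1$; so you must either justify that one may take $H$ to be a tree (the reading the cited sources evidently intend) or rework those two steps for general connected $H$. With that repair the proposal is sound.
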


\begin{thm} \label{thm:flawed-graph-labellings}
\cite{Yao-Sun-Zhang-Mu-Sun-Wang-Su-Zhang-Yang-Yang-2018arXiv, Yao-Zhang-Sun-Mu-Sun-Wang-Wang-Ma-Su-Yang-Yang-Zhang-2018arXiv, Yao-Zhao-Zhang-Mu-Sun-Zhang-Yang-Ma-Su-Wang-Wang-Sun-arXiv2019} Let $H=E^*+G$ be a connected graph, where $E^*$ is a set of some edges and $G=\bigcup^m_{i=1}G_i$ is a disconnected graph with disjoint connected graphs $G_1,G_2,\dots, G_m$. About graph labellings, $G$ admits a \emph{flawed $\alpha$-labelling} if $H$ admits one of the following $\alpha$-labellings:
\begin{asparaenum}[\textbf{\textrm{Lab}}-1. ]
\item $\alpha$ is a graceful labelling, or a set-ordered graceful labelling, or graceful-intersection total set-labelling, or a graceful group-labelling.
\item $\alpha$ is an odd-graceful labelling, or a set-ordered odd-graceful labelling, or an edge-odd-graceful total labelling, or an odd-graceful-intersection total set-labelling, or an odd-graceful group-labelling, or a perfect odd-graceful labelling.
\item $\alpha$ is an elegant labelling, or an odd-elegant labelling.
\item $\alpha$ is an edge-magic total labelling, or a super edge-magic total labelling, or super set-ordered edge-magic total labelling, or an edge-magic total graceful labelling.
\item $\alpha$ is a $(k,d)$-edge antimagic total labelling, or a $(k, d)$-arithmetic.
\item $\alpha$ is a relaxed edge-magic total labelling.
\item $\alpha$ is an odd-edge-magic matching labelling, or an ee-difference odd-edge-magic matching labelling.
\item $\alpha$ is a 6C-labelling, or an odd-6C-labelling.
\item $\alpha$ is an ee-difference graceful-magic matching labelling.
\item $\alpha$ is a difference-sum labelling, or a felicitous-sum labelling.
\item $\alpha$ is a multiple edge-meaning vertex labelling.
\item $\alpha$ is a perfect $\varepsilon$-labelling.
\item $\alpha$ is an image-labelling, or a $(k,d)$-harmonious image-labelling.
\item $\alpha$ is a twin $(k,d)$-labelling, or a twin Fibonacci-type graph-labelling, or a twin odd-graceful labelling.
\end{asparaenum}
\end{thm}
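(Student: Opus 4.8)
The plan is to obtain Theorem~\ref{thm:flawed-graph-labellings} as a uniform consequence of Definition~\ref{defn:flawed-w-type-labelling}, treating the fourteen families \textbf{Lab}-1 through \textbf{Lab}-14 as instances of one pattern. First I would fix the set-up: write $H=E^*+G$ with $G=\bigcup^m_{i=1}G_i$, and let $(\ast)$ be the graph operation that adjoins the edge set $E^*$ to $G$ (equivalently, the joining operation $(\diamond)$ of Definition~\ref{defn:flawed-w-type-labelling}(1) taken with empty tree $T$ and new edge set $E^*$), so that $(\ast)G=H$. If $H$ admits an $\alpha$-labelling $f$ for one of the listed $\alpha$, then by Definition~\ref{defn:flawed-w-type-labelling}(2) the labelling $f$ restricted to $V(G)\cup E(G)$ is, \emph{by definition}, a flawed $\alpha$-labelling of $G$. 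Hence the theorem reduces to the bookkeeping claim that each $\alpha$ in the list is a legitimate ``$W$-type labelling'' in the sense required there, and that the resulting flawed notion is consistent with the special cases already fixed in Definition~\ref{defn:flawed-odd-graceful-labelling} and used in Theorems~\ref{thm:connection-flawed-labellings} and~\ref{thm:flawed-(k,d)-labellings}.

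Second, I would organize the verification into blocks that admit a common argument. For the graceful, odd-graceful and elegant families (\textbf{Lab}-1, \textbf{Lab}-2, \textbf{Lab}-3) and for the difference-sum, felicitous-sum and image-labelling families (\textbf{Lab}-10, \textbf{Lab}-13), each defining constraint is a set-condition on the vertex colours together with a modular or additive rule for the edge colour $f(uv)$; such conditions hold at every edge of $H$, hence at every edge of the spanning subgraph $G$, and $f$ restricted to $G$ still realises distinct vertex colours wherever that is demanded, so the reduction above applies verbatim. For the edge-magic families (\textbf{Lab}-4) and the $(k,d)$-edge-antimagic and $(k,d)$-arithmetic families (\textbf{Lab}-5), the magic identity $f(u)+f(uv)+f(v)=\lambda$ (resp.\ membership of this sum in a fixed arithmetic progression) is edge-local and is likewise inherited by $G$. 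For the twin families (\textbf{Lab}-14) one invokes the definitions from \cite{Wang-Xu-Yao-2017-Twin}: the twin partner of $H$ supplies exactly the data needed for $G$ after the same deletion of $E^*$.

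The main obstacle is the block of \emph{matching-type} labellings: the 6C- and odd-6C-labellings (\textbf{Lab}-8, Definition~\ref{defn:6C-labelling}), the odd-edge-magic and ee-difference matching labellings (\textbf{Lab}-7, \textbf{Lab}-9), the relaxed edge-magic labelling (\textbf{Lab}-6), the multiple edge-meaning vertex labelling (\textbf{Lab}-11) and the perfect $\varepsilon$-labelling (\textbf{Lab}-12). For these, the defining data are not purely edge-local: one pairs an edge $uv$ of $H$ with another edge $xy$ of $H$ (for instance $f(uv)=|f(x)-f(y)|$ in condition (ii) of Definition~\ref{defn:6C-labelling}) or with a vertex, and the partner may lie in $E^*$. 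So the honest reading of ``$G$ admits a flawed $\alpha$-labelling'' must be exactly the extension-based one of Definition~\ref{defn:flawed-w-type-labelling}(2) --- ``$G$ carries a labelling extending, via adjunction of some edge set $E^*$, to an $\alpha$-labelling of a connected graph'' --- rather than an intrinsic weakening of $\alpha$ on $G$. The careful part of the write-up is therefore to state this once, to check that for each of the fourteen items the extension-based reading agrees with the ad hoc ``flawed'' conventions already in circulation (so that no conflict arises when one graph is viewed through two different items of the list), and to add the mild nonvacuity remark that $E^*$ may always be chosen as a set of $m-1$ edges joining the components $G_1,\dots,G_m$, whence each item is non-empty whenever the corresponding $\alpha$-labelled connected graphs exist at all.
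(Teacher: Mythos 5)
Your reduction to Definition~\ref{defn:flawed-w-type-labelling} is exactly the intended reading: the paper offers no independent argument for this theorem (it is cited from earlier work and is, in substance, the convention of Definition~\ref{defn:flawed-odd-graceful-labelling} extended verbatim to the fourteen families), so the $\alpha$-labelling of $H=E^*+G$ is itself, by definition, the flawed $\alpha$-labelling of $G$. Your further caveat that the matching-type items (\textbf{Lab}-6 through \textbf{Lab}-12) must be read extension-based---since the matching partner of an edge or vertex may lie in $E^*$---rather than as intrinsic conditions on $G$ alone is the correct and consistent interpretation of the paper's usage.
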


\begin{thm} \label{thm:graph-colorings}
$^*$ Let $H=E^*+G$ be a connected graph, where $E^*$ is a set of some edges and $G=\bigcup^m_{i=1}G_i$ is a disconnected graph with disjoint connected graphs $G_1,G_2,\dots, G_m$. About graph colorings, $G$ admits a \emph{flawed $\beta$-coloring} if $H$ admits one of the following $\beta$-labellings:
\begin{asparaenum}[\textbf{\textrm{Col}}-1. ]
\item $\beta$ is a splitting gracefully total coloring.
\item $\beta$ is a splitting odd-gracefully total coloring.
\item $\beta$ is a splitting elegant coloring.
\item $\beta$ is a splitting odd-elegant total coloring.
\item $\beta$ is a splitting edge-magic total coloring.
\item $\beta$ is an (a perfect) edge-magic proper total coloring.
\item $\beta$ is an (a perfect) edge-difference proper total coloring.
\item $\beta$ is a (perfect) graceful-difference proper total coloring.
\item $\beta$ is a (perfect) felicitous-difference proper total coloring.
\end{asparaenum}
\end{thm}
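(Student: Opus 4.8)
The plan is to read Theorem~\ref{thm:graph-colorings} as the coloring counterpart of Theorem~\ref{thm:flawed-graph-labellings}, and to obtain it directly from Definition~\ref{defn:flawed-w-type-labelling} with the graph operation ``$(\diamond)$'' taken to be edge-addition. Since $H=E^{*}+G$ is connected and is produced from the disconnected $G=\bigcup_{i=1}^{m}G_i$ by joining on the edge set $E^{*}$, the host $H$ is exactly a resultant graph ``$T(\diamond)H$'' in the sense of Definition~\ref{defn:flawed-w-type-labelling}(1). Once one knows that each coloring $\beta$ in the list \textbf{Col}-1 through \textbf{Col}-9 is a ``$W$-type coloring'' in the sense permitted at the head of Definition~\ref{defn:flawed-w-type-labelling} (``a $W$-type coloring, or a splitting $W$-type labelling''), that definition names the $\beta$-coloring $f$ of $H$ a \emph{flawed $\beta$-coloring of $G$}, which is the assertion. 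Note that, exactly as in Definition~\ref{defn:flawed-odd-graceful-labelling}, the flawed coloring is not required to live on $G$ itself but on the connected host $H$, so no restriction-from-$H$-to-$G$ step is needed; indeed, for the graceful and elegant families such a restriction would in general fail to satisfy the relevant $\epsilon$-condition on the edge colour set, which is precisely the reason the ``flawed'' notion was introduced.

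Hence the proof reduces to a uniform case check that each $\beta$ is a $W$-type coloring, which I would present as a single table-like paragraph rather than nine separate arguments. For \textbf{Col}-1 and \textbf{Col}-2 this is Definition~\ref{defn:splitting-(odd)graceful-coloring} verbatim (splitting gracefully, resp. odd-gracefully, total coloring); for \textbf{Col}-3, \textbf{Col}-4, \textbf{Col}-5 it is the analogous splitting coloring in which the graceful (resp. odd-graceful) $\epsilon$-condition on the edge colour set in Definition~\ref{defn:splitting-(odd)graceful-coloring} is replaced by the elegant, odd-elegant, and edge-magic $\epsilon$-condition respectively, so each is a splitting $W$-type labelling and qualifies. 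For \textbf{Col}-6 through \textbf{Col}-9 one cites Definition~\ref{defn:combinatoric-definition-total-coloring}: the edge-magic, edge-difference, felicitous-difference and graceful-difference proper total colorings are defined there through the edge-function $c_f(uv)$ and the constancy requirement $B^{*}_{\alpha}(G,f,M)=0$, and each is by construction a proper total coloring, hence a $W$-type coloring. Combining any of these with the edge-addition operation and the host $H$ yields an instance of Definition~\ref{defn:flawed-w-type-labelling}(1), giving the flawed $\beta$-coloring of $G$.

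The only point requiring a word of care is the parenthetical ``(perfect)'' in \textbf{Col}-6 through \textbf{Col}-9. Following the convention already set in the excerpt, I would not attempt to re-derive $\chi''_{\alpha}(G)=\chi''(G)$ for the disconnected $G$; instead ``perfect'' is carried along as an attribute of the coloring on the connected host, and a \emph{flawed perfect $\beta$-coloring of $G$} is, by the same definitional move, a perfect $\beta$-coloring of $H=E^{*}+G$. With that reading the parenthetical is harmless. The ``main obstacle'', such as it is, is therefore organizational rather than mathematical: fixing once and for all that $(\diamond)=+$, recording the common convention for ``flawed'' and ``perfect'' of a disconnected graph, and then letting the theorem follow item by item from Definitions~\ref{defn:splitting-(odd)graceful-coloring}, \ref{defn:combinatoric-definition-total-coloring} and \ref{defn:flawed-w-type-labelling}, in precise parallel to the way Theorem~\ref{thm:flawed-graph-labellings} is read off from Definition~\ref{defn:flawed-odd-graceful-labelling}.
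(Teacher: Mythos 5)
Your reading is correct and matches the paper's own treatment: the paper offers no separate argument for this theorem, presenting it as an immediate consequence of the flawed-coloring convention (Definitions \ref{defn:flawed-odd-graceful-labelling} and \ref{defn:flawed-w-type-labelling} with the edge-adding operation) together with the catalog of splitting and proper total colorings in Definitions \ref{defn:splitting-(odd)graceful-coloring} and \ref{defn:combinatoric-definition-total-coloring}, exactly as you do item by item. Your remarks on carrying "(perfect)" as an attribute of the coloring on the connected host $H$ and on why no restriction-to-$G$ step is needed are consistent with how the paper uses these notions.
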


For understanding various flawed colorings, we show an example in Fig.\ref{fig:set-ordered-vs-set-ordered}, where a connected graph $(H_1\cup H_2)+E^*_a$ admits a splitting set-ordered gracefully total coloring $f_a$, so $f_a$ is a flawed set-ordered gracefully total coloring of the disconnected graph $(H_1\cup H_2)$; a connected graph $(T_1\cup T_2)+E^*_b$ admits a set-ordered gracefully total coloring $g_b$, which is a flawed set-ordered gracefully total coloring of the disconnected graph $(T_1\cup T_2)$; a connected graph $(H_1\cup H_2)+E^*_c$ admits a splitting set-ordered gracefully total coloring $f_c$, which is a flawed set-ordered gracefully total coloring of the disconnected graph $(H_1\cup H_2)$; and a connected graph $(T_1\cup T_2)+E^*_d$ admits a set-ordered gracefully total coloring $g_d$, which is a flawed set-ordered gracefully total coloring of the disconnected graph $(T_1\cup T_2)$.

\begin{figure}[h]
\centering
\includegraphics[width=16cm]{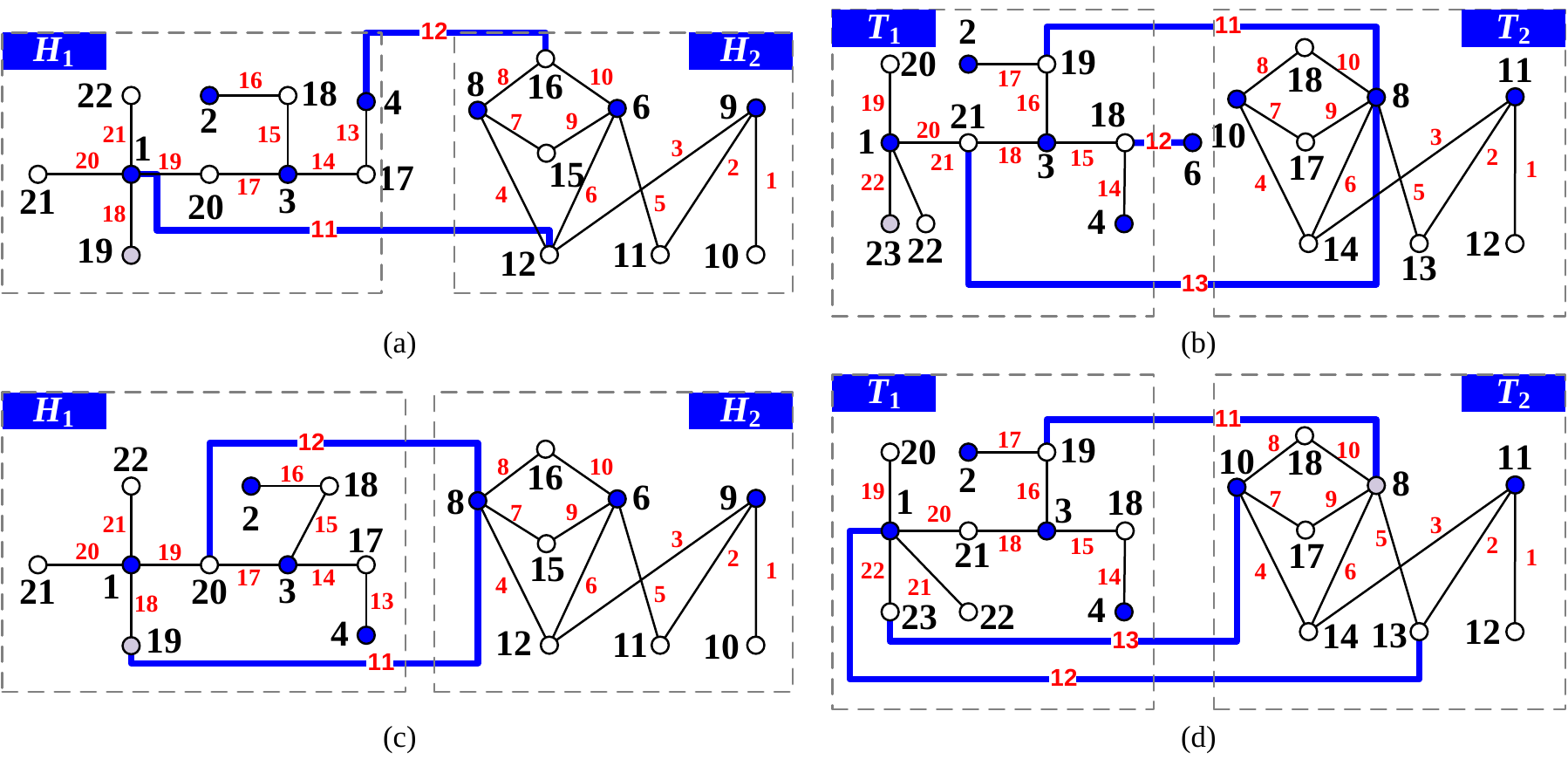}\\
\caption{\label{fig:set-ordered-vs-set-ordered} {\small (a) $(H_1\cup H_2)+E^*_a$, where the join set $E^*_a=\{(3,15),(5,16)\}$; (b) $(T_1\cup T_2)+E^*_b$, where the join set $E^*_b=\{(3,14),(6,18),(8,21),(4,18)\}$; (c) $(H_1\cup H_2)+E^*_c$, where the join set $E^*_c=\{(1,12),(8,20)\}$; (b) $(T_1\cup T_2)+E^*_d$, where the join set $E^*_d=\{(9,18),(1,13),(10,23),(0,14)\}$.}}
\end{figure}

For obtaining graphs $G\odot ^m_{k=1}a_kH_k$ admitting $W$-type colorings/labellings by means of a base $\textbf{\textrm{H}}=(H_k)^m_{k=1}$ and the vertex-coinciding operation ``$\odot$'', observe Fig.\ref{fig:set-ordered-vs-set-ordered} carefully, we can see some phenomenons:
\begin{asparaenum}[\textbf{\textrm{Dist}}-1]
\item $H_i\neq T_i$ for $i=1,2$, although $H_i\cong T_i$ in the view of topological structure.
\item Two edge sets $E^*_a$ and $E^*_c$ of joining $H_1$ and $H_2$ together are different to each other, so are to $(T_1\cup T_2)+E^*_b$ and $(T_1\cup T_2)+E^*_d$.
\item Each of four graphs $H_i$ and $T_i$ with $i=1,2$ is bipartite and admits a set-ordered coloring that can be induced by set-ordered graceful labellings. There are many edge sets like $E^*_k$ with $k=a,b,c,d$ to join $H_1$ and $H_2$ together, or $T_1$ and $T_2$ together.
\item In the view of vertex-coinciding operation, we have four graphs $L_k$ with edge sets $E(L_k)=E^*_k$ for $k=a,b,c,d$, such that four graphs $L_s\odot (H_1\cup H_2)$ with $s=a,c$ and $L_j\odot (T_1\cup T_2)$ with $j=b,d$ are connected. However, $L_a \odot (H_1\cup H_2)\not\cong L_c\odot (H_1\cup H_2)$ although $f_a(E(H_1))=[13,21]=f_c(E(H_1))$, $f_a(E(H_2))=[1,10]=f_c(E(H_2))$ and $f_a(E(L_a))=\{11,12\}=f_c(E(L_c))$; and $L_b \odot (T_1\cup T_2)\not\cong L_d\odot (T_1\cup T_2)$ in spite of $f_b(E(T_1))=[14,21]=f_d(E(T_1))$, $f_b(E(T_2))=[1,10]=f_d(E(T_2))$ and $f_b(E(L_b))=[11,13]=f_d(E(L_d))$. There are many graphs like $L_k$ with $k=a,b,c,d$ to join $H_1$ and $H_2$ together, or $T_1$ and $T_2$ together.
\end{asparaenum}

\begin{lem} \label{thm:H-connecte-two-graphs}
Suppose that each $(p_i,q_i)$-graph $G_i$ are bipartite and connected, and admits a proper total coloring $f_i$ to be a set-ordered gracefully total coloring with $i=1,2$. For integer $m\geq 1$, there is a graph $H$ having $m$ edges to join $G_1$ and $G_2$ together based on the vertex-coinciding operation ``$\odot$'', such that resultant graph $H\odot(G_1\cup G_2)$ is connected and admits a proper total coloring $h$ to be a set-ordered gracefully total coloring, also, $h$ is a flawed set-ordered gracefully total coloring of the disconnected graph $G_1\cup G_2$, and a set-ordered graceful joining coloring of $H$.
\end{lem}
\begin{proof} For $i=1,2$, let $(X_i,Y_i)$ be the bipartition of vertex set of each $(p_i,q_i)$-graph $G_i$ and $X_i=\{x_{i,j}:j\in [1,s_i]$ and $Y_i=\{y_{i,j}:j\in [1,t_i]\}$ with $s_i+t_i=p_i$, so each set-ordered gracefully total coloring $f_i$ holds $\max f_i(X_i)<\min f_i(Y_i)$ and $f_i(E(G_i))=[1,q_i]$ by the hypothesis of then theorem. Without loss of generality, we have
$$f_i(x_{i,1})\leq f_i(x_{i,2})\leq \cdots \leq f_i(x_{i,s_i})<f_i(y_{i,1})\leq f_i(y_{i,2})\leq \cdots \leq f_i(y_{i,t_i})$$
for $x_{i,j}\in X_i$ and $y_{i,j}\in Y_i$ with $i=1,2$. Notice that $f_i(x)=f_i(y)$ for some distinct vertices $x,y\in V(G_i)$ with $i=1,2$. Clearly,
$\max f_i(X_i)=f_i(x_{i,s_i})$ and $\max f_i(Y_i)=f_i(y_{i,t_i})$ with $i=1,2$. We take arbitrarily an integer $m\geq 1$ and define a coloring $g$ as follows:

Step 1. $g(x_{2,j})=m+1+f_1(x_{1,s_1})+f_2(w)$ for $w\in X_2\cup Y_2=V(G_2)$. Thereby, $g(x_{2,j}y_{2,i})=f_2(x_{2,j}y_{2,i})$ for $x_{2,j}y_{2,i}\in E(G_2)$, which induces $g(E(G_2))=[1,q_2]$, and
$$g(V(G_2))=\{m+f_2(w):w\in V(G_2)\}\subseteq [m+1+f_1(x_{1,s_1})+1,m+1+f_1(x_{1,s_1})+q_2].$$ Notice that $\max g(V(G_2))=m+1+f_1(x_{1,s_1})+f_2(y_{2,t_2})=m+1+f_1(x_{1,s_1})+q_2$

Step 2. $g(x_{1,j})=f_1(x_{1,j})$ for $x_{1,j}\in X_1$, $g(y_{1,1}x_{1,s_1})=m+q_2+f_1(y_{1,1}x_{1,s_1})=m+q_2+1$, so $g(y_{1,i})=m+q_2+f_1(y_{1,i})$ for $y_{1,i}\in Y_1$, and $\max g(V(G_1))=m+q_2+f_1(y_{1,t_1})=m+q_1+q_2$. Moreover,
$g(x_{1,j}y_{1,j})=g(y_{1,i})-g(x_{1,j})=m+q_2+f_1(x_{1,j}y_{1,i})$ for $x_{1,j}y_{1,j}\in E(G_1)$, that is, $g(E(G_1))=[m+q_2+1,m+q_2+q_1]$.

Thereby, this coloring holds $$g(E(G_2))\cup g(E(G_1))=[1,q_2]\cup [m+q_2+1,m+q_2+q_1].$$

We need to build up a graph $H$ having $m$ edges only, that is, $g(E(H))=[q_2+1,m+q_2]$.

Notice that $g(y_{2,t_2})-g(x_{1,s_1})=m+1+f_1(x_{1,s_1})+f_2(y_{2,t_2})-f_1(x_{1,s_1})=m+1+q_2$. We have new vertices $z_1,z_2,\dots, z_m$ can be colored with $g(z_j)=g(x_{1,s_1})+j=f_1(x_{1,s_1})+j$ with $j\in [1,m]$, and can be added to $G_1\cup G_2$ to realize the goal of this lemma.

Notice that $g(y_{2,t_2})-g(z_j)=m+1+f_1(x_{1,s_1})+f_2(y_{2,t_2})-[f_1(x_{1,s_1})+j]=m+1+q_2-j\in [q_2+1,m+q_2]$ with $j\in [1,m]$, that is $\{g(y_{2,t_2})-g(z_j):j\in [1,m]\}=[q_2+1,m+q_2]$. We have the following ways:

Way-1. If joining $z_j$ with some vertices $y_{2,i}$ of $Y_2$, then we get $g(y_{2,i})-g(z_j)=m+1+f_1(x_{1,s_1})+f_2(y_{2,i})-f_1(x_{1,s_1})-j=m+1+f_2(y_{2,i})-j$, and solve the following inequalities
\begin{equation}\label{eqa:Case-1}
q_2+1\leq m+1+f_2(y_{2,i})-j\leq m+q_2,
\end{equation}that is, $1\leq q_2-f_2(y_{2,i})+j\leq m$, there is at least one solution $q_2-f_2(y_{2,i})=0$ for the inequalities (\ref{eqa:Case-1}).

Way-2. If joining $z_j$ with some vertices $y_{1,i}$ of $Y_1$, then we get $g(y_{1,i})-g(z_j)=m+q_2+f_1(y_{1,i})-f_1(x_{1,s_1})-j$ with $j\in [1,m]$. Solve inequalities
\begin{equation}\label{eqa:Case-2}
q_2+1\leq m+q_2+f_1(y_{1,i})-f_1(x_{1,s_1})-j\leq m+q_2,
\end{equation}
so $1\leq m+f_1(y_{1,i})-f_1(x_{1,s_1})-j\leq m$, there is at least one solution $f_1(y_{1,1})-f_1(x_{1,s_1})=1$ for the inequalities (\ref{eqa:Case-2}).

Nonetheless, we have other ways as follows:

Way-3. If $g(y_{2,j})-g(x_{1,i})\in [q_2+1,m+q_2]$, we can join $x_{1,i}$ with $y_{2,j}$ together with an edge, and join $z_j$ with $y_{2,t_2}$ by an edge, where $g(y_{2,t_2})-g(z_j)=m+1+q_2-j\neq g(y_{2,j})-g(x_{1,i})$, such that $H\odot (G_1\cup G_2)$ is connected and admits a splitting set-ordered gracefully total coloring.

Way-4. If $g(y_{1,i})-g(x_{2,j})\in [q_2+1,m+q_2]$, we can join $y_{1,i}$ with $x_{2,j}$ together with an edge, and join $z_j$ with $y_{2,t_2}$ for $g(y_{2,t_2})-g(z_j)=m+1+q_2-j\neq g(y_{1,i})-g(x_{2,j})$, such that $H\odot (G_1\cup G_2)$ is connected and admits a splitting set-ordered gracefully total coloring.

Way-5. If we hope that $g(y_{1,i})-g(x_{2,j})\in [q_2+1,m+q_2]$, then $g(y_{1,i})-g(x_{2,j})=m+q_2+f_1(y_{1,i})-[m+1+f_1(x_{1,s_1})+f_2(x_{2,j})]=q_2+f_1(y_{1,i})-1-f_1(x_{1,s_1})-f_2(x_{2,j})$, also
$$q_2+1\leq q_2+f_1(y_{1,i})-1-f_1(x_{1,s_1})-f_2(x_{2,j})\leq m+q_2.$$
We have some solutions $1\leq f_1(y_{1,i})-1-f_1(x_{1,s_1})-f_2(x_{2,j})\leq m$, so we can join $x_{2,j}$ with $y_{1,i}$ together with an edge.

Way-6. If $g(y_{2,j})-g(x_{1,i})\in [q_2+1,m+q_2]$, then we get $g(y_{2,j})-g(x_{1,i})=m+1+f_1(x_{1,s_1})+f_2(y_{2,j})-f_1(x_{1,i})$, and
$$q_2+1\leq m+1+f_1(x_{1,s_1})+f_2(y_{2,j})-f_1(x_{1,i})\leq m+q_2$$
as well as
$$1\leq q_2-f_1(x_{1,s_1})-f_2(y_{2,j})+f_1(x_{1,i})\leq m$$
We have some solutions $f_1(x_{1,s_1})-f_1(x_{1,i})\leq q_2-f_2(y_{2,j})$, such that the vertex $x_{1,i}$ can be joined with the vertex $y_{2,j}$ together by an edge.

The above deducing process has shown the lemma.
\end{proof}

\begin{figure}[h]
\centering
\includegraphics[width=16cm]{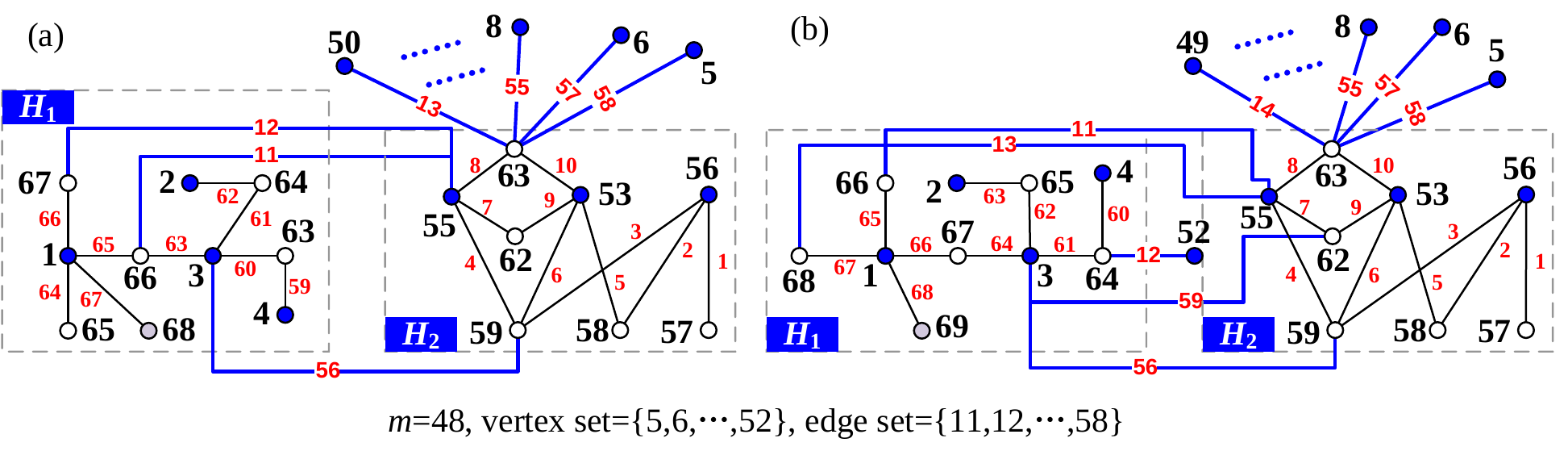}\\
\caption{\label{fig:set-ordered-add-m-edges} {\small A scheme for illustrating the proof of Lemma \ref{thm:H-connecte-two-graphs}.}}
\end{figure}

For understanding the proof of Lemma \ref{thm:H-connecte-two-graphs}, see examples shown in Fig.\ref{fig:set-ordered-vs-set-ordered} and Fig.\ref{fig:set-ordered-add-m-edges}. By Lemma \ref{thm:H-connecte-two-graphs} and mathematical induction, we can prove the following result:

\begin{thm} \label{thm:induction-more-graphs}
For a base $\textbf{\textrm{G}}=(G_k)^n_{k=1}$ made by disjoint graphs $G_1,G_2,\dots ,G_n$, where each $G_k$ is a bipartite and connected $(p_k,q_k)$-graph admitting a proper total coloring to be a set-ordered gracefully total coloring with $k\in [1,n]$. There is a graph $H$ such that the graph $H\odot^n_{k=1}G_k$ is connected and admits a set-ordered gracefully total coloring $f$, and then $f$ is a flawed set-ordered gracefully total coloring of the base $\textbf{\textrm{G}}=(G_k)^n_{k=1}$ too.
\end{thm}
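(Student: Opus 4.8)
The plan is to prove Theorem \ref{thm:induction-more-graphs} by induction on $n$, with Lemma \ref{thm:H-connecte-two-graphs} serving simultaneously as the base case and as the single-step tool in the inductive step. For $n=1$ there is nothing to prove (take $H$ to be a one-vertex graph coincided with any vertex of $G_1$, so that $H\odot G_1\cong G_1$ already carries the required coloring), and for $n=2$ the assertion is precisely Lemma \ref{thm:H-connecte-two-graphs}. So I would fix $n\geq 3$ and assume the theorem holds for every base consisting of $n-1$ disjoint bipartite connected graphs each admitting a set-ordered gracefully total coloring.

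First I would apply the induction hypothesis to the sub-base $(G_1,G_2,\dots,G_{n-1})$, obtaining a graph $H'$ such that $G^{*}=H'\odot^{n-1}_{k=1}G_k$ is connected and admits a set-ordered gracefully total coloring $f^{*}$, and such that $f^{*}$ is moreover a flawed set-ordered gracefully total coloring of $\bigcup^{n-1}_{k=1}G_k$. The point to record here is that a set-ordered gracefully total coloring, by Definition \ref{defn:new-graceful-strongly-colorings}, comes equipped with a bipartition $(X^{*},Y^{*})$ of $V(G^{*})$ with $\max f^{*}(X^{*})<\min f^{*}(Y^{*})$; in particular $G^{*}$ is bipartite as well as connected. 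Hence the pair $(G^{*},G_n)$ satisfies the hypotheses of Lemma \ref{thm:H-connecte-two-graphs}: both are bipartite and connected, $G^{*}$ carries the set-ordered gracefully total coloring $f^{*}$, and $G_n$ carries its given set-ordered gracefully total coloring $f_n$.

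Next I would invoke Lemma \ref{thm:H-connecte-two-graphs} on $(G^{*},G_n)$ with some integer $m\geq 1$, which yields a graph $H''$ such that $H''\odot(G^{*}\cup G_n)$ is connected and admits a set-ordered gracefully total coloring $f$, with $f$ a flawed set-ordered gracefully total coloring of $G^{*}\cup G_n$ and a set-ordered graceful joining coloring of $H''$. It remains to re-express this in terms of $\bigcup^n_{k=1}G_k$. Since $G^{*}=H'\odot^{n-1}_{k=1}G_k$, after the appropriate vertex identifications we have $G^{*}\cup G_n=H'\cup\bigcup^n_{k=1}G_k$, so letting $H$ be the graph obtained by amalgamating $H'$ and $H''$ (keeping their new vertices and edges together with their attachment vertices), a routine unwinding of the vertex-coinciding notation gives $H''\odot(G^{*}\cup G_n)=H\odot^n_{k=1}G_k$. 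Thus $H\odot^n_{k=1}G_k$ is connected and admits the set-ordered gracefully total coloring $f$; and since $\bigcup^n_{k=1}G_k$ is a spanning subgraph of $H\odot^n_{k=1}G_k$ obtained by deleting exactly the edges of $H$, the coloring $f$ is a flawed set-ordered gracefully total coloring of the base $\textbf{\textrm{G}}=(G_k)^n_{k=1}$, which closes the induction.

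The coloring-theoretic substance of the argument is entirely contained in Lemma \ref{thm:H-connecte-two-graphs}; what needs genuine care here is bookkeeping. The two checkpoints are: (i) confirming that the output of the induction hypothesis is a legitimate input to Lemma \ref{thm:H-connecte-two-graphs}, which is clean because ``admits a set-ordered gracefully total coloring'' already encodes ``bipartite and connected''; and (ii) verifying that the connector graphs $H'$ and $H''$ really can be merged into one graph $H$ with $H\odot^n_{k=1}G_k$ literally equal to $H''\odot(G^{*}\cup G_n)$ — this is where one must be attentive to whether a given attachment vertex of $H''$ lies on an original $G_k$ or on $H'$. I expect this merging step, rather than anything about the colorings, to be the main (and only mild) obstacle. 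If the amalgamation turns out to be notationally awkward, a cleaner alternative is to run the induction ``chain-wise'' — at stage $k$ insert a Lemma \ref{thm:H-connecte-two-graphs} connector between the graph built so far and $G_{k+1}$ — and verify directly, by the same colour-shifting and interval inequalities used in the proof of Lemma \ref{thm:H-connecte-two-graphs}, that the accumulated coloring remains a set-ordered gracefully total coloring at every stage.
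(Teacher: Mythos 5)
Your proposal is correct and follows exactly the route the paper indicates: the paper derives Theorem \ref{thm:induction-more-graphs} "by Lemma \ref{thm:H-connecte-two-graphs} and mathematical induction," which is precisely your induction on $n$ with the lemma as base case and single-step tool. Your added bookkeeping (that the set-ordered gracefully total coloring of $G^{*}$ certifies it as a legitimate bipartite connected input for the next application of the lemma, and the amalgamation of the connector graphs into one $H$) simply fills in details the paper leaves implicit.
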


\begin{rem}\label{rem:Lemma-H-connecte-two-graphs}
The graph $H$ in Lemma \ref{thm:H-connecte-two-graphs} is a hypergraph sometimes. In case $H\odot (\bigcup^n_{k=1}a_kG_k)$ for $G_k\in \textbf{\textrm{G}}=(G_k)^n_{k=1}$, we rewrite $H\odot (\bigcup^n_{k=1}a_kG_k)$ by $H\odot^n_{k=1}a_kG_k$, and let $F_{\textrm{set}}$ be the set of colored graphs. Furthermore the set
\begin{equation}\label{eqa:coloring-labelling-graphic-lattice}
\textbf{\textrm{L}}(F_{\textrm{set}}\odot \textbf{\textrm{G}})=\{H\odot^n_{k=1}a_kG_k,~a_k\in Z^0,~G_k\in \textbf{\textrm{G}}=(G_k)^n_{k=1},H\in F_{\textrm{set}}\}
\end{equation} is called a \emph{set-ordered gracefully total coloring graphic lattice} with $\sum^n_{k=1}a_k\geq 1$ based on Lemma \ref{thm:H-connecte-two-graphs}, and the base is $\textbf{\textrm{G}}=(G_k)^n_{k=1}$.

If the base $\textbf{\textrm{G}}=(G_k)^n_{k=1}$ holds some strong conditions (for instance, admitting a set-ordered graceful labellings), and a given tree $T$ of $n$ vertices (such as admitting a set-ordered graceful labelling), we can vertex-coincide a vertex $x_k$ of $T$ with a vertex of $G_k$ into one to obtain a connected graph $T\odot^n_{k=1}G_k$ admitting a $W$-type labelling, such as, $W$-type $\in \{$felicitous, super edge-magic total, super set-ordered $(k,d)$-edge-magic total, super total graceful, super set-ordered total graceful, super generalized total graceful$\}$ (Ref. \cite{Zhang-Yao-Wang-Wang-Yang-Yang-2013}, \cite{Wang-Yao-Yao-2014Information}, \cite{Wang-Yao-Yang-Yang-Chen-2013}, \cite{Wang-Xu-Yao-Ars-2018}). \qqed
\end{rem}

Let $G'$ be a copy of a graph $G$. Join a vertex $x$ of $G$ with its image vertex $x'$ of the copy $G'$ by an edge $xx'$, the resultant graph is denoted as $G\perp G'$, called a \emph{symmetric graph}. See some symmetric graphs shown in Fig.\ref{fig:non-set-ordered-1} and Fig.\ref{fig:non-set-ordered-2}.

\begin{figure}[h]
\centering
\includegraphics[width=16cm]{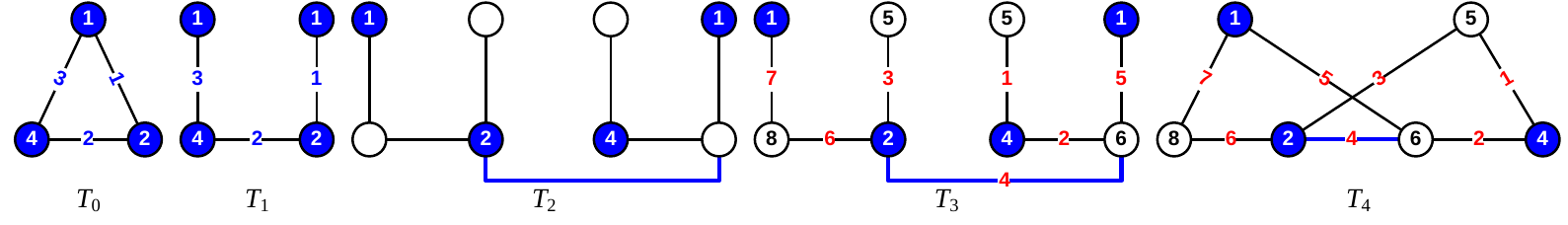}\\
\caption{\label{fig:non-set-ordered-1} {\small A process of constructing a symmetric graph $T_4=T_1\perp T'_1$ admitting a set-ordered graceful labelling/coloring from a non-set-ordered graceful graph $T_0$.}}
\end{figure}

\begin{lem} \label{thm:symmetric-graphs-graceful-coloring}
Let $G$ be a bipartite and connected graph admitting admits a proper total coloring to be a gracefully total coloring. Then the symmetric bipartite graph $G\perp G'$ admits a set-ordered gracefully total coloring.
\end{lem}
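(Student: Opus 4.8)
## Proof Proposal for Lemma (Symmetric graphs admit set-ordered gracefully total colorings)

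The plan is to exhibit, for a given gracefully total coloring $f$ of $G$, an explicit set-ordered gracefully total coloring $h$ of $G\perp G'$ built by ``mirroring'' $f$. First I would normalize: by Definition \ref{defn:new-graceful-strongly-colorings} we may assume $\min f(V(G))=1$, $f(E(G))=[1,q]$ and $f(uv)=|f(u)-f(v)|$ for every edge $uv$, where $G$ is a bipartite $(p,q)$-graph with bipartition $(X,Y)$. Since the vertex $x$ joined to its image $x'$ is arbitrary, I would choose $x$ to be a vertex attaining $\min f$ (or, after passing to the dual coloring of Definition \ref{defn:4-dual-total-coloring}, $\max f$) and, renaming the parts if necessary, assume $x\in X$. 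Because $G\perp G'$ is connected, its bipartition is then forced (up to exchange) to be $A=X\cup Y'$ and $B=Y\cup X'$, where $(X',Y')$ is the copy of $(X,Y)$ in $G'$ and $x'\in X'$, so set-orderedness of a coloring $h$ amounts to $\max h(A)<\min h(B)$, and every edge $uv$ (with $u$ on side $A$) then satisfies $h(uv)=h(v)-h(u)$.

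Next I would define $h$ on $G\perp G'$ so that: (i) the bridge edge $xx'$ receives the central edge colour $q+1$; (ii) the $q$ edges of $G$ receive a prescribed $q$-subset $S_1$ of $[1,2q+1]\setminus\{q+1\}$, governed by $f$ with the part $X$ placed on the low side $A$; and (iii) the $q$ edges of $G'$ receive the complementary set $S_2=[1,2q+1]\setminus(S_1\cup\{q+1\})$, produced from a reflected (dual) and shifted copy of $f$ with $X'$ placed on the high side $B$. A natural first attempt is a parity split such as $h(u)=2f(u)-1$ on $X$ and $h(v)=2f(v)$ on $Y$ on the $G$-side, together with the symmetric dual assignment on the $G'$-side and a global shift arranged so that $\min h=1$, $h(x')=h(x)+q+1$, and the whole of $\{h(w):w\in V(G')\}\cup h(Y)$ sits above $X\cup Y'$. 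The appeal of this is that the two copies then tend to contribute edge colours of opposite parities, which would make $S_1$ and $S_2$ disjoint and (with the bridge supplying the missing value) fill $[1,2q+1]$; the condition $|h(V(G\perp G'))|<2p$ is inherited for free from $|f(V(G))|<p$. But this first attempt needs genuine refinement, which brings in the main difficulty.

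The hard part is exactly steps (ii)--(iii). A gracefully total coloring $f$ of $G$ need not be set-ordered (see $T_0$ in Fig.\ref{fig:non-set-ordered-1}), so across $(X,Y)$ it generally has both ``up-edges'' ($f(v)>f(u)$) and ``down-edges'' ($f(u)>f(v)$), and no globally affine recolouring of the $G$-part can turn all of them into up-edges while keeping $X$ on the low side; the parity formula above even lets an up-edge of colour $c'$ collide with a down-edge of colour $c'+1$. The resolution I expect is to let the down-edges of $f$ be \emph{repaired by their mirror edges in} $G'$: a down-edge $uv$ of $G$ corresponds to the edge $u'v'$ of $G'$ with $u'\in X'\subseteq B$ on the high side, where the dual coloring makes $u'$ genuinely larger than $v'$, so the global set-ordered orientation is respected on the copy. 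Carrying this out forces the vertex recolouring on the $G$-part to depend on the up/down status of the incident edges rather than being globally affine, and the delicate bookkeeping -- showing the resulting edge colours on $G$ and on $G'$ are pairwise distinct, avoid the value $q+1$, and together exhaust $[1,2q+1]$ -- is the heart of the argument; it is precisely here that $f(E(G))=[1,q]$ (not merely an arbitrary injective difference set) is used. Once $h$ is defined and the two facts $h(E(G\perp G'))=[1,2q+1]$ and $\max h(A)<\min h(B)$ are verified, $h$ is by construction a set-ordered gracefully total coloring of $G\perp G'$, with the constructions in Fig.\ref{fig:non-set-ordered-1} and Fig.\ref{fig:non-set-ordered-2} serving as illustrations; the flavour of combining colourings across a bridge is the same as in Lemma \ref{thm:H-connecte-two-graphs}, specialised here to a graph and its mirror image.
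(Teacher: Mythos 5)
There is a genuine gap. Your proposal never actually produces the coloring: the parity-split first attempt is conceded not to work, and the announced ``resolution'' is only described qualitatively, with the decisive step (that the edge colours of the two copies are distinct, avoid $q+1$, and exhaust $[1,2q+1]$) deferred to unspecified ``delicate bookkeeping''. Worse, the route you commit to rests on a false claim: you assert that no globally affine recolouring of the $G$-part can turn all of its edges into up-edges while keeping $X$ on the low side, and you conclude that the vertex recolouring must depend on the up/down status of the incident edges. A vertex is in general incident to both up-edges and down-edges of $f$, so such a rule is not even well defined; and the claim itself fails, since adding $q+1$ to every vertex of $Y$ already gives $g(y)=f(y)+q+1>q\geq f(x)=g(x)$ for every edge $xy$ with $x\in X$, i.e.\ a plain affine shift makes every edge of $G$ an up-edge.

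That plain shift is exactly the paper's proof, and it realizes your ``mirror repair'' intuition without any extra machinery: put the \emph{same} coloring $f$ on the copy $G'$ (no dual), define $g=f$ on $X\cup Y'$ and $g=f+q+1$ on $X'\cup Y$, and colour edges by absolute differences. An edge $xy$ of $G$ with $f(xy)=c$ then receives $q+1+c$ if $f(y)>f(x)$ and $q+1-c$ if $f(y)<f(x)$, while its mirror $x'y'$ receives the complementary value of the pair $\{q+1-c,\,q+1+c\}$; since $f(E(G))=[1,q]$, the two copies together fill $[1,2q+1]\setminus\{q+1\}$, the bridge $xx'$ always gets $q+1$ (for any choice of $x$, so your normalization of $x$ to a minimum vertex is unnecessary), and $\max g(X\cup Y')\leq q<q+1\leq \min g(X'\cup Y)$ gives set-orderedness, cf.\ Fig.\ref{fig:non-set-ordered-2}. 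So your complementary-pair idea is the right mechanism, but the construction that makes it work is precisely the globally affine shift your argument rules out, not a parity split, a dual coloring on $G'$, or an edge-orientation-dependent vertex rule.
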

\begin{proof} Suppose that a bipartite and connected $(p,q)$-graph $G$ admits a gracefully total coloring $f: V(G)\cup E(G)\rightarrow [0,q]$ such that $f(x)=f(y)$ for some distinct vertices $x,y\in V(G)$, each edge $uv$ holds $f(uv)=|f(u)-f(v)|$, and the edge color set $f(E(G))=\{f(uv):uv\in E(G)\}=[1,q]$. Since $G$ is bipartite, so $V(G)=X\cup Y$ and $X\cap Y=\emptyset$.

Take a copy $G'$ of $G$, so we have its vertex set $V(G')=X'\cup Y'$ and $X'\cap Y'=\emptyset$, and then $G'$ admits the gracefully total coloring $f$. Join a vertex $x$ of $G$ with its image vertex $x'$ of the copy $G'$ by an edge $xx'$, the resultant graph is symmetric and denoted as $G\perp G'$. Notice that the symmetric graph $G\perp G'$ is bipartite too, and its vertex set $V(G\perp G')=(X\cup Y')\cup (X'\cup Y)$.

We define a total coloring $g$ of the symmetric graph $G\perp G'$ as follows: $g(w)=f(w)$ for each vertex $w\in X\cup Y'$, $g(z)=f(z)+q+1$ for each vertex $z\in X'\cup Y$, and $g(uv)=|g(u)-g(v)|$ for each edge $uv\in E(G\perp G')$.

For edges $xy\in E(G)$ with $x\in X$ and $y\in Y$, we have
$$g(xy)=|g(y)-g(x)|=g(y)-g(x)=f(y)+q+1-f(x),$$ and for edges $x'y'\in E(G')$ with $x'\in X'$ and $y'\in Y'$, we get $$g(x'y')=|g(y')-g(x')|=g(x')-g(y')=f(x')+q+1-f(y').$$ If $f(y)-f(x)>0$, then $f(x')-f(y')<0$, as well as $g(xy)=q+1+f(xy)$ and $g(x'y')=q+1-f(x'y')$; if $f(y)-f(x)<0$, then $f(x')-f(y')>0$, thus, $g(xy)=q+1-f(xy)$ and $g(x'y')=q+1+f(x'y')$, which means that the edge color set $g(E(G'))\cup g(E(G))=[1,q]\cup [q+2,2q+1]$.

In total, we have the edge color set of the symmetric graph $G\perp G'$ as:
$$g(E(G\perp G'))=g(E(G'))\cup g(xx')\cup g(E(G))=[1,q]\cup \{q+1\}\cup [q+2,2q+1]=[1,2q+1].$$ Since $\max g(X\cup Y')<\min g(X'\cup Y)$, so $g$ is a set-ordered gracefully total coloring of the symmetric graph $G\perp G'$. This lemma has been proven.
\end{proof}

\begin{figure}[h]
\centering
\includegraphics[width=16cm]{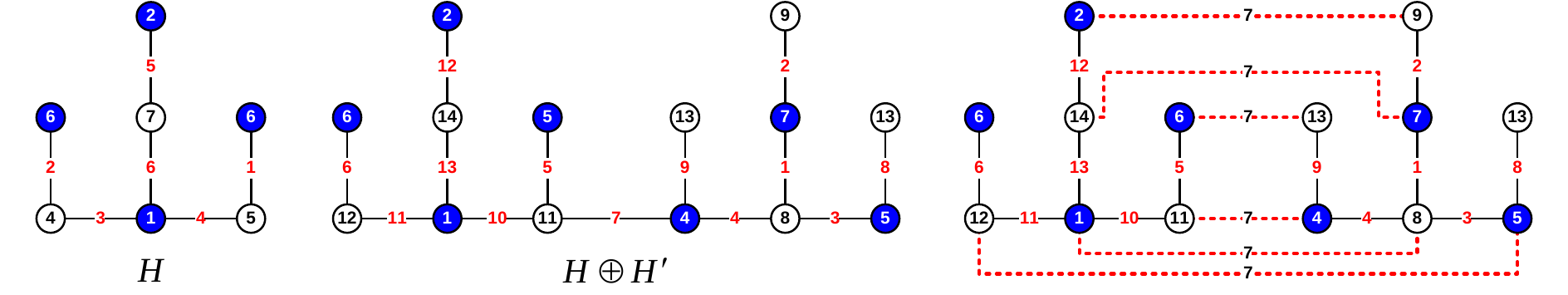}\\
\caption{\label{fig:non-set-ordered-2} {\small A scheme for illustrating the proof of Lemma \ref{thm:symmetric-graphs-graceful-coloring}.}}
\end{figure}

Observe Fig.\ref{fig:non-set-ordered-2} carefully, we can see that there are many ways to form a symmetric graph $H\perp H'$ admitting a set-ordered gracefully total coloring. So, for a base $\textbf{\textrm{T}}=(T_k)^n_{k=1}$ made by disjoint graphs $T_1,T_2,\dots ,T_n$, where each $T_k$ is a bipartite and connected $(p_k,q_k)$-graph admitting a non-set-ordered gracefully total coloring, we have a symmetric graph $T_k\perp T'_k$ admitting a set-ordered gracefully total coloring, and then we get a \emph{symmetric base} $\textbf{\textrm{T}}\perp \textbf{\textrm{T}}'=(T_k\perp T'_k)^n_{k=1}$ made by the base $\textbf{\textrm{T}}=(T_k)^n_{k=1}$ and its copy $\textbf{\textrm{T}}'=(T'_k)^n_{k=1}$. We call the following set
\begin{equation}\label{eqa:symmetric-graceful-coloring-graphic-lattice}
\textbf{\textrm{L}}(F\odot (\textbf{\textrm{T}}\perp \textbf{\textrm{T}}'))=\{H\odot^n_{k=1}a_k(T_k\perp T'_k),~a_k\in Z^0,~(T_k\perp T'_k)\in \textbf{\textrm{T}}\perp \textbf{\textrm{T}}',H\in F\}
\end{equation} a \emph{set-ordered gracefully total coloring symmetric graphic lattice} with $\sum^n_{k=1}a_k\geq 1$ based on Lemma \ref{thm:H-connecte-two-graphs} and Lemma \ref{thm:symmetric-graphs-graceful-coloring}, as well as the base $\textbf{\textrm{T}}\perp \textbf{\textrm{T}}'=(T_k\perp T'_k)^n_{k=1}$.

\begin{defn}\label{defn:5C-5C-labelling}
$^*$ If a proper total coloring $f:V(G)\cup E(G)\rightarrow [1,M]$ for a bipartite $(p,q)$-graph $G$ holds:

(i) (e-magic) $f(uv)+|f(u)-f(v)|=k$;

(ii) (ee-difference) each edge $uv$ matches with another edge $xy$ holding $f(uv)=|f(x)-f(y)|$ (or $f(uv)=2(p+q)-|f(x)-f(y)|$);

(iii) (ee-balanced) let $s(uv)=|f(u)-f(v)|-f(uv)$ for $uv\in E(G)$, then there exists a constant $k'$ such that each edge $uv$ matches with another edge $u'v'$ holding $s(uv)+s(u'v')=k'$ (or $2(p+q)+s(uv)+s(u'v')=k'$) true;

(iv) (set-ordered) $\max f(X)<\min f(Y)$ (or $\min f(X)>\max f(Y)$) for the bipartition $(X,Y)$ of $V(G)$.

(v) (edge-fulfilled) $f(E(T))=[1,q]$.

We call $f$ a \emph{$5$C-total coloring} of $G$.\qqed
\end{defn}

\begin{thm}\label{thm:bipartite-equivalent-total-coloring}
A bipartite and connected $(p,q)$-graph $T~(\neq K_{1,m})$ admits a proper total coloring $f:V(T)\cup E(T)$. The following assertions are equivalent to each other:
\begin{asparaenum}[(1)]
\item $T$ admits a set-ordered gracefully total coloring.
\item $T$ admits a set-ordered odd-gracefully total coloring.
\item $T$ admits a set-ordered edge-magic total coloring.
\item $T$ admits a set-ordered $5$C-total coloring.
\item $T$ admits a set-ordered felicitous total coloring.
\item $T$ admits a set-ordered odd-elegant total coloring.
\item $T$ admits a set-ordered harmonious total coloring.
\item $T$ admits a $(k,d)$-graceful coloring.
\end{asparaenum}
\end{thm}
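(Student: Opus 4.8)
The plan is to prove Theorem~\ref{thm:bipartite-equivalent-total-coloring} by establishing a cycle of implications that all pass through the set-ordered gracefully total coloring (assertion (1)), exactly as is done in Theorem~\ref{thm:connections-several-labellings} for labellings. The guiding observation is that a set-ordered gracefully total coloring $f$ of a bipartite $(p,q)$-graph $T$ with bipartition $(X,Y)$, $\max f(X)<\min f(Y)$, is completely determined by the vertex values, and those values are tightly constrained: writing $X=\{x_1,\dots,x_s\}$ and $Y=\{y_1,\dots,y_t\}$ with $f(x_1)\le\cdots\le f(x_s)<f(y_1)\le\cdots\le f(y_t)$, the edge set condition $f(E(T))=[1,q]$ forces $f(x_1)=1$, $f(y_t)=q+1$ (after a harmless shift one may instead normalise to $[0,q]$), and more importantly forces the multiset $\{f(y_j)-f(x_i):x_iy_i\in E(T)\}$ to be exactly $\{1,2,\dots,q\}$. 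This ``set-ordered'' rigidity is the engine: each of the other seven colorings will be produced by applying an explicit affine or piecewise-affine transformation to the vertex labels of $f$ and then recomputing the induced edge colors.

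First I would set up the normal form and prove the two easy base reductions: $(1)\Leftrightarrow(2)$ and $(1)\Rightarrow(8)$. For $(1)\Rightarrow(2)$, given the set-ordered graceful $f$, define $g(x_i)=f(x_i)$ on $X$ and $g(y_j)=2f(y_j)-1$ on $Y$ (keeping the shift so that $\min f(X)$ is suitably small); then $g(x_iy_j)=g(y_j)-g(x_i)=2(f(y_j)-f(x_i))-1$ runs over $[1,2q-1]^o$, and the set-ordered inequality is preserved, so $g$ is set-ordered odd-gracefully total; the reverse is the inverse transformation, and one checks that ``$f(u)=f(v)$ for some pair'' (constraint (\ref{27TProper01}$^\circ$)) is inherited because an affine map cannot create new collisions but the hypothesis that $T\neq K_{1,m}$ together with $|X|,|Y|\ge 1$ guarantees at least one collision survives. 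For $(1)\Rightarrow(8)$ (the $(k,d)$-graceful coloring) the transformation is $x_i\mapsto d\cdot f(x_i)$, $y_j\mapsto k+d\cdot(f(y_j)-1)$, mirroring the passage to $(k,d)$-graceful labellings in Theorem~\ref{thm:connections-several-labellings}(3). The magic-type targets $(3)$ (set-ordered edge-magic total), $(4)$ (set-ordered $5$C-total, Definition~\ref{defn:5C-5C-labelling}), $(5)$ (set-ordered felicitous total) and $(6)$,$(7)$ (set-ordered odd-elegant / harmonious total) are each obtained from $f$ by the corresponding classical recipe: for edge-magic one sets the edge color to $\lambda-f(u)-f(v)$ with $\lambda$ chosen from the extreme values of $f$, using that $f(E(T))=[1,q]$ makes the edge colors again a block of consecutive integers; for felicitous one uses $f(uv)=f(u)+f(v)\ (\mathrm{mod}\ q)$ and the set-orderedness to guarantee distinctness; for $5$C-total one verifies the e-magic, ee-difference, ee-balanced and edge-fulfilled clauses directly from the arithmetic-progression structure of $\{f(y_j)-f(x_i)\}$, pairing edge $uv$ with the edge whose difference is $q+1-f(uv)$.

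The cleanest way to organise the write-up is to prove $(1)\Rightarrow(k)$ for each $k\in\{2,\dots,8\}$ by an explicit transformation, and then $(k)\Rightarrow(1)$ by the inverse transformation — since every transformation above is an affine bijection on label values composed with the edge-recompute rule, invertibility is automatic, and the only thing to recheck going backwards is that the set-ordered inequality and the ``existence of a repeated color'' condition are both reversible, which they are because the affine maps are strictly monotincreasing on each part $X$ and $Y$ separately. I would cite Theorem~\ref{thm:connections-several-labellings} and the equivalence machinery of \cite{Yao-Liu-Yao-2017} to shortcut the felicitous / harmonious / odd-elegant cases: the labelling-level equivalences there lift verbatim to the total-coloring level because the proofs only manipulate vertex values and never use injectivity of the vertex map.

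The main obstacle I anticipate is the $5$C-total coloring equivalence $(1)\Leftrightarrow(4)$. Unlike the graceful/odd-graceful/felicitous cases, Definition~\ref{defn:5C-5C-labelling} bundles four independent magic-type constraints (e-magic, ee-difference, ee-balanced, set-ordered) plus edge-fulfilled, and one must exhibit a single total coloring satisfying all of them simultaneously and then, conversely, extract a set-ordered graceful coloring from an arbitrary $5$C-total coloring. The forward direction should go through by choosing the $5$C edge color on $x_iy_j$ to be $k-(f(y_j)-f(x_i))$ for the right magic constant $k$ and verifying the ee-difference/ee-balanced pairings via the involution $f(uv)\mapsto$ the unique edge with difference $q+1-(k-f(uv))$; the delicate point is that this involution must be fixed-point-free or have a controlled singularity, which is where the hypothesis $T\neq K_{1,m}$ is genuinely needed (the star is the standard exception, cf. the singularity clause (v) in Definition~\ref{defn:6C-labelling}). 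The converse direction — showing any $5$C-total coloring is, up to the standard affine change, set-ordered graceful — is the part I expect to require the most care, and I would handle it by first using the e-magic and set-ordered clauses to show the vertex colors on $Y$ are an arithmetic-like stretch of those on $X$, then using edge-fulfilled to pin the edge colors to $[1,q]$, recovering assertion (1) after subtracting the magic constant.
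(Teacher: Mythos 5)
Your plan is essentially the paper's own proof: starting from a set-ordered gracefully total coloring $f$, the paper establishes $(1)\Rightarrow(k)$ for each $k\in\{2,\dots,8\}$ by exactly the kind of explicit affine/linear re-colorings you describe (keeping vertex colors or reflecting them via $\max f(X)+\min f(X)-f(x_i)$, recomputing edge colors, and for the $5$C case taking $g(x_iy_j)=q+1-f(x_iy_j)$ with unchanged vertex colors so that e-magic, ee-difference, ee-balanced and edge-fulfilled all follow from $f(E(T))=[1,q]$), and then obtains the converses by invertibility of these linear translations, just as you propose. One small correction: your map for $(1)\Rightarrow(2)$, namely $g(x_i)=f(x_i)$ and $g(y_j)=2f(y_j)-1$, yields edge colors $2f(y_j)-f(x_i)-1$ rather than $2\bigl(f(y_j)-f(x_i)\bigr)-1$, so both sides must be scaled (the paper uses $g(x_i)=2f(x_i)-1$ on $X$ and $g(y_j)=2f(y_j)-2$ on $Y$) to land in $[1,2q-1]^o$.
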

\begin{proof} By the assertion (1), we suppose that the bipartite and connected $(p,q)$-graph $T$ admits a proper total coloring $f:V(T)\cup E(T)$ to be a set-ordered gracefully total coloring, $\max f(X)<\min f(Y)$, where $(X,Y)$ is the bipartition of $V(T)$. Notice that $f(x)=f(y)$ for some distinct vertices $x,y\in V(T)$, and $f(E(T))=\{f(uv)=|f(u)-f(v)|:uv\in E(T)\}=[1,q]$. Let $X=\{x_{i}:i\in [1,s]$ and $Y=\{y_{j}:j\in [1,t]\}$ with $s+t=p$. ``($k$)$\Rightarrow$ ($j$)'' means the assertion ($k$) deduces the assertion ($j$) in the following proof.

(1)$\Rightarrow$(2) We define a proper total coloring $g_{\textrm{odd}}$ of $T$ as: $g_{\textrm{odd}}(x_{i})=2f(x_{i})-1$ for $x_{i}\in X$, and $g_{\textrm{odd}}(y_{j})=2f(y_{j})-2$ for $y_{j}\in Y$, $g_{\textrm{odd}}(x_{i}y_{j})=|g_{\textrm{odd}}(y_{j})-g_{\textrm{odd}}(x_{i})|=|2f(y_{j})-1-2f(x_{i})|=2f(y_{j})-1-2f(x_{i})=2f(x_{i}y_{j})-1$, which induces $g(E(T))=[1,2q-1]^o$. $g_{\textrm{odd}}$ is a proper total coloring to be a \emph{set-ordered odd-gracefully total coloring}. Moreover, we have $f(x_{i})=\frac{1}{2}g_{\textrm{odd}}(x_{i})$ for $x_{i}\in X$, and $f(y_{j})=\frac{1}{2}[g_{\textrm{odd}}(y_{j})+1]$ for $y_{j}\in Y$, immediately, we have:``$T$ admits a set-ordered gracefully total coloring if and only if $T$ admits a set-ordered odd-gracefully total coloring''.

(1)$\Rightarrow$(3) We define a proper total coloring $g_{\textrm{mag}}$ of $T$ as: $g_{\textrm{mag}}(x_{i})=\max f(X)+\min f(X)-f(x_{i})$ for $x_{i}\in X$, and $g_{\textrm{mag}}(y_{j})=f(y_{j})$ for $y_{j}\in Y$, and $g_{\textrm{mag}}(x_{i}y_{j})=q+1-f(x_{i}y_{j})$ for $x_{i}y_{j}\in E(T)$. Then
$$g_{\textrm{mag}}(x_{i})+g_{\textrm{mag}}(x_{i}y_{j})+g_{\textrm{mag}}(y_{j})=q+1+\max f(X)+\min f(X)$$
for each edge $x_{i}y_{j}\in E(T)$, so we claim that $g_{\textrm{mag}}$ is a proper total coloring to be a \emph{set-ordered edge-magic total coloring}. Thereby, $T$ admits a set-ordered gracefully total coloring if and only if $T$ admits a set-ordered edge-magic total coloring.

(1)$\Rightarrow$(4) We define a proper total coloring $g_{\textrm{5C}}$ of $T$ as: $g_{\textrm{5C}}(w)=f(w)$ for each vertex $w\in V(T)$, $g_{\textrm{5C}}(x_{i}y_{j})=q+1-f(x_{i}y_{j})$ for each edge $x_{i}y_{j}\in E(T)$.

(i) (e-magic) $g_{\textrm{5C}}(x_{i}y_{j})+|g_{\textrm{5C}}(x_{i})-g_{\textrm{5C}}(y_{j})|=q+1-f(x_{i}y_{j})+|f(x_{i})-f(y_{j})|=q+1$ for each edge $x_{i}y_{j}\in E(T)$.

(ii) (ee-difference) $g_{\textrm{5C}}(x_{i}y_{j})=q+1-f(x_{i}y_{j})=f(x'_{i}y'_{j})$ for $x_{i}y_{j}\in E(T)$, since each edge $x_{i}y_{j}\in E(T)$ corresponds an edge $x'_{i}y'_{j}\in E(T)$ form $f(x_{i}y_{j})+f(x'_{i}y'_{j})=q+1$.

(iii) (ee-balanced) From $s(xy)=|f(x)-f(y)|-f(xy)$ for $uv\in E(G)$, we have $s_{\textrm{5C}}(xy)=|g_{\textrm{5C}}(x)-g_{\textrm{5C}}(y)|-g_{\textrm{5C}}(xy)=|f(x)-f(y)|-[q+1-f(xy)]=|f(x)-f(y)|-f(x'y')$, where $f(xy)+f(x'y')=q+1$. So, $s_{\textrm{5C}}(xy)+s_{\textrm{5C}}(x'y')=[|f(x)-f(y)|-f(x'y')]+|f(x')-f(y')|-f(xy)=0$.

(iv) (set-ordered) $\max g_{\textrm{5C}}(X)<\min g_{\textrm{5C}}(Y)$ by the property of the set-ordered gracefully total coloring $f$.

(v) (edge-fulfilled) $\max g_{\textrm{5C}}(E(T))=[1,q]$, since $f(E(T))=[1,q]$.

Thereby, we claim that $g_{\textrm{5C}}$ is a $5$C-total coloring of $T$.

(1)$\Rightarrow$(5) We define a proper total coloring $g_{\textrm{fel}}$ of $T$ as: $g_{\textrm{fel}}(y_j)=f(y_j)$ for $y_j\in Y$, $g_{\textrm{fel}}(x_{i})=\max f(X)+\min f(X)-f(x_{i})$ for $x_{i}\in X$, and
$${
\begin{split}
g_{\textrm{fel}}(x_{i}y_{j})&=g_{\textrm{fel}}(x_{i})+g_{\textrm{fel}}(y_{j})=\max f(X)+\min f(X)-f(x_{i})+f(y_j)~(\bmod~q)\\
&=\max f(X)+\min f(X)+f(x_{i}y_j)~(\bmod~q)
\end{split}}
$$ for each edge $x_{i}y_{j}\in E(T)$. So, $\max g_{\textrm{fel}}(X)<\min g_{\textrm{fel}}(Y)$, and $g_{\textrm{fel}}(E(T))=[0,q-1]$, that is, the coloring $g_{\textrm{fel}}$ is really a set-ordered felicitous total coloring of $T$.

(1)$\Rightarrow$(6) We define a proper total coloring $g_{\textrm{ele}}$ as: $g_{\textrm{ele}}(x_{i})=2[\max f(X)+\min f(X)-f(x_{i})]-1$ for $x_{i}\in X$, and $g_{\textrm{ele}}(y_{j})=2f(y_{j})-2$ for $y_{j}\in Y$, and for each edge $x_{i}y_{j}\in E(T)$, we have
$${
\begin{split}
g_{\textrm{ele}}(x_{i}y_{j})&=g_{\textrm{ele}}(x_{i}y_{j})+g_{\textrm{ele}}(x_{i}y_{j})~(\bmod~2q)\\
&=2[\max f(X)+\min f(X)-f(x_{i})]-1+[2f(y_{j})-2]~(\bmod~2q)\\
&=2[\max f(X)+\min f(X)+f(x_{i}y_j)]-1~(\bmod~2q).
\end{split}}
$$
We can see that $\max g_{\textrm{ele}}(X)<\min g_{\textrm{ele}}(Y)$, and $g_{\textrm{ele}}(E(T))=[1,2q-1]^o$, so $g_{\textrm{ele}}$ is really a set-ordered odd-elegant total coloring of $T$.

(1)$\Rightarrow$(7) We define a proper total coloring $g_{\textrm{har}}$ as: $g_{\textrm{har}}(x_{i})=\max f(X)+\min f(X)-f(x_{i})$ for $x_{i}\in X$, and $g_{\textrm{har}}(y_{j})=f(y_{j})$ for $y_{j}\in Y$, and for each edge $x_{i}y_{j}\in E(T)$, we have
$g_{\textrm{har}}(x_{i})+g_{\textrm{har}}(y_{j})=\max f(X)+\min f(X)-f(x_{i})+f(y_{j})=\max f(X)+\min f(X)+f(x_{i}y_{j})$, which induces a consecutive set $[\max f(X)+\min f(X)+1,\max f(X)+\min f(X)+q]$, so we define
$g_{\textrm{har}}(x_{i}y_{j})=g_{\textrm{har}}(x_{i})+g_{\textrm{har}}(y_{j})~~(\bmod~q)$. Clearly, $g_{\textrm{har}}$ is a set-ordered harmonious total coloring, since $\max g_{\textrm{har}}(X)<\min g_{\textrm{har}}(Y)$ and $g_{\textrm{har}}(E(T))=[0,q-1]$.

(1)$\Rightarrow$(8) We define a proper total coloring $g_{\textrm{kd}}$ as:$g_{\textrm{kd}}(x_{i})=f(x_{i})\cdot d$ for $x_{i}\in X$, and $g_{\textrm{kd}}(y_{j})=k+f(y_{j})\cdot d$ for $y_{j}\in Y$, and for each edge $x_{i}y_{j}\in E(T)$, we set $g_{\textrm{kd}}(x_{i}y_{j})=g_{\textrm{kd}}(y_{j})-g_{\textrm{kd}}(x_{i})=k+f(y_{j})\cdot d-f(x_{i})\cdot d=k+f(x_{i}y_{j})\cdot d$. Thereby, $g_{\textrm{kd}}(E(T))=\{k+d,k+2d,\dots ,k+qd\}$, which implies that $g_{\textrm{kd}}$ is a $(k,d)$-gracefully total coloring of $T$.

\vskip 0.2cm

Notice that each translation between $f$ and $g_{\varepsilon}$ with $\varepsilon\in \{$odd, mag, 5C, fel, har, kd$\}$ is linear, so it is easily to obtain the original coloring $f$ from $g_{\varepsilon}$, which means the equivalent proof. The theorem has been shown completely.
\end{proof}

\begin{figure}[h]
\centering
\includegraphics[width=16.2cm]{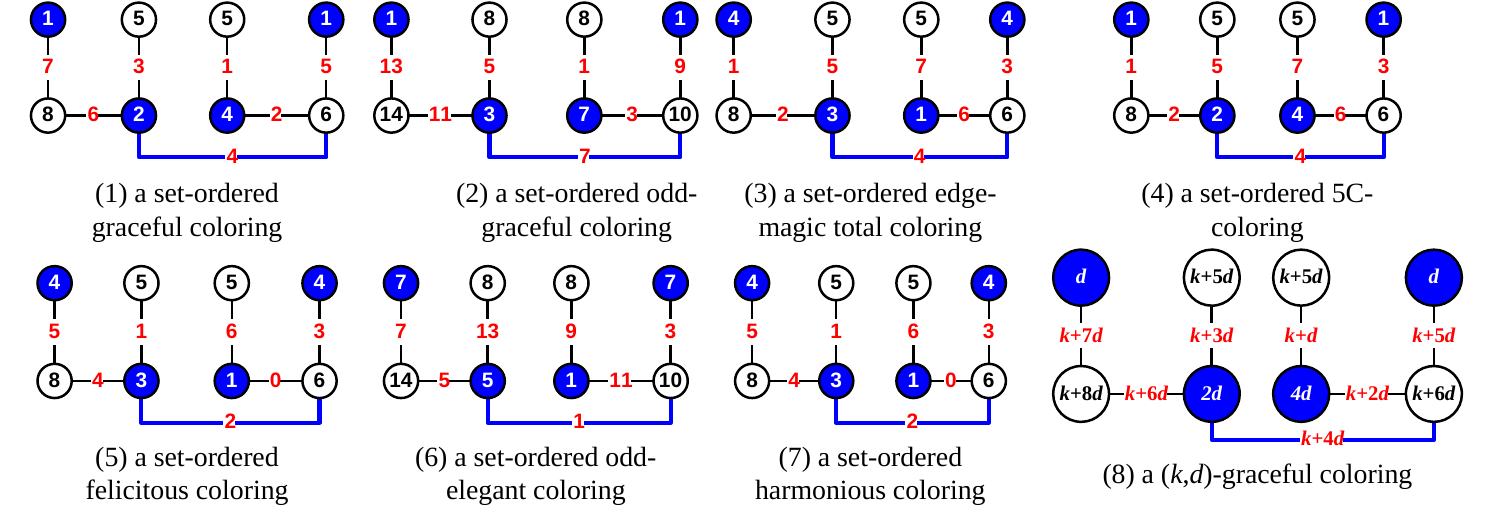}\\
\caption{\label{fig:equivalent-colorings} {\small Examples for understanding Theorem \ref{thm:bipartite-equivalent-total-coloring}.}}
\end{figure}

\begin{problem}\label{qeu:trees-various-colorings}
Since a colored connected $(p,q)$-graph $G$ can be vertex-split into some colored trees, or be leaf-split into colored trees, we have questions as follows:
\begin{asparaenum}[\textbf{\textrm{Tree}}-1. ]
\item \textbf{Construct} graphs or graphic lattices admitting set-ordered gracefully total colorings or set-ordered graceful labellings.
\item If a tree $T$ admits a graceful labelling, then \textbf{does} it admit a gracefully total coloring?
\item \textbf{Determine} each $W$-type coloring defined in Definition \ref{defn:new-graceful-strongly-colorings} for trees.
\item \textbf{Determine} trees admitting set-ordered gracefully total colorings, or determine trees refusing set-ordered gracefully total colorings.
\item About the parameter $v_{W}(G)=\min_f\{|f(V(G))|\}$ over all $W$-type coloring $f$ of $G$ for a fixed $W\in[1,27]$ based on Definition \ref{defn:new-graceful-strongly-colorings}, for each integer $m$ subject to $v_{W}(G)<m\leq p-1$, \textbf{does} there exist a $W$-type coloring $g$ holding $|g(V(G))|=m$?
\end{asparaenum}
\end{problem}

\subsection{Applications of equivalent colorings}

\subsubsection{Equivalent coloring-based lattices} Suppose that each graph $T^c_i$ of the base $\textbf{\textrm{T}}^c=(T^c_1,T^c_2,\dots, T^c_n)$ of a colored graphic lattice $\textbf{\textrm{L}}(\textbf{\textrm{T}}^c\odot F^c_{p,q})$ defined in (\ref{eqa:graphic-lattice-colored}) admits a $W$-type coloring $f_i$, such that $\textbf{\textrm{T}}^c$ admits a flawed $W$-type coloring $f$, then we rewrite $\textbf{\textrm{L}}(\textbf{\textrm{T}}^c\odot F^c_{p,q})$ as $\textbf{\textrm{L}}(\textbf{\textrm{T}}^c\odot (f) F^c_{p,q})$.
If each $W$-type coloring $f_i$ is equivalent to another $W'$-type coloring $g_i$, such that the flawed $W$-type coloring $f$ is equivalent to a $W'$-type coloring $g$, thus, we get a colored graphic lattice $\textbf{\textrm{L}}(\textbf{\textrm{T}}^c\odot (g) F^c_{p,q})$. Thereby, we say two lattices $\textbf{\textrm{L}}(\textbf{\textrm{T}}^c\odot (f) F^c_{p,q})$ (as a \emph{public-key set}) and $\textbf{\textrm{L}}(\textbf{\textrm{T}}^c\odot (g) F^c_{p,q})$ (as a \emph{private-key set}) are equivalent to each other. In the language of graph homomorphism, we have two homomorphically equivalent graphic lattice homomorphisms $\textbf{\textrm{L}}(\textbf{\textrm{T}}^c\odot (f) F^c_{p,q})\leftrightarrow \textbf{\textrm{L}}(\textbf{\textrm{T}}^c\odot (g) F^c_{p,q})$. Also, $\textbf{\textrm{L}}(\textbf{\textrm{T}}^c\odot (f) F^c_{p,q}) \leftrightarrow \textbf{\textrm{L}}(\textbf{\textrm{T}}^c\odot (g) F^c_{p,q})$, a pair of \emph{homomorphically equivalent graphic lattices}. Consequently, each colored graph of $\textbf{\textrm{L}}(\textbf{\textrm{T}}^c\odot (f) F^c_{p,q})$ is equivalent to some colored graph of $\textbf{\textrm{L}}(\textbf{\textrm{T}}^c\odot (g) F^c_{p,q})$ too.

Let $\varphi$ be a linear transformation between $f_i$ and $g_i$, that is, $g_i=\varphi(f_i)$ with $i\in [1,n]$, and $g=\varphi(f)$. So, $f_i=\varphi^{-1}(g_i)$ with $i\in [1,n]$, and $f=\varphi^{-1}(g)$. In the view of linear transformation, we have
\begin{equation}\label{eqa:equivalent-lattices}
\textbf{\textrm{L}}(\textbf{\textrm{T}}^c\odot (g) F^c_{p,q})=\varphi(\textbf{\textrm{L}}(\textbf{\textrm{T}}^c\odot (f) F^c_{p,q})),~\textbf{\textrm{L}}(\textbf{\textrm{T}}^c\odot (f) F^c_{p,q})=\varphi^{-1}(\textbf{\textrm{L}}(\textbf{\textrm{T}}^c\odot (g) F^c_{p,q}))
\end{equation}
with $G=\varphi(H)$ and $H=\varphi^{-1}(G)$ for $H\in \textbf{\textrm{L}}(\textbf{\textrm{T}}^c\odot (f) F^c_{p,q})$ and $G\in \textbf{\textrm{L}}(\textbf{\textrm{T}}^c\odot (g) F^c_{p,q})$.

\subsubsection{Encrypting graphs for topological authentications} Let $F_{\textrm{equ}}=\{H_k :k\in [1,m]\}$ be a set of disjoint connected graphs $H_1,H_2,\dots, H_m$ with $H_i\cong H_j$, where each $H_k$ admits a $W_k$-type coloring $h_k$ with $k\in [1,m]$, and moreover there a linear transformation $\theta_{i,j}$ holding $h_j=\theta_{i,j}(h_i)$ for any pair of distinct $h_i$ and $h_j$. For a given graph $G$ of $n$ vertices with $n\leq m$, we can apply $F_{\textrm{equ}}$ to encrypt $G$ wholly. In mathematical mapping, we color $G$ with the elements of $F_{\textrm{equ}}$ in the following way: Take a proper vertex coloring $\theta:V(G)\rightarrow F_{\textrm{equ}}$, and join some vertices $x_{i,s}$ of $H_i$ with some vertices $x_{j,t}$ of $H_j$ by edges if $u_iu_j\in E(G)$, where $V(G)=\{u_1,u_2,\dots u_n\}$. The new graph is denoted as $G\triangleleft F_{\textrm{equ}}$, clearly, there are many graphs of form $G\triangleleft F_{\textrm{equ}}$. Suppose that the vertex $x_{i,s}$ of $H_i$ is connected with $x_{i,t}$ by a path $P^i_{s,t}=x_{i,s} x_{i,s+1}\cdots x_{i,t-1}x_{i,t}$, correspondingly, $H_j$ has a path $P^j_{s,t}=x_{j,s} x_{j,s+1}\cdots x_{j,t-1}x_{j,t}$ to connect the vertex $x_{j,s}$ with $x_{j,t}$ in $H_j$. Then, $h_j(P^j_{s,t})=\theta_{i,j}(h_i(P^i_{s,t}))$, we color the edge $x_{j,s}x_{j,t}$ of $G\triangleleft F_{\textrm{equ}}$ by a function $f_{i,j}(h_i(P^i_{s,t}),h_j(P^j_{s,t}))$. Thereby, $G\triangleleft F_{\textrm{equ}}$ admits a total coloring $\varphi$ made by $\{h_k\}^m_1$ and $\{\{ \theta_{i,j}, f_{i,j}\}^m_1\}^m_1$.

Let $J$ be a connected bipartite graph with bipartition $(X,Y)$. By two equivalent lattices and the linear transformation $\varphi$ defined in (\ref{eqa:equivalent-lattices}), we use the elements of the lattice $\textbf{\textrm{L}}(\textbf{\textrm{T}}^c\odot (f) F^c_{p,q})$ to color the vertices of $X$, and apply the elements of the lattice $\textbf{\textrm{L}}(\textbf{\textrm{T}}^c\odot (g) F^c_{p,q})$ to color the vertices of $Y$, so the resultant graph is written as $SH=J\triangleleft (\textbf{\textrm{L}}(\textbf{\textrm{T}}^c\odot (f) F^c_{p,q}),\textbf{\textrm{L}}(\textbf{\textrm{T}}^c\odot (g) F^c_{p,q}))$, and it is a bipartite graph with the bipartition $(X_f,Y_g)$ such that its edge $G_xG_y$ with $G_x\in X_f$ and $G_y\in Y_g$ holds $G_y=\varphi(G_x)$ and $G_x=\varphi^{-1}(G_y)$ for $G_x\in \textbf{\textrm{L}}(\textbf{\textrm{T}}^c\odot (f) F^c_{p,q})$ (as a \emph{public-key set}) and $G_y\in \textbf{\textrm{L}}(\textbf{\textrm{T}}^c\odot (g) F^c_{p,q})$ (as a \emph{private-key set}). Since there are many ways to join $G_x$ with $G_y$ together by edges, and there are many graphs can be used to color two ends $x$ and $y$ of an edge $xy$ of $J$, so the number of the graphs of form $SH$ is greater than one.

\subsection{$(p,s)$-gracefully total numbers and $(p,s)$-gracefully total authentications}

As known, each bipartite complete graph $K_{m,n}$ does not admit a gracefully total coloring $g$ with $g(x)=g(y)$ for some distinct two vertices $x,y\in V(K_{m,n})$, meanwhile $K_{m,n}$ admits a graceful labelling $f$ with $f(u)\neq f(w)$ for any pair of vertices $u,w\in V(K_{m,n})$. We have two kinds of extremum graphs as follows:

1. If a connected $(p,q)$-graph $H^{+}$ admits a gracefully total coloring, and adding a new edge $e$ to $H^{+}$ makes a new graph $H^{+}+e$ such that $H^{+}+e$ does not admit a gracefully total coloring, we say $H^{+}$ a \emph{gracefully$^{+}$ critical graph}.

2. If a connected $(s,t)$-graph $H^{-}$ does not admit a gracefully total coloring, but removing an edge $e'$ from $H^{-}$ produces a new graph $H^{-}-e'$ admitting a gracefully total coloring, we say $H^{-}$ a \emph{gracefully$^{-}$ critical graph}.

A \emph{$(p,s)$-gracefully total number} $R_{grace}(p,s)$ is an extremum number, such that any red-blue edge-coloring of each complete graph $K_{m}$ of $m=R_{grace}(p,s)-1$ vertices does not induce a gracefully$^{+}$ critical graph $H^{+}$ of $p$ vertices and a gracefully$^{-}$ critical graph $H^{-}$ of $s$ vertices, such that each edge of $H^{+}$ is red and each edge of $H^{-}$ is blue.

If a connected graph $G$ contains a gracefully$^{+}$ critical graph $H^{+}$ of $p$ vertices and a gracefully$^{-}$ critical graph $H^{-}$ of $s$ vertices, and both critical graphs $H^{+}$ and $H^{-}$ are edge-disjoint in $G$, then we call $G$ a \emph{$(p,s)$-gracefully total authentication}, and $(H^{+},H^{-})$ a \emph{$(p,s)$-gracefully total matching}.

We have the following obvious facts:

\begin{prop} \label{thm:gracefully-total-numbers-graphs}

(1) Matching of gracefully total graphs hold: If $(H^{+},H^{-})$ is a gracefully total matching, so are $(H^{+}+e,H^{-}-e')$, $(H^{+},H^{+}+e)$ and $(H^{-},H^{-}-e')$ too.

(2) Gracefully total numbers hold: $R_{grace}(p,s)=R_{grace}(s,p)$ with $p\geq 4$ and $s\geq 4$.
\end{prop}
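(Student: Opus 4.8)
\textbf{Proof plan for Proposition~\ref{thm:gracefully-total-numbers-graphs}.}

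The plan is to prove the two parts separately, since they are of a rather different character. Part~(1) is a direct structural observation about matchings of critical graphs, and Part~(2) is a symmetry statement about the extremal number $R_{grace}(p,s)$, to be derived from a symmetry of the defining configuration (red subgraph $H^{+}$ versus blue subgraph $H^{-}$ inside an edge-two-colored $K_m$).

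For Part~(1), I would first recall the definitions: a gracefully$^{+}$ critical graph $H^{+}$ is a connected $(p,q)$-graph admitting a gracefully total coloring such that $H^{+}+e$ admits none for the added edge $e$; a gracefully$^{-}$ critical graph $H^{-}$ is a connected $(s,t)$-graph admitting no gracefully total coloring but with $H^{-}-e'$ admitting one. A gracefully total matching $(H^{+},H^{-})$ is simply a pair consisting of one of each type. Then each of the three assertions is almost immediate from the definitions: $(H^{+}+e, H^{-}-e')$ is a matching because $H^{+}+e$ is (by the criticality hypothesis) a connected graph with no gracefully total coloring, which is exactly the defining property needed for the ``minus'' slot — however, to place it in the ``plus'' slot one needs it to admit a gracefully total coloring, which it does not; so the correct reading is that $H^{+}+e$ plays the role of an $H^{-}$-type graph and $H^{-}-e'$ plays the role of an $H^{+}$-type graph, i.e. the pair $(H^{-}-e',\,H^{+}+e)$ is a gracefully total matching. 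I would make this bookkeeping explicit. Similarly $(H^{+},H^{+}+e)$ is a matching since $H^{+}$ admits a gracefully total coloring (plus slot) and $H^{+}+e$ admits none (minus slot), and $H^{+}+e$ is even critical for the minus property because $(H^{+}+e)-e=H^{+}$ admits a gracefully total coloring; and $(H^{-},H^{-}-e')$ is a matching since $H^{-}$ admits none (minus slot) and $H^{-}-e'$ admits one (plus slot), with criticality of $H^{-}-e'$ for the plus property to be checked — here one does need that adding some edge back to $H^{-}-e'$ destroys the coloring, and $e'$ itself is such an edge. Throughout I would also note that connectedness is preserved by adding an edge, and that removing $e'$ from $H^{-}$ keeps it connected since $H^{-}-e'$ admits a gracefully total coloring and such graphs are connected by definition; the edge-disjointness condition in the definition of a matching is vacuous here since we are pairing a graph with a derived graph on overlapping vertex sets, so I would clarify that ``matching'' in Part~(1) refers to the pair-of-types notion rather than the edge-disjoint-in-a-host-graph notion.

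For Part~(2), the strategy is to exhibit the symmetry directly at the level of the defining extremal configuration. By definition $R_{grace}(p,s)$ is the least $N$ such that \emph{every} red-blue edge-coloring of $K_{N}$ forces either a red gracefully$^{+}$ critical subgraph on $p$ vertices or a blue gracefully$^{-}$ critical subgraph on $s$ vertices; equivalently, $R_{grace}(p,s)-1$ is the largest $m$ for which there exists a red-blue edge-coloring of $K_m$ avoiding both. Given such an avoiding coloring $c$ of $K_m$ witnessing $m = R_{grace}(p,s)-1$, swap the two colors to get $c'$: now $c'$ has no blue gracefully$^{+}$ critical graph on $p$ vertices and no red gracefully$^{-}$ critical graph on $s$ vertices. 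But reading $c'$ with the roles of ``red'' and ``blue'' interchanged in the definition of $R_{grace}(s,p)$ — i.e. red plays the former blue role and blue plays the former red role — shows $c'$ avoids the configuration defining $R_{grace}(s,p)$, so $R_{grace}(s,p)-1 \ge m = R_{grace}(p,s)-1$. By the symmetric argument (swap colors in an avoiding coloring for $R_{grace}(s,p)$) we get the reverse inequality, hence $R_{grace}(p,s)=R_{grace}(s,p)$. The hypotheses $p\ge 4$, $s\ge 4$ are needed only to guarantee that gracefully$^{+}$ and gracefully$^{-}$ critical graphs of the relevant orders exist at all (so the numbers are well-defined and finite), which I would invoke rather than prove.

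The main obstacle is not any deep argument but a \emph{definitional} one: the statement of Part~(1) as written mixes a graph of one type with its edge-modification of the other type, so the bulk of the work is pinning down precisely what ``gracefully total matching'' means in each clause and verifying the criticality side-conditions (that $H^{+}+e$ is critical for the ``minus'' property and $H^{-}-e'$ is critical for the ``plus'' property), rather than just membership in the two type-classes. I expect the color-swap symmetry for Part~(2) to be routine once the extremal characterization of $R_{grace}$ is stated cleanly, and I would spend most of the write-up making the $K_m$-coloring language unambiguous.
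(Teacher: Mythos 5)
The paper does not actually prove this Proposition; it is announced with ``We have the following obvious facts'' and left at that, so there is no written argument to compare yours against. Judged on its own, your treatment of Part~(1) is correct and is essentially the only sensible reading: the three pairs are matchings of the (one graph of each criticality type) kind, with the type of each modified graph witnessed by the same edge $e$ or $e'$ ($(H^{+}+e)-e=H^{+}$ admits a gracefully total coloring, and $(H^{-}-e')+e'=H^{-}$ admits none), and your remark that in $(H^{+}+e,H^{-}-e')$ the roles of the two slots are interchanged is exactly the bookkeeping one has to make explicit.

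Part~(2), however, has a genuine gap. In the paper's definition the vertex parameters are attached to the criticality types, and the types are tied to fixed colours: the $R_{grace}(p,s)$-configuration is a \emph{red} gracefully$^{+}$ critical graph on $p$ vertices together with a \emph{blue} gracefully$^{-}$ critical graph on $s$ vertices, while the $R_{grace}(s,p)$-configuration is a red gracefully$^{+}$ critical graph on $s$ vertices together with a blue gracefully$^{-}$ critical graph on $p$ vertices. Swapping the colours of an avoiding colouring $c$ only tells you that $c'$ contains no \emph{blue} gracefully$^{+}$ critical graph on $p$ vertices and no \emph{red} gracefully$^{-}$ critical graph on $s$ vertices; this says nothing about the $R_{grace}(s,p)$-configuration, and ``reading $c'$ with the roles of red and blue interchanged'' is not available, because the colour–type association is part of the definition, not of the colouring. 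The reason the analogous swap works for classical Ramsey numbers is that both targets there are cliques, i.e.\ the same class of graphs; here the two target classes are different. The natural repair — using Part~(1) to trade a gracefully$^{-}$ critical graph on $n$ vertices for a gracefully$^{+}$ critical one on $n$ vertices (delete $e'$) and conversely (add $e$) — also fails inside a fixed colouring of $K_m$, since the edge $e$ one must add need not carry the required colour. So as it stands your argument does not establish $R_{grace}(p,s)=R_{grace}(s,p)$; you would need either to symmetrize the definition (allow either colour assignment to the two critical graphs) or to supply a different argument, and in the latter case it is not even clear from the paper's definition that the equality holds.
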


\begin{figure}[h]
\centering
\includegraphics[width=16cm]{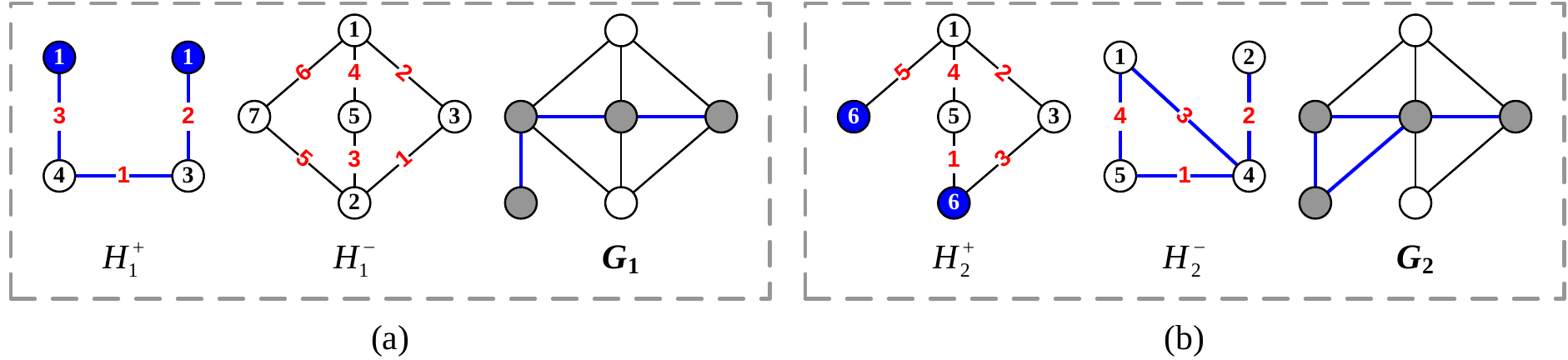}
\caption{\label{fig:extremum-11}{\small (a) $(H^{+}_1,H^{-}_1)$ is a $(4,5)$-gracefully total matching, and $G_1$ is a $(4,5)$-gracefully total authentication; (b) $(H^{+}_2,H^{-}_2)$ is a $(5,4)$-gracefully total matching, and $G_2$ is a $(5,4)$-gracefully total authentication.}}
\end{figure}

\begin{figure}[h]
\centering
\includegraphics[width=16cm]{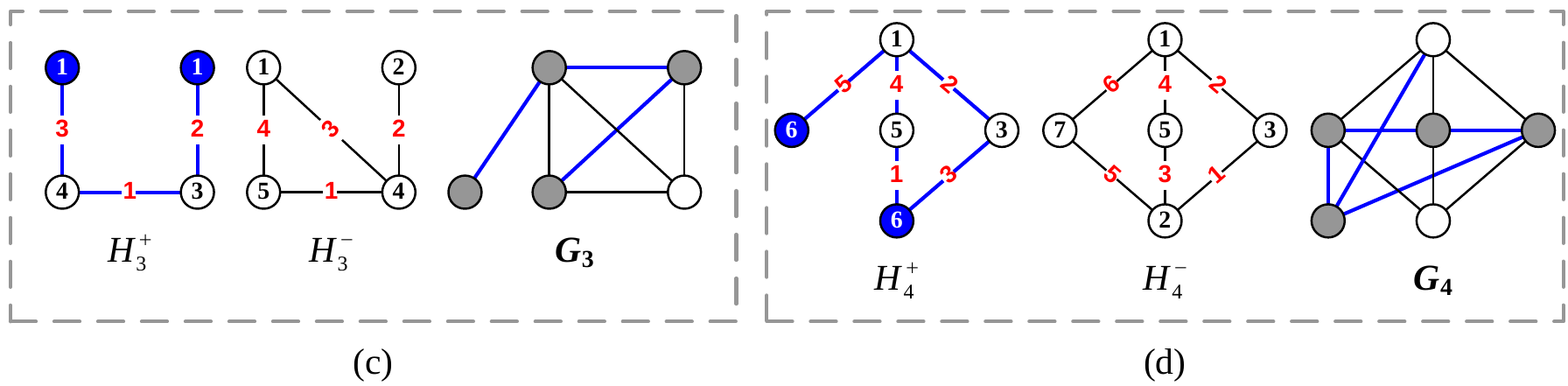}
\caption{\label{fig:extremum-22}{\small (c) $(H^{+}_3,H^{-}_3)$ is a $(5,4)$-gracefully total matching, and $G_3$ is a $(4,4)$-gracefully total authentication; (d) $(H^{+}_4,H^{-}_4)$ is a $(5,5)$-gracefully total matching, and $G_4$ is a $(5,5)$-gracefully total authentication.}}
\end{figure}

In Fig.\ref{fig:extremum-33}, we can see the following facts: (1) A red-blue edge-coloring of $K_5$ does not induce a gracefully$^{-}$ critical graph $H^{-}$ of four vertices, so the $(4,4)$-gracefully total number $R_{grace}(4,4)=6$. (2) a red-blue edge-coloring of $K_6$ does not induce a gracefully$^{-}$ critical graph $H^{-}$ of five vertices and a gracefully$^{+}$ critical graph $H^{+}$ of $5$ vertices, such that each edge of $H^{+}$ is red and each edge of $H^{-}$ is blue. So, the $(5,5)$-gracefully total number $R_{grace}(5,5)=7$. and (3) $R_{grace}(4,5)\geq 7$.

\begin{figure}[h]
\centering
\includegraphics[width=10cm]{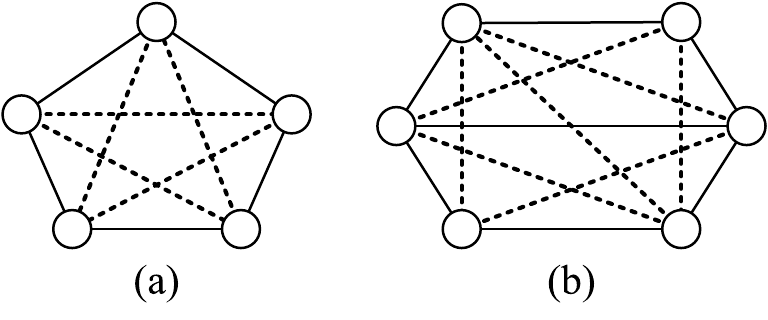}
\caption{\label{fig:extremum-33}{\small (a) A red-blue edge-coloring of $K_5$; (b) a red-blue edge-coloring of $K_6$.}}
\end{figure}

\begin{problem}\label{qeu:gracefully-total-numbers-graphs}
The gracefully total number $R_{grace}(p,s)$ has been designed by means of the idea of Ramsey number of graph theory. It seems to be not easy to \textbf{compute} the exact value of a gracefully total number $R_{grace}(p,s)$.

A connected graph $G$ containing a $(p,s)$-gracefully total matching $(H^{+},H^{-})$ is just a $(p,s)$-gracefully total authentication when $H^{+}$ is as a public key and $H^{-}$ is as a public key, and $G$ is smallest if $|V(G)|\leq |V(H)|$ and $|E(G)|\leq |E(H)|$ for any $(p,s)$-gracefully total authentication $H$. So, there are many smallest $(p,s)$-gracefully total authentications by Proposition \ref{thm:gracefully-total-numbers-graphs}, \textbf{find} all smallest $(p,s)$-gracefully total authentications.\qqed
\end{problem}

Let $K_{2,n}$ (as a gracefully$^+$ critical graph $H^{+}$) be a complete bipartite graph with its vertex set $V(K_{2,n})=\{x_1,x_2\}\cup \{y_1,y_2,\dots ,y_n\}$ and edge set $E(K_{2,n})=\{x_iy_j:~i\in[1,2],j\in [1,n]\}$. We define a graceful labelling $f$ for $K_{2,n}$ in the way: $f(x_i)=i$ for $i\in[1,2]$, and $f(y_j)=2j+1$ for $j\in [1,n]$; next we set $f(x_iy_j)=f(y_j)-f(x_i)=2j+1-i$, which deduces $f(E(K_{2,n}))=[1,2n]$. Clearly, $K_{2,n}$ does not admit a gracefully total coloring $g$ with $g(x)=g(y)$ for some distinct two vertices $x,y\in V(K_{2,n})$.

We remove an edge $x_2y_n$ from $K_{2,n}$ to obtain a connected bipartite graph $K_{2,n}-x_2y_n$ (as a gracefully$^+$ critical graph $H^{+}$), and define a total coloring $g$ as: $g(x_1)=f(x_1)=1$, $g(x_2)=2n$, $g(y_n)=2n$, and $g(y_j)=f(y_j)$ for $j\in [1,2n-1]$; set $g(x_iy_j)=g(y_j)-g(x_i)$. So, $g(E(K_{2,n}-x_2y_n))=[1,2n-1]$, and $g(x_2)=2n=g(y_n)$. We claim that $g$ is a gracefully total coloring of the graph $K_{2,n}-x_2y_n$.

We add a new vertex $w$ to $K_{2,n}$, and add new edges $wy_k$ with $k\in [2,n]$, $wx_1$ and $y_1y_k$ with $k\in [2,n]$ to $K_{2,n}+w$, the resultant graph is denoted as $G$ with $(n+3)$ vertices. It is not hard to see that these new edges induce just a connected bipartite graph $K_{2,n}-x_2y_n$. Thereby, $G$ contains a gracefully$^{+}$ critical $(p,q)$-graph $H^{+}=K_{2,n}-x_2y_n$ and a gracefully$^{-}$ critical $(s,t)$-graph $H^{-}=K_{2,n}$, and both critical graphs $K_{2,n}-x_2y_n$ and $K_{2,n}$ are edge-disjoint in $G$. Thereby, $G$ is a smallest $(n+2,n+2)$-gracefully total authentication.

\subsection{Constructing gracefully graphic lattices}

In \cite{Hongyu-Wang-Bing-Yao-submitted-2020}, the authors have shown the following results for building up gracefully graphic lattices admitting proper gracefully total colorings.

\begin{defn} \label{defn:proper-graceful-total-colorings}
Suppose that a connected $(p,q)$-graph $G$ admits a proper total coloring $f:V(G)\cup E(G)\rightarrow [1,M]$, and there are $f(x)=f(y)$ for some pairs of vertices $x,y\in V(G)$. If $f(uv)=|f(u)-f(v)|$ for each edge $uv\in E(G)$, $f(E(G))=[1,q]$ and, $f(V(G))\subseteq [1,q+1]$, we call $f$ a \emph{proper gracefully total coloring}.\qqed
\end{defn}

\begin{lem} \label{thm:adding-leaves-general-trees}
Let $G$ be a connected graph admitting a proper gracefully total coloring. Another connected graph obtained by adding leaves to $G$ admits a proper gracefully total coloring too.
\end{lem}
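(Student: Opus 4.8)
The plan is to show that attaching a new leaf to a connected graph $G$ that admits a proper gracefully total coloring $f$ produces a connected graph $G'$ still admitting a proper gracefully total coloring, and then iterate. Let $G$ be a connected $(p,q)$-graph with $f:V(G)\cup E(G)\rightarrow [1,M]$ a proper gracefully total coloring, so $f(uv)=|f(u)-f(v)|$ for each edge $uv$, $f(E(G))=[1,q]$, and $f(V(G))\subseteq [1,q+1]$. Suppose we attach a new leaf $z$ to a chosen vertex $u\in V(G)$ by a new edge $uz$; the new graph $G'$ has $p+1$ vertices and $q+1$ edges, so a proper gracefully total coloring $g$ of $G'$ must satisfy $g(E(G'))=[1,q+1]$ and $g(V(G'))\subseteq [1,q+2]$. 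The natural move is to keep $g(w)=f(w)$ for every $w\in V(G)\cup E(G)$, and then to choose $g(z)$ and $g(uz)$ so that the single new edge receives color $q+1$, i.e. $g(uz)=q+1=|g(u)-g(z)|=|f(u)-g(z)|$. This forces $g(z)=f(u)+q+1$ or $g(z)=f(u)-(q+1)$; since $f(u)\leq q+1$ and $f(u)\geq 1$, the first choice gives $g(z)\in[q+2,2q+2]$ and the second is generally not in range, so we will take $g(z)=f(u)+q+1$.

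The first step is to verify that $g$ is a \emph{proper} total coloring of $G'$: the only new incidences are at $u$ (the edge $uz$ is adjacent to the other edges at $u$, and $z$ is adjacent to $u$) and at $z$ (degree one). Since $g(uz)=q+1$ while every old edge at $u$ has color in $[1,q]$, edge-properness at $u$ holds; and $g(z)=f(u)+q+1\geq q+2 > f(u)=g(u)$, so the endpoints of $uz$ get distinct colors. The second step is to check the gracefulness constraints: $g(E(G'))=f(E(G))\cup\{q+1\}=[1,q]\cup\{q+1\}=[1,q+1]$ as required; and $g(V(G'))=f(V(G))\cup\{f(u)+q+1\}\subseteq [1,q+1]\cup\{q+2,\dots,2q+2\}$. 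Here the subtlety is that the definition demands $g(V(G'))\subseteq [1,q+2]$ — a \emph{consecutive} interval of length $q+2$ — so the choice $g(z)=f(u)+q+1$ only lands in $[1,q+2]$ when $f(u)=1$. The third step, and the main obstacle, is therefore to handle the generic case $f(u)>1$: we cannot simply add $q+1$. The fix is to first recolor $G$ so that the attachment vertex $u$ receives color $1$ (or, more flexibly, to recolor so that the new vertex can be assigned color $q+2$ keeping all edge colors intact). A clean route is to pass to the dual-type adjustment: if $f$ is a proper gracefully total coloring, then replacing each vertex color $f(w)$ by a suitable affine image — specifically using that $G$ is bipartite (any graph admitting a gracefully total coloring with $f(uv)=|f(u)-f(v)|$ and $f(E(G))=[1,q]$ has bipartite edge structure in the relevant sense) — lets us arrange $\min f(V(G))$ at the chosen end. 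Concretely, I expect to argue: among the vertices of $G$ there is one with color $1$ and one with color $q+1$; if we want to hang the leaf at an arbitrary $u$, we instead observe that $g(z)$ may be taken as $g(uz)+g(u)$ or $g(u)-g(uz)$ for \emph{any} available edge color, and by a global relabelling (e.g. $w\mapsto q+2-f(w)$ on one side of the bipartition, which is the standard set-ordered/dual trick available from the equivalences in the excerpt) we can always realise the leaf color inside $[1,q+2]$.

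Once the single-leaf step is established, the statement for adding arbitrarily many leaves follows by induction on the number of added leaves: after adding one leaf we have a connected graph $G'$ with a proper gracefully total coloring, to which the single-leaf lemma applies again, and so on. I would present the single-leaf construction as the core of the proof, state carefully the recoloring needed to place the new vertex color in range, and then close with the one-line induction. The part that will need genuine care — and where I would spend the real effort rather than routine checking — is proving that the recoloring keeping $f(E(G))=[1,q]$ and $f(V(G))$ consecutive can always be chosen so that the target leaf edge obtains the fresh color $q+1$ with the leaf's color staying within $[1,q+2]$; this is essentially a parity/bipartition bookkeeping argument, and it is plausible that the cited equivalence results (Theorem~\ref{thm:bipartite-equivalent-total-coloring}) or the symmetric-graph lemmas are invoked here to guarantee the needed set-ordered form.
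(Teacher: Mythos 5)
There is a genuine gap, and it sits exactly where you said you would ``spend the real effort.'' Your extension-of-$f$ construction works only when the attachment vertex is coloured $1$ (so that the new leaf can take colour $q+2$ and the new edge the colour $q+1$), and the repair you invoke for the generic case $f(u)>1$ does not exist as stated. First, a graph admitting a proper gracefully total coloring in the sense of Definition~\ref{defn:proper-graceful-total-colorings} is not assumed bipartite, and your parenthetical claim that the conditions $f(uv)=|f(u)-f(v)|$, $f(E(G))=[1,q]$ force ``bipartite edge structure'' is unjustified; nothing in the hypothesis gives you a bipartition, let alone a set-ordered coloring, so Theorem~\ref{thm:bipartite-equivalent-total-coloring} is not applicable. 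Second, the relabelling $w\mapsto q+2-f(w)$ applied to \emph{one side} of a bipartition destroys the difference condition: an edge $xy$ with only $x$ relabelled receives $|q+2-f(x)-f(y)|$, not $|f(x)-f(y)|$; one-sided flips are legitimate only for set-ordered labellings with controlled ranges. The transformation that genuinely preserves all edge colours is the full dual $w\mapsto \max f+\min f-f(w)$ on every vertex, but that merely exchanges the roles of the minimum and maximum vertex colours (and may even create a vertex--edge incidence clash), so it cannot place colour $1$ at an arbitrarily prescribed attachment vertex. Hence the statement your proof hinges on --- that $G$ can always be recoloured, while remaining a proper gracefully total coloring, so that the prescribed leaf's colour lands in $[1,q+2]$ --- is essentially the lemma itself and is left unproven; the only cases your tools actually cover are $f(u)=1$ (top edge colour) and $f(u)=q+1$ (shift/dual trick), not the intermediate values.

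Two further remarks. The induction step inherits the same problem: adding leaves at several distinct vertices forces you to re-run the (unproven) recolouring for each new attachment vertex, and adjacent attachment vertices certainly cannot all carry colour $1$, so the difficulty does not disappear after the first leaf. Also, for calibration: the paper itself gives no proof of this lemma --- it is quoted from the manuscript \cite{Hongyu-Wang-Bing-Yao-submitted-2020} --- so there is no in-paper argument to compare against; judged on its own, your proposal correctly isolates the obstruction but does not overcome it, and a complete proof will need an argument (most likely a global recolouring of $G'$ rather than an extension of $f$) that handles attachment vertices of arbitrary colour $t\in[2,q]$.
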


\begin{thm} \label{thm:very-trees-admitting-graceful-colorings}
Every tree $T$ with diameter $D(T)\geq 3$ admits a proper gracefully total coloring. Furthermore, let $L(T)$ be the set of leaves of a tree $T$, if the tree $T-L(T)$ obtained by removing all leaves from $T$ holds $|V(T-L(T))|\leq |L(T)|$, then the tree $T$ admits a proper gracefully total coloring.
\end{thm}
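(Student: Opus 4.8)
The statement has two parts: first, that every tree $T$ with $D(T)\ge 3$ admits a proper gracefully total coloring; second, the more concrete sufficient condition $|V(T-L(T))|\le |L(T)|$. The plan is to reduce the first part to a statement about small ``cores'' and then use Lemma~\ref{thm:adding-leaves-general-trees} as a growth engine. Concretely, I would first observe that if $D(T)\ge 3$ then $T-L(T)$ is itself a tree with at least two vertices (since a tree of diameter $\ge 3$ has at least two non-leaf vertices), so $T$ is obtained from the ``spine'' tree $T'=T-L(T)$ by attaching some leaves at various vertices of $T'$. The idea is to produce a proper gracefully total coloring of a suitable base graph and then invoke Lemma~\ref{thm:adding-leaves-general-trees} to add back the remaining leaves.

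For the base case I would use the caterpillar structure. Recall from Theorem~\ref{thm:bipartite-equivalent-total-coloring} and Theorem~\ref{thm:connections-several-labellings} that a caterpillar, being bipartite, admits a set-ordered graceful labelling, hence a set-ordered gracefully total coloring; but a proper gracefully total coloring in the sense of Definition~\ref{defn:proper-graceful-total-colorings} additionally demands $f(x)=f(y)$ for some pair of vertices, i.e.\ the vertex coloring is genuinely non-injective with $f(V(G))\subseteq[1,q+1]$. So the first real step is to build, for each caterpillar $C$ with diameter $\ge 3$, an explicit proper gracefully total coloring: label the spine path $x_1x_2\cdots x_m$ and its hanging leaves by a graceful-type scheme on the edges (edge colors running over $[1,q]$ via $f(uv)=|f(u)-f(v)|$), then exploit the freedom in a non-injective vertex coloring to force a repeated vertex color while keeping the edge-difference condition and properness. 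A clean way: take the standard graceful labelling $\gamma$ of the caterpillar as in the proof of Theorem~\ref{thm:multiple-coloring-labellings}(2), then modify it at one leaf (recolor a leaf $y$ adjacent to $x_k$ by $f(y)=f(x_k)$-neighbour value already used elsewhere) so a collision is created; one checks properness survives because the only adjacency constraint at a leaf is with its unique neighbour, and the edge color is unchanged.

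Next, for a general tree $T$ with $D(T)\ge 3$, I would choose a spanning caterpillar-like subtree or, more simply, remove leaves down to a caterpillar $C$ contained in $T$ with the same diameter, color $C$ by the base construction, and then add back the removed leaves one layer at a time using Lemma~\ref{thm:adding-leaves-general-trees}. Here the subtle point is that Lemma~\ref{thm:adding-leaves-general-trees} must be applicable at each stage, i.e.\ the intermediate graphs must be connected (they are, since we always re-attach a leaf to an existing vertex) and the lemma genuinely extends the coloring while preserving the properties of Definition~\ref{defn:proper-graceful-total-colorings} (edge set $[1,q]$, vertex set inside $[1,q+1]$, existence of a repeated vertex color, properness). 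For the second, explicit part of the theorem, the hypothesis $|V(T-L(T))|\le |L(T)|$ is exactly the regime where one can assign the ``few'' internal vertices distinct values and then have enough leaves to realize every edge-difference in $[1,q]$ while being forced into at least one vertex-color repetition; I would give a direct counting/assignment argument in this case, bipartition $V(T)=(X,Y)$ with the internal vertices on one side, and mimic the set-ordered graceful construction but with the value range deliberately compressed to $[1,q+1]$ so that injectivity on $V(T)$ becomes impossible by pigeonhole.

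The main obstacle I anticipate is the base-case construction: producing a proper gracefully total coloring of an arbitrary caterpillar with diameter $\ge 3$ such that (i) $f(E)=[1,q]$ exactly, (ii) $f(V)\subseteq[1,q+1]$, (iii) $f$ is a proper total coloring (adjacent vertices differ, adjacent edges differ), and (iv) some two vertices share a color --- conditions (ii) and (iv) pull against the usual graceful labelling, which is vertex-injective and uses vertex values in $[0,q]$. Getting all four simultaneously, and in a form robust enough that Lemma~\ref{thm:adding-leaves-general-trees} can be iterated, is where the care lies; the diameter hypothesis $D(T)\ge 3$ is needed precisely to guarantee there are at least two internal vertices and at least two leaves, giving room to engineer the forced collision. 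Everything after the base case should be a straightforward induction on the number of leaves removed, invoking Lemma~\ref{thm:adding-leaves-general-trees}.
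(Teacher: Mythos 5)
The paper itself does not prove this theorem (it is quoted from the cited work \cite{Hongyu-Wang-Bing-Yao-submitted-2020}), so your plan can only be judged on its own merits. Your growth engine --- color a small base subtree and then repeatedly invoke Lemma \ref{thm:adding-leaves-general-trees} --- is the right idea, but your reduction to a caterpillar base has a genuine gap. ``Removing leaves down to a caterpillar $C$ contained in $T$'' fails under either reading. If you peel whole layers, iterated leaf-removal does not preserve diameter, and for a spider with three legs of length two the very first peel already gives the star $K_{1,3}$, which admits no proper gracefully total coloring at all: in a star any repetition of a vertex color either occurs on two leaves (forcing two edges adjacent at the center to share a color) or on the center and a leaf (adjacent vertices), so the required repetition $f(x)=f(y)$ of Definition \ref{defn:proper-graceful-total-colorings} is unattainable --- this is exactly why $D(T)\geq 3$ is assumed. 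If instead you want a single round, i.e.\ a caterpillar $C\subseteq T$ with $T$ obtained from $C$ by adding leaves, such a $C$ need not exist: for the spider with three legs of length three, $C$ would have to contain all internal vertices of $T$, and then the center together with its three internal neighbours are non-leaves of $C$, so $C-L(C)$ has a degree-three vertex and $C$ is not a caterpillar. The repair is simple and makes the caterpillar base case unnecessary: since $D(T)\geq 3$, $T$ contains a path $P_4$, which admits a proper gracefully total coloring (vertex colors $4,1,3,4$, edge colors $3,2,1$ satisfy all conditions of Definition \ref{defn:proper-graceful-total-colorings}), and any subtree of a tree can be grown to the whole tree by attaching one pendant vertex at a time (each new vertex has exactly one neighbour in the current subtree), so one application of Lemma \ref{thm:adding-leaves-general-trees} per added vertex proves the first assertion.

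Two further points. Your caterpillar base construction is also not sound as written: recoloring a leaf $y$ adjacent to $x_k$ does change the incident edge color $|f(x_k)-f(y)|$ unless the new value is the reflection $2f(x_k)-f(y)$, so the claim that ``the edge color is unchanged'' needs this choice plus a check that the reflected value stays in $[1,q+1]$, creates the desired collision, and does not break properness; with the $P_4$ base this work disappears. For the second assertion your pigeonhole sketch is too vague to assess, and note that, read literally, the hypothesis $|V(T-L(T))|\leq |L(T)|$ is satisfied by stars, which do not admit the coloring; any correct argument must therefore again use $D(T)\geq 3$ (in which case the second assertion is subsumed by the first), and you should say so explicitly.
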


\begin{lem} \label{thm:adding-leaves-trees-admitting-set-ordered-graceful}
If a tree $T$ with its diameter $D(T)\geq 3$ admits a set-ordered graceful labelling $f$ with $f(V(T))=[1,|V(T)|]$, then the resulting tree obtained by vertex-coinciding each vertex $x_i$ of $T$ with the maximum degree vertex of some star $K_{1,m_i}$ admits a proper gracefully total coloring.
\end{lem}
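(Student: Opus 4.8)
\textbf{Proof plan for Lemma \ref{thm:adding-leaves-trees-admitting-set-ordered-graceful}.}

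The plan is to exhibit an explicit proper gracefully total coloring of the tree $T^{\,\odot}$ obtained by vertex-coinciding each vertex $x_i$ of $T$ with the centre (maximum-degree vertex) of a star $K_{1,m_i}$, starting from the set-ordered graceful labelling $f$ of $T$ and the obvious graceful-type labellings of the stars. First I would fix notation: write $(X,Y)$ for the bipartition of $V(T)$ with $X=\{x_1,\dots,x_s\}$, $Y=\{x_{s+1},\dots,x_p\}$, where $p=|V(T)|$ and $q=p-1=|E(T)|$; by the set-ordered hypothesis $f(V(T))=[1,p]$, $f(E(T))=[1,q]$, $\max f(X)=s<s+1=\min f(Y)$, and $f(x_iy)=f(y)-f(x_i)$ for each edge with $x_i\in X$, $y\in Y$. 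The first step is to record the key arithmetic fact of any set-ordered graceful labelling of a tree: each edge colour $c\in[1,q]$ is realized exactly once, and more precisely if $x_i$ has $f(x_i)=a$ then its incident edges get colours $\{f(y)-a:\ x_iy\in E(T)\}$, a set of distinct integers lying in $[1,q]$; symmetrically for $y\in Y$.

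Next I would describe the coloring $g$ of $T^{\,\odot}$. For the "old" part set $g(w)=f(w)$ for each $w\in V(T)\cup E(T)$. Now consider a vertex $x_i$ which, in $T$, is incident to edges carrying the colour-set $C_i\subseteq[1,q]$; after attaching the star $K_{1,m_i}$ with leaves $z_{i,1},\dots,z_{i,m_i}$, I want to colour the $m_i$ new leaves and the $m_i$ new edges $x_iz_{i,j}$ so that: (a) $g$ stays a proper total coloring — the new edge colours at $x_i$ must avoid $C_i$ and must be mutually distinct, and a new leaf colour must differ from $g(x_i)$; (b) the set of all edge colours remains an initial segment $[1,q']$ where $q'=q+\sum_i m_i$; and (c) every vertex colour lies in $[1,q'+1]$, with $g(x_iz_{i,j})=|g(x_i)-g(z_{i,j})|$. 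The natural device is to process the stars in a fixed order and allot to the $i$-th star a block of consecutive edge colours $[q+1+\sum_{k<i}m_k,\ q+\sum_{k\le i}m_k]$; since $f(V(T))=[1,p]$ and these blocks start above $q$, for $x_i\in X$ I can put $g(z_{i,j})=g(x_i)+(\text{the }j\text{-th colour in the block})$ and for $x_i\in Y$ put $g(z_{i,j})=g(x_i)-(\cdots)$ — essentially the same splitting-into-two-halves argument used in Lemma \ref{thm:H-connecte-two-graphs} and in Theorem \ref{thm:induction-more-graphs}, and in fact this lemma is the star-valued special case of those. One then checks the three bullet conditions hold: properness at $x_i$ because the new edge colours exceed $q\ge\max C_i$ and the block is a set of distinct integers; the edge-colour set is $[1,q']$ by construction; and the new vertex colours stay within $[1,q'+1]$ because $g(x_i)\le p\le q+1$ and the block maxima are controlled. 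Finally, since $f$ is a coloring (not a labelling) of $T^{\,\odot}$ — repeated vertex colours are allowed — condition (\ref{27TProper01}$^\circ$)-type clause "$f(x)=f(y)$ for some pair" is automatically satisfied (indeed many leaves across different stars will share colours), so $g$ is a genuine \emph{proper gracefully total coloring} in the sense of Definition \ref{defn:proper-graceful-total-colorings}.

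I would then remark that this already follows more softly from Lemma \ref{thm:adding-leaves-general-trees} together with Theorem \ref{thm:very-trees-admitting-graceful-colorings}: $T^{\,\odot}$ is obtained from $T$ by adding leaves, and $T$ admits a proper gracefully total coloring whenever $D(T)\ge 3$; but the set-ordered hypothesis gives the clean explicit block-allocation above and also covers the small-diameter cases, so it is worth giving the direct construction. The main obstacle I anticipate is purely bookkeeping: verifying \emph{properness} at each coincided vertex $x_i$, i.e.\ that no newly created edge colour collides with an old incident edge colour and that the $m_i$ new edge colours at $x_i$ are pairwise distinct — this is where the "blocks above $q$" choice does the work, but one must be careful when $x_i\in X$ versus $x_i\in Y$ (the star edges point "up" from $x_i$ in one case and "down" in the other), and one must confirm that the induced value $|g(x_i)-g(z_{i,j})|$ equals exactly the intended block colour rather than its reflection. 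I do not expect any genuinely hard point beyond this indexing, so the proof should be a short, self-contained computation once the block allocation is set up.
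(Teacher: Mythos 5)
You are in an awkward comparison position here: the paper states this lemma \emph{without} proof (it is quoted from the cited manuscript \cite{Hongyu-Wang-Bing-Yao-submitted-2020}), so there is no internal argument to measure you against; and your remark that the statement already follows formally from Lemma \ref{thm:adding-leaves-general-trees} together with Theorem \ref{thm:very-trees-admitting-graceful-colorings} (adding leaves preserves proper gracefully total colorings, and every tree with $D(T)\geq 3$ has one) is a correct derivation inside the paper's stated results. The problem is with your explicit construction, which is the main content of your proposal, and it contains a genuine gap rather than indexing trouble. First, the rule ``$g(z_{i,j})=g(x_i)-c$ for $x_i\in Y$'' is impossible: every new edge colour $c$ lies in $[q+1,q']$ while $g(x_i)\leq p=q+1$, so $g(x_i)-c\leq 0$; indeed, to realize any edge colour above $q$ at a vertex of $T$ one is forced to take $g(z_{i,j})=g(x_i)+c$ on \emph{both} sides of the bipartition (if $g(z_{i,j})<g(x_i)$ then $c\leq g(x_i)-1\leq q$), so the set-ordered hypothesis is not doing the work you assign to it. Second, once you add on both sides, your justification ``the block maxima are controlled because $g(x_i)\leq q+1$'' fails for an arbitrary processing order: if the vertex coloured $p=q+1$ receives a block near the top, one of its leaves gets colour as large as $(q+1)+q'$, which violates the requirement $g(V)\subseteq[1,q'+1]$ of Definition \ref{defn:proper-graceful-total-colorings}.

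The missing idea is a monotone matching of blocks to vertices. Write $m=\sum_i m_i$ (so $q'=q+m$), list the vertices in decreasing order of $f$-value (since $f(V(T))=[1,p]$, the $t$-th vertex has colour $p-t+1$), and give the $t$-th vertex the $t$-th block $[\,q+1+\sum_{s<t}m_{i_s},\,q+\sum_{s\leq t}m_{i_s}\,]$. The needed bound $f(x_{i_t})+\max B_t\leq q'+1$ is then equivalent to $\sum_{s\leq t}(m_{i_s}-1)\leq\sum_{j}(m_j-1)$, which holds because every $m_j\geq 1$; it is this inequality, not the crude estimate $g(x_i)\leq q+1$, that keeps all leaf colours inside $[1,q'+1]$. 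As a bonus it forces repeated vertex colours (the last leaf of the $t$-th star and the first leaf of the $(t+1)$-st receive the same colour $f(x_{i_t})+q+\sum_{s\leq t}m_{i_s}$), so the ``$f(x)=f(y)$ for some pair'' clause is automatic. With that correction your remaining checks (new edge colours exceed $q$ and hence clash neither with old incident edge colours nor with each other, and the edge colour set becomes $[1,q']$) do go through; but as written, the construction breaks exactly at the two points you dismissed as bookkeeping.
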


Let $F_{\textrm{so-gra}}$ (resp. $F_{\textrm{so-odd}}$) be a set of non-star connected graphs admitting set-ordered graceful labellings (resp. set-ordered odd-graceful labellings), and let $\textbf{\textrm{K}}=(K_{1,a_k})^n_{k=1}$ be a star-base made by disjoint stars $K_{1,a_1},K_{1,a_2},\dots ,K_{1,a_{n}}$. Lemma \ref{thm:adding-leaves-trees-admitting-set-ordered-graceful} enables us to obtain a \emph{graceful-coloring star-graphic lattice} as follows:
\begin{equation}\label{eqa:star-graceful-coloring-graphic-lattices}
\textbf{\textrm{L}}(\textbf{\textrm{F}}_{\textrm{so-gra}}\odot \textbf{\textrm{K}})=\left \{T\odot^n_{k=1}a_kK_{1,a_k}, a_k\in Z^0,T\in F_{\textrm{so-gra}}\right \}
\end{equation} with $\sum^n_{k=1}a_k=|V(T)|$, where each vertex $x$ of $T$ is vertex-coincided with a vertex of some star $K_{1,a_k}$. Moreover, we have an \emph{odd-graceful-coloring star-graphic lattice}
\begin{equation}\label{eqa:star-odd-graceful-coloring-graphic-lattices}
\textbf{\textrm{L}}(\textbf{\textrm{F}}_{\textrm{so-odd}}\odot \textbf{\textrm{K}})=\left \{H\odot^n_{k=1}a_kK_{1,a_k}, a_k\in Z^0,H\in F_{\textrm{so-odd}}\right \}
\end{equation} with $\sum^n_{k=1}a_k=|V(H)|$, where each vertex $x$ of $H$ is vertex-coincided with a vertex of some star $K_{1,a_k}$.

If a star-base $\textbf{\textrm{K}}=(K_{1,a_k})^n_{k=1}$ holds $\{a_k\}^n_{k=1}$ to be a Fibonacci sequence, we call $\textbf{\textrm{L}}(\textbf{\textrm{F}}_{\textrm{so-gra}}\odot \textbf{\textrm{K}})$ a \emph{Fibonacci-star graceful-coloring graphic lattice}, and $\textbf{\textrm{L}}(\textbf{\textrm{F}}_{\textrm{so-odd}}\odot \textbf{\textrm{K}})$ a \emph{Fibonacci-star odd-graceful-coloring graphic lattice}.

Lemma \ref{thm:adding-leaves-trees-admitting-set-ordered-graceful} can be restated as: ``\emph{Adding leaves to a tree admitting a set-ordered graceful labelling produces a haired tree admitting a proper gracefully total coloring}.''

Notice that the trees $T\odot^n_{k=1}a_kK_{1,a_k}$ in $\textbf{\textrm{L}}(\textbf{\textrm{F}}_{\textrm{so-gra}}\odot \textbf{\textrm{K}})$ (resp. $H\odot^n_{k=1}a_kK_{1,a_k}$ in $\textbf{\textrm{L}}(\textbf{\textrm{F}}_{\textrm{so-odd}}\odot \textbf{\textrm{K}})$) forms a set, in fact, such that each tree in $\{T\odot^n_{k=1}a_kK_{1,a_k}\}$ (resp. $\{H\odot^n_{k=1}a_kK_{1,a_k}\}$) admits a proper gracefully total coloring (resp. a proper odd-gracefully total coloring).

\vskip 0.4cm

\noindent\textbf{VERTEX-INTEGRATING algorithm.}

Suppose that a connected and bipartite $(n,q)$-graph $T$ with bipartition $(X,Y)$ admits a set-ordered proper graceful coloring $f$ (resp. a set-ordered proper odd-graceful labelling) holding $f(X)<f(Y)$ for $X=\{x_1,x_2,\dots ,x_s\}$ and $Y=\{y_1,y_2,\dots ,y_t\}$ with $s+t=n=|V(T)|$. Each connected and bipartite graph $H_k$ with $k\in [1,n]$ admits a set-ordered proper gracefully total coloring (resp. a set-ordered odd-gracefully total coloring). $G=T\odot^n_{k=1}H_k$ is obtained by doing

S-1. For $k\in [1,s]$, we vertex-integrate each vertex $x_k$ of the graph $T$ with the vertex $x_{k,1}$ of $H_k$ into one vertex, denoted as $x_k$ still.

S-2. For $j\in [1,t]$, we vertex-integrate each vertex $y_j$ of the graph $T$ with the vertex $y_{s+j,b_{s+j}}$ of $H_{s+j}$ into one vertex, denoted as $y_j$ still.

Let $A=\sum^s_{k=1}e_k$, $B=\sum^t_{r=1}e_{s+r}$, where $e_k=|H_k|$ with $k\in [1,n]$. So, $G$ has $e_G=e_T+A+B=q+A+B$ edges in total.

\vskip 0.4cm

\begin{thm} \label{thm:adding-graphs-to-trees}
Suppose that $G=T\odot^n_{k=1}H_k$ is made by the VERTEX-INTEGRATING algorithm. If $f(x_k)<1+B+\sum^{s-k}_{r=1}e_{s-r+1}$, and $A\geq B$, the $G$ admits a proper gracefully total coloring $g$ (resp. a proper odd-gracefully total coloring) with $g(V(G))\subseteq [1,|E(G)|]$.
\end{thm}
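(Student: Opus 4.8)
\textbf{Proof proposal for Theorem \ref{thm:adding-graphs-to-trees}.}

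The plan is to construct the total coloring $g$ of $G=T\odot^n_{k=1}H_k$ explicitly from the given set-ordered graceful coloring $f$ of $T$ and the set-ordered gracefully total colorings of the pieces $H_1,\dots,H_n$, and then verify directly that $g$ is proper, that $g(E(G))=[1,|E(G)|]$, and that $g(V(G))\subseteq[1,|E(G)|]$. First I would fix notation: write the bipartition of $T$ as $(X,Y)$ with $X=\{x_1,\dots,x_s\}$, $Y=\{y_1,\dots,y_t\}$, and arrange the set-ordered property so that $f(x_1)\le\cdots\le f(x_s)<f(y_1)\le\cdots\le f(y_t)$ and $f(E(T))=[1,q]$. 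For each piece $H_k$, denote its edge count $e_k=|H_k|$ and its distinguished vertex $x_{k,1}$ (for $k\in[1,s]$, integrated with $x_k$) or $y_{s+j,b_{s+j}}$ (for the $Y$-side). The strategy mirrors the ``relabel each block on a shifted interval, then stack the blocks'' idea already used in Lemma \ref{thm:H-connecte-two-graphs} and in the VERTEX-INTEGRATING construction: the $X$-side blocks $H_1,\dots,H_s$ get edge-color intervals laid out consecutively starting from $1$, the tree $T$'s edges occupy the next $q$ values, and the $Y$-side blocks $H_{s+1},\dots,H_{s+t}$ occupy the remaining $B$ values; the hypothesis $A\ge B$ is what guarantees the tree's edge block ``fits'' between the two groups without collisions, and the inequality $f(x_k)<1+B+\sum^{s-k}_{r=1}e_{s-r+1}$ is exactly the condition ensuring the shifted vertex colors of $T$ slot correctly into the gaps left by the integrated block vertices.

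Concretely, the key steps in order would be: (1) relabel each $X$-side block $H_k$ ($k\in[1,s]$) by an affine shift of its own set-ordered gracefully total coloring so that its edge colors form a contiguous interval immediately above those of $H_1,\dots,H_{k-1}$, while its distinguished vertex $x_{k,1}$ receives precisely the color that $x_k$ will carry in $g$; (2) do the analogous thing for the $Y$-side blocks $H_{s+1},\dots,H_{s+t}$, laying their edge colors at the top end of $[1,|E(G)|]$ and forcing $y_{s+j,b_{s+j}}$ to match $y_j$'s intended color; (3) define $g$ on $V(T)$ by an affine shift of $f$ (the shift on $Y$ differs from the shift on $X$ by the amount $A$, reflecting that the tree's edges sit above all the $X$-block edges), and set $g(uv)=|g(u)-g(v)|$ on $E(T)$; (4) check the edge-color set is exactly $[1,q+A+B]=[1,|E(G)|]$ by showing the three groups of intervals tile $[1,|E(G)|]$ with no overlap — here is where $A\ge B$ and the bound on $f(x_k)$ are consumed; (5) check $g$ is a proper total coloring: adjacency of edges within a block is inherited, adjacency across the integration vertices needs the matching of colors at $x_k$ and $y_j$ to be consistent, and properness at vertices follows because $g(uv)=|g(u)-g(v)|$ together with the interval separation forces distinct incident edge colors; (6) check $g(V(G))\subseteq[1,|E(G)|]$, which reduces to bounding the largest vertex color, namely that on $y_t$ or inside the top block, again using $A\ge B$.

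The main obstacle I expect is step (4) combined with step (5): making the bookkeeping of the three families of shifted intervals airtight so that the union of all block-edge-intervals and the tree-edge-interval is a partition of $[1,|E(G)|]$, and simultaneously verifying that at each integrated vertex $x_k$ (resp. $y_j$) the color assigned by the $T$-shift agrees with the color assigned by the $H_k$-shift — the two affine maps must be chosen compatibly, and the hypothesis $f(x_k)<1+B+\sum^{s-k}_{r=1}e_{s-r+1}$ is precisely the inequality that has to be invoked, probably with a short induction on $k$ running from $s$ down to $1$, to guarantee the $X$-side vertex colors of $T$ land in the available gaps. The parenthetical odd-graceful version should then follow by the same construction with every interval $[a,b]$ replaced by the odd interval and every affine doubling adjusted accordingly, invoking the equivalence machinery of Theorem \ref{thm:bipartite-equivalent-total-coloring} so that no separate argument is needed. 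Routine verifications (that $g(uv)=|g(u)-g(v)|$ is well-defined, that $\min f(V(G))$ stays $\ge 1$) I would state but not belabor.
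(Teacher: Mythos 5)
A preliminary caveat: the paper itself contains no proof of Theorem~\ref{thm:adding-graphs-to-trees}; the result is quoted from \cite{Hongyu-Wang-Bing-Yao-submitted-2020} and only illustrated by Figs.~\ref{fig:construct-graceful-space-1}--\ref{fig:construct-graceful-compound}, so your proposal can only be measured against the construction that the hypotheses and that example encode. Your plan is of the right genre (tile $[1,q+A+B]$ by one edge-color interval per piece, realize each piece by relabelling its set-ordered coloring, match colors at the integrated vertices, as in Lemma~\ref{thm:H-connecte-two-graphs}), but a load-bearing step fails as written: in steps (1)--(2) you relocate each block's edge interval by ``an affine shift of its own set-ordered gracefully total coloring''. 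A uniform shift of vertex colors leaves every edge color $|g(u)-g(v)|$ unchanged, so no such shift can stack the blocks' edge intervals. The only available mechanism --- and the reason each $H_k$ is required to admit a \emph{set-ordered} coloring --- is to add two different constants to the two sides of $H_k$'s bipartition, the difference of the constants being the offset of that block's edge interval (this is exactly what Lemma~\ref{thm:H-connecte-two-graphs} does when only $Y_1$ is shifted by $m+q_2$). You invoke the two-constant shift for $T$ in step (3) but not for the blocks, and without it the tiling argument of step (4) has nothing to work with.

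More seriously, your interval layout ($X$-blocks on $[1,A]$, tree edges on $[A+1,A+q]$, $Y$-blocks on top) is not the one the hypotheses describe. The threshold $1+B+\sum^{s-k}_{r=1}e_{s-r+1}=1+B+e_s+\cdots+e_{k+1}$ is exactly one more than the room occupied by all $Y$-side blocks together with $H_s,\dots,H_{k+1}$, i.e.\ the smallest edge color available to $H_k$ in the layout where the $Y$-blocks fill $[1,B]$, the $X$-blocks are stacked above them in the order $H_s,H_{s-1},\dots,H_1$ on $[B+1,B+A]$, and the $q$ tree edges take the top interval $[A+B+1,A+B+q]$; the hypothesis then says the color retained by the integrated vertex $x_k$ lies strictly below every edge color of its own block, which is what the bookkeeping at $x_k$ needs, and this is also the picture in the worked example. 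With your layout, forcing the $Y$-block edges above $A+q$ makes every edge of $H_{s+j}$ span a color gap exceeding $A+q$: if $y_j$ lies in the low class of its block this already overflows the palette because $B\le A$, and if it lies in the high class the matching of $g(y_j)$ between tree and block generates lower bounds on $f(y_j)$ relative to the $Y$-block sizes that the theorem neither states nor implies, while the stated bound on $f(x_k)$ and $A\ge B$ are not consumed where you claim. So steps (4)--(5), which you yourself flag as the main obstacle, are not carried out, and that is where the entire content of the theorem lies. Two smaller points: since $g(E(G))=[1,|E(G)|]$ with $g(uv)=|g(u)-g(v)|$ forces some vertex to receive color $|E(G)|+1$, the bound in your step (6) should be $g(V(G))\subseteq[1,|E(G)|+1]$ as in Definition~\ref{defn:proper-graceful-total-colorings}, not the literal bound you promise to verify; and the odd case does not follow from Theorem~\ref{thm:bipartite-equivalent-total-coloring} (an equivalence for a single connected bipartite graph) --- it needs the same construction rerun with the odd palette.
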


Let $\textbf{\textrm{H}}=(H_1,H_2,\dots, H_n)=(H_k)^n_{k=1}$ be a base built by disjoint connected bipartite graph $H_1,H_2,\dots, H_n$, where each $H_k$ admits a set-ordered proper gracefully total coloring (resp. a set-ordered odd-gracefully total coloring). By Theorem \ref{thm:adding-graphs-to-trees}, we have a \emph{graceful-coloring graphic lattice}
\begin{equation}\label{eqa:graceful-coloring-graphic-lattices}
\textbf{\textrm{L}}(\textbf{\textrm{F}}_{\textrm{so-gra}}\odot \textbf{\textrm{H}})=\left \{T\odot^n_{k=1}a_kH_k,a_k\in Z^0,T\in F_{\textrm{so-gra}}\right \}
\end{equation}
with $\sum^n_{k=1}a_k=|V(T)|$, and furthermore we have an \emph{odd-graceful-coloring graphic lattice}
\begin{equation}\label{eqa:odd-graceful-coloring-graphic-lattices}
\textbf{\textrm{L}}(\textbf{\textrm{F}}_{\textrm{so-odd}}\odot \textbf{\textrm{H}})=\left \{G\odot^n_{i=1}a_kH_k,a_k\in Z^0,G\in F_{\textrm{so-odd}}\right \}.
\end{equation} with $\sum^n_{k=1}a_k=|V(G)|$. Each graph of $\textbf{\textrm{L}}(\textbf{\textrm{F}}_{\textrm{so-gra}}\odot \textbf{\textrm{H}})$ and $\textbf{\textrm{L}}(\textbf{\textrm{F}}_{\textrm{so-odd}}\odot \textbf{\textrm{H}})$ admits a proper gracefully total coloring, or a proper odd-gracefully total coloring.

\vskip 0.2cm

For understanding Theorem \ref{thm:adding-graphs-to-trees}, we show an example though Fig.\ref{fig:construct-graceful-space-1}, Fig.\ref{fig:construct-graceful-space-2} and Fig.\ref{fig:construct-graceful-compound}. In Fig.\ref{fig:construct-graceful-space-1}, we can see that a tree $T$ with bipartition $(X,Y)$ admits a set-ordered proper graceful coloring holding $f(X)<f(Y)$ for $X=\{x_1,x_2,x_3 ,x_4\}$ and $Y=\{y_1,y_2,y_3,y_4 ,y_5\}$ with $9=|V(T)|$, and each connected and bipartite graph $H_k$ with $k\in [1,n]$ admits a set-ordered proper gracefully total coloring in Theorem \ref{thm:adding-graphs-to-trees}. Fig.\ref{fig:construct-graceful-space-2} is for understanding the proof of Theorem \ref{thm:adding-graphs-to-trees}. The result is given in Fig.\ref{fig:construct-graceful-compound}.

\begin{figure}[h]
\centering
\includegraphics[width=16.2cm]{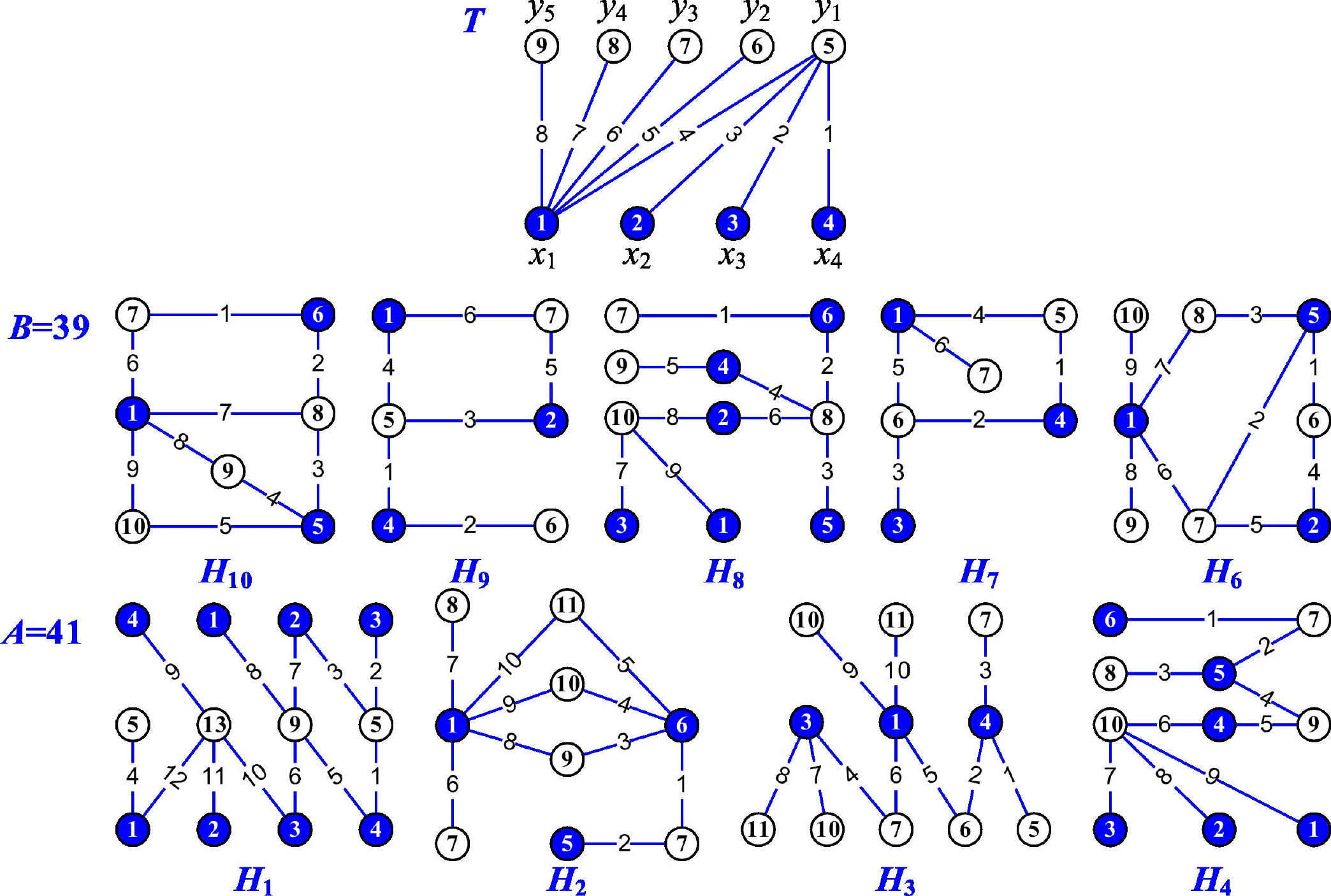}\\
\caption{\label{fig:construct-graceful-space-1} {\small A scheme for Illustrating Theorem \ref{thm:adding-graphs-to-trees}.}}
\end{figure}

\begin{figure}[h]
\centering
\includegraphics[width=16.2cm]{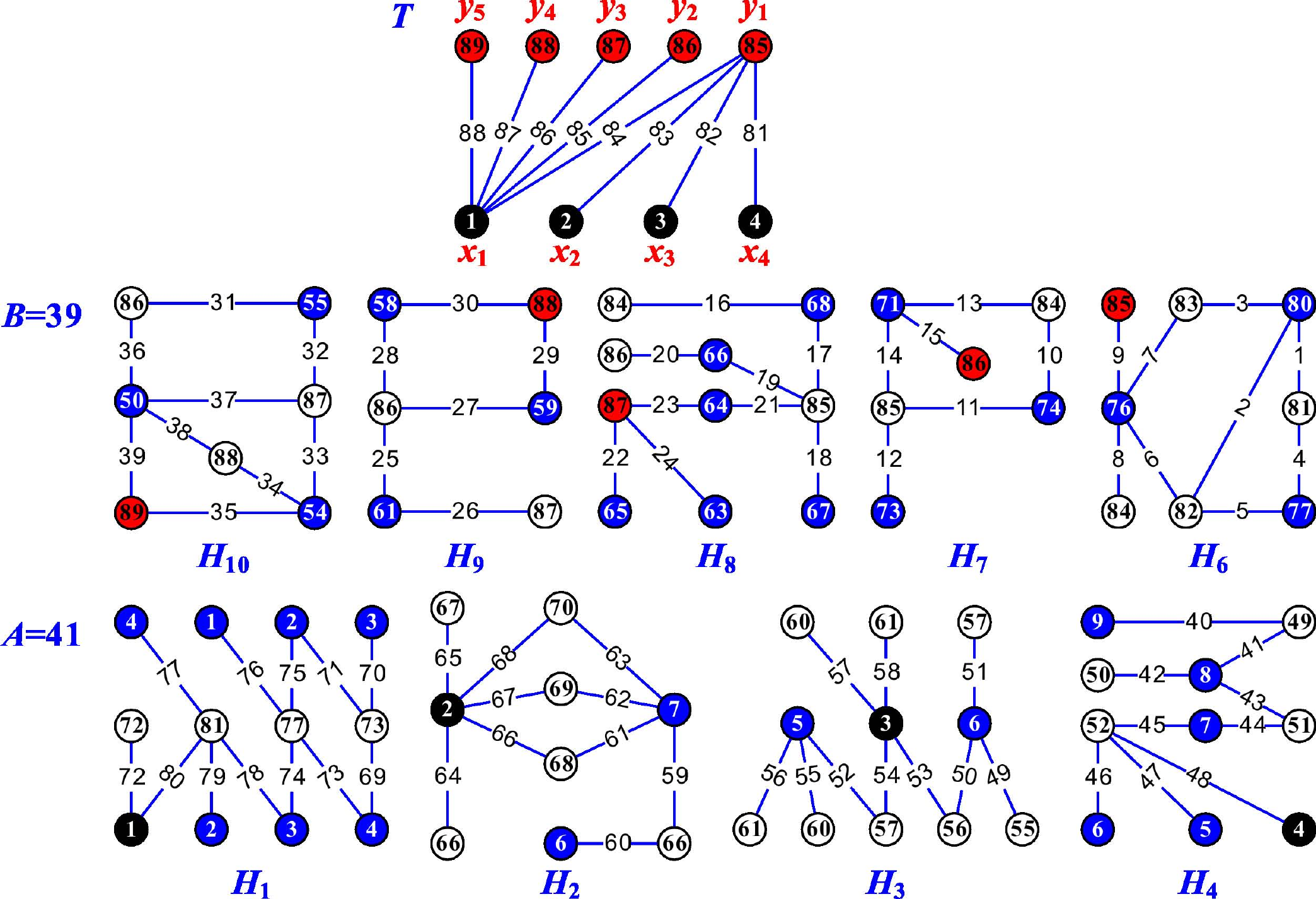}\\
\caption{\label{fig:construct-graceful-space-2} {\small A connected graph $G$ admits a proper gracefully total coloring as an example for Theorem \ref{thm:adding-graphs-to-trees}.}}
\end{figure}

\begin{figure}[h]
\centering
\includegraphics[width=16.2cm]{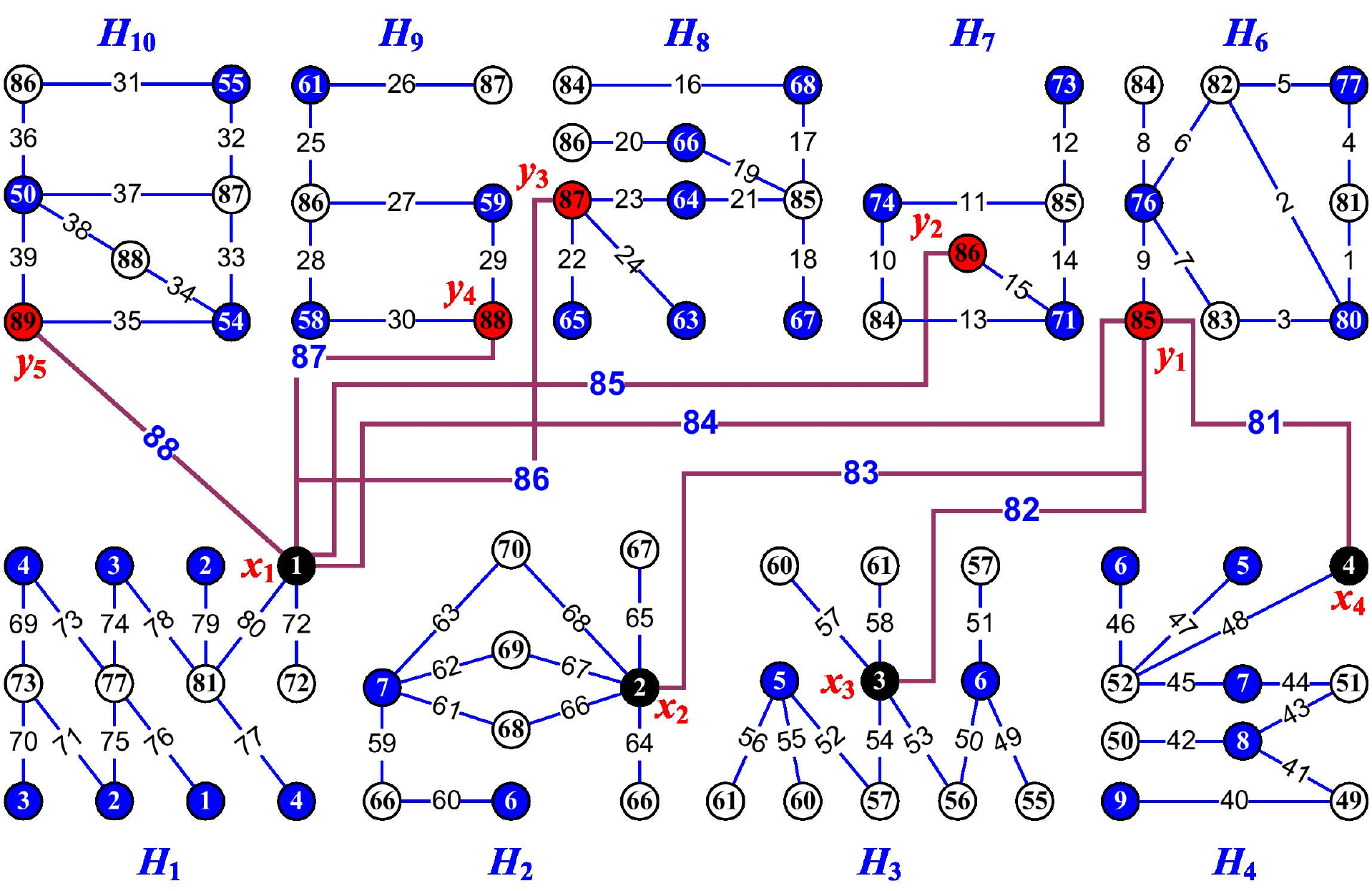}\\
\caption{\label{fig:construct-graceful-compound} {\small The compound of the connected graph $G$ shown in Fig.\ref{fig:construct-graceful-space-2}.}}
\end{figure}

\subsection{Constructing weak-gracefully graphic lattices}

\begin{defn} \label{defn:weak-graceful-total-colorings}
If a connected $(p,q)$-graph $G$ admits a total coloring $f:V(G)\cup E(G)\rightarrow [1,M]$, such that $f(uv)=|f(u)-f(v)|$ and $f(u)\neq f(v)$ with $uv\in E(G)$, and $f(E(G))=[1,q]$, as well as $f(V(G))\subseteq [1,q+1]$, we call $f$ a \emph{weak-gracefully total coloring}. It may happen $f(u)=2f(v)$ or $f(v)=2f(u)$ in a weak gracefully total coloring. Moreover, if $G$ is bipartite, and $\max f(X)<\min f(Y)$ for the bipartition $(X,Y)$ of vertex set of $G$, we call $f$ a \emph{set-ordered weak gracefully total coloring.}\qqed
\end{defn}

A base $\textbf{\textrm{H}}=(H_k)^n_{k=1}$ is made by $n$ disjoint connected graphs $H_1, H_2,\dots, H_n~(n\geq 2)$, we define particular kinds of graphs as follows:
\begin{asparaenum}[(a)]
\item If there exists an edge set $E^*$, such that a vertex $x_i$ of each $H_i$ is joined with a vertex $x_j$ of some $H_j$ by an edge $x_ix_j\in E^*$, the resultant graph is just connected, denoted as $E^*\oplus^n_{k=1}H_k$, called an \emph{edge-joined graph}.
\item A \emph{hand-in-hand graph} $G$ is made by coinciding a vertex $x_{k-1}$ of $H_{k-1}$ with a vertex $x_{k}$ of $H_{k}$ into one vertex $x_{k-1}\odot x_{k}$ for each $k\in [2,n]$, denoted $\textbf{\textrm{G}}=(H_{k-1}\odot H_k)^n_{k=2}$.
\item A \emph{single-series graph} $L$ is constructed by joining a vertex $x_{k-1}$ of $H_{k-1}$ with a vertex $x_{k}$ of $H_{k}$ by a new edge $x_{k-1}x_{k}$ for each $k\in [2,n]$, denoted $L=E^*\ominus ^n_{k=1}H_{k}$.
\item A connected bipartite graph $F$ has $n$ vertices $x_1,x_2,\dots x_n$, coinciding a vertex $u_i$ of each $H_i$ with the vertex $x_i$ of $F$ into one vertex $x_i\odot u_i$ for $i\in [1,n]$ produces a graph, called \emph{$F$-graph}, denoted as $F\odot ^n_{k=1}H_k$.
\end{asparaenum}

\begin{lem} \label{thm:graceful-total-coloring}
Given a base $\textbf{\textrm{H}}=(H_k)^n_{k=1}$ with $e_k=|E(H_k)|$ and $e_1\geq e_2\geq \cdots \geq e_n$, and each $H_k$ admits a set-ordered weak gracefully total coloring. Then there exists an edge set $E^*$, such that the edge-joined graph $E^*\oplus^n_{k=1}H_k$ admits a set-ordered weak gracefully total coloring too.
\end{lem}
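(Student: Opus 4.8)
The plan is to prove the lemma by induction on $n$, attaching the summands $H_1,\dots,H_n$ one at a time in a nested fashion, so that at every stage the partially assembled graph carries a set-ordered weak gracefully total coloring whose edge-colour set is a full interval $[1,q']$ and whose vertex colours lie in $[1,q'+1]$.

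First I would record a normalization for each summand. Since $H_k$ is bipartite with bipartition $(X_k,Y_k)$, with $f_k(uv)=f_k(v)-f_k(u)$ for $u\in X_k,\ v\in Y_k$, $f_k(E(H_k))=[1,e_k]$ and $f_k(V(H_k))\subseteq[1,e_k+1]$, the edge realizing the colour $e_k$ must join a vertex of $X_k$ coloured $1$ to a vertex of $Y_k$ coloured $e_k+1$; hence $\min f_k(X_k)=1$ and $\max f_k(Y_k)=e_k+1$. Write $M_k=\max f_k(X_k)$ and $m_k=\min f_k(Y_k)$, so that $1\le M_k<m_k\le e_k+1$, and in particular $M_k\le e_k$.

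For the inductive step, assume $G_{k-1}=E^*_{k-1}\oplus^{k-1}_{i=1}H_i$ already carries a set-ordered weak gracefully total coloring $g^{(k-1)}$ with bipartition $(X^{(k-1)},Y^{(k-1)})$, edge-colour set $[1,q_{k-1}]$ where $q_{k-1}=\sum_{i\le k-1}e_i+(k-2)$, vertex colours in $[1,q_{k-1}+1]$, and the colour $q_{k-1}+1$ attained at a vertex of $Y^{(k-1)}$ (the base case $n=1$ being $G_1=H_1$, $g^{(1)}=f_1$). To attach $H_k$ I would: raise $g^{(k-1)}$ uniformly by $M_k$; keep $f_k$ unchanged on $X_k$ and raise it by $q_{k-1}+1$ on $Y_k$, so that the edges of $H_k$ receive the block $[q_{k-1}+2,\,q_{k-1}+1+e_k]$; and add to $E^*$ the single edge joining the $X_k$-vertex coloured $M_k$ to the (now shifted) $Y^{(k-1)}$-vertex coloured $q_{k-1}+1+M_k$, so that this edge receives the colour $q_{k-1}+1$ which fills the unique remaining gap. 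With $X^{(k)}=X_k\cup X^{(k-1)}$ and $Y^{(k)}=Y_k\cup Y^{(k-1)}$ one obtains a coloring $g^{(k)}$ with edge-colour set $[1,q_k]$, $q_k=q_{k-1}+e_k+1$, and iterating through $k=2,\dots,n$ produces $G=E^*\oplus^n_{k=1}H_k$ together with the sought coloring; here $E^*$ consists of $n-1$ edges forming a path $H_1$--$H_2$--$\cdots$--$H_n$.

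The routine verifications are that $g^{(k)}$ satisfies every clause of Definition \ref{defn:weak-graceful-total-colorings}: $g^{(k)}(uv)=|g^{(k)}(u)-g^{(k)}(v)|\ne 0$ on each old edge (a uniform shift preserves differences), on each new $H_k$-edge (a uniform shift of $f_k$ on $Y_k$), and on the joining edge by the choice above; the edge-colour set is exactly $[1,q_k]$ as computed; and the set-ordered inequality $\max g^{(k)}(X^{(k)})<\min g^{(k)}(Y^{(k)})$ is preserved, since it reduces to $M_k+\max g^{(k-1)}(X^{(k-1)})<\min\{\,m_k+q_{k-1}+1,\ M_k+\min g^{(k-1)}(Y^{(k-1)})\,\}$, which follows from $M_k<m_k$, from $\max g^{(k-1)}(X^{(k-1)})\le q_{k-1}+1$, and from the inductive set-orderedness. \textbf{The main obstacle is the joining-edge step}: showing that, at every stage, the prescribed pair of already-coloured vertices really exists and that no vertex colour ever leaves $[1,q_k+1]$. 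This is precisely where the hypothesis $e_1\ge e_2\ge\cdots\ge e_n$ is used — it gives $e_k\le q_{k-1}$ for $k\ge 2$ (indeed $q_{k-1}\ge e_{k-1}\ge e_k$), so the required colour $q_{k-1}+1+M_k$ lies in the shifted range $[1+M_k,\,q_{k-1}+1+M_k]$ of $g^{(k-1)}$ and equals its maximum, hence is attained (and, as noted, at a $Y^{(k-1)}$-vertex); together with $M_k\le e_k$ the same inequality makes $\max g^{(k)}(Y_k)=q_{k-1}+e_k+2=q_k+1$ the new global maximum, so all colours remain in $[1,q_k+1]$. Finally, I would remark that the whole argument can equally be packaged as an $n$-fold iteration of the two-component construction of Lemma \ref{thm:H-connecte-two-graphs}, transported from set-ordered gracefully total colorings to set-ordered weak gracefully total colorings.
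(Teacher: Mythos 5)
Your proof is correct and takes essentially the approach the paper intends: the paper gives no detailed argument for Lemma \ref{thm:graceful-total-coloring}, only the recursive scheme $G_1=E^*_{r,1}\oplus (H_1\cup H_2)$, $G_m=E^*_{r,m}\oplus (G_{m-1}\cup H_{m+1})$ and the illustration in Fig.\ref{fig:add-edges}, and your induction that attaches one $H_k$ at a time, shifts the old coloring by $M_k$, shifts $f_k$ on $Y_k$ by $q_{k-1}+1$, and fills the gap $q_{k-1}+1$ with the single joining edge is exactly that scheme carried out in full detail. One minor remark: the ordering $e_1\geq e_2\geq \cdots \geq e_n$ is not really what makes the joining step work -- the existence of the required endpoints is already secured by your invariant that the colour $q_{k-1}+1$ is attained at a $Y^{(k-1)}$-vertex together with $M_k\leq e_k$ from your normalization -- so your argument in fact proves the lemma without that hypothesis, and the sentence attributing an essential role to it is a harmless misattribution rather than a gap.
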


\begin{thm} \label{thm:hand-in-hand-coinciding-lemma}
If each each $H_k$ of a base $\textbf{\textrm{H}}=(H_k)^n_{k=1}$ is a connected bipartite graph and admits a set-ordered weak gracefully total coloring, we have

(1) there exists a hand-in-hand graph $\textbf{\textrm{G}}=(H_{k-1}\odot H_k)^n_{k=2}$, such that $G$ is a connected bipartite graph admitting a set-ordered weak gracefully total coloring.

(2) there exists a single-series graph $H=E^*\ominus ^n_{k=1}H_{k}$, such that $H$ is a connected bipartite graph admitting a set-ordered weak gracefully total coloring.
\end{thm}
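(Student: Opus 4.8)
The plan is to bootstrap both statements from Lemma~\ref{thm:graceful-total-coloring}, which already produces an edge-joined graph $E^*\oplus^n_{k=1}H_k$ carrying a set-ordered weak gracefully total coloring, together with the set-ordered machinery (and the VERTEX-INTEGRATING algorithm / Theorem~\ref{thm:adding-graphs-to-trees}) used elsewhere in the paper. The essential point in a set-ordered weak gracefully total coloring $f_k$ of $H_k$ is that $f_k$ is constant-free only on $V$, that $f_k(E(H_k))=[1,e_k]$, and that $\max f_k(X_k)<\min f_k(Y_k)$ for the bipartition $(X_k,Y_k)$ of $H_k$. The standard trick, exactly as in Lemma~\ref{thm:H-connecte-two-graphs} and Lemma~\ref{thm:graceful-total-coloring}, is to re-scale and translate the colorings of $H_2,H_3,\dots$ so that the edge-color blocks $[1,e_n],[e_n+1,\ldots],\dots$ concatenate into $[1,\sum e_k]$ (possibly with a few extra slots for the new connecting edges or coincided vertices), while keeping all the $X$-colors strictly below all the $Y$-colors globally.

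First I would prove (2), the single-series graph $H=E^*\ominus^n_{k=1}H_k$, since it is closest to Lemma~\ref{thm:graceful-total-coloring}: here $E^*$ is a path-like set of $n-1$ connecting edges $x_{k-1}x_k$, so $|E(H)|=\sum^n_{k=1}e_k+(n-1)$. Order the blocks so that the graph with the largest edge count sits at one end, process the $H_k$ in decreasing $e_k$ order, and choose the $n-1$ connecting edges to realize exactly the $n-1$ ``gap'' values between consecutive blocks; the set-ordered condition guarantees one can always pick the right pair of vertices (one from an $X$-class, one from a $Y$-class with the prescribed color difference) to attach, just as Ways~1--6 in the proof of Lemma~\ref{thm:H-connecte-two-graphs} do. Bipartiteness is preserved because each $H_k$ is bipartite and a single connecting edge joins an $X$-vertex of one block to a $Y$-vertex of the next, so the global $(X,Y)$-partition is consistent.

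Next I would prove (1), the hand-in-hand graph $\textbf{G}=(H_{k-1}\odot H_k)^n_{k=2}$. This is the ``edge-protected vertex-coinciding'' analogue of (2): instead of adding $n-1$ edges we coincide $n-1$ pairs of vertices, so $|E(G)|=\sum^n_{k=1}e_k$ and no extra color slots are needed. The coloring is obtained from the one in (2) by the same rescale-and-translate scheme, except that when we coincide a vertex $x_{k-1}$ of $H_{k-1}$ with $x_k$ of $H_k$ we must arrange $f_{k-1}(x_{k-1})=f_k(x_k)$; because we are free to translate each block independently and both endpoints can be chosen within the appropriate $X$- or $Y$-class, the set-ordered structure again makes this possible, and coinciding an $X$-vertex with an $X$-vertex (resp. $Y$ with $Y$) keeps $G$ bipartite with $\max f(X)<\min f(Y)$. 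The resulting $f$ is proper because within each $H_k$ it is inherited from the proper $f_k$ and the only new adjacencies meet at the coincided vertices, whose incident edges come from two different blocks with disjoint edge-color ranges.

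The main obstacle I expect is the bookkeeping at the coincided/joined vertices: one must verify that after all translations the edge-color set is \emph{exactly} $[1,|E(G)|]$ with no repetition and no gap, and simultaneously that the forced identifications $f_{k-1}(x_{k-1})=f_k(x_k)$ are compatible with the block translations chosen to achieve that edge-color set. In other words, the degrees of freedom (choice of which vertex in each block to use as a hinge, plus the translation constants) must be shown sufficient to satisfy all constraints at once; this is where a careful induction on $n$, peeling off $H_n$ and invoking the case of $n-1$ blocks, together with an explicit inequality on $f(x_k)$ in the spirit of the hypothesis ``$f(x_k)<1+B+\sum^{s-k}_{r=1}e_{s-r+1}$'' of Theorem~\ref{thm:adding-graphs-to-trees}, will carry the argument. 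Everything else is the routine linear rescaling already exploited repeatedly in the paper.
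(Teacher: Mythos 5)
The paper itself gives no written proof of Theorem~\ref{thm:hand-in-hand-coinciding-lemma}; it is only illustrated by the worked examples in Fig.\ref{fig:0-basic}--Fig.\ref{fig:add-edges} and leans on the same machinery as Lemma~\ref{thm:graceful-total-coloring} and Lemma~\ref{thm:H-connecte-two-graphs}. Your block-translation plan (uniform shifts leave every edge color $|f(y)-f(x)|$ unchanged, shifting only the $Y$-side raises all edge colors by a constant, so the blocks' edge ranges can be stacked into one consecutive interval) is exactly the construction the paper intends, so the route is right.

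Two things, however, keep your text from being a proof. First, for the single-series graph you claim the set-ordered condition ``guarantees one can always pick the right pair of vertices (one from an $X$-class, one from a $Y$-class with the prescribed color difference)''; this is false as stated, since the existing vertex colors of two blocks need not realize a prescribed difference, and the escape used in Lemma~\ref{thm:H-connecte-two-graphs} --- introducing new vertices $z_j$ carrying the needed colors --- is unavailable here because each connecting edge must join a vertex of $H_{k-1}$ to a vertex of $H_k$. The correct move is the one you already invoke for the hand-in-hand case: fix the two endpoints first and then solve for the uniform translation constant of the incoming block so that the connecting edge receives exactly the missing color (resp.\ so that the two hinge vertices receive equal colors). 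Second, the entire content of the theorem is the verification you defer to ``a careful induction'': that the translation constants and hinge choices can be made simultaneously so that the global $X$-colors stay strictly below the global $Y$-colors, all vertex colors stay inside $[1,q+1]$ for the new edge number $q$, and each hinge genuinely lies in the two consecutive blocks $H_{k-1},H_k$ required by the definitions. This is not automatic: with a careless choice of hinges the uniform shifts drift upward along the chain and the $X$-part of a late block can overtake the $Y$-minimum of an early one. It does work if, when attaching $H_k$, you take the hinge in the accumulated graph at its maximal $X$-color (one checks inductively that this maximum is attained inside the $H_{k-1}$ part), take the hinge in $H_k$ at its minimal $X$-color, and shift the old $Y$-side up by $e_k$; carrying that computation and recording the extremal-color location in the induction hypothesis is the missing, essential step.
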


\begin{thm} \label{thm:coincide-graceful-total-coloring}
If each each $H_k$ of a base $\textbf{\textrm{H}}=(H_k)^n_{k=1}$ is a connected bipartite graph and admits a set-ordered weak gracefully total coloring, and another connected bipartite graph $F$ of $n$ vertices admits a set-ordered weak gracefully total coloring, then the $F$-graph $F\odot ^n_{k=1}H_k$ is a connected bipartite graph and admits a set-ordered weak gracefully total coloring.
\end{thm}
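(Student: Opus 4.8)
The plan is to prove the statement by an explicit colouring construction, carried out along the lines of the VERTEX-INTEGRATING algorithm behind Theorem~\ref{thm:adding-graphs-to-trees}, but in the weaker ``weak-gracefully'' world of Definition~\ref{defn:weak-graceful-total-colorings}, where the absence of a properness requirement is exactly what makes the side-condition ``$f(x_k)<1+B+\sum e_{s-r+1}$'' of that theorem unnecessary, so no such hypothesis appears here. First I would normalise the data: if a connected bipartite $(p,q)$-graph $W$ with bipartition $(X,Y)$ admits a set-ordered weak gracefully total colouring $h$, then $h(E(W))=[1,q]$, $h(uv)=|h(u)-h(v)|$ and $h(V(W))\subseteq[1,q+1]$ force the edge coloured $q$ to join a vertex coloured $1$ to a vertex coloured $q+1$; with $\max h(X)<\min h(Y)$ this gives $\min h(X)=1$, $\max h(Y)=q+1$, and every edge $uv$ with $u\in X$, $v\in Y$ satisfies $h(v)-h(u)=h(uv)>0$. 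I would apply this to $F$ (bipartition $(X_F,Y_F)$, $|X_F|=s$, $|Y_F|=t$, $s+t=n$, $q_F$ edges, colouring $f$) and to each $H_k$ (bipartition $(X_k,Y_k)$, $e_k$ edges, colouring $f_k$), and set $A=\sum_{k\le s}e_k$, $B=\sum_{k>s}e_k$, $q_G=q_F+A+B$; replacing $f$ by its dual $x\mapsto q_F+2-f(x)$ if necessary, I may assume $A\ge B$.

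For the construction I would first choose the glued vertices: for each $x_k\in X_F$ coincide $x_k$ with a vertex $a_k\in X_k$ with $f_k(a_k)=1$, and for each $x_k\in Y_F$ coincide $x_k$ with a vertex $b_k\in Y_k$ with $f_k(b_k)=e_k+1$. Then $G:=F\odot^{n}_{k=1}H_k$ is connected (every piece is connected and attached to $F$ at one vertex) and bipartite with bipartition $(X_G,Y_G)$, where $X_G$ gathers $X_F$ with the $X$-sides of all the $H_k$ and $Y_G$ gathers $Y_F$ with their $Y$-sides; the choice above makes the gluing respect the two sides. The colouring $g$ of $G$ is built by a linear shift on each piece $P\in\{F,H_1,\dots,H_n\}$: add a constant $c_P$ to all $X$-side colours and $d_P$ to all $Y$-side colours with $\delta_P:=d_P-c_P\ge 0$; then each edge $uv$ of $P$ with $u$ on the $X$-side gets $g(uv)=g(v)-g(u)=f_P(uv)+\delta_P>0$, so $P$'s edge colours form the consecutive block $[1+\delta_P,\,e_P+\delta_P]$ and each edge automatically keeps distinct endpoint colours. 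I then lay the $n+1$ blocks end-to-end over $[1,q_G]$ in the order dictated by the VERTEX-INTEGRATING algorithm (the $Y$-glued $H_k$ contributing the low-end material, $F$ the middle, the $X$-glued $H_k$ the high end), which fixes each $\delta_P$; finally the shift constants are pinned down by forcing the glued-vertex colours to agree, namely (with $c_F=0$, so $g$ extends $f$ on $X_F$) $c_{H_k}=f(x_k)-1$ for $k\le s$ and $d_{H_k}=f(x_k)+d_F-e_k-1$ for $k>s$, where $d_F$ is taken so that $g(Y_F)$ sits just above $\max g(X_G)$; together with the already fixed $\delta$'s this determines all constants.

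The remaining work — and the place I expect the real friction — is the verification that this single choice simultaneously yields $g(V(G))\subseteq[1,q_G+1]$ and the set-ordered inequality $\max g(X_G)<\min g(Y_G)$, despite $f$ on $F$ being only a weak colouring (its vertex colours may repeat and may even coincide with edge colours). The point is that the $X$-side colours of an $X$-glued $H_k$ depend only on the fixed small constant $c_{H_k}=f(x_k)-1$, its $Y$-side colours are pushed into the top band by the large $\delta_{H_k}$ of a high block, and symmetrically for $Y$-glued $H_k$; carrying this out reduces to a short list of inequalities among $f(x_k)$, $e_k$, $q_F$, $A$ and $B$ — e.g.\ that $f(x_k)-1+\max f_k(X_k)$ stays under the common threshold $\alpha_G$ for $k\le s$ and the mirror bound for $k>s$ — which hold because $\max f_k(X_k)\le e_k$, $f(x_k)\le\max f(X_F)$ for $k\le s$, $f(x_k)\le q_F+1$ for $k>s$, $A\ge B$, and $q_G=q_F+A+B$; if convenient one may first permute the vertices within $X_F$ and within $Y_F$ (which merely relabels the $H_k$) to simplify the bookkeeping. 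Once these inequalities are checked, $g$ is a total colouring with $g(uv)=|g(u)-g(v)|$, $g(u)\ne g(v)$ on every edge, $g(E(G))=[1,q_G]$, $g(V(G))\subseteq[1,q_G+1]$ and $\max g(X_G)<\min g(Y_G)$, i.e.\ a set-ordered weak gracefully total colouring of the connected bipartite graph $G$, proving the theorem; the argument is an exact parallel of Lemma~\ref{thm:graceful-total-coloring} and Theorem~\ref{thm:hand-in-hand-coinciding-lemma}.
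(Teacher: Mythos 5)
Your overall framework (glue the $1$-coloured $X$-vertex of $H_k$ to $x_k\in X_F$ and the $(e_k+1)$-coloured $Y$-vertex to $x_k\in Y_F$, then recolour every piece by an $X$-shift $c_P$ and a $Y$-shift $d_P$ so that the edge colours of each piece form a consecutive block and the blocks tile $[1,q_G]$) is a reasonable strategy, and since the paper states this theorem without any proof there is nothing to compare it to; but as written your construction has a genuine gap. First, the prescription is over-determined: once you declare that the blocks are laid end-to-end (which fixes $\delta_F$, hence $d_F=c_F+\delta_F$ with $c_F=0$) you cannot also ``take $d_F$ so that $g(Y_F)$ sits just above $\max g(X_G)$''; raising $d_F$ slides $F$'s block upward and destroys the tiling. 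Second, under the natural reading ($c_F=0$, $d_F=B$, layout $Y$-glued/$F$/$X$-glued from bottom to top) the set-ordered inequality simply fails, and the ``short list of inequalities'' you invoke is not implied by the facts you cite. Concretely, take $F=K_2$ with $f(x_1)=1$, $f(x_2)=2$, let $H_1=K_{1,3}$ (leaves coloured $1,2,3$ in $X_1$, centre coloured $4$) be glued at its $1$-coloured leaf to $x_1$, and let $H_2=K_2$ be glued at its $2$-coloured end to $x_2$. Then $A=3\ge B=1$, $q_G=5$, and your recipe gives $c_{H_1}=0$, $d_F=1$, so the leaves of $H_1$ receive colours $2,3$ while $x_2$ receives $f(x_2)+d_F=3$: an $X_G$-vertex and a $Y_G$-vertex share colour $3$, so $\max g(X_G)<\min g(Y_G)$ fails, even though $\max f_1(X_1)\le e_1$, $f(x_1)\le\max f(X_F)$, $A\ge B$ all hold. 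The underlying obstruction is that the $Y$-colours of $F$ rise only by $B$, while the $X$-colours of an $X$-glued piece rise to about $f(x_k)-1+\max f_k(X_k)$, which can reach $\beta_F+B$ whenever $e_k$ is large compared with $B$.

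Note also that the escape hatch you mention (``permute the vertices within $X_F$ and within $Y_F$'') cannot repair this: in the example there is only one $X$-glued piece, so no permutation is available, and what has to change is the block layout itself (there, putting $H_1$'s block at the bottom, $H_2$'s in the middle and $F$'s at the top, with $c_F=0$, $d_F=4$, does succeed). A symmetric obstruction appears at the low end for the $Y$-glued pieces when $f(Y_F)$ is sparse. So the missing content of the proof is precisely a rule for choosing the block order (and possibly non-contiguous or otherwise correlated shifts, since in general the restriction of a valid colouring of $F\odot^n_{k=1}H_k$ to a piece need not be a shifted copy of that piece's colouring) together with a verification that it satisfies, for every input, the chain of constraints $g(V(G))\subseteq[1,q_G+1]$, tiling of $[1,q_G]$, and $\max g(X_G)<\min g(Y_G)$; your proposal asserts these inequalities but the asserted justification is demonstrably insufficient.
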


The examples for illustrating Lemma \ref{thm:graceful-total-coloring} are shown in Fig.\ref{fig:0-basic}, Fig.\ref{fig:hand-in-hand}, Fig.\ref{fig:pearl-string}, Fig.\ref{fig:hand-in-hand-1} and Fig.\ref{fig:add-edges}. Lemma \ref{thm:graceful-total-coloring} will produce a recursive connected graph $G_m$, where $G_1=E^*_{r,1}\oplus (H_1\cup H_2)$, $G_2=E^*_{r,2}\oplus (G_1\cup H_3)$ and $G_m=E^*_{r,m}\oplus (G_{m-1}\cup H_{m+1})$, $m\in [2,n-1]$. For all sets $\textbf{\textrm{E}}^*=\{(E^*_{r,k})^n_{k=1},r\in [1,M]\}$ of possible edges, we have a \emph{set-ordered weak gracefully total coloring graphic lattice} based on lattice base $\textbf{\textrm{H}}=(H_k)^n_{k=1}$
\begin{equation}\label{eqa:lattice-graph-join-operation}
\textbf{\textrm{L}}(\textbf{\textrm{E}}^*\oplus \textbf{\textrm{H}})=\left \{G_{n-1}=E^*_{r,n}\oplus (G_{n-1}\cup H_{n}), E^*_{r,n}\in \textbf{\textrm{E}}^*, H_{k}\in \textbf{\textrm{H}}\right \}
\end{equation} such that each graph of the lattice $\textbf{\textrm{L}}(\textbf{\textrm{E}}^*\oplus \textbf{\textrm{H}})$ defined in (\ref{eqa:lattice-graph-join-operation}) is a single-series graph and admits a set-ordered weak gracefully total coloring.

Let $(H_{i_1}, H_{i_2},\dots, H_{i_n})$ be a permutation of $(H_1, H_2,\dots, H_n)$, according to Theorem \ref{thm:hand-in-hand-coinciding-lemma}, each connected bipartite graph $(H_{i_{k-1}}\odot H_{i_{k}})^n_{k=2}$ admits a set-ordered weak gracefully total coloring, so let $P_{ermu}(\textbf{\textrm{H}})$ to be the set of all permutations of $(H_1, H_2,\dots, H_n)$, we call the following set
\begin{equation}\label{eqa:hand-in-hand-lattices-1}
\textbf{\textrm{L}}(\bowtie P_{ermu}(\textbf{\textrm{H}}))=\{(H_{i_{k-1}}\bowtie H_{i_{k}})^n_{k=2},~(H_{i_1}, H_{i_2},\dots, H_{i_n})\in P_{ermu}(\textbf{\textrm{H}})\}
\end{equation} a \emph{set-ordered weak gracefully total coloring hand-in-hand graphic lattice}. For each $k\in [1,n]$, we select $c_k$ connected bipartite graphs $H_{k}$ from the base $\textbf{\textrm{H}}=(H_k)^n_{k=1}$, and construct hand-in-hand graph $\bowtie^n_{k=1}c_kH_{k}$ by Theorem \ref{thm:hand-in-hand-coinciding-lemma}, we have a set-ordered weak gracefully total coloring hand-in-hand graphic lattice
\begin{equation}\label{eqa:hand-in-hand-lattices-2}
\textbf{\textrm{L}}(\bowtie \textbf{\textrm{H}})=\{\bowtie^n_{k=1}c_kH_{k}, c_k\in Z^0,H_{k}\in \textbf{\textrm{H}}\}.
\end{equation} We have a lattice
\begin{equation}\label{eqa:bigger-hand-in-hand}
\textbf{\textrm{L}}(\bowtie \{\textbf{\textrm{H}}\})=\textbf{\textrm{L}}(\bowtie P_{ermu}(\textbf{\textrm{H}}))\cup \textbf{\textrm{L}}(\bowtie \textbf{\textrm{H}}),
\end{equation} based on $\textbf{\textrm{H}}=(H_k)^n_{k=1}$, $\textbf{\textrm{L}}(\bowtie P_{ermu}(\textbf{\textrm{H}}))$ defined in (\ref{eqa:hand-in-hand-lattices-1}) and $\textbf{\textrm{L}}(\bowtie \textbf{\textrm{H}})$ defined in (\ref{eqa:hand-in-hand-lattices-2}). Moreover, Theorem \ref{thm:coincide-graceful-total-coloring} induces the following $F$-lattice:
\begin{equation}\label{eqa:lattice-graph-coincide-operation}
\textbf{\textrm{L}}(\textbf{\textrm{F}}^*\odot \textbf{\textrm{H}})=\left \{F\odot ^n_{k=1}H_k, H_{k}\in \textbf{\textrm{H}}, F\in \textbf{\textrm{F}}^*\right \}
\end{equation}

\begin{figure}[h]
\centering
\includegraphics[width=15cm]{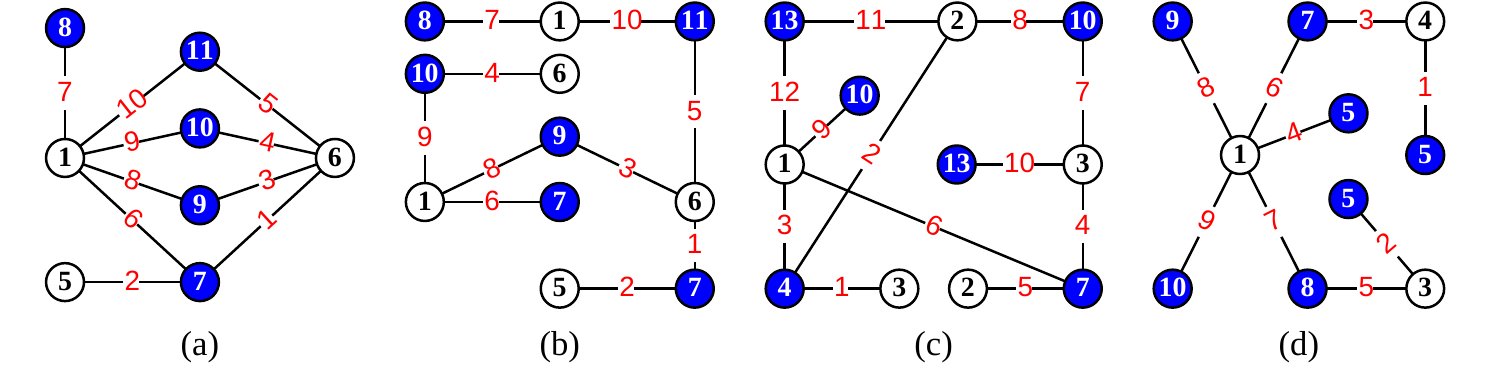}\\
{\small \caption{\label{fig:0-basic} Four graphs admitting set-ordered weak gracefully total colorings.}}
\end{figure}

\begin{figure}[h]
\centering
\includegraphics[width=15cm]{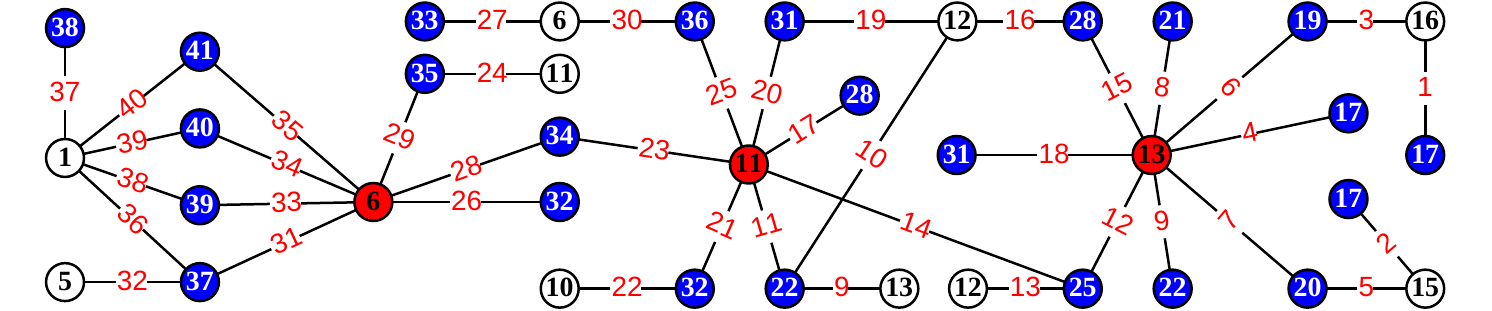}\\
{\small \caption{\label{fig:hand-in-hand} A hand-in-hand graph made by four graphs shown in Fig.\ref{fig:0-basic} admits a set-ordered weak gracefully total coloring.}}
\end{figure}

\begin{figure}[h]
\centering
\includegraphics[width=15cm]{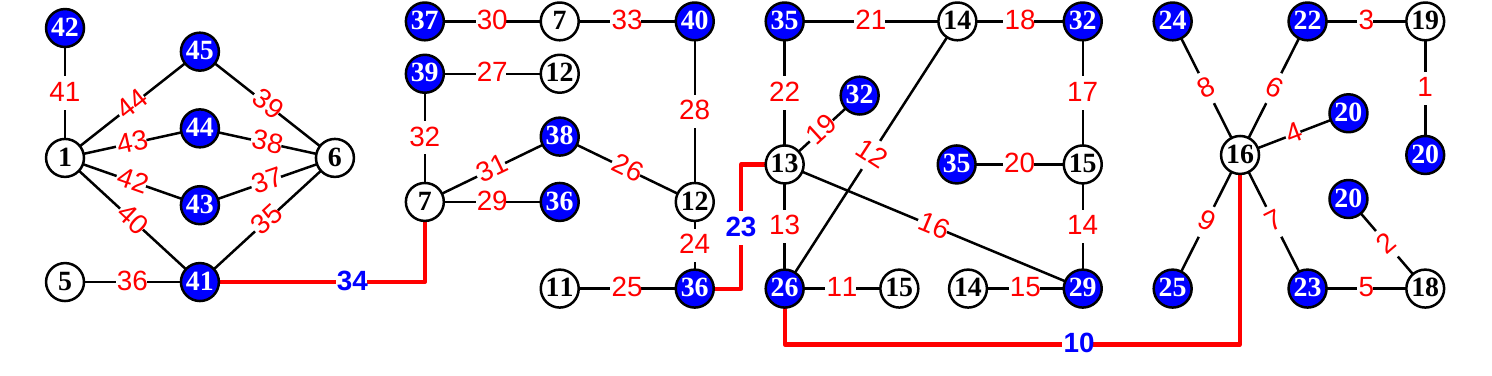}\\
{\small \caption{\label{fig:pearl-string} A single-series graph made by four graphs shown in Fig.\ref{fig:0-basic} admits a set-ordered weak gracefully total coloring.}}
\end{figure}

\begin{figure}[h]
\centering
\includegraphics[width=15cm]{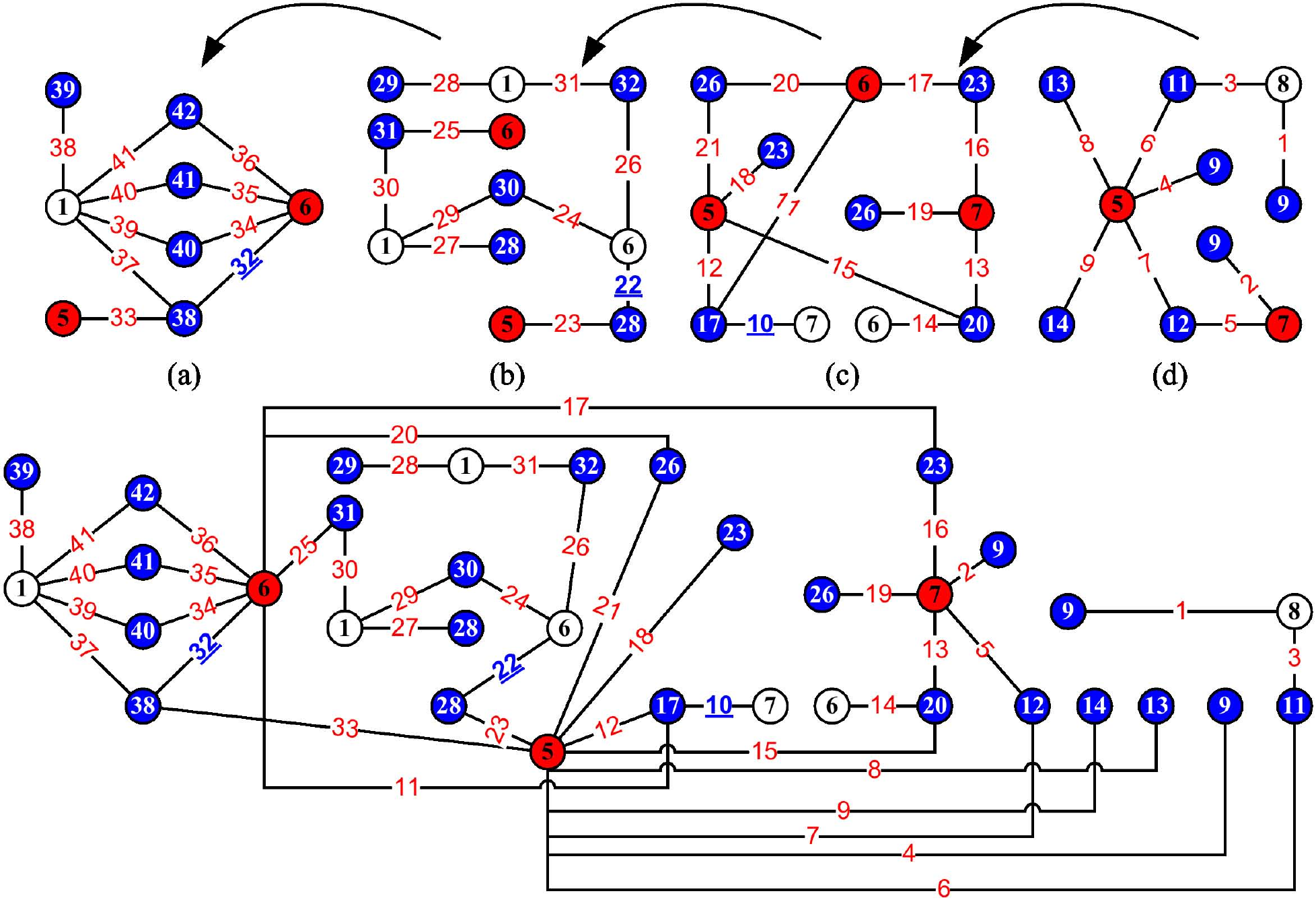}\\
{\small \caption{\label{fig:hand-in-hand-1} Another hand-in-hand graph made by four graphs shown in Fig.\ref{fig:0-basic} admits a set-ordered weak gracefully total coloring.}}
\end{figure}

\begin{figure}[h]
\centering
\includegraphics[width=15cm]{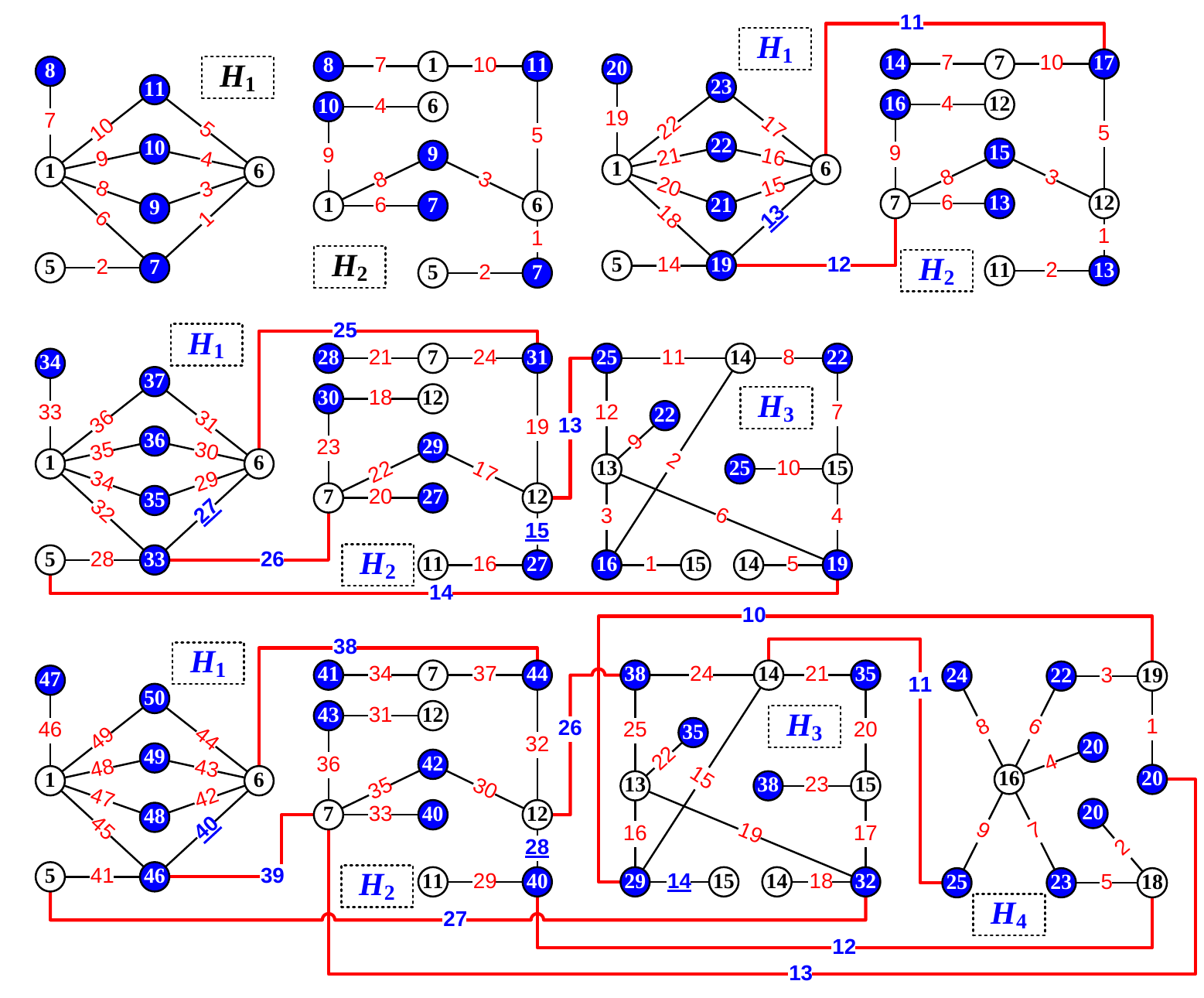}\\
{\small \caption{\label{fig:add-edges} A scheme for an idea of proving Lemma \ref{thm:graceful-total-coloring}.}}
\end{figure}

\begin{problem}\label{qeu:weak-gracefully-total-coloring}
Let $v_{\textrm{weakgtc}}(G)=\min_f\{|f(V(G))|\}$ over all weak gracefully total colorings of a connected graph $G$. Then any tree $T$ with diameter at least three golds $v_{\textrm{weakgtc}}(T)\geq \frac{1}{2}|V(T)|$.
\end{problem}


\section{Graphic group lattices}

Academician Xiaogang Wen of the United States, in his article ``New revolution in physics modern mathematics in condensed matter physics'', pointed out: ``But since the quantum revolution, especially since the second quantum revolution, we are more and more aware that our world is not continuous, but discrete. We should look at the world from the perspective of algebra.'' And moreover, the development of modern mathematics is exactly from continuous to discrete, from analysis to algebra. It also puts forward that \emph{discrete algebra} is more essential than \emph{continuous analysis}.

In \cite{Yao-Sun-Zhao-Li-Yan-2017}, \cite{Yao-Mu-Sun-Zhang-Wang-Su-Ma-IAEAC-2018} and \cite{Sun-Zhang-Zhao-Yao-2017} the authors proposed new-type groups (like Abelian additive finite groups), called every-zero graphic groups made by Topsnut-gpws of topological coding.

\subsection{Graphic groups}

We introduce a mixed every-zero graphic groups by the following algorithm:

\textbf{MIXED Graphic-group Algorithm}. Let $f: V(G)\cup E(G)\rightarrow [1,M]$ be a $W$-type proper total coloring of a graph $G$ such that two color sets $f(V(G))=\{f(x):x\in V(G)\}$ and $f(E(G))=\{f(uv):uv\in E(G)\}$ hold a collection of restrictions. We define a $W$-type proper total coloring $g_{s,k}$ by setting $g_{s,k}(x)=f(x)+s~(\bmod~p)$ for every vertex $x\in V(G)$, and $g_{s,k}(uv)=f(uv)+k~(\bmod~q)$ for each edge $uv\in E(G)$. Let $F_f(G)$ be the set of graphs $G_{s,k}$ admitting $W$-type proper total colorings $g_{s,k}$ defined above, and each graph $G_{s,k}\cong G$ in topological structure. We define an additive operation ``$\oplus$'' on the elements of $F_f(G)$ in the following way: Take arbitrarily an element $G_{a,b}\in F_f(G)$ as the \emph{zero}, and $G_{s,k}\oplus G_{i,j}$ is defined by the following computation
\begin{equation}\label{eqa:mixed-graphic-group}
[g_{s,k}(w)+g_{i,j}(w)-g_{a,b}(w)]~(\bmod~\varepsilon)=g_{\lambda,\mu}(w)
\end{equation}
for each element $w\in V(G)\cup E(G)$, where $\lambda=s+i-a~(\bmod~p)$ and $\mu=k+j-b~(\bmod~q)$. As $w=x\in V(G)$, the form (\ref{eqa:mixed-graphic-group}) is just equal to
\begin{equation}\label{eqa:mixed-graphic-group11}
[g_{s,k}(x)+g_{i,j}(x)-g_{a,b}(x)]~(\bmod~p)=g_{\lambda,\mu}(x)
 \end{equation}and as $w=uv\in E(G)$, the form (\ref{eqa:mixed-graphic-group}) is defined as follows:
 \begin{equation}\label{eqa:mixed-graphic-group22}
[g_{s,k}(uv)+g_{i,j}(uv)-g_{a,b}(uv)]~(\bmod~q)=g_{\lambda,\mu}(uv).
\end{equation}
Especially, as $s=i=a=\alpha$, we have $\bmod~\varepsilon=\bmod~q$ in (\ref{eqa:mixed-graphic-group}), and
\begin{equation}\label{eqa:mixed-graphic-group-edge}
[g_{\alpha,k}(uv)+g_{\alpha,j}(uv)-g_{\alpha,b}(uv)]~(\bmod~q)=g_{\alpha,\mu}(uv)
\end{equation} for $uv\in E(G)$; and when $k=j=b=\beta$, so $\bmod~\varepsilon=\bmod~p$ in (\ref{eqa:mixed-graphic-group}), we have
\begin{equation}\label{eqa:mixed-graphic-group-vertex}
[g_{s,\beta}(x)+g_{i,\beta}(x)-g_{a,\beta}(x)]~(\bmod~p)=g_{\lambda,\beta}(x),~x\in V(G).
\end{equation}

We have the following facts on the graph set $F_f(G)$:
\begin{asparaenum}[(1) ]
\item \emph{Zero.} Each graph $G_{a,b}\in F_f(G)$ can be determined as the \emph{zero} such that $G_{s,k}\oplus G_{a,b}=G_{s,k}$.

\item \emph{Uniqueness.} For $G_{s,k}\oplus G_{i,j}=G_{c,d}\in F_f(G)$ and $G_{s,k}\oplus G_{i,j}=G_{r,t}\in F_f(G)$, then $c=s+i-a~(\bmod~p)=r$ and $c=k+j-b~(\bmod~q)=t$ under the zero $G_{a,b}$.

\item \emph{Inverse.} Each graph $G_{s,k}\in F_f(G)$ has its own \emph{inverse} $G_{s',k'}\in F_f(G)$ such that $G_{s,k}\oplus G_{s',k'}=G_{a,b}$ determined by $[g_{s,k}(w)+g_{i,j}(w)]~(\bmod~\varepsilon)=2g_{a,b}(w)$ for each element $w\in V(G)\cup E(G)$.
\item \emph{Associative law.} Each triple $G_{s,k},G_{i,j},G_{c,d}\in F_f(G)$ holds $G_{s,k}\oplus [G_{i,j}\oplus G_{c,d}]=[G_{s,k}\oplus G_{i,j}]\oplus G_{c,d}$.
\item \emph{Commutative law.} $G_{s,k}\oplus G_{i,j}=G_{i,j}\oplus G_{s,k}$.
\end{asparaenum}

Thereby, we call $F_f(G)=\{G_{s,k}:s\in [0,p],k\in [0,q]\}$ an \emph{every-zero mixed graphic group} based on the additive operation ``$\oplus$'' defined in (\ref{eqa:mixed-graphic-group}), and write this group by $\textbf{\textrm{G}}=\{F_f(G);\oplus\}$. $\textbf{\textrm{G}}$ contains $pq$ graphs. There two particular \emph{every-zero graphic subgroups} $\{F_{v}(G);\oplus\}\subset \textbf{\textrm{G}}$ and $\{F_{e}(G);\oplus\}\subset \textbf{\textrm{G}}$, where $F_{v}(G)=\{G_{s,0}:s\in [0,p]\}$ and $F_{e}(G)=\{G_{0,k}:k\in [0,q]\}$. In fact, $\textbf{\textrm{G}}$ contains at least $(p+q)$ every-zero graphic subgroups.

For two every-zero graphic groups $\{F_f(G);\oplus\}$ and $\{F_h(H);\oplus\}$, suppose that there are graph homomorphisms $G_i\rightarrow H_i$ defined by $\theta_i:V(G_i)\rightarrow V(H_i)$ with $i\in [1,m]$. We define $\theta=\bigcup^m_{i=1}\theta_i$, and have an \emph{every-zero graphic group homomorphism} $\{F_f(G);\oplus\}\rightarrow \{F_h(H);\oplus\}$ from a set $F_f(G)$ to another set $F_h(H)$ (Ref. \cite{Bing-Yao-Hongyu-Wang-graph-homomorphisms-2020}).

\subsection{Graphic group lattices}

\textbf{A technique for encrypting networks.} Using an every-zero mixed graphic group $\textbf{\textrm{G}}$ to encrypt a connected graph $H$ by a mapping $\varphi:V(H)\cup E(H)\rightarrow \textbf{\textrm{G}}$ such that $\varphi(uv)=\varphi(u)\oplus _k \varphi(v)$ under a zero $G_{a,b}\in \textbf{\textrm{G}}$. So, we get another graph obtained by the set $\{\varphi(x)$, $\varphi(uv):x\in V(H)$, $ uv\in E(H)\}$, and join some vertices of the graph $G_u=\varphi(u)\in \textbf{\textrm{G}}$ with some vertices of the graph $G_{uv}=\varphi(uv)\in \textbf{\textrm{G}}$ by edges, and join some vertices of the graph $G_v=\varphi(v)\in \textbf{\textrm{G}}$ with some vertices of the graph $G_{uv}=\varphi(uv)\in \textbf{\textrm{G}}$ by edges, we write this graph as $H\triangleleft_{a,b} |^{(p,q)}_{(s,k)}a_{s,k}G_{s,k}$, where $G_{s,k}\in \textbf{\textrm{G}}$, each $a_{s,k}\in Z^0$ and $\sum^{(p,q)}_{(s,k)}a_{s,k}\geq 1$.

In general, $H\triangleleft_{a,b} |^{(p,q)}_{(s,k)}a_{s,k}G_{s,k}\not \cong H\triangleleft_{c,d} |^{(p,q)}_{(s,k)}a_{s,k}G_{s,k}$ for two different zeros $G_{a,b},G_{c,d}\in \textbf{\textrm{G}}$. Moreover, there are many different ways to join the graph $G_{uv}=\varphi(uv)$ with the graph $G_u=\varphi(u)$ (resp. $G_v=\varphi(v)$) together by edges. We call
\begin{equation}\label{eqa:graphic-group-lattice}
{
\begin{split}
\textbf{\textrm{L}}(F_{m,n}\triangleleft_{a,b} \textbf{\textrm{G}})=\left \{H\triangleleft_{a,b} |^{(p,q)}_{(s,k)}a_{s,k}G_{s,k}: a_{s,k}\in Z^0, G_{s,k}\in \textbf{\textrm{G}}, H\in F_{m,n}\right \}
\end{split}}
\end{equation} a \emph{graphic group lattice} based on a zero $G_{a,b}\in \textbf{\textrm{G}}$, where $\sum^{(p,q)}_{(s,k)}a_{s,k}\geq 1$, and $F_{m,n}$ is a set of graphs of vertex number $\leq m$ and edge number $\leq n$. Moreover, we call \begin{equation}\label{eqa:c3xxxxx}
\textbf{\textrm{L}}(F_{m,n}\triangleleft\textbf{\textrm{G}})=\bigcup_{G_{a,b}\in \textbf{\textrm{G}}}\textbf{\textrm{L}}(F_{m,n}\triangleleft_{a,b} \textbf{\textrm{G}}).
\end{equation}
a \emph{graphic $\textbf{\textrm{G}}$-group lattice}, since each element of the every-zero mixed graphic group $\textbf{\textrm{G}}$ can refereed as the zero of the additive operation ``$\oplus$''.

Since two graphs $G_{a,b},G_{c,d}\in \textbf{\textrm{G}}$ form two homomorphically equivalent graph homomorphisms $G_{a,b}\leftrightarrow G_{c,d}$, then we have a \emph{graphic group lattice homomorphism} $\textbf{\textrm{L}}(F_{m,n}\triangleleft_{a,b} \textbf{\textrm{G}})\leftrightarrow \textbf{\textrm{L}}(F_{m,n}\triangleleft_{c,d} \textbf{\textrm{G}})$ from a set to another set.

\subsection{Networks encrypted by graphic lattices}

We encrypt a network/graph $H\in F_{m,n}$ by constructing another graph $H\triangleleft_{a,b} |^{(p,q)}_{(s,k)}a_{s,k}G_{s,k}$ with $a_{s,k}\in Z^0$ and $G_{s,k}\in \textbf{\textrm{G}}$ defined in (\ref{eqa:graphic-group-lattice}), and we call this way to be one of \emph{graphic group colorings}.

\begin{thm} \label{thm:graphic-group-in-graphic-group}
Each graph $W$ of a graphic $\textbf{\textrm{G}}$-group lattice $\textbf{\textrm{L}}(F_{m,n}\triangleleft \textbf{\textrm{G}})$ forms an every-zero graphic group $\{F_h(W); \oplus\}$ too.
\end{thm}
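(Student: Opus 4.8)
The plan is to observe that Theorem~\ref{thm:graphic-group-in-graphic-group} is a closure/self‑similarity property: the \textbf{MIXED Graphic-group Algorithm} manufactures an every‑zero mixed graphic group out of \emph{any} graph equipped with a suitable $W$-type proper total coloring, so it suffices to re-run that algorithm with the encrypted graph $W$ playing the role of $G$. Concretely, I would fix an arbitrary $W\in \textbf{\textrm{L}}(F_{m,n}\triangleleft \textbf{\textrm{G}})$, say $W=H\triangleleft_{a,b} |^{(p,q)}_{(s,k)}a_{s,k}G_{s,k}$ with $H\in F_{m,n}$ and $G_{s,k}\in \textbf{\textrm{G}}=\{F_f(G);\oplus\}$, record $p_W=|V(W)|$ and $q_W=|E(W)|$, and then construct $\{F_h(W);\oplus\}$ exactly as in the equations (\ref{eqa:mixed-graphic-group})--(\ref{eqa:mixed-graphic-group22}), only with the moduli $p,q$ replaced by $p_W,q_W$.

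The first step I would carry out is to pin down a concrete $W$-type proper total coloring $h:V(W)\cup E(W)\to[1,M_W]$ of $W$. Such a coloring is available from the encryption itself: on each block $G_{s,k}$ it is the $W$-type total coloring $g_{s,k}$ supplied by the algorithm, on the embedded copy of $H$ it is the total coloring carried by $H$, and on the finitely many joining edges one assigns colors so that adjacency constraints are respected. If this inherited coloring does not already have compact color sets, I would normalize it so that $h(V(W))\subseteq[1,p_W]$ and $h(E(W))\subseteq[1,q_W]$ (the analogue of the ``collection of restrictions'' imposed on $f(V(G))$, $f(E(G))$ in the hypotheses of the algorithm); this is possible for $W$ viewed as an abstract colored graph. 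The point of the normalization is that distinct integers in $[1,p_W]$ are pairwise incongruent modulo $p_W$, and likewise for $[1,q_W]$ modulo $q_W$.

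With $(W,h)$ normalized, I would define for $s\in[0,p_W]$, $k\in[0,q_W]$ the total coloring $h_{s,k}$ by $h_{s,k}(x)=h(x)+s\ (\bmod\ p_W)$ for $x\in V(W)$ and $h_{s,k}(uv)=h(uv)+k\ (\bmod\ q_W)$ for $uv\in E(W)$, and let $W_{s,k}\cong W$ be the resulting colored graph; by the incongruence observation above, each $h_{s,k}$ is again a proper $W$-type total coloring, so $F_h(W)=\{W_{s,k}\}$ is well defined. Then I would equip $F_h(W)$ with the additive operation $\oplus$ of (\ref{eqa:mixed-graphic-group}): pick any $W_{a',b'}\in F_h(W)$ as the zero and set $W_{s,k}\oplus W_{i,j}=W_{\lambda,\mu}$ with $\lambda=s+i-a'\ (\bmod\ p_W)$ and $\mu=k+j-b'\ (\bmod\ q_W)$. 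The verification of the five required properties (existence of the zero, uniqueness of the sum, inverses, associativity, commutativity) is then word-for-word the one already done for $\textbf{\textrm{G}}$, because all five reduce to the corresponding facts in the finite abelian group $\mathbb{Z}_{p_W}\times\mathbb{Z}_{q_W}$ acting on $F_h(W)$ by index shifts; and since $W_{a',b'}$ is arbitrary, $\{F_h(W);\oplus\}$ is every-zero. As this holds for every $W$ in the lattice, the theorem follows.

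The step I expect to be the real obstacle is the first one: making rigorous that the encrypted graph $W$ carries a $W$-type proper total coloring $h$ whose vertex-color and edge-color sets can be taken compact ($\subseteq[1,p_W]$ and $\subseteq[1,q_W]$), so that the index-shifted colorings $h_{s,k}$ stay proper total colorings. One must check that gluing the blocks $G_{s,k}$ to the copy of $H$ and adding the joining edges does not force a clash, and that the resulting coloring admits the needed normalization while preserving whatever extra ``$W$-type'' constraints are in play; once this normalization is secured, the remainder of the argument is a purely arithmetic formality inherited from the construction of $\textbf{\textrm{G}}$.
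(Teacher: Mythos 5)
The paper states Theorem~\ref{thm:graphic-group-in-graphic-group} without any proof, so there is no written argument to compare against; your proposal supplies exactly the reasoning the paper implicitly intends, namely that the MIXED Graphic-group Algorithm is not specific to the original graph $G$ and can be re-run on any $W\in \textbf{\textrm{L}}(F_{m,n}\triangleleft \textbf{\textrm{G}})$ with moduli $p_W=|V(W)|$, $q_W=|E(W)|$, after which the five every-zero group properties reduce to arithmetic in $\mathbb{Z}_{p_W}\times\mathbb{Z}_{q_W}$, just as for $\{F_f(G);\oplus\}$. Your flagged ``obstacle'' is smaller than you fear: the theorem only needs some total coloring $h$ of $W$, and since the paper's notion of proper total coloring only demands that adjacent vertices and adjacent edges receive distinct colors, any coloring with pairwise distinct vertex colors in $[1,p_W]$ and pairwise distinct edge colors in $[1,q_W]$ (which every graph trivially admits) stays proper under all the constant shifts modulo $p_W$ and $q_W$. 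What neither you nor the paper establishes is that a specific extra ``$W$-type'' constraint (e.g.\ a graceful-type edge rule) survives the modular shifts; but the paper's own MIXED Graphic-group Algorithm asserts this without proof for $G$ itself, so your argument is at the same level of rigor as the source and is acceptable as the intended proof.
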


\begin{thm} \label{thm:graphic-group-encrypting-every}
Any simple graph $H$ admits a graphic group coloring, i.e., there is a graphic group $\{F_f(G);\oplus\}$ such that $H$ is encrypted as $H\triangleleft_{a,b} |^{(p,q)}_{(s,k)}a_{s,k}G_{s,k}$ with $a_{s,k}\in Z^0$, $G_{s,k}\in \{F_f(G);\oplus\}$ and $\sum^{(p,q)}_{(s,k)}a_{s,k}\geq 1$.
\end{thm}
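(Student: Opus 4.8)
\textbf{Proof proposal for Theorem~\ref{thm:graphic-group-encrypting-every}.}
The plan is to produce, for an arbitrary simple graph $H$, an explicit graph $G$ together with a $W$-type proper total coloring $f$ of $G$ so that the every-zero mixed graphic group $\{F_f(G);\oplus\}$ built by the MIXED Graphic-group Algorithm supplies enough elements to encrypt $H$ in the sense of the formula $H\triangleleft_{a,b} |^{(p,q)}_{(s,k)}a_{s,k}G_{s,k}$. The key observation is that the encryption operation ``$\triangleleft_{a,b}$'' only requires a mapping $\varphi:V(H)\cup E(H)\rightarrow \{F_f(G);\oplus\}$ with $\varphi(uv)=\varphi(u)\oplus_k\varphi(v)$ under some fixed zero $G_{a,b}$, and then a choice of edges joining the ambient copies; it does not require $G$ itself to be complicated. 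So the whole problem reduces to: (i) choosing $G$ so that it admits \emph{some} $W$-type proper total coloring (for which I will take $W$-type to be the ordinary proper total coloring, since Definition~\ref{defn:combinatoric-definition-total-coloring} and the surrounding text allow the $W$-type family to include the traditional proper total coloring), and (ii) checking that the index set $[0,p]\times[0,q]$ of $\{F_f(G);\oplus\}$ is large enough, and the additive structure flexible enough, to realize the required consistency $\varphi(uv)=\varphi(u)\oplus_k\varphi(v)$ simultaneously for all edges of $H$.

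First I would fix the zero. Choose $G$ to be, say, a path or cycle large enough that $p=|V(G)|\geq |V(H)|$ and $q=|E(G)|\geq |V(H)|$; such $G$ trivially admits a proper total coloring $f$ by the finiteness of $\chi''(G)$, and then $F_f(G)=\{G_{s,k}:s\in[0,p),k\in[0,q)\}$ is an abelian group under $\oplus$ with respect to any chosen zero $G_{a,b}$ — these group axioms are exactly items (1)--(5) listed after the algorithm and may be invoked directly. Next I would encode $V(H)$: enumerate $V(H)=\{u_1,\dots,u_n\}$ and send $u_i\mapsto G_{a+i,\,b}$ (working mod $p$ on the first index), i.e.\ push all the ``vertex color shift'' into the first coordinate and keep the second coordinate equal to $b$. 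Then for an edge $u_iu_j\in E(H)$ the required value is $\varphi(u_i)\oplus_k\varphi(u_j)=G_{a+i+j-a,\ b}=G_{a+i+j,\ b}\in F_f(G)$ by the addition rule (\ref{eqa:mixed-graphic-group}) and its specialization (\ref{eqa:mixed-graphic-group-vertex}); so $\varphi$ is well defined on edges, each $\varphi(u_iu_j)$ lies in $F_f(G)$, and the defining relation $\varphi(uv)=\varphi(u)\oplus_k\varphi(v)$ holds by construction. Having the map $\varphi$, I then form $H\triangleleft_{a,b} |^{(p,q)}_{(s,k)}a_{s,k}G_{s,k}$ by replacing each $u$ and each $uv$ by the corresponding member of $F_f(G)$ and inserting any edges between $G_u=\varphi(u)$ and $G_{uv}=\varphi(uv)$, $G_v=\varphi(v)$ and $G_{uv}=\varphi(uv)$ (for definiteness, one incident edge on each side); setting $a_{s,k}$ to be the number of times index $(s,k)$ occurs among these images gives $a_{s,k}\in Z^0$ with $\sum_{(s,k)} a_{s,k}\geq 1$, which is precisely what the statement demands.

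The remaining step is bookkeeping: verify that the index shifts stay inside $[0,p)$ and $[0,q)$ (handled by taking $p$ large and reducing modulo $p$ in the group, which is legitimate since the group operation is already defined modulo $p$ and modulo $q$), and verify that the inserted ambient edges do not destroy whatever total-coloring property is claimed for the output — but here the claim is only that $H$ ``admits a graphic group coloring'', which by the paragraph preceding the theorem is \emph{defined} to be exactly the existence of such an encryption $H\triangleleft_{a,b}|^{(p,q)}_{(s,k)}a_{s,k}G_{s,k}$, so no extra coloring condition on the compound graph needs to be checked. The main obstacle, and the only place where care is genuinely needed, is the simultaneity in item (ii): a single zero $G_{a,b}$ must make $\varphi(uv)=\varphi(u)\oplus_k\varphi(v)$ hold for \emph{every} edge at once. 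The construction above sidesteps this by using a homomorphism-like assignment $u_i\mapsto G_{a+i,b}$ that is linear in the group, so the edge values are forced and automatically consistent; if one instead wanted $\varphi$ injective on $V(H)$ with prescribed images, one would have to argue that the abelian group $F_f(G)$ (which is $\cong \mathbb{Z}_p\times\mathbb{Z}_q$) contains enough ``generic'' $n$-tuples, and this is where a counting or pigeonhole argument, together with choosing $G$ with $p,q$ sufficiently large and coprime, would be deployed. I expect the clean version of the proof to stay with the linear assignment and thereby make the obstacle evaporate, at the cost of a mildly larger $G$.
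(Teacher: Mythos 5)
Your argument is sound relative to the paper's definitions, but there is nothing in the paper to compare it against: Theorem~\ref{thm:graphic-group-encrypting-every} is stated there without any proof, so you have in effect supplied the missing justification. Your route is the natural one. The encryption requirement is only the existence of a map $\varphi:V(H)\cup E(H)\rightarrow \{F_f(G);\oplus\}$ with $\varphi(uv)=\varphi(u)\oplus_k\varphi(v)$ under one fixed zero, and your observation that a ``linear'' vertex assignment $u_i\mapsto G_{a+i,b}$ forces consistent edge values via the addition rule (\ref{eqa:mixed-graphic-group-vertex}) settles this; in fact the same reasoning shows that \emph{any} assignment of group elements to $V(H)$ extends, since the edge images are simply defined as the group sums, so even the care you take with injectivity and with choosing $p,q\geq |V(H)|$ is more than the statement demands. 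Two small caveats, both inherited from the paper rather than introduced by you: first, the claim that the shifted colorings $g_{s,k}(x)=f(x)+s~(\bmod~p)$, $g_{s,k}(uv)=f(uv)+k~(\bmod~q)$ remain $W$-type proper total colorings is asserted in the MIXED Graphic-group Algorithm but not verified (wraparound can in principle merge colors of adjacent or incident elements), and you are right to invoke it as given since the theorem presupposes the group exists; second, the paper's encryption paragraph speaks of a connected $H$ while the theorem says any simple graph, a discrepancy your construction handles without change since $\varphi$ is defined vertexwise and edgewise. So the proposal stands as a correct and essentially complete proof of a statement the paper leaves unproved.
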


\subsection{Tree-like networks encrypted by graphic groups}

We apply \emph{graphic group colorings} to encrypt tree-like networks under restrictions as follows:

\begin{thm} \label{thm:graphic-group-total-coloring11}
Any tree $T$ with maximum degree $\Delta$ admits a \emph{graphic group total coloring} $\theta:V(T)\cup E(T)\rightarrow \{F_f(G);\oplus\}$ by any specified zero $G_k\in \{F_f(G);\oplus\}$ and $|\{F_f(G);\oplus\}|\geq \Delta$, such that $\theta(uv)\neq \theta(uw)$ for any pair of adjacent edges $uv,uw$ of $T$.
\end{thm}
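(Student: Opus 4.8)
The plan is to induct on the order of the tree $T$, using the fact that the underlying index set $[0,p]\times[0,q]$ of the every-zero mixed graphic group $\{F_f(G);\oplus\}$ behaves like an abelian group under the addition defined in (\ref{eqa:mixed-graphic-group}), so that assigning group elements to vertices and edges is essentially the same bookkeeping problem as a proper edge coloring of $T$ by at least $\Delta$ colors, plus an independent choice at each vertex. First I would fix an arbitrary zero $G_k=G_{a,b}\in\{F_f(G);\oplus\}$ and recall that the requirement $\theta(uv)\neq\theta(uw)$ for adjacent edges $uv,uw$ only constrains the edge values at each vertex, while the vertex values $\theta(x)$ are unconstrained relative to each other (the coloring need not be ``proper'' at vertices). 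The key observation is that a classical theorem (Vizing, or the trivial greedy bound for trees) gives a proper edge coloring of $T$ with $\Delta$ colors; I want to realize those $\Delta$ colors as $\Delta$ distinct elements of $\{F_f(G);\oplus\}$, which is possible because $|\{F_f(G);\oplus\}|\geq\Delta$.

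The main steps, in order, would be: (1) pick a leaf $u_0$ of $T$ adjacent to $v_0$, and let $T'=T-u_0$, which still has maximum degree $\Delta'\leq\Delta$ and by the induction hypothesis admits a graphic group total coloring $\theta'$ with respect to the chosen zero $G_k$ such that adjacent edges receive distinct group elements; (2) extend $\theta'$ to $T$ by first choosing $\theta(u_0v_0)$ to be any element of $\{F_f(G);\oplus\}$ not among the at most $\Delta-1$ values $\theta'(v_0w)$ already used on edges at $v_0$ — such an element exists since $|\{F_f(G);\oplus\}|\geq\Delta$; (3) choose $\theta(u_0)$ to be any element of $\{F_f(G);\oplus\}$ (here there is no constraint, since $u_0$ is a leaf and we do not demand $\theta(u_0)\neq\theta(v_0)$), for instance the zero $G_k$ itself, or if one wishes the total coloring to be ``induced'' in the sense of (\ref{eqa:graphic-group-lattice}) one may instead set $\theta(u_0)$ so that $\theta(u_0v_0)=\theta(u_0)\oplus_k\theta(v_0)$, which determines $\theta(u_0)$ uniquely by the Uniqueness and Inverse properties of $\{F_f(G);\oplus\}$ listed after the MIXED Graphic-group Algorithm; (4) verify that the extended $\theta$ still satisfies $\theta(uv)\neq\theta(uw)$ for every pair of adjacent edges, which holds at $v_0$ by the choice in step (2) and everywhere else by the induction hypothesis. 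The base case is a single edge or star $K_{1,\Delta}$, where one assigns $\Delta$ pairwise distinct group elements to the $\Delta$ edges (possible by $|\{F_f(G);\oplus\}|\geq\Delta$) and an arbitrary element to each vertex.

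The routine calculations I would omit are the verifications that $\{F_f(G);\oplus\}$ is genuinely a group with $G_k$ as identity (already asserted via the five facts Zero, Uniqueness, Inverse, Associative law, Commutative law following the algorithm) and that the modular arithmetic in (\ref{eqa:mixed-graphic-group11}) and (\ref{eqa:mixed-graphic-group22}) is compatible with these extensions. The part I expect to be the genuine obstacle is not the combinatorial extension — that is a standard greedy/inductive argument — but rather pinning down precisely \emph{which} flavor of ``graphic group total coloring'' is being claimed: if the statement merely asks for a map $\theta:V(T)\cup E(T)\to\{F_f(G);\oplus\}$ with distinct values on adjacent edges, the proof above is essentially complete; if it additionally requires $\theta(uv)=\theta(u)\oplus_k\theta(v)$ for every edge (an ``edge-determined'' group coloring, as the encryption construction in (\ref{eqa:graphic-group-lattice}) suggests), then in step (3) the vertex value $\theta(u_0)$ is forced, and one must check that iterating this forcing down a path of $T$ never creates a conflict — which it cannot, precisely because $T$ is a tree (no cycles) so the forced values propagate consistently with no closing constraint. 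I would therefore state the theorem in the edge-determined form and highlight acyclicity of $T$ as the crucial structural input, noting that the same argument fails for graphs containing cycles, which is exactly why the theorem is restricted to trees.
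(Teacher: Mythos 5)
Your proposal is correct, but there is nothing in the paper to compare it with: Theorem \ref{thm:graphic-group-total-coloring11} (like its companions Theorems \ref{thm:graphic-group-total-coloring22} and \ref{thm:graphic-group-total-coloring33}) is stated without any proof, so your greedy leaf-induction is supplying an argument the paper omits, and it does the job. Removing a leaf $u_0$ with support vertex $v_0$, at most $\Delta-1$ group elements are forbidden for the new edge, so $|\{F_f(G);\oplus\}|\geq\Delta$ guarantees an admissible choice, and the vertex values are unconstrained; the base case is trivial. One presentational remark on your discussion of the edge-determined reading ($\theta(uv)=\theta(u)\oplus_k\theta(v)$ as in the encryption construction): since for a fixed zero the operation (\ref{eqa:mixed-graphic-group}) makes $\{F_f(G);\oplus\}$ a finite abelian group, translation by $\theta(v_0)$ is a bijection, so choosing the new edge color freely is literally equivalent to choosing $\theta(u_0)$ freely, and by cancellation the adjacent-edge condition becomes the requirement that vertices sharing a neighbor receive distinct vertex colors; each leaf attachment is therefore a purely local choice and no ``propagation down a path'' ever needs to be checked — acyclicity enters only in the sense that no closing constraint can arise, as you say. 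That is a simplification of your step (3), not a gap; the proof stands under either reading of ``graphic group total coloring.''
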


\begin{thm} \label{thm:graphic-group-total-coloring22}
Any tree $T$ admits a \emph{graphic group total coloring} $\theta$ based on $\{F_f(G);\oplus\}$, by any specified zero $G_k\in \{F_f(G);\oplus\}$, such that $\theta(E(T))=[1,|T|-1]$.
\end{thm}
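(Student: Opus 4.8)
## Proof proposal for Theorem \ref{thm:graphic-group-total-coloring22}

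The plan is to reduce the statement to the classical fact that every tree admits a \emph{graceful labelling of its edge set} — more precisely, that every tree $T$ on $|T|$ vertices has a proper edge coloring using exactly the color set $[1,|T|-1]$ in which each color is used exactly once (this is the edge-set consequence of the Graceful Tree Conjecture being trivially available here, since we only need the edge labels to be a bijection onto $[1,|T|-1]$, not the vertex labels to be distinct; in fact any bijective edge coloring will do, and a tree has $|T|-1$ edges). First I would fix an every-zero graphic group $\{F_f(G);\oplus\}$ of order $\varepsilon \ge |T|-1$, which exists by the MIXED Graphic-group Algorithm: taking $G$ with $q\ge |T|-1$ edges (or $p\ge |T|-1$ vertices) produces the subgroup $\{F_e(G);\oplus\}$ or $\{F_v(G);\oplus\}$ of the required size, and then enumerate its elements as $G_{(1)},G_{(2)},\dots,G_{(\varepsilon)}$ with a chosen zero $G_k$.

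The core step is to transport a bijective edge labelling $\lambda: E(T)\to [1,|T|-1]$ into a graphic group total coloring. I would set $\theta(uv) = G_{(\lambda(uv))}$ for each edge $uv\in E(T)$, so that $\theta(E(T))$, read off via the index map, is exactly $[1,|T|-1]$ — this is the content of the conclusion $\theta(E(T))=[1,|T|-1]$ once we identify $G_{(i)}$ with $i$. Then I must assign $\theta$ to the vertices so that the assignment is consistent with the group structure in the sense required by a graphic group total coloring, namely $\theta(uv)=\theta(u)\oplus_k \theta(v)$ under the zero $G_k$ (this is the compatibility condition implicit in the notion "graphic group total coloring" used in Theorem \ref{thm:graphic-group-total-coloring11} and in the encryption construction $\varphi(uv)=\varphi(u)\oplus_k\varphi(v)$). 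I would do this by rooting $T$ at a leaf, assigning the root an arbitrary element $\theta(r)=G_k$ (the zero), and then processing vertices in BFS/DFS order: when we reach a child $v$ of an already-colored vertex $u$ along edge $uv$, we are forced to take $\theta(v)$ to be the unique group element with $\theta(u)\oplus_k\theta(v)=G_{(\lambda(uv))}$; uniqueness of such an element follows from the group axioms (existence of inverses and the "Uniqueness" property established for $F_f(G)$). Since $T$ is a tree there are no cycles, so no consistency conflict can arise, and every vertex gets a well-defined color. Finally I would check propriety: for adjacent edges $uv,uw$ at a common vertex $u$, $\theta(uv)=\theta(u)\oplus_k\theta(v)$ and $\theta(uw)=\theta(u)\oplus_k\theta(w)$, and these are equal only if $\theta(v)=\theta(w)$; but $\lambda$ is injective on $E(T)$ so $G_{(\lambda(uv))}\ne G_{(\lambda(uw))}$, forcing $\theta(v)\ne\theta(w)$ automatically, hence $\theta(uv)\ne\theta(uw)$ — in fact propriety on edges is immediate from injectivity of $\lambda$ without even needing the vertex argument.

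The main obstacle I anticipate is purely definitional rather than combinatorial: the theorem statement does not spell out precisely what "graphic group total coloring $\theta$ based on $\{F_f(G);\oplus\}$" requires beyond the edge-set condition, so the proof must pin down the intended compatibility relation (presumably $\theta(uv)=\theta(u)\oplus_k\theta(v)$) and verify that with this relation a tree genuinely admits such a coloring realizing $\theta(E(T))=[1,|T|-1]$. Once that relation is fixed, the argument is a routine rooted-tree propagation as sketched. A secondary technical point is ensuring the identification of the group $F_f(G)$ (or a subgroup of it) with the index set $[1,|T|-1]$ is legitimate — this is handled by choosing $G$ large enough via the MIXED Graphic-group Algorithm and relabelling the subgroup elements, and by noting that the subgroups $\{F_v(G);\oplus\}$ and $\{F_e(G);\oplus\}$ are themselves every-zero graphic groups, so the construction stays inside the framework. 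I would also remark that, unlike Theorem \ref{thm:graphic-group-total-coloring11}, here we do not need $|\{F_f(G);\oplus\}|\ge \Delta$ but rather $\ge |T|-1$, which is the stronger requirement coming from demanding that all $|T|-1$ edge colors be distinct and fill out a consecutive interval.
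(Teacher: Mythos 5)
The paper states Theorem \ref{thm:graphic-group-total-coloring22} without giving any proof, so there is no written argument of the author's to compare yours against; judged on its own, your construction is sound and is essentially the mechanism the paper itself records as Theorem \ref{thm:graphic-group-total-coloring33}: choose any bijection $\lambda:E(T)\rightarrow[1,|T|-1]$, set $\theta(uv)=G_{\lambda(uv)}$, root $T$, fix the root's color, and propagate vertex colors along edges using the fact that in the every-zero group the equation $\theta(u)\oplus_k\theta(v)=G_{\lambda(uv)}$ has a unique solution for $\theta(v)$ (indices behave like $Z_n$ under the zero $G_k$), with acyclicity of $T$ preventing any conflict. This edge-first route is the right one here: the only proved neighbouring result in this direction, Theorem \ref{thm:ve-graceful-evaluated-coloring}, goes vertex-first via a set-ordered graceful labelling and therefore could not cover ``any tree,'' whereas your argument needs nothing about $T$ beyond its being a tree. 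Two caveats you already partly flag are worth keeping explicit: (i) the statement tacitly assumes $|\{F_f(G);\oplus\}|\geq |T|-1$ and that $\theta(E(T))=[1,|T|-1]$ is read as the edge index set $\{j:\theta(xy)=G_j\}$, which is how the paper uses this notation in Problem Ggco-2; and (ii) the paper's ``graphic group total coloring'' is only ever constrained on adjacent edges (Theorem \ref{thm:graphic-group-total-coloring11}), so your propriety check on edges suffices, but if one insisted on full total-coloring propriety ($\theta(u)\neq\theta(v)$ and $\theta(u)\neq\theta(uv)$) the forced propagation could violate it and you would need to use the remaining freedom (choice of root color and of the bijection $\lambda$) to repair it — that stronger claim is not what the theorem asserts, so this is a caveat, not a gap.
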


\begin{thm} \label{thm:graphic-group-total-coloring33}
The edges of any tree $T$ can be colored arbitrarily by $\varphi: E(T)\rightarrow \{F_f(G);\oplus\}$ under any specified zero $G_k\in \{F_f(G);\oplus\}$, and then $\varphi$ can be expended to $V(T)$, such that $\varphi(uv)=\varphi(u)\oplus \varphi(v)$ for each edge $uv\in E(T)$.
\end{thm}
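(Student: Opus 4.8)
\textbf{Proof proposal for Theorem~\ref{thm:graphic-group-total-coloring33}.}
The plan is to color the edges of the tree $T$ completely freely by an arbitrary map $\varphi: E(T)\rightarrow \{F_f(G);\oplus\}$, fix any element $G_k\in \{F_f(G);\oplus\}$ as the zero of the additive operation ``$\oplus$'', and then show that $\varphi$ can be consistently extended to the vertices of $T$ so that $\varphi(uv)=\varphi(u)\oplus\varphi(v)$ for every edge $uv\in E(T)$. The key algebraic fact I would use is that $\{F_f(G);\oplus\}$, once a zero $G_k$ is fixed, is an abelian group (the five facts --- zero, uniqueness, inverse, associative law, commutative law --- verified in the MIXED Graphic-group Algorithm subsection); in particular every element has a well-defined inverse and the operation $G_{s,k}\oplus G_{i,j}$ is a genuine group operation, so ``$\ominus$'' (adding the inverse) makes sense.

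First I would root $T$ at an arbitrary vertex $r$ and set $\varphi(r)=G_k$ (the chosen zero). Then I would process the vertices of $T$ in breadth-first (or depth-first) order: whenever a vertex $v$ is reached from its already-colored parent $u$ along the tree edge $uv$, the required relation $\varphi(uv)=\varphi(u)\oplus\varphi(v)$ forces $\varphi(v)$ to be the unique element of $\{F_f(G);\oplus\}$ satisfying $\varphi(u)\oplus\varphi(v)=\varphi(uv)$, namely $\varphi(v)=\varphi(uv)\ominus\varphi(u)$ (subtraction in the abelian group, using the inverse of $\varphi(u)$). Existence and uniqueness of this $\varphi(v)$ is exactly the group structure. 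Since $T$ is a tree, every non-root vertex has a unique parent and a unique tree path to $r$, so this procedure assigns each vertex exactly one value and terminates after $|V(T)|-1$ steps, visiting every edge exactly once.

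The main thing to check --- and the only place where acyclicity is essential --- is consistency: I must confirm that the relation $\varphi(uv)=\varphi(u)\oplus\varphi(v)$ holds for \emph{every} edge, not merely the parent edges used to define the vertex colors. But in a tree each edge $uv$ \emph{is} a parent edge for exactly one of its endpoints (the child), and $\varphi$ of the child was defined precisely to satisfy that equation; there are no further constraints and no cycles around which an inconsistency could accumulate. Hence no obstruction arises, and I do not even need the coloring to be ``proper'' in any sense --- the statement only asks for the magic-type relation at edges. I expect the writeup to be short; the one subtle point worth stating explicitly is that the extension depends on the chosen root and zero, but the theorem only claims existence of \emph{some} extension, so an arbitrary rooting suffices. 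Finally I would remark that this same argument shows the extension is unique once $\varphi$ on $E(T)$ and the value $\varphi(r)$ at a single vertex are fixed, which connects the result to Theorems~\ref{thm:graphic-group-total-coloring11} and \ref{thm:graphic-group-total-coloring22}.
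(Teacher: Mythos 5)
Your proposal is correct, and in fact the paper states Theorem~\ref{thm:graphic-group-total-coloring33} without giving any proof, so there is no argument of the author's to compare against; your write-up would fill that gap. The essential points are exactly the ones you isolate: once the zero $G_k$ is fixed, $\{F_f(G);\oplus\}$ is an abelian group (the Zero, Uniqueness, Inverse, Associative and Commutative facts listed after the MIXED Graphic-group Algorithm), so the equation $\varphi(u)\oplus x=\varphi(uv)$ has a unique solution $x$; rooting $T$, assigning the root any value (the zero, say), and propagating along parent edges then defines $\varphi$ on $V(T)$, and acyclicity ensures that each edge imposes exactly one constraint, namely the one used to define the child's color, so no inconsistency can arise. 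Your closing remarks are also accurate: no properness is claimed in the statement, and the extension is unique once the edge coloring and the value at one vertex are fixed, which is a slightly stronger observation than the theorem itself.
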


\begin{thm}\label{thm:ve-graceful-evaluated-coloring}
\cite{Yao-Zhao-Zhang-Mu-Sun-Zhang-Yang-Ma-Su-Wang-Wang-Sun-arXiv2019} Let $\{F_p(G),\oplus\}$ be an every-zero $\varepsilon$-group, where $F_p(G)=\{G_1,G_2,\dots, G_p\}$, and $\varepsilon$-group is one of every-zero Topcode-matrix groups (Topcode-groups), every-zero number string groups, every-zero Topsnut-gpw groups and Hanzi-groups. If a tree $T$ of $p$ vertices admits a set-ordered graceful labelling, then $T$ admits a graceful graphic group labelling based on $\{F_p(G),\oplus\}$.
\end{thm}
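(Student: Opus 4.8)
\textbf{Proof proposal for Theorem \ref{thm:ve-graceful-evaluated-coloring}.}

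The plan is to combine the structural information carried by a set-ordered graceful labelling of $T$ with the arithmetic structure of the every-zero $\varepsilon$-group $\{F_p(G),\oplus\}$. First I would fix a set-ordered graceful labelling $h$ of $T$ with bipartition $(X,Y)$ of $V(T)$, where $X=\{x_1,x_2,\dots,x_s\}$ and $Y=\{y_1,y_2,\dots,y_t\}$ with $s+t=p$, and $h(x_i)=i-1$ for $i\in[1,s]$, $h(y_j)=s-1+j$ for $j\in[1,t]$, so that $\max h(X)<\min h(Y)$ and the induced edge labels are $h(E(T))=[1,q]$ with $q=p-1$. The key observation is that every label used by $h$, on a vertex or an edge, lies in $[0,p-1]$, hence can serve as an index into the $p$-element group $F_p(G)=\{G_1,G_2,\dots,G_p\}$ (after the harmless shift by one to land in $[1,p]$).

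Next I would define the graphic group labelling $\Theta:V(T)\cup E(T)\rightarrow \{F_p(G),\oplus\}$ by translating $h$ through the group: pick any $G_a\in F_p(G)$ as the zero, and set $\Theta(w)=G_{1+h(w)}$ for each element $w\in V(T)\cup E(T)$, where the subscripts are read modulo $p$ in the indexing convention of $\{F_p(G),\oplus\}$. I must then verify the ``graceful graphic group'' conditions: that $\Theta$ restricted to $V(T)$ is injective (immediate, since $h$ is a proper labelling on $V(T)$, so the $p$ vertices receive $p$ distinct group elements, i.e.\ all of $F_p(G)$); that $\Theta$ restricted to $E(T)$ hits exactly the ``edge block'' $\{G_2,G_3,\dots,G_p\}$ corresponding to $[1,q]$; and that the group-compatibility $\Theta(x_iy_j)=\Theta(y_j)\ominus_a\Theta(x_i)$ (or the appropriate $\oplus$-relation under the zero $G_a$) holds for each edge. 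This last point is where the every-zero structure is essential: by the defining computation of $\{F_p(G),\oplus\}$, the element with index $1+h(y_j)-h(x_i)+ (a-1)\pmod p$ is exactly $\Theta(x_i)\oplus\Theta(y_j)$ with zero $G_a$, and since $h(y_j)-h(x_i)=h(x_iy_j)\in[1,q]$ because $h$ is graceful, the index $1+h(x_iy_j)$ is the one we assigned to the edge. The set-ordered hypothesis guarantees $h(y_j)>h(x_i)$, so no reduction ``wraps around'' badly and the edge-index stays in the admissible range $[2,p]$, which is what makes the coloring well-defined as a group labelling rather than merely a group coloring.

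The main obstacle I anticipate is bookkeeping the modular indexing: the group $\{F_p(G),\oplus\}$ is only defined up to the choice of zero, and the statement asks for the labelling to work ``based on $\{F_p(G),\oplus\}$'' for any specified zero, so I would need to check that changing the zero from $G_a$ to $G_b$ merely permutes the labels by a fixed group translation and therefore preserves injectivity on vertices, the edge-label set, and the additive edge relation simultaneously. A secondary subtlety is confirming that $\varepsilon$-group really behaves like $\mathbb{Z}_p$ in the relevant sense for all four listed instances (Topcode-groups, number-string groups, Topsnut-gpw groups, Hanzi-groups); I would handle this uniformly by appealing only to the abstract every-zero group axioms (zero, uniqueness, inverse, associativity, commutativity) established for $\{F_f(G);\oplus\}$ earlier in the excerpt, so that the proof is a single argument rather than four. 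Once the indexing is pinned down, the remaining verifications are routine and follow directly from the defining properties of a set-ordered graceful labelling together with the additive operation of the every-zero group.
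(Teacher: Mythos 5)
There is a genuine gap, and it sits exactly at the step you call ``where the every-zero structure is essential''. Note first that the paper does not prove this theorem itself (it is cited from \cite{Yao-Zhao-Zhang-Mu-Sun-Zhang-Yang-Ma-Su-Wang-Wang-Sun-arXiv2019}), so the comparison can only be with the construction its surrounding results point to. Under the paper's definition of the every-zero operation, with zero $G_a$ one has $G_i\oplus G_j=G_{\lambda}$ with $\lambda=i+j-a\pmod p$; the group-determined edge element is governed by the \emph{sum} of the vertex indices, not their difference. With your assignment $\Theta(w)=G_{1+h(w)}$, the element $\Theta(x_i)\oplus\Theta(y_j)$ has index $2+h(x_i)+h(y_j)-a\pmod p$, not the index $1+h(y_j)-h(x_i)+(a-1)$ you claim; the two agree only when $a\equiv 1+h(x_i)\pmod p$, which varies from edge to edge. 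Worse, the sums $h(x_i)+h(y_j)$ of an (even set-ordered) graceful labelling need not be distinct: for the path $P_4$ with the set-ordered graceful labelling $h(x_1)=0,h(x_2)=1,h(y_1)=3,h(y_2)=2$, the edges $x_1y_1$ and $x_2y_2$ both have vertex-label sum $3$, so your $\Theta$ cannot satisfy the compatibility $\Theta(uv)=\Theta(u)\oplus\Theta(v)$ together with the requirement that the edge index set be $[1,q]$. So the verification you describe as ``routine'' actually fails, and your reading of the set-ordered hypothesis (``no wrap-around'') is not its real role.

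The missing idea is the standard reflection trick that converts the graceful \emph{difference} into a consecutive \emph{sum}, which is precisely what the set-ordered hypothesis buys (and why the general graceful case remains conjectural in Problem~\ref{qeu:graphic-groups}). Writing $(X,Y)$ for the bipartition with $h(X)=[0,s-1]$ and $h(Y)=[s,p-1]$, put $g(x)=s-1-h(x)$ for $x\in X$ and $g(y)=h(y)$ for $y\in Y$; then $g$ is a bijection from $V(T)$ onto $[0,p-1]$ and $g(x)+g(y)=s-1+h(xy)$ for every edge $xy$, so the $q=p-1$ edge sums are $q$ consecutive integers, hence pairwise distinct modulo $p$ and missing exactly one residue. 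Assigning $\Theta(w)=G_{c+g(w)}$ for a suitable shift $c$ (equivalently, choosing the zero appropriately), the group-computed edge elements $\Theta(u)\oplus\Theta(v)$ then have distinct indices filling exactly $[1,q]$, while the vertices receive all $p$ group elements injectively. This is the route consistent with Theorem~\ref{thm:connections-several-labellings} (set-ordered graceful $\Leftrightarrow$ felicitous/harmonious-type labellings), and it works uniformly for all four $\varepsilon$-group instances via the abstract every-zero axioms, as you intended; your concern about changing the zero is legitimate but secondary, and is handled by solving $2c\equiv a+1-s\pmod p$ for the shift $c$ (always possible for odd $p$, and for the existence statement one may simply fix a convenient zero).
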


There is an every-zero graphic group $Y_{ear}=\{F_{f}(G^*);\oplus\}$ shown in Fig.\ref{fig:nian-graceful-group}, where $F_{f}(G^*)=\{G_i:~i\in [1,14]\}$ with $G_i$ admitting a labelling $f_i(x)=f(x)+i-1~(\bmod~14)$ and $G_i\cong G^*$ for $i\in [1,14]$ and, each edge $uv\in E(G_i)$ is labelled by $f_i(uv)=|f_i(u)-f_i(v)|$. See three tree-like networks admitting graceful graphic group labellings shown in Fig.\ref{fig:network-by-nian-graceful-group}.

\begin{figure}[h]
\centering
\includegraphics[width=16.2cm]{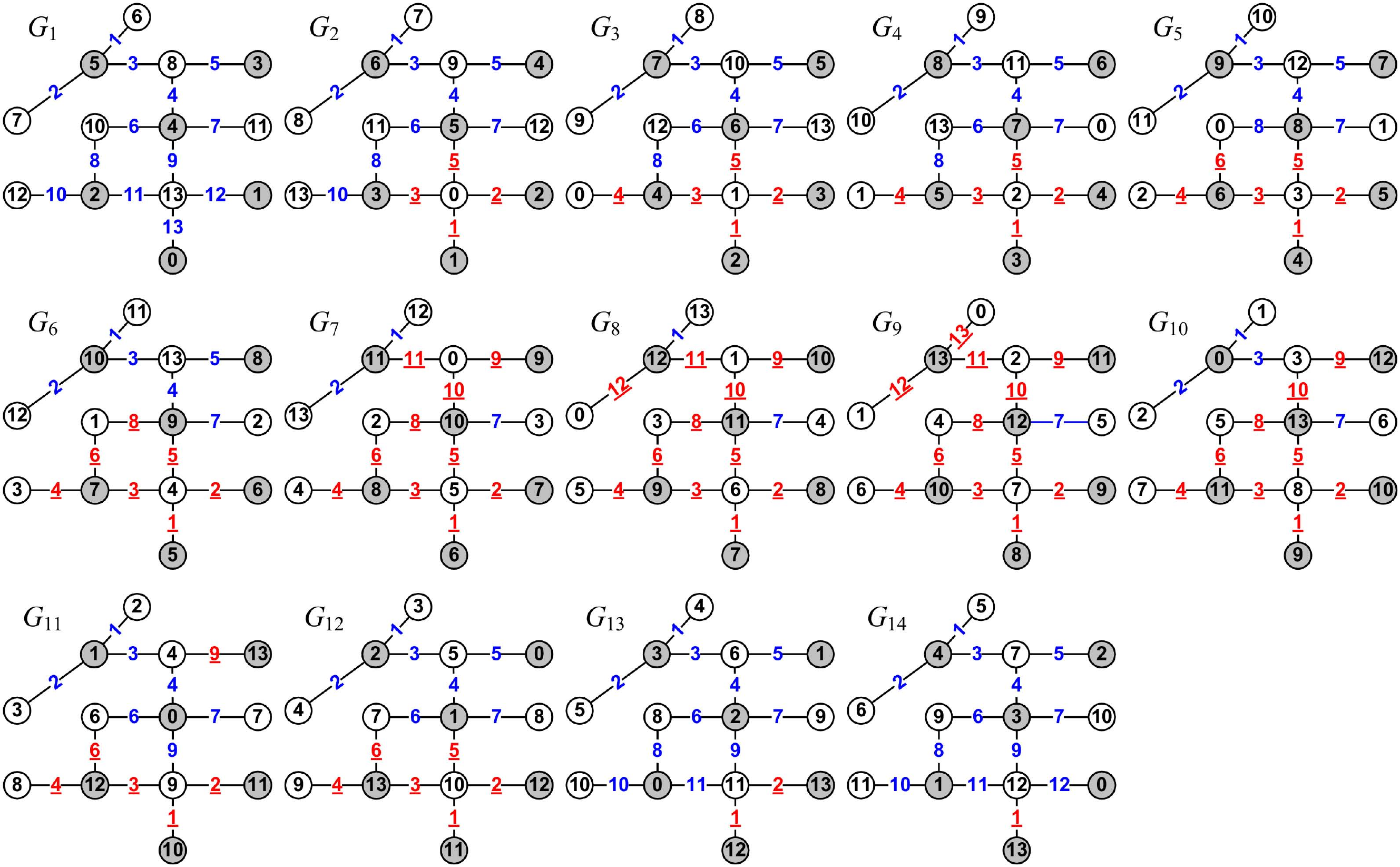}
\caption{\label{fig:nian-graceful-group}{\small The every-zero graphic group $Y_{ear}=\{F_{f}(G^*);\oplus\}$.}}
\end{figure}

\begin{figure}[h]
\centering
\includegraphics[width=16.2cm]{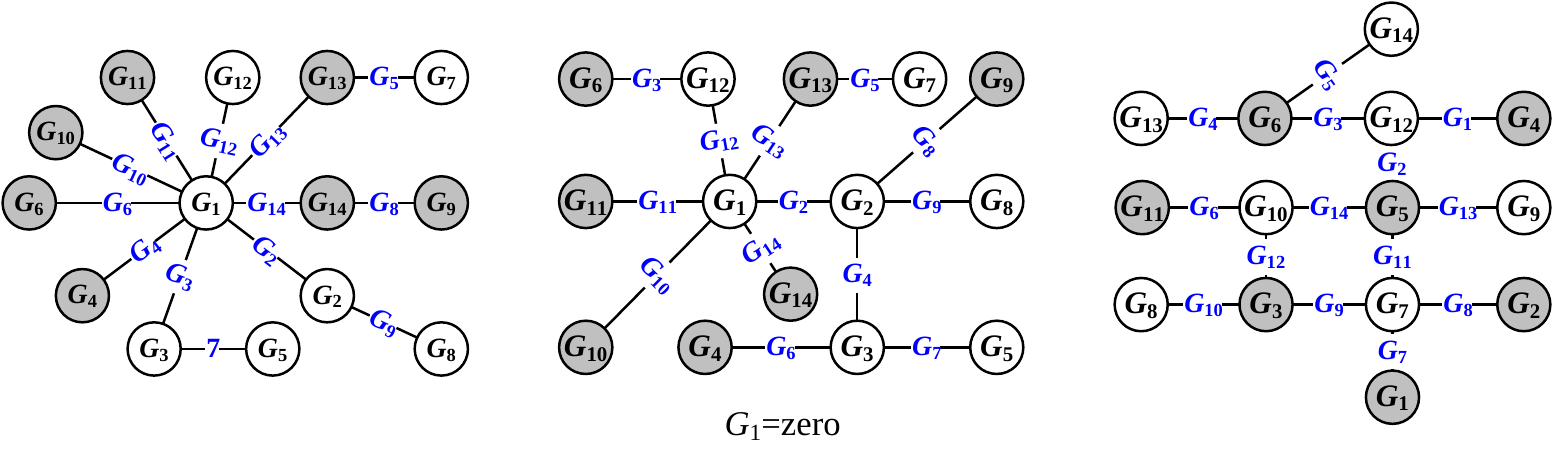}
\caption{\label{fig:network-by-nian-graceful-group}{\small Three tree-like networks are encrypted by the every-zero graphic group $Y_{ear}=\{F_{f}(G^*);\oplus\}$, and they both admit graceful graphic group labellings.}}
\end{figure}

\begin{problem}\label{qeu:graphic-groups}
We want the solutions, or part solutions for the following questions:
\begin{asparaenum}[\textrm{Ggco}-1. ]
\item \textbf{Does} any lobster admit an \emph{odd-graceful graphic group labelling} $\varphi$ by any specified zero $G_k\in \{F_f(G);\oplus\}$, such that $\varphi(u)\neq \varphi(v)$ for distinct vertices $u,v\in V(T)$, and the edge index set $\{k:~\varphi(xy)=G_k,xy\in E(T)\}=\{1,3,5$, $\dots $, $2|V(T)|-3\}$?
\item If a bipartite graph $T$ admits a set-ordered graceful labelling (\cite{Yao-Cheng-Yao-Zhao-2009, Zhou-Yao-Chen-Tao2012, Yao-Liu-Yao-2017}), \textbf{does} $T$ admit a \emph{graceful graphic group labelling} $\theta$ by any specified zero $G_k\in \{F_f(G);\oplus\}$, such that $\theta(u)\neq \theta(v)$ for distinct vertices $u,v\in V(T)$, the \emph{edge index set} $\{j:~\theta(xy)=G_j,xy\in E(T)\}=[1,|E(T)|]$?
\item \textbf{Find} \emph{$W$-type graphic group labellings}, such as, $W$-type is edge-magic total, elegant, felicitous, and so on.
\item Motivated from Graceful Tree Conjecture (Alexander Rosa, 1966), we \textbf{guess}: Every tree admits a graceful graphic group labelling.
\item \textbf{Find} \emph{$W$-type graphic group colorings} $\varphi$ with $\varphi(x)=\varphi(y)$ for some distinct vertices $x,y$ (resp. $\varphi(uv)=\varphi(wz)$ for some non-adjacent edges $uv,wz$), and the number of vertex pairs $(x,y)$ (resp. the number of edge pairs $(uv,wz)$~) is as less as possible.
\item \textbf{Consider} various \emph{graphic group proper total colorings} for the famous Total Coloring Conjecture (Behzad, 1965; Vadim G. Vizing, 1964).\qqed
\end{asparaenum}
\end{problem}

\subsection{Graphic groups as linearly independent colored graphic bases}

Graphic groups have been investigated in \cite{Yao-Sun-Zhao-Li-Yan-2017, Yao-Mu-Sun-Zhang-Wang-Su-Ma-IAEAC-2018, Sun-Zhang-Zhao-Yao-2017}. A set $F(\{G_i\}^n_{i=1};\oplus)$ of colored graphs $\{G_i\}^n_{i=1}$ admitting $W$-type colorings under the additive operation ``$\oplus$'' is called an \emph{every-zero graphic group} (also \emph{Abelian additive group}) based on the $W$-type coloring if:

(i) Every graph $G_k$ of $F(\{G_i\}^n_{i=1};\oplus)$ is as the ``zero'' such that $G_j\oplus G_k=G_j$ for any graph $G_j$ of $F(\{G_i\}^n_{i=1};\oplus)$;

(ii) for each zero $G_k$, $G_i\oplus G_j=G_s\in F(\{G_i\}^n_{i=1};\oplus)$;

(iii) $G_i\oplus (G_j\oplus G_s)=(G_i\oplus G_j)\oplus G_s$;

(iv) $G_j\oplus G_s=G_s\oplus G_j$.

Let $\textbf{\textrm{G}}^c=(G_1,G_2,\dots ,G_n)$ with $G_i\in F(\{G_i\}^n_{i=1};\oplus)$ be an linearly independent colored graphic base. We get a colored graphic lattice $\textbf{\textrm{L}}(\textbf{\textrm{G}}^c(\bullet)F^c_{p,q})$ under a graph operation ``$(\bullet)$''.

\begin{defn}\label{defn:odd-even-separable-6C-labelling}
\cite{Yao-Sun-Zhang-Mu-Sun-Wang-Su-Zhang-Yang-Yang-2018arXiv} A total labelling $f:V(G)\cup E(G)\rightarrow [1,p+q]$ for a bipartite $(p,q)$-graph $G$ is a bijection and holds:

(i) (e-magic) $f(uv)+|f(u)-f(v)|=k$;

(ii) (ee-difference) each edge $uv$ matches with another edge $xy$ holding $f(uv)=|f(x)-f(y)|$ (or $f(uv)=2(p+q)-|f(x)-f(y)|$);

(iii) (ee-balanced) let $s(uv)=|f(u)-f(v)|-f(uv)$ for $uv\in E(G)$, then there exists a constant $k'$ such that each edge $uv$ matches with another edge $u'v'$ holding $s(uv)+s(u'v')=k'$ (or $2(p+q)+s(uv)+s(u'v')=k'$) true;

(iv) (EV-ordered) $\min f(V(G))>\max f(E(G))$ (or $\max f(V(G))<\min f(E(G))$, or $f(V(G))\subseteq f(E(G))$, or $f(E(G))\subseteq f(V(G))$, or $f(V(G))$ is an odd-set and $f(E(G))$ is an even-set);

(v) (ve-matching) there exists a constant $k''$ such that each edge $uv$ matches with one vertex $w$ such that $f(uv)+f(w)=k''$, and each vertex $z$ matches with one edge $xy$ such that $f(z)+f(xy)=k''$, except the \emph{singularity} $f(x_0)=\lfloor \frac{p+q+1}{2}\rfloor $;

(vi) (set-ordered) $\max f(X)<\min f(Y)$ (or $\min f(X)>\max f(Y)$) for the bipartition $(X,Y)$ of $V(G)$.

(vii) (odd-even separable) $f(V(G))$ is an odd-set containing only odd numbers, as well as $f(E(G))$ is an even-set containing only even numbers.

We refer to $f$ as an \emph{odd-even separable 6C-labelling}.\qqed
\end{defn}

\begin{exa}\label{exa:group-6C-labelling}
A tree $G_1$ shown in Fig.\ref{fig:D6C-group} is a $(13,12)$-graph admitting an \emph{odd-even separable 6C-labelling} $f$ defined in Definition \ref{defn:odd-even-separable-6C-labelling}.

\begin{figure}[h]
\centering
\includegraphics[width=15.2cm]{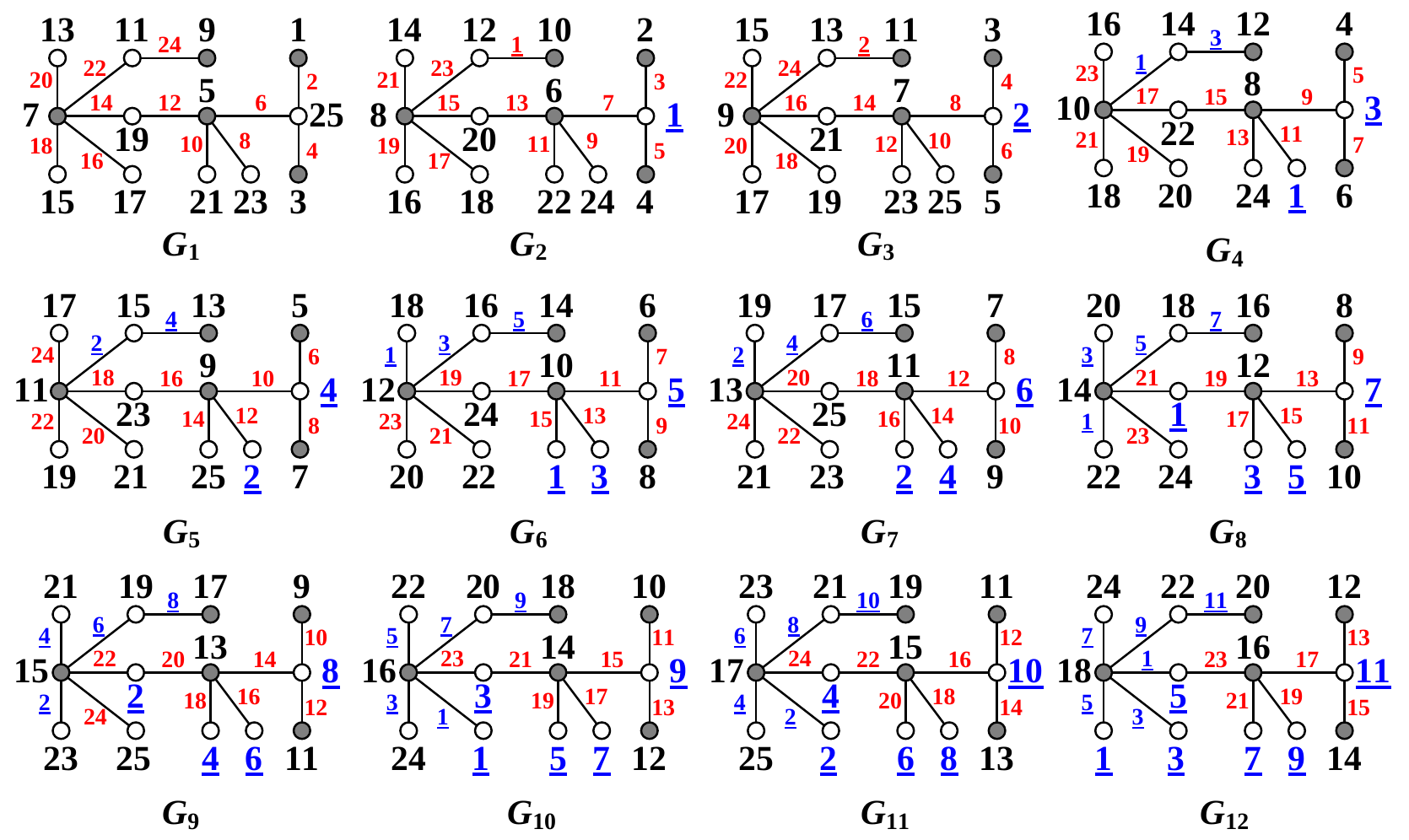}
\caption{\label{fig:D6C-group} {\small First part of an every-zero graphic group based on the 6C-labelling cited from \cite{Yao-Mu-Sun-Sun-Zhang-Wang-Su-Zhang-Yang-Zhao-Wang-Ma-Yao-Yang-Xie2019}.}}
\end{figure}

\begin{figure}[h]
\centering
\includegraphics[width=15.2cm]{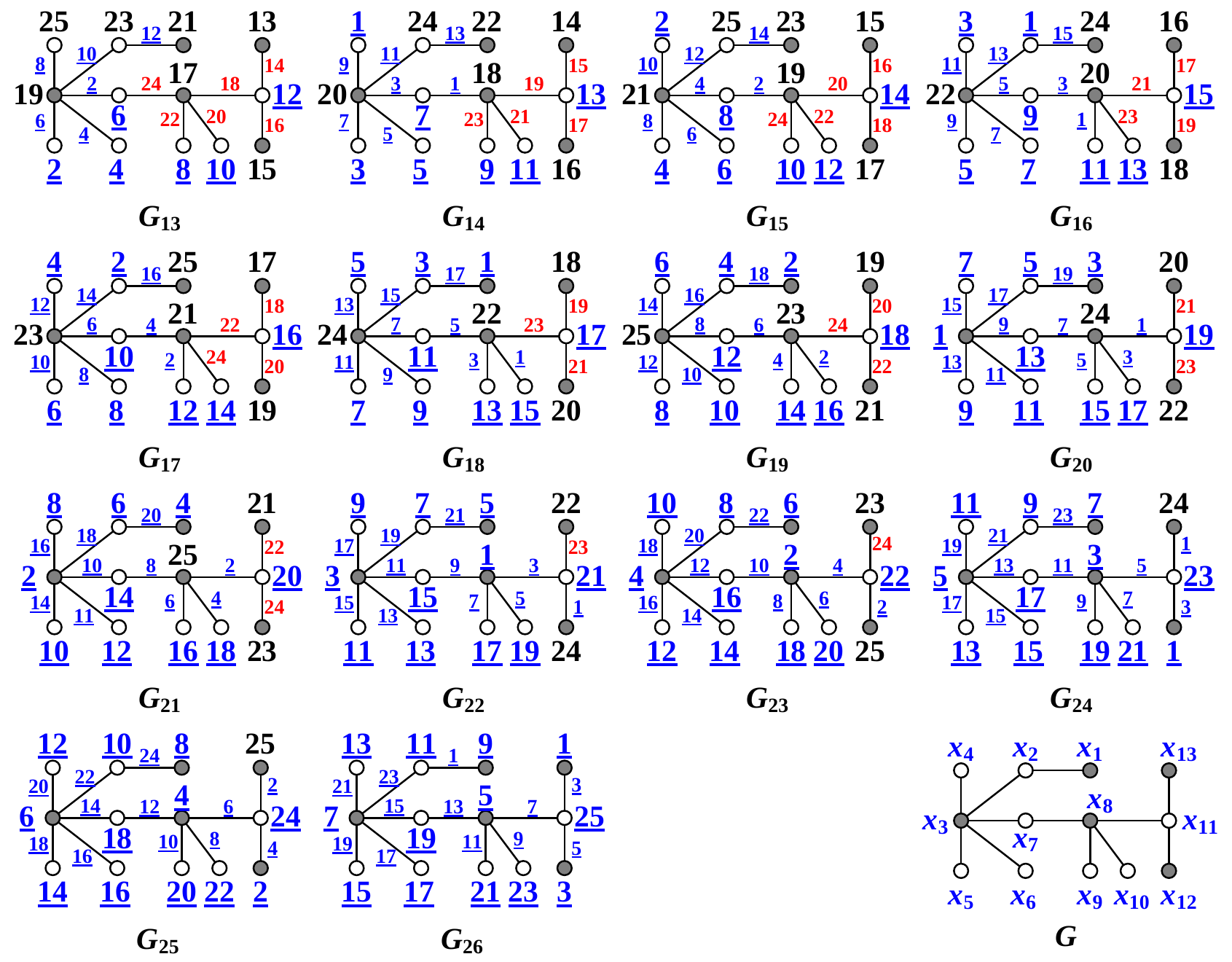}
\caption{\label{fig:D6C-group-11} {\small Second part of an every-zero graphic group based on the 6C-labelling cited from \cite{Yao-Mu-Sun-Sun-Zhang-Wang-Su-Zhang-Yang-Zhao-Wang-Ma-Yao-Yang-Xie2019}.}}
\end{figure}

We do the following jobs in this example:

1. We construct an \emph{every-zero graphic group} $G^{26}_{roup}=\{F^{odd}_f(G)\cup F^{even}_f(G),\oplus\}$ based on an odd-even separable 6C-labelling in the following: In Fig.\ref{fig:D6C-group} and Fig.\ref{fig:D6C-group-11}, a graph set $F^{odd}_f(G)=\{G_{1},G_{3},\dots ,G_{25}\}$ holds $G_{2i-1}\cong G$ and admits a labelling $f_{2i-1}$ with $i\in [1,13]$, and each $f_{2i-1}$ is defined by $f_{2i-1}(x)=f(x)+2(i-1)~(\bmod~25)$ for $x\in V(G)$ and $f_{2i-1}(xy)=f(xy)+2(i-1)~(\bmod~24)$ for $xy\in E(G)$. Another graph set $F^{even}_f(G)=\{G_{2},G_{4},\dots ,G_{26}\}$ contains $G_{2i}$ holding $G_{2i}\cong G$ and admitting a labelling $f_{2i}$ with $i\in [1,13]$, where each $f_{2i}$ is defined by $f_{2i}(u)=f(u)+(2i-1)~(\bmod~25)$ for $u\in V(G)$ and $f_{2i}(uv)=f(uv)+(2i-1)~(\bmod~24)$ for $uv\in E(G)$. It is easy to see
\begin{equation}\label{eqa:an-example-group}
[f_a(x)+f_b(x)-f_c(x)]~(\bmod~25)=f_{\lambda}(x)
\end{equation}
where $\lambda=a+b-c~(\bmod~25)$ for $x\in V(G)$, and $a,b,c\in [1,26]$. We have $G_a\oplus_c G_b=G_{\lambda}$ by (\ref{eqa:an-example-group}). So, $G^{26}_{roup}$ is really an every-zero graphic group based on the 6C-labelling. This every-zero graphic group $G^{26}_{roup}=\{F^{odd}_f(G)\cup F^{even}_f(G),\oplus\}$ obeys
\begin{equation}\label{eqa:c3xxxxx}
[f_{a'}(uv)+f_{b'}(uv)-f_{c'}(uv)]~(\bmod~26)=f_{\lambda'}(uv)
\end{equation}
where $\lambda'=a'+b'-c'~(\bmod~26)$ for $uv \in E(G)$, and $a',b',c'\in [1,26]$. We call $G^{26}_{roup}$ an \emph{edge-every-zero graphic group}, or a \emph{bimodule every-zero graphic group} based on the 6C-labelling.

Observe this every-zero graphic group $G^{26}_{roup}$ carefully, we can see the vertex color set $f_{2i-1}(V(G_{2i-1}))\cup f_{2j}(V(G_{2j}))=[1,25]$, and the vertex color set $f_{2i-1}(E(G_{2i-1}))\cup f_{2j}(E(G_{2j}))=[1,24]$.

2. Each element $G_{2i-1}\in F^{odd}_f(G)\subset G^{26}_{roup}$ distributes us a TB-paw $D_{2i-1}$ as
$${
\begin{split}
D_{2i-1}=&f_{2i-1}(x_1)f_{2i-1}(x_1x_2)f_{2i-1}(x_2)f_{2i-1}(x_2x_3)f_{2i-1}(x_3)f_{2i-1}(x_3x_4)f_{2i-1}(x_4)f_{2i-1}(x_3x_5)\\
&f_{2i-1}(x_5)f_{2i-1}(x_3x_6)f_{2i-1}(x_6)f_{2i-1}(x_3x_7)f_{2i-1}(x_7)f_{2i-1}(x_7x_3)f_{2i-1}(x_3)f_{2i-1}(x_7x_8)\\
&f_{2i-1}(x_8)f_{2i-1}(x_8x_7)f_{2i-1}(x_7)f_{2i-1}(x_8x_{11})f_{2i-1}(x_{11})f_{2i-1}(x_{11}x_{12})f_{2i-1}(x_{12})\\
&f_{2i-1}(x_{11}x_{13})f_{2i-1}(x_{13})
\end{split}}
$$ with $i\in [1,13]$. Thus, we get a string set $\{D_{2i-1}:~i\in [1,13]\}$. And, each element of the string set $\{D_{2i}:~i\in [1,13]\}$ is defined by
$${
\begin{split}
D_{2i}=&f_{2i}(x_1)f_{2i}(x_1x_2)f_{2i}(x_2)f_{2i}(x_2x_3)f_{2i}(x_3)f_{2i}(x_3x_4)f_{2i}(x_4)f_{2i}(x_3x_5)f_{2i}(x_5)\\
&f_{2i}(x_3x_6)f_{2i}(x_6)f_{2i}(x_3x_7)f_{2i}(x_7)f_{2i}(x_7x_3)f_{2i}(x_3)f_{2i}(x_7x_8)f_{2i}(x_8)f_{2i}(x_8x_7)\\
&f_{2i}(x_7)f_{2i}(x_8x_{11})f_{2i}(x_{11})f_{2i}(x_{11}x_{12})f_{2i}(x_{12})f_{2i}(x_{11}x_{13})f_{2i}(x_{13}).
\end{split}}
$$ based on $G_{2i}\in F^{even}_f(G)$ in Fig.\ref{fig:D6C-group} and Fig.\ref{fig:D6C-group-11}. By the operation ``$\oplus $'' defined in (\ref{eqa:an-example-group}) we can verify the string set $\{D_{2i-1}:~i\in [1,13]\}\cup \{D_{2i}:~i\in [1,13]\}$ to be an \emph{every-zero string group}.

3. We use Fig.\ref{fig:D6C-group-lattice} to introduce graphs $H\odot ^{26}_{k=1}a_kG_{k}$ with $a_k\in Z^0$ and $G_{k}\in G^{26}_{roup}$. Fig.\ref{fig:D6C-group-lattice} (a)-(d) are $H\odot ^{26}_{k=1}a_kG_{k}$ with $G_{k}\in G^{26}_{roup}$, where $G_i$ does not appear in (a)-(d) if $a_i=0$. In Fig.\ref{fig:D6C-group-lattice}(d), the zero is $G_{12}$ . Since $G_{12}\oplus_{12} G_8=G_{8}$ defined by (\ref{eqa:an-example-group}), based on $H\in F^c_{p,q}$ shown in (d-1), we join a vertex of $G_{12}$ with a vertex of $G_8$ by an edge, see Fig.\ref{fig:D6C-group-lattice}(d), go on in this way, we get $H\odot ^{26}_{k=1}a_kG_{k}$ shown in (d). We can encrypt the network (d-1) by the every-zero graphic group $G^{26}_{roup}$. In fact, we have a mapping $\theta: V(H)\cup E(H)\rightarrow G^{26}_{roup}$, such that $\theta(uv)=\theta(u)\oplus_{12}\theta(v)$ for each edge $uv\in E(H)$, (d-2) is just the resultant graph $W$ obtained by joining a vertex of the Topsnut-gpw $\theta(u)$ with a vertex of the Topsnut-gpw $\theta(uv)$, and joining a vertex of the Topsnut-gpw $\theta(v)$ with a vertex of the Topsnut-gpw $\theta(uv)$ for each edge $uv\in E(H)$.\qqed
\end{exa}
\begin{figure}[h]
\centering
\includegraphics[width=15.2cm]{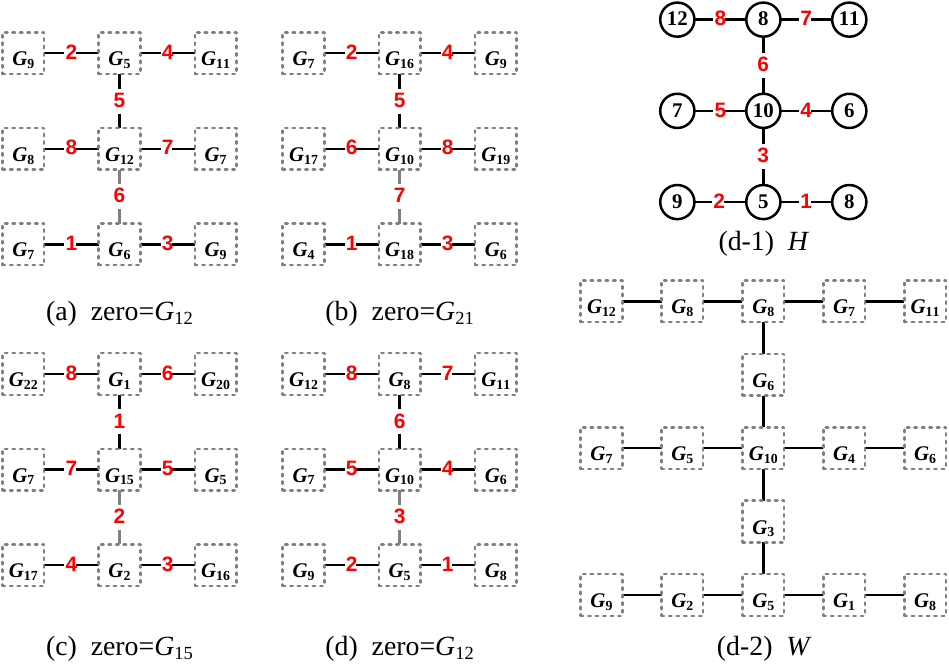}
\caption{\label{fig:D6C-group-lattice} {\small (a)-(d) are $H\odot ^{26}_{k=1}a_kG_{k}$; (d-1) $H\in F^c_{p,q}$; (d-2) $W$ is the result of encrypting $H$ by the every-zero graphic group $G^{26}_{roup}$.}}
\end{figure}

\subsection{Graphic sequence groups}

Suppose that a connected $(p,q)$-graph $G$ admits a $W$-type total coloring $f$, we define $W$-type total colorings $f_{s,k}$ by setting $f_{s,k}(x)=f(x)+s$ for every vertex $x\in V(G)$, and $f_{s,k}(uv)=f(uv)+k$ for each edge $uv\in E(G)$ as two integers $s,k$ belong to the set $Z$ of all integers. So, we have each connected $(p,q)$-graph $G_{s,k}\cong G$ admits the $W$-type total coloring $f_{s,k}$ defined above, immediately, we get an infinite graphic sequence $\{\{G_{s,k}\}^{+\infty}_{-\infty}\}^{+\infty}_{-\infty}$. We take $G_{a,b}$ as a \emph{zero}, and any two $G_{s,k}$ and $G_{i,j}$ in $\{\{G_{s,k}\}^{+\infty}_{-\infty}\}^{+\infty}_{-\infty}$ hold the following additive computations: For $uv\in E(G)$,
\begin{equation}\label{eqa:edge-graphic-group}
[f_{s,k}(uv)+f_{i,j}(uv)-f_{a,b}(uv)]~(\bmod~ q_W)=f_{\lambda,\mu}(uv).
\end{equation} with $\mu=k+j-b~(\bmod~ q_W)$; and for $x\in V(G)$,
\begin{equation}\label{eqa:vertex-graphic-group}
[f_{s,k}(x)+f_{i,j}(x)-f_{a,b}(x)]~(\bmod~ p_W)=f_{\lambda,\mu}(x)
 \end{equation} with $\lambda=s+i-a~(\bmod~ p_W)$.

Here, $p_W=|V(G)|$ and $q_W=|E(G)|$ if the $W$-type total coloring $f$ is a gracefully total coloring; and $p_W=q_W=2|E(G)|$ if the $W$-type total coloring $f$ is an odd-gracefully total coloring (see examples shown in Fig.\ref{fig:odd-graceful-group-vertex} and Fig.\ref{fig:odd-graceful-group-edge}).

\begin{figure}[h]
\centering
\includegraphics[width=16cm]{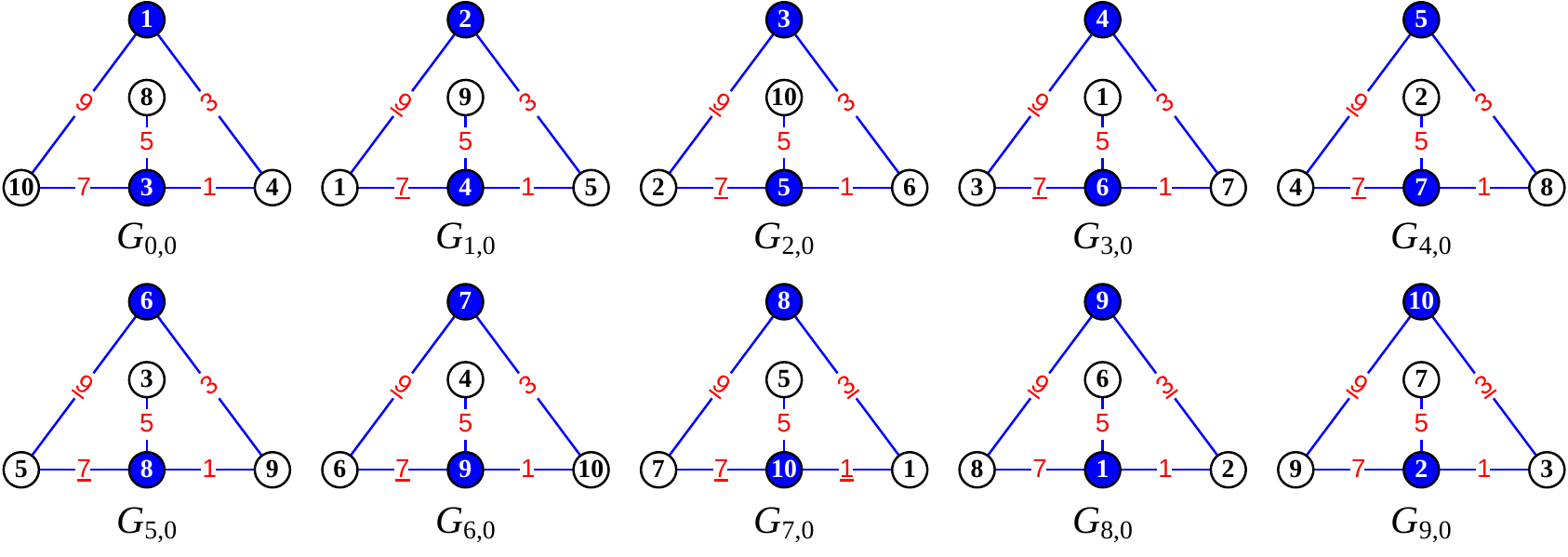}\\
\caption{\label{fig:odd-graceful-group-vertex} {\small An example for $p_W=2|E(G)|=10$ and holding (\ref{eqa:vertex-graphic-group}).}}
\end{figure}

\begin{figure}[h]
\centering
\includegraphics[width=16cm]{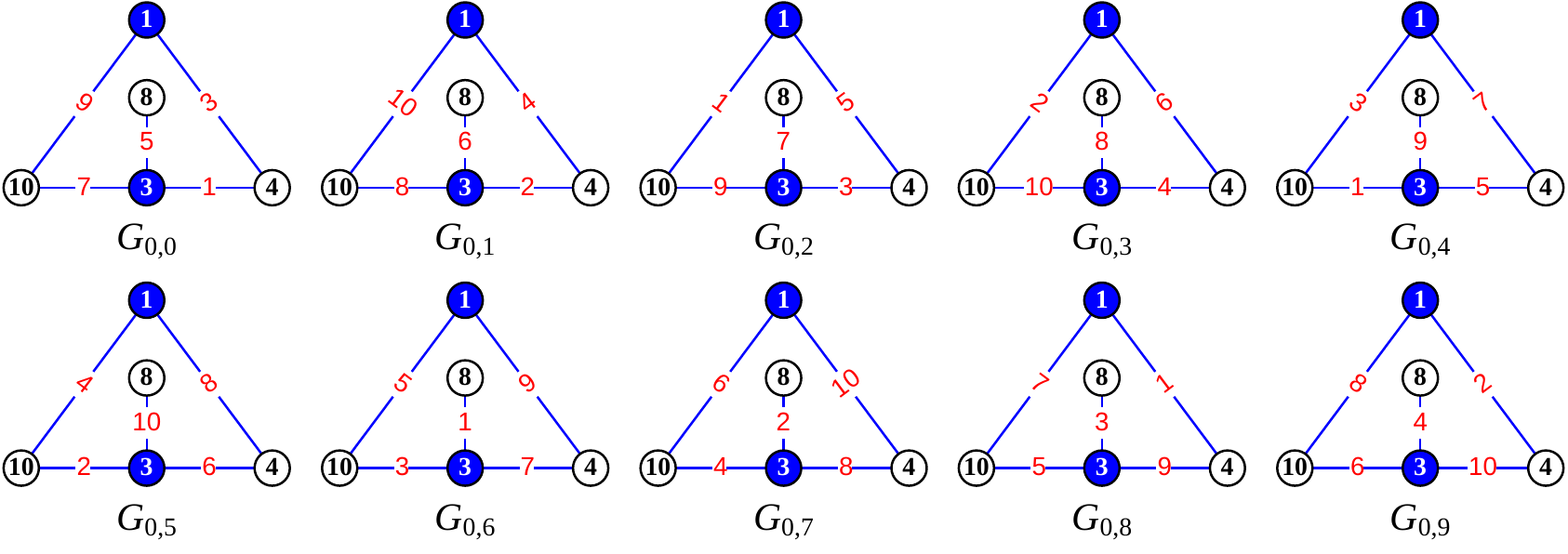}\\
\caption{\label{fig:odd-graceful-group-edge} {\small An example for $q_W=2|E(G)|=10$ and holding (\ref{eqa:edge-graphic-group}).}}
\end{figure}

Especially, for an edge subsequence $G_{s,k}, G_{s,k+1},\dots ,G_{s,k+q_W}$, and a vertex subsequence $G_{s,k}$, $G_{s+1,k}$, $\dots $, $G_{s+p_W,k}$, we have two sets $F(\{G_{s,k+j}\}^{q_W}_{j=1};\oplus;(G,f))$ and $F(\{G_{s+i,k}\}^{p_W}_{i=1};\oplus;(G,f))$. By the operation ``$\oplus$'' defined in (\ref{eqa:edge-graphic-group}), $F(\{G_{s,k+j}\}^{q_W}_{j=1};\oplus;(G,f))$ is an \emph{every-zero edge-graphic group}, since there are the following facts:

(i) \emph{Zero}. Every graph $G_{s,k+j}$ of $F(\{G_{s,k+j}\}^q_{j=1};\oplus)$ is as the ``zero'' such that $G_{s,k+r}\oplus G_{s,k+j}=G_{s,k+r}$ for any graph $G_{s,k+r}\in F(\{G_{s,k+j}\}^q_{j=1};\oplus)$.

(ii) \emph{Closure law}. For $r=i+j-j_0~(\bmod~ q_W)$, $G_{s,k+i}\oplus G_{s,k+j}=G_{s,k+r}\in F(\{G_{s,k+j}\}^q_{j=1};\oplus)$ under a zero $G_{s,k+j_0}$.

(iii) \emph{Inverse.} For $i+j=2j_0~(\bmod~ q_W)$, $G_{s,k+i}\oplus G_{s,k+j}=G_{s,k+j_0}$ under a zero $G_{s,k+j_0}$.

(iv) \emph{Associative law}. $G_{s,k+i}\oplus (G_{s,k+j}\oplus G_{s,k+r})=(G_{s,k+i}\oplus G_{s,k+j})\oplus G_{s,k+r}$.

(v) \emph{Commutative law}. $G_{s,k+i}\oplus G_{s,k+j}=G_{s,k+j}\oplus G_{s,k+i}$.

Similarly, $F(\{G_{s+i,k}\}^{p_W}_{i=1};\oplus;(G,f))$ is an \emph{every-zero vertex-graphic group} by the operation ``$\oplus$'' defined in (\ref{eqa:vertex-graphic-group}).

As considering some graphs arbitrarily selected from the sequence $\{\{G_{s,k}\}^{+\infty}_{-\infty}\}^{+\infty}_{-\infty}$, we have
\begin{equation}\label{eqa:mixed-infinite-graphic-group}
[f_{s,k}(w)+f_{i,j}(w)-f_{a,b}(w)]~(\bmod~ p_W, q_W)=f_{\lambda,\mu}(w).
\end{equation} with $\lambda=s+i-a~(\bmod~ p_W)$ and $\mu=k+j-b~(\bmod~ q_W)$ for each element $w\in V(G)\cup E(G)$. See examples shown in Fig.\ref{fig:infinite-graphic-group} and Fig.\ref{fig:odd-graceful-group-mixed}. Thereby, the set $F(\{\{G_{s,k}\}^{+\infty}_{-\infty}\}^{+\infty}_{-\infty};\oplus;(G,f))$ is an \emph{every-zero infinite graphic sequence group} under the additive operation ``$\oplus$'' based on two modules $p_W$ and $q_W$ and a connected $(p,q)$-graph $G$ admitting a $W$-type total coloring, since it possesses the properties of Zero, Closure law, Inverse, Associative law and Commutative law.

\begin{figure}[h]
\centering
\includegraphics[width=16cm]{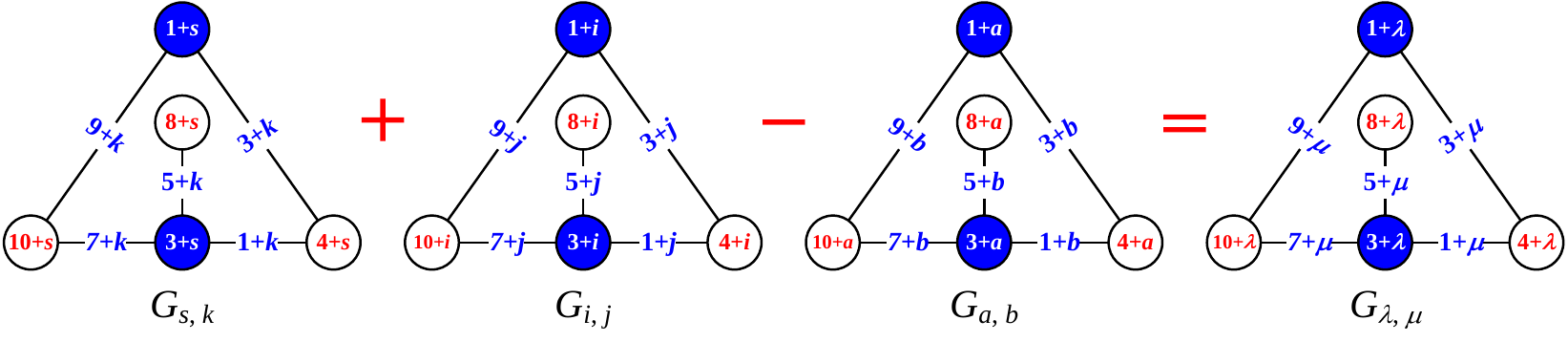}\\
\caption{\label{fig:infinite-graphic-group} {\small A graphic scheme for illustrating the formula (\ref{eqa:mixed-infinite-graphic-group}).}}
\end{figure}

\begin{figure}[h]
\centering
\includegraphics[width=16cm]{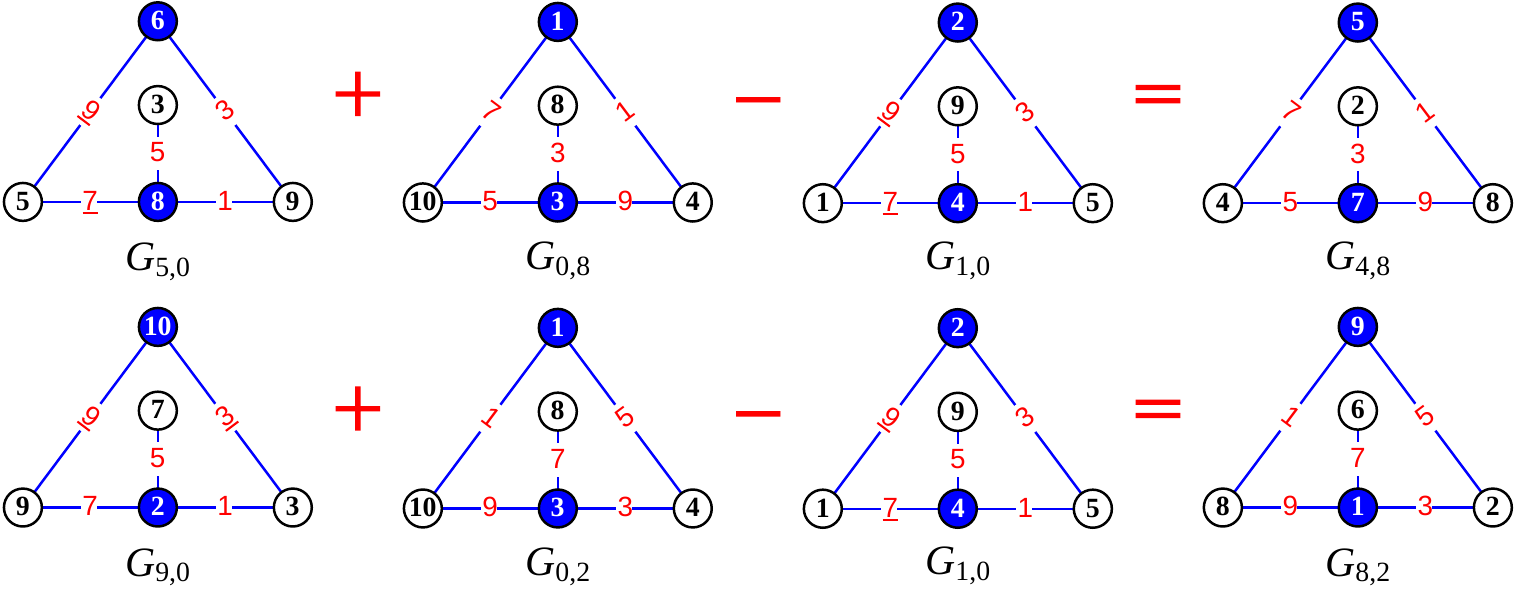}\\
\caption{\label{fig:odd-graceful-group-mixed} {\small Two examples for illustrating the formula (\ref{eqa:mixed-infinite-graphic-group}).}}
\end{figure}

\begin{rem}\label{rem:infinite-graphic-sequence-group}
Let $F^*(G,f))=F(\{\{G_{s,k}\}^{+\infty}_{-\infty}\}^{+\infty}_{-\infty};\oplus;(G,f))$. The elements of an every-zero infinite graphic sequence group $F^*(G,f)$ can tile fully each point $(x,y)$ of $xOy$-plane. And moreover, $F^*(G,f)$ contains infinite every-zero graphic groups having finite elements, such as $F(\{G_{s+i,k}\}^{p_W}_{i=1}$; $\oplus;(G,f))$ and $F(\{G_{s,k+j}\}^{q_W}_{i=1};\oplus;(G,f))$. Also, $F^*(G,f)$ contains infinite every-zero graphic groups having infinite elements.

Clearly, particular every-zero graphic groups having infinite elements or finite elements can be used easily to encrypt randomly networks. Suppose that the coloring $f$ of $G$ in $F^*(G,f)$ is equivalent to another $W_g$-type total coloring $g$ of $G$. Then we get another every-zero infinite graphic sequence group $F^*(G,g)=F(\{\{H_{i,j}\}^{+\infty}_{-\infty}\}^{+\infty}_{-\infty};\oplus;(G,g))$ with $G\cong H_{i,j}$. Thereby, the every-zero infinite graphic sequence group $F^*(G,f)$ is a \emph{public-key graphic sequence group}, the every-zero infinite graphic sequence group $F^*(G,g)$ is a \emph{private-key graphic sequence group} accordingly.

Since there exists a mapping $\varphi: V(G)\cup E(G)\rightarrow V(G)\cup E(G)$ such that $g(w)=\varphi(f(w))$ for $w\in V(G)\cup E(G)$, we claim that $F^*(G,f)$ admits an every-zero graphic sequence homomorphism to $F^*(G,g)$, and moreover $F^*(G,f)\leftrightarrow F^*(G,g)$, a pair of \emph{homomorphically equivalent every-zero graphic sequence homomorphisms}.

Let $G$ admit a \emph{graph homomorphism} to a connected graph $H$ under a mapping $\varphi:V(G)\rightarrow V(H)$, where $H$ admits a $W'$-type total coloring $g$. By the construction of an infinite graphic sequence $\{\{G_{s,k}\}^{+\infty}_{-\infty}\}^{+\infty}_{-\infty}$, we get another infinite graphic sequence $\{\{H_{a,b}\}^{+\infty}_{-\infty}\}^{+\infty}_{-\infty}$, and moreover, we have an \emph{infinite graphic sequence homomorphism} as follows:
\begin{equation}\label{eqa:c3xxxxx}
\{\{G_{s,k}\}^{+\infty}_{-\infty}\}^{+\infty}_{-\infty}\rightarrow \{\{H_{a,b}\}^{+\infty}_{-\infty}\}^{+\infty}_{-\infty},
\end{equation}
where each $H_{a,b}$ is a copy of $H$ and admit a $W'$-type total colorings $g_{a,b}$ defined by $g_{a,b}(x)=g(x)+a$ for every vertex $x\in V(H)$, and $g_{a,b}(xy)=g(xy)+b$ for each edge $xy\in E(H)$, and as two integers $a,b\in Z$.\qqed
\end{rem}

\subsection{Matching graphic groups}

In \cite{Yao-Sun-Zhang-Mu-Sun-Wang-Su-Zhang-Yang-Yang-2018arXiv} the authors introduce many matching colorings/labellings of graphs, and point out matching diversity: configuration matching partition, coloring matching partition, set matching partition, matching chain; and in the number of matching partitions: one-vs-more and more-vs-more styles of matching partitions, configuration-vs-configuration, configuration-vs-labelling, labelling-vs-labelling and (configuration, labelling)-vs-(configuration, labelling).

\subsubsection{Twin graphic groups}

Wang \emph{et al.} in \cite{Wang-Xu-Yao-2017-Twin2017} introduced the \emph{twin odd-graceful labellings}: Suppose $f:V(G)\rightarrow [0,2q-1]$ is an odd-graceful labelling of a graph $G$ with $p$ vertices and $q$ edges, and $g:V(H)\rightarrow [1,2q]$ is a labelling of another graph $H$ with $p'$ vertices and $q'$ edges such that each edge $uv\in E(H)$ has its own label defined as $h(uv)=|h(u)-h(v)|$ and the edge label set $f(E(H))=[1,2q-1]^o$. We say $(f,g)$ to be a \emph{twin odd-graceful labelling}, $H$ a \emph{twin odd-graceful matching} of $G$. Thereby, we get two \emph{twin odd-graceful graphic groups} $\{F_f(G);\oplus\}$ and $\{F_g(H);\oplus\}$ based on a twin odd-graceful labelling $(f,g)$. Notice that $G\not \cong H$, in general. See some examples shown in the section of Graphic Lattices.

\subsubsection{Dual-coloring/lacelling graphic groups} Suppose that a graph $G$ with $p$ vertices and $q$ edges admits a $W$-type coloring $f$. Let $\max f=\max\{f(w): w\in S\subseteq V(G)\cup E(G)\}$ and $\min f=\min\{f(w): w\in S\subseteq V(G)\cup E(G)\}$. We call $g(w)=\max f+\min f-f(w)$ for each element $w\in S\subseteq V(G)\cup E(G)$ the dual $W$-type coloring of the coloring $f$. Then, $\{F_g(G);\oplus\}$ is called the \emph{dual graphic group} of the graphic group $\{F_f(G);\oplus\}$ based on a pair of mutually dual $W$-type colorings $f$ and $g$. Notice that these two graphic groups were built up on the same graph $G$. See four dual total colorings defined in Definition \ref{defn:4-dual-total-coloring}.

\subsubsection{Other matching graphic groups}

If a graph $G$ is bipartite and admits a set-ordered graceful labelling $f$, then there are dozen labellings $g_i$ equivalent with $f$, so we get a dozen \emph{matching-labelling graphic groups} $\{F_f(G);\oplus\}$ and $\{F_{g_i}(H_i);\oplus\}$ with $i\in [1,m]$ for $m\geq 2$. For example, these labellings $g_i$ are odd-graceful labelling, odd-elegant labelling, edge-magic total labelling, image-labelling, 6C-labelling, odd-6C-labelling, even-odd separated 6C-labelling, and so on (Ref. \cite{Yao-Sun-Zhang-Mu-Sun-Wang-Su-Zhang-Yang-Yang-2018arXiv}). Here, we refer to $\{F_f(G);\oplus\}$ as a \emph{public key}, and each $\{F_{g_i}(H_i);\oplus\}$ as a \emph{private key} in encrypting networks. Let $G^c$ be the complementary graph of $G$, that is, $V(G)=V(G^c)=V(K_n)$, $E(G)\cup E(G^c)=E(K_n)$ and $E(G)\cap E(G^c)=\emptyset$. So, we have $\{F_f(G);\oplus\}$ and $\{F_g(G^c);\oplus\}$ as a pair of matching graphic groups.

\subsection{Graphic group sequences} Let $G^{(1)}_{ro}(H)=\{F_f(G^{(1)})$; $\oplus\}$ be an every-zero graphic group. We get an encrypted graph $G^{(2)}(H)=H\triangleleft G^{(1)}_{ro}(H)$ to be one of set
$$\left \{H\triangleleft_{a,b} |^{(p,q)}_{(s,k)}a^{(2)}_{s,k}G^{(2)}_{s,k}: a^{(2)}_{s,k}\in Z^0,~G^{(2)}_{s,k}\in F_f(G^{(1)})\right \}$$ with $\sum a^{(2)}_{s,k}\geq 1$ after encrypting a graph $H$ by the every-zero graphic group $G^{(1)}_{ro}(H)$. Immediately, we get an every-zero graphic group $G^{(2)}_{ro}(H)=\{F_f(G^{(2)});\oplus\}$ made by the graph $G^{(2)}(H)$, the operation ``$\oplus$'' and the MIXED Graphic-group Algorithm. Go on in this way, we get an \emph{every-zero $H$-graphic group sequence} $\{G^{(t)}_{ro}(H)\}$ based on the initial every-zero graphic group $G^{(1)}_{ro}(H)=\{F_f(G^{(1)});\oplus\}$ and the graph $H$, where $G^{(t)}_{ro}(H)=H\triangleleft G^{(t-1)}_{ro}(H)$. Clearly, each $G^{(t)}_{ro}(H)$ is a network at time step $t$.

\begin{problem}\label{qeu:mixed-Graphic-group-Algorithm}
We have the following problems:
\begin{asparaenum}[\textrm{Seq}-1. ]
\item \textbf{Characterize} the topological structure of $\left \{G^{(t)}_{ro}(H)\right \}$. \textbf{Is} $G^{(t)}_{ro}(H)$ scale-free? \textbf{Is} $G^{(t)}_{ro}(H)$ self-similar?
\item \textbf{Determine} colorings admitted by each element of $\;\left \{G^{(t)}_{ro}(H)\right \}$.
\item \textbf{Estimate} the cardinality of $\left \{G^{(t)}_{ro}(H)\right \}$.
\item For $\textrm{\textbf{H}}=(H_1,H_2,\dots ,H_m)$, \textbf{study} \emph{every-zero $\textrm{\textbf{H}}$-graphic group sequence} $\left \{G^{(t)}_{ro}(\textrm{\textbf{H}})\right \}$.
\end{asparaenum}
\end{problem}


\section{Topcode-matrix lattices, topological coding lattices}

In \cite{Yao-Zhao-Zhang-Mu-Sun-Zhang-Yang-Ma-Su-Wang-Wang-Sun-arXiv2019} the authors introduce topological coding matrices (Topcode-matrices) and topological matrices. Topcode-matrices are matrices of order $3\times q$ and differ
from popular matrices applied in linear algebra and computer science. Topcode-matrices can use numbers, letters, Chinese characters, sets, graphs, algebraic groups \emph{etc.} as their elements. One important thing is that Topcode-matrices of numbers can derive easily number strings, since number strings are text-based passwords used in information security. Topcode-matrices can be used to describe topological graphic passwords (Topsnut-gpws) used for solving some problems coming from the investigation of Graph Networks and Graph Neural Networks proposed by GoogleBrain and DeepMind \cite{Battaglia-27-authors-arXiv1806-01261v2}.

\subsection{Topcode-matrix lattices}

Since Topsnut-gpws are related with algebraic matrices, we will introduce \emph{Topcode-matrix lattices}, and then show \emph{topological coding lattices} defined from Topcode-matrix lattices.

\subsubsection{Topcode-matrices}
\begin{defn}\label{defn:topcode-matrix-definition}
\cite{Yao-Zhao-Zhang-Mu-Sun-Zhang-Yang-Ma-Su-Wang-Wang-Sun-arXiv2019} A \emph{Topcode-matrix} (or \emph{topological coding matrix}) is defined as
\begin{equation}\label{eqa:Topcode-matrix}
\centering
{
\begin{split}
T_{code}= \left(
\begin{array}{ccccc}
x_{1} & x_{2} & \cdots & x_{q}\\
e_{1} & e_{2} & \cdots & e_{q}\\
y_{1} & y_{2} & \cdots & y_{q}
\end{array}
\right)=
\left(\begin{array}{c}
X\\
E\\
Y
\end{array} \right)=(X,~E,~Y)^{T}_{3\times q}
\end{split}}
\end{equation}\\
where \emph{v-vector} $X=(x_1, x_2, \cdots, x_q)$, \emph{e-vector} $E=(e_1$, $e_2 $, $ \cdots $, $e_q)$, and \emph{v-vector} $Y=(y_1, y_2, \cdots, y_q)$ consist of non-negative integers $e_i$, $x_i$ and $y_i$ for $i\in [1,q]$. We say $T_{code}$ to be \emph{evaluated} if there exists a function $f$ such that $e_i=f(x_i,y_i)$ for $i\in [1,q]$, and call $x_i$ and $y_i$ to be the \emph{ends} of $e_i$, and $q$ is the \emph{size} of $T_{code}$.\qqed
\end{defn}

\begin{rem}\label{rem:a-topcode-matrix-vs-more-graphs}
A Topcode-matrix $T_{code}$ corresponds more graphs with different topological structures, see an example shown in Fig.\ref{fig:matrix-vs-more-graphs}, where two graphs $(k')$ and $(j')$ hold $(k')\not\cong (j')$ if $k\neq j$ and $k,j\in \{$a, b, c, d, e, f$\}$. In fact, the Topcode-matrix $T_{code}$ corresponds other Topsnut-gpws differing from these six Topsnut-gpws shown in Fig.\ref{fig:matrix-vs-more-graphs}.

\begin{equation}\label{eqa:Topcode-matrix-vs-6-Topsnut-gpws}
\centering
{
\begin{split}
T_{code}= \left(
\begin{array}{ccccccccccc}
6&5&6&6&6&1&1&1&1&1\\
1&2&3&4&5&6&7&8&9&10\\
7&7&9&10&11&7&8&9&10&11
\end{array}
\right)
\end{split}}
\end{equation}

\begin{figure}[h]
\centering
\includegraphics[width=16.2cm]{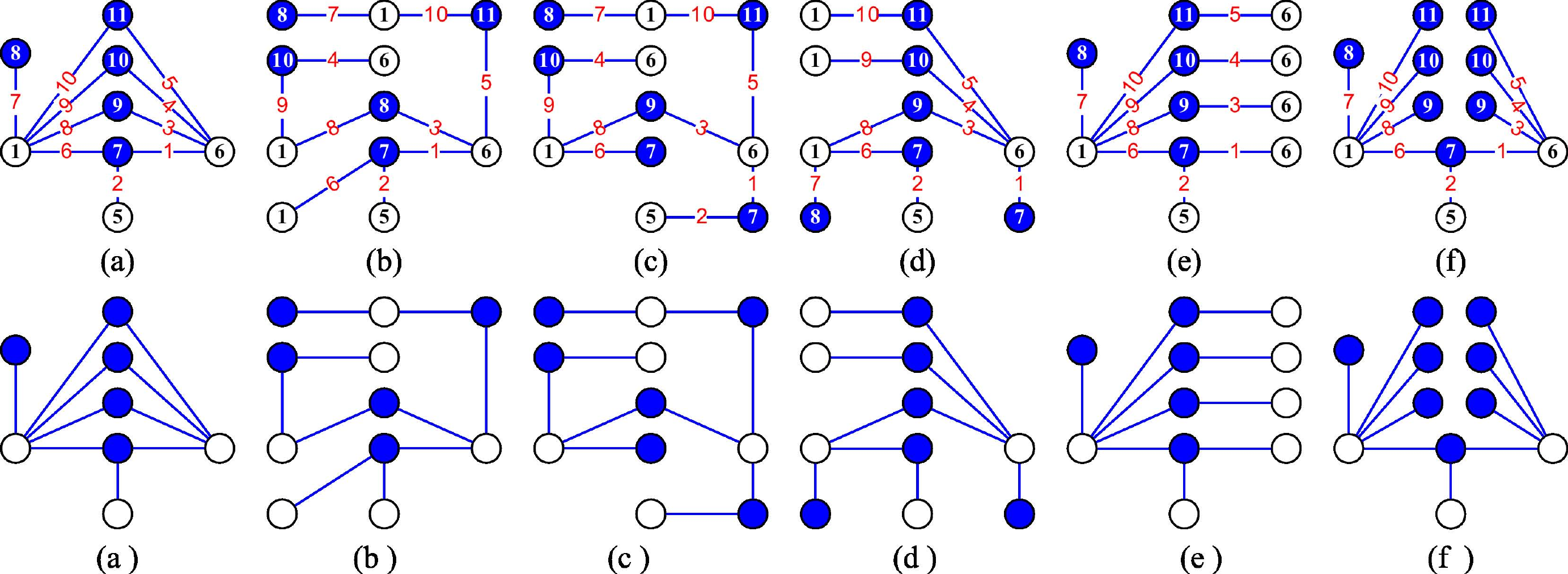}\\
\caption{\label{fig:matrix-vs-more-graphs} {\small $T_{code}$ shown in (\ref{eqa:Topcode-matrix-vs-6-Topsnut-gpws}) is the Topcode-matrix of each Topsnut-gpw $G_{(k)}$ with $k\in \{$a, b, c, d, e, f$\}$.}}
\end{figure}

Each Topsnut-gpw $G_{(k)}$ with $k\in \{$b, c, d, e, f$\}$ can be vertex-coincided into the Topsnut-gpw $G_{(\textrm{a})}$, in other word, two Topsnut-gpws $G_{(k)}$ and $G_{(j)}$ can be transformed to each other by the vertex-coinciding operation and the vertex-splitting operation.

In the language of graph homomorphism, each Topsnut-gpw $G_{(k)}$ admits a graph homomorphism to the Topsnut-gpw $G_{(a)}$ under a mapping $\theta_k:V(G_{(k)})\rightarrow V(G_{(a)})$ for $k\in \{$b, c, d, e, f$\}$.\qqed
\end{rem}

\begin{thm} \label{thm:5-trees-topcode-matrices}
We say $T$ to be \emph{vertex-equivalent} with $H$ if the resultant tree obtained by only doing some vertex-coinciding and vertex-splitting operations to $T$ is isomorphic to $H$, denoted this fact as $T\odot \wedge \odot H$. For any two trees $T$ and $H$ with the same number of vertices, we have $T\odot \wedge \odot H$.
\end{thm}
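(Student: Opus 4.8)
The plan is to show that any two trees on the same number of vertices, say $n$, are vertex-equivalent in the sense of Theorem \ref{thm:5-trees-topcode-matrices}, i.e. $T\odot \wedge \odot H$. The natural strategy is to exhibit a single ``canonical'' tree to which every tree on $n$ vertices can be reduced by vertex-coinciding operations, and from which every tree on $n$ vertices can be recovered by vertex-splitting operations; since vertex-coinciding and vertex-splitting are a pair of mutually inverse operations (as recalled in \textbf{Oper}-1 and \textbf{Oper}-2 of the preliminaries), this will give $T\odot \wedge \odot K \odot \wedge \odot H$ and hence $T\odot \wedge \odot H$. The most convenient canonical object is the star $K_{1,n-1}$, or equivalently the path $P_n$; I will work with the star because the star-graphic lattice machinery of Section 3 and Theorem \ref{thm:General-star-type-graphic-lattices} already organizes trees around stars via the leaf-coinciding operation.

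First I would make precise that $\odot \wedge \odot$ is an equivalence relation on the set of trees of $n$ vertices: reflexivity is trivial, symmetry follows because each vertex-coinciding step is undone by a vertex-splitting step and vice versa, and transitivity follows by concatenating the two sequences of operations. Next I would prove the key reduction lemma: every tree $T$ on $n$ vertices satisfies $T\odot \wedge \odot K_{1,n-1}$. For this I would induct on the number of non-leaf vertices of $T$, or equivalently on the diameter. If $\mathrm{diam}(T)\le 2$ then $T$ is already a star. Otherwise pick a vertex $x$ of $T$ that is adjacent to at least two leaves and is not the center-to-be; more carefully, take a longest path $x_1x_2\cdots x_d$ in $T$, so $x_2$ has degree $\ge 2$ and all of its neighbors except $x_3$ are leaves. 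I would then vertex-split $x_2$ into $x_2'$ (carrying the leaf-neighbors) and $x_2''$ (carrying $x_3$), and vertex-coincide $x_2'$ with $x_1$; this merges the pendant star at $x_2$ into the rest of the tree while strictly decreasing the number of vertices at distance $\ge 2$ from a fixed eventual center, keeping the graph a tree on $n$ vertices. Iterating, $T$ is transformed by vertex-splitting and vertex-coinciding operations into $K_{1,n-1}$. Combined with the equivalence-relation remark, $T\odot \wedge \odot K_{1,n-1}\odot \wedge \odot H$ for any two trees $T,H$ on $n$ vertices, which is exactly the claim; one may also phrase the whole argument through Theorem \ref{thm:trees-vs-connected-graphs} and Theorem \ref{thm:General-star-type-graphic-lattices}, which already say that every connected graph — in particular every tree — sits in a common star-built lattice, so that a tree can be leaf-split down to a union of stars and then leaf-coincided / vertex-coincided up to $K_{1,n-1}$.

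The step I expect to be the main obstacle is verifying that the reduction moves can always be carried out without leaving the class of trees on $n$ vertices and while genuinely making progress toward the star — in other words, choosing the right potential function and checking that a vertex-split followed by a well-chosen vertex-coincide strictly decreases it. One must be careful that the vertex-coinciding operation as defined requires the two coincided vertices to have disjoint neighborhoods (the ``edge-protected'' condition $N(x')\cap N(x'')=\emptyset$), so I need to argue that after the split the leaf-bundle vertex $x_2'$ and the target vertex $x_1$ indeed have disjoint neighbor sets, which holds because $x_1$ is a leaf on the longest path and its only neighbor is $x_2$. A secondary technical point is bookkeeping the vertex count: a vertex-split adds one vertex and a vertex-coincide removes one, so pairing them keeps $n$ fixed throughout; I would state this explicitly so the induction is clean. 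Everything else — the equivalence-relation claims and the base case — is routine.
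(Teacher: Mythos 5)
Your overall plan — show that $\odot\wedge\odot$ is an equivalence relation and reduce every $n$-vertex tree to the canonical star $K_{1,n-1}$ by paired split/coincide moves that keep the vertex count fixed — is sound (the paper itself states Theorem \ref{thm:5-trees-topcode-matrices} without proof, so the argument must stand on its own). But the concrete reduction move you propose fails. After you vertex-split $x_2$ into $x_2'$ (carrying the leaf-neighbors, which include $x_1$) and $x_2''$ (carrying $x_3$), the graph is a forest with two components: the pendant star on $\{x_2',x_1,\dots\}$ and the rest of the tree containing $x_2''$. You then vertex-coincide $x_2'$ with $x_1$, i.e.\ two \emph{adjacent} vertices lying in the \emph{same} component. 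Your check that $N(x_1)\cap N(x_2')=\emptyset$ is beside the point: the sets are formally disjoint, but since $x_1x_2'$ is an edge, the edge-protected coinciding operation turns it into a loop, which is excluded because the paper works with simple graphs; and even ignoring that, the move does not reconnect the two components, so the result is a disconnected graph on $n-1$ vertices, not a tree on $n$ vertices, and it does not "merge the pendant star into the rest of the tree". Your potential function is also not decreased: $x_1$ is an endpoint of a longest path, so coinciding toward $x_1$ moves material away from any eventual center, not toward it.

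The fix is small but essential: the coinciding must be performed \emph{across} the two components created by the split, aimed at the center side. For instance, split $x_2$ so that $x_2'$ carries exactly the pendant leaves of $x_2$ and $x_2''$ carries $x_3$, and then coincide $x_2'$ with $x_3$ (legal: the leaves of $x_2$ are not neighbors of $x_3$, and $x_2'$, $x_3$ lie in different components, so no loop or multiedge arises); this yields a tree on $n$ vertices in which those leaves hang from $x_3$ and $x_2''$ is a new leaf, strictly decreasing, say, the sum of distances to a fixed target center. Equivalently, and even more cleanly, realize the elementary "reattach a leaf" move: for a leaf $u$ with neighbor $v$ and a chosen center $c$, split $v$ into $v'$ with $N(v')=\{u\}$ and $v''$ with the remaining neighbors, then coincide $v'$ with $c$; the net effect is deleting the edge $uv$ and adding $uc$, and iterating this turns any tree into $K_{1,n-1}$. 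With that repaired move your equivalence-relation bookkeeping (split adds a vertex, coincide removes one, so $n$ is preserved per pair) goes through and completes the proof.
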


\subsubsection{Topcode-matrix lattices}
For a given Topcode-matrix $T_{code}$ if a graph $G$ admits a $W$-type coloring $f$, such that each $x_i=f(x)$ for some vertex $x\in V(G)$ and each $y_j=f(y)$ for some vertex $y\in V(G)$ and every $e_i=f(vu)$ for some edge $uv\in E(G)$, then we say $T_{code}$ corresponds the graph $G$, conversely, $G$ has its own Topcode-matrix, denoted as $T_{code}(G)$. Let $T^i_{code}=(X_i,~E_i,~Y_i)^{T}_{3\times q_i}$, where $X_i=(x^i_1, x^i_2, \cdots, x^i_{q_i})$, $E_i=(e^i_1, e^i_2, \cdots, e^i_{q_i})$ and $Y_i=(y^i_1, y^i_2, \cdots, y^i_{q_i})$ with $i=1,2$. The union operation ``$\uplus$'' of two Topcode-matrices $T^1_{code}$ and $T^2_{code}$ is defined by
\begin{equation}\label{eqa:xxx}
T^1_{code}\uplus T^2_{code}=(X_1\cup X_2,~E_1\cup E_2,~Y_1\cup Y_2)^{T}_{3\times (q_1+q_2)}.
\end{equation}
with $X_1\cup X_2=(x^1_1, x^1_2, \cdots, x^1_{q_1},x^2_1, x^2_2, \cdots, x^2_{q_2})$, $Y_1\cup T_2=(y^1_1, y^1_2, \cdots, y^1_{q_1},y^2_1, y^2_2, \cdots, y^2_{q_2})$ and $E_1\cup E_2=(e^1_1, e^1_2, \cdots, e^1_{q_1},e^2_1, e^2_2, \cdots, e^2_{q_2})$. Moreover, each Topcode-matrix $T^i_{code}$ corresponds a graph $G_i$, and each of $X_1\cap X_2\neq \emptyset $ and $Y_1\cap Y_2\neq \emptyset $ holds true, thus the Topcode-matrix of the vertex-coincided graph $G_1\odot G_2$ is just $T_{code}(G_1\odot G_2)=T^1_{code}\uplus T^2_{code}$. So, we vertex-split the vertex-coincided graph $H\odot |^n_{i=1}a_iT_i$ with $a_i\in \{0,1\}$ into disjoint graphs $T_1,T_2,\dots ,T_n,H$. In the expression of Topcode-matrices, we have the Topcode-matrix of a vertex-coincided graph $H\odot |^n_{i=1}a_iT_i$ as follows
\begin{equation}\label{eqa:xxx}
{
\begin{split}
T_{code}(H\odot |^n_{i=1}T_i)=T_{code}(H)\uplus ^n_{i=1}T^i_{code}=T_{code}(H)\uplus T^1_{code}\uplus T^2_{code}\uplus \cdots \uplus T^n_{code},
\end{split}}
\end{equation}
where $T^i_{code}=(X_i,~E_i,~Y_i)^{T}_{3\times q_i}=T_{code}(T_i)$ with $i\in [1,n]$, such that $T_{code}(H)=(X_H,~E_H,~Y_H)^{T}_{3\times q_H}$ and one of $X_H\cap X_i\neq \emptyset $ and $Y_H\cap Y_i\neq \emptyset $ for $i\in [1,n]$ holds true, and the size of $T_{code}(H\odot |^n_{i=1}T_i)$ is equal to $q_H+\sum^n_{i=1} q_i$. We call the following set
\begin{equation}\label{eqa:Topcode-matrices-lattice}
{
\begin{split}
\textbf{\textrm{L}}(\textbf{\textrm{T}}_{\textrm{code}}\uplus F_{p,q})=\{T_{code}(H)\uplus ^n_{i=1}a_iT^i_{code}:~a_i\in Z^0,~H\in F_{p,q}\}
\end{split}}
\end{equation}
a \emph{Topcode-matrix lattice} with $\sum^n_{i=1}a_i\geq 1$, where $\textbf{\textrm{T}}_{\textrm{code}}=(T^1_{code}$, $T^2_{code}$, $\dots $, $T^n_{code})$ is a group of \emph{linearly independent Topcode-matrix vectors} under the vertex-coinciding operation. Notice that the number of graphs corresponding to $T_{code}(H)$ (resp. $T^i_{code}$) is not one, in general. So, the cardinality of a Topcode-matrix lattice corresponding the graphic lattices (resp. colored graphic lattices) is very larger.

\begin{problem}\label{qeu:topcode-matrix-lattices}
About Topcode-matrix lattices, we have the following questions:
\begin{asparaenum}[\textbf{\textrm{Pro}}-1. ]
\item If $\{T^i_{code}\}^n_1$ is an every-zero graphic group, show properties of the Topcode-matrix lattice $\textbf{\textrm{L}}(\textbf{\textrm{T}}_{\textrm{code}}\uplus F_{p,q})$.

\item \textbf{Decompose} an evaluated Topcode-matrix $T_{code}$ defined in Definition \ref{defn:topcode-matrix-definition} into submatrices $T^1_{code}$, $T^2_{code}$, $\dots $, $T^m_{code}$, such that each $T^i_{code}$ is just a Topcode-matrix of a connected graph $H_i$ for $i\in [1,m]$.
\item \textbf{Define} Topcode-matrix lattices for other graphic lattices.\qqed
\end{asparaenum}
\end{problem}

\subsubsection{Text-based strings from Topcode-matrices}

We have three reciprocals from $T_{code}=(X,E,Y)^T$: $X^{-1}=(x_q,x_{q-1} , \cdots ,x_1)$, $E^{-1}=(e_q $, $ e_{q-1} $, $\cdots $, $e_1)$ and $Y^{-1}=(y_q, y_{q-1},\cdots ,y_1)$, then we get the \emph{reciprocal} of $T_{code}$, denoted as $T^{-1}_{code}=(X^{-1},E^{-1},Y^{-1})^{T}$ (Ref. \cite{Yao-Zhang-Sun-Mu-Sun-Wang-Wang-Ma-Su-Yang-Yang-Zhang-2018arXiv}). For a fixed Topcode-matrix $T_{code}$ and its reciprocal $T^{-1}_{code}$, there are basic algorithmic routes for generating text-based passwords (TB-paws) from Topcode-matrices:
\begin{asparaenum}[\textbf{\textrm{Route}}-1. ]
\item $D_1=x_1x_2\cdots x_qe_q e_{q-1} \cdots e_2e_1y_1 y_2\cdots y_q$ with its reciprocal $D^{-1}_1$, see Fig.\ref{fig:matrix-tb-text-1} (a) and (b).

\item $D_2=x_q x_{q-1} \cdots x_1e_1 e_2\cdots e_{q-1}e_qy_q y_{q-1} \cdots y_1$ with its reciprocal $D^{-1}_2$.

\item $D_3=x_1e_1y_1y_2e_2x_2x_3e_3y_3\cdots x_qe_qy_q$ with its reciprocal $D^{-1}_3$, see Fig.\ref{fig:matrix-tb-text-1} (c) and (d).

\item $D_4=x_q e_q y_q y_{q-1}e_{q-1}x_{q-1} \cdots y_2e_2x_2x_1e_1y_1$ with its reciprocal $D^{-1}_4$.

\item $D_5=y_2y_1e_1x_1e_2y_3y_4e_3x_2\cdots x_{q-2}e_{q-1}y_qe_{q}x_{q}$ $x_{q-1}$ with its reciprocal $D^{-1}_5$, see Fig.\ref{fig:matrix-tb-text-1} (e) and (f).

\item $D_6= y_{q-1}y_qe_{q}x_{q}e_{q-1}y_{q-2}y_{q-3}e_{q-2}x_{q-1}$ $\cdots $ $x_{2}e_{2}y_1e_{1}x_{1}x_{2}$ with its reciprocal $D^{-1}_6$.

\item Suppose $g: \{x_i,e_i,y_i:~x_i\in X,~e_i\in E,~y_i\in Y\}\rightarrow \{a_i:i\in [1,3q]\}$ is a bijection on the Topcode-matrix $T_{code}$, so it induces a TB-paw
\begin{equation}\label{eqa:Topcode-matrix-vs-TB-paws}
g=g^{-1}(a_{i_1})g^{-1}(a_{i_2})\cdots g^{-1}(a_{i_{3q}})
\end{equation} with its reciprocal $g^{-1}$, where $a_{i_1},a_{i_2},\dots,a_{i_{3q}}$ is a permutation of $a_1a_2\dots a_{3q}$. So, there are $(3q)!$ TB-paws by (\ref{eqa:Topcode-matrix-vs-TB-paws}), in general. Clearly, there are many random routes for inducing TB-paws from Topcode-matrices (see Fig.\ref{fig:matrix-tb-text-2}).
\begin{figure}[h]
\centering
\includegraphics[width=15.6cm]{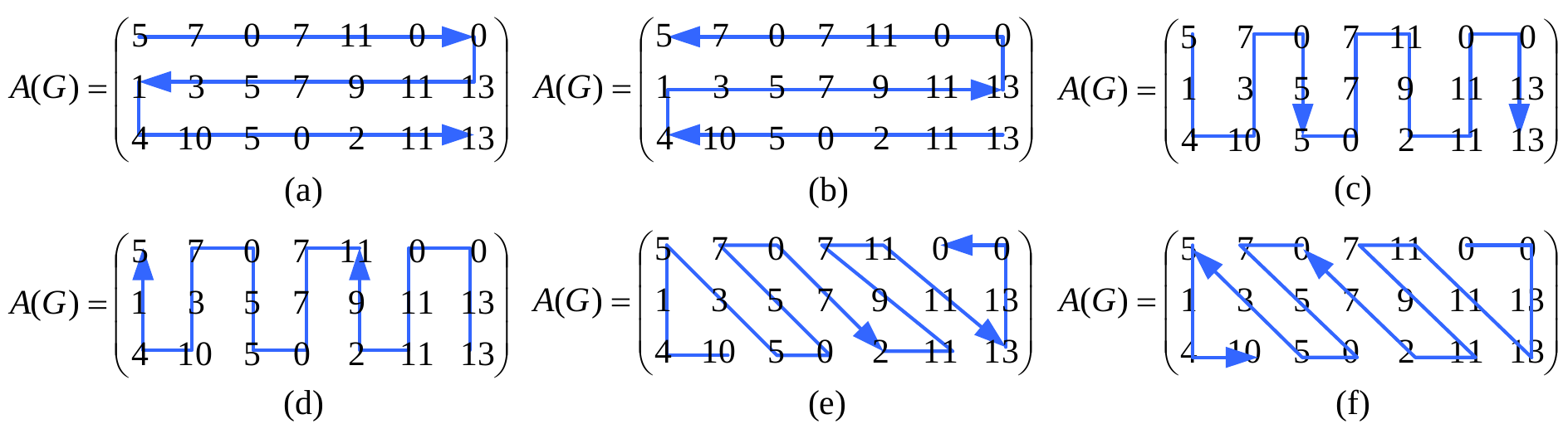}
\caption{\label{fig:matrix-tb-text-1}{\small Examples for illustrating basic algorithmic routes of generating TB-paws from Topcode-matrices.}}
\end{figure}
\end{asparaenum}

\begin{figure}[h]
\centering
\includegraphics[width=16.4cm]{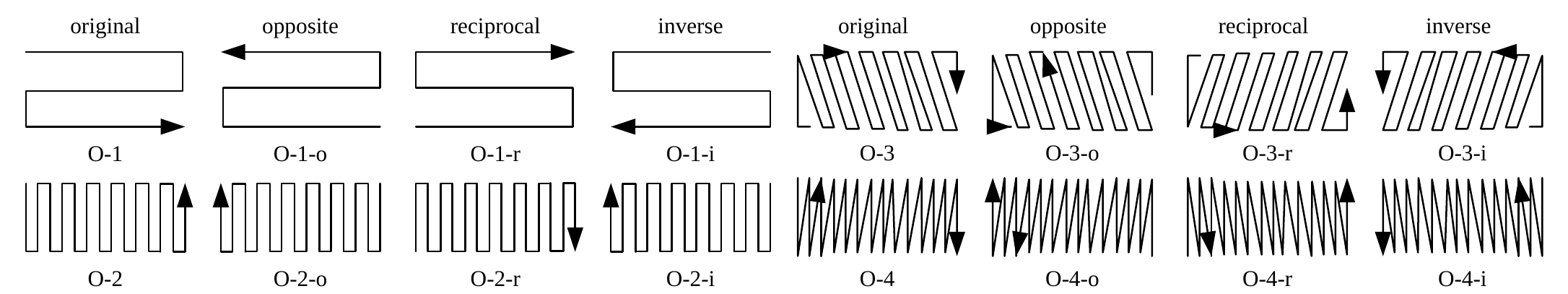}
\caption{\label{fig:matrix-tb-text-2}{\small Four kinds of algorithms for generating TB-paws from Topcode-matrices cited from \cite{Yao-Mu-Sun-Sun-Zhang-Wang-Su-Zhang-Yang-Zhao-Wang-Ma-Yao-Yang-Xie2019}.}}
\end{figure}

The number $N_{tbp}$ of all TB-paws generated from a Topcode-matrix $T_{code}$ can be computed in the formula (\ref{eqa:TB-paws-number}):

\begin{thm}\label{thm:TB-paws-numbers}
\cite{Yao-Zhang-Sun-Mu-Sun-Wang-Wang-Ma-Su-Yang-Yang-Zhang-2018arXiv} A Topcode-matrix $T_{code}$ of size $q$ distributes us
\begin{equation}\label{eqa:TB-paws-number}
N_{tbp}=(2q)\cdot (3q)!\cdot q!
\end{equation} different TB-paws in total.
\end{thm}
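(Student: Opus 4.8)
The plan is to count the TB-paws produced from a fixed Topcode-matrix $T_{code}$ of size $q$ by carefully identifying the degrees of freedom that a TB-paw construction uses. First I would make precise what is being counted: a TB-paw obtained from $T_{code}$ is a linear word formed by reading off all $3q$ entries $x_1,\dots,x_q,e_1,\dots,e_q,y_1,\dots,y_q$ in some admissible order, where an admissible order is governed by the algorithmic routes already illustrated (Routes 1--7) — and, crucially, each such reading may also be taken in reverse (the reciprocal word $D^{-1}$). So the count factors as (number of ``column orderings'' or permutations of the $q$ edge-columns) $\times$ (number of internal arrangements inside/across the three rows) $\times$ (a factor of $2$ for the word-vs-reciprocal choice). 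I would aim to show these three independent contributions are exactly $q!$, $(3q)!$, and $2q$ respectively — wait, that over-counts, so the real work is to pin down the correct partition of the $2q\cdot(3q)!\cdot q!$ total into genuinely independent choices.

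The cleanest route: interpret Theorem~\ref{thm:TB-paws-numbers}'s three factors as follows. The factor $(3q)!$ comes from Route~7, the fully general bijection $g$ from the $3q$ matrix entries to positions $1,\dots,3q$ — there are $(3q)!$ permutations of a $3q$-letter multiset-indexed string. The factor $q!$ reflects that before applying any route one may permute the $q$ columns of $T_{code}$ (each column is the triple $(x_i,e_i,y_i)$ of an edge together with its two ends), and this column permutation genuinely changes the resulting string while preserving the Topcode-matrix as an unordered object. The factor $2q$ I would split as $2$ (reverse/reciprocal, giving $D$ vs $D^{-1}$ as in every Route) times $q$ (a cyclic-type shift or a choice of starting column among the $q$ edge-columns in the ``zig-zag'' Routes 3--6). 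Then $N_{tbp}=2\cdot q\cdot (3q)!\cdot q!$, but that is $(2q)\cdot(3q)!\cdot q!$, which matches the claimed formula. So the key steps, in order, are: (1) fix a canonical presentation of $T_{code}$ and define ``TB-paw generated from $T_{code}$'' as the orbit of strings under the composite of column-permutation, route-selection, and reversal; (2) verify that the $(3q)!$ general routes of type~\eqref{eqa:Topcode-matrix-vs-TB-paws} are distinct as strings (using that the entries, as positioned symbols, are distinguishable by their slot even if numerically equal); (3) show the column permutations contribute an independent factor $q!$; (4) show the $2q$ factor from reversal-plus-start-index is independent of the previous two; (5) multiply, citing that no two of these operations produce coincidences beyond those already accounted for.

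The main obstacle will be step (5): ruling out over-counting. Because a Topcode-matrix can have repeated numerical entries (indeed $\max f(V(G))$ can coincide with edge-values, as Remark~\ref{rem:a-topcode-matrix-vs-more-graphs} and the explicit $T_{code}$ in \eqref{eqa:Topcode-matrix-vs-6-Topsnut-gpws} show), two different choices of (column permutation, route, orientation) could in principle yield the same literal string of numbers. I would handle this by adopting the convention — consistent with how the authors generate passwords — that TB-paws are counted with multiplicity as \emph{constructions} rather than as bare numeric strings, i.e. the count $N_{tbp}$ is the number of admissible reading procedures, not the number of distinct outputs; under that reading the three factors are manifestly independent and the product is exact. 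If instead one wants distinct strings, the formula becomes an upper bound and equality holds precisely when all $3q$ entries are pairwise distinct, which I would note as a corollary. A secondary, more bookkeeping obstacle is making the phrase ``basic algorithmic routes'' rigorous enough that ``$(3q)!$ routes'' is a theorem and not a definition; I expect the intended reading is exactly Route~7, with Routes 1--6 being named special cases, so I would state that explicitly and then the proof reduces to the elementary factorial count above.
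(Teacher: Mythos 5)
The paper you were given never proves Theorem~\ref{thm:TB-paws-numbers}: it is imported verbatim (with a citation) from the earlier arXiv paper, and the only surrounding material is the list of Routes~1--7 and Fig.~\ref{fig:matrix-tb-text-2}. So there is no in-paper argument to compare yours against; your proposal has to stand on its own, and as it stands it has a genuine gap rather than a complete proof.

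The gap is that you never exhibit a well-defined collection of objects whose cardinality is exactly $(2q)\cdot(3q)!\cdot q!$, and your two attempts to repair this both fail. First, your proposed corollary -- that for distinct strings the formula is an upper bound with equality precisely when all $3q$ entries are pairwise distinct -- is false: a TB-paw is a word in the $3q$ entries, so even when the entries are pairwise distinct the number of distinct words is at most $(3q)!$, which is strictly smaller than $(2q)\cdot(3q)!\cdot q!$ for every $q\geq 1$; equality can never hold, and the formula cannot be a count of distinct strings under any hypothesis. Second, under your preferred reading (count generation procedures rather than outputs), the argument becomes circular: a column permutation followed by a Route-7 bijection $g$ of \eqref{eqa:Topcode-matrix-vs-TB-paws} is again a Route-7 bijection, and the reciprocal of a Route-7 word is again a Route-7 word, so the ``factors'' $q!$, $2$, and $q$ are not independent operations acting on anything -- they are independent only as formal labels you attach, which amounts to defining $N_{tbp}$ to be the product rather than proving it. Moreover, nothing in the paper yields a factor of exactly $2q$: Routes~1--6 with their reciprocals give twelve specific words $D_1,D_1^{-1},\dots,D_6,D_6^{-1}$, Route~7 gives $(3q)!$, and your split $2q=2\cdot q$ (reversal times a ``starting column'') is reverse-engineered from the target formula, with ``starting column'' not even defined for Route~7. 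To close the gap you would need to specify, independently of the answer, the precise set of admissible reading schemes (e.g.\ orderings of the columns of $T_{code}$ together with a designated class of traversal patterns of the $3\times q$ array and an orientation), prove that this set has cardinality $(2q)\cdot(3q)!\cdot q!$, and state explicitly that TB-paws are counted with multiplicity over schemes; without that, the theorem is not established, only restated.
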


\subsubsection{Algebraic operation of real-valued Topcode-matrices}

We define an algebraic operation on Topcode-matrices \cite{Yao-Wang-Ma-Su-Wang-Sun-ITNEC-020}. In Definition \ref{defn:topcode-matrix-definition}, the Topcode-matrix $I_{code}=(X, E, Y)^{T}$ with $x_i=1$, $e_i=1$ and $y_i=1$ for $i\in [1,q]$ is called the \emph{unit Topcode-matrix}. For two Topcode-matrices $T^j_{code}=(X^j, E^j, Y^j)^{T}$ with $j=1,2$, where $X^j=(x^j_1, x^j_2, \cdots , x^j_q)$, $E^j=(e^j_1, e^j_2, \cdots , e^j_q)$ and $Y^j=(y^j_1, y^j_2, \cdots , y^j_q)$, the \emph{coefficient multiplication} of a function $f(x)$ and a Topcode-matrix $T^j_{code}$ is defined by
$${
\begin{split}
f(x)\cdot T^j_{code}=f(x)\cdot(X^j, E^j, Y^j)^{T}=(f(x)\cdot X^j, f(x)\cdot E^j, f(x)\cdot Y^j)^{T}
\end{split}}
$$
where $f(x)\cdot X^j=(f(x)\cdot x^j_1, f(x)\cdot x^j_2, \cdots , f(x)\cdot x^j_q)$, $f(x)\cdot E^j=(f(x)\cdot e^j_1, f(x)\cdot e^j_2, \cdots , f(x)\cdot e^j_q)$ and $f(x)\cdot Y^j=(f(x)\cdot y^j_1, f(x)\cdot y^j_2, \cdots , f(x)\cdot y^j_q)$. And the addition between two Topcode-matrices $T^1_{code}$ and $T^2_{code}$ is denoted as $T^1_{code}+T^2_{code}$, and
$$T^1_{code}+T^2_{code}=(X^1+X^2, E^1+E^2, Y^1+Y^2)^{T},$$
where $X^1+X^2=(x^1_1+x^2_1, x^1_2+x^2_2, \cdots , x^1_q+x^2_q)$, $E^1+E^2=(e^1_1+e^2_1, e^1_2+e^2_2, \cdots , e^1_q+e^2_q)$ and $Y^1+Y^2=(y^1_1+y^2_1, y^1_2+y^2_2, \cdots , y^1_q+y^2_q)$. We have a \emph{real-valued Topcode-matrix} $R_{code}$ defined as:
$R_{code}=\alpha(\varepsilon)T^1_{code}+\beta(\varepsilon)T^2_{code}$
and another real-valued Topcode-matrix
\begin{equation}\label{eqa:real-valued-topcode-matrix}
R_{code}(f_\varepsilon,G)=\alpha(\varepsilon)I_{code}+\beta(\varepsilon)T_{code}(G)
\end{equation}where $I_{code}$ is the unit Topcode-matrix, $T_{code}(G)$ is a Topcode-matrix of $G$.

Clearly, the text-based passwords induced by the real-valued Topcode-matrix $R_{code}(f_\varepsilon,G)$ are complex than that induced by a Topcode-matrix of $G$, and have huge numbers, since two functions $\alpha(\varepsilon)$ and $\beta(\varepsilon)$ are real and various. We have the following relationships between a Topcode-matrix $T_{code}(G)$ and a real-valued Topcode-matrix $R_{code}(f_\varepsilon,G)$:

(1) $e_i=|x_i-y_i|$ in a Topcode-matrix $T_{code}(G)$ of a $(p,q)$-graph $G$ corresponds $\alpha(\varepsilon)+\beta(\varepsilon)|x_i-y_i|$ of the real-valued Topcode-matrix $R_{code}(f_\varepsilon,G)$;

(2) $x_i+e_i+y_i=k$ in $T_{code}(G)$ corresponds $3\alpha(\varepsilon)+\beta(\varepsilon)\cdot k$ of $R_{code}(f_\varepsilon,G)$;

(3) $e_i+|x_i-y_i|=k$ in $T_{code}(G)$ corresponds $\alpha(\varepsilon)+\beta(\varepsilon)\cdot k$ of $R_{code}(f_\varepsilon,G)$;

(4) $|x_i+y_i-e_i|=k$ in $T_{code}(G)$ corresponds $\alpha(\varepsilon)+\beta(\varepsilon)\cdot k$ of $R_{code}(f_\varepsilon,G)$ if $e_i-(x_i+y_i)\geq 0$, otherwise $|x_i+y_i-e_i|=k$ corresponds $|\beta(\varepsilon)\cdot k-\alpha(\varepsilon)|$;

(5) $\big ||x_i-y_i|-e_i\big |=k$ in $T_{code}(G)$ corresponds $\alpha(\varepsilon)+\beta(\varepsilon)\cdot k$ of $R_{code}(f_\varepsilon,G)$ if $|x_i+y_i|-e_i< 0$, otherwise $\big ||x_i-y_i|-e_i\big |=k$ corresponds $|\beta(\varepsilon)\cdot k-\alpha(\varepsilon)|$.

\subsection{Connection between graphic lattices and traditional lattices}

\subsubsection{Topological coding lattice and traditional lattices}

Yao \emph{et al.} in \cite{Yao-Mu-Sun-Zhang-Wang-Su-2018} discussed the connection between text-based passwords and topological graphic passwords. We will construct a kind of lattices made by Topcode-matrices in the following.

For a Topcode-matrix lattice $\textbf{\textrm{L}}(\textbf{\textrm{T}}_{\textrm{code}}\uplus F_{p,q})$ with a group of linearly independent Topcode-matrix vectors $\textbf{\textrm{T}}_{\textrm{code}}=(T^1_{code},T^2_{code},\dots ,T^n_{code})$ under the vertex-coinciding operation, we do:

\emph{Step 1.} Take a determined bijection $\beta_i$ from $\{x_{i,j}$, $e_{i,j}$, $y_{i,j}:$ $~x_{i,j}\in X_i,~e_{i,j}\in E_i$, $y_{i,j}\in Y_i\}$ of each $T^i_{code}=(X_i$, $E_i$, $Y_i)^{T}$ to $\{b_{i,j}:~j\in [1,3q]\}$ to obtain a TB-paw $b_{i,j_1}b_{i,j_2}\dots b_{i,j_{3q}}$, which is a permutation of the TB-paw $b_{i,1}b_{i,2}\dots b_{i,3q}$ with $i\in [1,n]$, and we write this proceeding as $\beta_i(T^i_{code})=b_{i,1}b_{i,2}\dots b_{i,3q}$.

\emph{Step 2.} Similarly, we have another determined bijection $\alpha$ to translate $T_{code}(H)$ for $H\in F_{p,q}$ into a determined TB-paw $\alpha(T_{code}(H))=a_{1}a_{2}\dots a_{3q}$.

\emph{Step 3.} We cut $\alpha(T_{code}(H))$ into $n$ fragments $A_1$, $A_2$, $\dots $, $A_n$, correspondingly, we cut each TB-paw $b_{i,j_1}$ $b_{i,j_2}$ $\dots b_{i,j_{3q}}$ into $n$ fragments $B_{i,1},B_{i,2},\dots ,B_{i,n}$ with $i\in [1,n]$.

\emph{Step 4.} Suppose that all $a_{s}$ and $b_{i,j}$ are non-negative integers. Thereby, we get a traditional lattice defined as follows
\begin{equation}\label{eqa:graph-matric-numbers-lattice}
\textrm{\textbf{L}}(\textbf{\textrm{V}},F_{p,q})=\left \{\sum^n_{k=1}A_k \textbf{\textrm{V}}_k: ~A_k\in Z,~H\in F_{p,q}\right\}
\end{equation} where $\sum^n_{k=1}A_k\geq 1$, and $\textbf{\textrm{V}}_i=(B_{i,1},B_{i,2},\dots ,B_{i,n})$ is a vector, $\textbf{\textrm{V}}=(\textbf{\textrm{V}}_1$, $\textbf{\textrm{V}}_2$, $\dots $, $\textbf{\textrm{V}}_n)$ is a group of \emph{linearly independent vectors}, or a \emph{lattice base}. Clearly, our lattice $\textrm{\textbf{L}}(\textbf{\textrm{V}},F_{p,q})$ defined in (\ref{eqa:graph-matric-numbers-lattice}) is the same as a traditional lattice $\textrm{\textbf{L}}(\textbf{B})$ defined in (\ref{eqa:popular-lattice}), but $\textrm{\textbf{L}}(\textbf{\textrm{V}},F_{p,q})$ generated from the topological structure $H$ (also a graph) and the mathematical restrictions, that is, Topcode-matrices. So, we call $\textrm{\textbf{L}}(\textbf{\textrm{V}},F_{p,q})$ a \emph{topological coding lattice} for distinguishable purpose.

\subsubsection{Star-type graphic lattices and traditional lattices}

Notice that the graceful-difference star-graphic lattices $\textrm{\textbf{L}}(\ominus \textbf{I}_{ce}(GD))$ defined in (\ref{eqa:graceful-difference-stars-lattice}) and $\textrm{\textbf{L}}(\ominus \textbf{I}_{ce}(LGD))$ defined in (\ref{eqa:general-graceful-difference-stars-lattice}) construct more Topsnut-gpws admitting graceful-difference proper total colorings \cite{Wang-Yao-Star-type-Lattices-2020}. We come to build up a connection between star-graphic lattices and traditional lattices.

We call a tree to be a \emph{caterpillar} if the remainder after removing all leaves of this tree is just a \emph{path}, call this path the \emph{ridge} of the caterpillar, see a general caterpillar shown in Fig.\ref{fig:caterpillar-big}(a).

Let $L(T)$ be the set of leaves of a caterpillar $T$, and $P=u_1u_2\cdots u_n$ be the remainder $T-L(T)$ after deleting $L(T)$ from $T$. Furthermore, let $L_{eaf}(u_i)$ be the set of leaves adjacent with a vertex $u_i$ of the ridge $P=u_1u_2\cdots u_n$ of the caterpillar $T$. Thereby, we define $V_{ec}(T)=(a_1,a_2,\dots ,a_n)$ to be the \emph{topological vector} of the caterpillar $T$, where $a_i=|L_{eaf}(u_i)|$ with $i\in [1,n]$. See a topological vector shown in Fig.\ref{fig:caterpillar-big}(b). Obviously, each caterpillar can be expressed by $\overline{\ominus}^{\Delta}_{k=1}a_k K_{1,k}$, see Fig.\ref{fig:caterpillar-big}(b) and Fig.\ref{fig:caterpillar-decom-big}.

\begin{figure}[h]
\centering
\includegraphics[width=16.4cm]{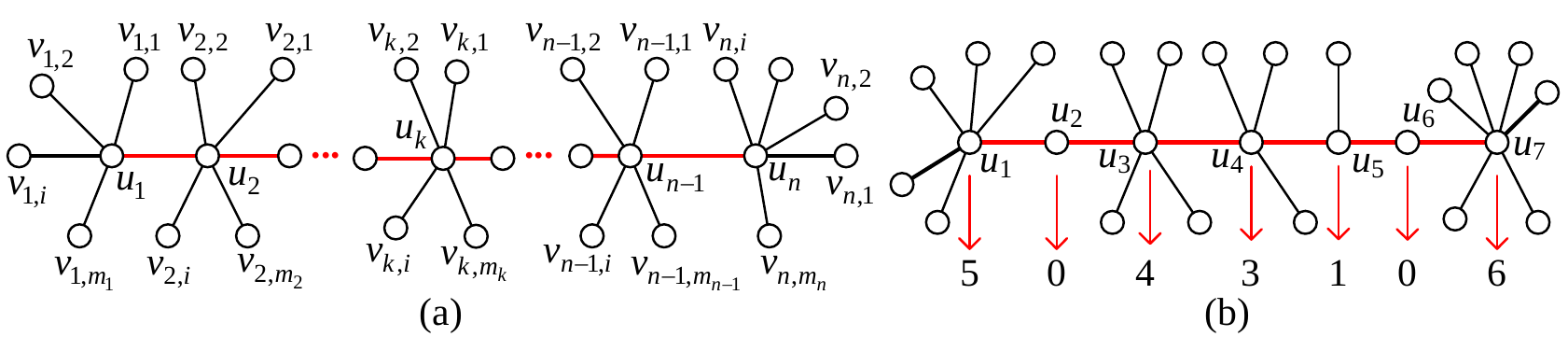}\\
\caption{\label{fig:caterpillar-big} {\small (a) A general caterpillar; (b) a caterpillar $H$ with its topological vector $V_{ec}(H)=(5,0,4,3,1,0,6)$.}}
\end{figure}

\begin{figure}[h]
\centering
\includegraphics[width=16.2cm]{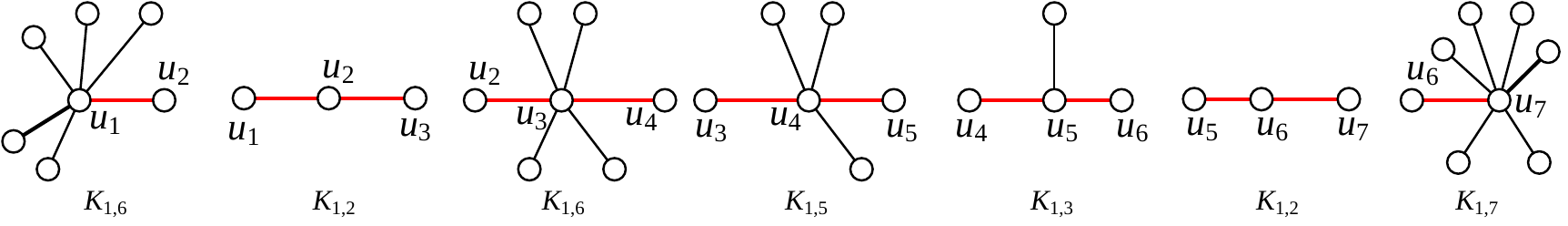}\\
\caption{\label{fig:caterpillar-decom-big} {\small The star decomposition of a caterpillar $H$ shown in Fig.\ref{fig:caterpillar-big}(b).}}
\end{figure}

Let each $T_k$ be a caterpillar with its topological vector $V_{ec}(T_k)$ and its ridge $P_k=u_{k,1}u_{k,2}\cdots u_{k,n}$ for $k\in [1,m]$. We call the following set
\begin{equation}\label{eqa:topological-vector-lattice}
\textrm{\textbf{L}}(\textbf{T}) =\left \{\sum^{m}_{k=1}a_k V_{ec}(T_k): a_k\in Z^0, k\in [1,m]\right \}
\end{equation} a \emph{topological coding lattice} with its \emph{lattice base} $\textbf{T}=\{V_{ec}(T_k)\}^m_1$, where $\sum^{m}_{k=1} a_k\geq 1$, $T_k$ belongs to the set $C_{ater}$ of caterpillars.

We provide a graph corresponding a topological vector $\sum^{m}_{k=1}a_k V_{ec}(T_k)$ with $a_k\in Z^0$ and $\sum^{m}_{k=1} a_k\geq 1$ as follows: Let $P^{j}_k=u^{j}_{k,1}u^{j}_{k,2}\cdots u^{j}_{k,n}$ be the $j$th copy of the ridge $P_k=u_{k,1}u_{k,2}\cdots u_{k,n}$ of a caterpillar $T_k$ for $j\in [1,a_k]$. So, we get a caterpillar $T^{j}_k$ with ridge $P^{j}_k=u^{j}_{k,1}u^{j}_{k,2}\cdots u^{j}_{k,n}$ for $j\in [1,a_k]$, clearly, $T^{j}_k$ is the $j$th copy of the caterpillar $T_k$. There are ways:

Way-1. We take a caterpillar $G$ with its ridge $P(G)=y_1y_2\cdots y_n$ such that the leaf set $L_{eaf}(y_i)$ of each $y_i$ holding $|L_{eaf}(y_i)|=\sum^{m}_{k=1}\sum ^{a_k}_{j=1}|L_{eaf}(u^{j}_{k,i})|$ for $i\in [1,n]$. Thereby, this caterpillar $G$ has its own topological vector $V_{ec}(G)=\sum^{m}_{k=1}a_k V_{ec}(T_k)$.

Way-2. We take a new vertex $w$, and join $w$ with the initial vertex $u^{j}_{k,1}$ of each ridge $P^j_k$ by an edge $wu^{j}_{k,1}$ with $j\in [1,a_k]$ and $k\in [1,m]$, the resulting graph is a \emph{super spider}, denoted by $w\ominus ^{m}_{k=1}\big (\bigcup ^{a_k}_{j=1} T^{j}_k\big )$. We call each caterpillar $T^{j}_k$ to be a \emph{super leg}, and $w$ the \emph{body}. Moreover, let $H=w\ominus ^{m}_{k=1}\big (\bigcup ^{a_k}_{j=1} T^{j}_k\big )$, we have the topological vector $V_{ec}(H)=\sum^{m}_{k=1}a_k V_{ec}(T_k)$.

\subsection{Directed Topcode-matrix lattices}

In Fig.\ref{fig:directed-topcode-matrix}, $\overrightarrow{T}$ is a \emph{directed Topsnut-gpw}, and it corresponds a \emph{directed Topcode-matrix} $A(\overrightarrow{T})$. In directed graph theory, the out-degree is denoted by ``$+$'', and the in-degree is denoted by ``$-$''. So, $\overrightarrow{T}$ has $d^+(22)=5$, $d^-(22)=0$, $d^-(13)=3$ and $d^+(13)=0$, and so on.

\begin{figure}[h]
\centering
\includegraphics[width=16cm]{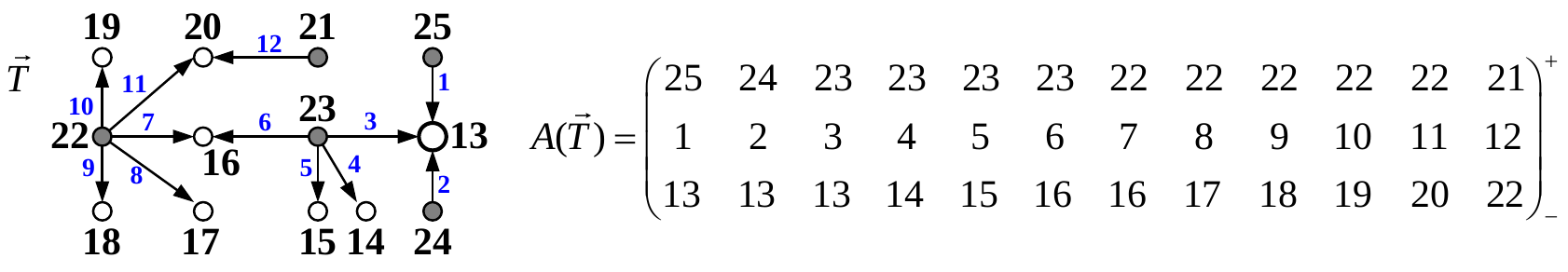}
\caption{\label{fig:directed-topcode-matrix}{\small A directed Topsnut-gpw with a directed Topcode-matrix.}}
\end{figure}

We show the definition of a directed Topcode-matrix as follows:
\begin{defn}\label{defn:directed-Topcode-matrix}
\cite{Yao-Zhao-Zhang-Mu-Sun-Zhang-Yang-Ma-Su-Wang-Wang-Sun-arXiv2019} A \emph{directed Topcode-matrix} is defined as
\begin{equation}\label{eqa:Topcode-dimatrix}
\centering
{
\begin{split}
\overrightarrow{T}_{code}= \left(
\begin{array}{ccccc}
x_{1} & x_{2} & \cdots & x_{q}\\
e_{1} & e_{2} & \cdots & e_{q}\\
y_{1} & y_{2} & \cdots & y_{q}
\end{array}
\right)^{+}_{-}=
\left(\begin{array}{c}
X\\
\overrightarrow{E}\\
Y
\end{array} \right)^{+}_{-}=[(X~\overrightarrow{E}~Y)^{+}_{-}]^{T}
\end{split}}
\end{equation}\\
where \emph{v-vector} $X=(x_1 ~ x_2 ~ \cdots ~x_q)$, \emph{v-vector} $Y=(y_1 $ ~ $y_2$ ~ $\cdots $ ~ $y_q)$ and \emph{directed-e-vector} $\overrightarrow{E}=(e_1$ ~ $e_2 $ ~ $ \cdots $ ~ $e_q)$, such that each arc $e_i$ has its head $x_i$ and its tail $y_i$ with $i\in [1,q]$, and $q$ is the \emph{size} of $\overrightarrow{T}_{code}$.\qqed
\end{defn}

A digraph book \cite{Bang-Jensen-Gutin-digraphs-2007} is very good and useful for studying digraphs. Since digraphs are useful and powerful in real applications, we believe that digraphs and their directed Topcode-matrices gradually are applied to Graph Networks and Graph Neural Networks \cite{Battaglia-27-authors-arXiv1806-01261v2}.

A \emph{directed Topcode-matrix lattice} is defined as
\begin{equation}\label{eqa:directed-Topcode-matrix-lattice}
{
\begin{split}
\overrightarrow{\textbf{\textrm{L}}}\left (\overrightarrow{\textbf{\textrm{T}}}_{\textrm{code}}\uplus \overrightarrow{F}_{p,q}\right )=\left \{\overrightarrow{T}_{code}(\overrightarrow{H})\uplus ^n_{i=1}a_i\overrightarrow{T}^i_{code}:~a_i\in Z^0,~\overrightarrow{H}\in \overrightarrow{F}_{p,q}\right \}.
\end{split}}
\end{equation} with a group of \emph{linearly independent directed Topcode-matrix vectors} $\overrightarrow{\textbf{\textrm{T}}}_{\textrm{code}}=\big (\overrightarrow{T}^i_{code}$, $\overrightarrow{T}^2_{code}$, $\dots $, $\overrightarrow{T}^n_{code}\big )$, and $\overrightarrow{F}_{p,q}$ being a set of directed graphs of $\lambda $ vertices and $\mu $ arcs with respect to $\lambda \leq p$, $\mu \leq q$ and $2n-2\leq p$, as well as $\sum^n_{i=1}a_i\geq 1$. Moreover, let $\overrightarrow{T}^i_{\textrm{colored}}$ be a colored directed Topcode-matrix and let $\overrightarrow{F}^c_{p,q}\}$ contain the colored directed graphs of $p$ vertices and $q$ arcs. We get a \emph{colored directed Topcode-matrix lattice}
\begin{equation}\label{eqa:colored-directed-Topcode-matrix-lattice}
{
\begin{split}
\overrightarrow{\textbf{\textrm{L}}}^c\left (\overrightarrow{\textbf{\textrm{T}}}^c_{\textrm{code}}\uplus \overrightarrow{F}^c_{p,q}\right )
=\left \{\overrightarrow{T}_{code}(\overrightarrow{H^c})\uplus ^n_{i=1}a_i\overrightarrow{T}^i_{\textrm{colored}}:~a_i\in Z^0,~\overrightarrow{H^c}\in \overrightarrow{F}^c_{p,q}\right \},
\end{split}}
\end{equation}
with $\sum^n_{i=1}a_i\geq 1$.

\begin{defn}\label{defn:flawed-directed-graceful}
\cite{Yao-Mu-Sun-Sun-Zhang-Wang-Su-Zhang-Yang-Zhao-Wang-Ma-Yao-Yang-Xie2019} Suppose that the underlying graph of a $(p,q)$-digraph $\overrightarrow{G}$ is disconnected, and $\overrightarrow{G}+E^*$ is a connected directed $(p,q+q')$-graph, where $q'=|E^*|$. Let $f:V(\overrightarrow{G}+E^*) \rightarrow [0,q+q']$ (resp. $[0,2(q+q')-1]$) be a directed graceful labelling (resp. a directed odd-graceful labelling) $f$ of $\overrightarrow{G}+E^*$, then $f$ is called a \emph{flawed directed graceful labelling} (resp. \emph{flawed directed odd-graceful labelling}) of the $(p,q)$-digraph $\overrightarrow{G}$.\qqed
\end{defn}

Let $T^*_k$ be a half-directed caterpillar with its topological vector $V_{ec}(T^*_k)$ and its undirected ridge $P_k=u_{k,1}u_{k,2}\cdots u_{k,n}$ for $k\in [1,m]$, see an example shown in Fig.\ref{fig:directed-caterpillar-coloring}(a). The following set
\begin{equation}\label{eqa:directed-topological-vector-lattice}
\textrm{\textbf{L}}(\overrightarrow{\textbf{T}}) =\left \{\sum^{m}_{k=1}a_k V_{ec}(T^*_k): a_k\in Z^0, k\in [1,m]\right \}
\end{equation} is called a \emph{directed topological coding lattice} with its \emph{base} $\overrightarrow{\textbf{T}}=\{V_{ec}(T^*_k)\}^m_1$, where $\sum^{m}_{k=1} a_k\geq 1$, $T^*_k$ belongs to the set $\overrightarrow{C}_{ater}$ of half-directed caterpillars.

Thereby, a directed topological coding lattice $\textrm{\textbf{L}}(\overrightarrow{\textbf{T}}) $ is equal to a traditional lattice $\textrm{\textbf{L}}(\textbf{B})$ defined in (\ref{eqa:popular-lattice}). By the way, we present the directed gracefully total coloring as follows:

\begin{defn}\label{defn:directed-graceful-coloring}
$^*$ Let $\overrightarrow{G}$ be a directed connected graph with $p$ vertices and $Q$ arcs. If $G$ admits a proper total coloring $f:V(\overrightarrow{G})\cup A(\overrightarrow{G})\rightarrow [1,M]$ such that $f(\overrightarrow{uv})=f(u)-f(v)$ for each arc $\overrightarrow{uv}\in A(\overrightarrow{G})$ and $\{|f(\overrightarrow{uv})|:\overrightarrow{uv}\in A(\overrightarrow{G})\}=[1,Q]$, then we call $f$ a \emph{directed gracefully total coloring} of $\overrightarrow{G}$, and moreover $f$ a proper directed gracefully total coloring if $M=Q$. (see an example shown in Fig.\ref{fig:directed-caterpillar-coloring}(b))\qqed
\end{defn}

\begin{figure}[h]
\centering
\includegraphics[width=16cm]{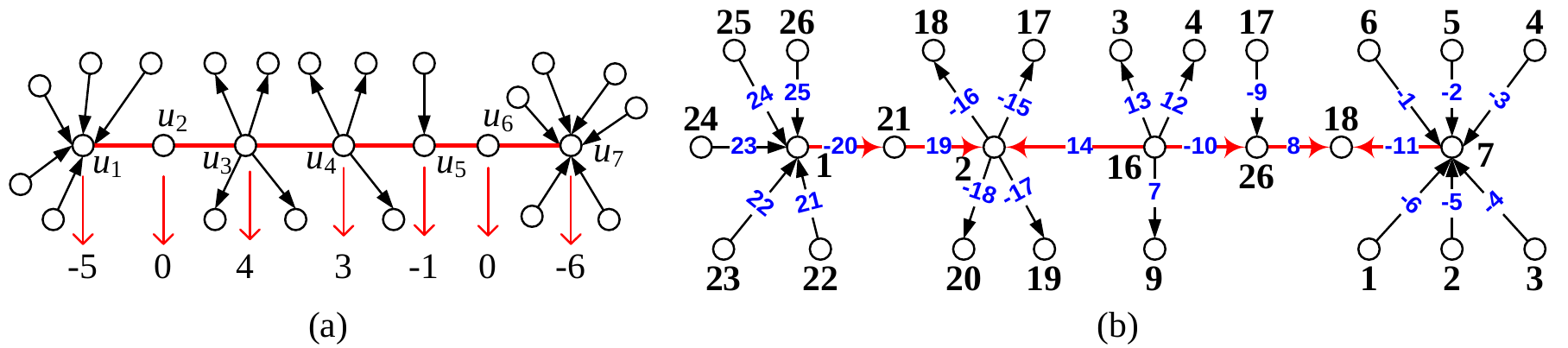}
\caption{\label{fig:directed-caterpillar-coloring}{\small (a) A half-directed caterpillar $T^*$ with its topological vector $V_{ec}(T^*)=(-5,0,4,3,-1,0,-6)$; (b) a directed gracefully total coloring of another directed caterpillar.}}
\end{figure}

\begin{problem}\label{qeu:standard-vs-graphic-lattices}
For the research of various topological coding lattices and directed topological coding lattices, we present the following questions:
\begin{asparaenum}[\textrm{D}-1. ]
\item \textbf{Determine} the number of trees corresponding a common Topcode-matrix $T_{code}$.
\item Since $T\odot \wedge \odot H$ for any two trees $T$ (as a public key) and $H$ (as a private key) with the same number of vertices (see Theorem \ref{thm:5-trees-topcode-matrices}), \textbf{determine} the smallest number of the vertex-coinciding and vertex-splitting operations. Let $IS(n)$ be the set of trees of $n$ vertices. We define a graph $G_{IS}$ with vertex set $V(G_{IS})=IS(n)$, two vertices $T_x$ and $T_y$ of $G_{IS}$ are adjacent to each other if they can be did only one operation of the vertex-coinciding operation and the vertex-splitting operation to be transformed to each other. \textbf{Find} the shortest path connecting two vertices $H_x$ (as a public key) and $H_y$ (as a private key) of $G_{IS}$. Consider this question about the graph $G_{IS^c}$ having the vertex set $V(G_{IS^c})=IS^c(n)$, where $IS^c(n)$ is the set of colored trees of $n$ vertices.
\item \textbf{Connections between traditional lattices and graphic lattices.} \textbf{Can} we translate a traditional lattice $\textrm{\textbf{L}}(\textbf{B})$ defined in (\ref{eqa:popular-lattice}) into a (colored) graphic lattice $\textbf{\textrm{L}}(\textbf{\textrm{H}},F_{p,q})$ defined in (\ref{eqa:graphic-lattice-graph-operation})?
\item \textbf{Translate} some problems of traditional lattices into graphic lattices, such as: Shortest Vector Problem (SVP, NP-hard), Shortest Independent Vector Problem (IVP), Unique Shortest Vector Problem, Closest Vector Problem (CVP, NP-C), Bounded Distance Decoding (BDD), Shortest Independent Vector Problem (SIVP, NP-hard), and so on.
\item We can provide many methods to build up topological vectors of graphs, for example, a spider tree $S_{pider}$ with $m$ legs $P_i$ of length $p_i$ for $i\in [1,m]$, so this spider tree $S_{pider}$ has its own topological vector $V_{ec}(S_{pider})=(p_1,p_2,\dots ,p_m)$; directly, a graph $G$ has its own topological vector $V_{ec}(G)=(d_1,d_2,\dots ,d_n)$, where $d_1,d_2,\dots ,d_n$ is the \emph{degree sequence} of $G$. \textbf{Find} other ways for making topological vectors of graphs.
\item \cite{Yao-Wang-Su-Sun-ITOEC-2020}~\textbf{Number String Decomposition Problem.} Suppose that a number string $S(n)=c_1c_2\cdots c_n$ with $c_j\in [0,9]$ was generated from some Topcode-matrix, \textbf{cut} $S(n)$ into $3q$ groups of substrings $a_{1},a_{2},\dots ,a_{3q}$ holding $a_{1}=c_1c_2\cdots c_{j_1}$, $a_{2}=c_{j_1+1}c_{j_1+2}\cdots c_{j_1+j_2}$, $\dots $, $a_{3q}=c_{a+1}c_{a+2}\cdots c_n$, where each $j_k\geq 1$ and $n=\sum ^{3q-1}_{k=1}j_k$, such that there exists at least a colored graph $H$ with its own Topcode-matrix $T_{code}(H)$ defined in Definition \ref{defn:topcode-matrix-definition}, which contains each substring $a_{i}$ with $i\in [1,3q]$ as its own elements and deduces a Topcode-string $a_{1}a_{2}\dots a_{3q}=S(n)$.
\item $^*$ \textbf{Number Strings and Matrices Problem.} In general, we want to cut a number string $D=c_1c_2\cdots c_n$ with $c_j\in [0,9]$ into $m\times n$ segments $a_{ij}$ such that these segments $a_{ij}$ are just the elements of adjacency matrix $A(a_{ij})_{n\times n}$ of a graph of $n$ vertices.
\item $^*$ Let $I_{string}=c_1c_2\cdots c_n\cdots $ be an infinite number string with $c_j\in [0,9]$, and let $T_{code}(G)$ ba a Topcode-matrix of a $(p,q)$-graph $G$ admitting a gracefully total coloring. For each finite number string $D_i$ induced from $T_{code}(G)$, \textbf{does} $D_i$ appear in $I_{string}$?\qqed
\item \textbf{Determine} $v_{gra}(\overrightarrow{G})=\min_f\left \{|f(V(\overrightarrow{G}))|\right \}$ over all of directed gracefully total colorings of $\overrightarrow{G}$.
\item \textbf{Define} other directed $W$-type total colorings.
\end{asparaenum}
\end{problem}

\section{Conclusion}

We have defined various graphic lattices and matrix lattices by means of knowledge of graph theory and topological coding, such as:
various (colored) graphic lattices, matching-type graphic lattices, star-graphic lattices, graphic lattice sequences, and so on. We have expressed some facts and objects of graph theory to be some kinds of graphic lattices. As known, many problems of graph theory can be expressed or illustrated by (colored) star-graphic lattices, graph homomorphism lattice and graphic lattice homomorphisms.

We have defined parameterized $W$-type total colorings: parameterized edge-magic proper total coloring, parameterized edge-difference proper total coloring, parameterized felicitous-difference proper total coloring and parameterized graceful-difference proper total coloring. Also, we have combined colorings and labellings to define: (set-ordered) gracefully total coloring, (set-ordered) odd-gracefully total coloring, (set-ordered) felicitous total coloring, (set-ordered) odd-elegant total coloring, (set-ordered) harmonious total coloring, (set-ordered) $c$-harmonious total coloring, (set-ordered) graceful edge-magic total coloring, (set-ordered) edge-difference magic total coloring, (set-ordered) graceful edge-difference magic total coloring, and so on. Importantly, we have defined the topological coloring isomorphism that consists of graph isomorphism and coloring isomorphism, and a new pair of the leaf-splitting operation and the leaf-coinciding operation.

In researching graphic lattices, we have met many mathematical problems, such as: Decompose graphs into Hanzi-graphs, $J$-graphic isomorphic Problem, Color-valued graphic authentication problem, Splitting-coinciding problem,
Prove any given planar graph in, or not in one of all 4-colorable planes $P_{\textrm{4C}}$, Tree and planar graph authentication, Tree topological authentication, Decompose evaluated Topcode-matrices, Number String Decomposition Problem, Translate a traditional lattice into a (colored) graphic lattice, Develop the investigation of the parameterized $W$-type proper total colorings, $(p,s)$-gracefully total numbers, $(p,s)$-gracefully total authentications \emph{etc.} However, determining the cardinality of a graphic lattice is not slight, since one will meet the Graph Isomorphic Problem, a NP-hard problem. The difficulty in solving these mathematical problems is useful for cryptographers, because they can apply this intractability to protect information. We need to apply graphic lattices in cryptosystems and the real world. There are complex problems in Number String Decomposition Problem:

\emph{First of all}, a number string can be composed of thousands of numbers, so it is difficult to divide it into $(3q)!$ pieces and write it into a Topcode-matrix $T_{code}$. The number string string may also correspond to other matrices, such as adjacency matrix, Topcode-matrix, Hanzi-matrix and so on.

\emph{Secondly}, since a large scale of Topcode-matrix $T_{code}$ corresponds to hundreds of colored graphs, so it is very difficult to find the specially appointed colored graph, which involves the NP-hard problem of graph isomorphism.

\emph{Thirdly}, this number string will involve hundreds of graph colorings and graph labellings as well as many problems in number theory.

\emph{Fourth}, because the number string is not an integer, so the well-known technology of integer decomposition can not be used to solve the Number String Decomposition Problem.

\emph{Fifthly}, because topological coding is made up of two different kinds of mathematics: topological structure and algebraic relation, it makes attackers switch back and forth in two different languages, unable to convey useful information. It is known that listening to two languages at the same time is forbidden by the basic laws of physics.

We have explored the construction of graphic group lattices and Topcode-matrix lattices, these lattices enable us to build up connections between traditional lattices and graphic lattices by topological vectors defined here, and try to find more deep relationships between them two, since we have believed algebraic technique will help us to do more interesting works on graphic lattices. Thereby, our techniques are not to enrich topological coding, but also can be applied to encryption networks, since our various graphic lattices (homomorphisms) can be used to encrypt a network wholly resisting full-scale attacks and sabotage by classical computers and quantum computers.

A graph in various graphic lattices (homomorphisms) is stored and run in the computer by various matrices, and the main theoretical technology of various graphic lattices (homomorphisms) comes from discrete mathematics, number theory, algebra, etc. Graphic lattice is an interdisciplinary product, which is expected to become the research content in the field of cryptography, or be concerned by the field of mathematics and computer science. It is known that there is no polynomial quantum algorithm to solve some lattice problems, so that the Number String Decomposition Problem in the topological coding may be the theoretical basis of the topological coding against supercomputer and quantum computing, because the Number String Decomposition Problem is irreversible, and various graphic lattice contains a lot of NP-hard problems. Moreover, a Topcode-matrix in the topological coding is either a homomorphic property, or it will be potential applicable. We clearly realize that we are far from the normal orbit of researching graphic lattices, so we must grope going on and work hard on graphic lattices.

\section*{Acknowledgment}

I thank gratefully the National Natural Science Foundation of China under grants No. 61163054, No. 61363060 and No. 61662066.

My students have done a lot of hard works on new labels and new colorings, they are: Dr. Xiangqian Zhou (School of Mathematics and Statistics, Huanghuai University, Zhumadian); Dr. Hongyu Wang, Dr. Xiaomin Wang, Dr. Fei Ma, Dr. Jing Su, Dr. Hui Sun (School of Electronics Engineering and Computer Science, Peking University, Beijing); Dr. Xia Liu (School of Mathematics and Statistics, Beijing Institute of Technology, Beijing); Dr. Chao Yang (School of Mathematics, Physics and Statistics, Shanghai University of Engineering Science, Shanghai); Meimei Zhao (College of Science, Gansu Agricultural University, Lanzhou); Sihua Yang (School of Information Engineering, Lanzhou University of Finance and Economics, Lanzhou); Jingxia Guo (Lanzhou University of technology, Lanzhou); Wanjia Zhang (College of Mathematics and Statistics, Hotan Teachers College, Hetian); Xiaohui Zhang (College of Mathematics and Statistics, Jishou University, Jishou, Hunan); Dr. Lina Ba (School of Mathematics and Statistics, Lanzhou University, Lanzhou); Lingfang Jiang, Tao haixia, Zhang jiajuan, Xiyang Zhao, Yaru Wang, Yarong Mu.

Thanks for the teachers of Graph Labelling Group: Prof. Mingjun Zhang (School of Information Engineering, Lanzhou University of Finance and Economics, Lanzhou); Prof. Ming Yao (Department of Information Process and Control Engineering, Lanzhou Petrochemical College of Vocational Technology, Lanzhou); Prof. Lijuan Qi (Department of basic courses, Lanzhou Institute of Technology, Lanzhou); Prof. Jianmin Xie (College of Mathematics, Lanzhou City University, Lanzhou).

{\footnotesize

}

\end{document}